\def\focsmode{0}
\def\stocmode{0}
\def\arxivmode{1}
\def\fastmode{0}
\def\showauthornotes{2}
\def\showkeys{0}
\def\showdraftbox{0}
\def\showcolorlinks{1}
\def\usemicrotype{1}
\def\showfixme{1}
\def\blindmode{0}
\def\arxivmode{0}
\def\showdraftbox{0}
\def\showauthornotes{0}
\def\usemicrotype{1}
\def\showfixme{0}
\def\blindmode{1}
\def\arxivmode{0}
\def\showdraftbox{0}
\def\showauthornotes{0}
\def\usemicrotype{1}
\def\showfixme{0}
\def\blindmode{0}
\newtheorem{theorem}{Theorem}[section]
\newtheorem*{theorem*}{Theorem}
\newtheorem{proposition}[theorem]{Proposition}
\newtheorem*{proposition*}{Proposition}
\newtheorem{lemma}[theorem]{Lemma}
\newtheorem*{lemma*}{Lemma}
\newtheorem{corollary}[theorem]{Corollary}
\newtheorem*{conjecture*}{Conjecture}
\newtheorem{fact}[theorem]{Fact}
\newtheorem*{fact*}{Fact}
\newtheorem*{exercise*}{Exercise}
\newtheorem*{hypothesis*}{Hypothesis}
\newtheorem{conjecture}[theorem]{Conjecture}
\newtheorem{definition}[theorem]{Definition}
\newtheorem{example}[theorem]{Example}
\newtheorem{openquestion}[theorem]{Open Question}
\newtheorem{assumption}[theorem]{Assumption}
\newtheorem{exercise-easy}[theorem]{Exercise}
\newtheorem{exercise-med}[theorem]{Exercise}
\newtheorem{exercise-hard}[theorem]{Exercise$^\star$}
\newtheorem*{claim*}{Claim}
\newtheorem*{remark*}{Remark}
\newtheorem*{observation*}{Observation}
\DeclareFontFamily{OT1}{rsfs}{}
\DeclareFontShape{OT1}{rsfs}{n}{it}{<-> rsfs10}{}
\DeclareMathAlphabet{\mathscr}{OT1}{rsfs}{n}{it}
\let\mathbb\varmathbb
\DeclareFontFamily{OT1}{rsfs}{}
\DeclareFontShape{OT1}{rsfs}{n}{it}{<-> rsfs10}{}
\DeclareMathAlphabet{\mathscr}{OT1}{rsfs}{n}{it}
\let\mathbb\varmathbb
\definecolor{bleudefrance}{rgb}{0.01, 0.1, 1.0}
\definecolor{azure}{rgb}{0.0, 0.5, 1.0}
\newcommand{\savehyperref}[2]{\texorpdfstring{\hyperref[#1]{#2}}{#2}}
\newcommand{\Sref}[1]{\hyperref[#1]{\S\ref*{#1}}}
\newcommand{\JSS}[1]{{\color{Red}[JSS: #1]}}
\newcommand{\JS}[1]{{\color{Emerald}[JS: #1]}}
\newcommand{\mynotes}[1]{{\sffamily\small\color{teal}{#1}}\medskip}
\newcommand{\Authornote}[2]{{\sffamily\small\color{Maroon}{[#1: #2]}}\medskip}
\newcommand{\Authornotecolored}[3]{{\sffamily\small\color{#1}{[#2: #3]}}}
\newcommand{\Authorcomment}[2]{{\sffamily\small\color{gray}{[#1: #2]}}}
\newcommand{\Authorstartcomment}[1]{\sffamily\small\color{gray}[#1: }
\newcommand{\Authorfnote}[2]{\footnote{\color{red}{#1: #2}}}
\newcommand{\Authorfixme}[1]{\Authornote{#1}{\textbf{??}}}
\newcommand{\Authormarginmark}[1]{\marginpar{\textcolor{red}{\fbox{\Large #1:!}}}}
\newcommand{\myexplain}[1]{{\sffamily\small\color{red}{\noindent [Explanation:\medskip\newline \begin{quote}#1\hfill]\end{quote}}}\medskip}
\newcommand{\explain}[1]{{\sffamily\small\color{red}{#1}}\medskip}
\newcommand{\JSS}[1]{\PackageError{asldfjkas}{safiawe}{asfdsf}}
\newcommand{\JS}[1]{\PackageError{asldfjkas}{safiawe}{asfdsf}}
\newcommand{\mynotes}[1]{}
\newcommand{\Authornote}[2]{}
\newcommand{\Authornotecolored}[3]{}
\newcommand{\Authorcomment}[2]{}
\newcommand{\Authorstartcomment}[1]{}
\newcommand{\Authorfnote}[2]{}
\newcommand{\Authorfixme}[1]{}
\newcommand{\Authormarginmark}[1]{}
\newcommand{\myexplain}[1]{}
\newcommand{\explain}[1]{}
\renewcommand{\myexplain}[1]{{\sffamily\small\color{red}{\noindent \begin{quote}{\bf Explanation:} \medskip\newline #1\end{quote}}}\medskip}
\newcommand{\card}[1]{\lvert#1\rvert}
\newcommand{\norm}[1]{\left\lVert#1\right\rVert}
\newcommand{\iprod}[1]{\left\langle#1\right\rangle}
\newcommand{\Esymb}{\mathbb{E}}
\newcommand{\Psymb}{\mathbb{P}}
\DeclareMathOperator*{\E}{\Esymb}
\DeclareMathOperator*{\ProbOp}{\Psymb}
\renewcommand{\Pr}{\ProbOp}
\newcommand{\textparen}[1]{\text{(#1)}}
\newcommand{\because}[1]{\textparen{because #1}}
\renewcommand{\because}[1]{\textparen{because #1}}
\newcommand{\mper}{\,.}
\newcommand{\mcom}{\,,}
\newcommand\bdot\bullet
\DeclareMathOperator{\Tr}{Tr}
\DeclareMathOperator{\poly}{poly}
\DeclareMathOperator{\supp}{supp}
\newcommand{\Z}{\mathbb Z}
\newcommand{\N}{\mathbb N}
\newcommand{\R}{\mathbb R}
\newcommand{\cA}{\mathcal A}
\newcommand{\cB}{\mathcal B}
\newcommand{\cC}{\mathcal C}
\newcommand{\cD}{\mathcal D}
\newcommand{\cE}{\mathcal E}
\newcommand{\cG}{\mathcal G}
\newcommand{\cL}{\mathcal L}
\newcommand{\cM}{\mathcal M}
\newcommand{\cN}{\mathcal N}
\newcommand{\cP}{\mathcal P}
\newcommand{\cS}{\mathcal S}
\newcommand{\cT}{\mathcal T}
\newcommand{\cV}{\mathcal V}
\newcommand{\cX}{\mathcal X}
\renewcommand{\leq}{\leqslant}
\renewcommand{\le}{\leqslant}
\renewcommand{\geq}{\geqslant}
\renewcommand{\ge}{\geqslant}
\newcommand{\draftbox}{\begin{center}
  \fbox{%
    \begin{minipage}{2in}%
      \begin{center}%
          \Large\textsc{Working Draft}\\%
        Please do not distribute%
      \end{center}%
    \end{minipage}%
  }%
\end{center}
\vspace{0.2cm}}
\newcommand{\draftbox}{}
\let\epsilon=\varepsilon
\numberwithin{equation}{section}
\newcommand\MYcurrentlabel{xxx}
\newcommand{\MYstore}[2]{%
  \global\expandafter \def \csname MYMEMORY #1 \endcsname{#2}%
}
\newcommand{\MYload}[1]{%
  \csname MYMEMORY #1 \endcsname%
}
\newcommand{\MYnewlabel}[1]{%
  \renewcommand\MYcurrentlabel{#1}%
  \MYoldlabel{#1}%
}
\newcommand{\MYdummylabel}[1]{}
\newcommand{\torestate}[1]{%
  \let\MYoldlabel\label%
  \let\label\MYnewlabel%
  #1%
  \MYstore{\MYcurrentlabel}{#1}%
  \let\label\MYoldlabel%
}
\newcommand{\restatetheorem}[1]{%
  \let\MYoldlabel\label
  \let\label\MYdummylabel
  \begin{theorem*}[Restatement of \prettyref{#1}]
    \MYload{#1}
  \end{theorem*}
  \let\label\MYoldlabel
}
\newcommand{\restatelemma}[1]{%
  \let\MYoldlabel\label
  \let\label\MYdummylabel
  \begin{lemma*}[Restatement of \prettyref{#1}]
    \MYload{#1}
  \end{lemma*}
  \let\label\MYoldlabel
}
\newcommand{\restateprop}[1]{%
  \let\MYoldlabel\label
  \let\label\MYdummylabel
  \begin{proposition*}[Restatement of \prettyref{#1}]
    \MYload{#1}
  \end{proposition*}
  \let\label\MYoldlabel
}
\newcommand{\restatefact}[1]{%
  \let\MYoldlabel\label
  \let\label\MYdummylabel
  \begin{fact*}[Restatement of \prettyref{#1}]
    \MYload{#1}
  \end{fact*}
  \let\label\MYoldlabel
}
\newcommand{\restate}[1]{%
  \let\MYoldlabel\label
  \let\label\MYdummylabel
  \MYload{#1}
  \let\label\MYoldlabel
}
\newcommand{\addreferencesection}{
  \phantomsection
\ifnum\stocmode=0
  \addcontentsline{toc}{section}{References}
\else
  \addcontentsline{toc}{section}{References \hspace*{1in} --------- End of extended abstract ---------}
\fi

}
\newcommand{\eps}{\epsilon}
\renewcommand{\paragraph}[1]{\medskip\noindent{\bf #1.}}
\DeclareMathOperator{\tr}{tr}
\newcommand{\vertiii}[1]{{\left\vert\kern-0.25ex\left\vert\kern-0.25ex\left\vert #1
          \right\vert\kern-0.25ex\right\vert\kern-0.25ex\right\vert}}
\newcommand*{\ot}{\otimes}
\DeclareMathOperator{\Id}{\mathrm{Id}}
\DeclareMathOperator*{\pE}{\widetilde{\mathbb{E}}}
\DeclareMathOperator{\fE}{\mathbb{E}}
\newcommand{\iso}{\mathrm{iso}}
\renewcommand{\supp}{\mathrm{supp}}
\DeclareMathOperator{\mat}{Mat}
\Crefname{Problem}{Problem}{Problems}
\Crefname{Lemma}{Lemma}{Lemmas}
\Crefname{Proposition}{Proposition}{Propositions}
\Crefname{Open Question}{Open Question}{Open Questions}
\Crefname{Definition}{Definition}{Definitions}
\Crefname{Corollary}{Corollary}{Corollaries}
\Crefname{Fact}{Fact}{Facts}
\Crefname{Condition}{Condition}{Conditions}
\Crefname{Assumption}{Assumption}{Assumptions}
\Crefname{Conjecture}{Conjecture}{Conjectures}
\Crefname{Remark}{Remark}{Remarks}
\newcommand{\sop}{_\mathsf{op}}
\providecommand{\opnorm}[1]{\ensuremath{\left\lVert #1 \right\rVert_{\sop}}}
\renewcommand{\Z}{\mathbb{Z}}
\renewcommand{\R}{\mathbb{R}}
\renewcommand{\N}{\mathbb{N}}
\renewcommand{\bar}{\overline}
\renewcommand{\epsilon}{\varepsilon}
\renewcommand{\eps}{\varepsilon}
\renewcommand{\ot}{\otimes}
\newcommand{\sT}{\mathsf{T}}
\newcommand{\sym}{\mathrm{sym}}
\newcommand{\partialcum}[2]{{\kappa\kern-0.55em\raise0.42ex\hbox{$\scriptsize\smallsetminus$}_{#1}\left[{#2}\right]}}
\DeclareRobustCommand\proves{\mathrel{|}\joinrel\mkern-.5mu\mathrel{-}}
\newcommand{\He}{\mathsf{He}}
\newcommand{\pcum}[2]{\widetilde{\kappa}_{{#1}}\left[{#2}\right]}\newcommand{\cum}[2]{\kappa_{{#1}}\left[{#2}\right]}
\newcommand{\mon}[1]{\mathtt{mon}[{#1}]}
\author{
\ifnum\blindmode=1
Author names withheld
\fi
\ifnum\blindmode=0
    Juspreet Singh Sandhu\thanks{School of Engineering \& Applied Sciences, Harvard University, Cambridge, Massachusetts, USA. \newline Supported by a Simons Investigator Fellowship, NSF grant DMS-2134157, NSF STAQ award PHY-1818914, DARPA grant W911NF2010021, DARPA ONISQ program award HR001120C0068 and DOE grant DE-SC0022199. \newline Email: \textcolor{red}{jus065@g.harvard.edu}.}
    \and
    Jonathan Shi\thanks{
    Halıcıoğlu Data Science Institute, University of California San Diego, USA.
    \newline
    Supported by NSF TRIPODS. \newline Email: \textcolor{red}{joshi@ucsd.edu}.} \\\vspace{0.5em}
\fi
}
\title{Sum-of-Squares \& Gaussian Processes I: Certification
\ifnum\blindmode=0
\thanks{This project has received funding from the European Research Council (ERC) award No. 834861 (``SO-ReCoDi'').\vspace{-0.0em}}
\fi
\\\vspace{0.5em}
}
\date{\today}
\begin{document}
\maketitle
\draftbox

\thispagestyle{empty}

\begin{abstract}
\vspace*{-0.0em}
\setlength{\parindent}{0em}
\setlength{\parskip}{0.5em}

We introduce a class of distributions which may be considered as a smoothed probabilistic version of the ultrametric property that famously characterizes the Gibbs distributions of various spin glass models \cite{panchenko2013parisi}.
This class of \emph{high-entropy step} (HES) distributions is expressive enough to capture a distribution achieving near-optimal average energy on spin glass models in the so-called full Replica-Symmetry Breaking (fRSB) regime.

Simultaneously, with high probability, there are polynomial-size certificates on the average energy achievable by \emph{any} HES distribution which are tight within a constant factor. These certificates can be found in polynomial time by a semidefinite program corresponding to a sum-of-squares (SoS) hierarchy we introduce, termed the HES SoS hierarchy.
This improves over classical sum-of-squares certificates which are loose by a factor of $n^{\lfloor p/2 - 1\rfloor/2}$ \cite{bhattiprolu2017sum, hopkins2017power}.

\thispagestyle{empty}
\end{abstract}
\newpage

\addtocontents{toc}{\protect\thispagestyle{empty}}
\pagenumbering{gobble}
\setcounter{tocdepth}{2}
{
    \hypersetup{linkcolor=blue}
    \thispagestyle{empty}
    \tableofcontents
    \thispagestyle{empty}
}

\newpage
\clearpage
\pagenumbering{arabic}

\newpage

\section{Introduction}

In the Sum-of-Squares (SoS) meta-algorithm, the existence of a certain type of proof (known as a low-degree SoS proof) characterizing the solutions of a problem instance automatically gives rise to a corresponding polynomial-time algorithm for that problem~\cite{parrilo2004sum, barak2014sum}. In various average-case settings, however, no low-degree proof exists due to known lower bounds against the standard SoS hierarchy~\cite{bhattiprolu2017sum, hopkins2017power, jones2022sum}. Therefore, the central question that motivates this paper is:
\begin{center}
    \emph{What can SoS-type proof systems certify for large ensembles of average-case optimization problems?}
\end{center}

To understand this question, we propose a new SoS hierarchy to certify the value of a polynomial over points on the sphere that can be output by a parameterized family of distributions (\Cref{thm:main}). 
These distributions are termed high-entropy (HES) step processes (\Cref{def:hes}). 
Thus, the HES SoS hierarchy certifies the maximum value that can be achieved by \emph{any} algorithm whose output distribution on a fixed instance can be characterized by some HES process. 
To demonstrate the power of this hierarchy, we prove that for the spherical spin glass problem~(\Cref{def:spin-glass}) the HES SoS hierarchy provides polynomial-sized certificates with value that is only a constant factor off the true optimum (\Cref{thm:final-result-informal}). 
This improves over prior SoS certificates, which are off of the true optimum by superlinear factors at any constant level of the hierarchy \cite{bhattiprolu2017sum, hopkins2017power}. 
A celebrated result of Subag~\cite{subag2021following} provides a Hessian ascent algorithm for the spherical spin glass problem with behavior that is conjecturally optimal among all polynomial time algorithms~\cite{huang2021tight}. Indeed, we demonstrate that a randomized version of Subag's algorithm (\Cref{alg:subag-2}) is a HES process (\Cref{lem:feasibility}) and then certify that the value achieved by \emph{any} HES process is the same up to constant factors (\Cref{thm:final-result-informal}). 
Continuing further, we show that the HES SoS hierarchy diverges from standard Hessian ascent on modified ensembles of the spherical spin glass problem~(\Cref{sec:robustness}) and performs better than standard Hessian ascent, though it remains open to prove its performance is optimal. 
While a result of Huang and Sellke~\cite{huang2023algorithmic, huang2024optimization} provides a conjecturally-optimal approximate-message passing (AMP) algorithm for the models from which these modified ensembles are chosen, the striking fact is that the HES SoS hierarchy works \emph{without} being reprogrammed or changed. 
This is in stark constrast to the AMP algorithm~\cite{huang2023algorithmic, huang2024optimization} which is not in any obvious way relatable to the Hessian ascent algorithm~\cite{subag2021following}.
This leads to a conjecture that the HES SoS hierarchy is optimal among \emph{all} polynomial-time algorithms for optimizing a large family of Gaussian processes~(\Cref{conj:distributional-robustness}); these processes trivially subsume the spherical spin glass problem.

The HES SoS hierarchy is based on three key ideas.
\begin{itemize}
    \item The first key insight is to permit \emph{distributions over} solutions to be objects of optimization in their own right~\cite{ge2018simulated,moitra2020fast,doi:10.1137/S0036141096303359}.
    The upshot of this is that SoS proofs may refer to the moments of a hypothetical distribution over solutions, and prove polynomial relationships between these moments, as well as constraints on those moments.
    \item The second key insight is giving the hierarchy access to the \emph{interior} of the solution domain to parameterize such distributions via their low-degree moments. 
    This idea is inspired by a line of work initiated by Subag~\cite{subag2018free, chen2018generalized} that considers an extension of the original polynomial into the interior of the solution domain, and tracks how a smoothened version of the maximum changes as you move from the center outwards to the domain.
    \item The third key insight is to impose constraints on the moments of the distributions that maximize some notion of entropy over the near-optimizers. 
    This is inspired by the well-known fact that the Gibbs distribution is the maximum entropy distribution with a given energy.
\end{itemize}

In keeping with the insights mentioned above, the HES SoS hierarchy solves the following algorithmic task:
\begin{quote}
    \emph{Certify the value of a problem instance with low-degree SoS proofs over distributions of points that are generated by a ``high-entropy'' process.}
\end{quote}

Critically, note that this task is \emph{not} a relaxation of the original problem but a different problem that asks for the value of the instance achieved by some family of distributions having large entropy. 
Informally, HES processes are defined as a sum of sequential vector-valued steps, with each step conditionally distributed as some Gaussian with large effective dimension whose covariance varies smoothly under the values of the previous steps.
This family of distributions is sufficiently well-behaved to enable powerful SoS proofs on its properties, and yet it is also expressive enough to capture near-extrema of (for instance) spherial spin glasses on the sphere~(\Cref{thm:final-result-informal}).

In order to get a better understanding of the capabilities of this hierarchy, we briefly contrast it with three (broad) algorithmic approaches that have been used previously in the literature to study average-case optimization problems.
\begin{description}
    \item[HES SoS hierarchy vs. Hessian ascent.]
    The Hessian ascent algorithm proposed by Subag~\cite{subag2021following} follows the largest eigenvector of the Hessian of the instance from the origin outwards, eventually reaching a point on the sphere. 
    On spherical spin glasses, the HES SoS hierarchy \emph{automatically} finds a near-optimal HES distribution which is also output by a randomized variant of the prior Hessian ascent algorithm~\cite{subag2018free, subag2021following}. In doing so, it captures the Hessian ascent algorithm \emph{without simulating} it. 
    This allows the HES SoS hierarchy to perform better on modified input ensembles that are anisotropic~(see \Cref{sec:robustness} and~\Cref{conj:distributional-robustness}). 
    \item[HES SoS hierarchy vs. Local-Statistics hierarchy.] The HES SoS hierarchy is stronger than the previously proposed local-statistics hierarchy (LStH)~\cite{banks2021local, ivkov2023semidefinite}  in the sense that the structure of the LStH precludes any scope of meaningful certification on an \emph{instance} from the relevant average-case ensemble. 
    In contrast, the SoS HES hierarchy is a certification algorithm that \emph{directly} searches and optimizes over certain distributions over solutions.
    
    \item[HES SoS hierarchy vs. Low-degree polynomial method.] Previous algorithms that have successfully been applied to spin-glasses and random Max-CSPs are either some form of local algorithms~\cite{alaoui2023local, chen2023local} or low-degree polynomial (LDP) algorithms~\cite{gamarnik2020low, huang2021tight}.
    For certification problems, there are natural ways to set up the standard SoS hierarchy~\cite{parrilo2004sum}. However, as is evidenced by the certification lower bounds against the standard SoS hierarchy~\cite{bhattiprolu2017sum, hopkins2017power, jones2022sum} the relevant question becomes what can be achieved when certification is harder than optimization~\cite{subag2021following, montanari2021optimization} for such problems. In such a situation, the HES distributions provide a natural way to set up a SoS hierarchy that may be universal.  
    
    We conjecture (\Cref{conj:distributional-robustness}) that the SoS HES hierarchy subsumes these algorithms on a wide class of average-case optimization problems. The HES distributions considered by the hierarchy are not necessarily LDPs of the instance (although some parameters of the distributions are constrained to be near-LDPs of \emph{each other}): thus, the LDP method is not known to be equivalent to the SoS HES hierarchy.
\end{description}

In this paper, for technical simplicity, we study optimization problems over the sphere. 
Provided a similar HES SoS hierarchy can be instantiated on the cube, it would very likely have similar findings for random optimization problems over the hypercube~(see~\Cref{conj:hes-cube}).

A natural question, then, is the following: 
\begin{center}
    \emph{What is the advantage of the HES SoS hierarchy over prior approaches?}
\end{center}   
The answer to this question is threefold, as detailed below.
\begin{description}
    \item[Lower bounds computable in $\mathsf{poly}(n)$-time.] The HES distributions correspond to entire families of stochastic processes and, naively, it is not even possible to enumerate them in finite time. Even then, \Cref{thm:main} and \Cref{thm:final-result-informal} demonstrate that the HES SoS hierarchy can find efficient certificates in $\mathsf{poly}(n)$ time that give limitations on \emph{any} algorithm whose output distribution can be parameterized by some HES process. This is a non-standard and somewhat surprising use of semidefinite programs, and can be thought of as an ``algorithmic lower bound'' against HES processes on certain families of inputs.   

    \item[Computing Parisi-like constants in low-degree SoS.] Prior certification lower bounds for the standard SoS hierarchy~\cite{bhattiprolu2017sum, hopkins2017power, jones2022sum} demonstrate it cannot even certify the correct \emph{scaling} of the maxima of random low-degree polynomials until the underlying proofs have near-linear degree. This reflects the reality that the existing proofs of the Parisi-like formula~\cite{crisanti1992sphericalp} that captures the maximum value are extremely technical and deep, making heavy use of concentration of measure and the analytic toolkit used to analyze Gaussian processes~\cite{talagrand2006free, subag2018free}. However, we demonstrate that there exist low-degree SoS proofs over HES distributions that obtain the correct \emph{scaling} with a value that approximates the maximum up to an explicit constant factor~(\Cref{thm:final-result-informal}).

    \item[Ensemble agnosticism.] We demonstrate that the HES hierarchy handles some anisotropic variants of the spherical spin glass ensemble~(\Cref{sec:robustness}) where direct Hessian ascent takes suboptimal steps.
    While it is an open problem to provide low-degree HES SoS certificates that match the conjectural optimal value~\cite{huang2023algorithmic, huang2024optimization} on such anisotropic ensembles (\Cref{conj:distributional-robustness}), the striking feature is that the HES hierarchy does not require any such tweaking and works as-is without being told which ensemble the instance was drawn from. Au contraire, the standard Hessian ascent algorithm becomes provably suboptimal, as is shown in \Cref{sec:robustness}, and a completely new AMP algorithm needs to be designed for this different anisotropic ensemble~\cite{huang2023algorithmic, huang2024optimization}.
\end{description}

\Cref{conj:distributional-robustness} and \Cref{conj:hes-cube} also assert that there exist rounding schemes for the HES SoS hierarchy which achieves optimal value among all polynomial-time algorithms on these ensembles of problems. The rounding aspect of the conjecture is discussed further in the companion paper \cite{ss24rounding}, which introduces the rounding scheme for the spherical spin glass problem.

To complete the picture, it would be interesting to know what natural families of algorithms are captured by HES processes: \Cref{lem:feasibility} shows that this includes standard Hessian ascent, and a recent result~\cite{jekel2024potential} shows that a Hessian ascent algorithm modified with a potential function achieves equivalent performance with AMP~\cite{montanari2021optimization}, suggesting that the performance of AMP may also be generically captured by HES processes in certain settings. It may be possible to demonstrate a connection between sufficiently ``smooth'' algorithms---such as overlap-concentrated algorithms~\cite{huang2021tight}---and HES processes on various families of problems.   

This work makes progress towards a program proposed by Barak~\cite[Chapter 9, p.~105]{barak2014op} to understand the optimality of SoS on average-case optimization problems.

The remaining parts of the paper are structured as follows:
\begin{enumerate}
    \item In~\Cref{sec:contributions} we state our technical contributions, main results, the ``ensemble-agnosticism'' conjecture and an immediate open question, followed by an interpretation of the implications of these for SoS relaxations in~\Cref{sec:sos-implications} and a summary of the related literature in~\Cref{sec:related-work}.
    \item In~\Cref{sec:the-problem} we introduce the spherical spin-glass hamiltonian. In~\Cref{sec:parameterized} and~\Cref{sec:hes-sos-program} we motivate, state and explain the HES SoS hierarchy and its proof system. 
    We conclude in~\Cref{sec:proof-overview} by giving an overview of the proof of~\Cref{thm:final-result-informal} and the construction of the modified ensembles in~\Cref{sec:robustness} where the HES SoS hierarchy does better than the direct Hessian ascent algorithm.
    \item In~\Cref{sec:prelims} we introduce various technical preliminaries. 
    These critically include a description of the Hessian ascent algorithm~(\Cref{sec:subag}), the algorithmic threshold for spherical spin glasses~(\Cref{sec:alg-threshold}), graph matrices~(\Cref{sec:prelims-graph-matrices}), some important facts from Hermite analysis~(\Cref{sec:harmonic-analysis}), and customized matrix representations of polynomials~(\Cref{sec:matrix-representations}) that heavily feature in the analysis.
    \item In~\Cref{sec:sdp-proof-system} we formally introduce the grammar that generates the conditional moments of a filtrated space as the atoms of the SDP corresponding to the HES SoS hierarchy~(\Cref{sec:grammar}). 
    We explicitly explain how to encode conditional constraints into atoms of the SDP~(\Cref{sec:sos-poly-basis}), briefly mentioning the connection of the grammar with replicas~(\Cref{sec:replica-equivalance}) and how its encoding includes the low-degree Wiener-Hermite decomposition for functions at a particular level of the filtration~(\Cref{sec:encoding-hermite-decomps}).  
    \item In~\Cref{sec:hods} we use harmonic analysis over multi-variate Hermite polynomials to develop low-degree SoS proofs to control the nuclear norms of the joint moments of HES distributions~(\Cref{thm:high-entropy-nuclear-norm-bound},~\Cref{thm:high-entropy-nuclear-norm-bound-graph-matrices}).
    \item In~\Cref{sec:hessian-taylor} we demonstrate, using low-degree HES SoS proofs, that the spectral moments of the Hessian of the spherical spin glass ensemble converge to those of a scaling of the $\mathsf{GOE}(n)$ ensemble by a function that depends purely on the distance from the origin~(\Cref{lem:hessian-schatten-norm}). 
    We also demonstrate that the energy of the spin-glass hamiltonian is certifiably upper bounded by the value that comes directly from the contribution of the Hessian~(\Cref{lem:taylor}).
    \item In~\Cref{sec:feasibility-main} we show that the HES SoS hierarchy is feasible for the spherical spin-glass model by proving that its constraints are satisfied by a \emph{randomized} version of the Hessian ascent algorithm that has (provably) equivalent performance~(\Cref{lem:feasibility}).
    \item In~\Cref{sec:robustness} we show that the HES SoS hierarchy does better than the Hessian ascent algorithm on certain ensembles that come from anisotropic modifications of spherical spin glasses.
    \item In~\Cref{sec:discussion} we conclude with discussions about the possibility of obtaining meaningful lower-bounds against larger families of SDPs~(\Cref{sec:discussion-hardness}), the possibility (or lack thereof) of certifying relaxed versions of the Ghirlanda-Guerra identities~(\Cref{sec:sos-cert-gg-identities}), the ability to create a SoS HES hierarchy on the cube~(\Cref{sec:hes-sos-cube}), and other future directions~(\Cref{sec:future}).
\end{enumerate}
Various auxiliary technical definitions and proofs used throughout the paper are delegated to the appendices.

\subsection{Contributions}\label{sec:contributions}

We formalize a system of proofs that extends the SoS system to have the capability to reason about the moments of distributions over steps $\sigma_i := v_1 + \dots + v_i$. 
Each step $v_j$ is sampled one at a time, so that, for example, the moments of $v_i$ can be functions of $v_1, \dots, v_{i-1}$. 
The objects of this proof system are recursive polynomials of moments, for example, $(\E_{v_1}v_1(\E_{v_2} \iprod{v_1,v_2})^2)^3$. An inequality $p \ge 0$ can be proved by writing $p = \sum_i q_i^2$ as a sum of squares of polynomial expressions in the proof system~(\Cref{sec:moments-of-filtered-spaces}).

This set of proofs is convex~\cite{barak2014sum} and, therefore, it is the dual proof system to some primal space.
We analyze the SoS meta-algorithm for this proof system, formulated by optimizing some objective function over the primal space and then rounding the result.
Given below is a statement of the key properties that must permit low-degree SoS proofs under HES distributions to allow for certificates with a precise value.

Essentially, SoS can globally search over all low-degree polynomials $\tilde H$ on the ball $\{\sigma: \norm{\sigma}_2^2 \le 1\}$ that certifiably upper-bound the original objective $H$ on the sphere $\{\sigma: \norm{\sigma}_2^2 = 1\}$ to find one so that an (appropriately designed) local iterative algorithm run on $\tilde H$ provably optimizes $H$.
We remark that this is only one way to analyze the certification, and that in general there need not be any pre-existing local algorithm.

\begin{theorem}[Certifiable properties of HES distributions]\label{thm:main}
Let $\cA$ be some set of polynomial axioms that can be satisfied by the moments of a sequence of step variables $\sigma_i = \frac{1}{\sqrt{k}}\left(v_1 + \dots + v_i\right)$.

Let $H$ be a function on the sphere $\cS^{n-1}(1)$ in $\R^n$, for which there exists an extension $\tilde{H}$ defined on the $n$-dimensional ball such that $\nabla \tilde{H}(0) = 0$ and so that there is a SoS proof that $H(\sigma) \le \tilde{H}(\sigma)$ for all $\sigma \in \cS^{n-1}(1)$.
Suppose that $\tilde H$ additionally satisfies the following properties:
\begin{description}
    \item[Certifiably bounded higher-order derivatives]
    There is an integer $k$ and a SoS proof that for every sequence of steps $v_1, \dots, v_k$ satisfying $\cA$ and every $i \in [k]$ and every $r \in \{3, 4, \dots, d_H\}$ and every integer $p \le d_H/\eps$,
    \[ \E_{v_1, \dots, v_{i}}\;\left(\frac{1}{r!\,k^{r/2}}\iprod{\frac{1}{n}\nabla^r\tilde{H}(\sigma_{i-1}),v_i^{\ot r}}\right)^p
    \le \left(\frac{\eps^2}{2k}\right)^p 
    \,, \]
    \item[Certifiable stepwise boundedness]
    There is a SoS proof that for every sequence of steps $v_1, \dots, v_k$ satisfying $\cA$ and every $i \in [k]$,
    \[ \frac{1}{n}\E_{v_1, \dots, v_i} \tilde{H}(\sigma_i) \le (1-\eps + O(\eps^2))\frac{1}{k}\sum_{j=1}^i\mathcal{E}_i\,, \]
    where $\mathcal{E}_i$ only depends on $i$.
    \item[Agreement with $H$ near optima]
    There is a SoS proof that if
    \[ \frac{1}{n}\E_{v_1, \dots, v_{k}} \tilde{H}(\sigma) \ge (1-\eps')\frac{1}{k}\sum_{j=1}^{k}\mathcal{E}_j \]
    then
    \[ \frac{1}{n}\E_{v_1, \dots, v_{k}} \tilde{H}(\sigma) \le \frac{1}{n}\E_{v_1, \dots, v_{k}} H(\sigma) + \eps\eps'\,.\]
\end{description}
Then the HES SoS hierarchy outputs a certificate of value $(1-O(\eps))e^{O(k^2)}(\mathcal{E}_1 + \dots + \mathcal{E}_k)n/k$ and size $\mathsf{poly}\left(n,\exp\left(\frac{1}{\eps}\right)\right)$.
\end{theorem}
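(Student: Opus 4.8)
The plan is to produce an explicit \emph{dual} SoS proof, inside the filtered-moment proof system of \Cref{sec:moments-of-filtered-spaces} (built from the grammar of \Cref{sec:grammar}), that upper bounds $\tfrac1n\pE H(\sigma_k)$ for every pseudo-expectation $\pE$ over step sequences $\sigma_i = \tfrac{1}{\sqrt k}(v_1+\dots+v_i)$ that is consistent with the axioms $\cA$ and with the sphere constraint $\norm{\sigma_k}_2^2 = 1$. Since the HES SoS hierarchy's primal optimizes exactly $\pE H(\sigma_k)$ over such pseudo-moments, the value of this dual proof is precisely the certificate the hierarchy outputs. The first move is to invoke the hypothesized SoS proof of $H(\sigma)\le\tilde H(\sigma)$ on $\cS^{n-1}(1)$ to replace the objective, reducing the task to certifying an upper bound on $\tfrac1n\pE\tilde H(\sigma_k)$.

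Next I would Taylor-expand $\tilde H$ one step at a time along the filtration. Because $\tilde H$ is a polynomial of degree $d_H$, for each $i$ the identity
\[ \tilde H(\sigma_i) - \tilde H(\sigma_{i-1}) = \sum_{r=1}^{d_H}\frac{1}{r!\,k^{r/2}}\iprod{\nabla^r\tilde H(\sigma_{i-1}),v_i^{\ot r}} \]
is an exact polynomial identity, hence a legal rewriting in the proof system, and telescoping over $i = 1,\dots,k$ (with $\tilde H(0)$ a harmless constant) reduces everything to bounding $\pE$ of the sum of per-step increments. The $r=1$ terms are killed by $\cA$: each step is conditionally centred, so $\pE\iprod{\nabla\tilde H(\sigma_{i-1}),v_i} = 0$, using $\nabla\tilde H(0)=0$ for $i=1$. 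The terms $r\ge 3$ are controlled by the \textbf{Certifiably bounded higher-order derivatives} hypothesis at $p=1$, giving each such term $\le \eps^2/(2k)$ in pseudo-expectation and hence total contribution $O(\eps^2)$ over all $i$ and all $r\in\{3,\dots,d_H\}$. The surviving $r=2$ increments are exactly the content of the \textbf{Certifiable stepwise boundedness} hypothesis (the Hessian analysis of \Cref{lem:taylor} in the concrete instantiation), which applied at level $i=k$ yields $\tfrac1n\pE\tilde H(\sigma_k)\le(1-\eps+O(\eps^2))\tfrac1k\sum_{j=1}^k\mathcal{E}_j$, i.e. the announced main term $(1-O(\eps))(\mathcal{E}_1+\dots+\mathcal{E}_k)n/k$ after multiplying by $n$.

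The hard part is that this telescoping is \emph{not} valid verbatim for pseudo-expectations: $\pE$ does not factor through conditioning on $v_1,\dots,v_{i-1}$, so one cannot merely sum the conditional increments. Instead the bound must be carried through the filtration inductively, where at level $i$ one controls not only the conditional mean of $\tilde H(\sigma_i)$ but enough of its higher conditional moments that the step from level $i$ to $i+1$ can be bounded — this is exactly why the three hypotheses are stated for all $p\le d_H/\eps$ rather than only $p=1$, and why the argument needs the nuclear-norm control of joint moments (\Cref{thm:high-entropy-nuclear-norm-bound}) to keep the pseudo-moments well-behaved under the grammar. Each of the $k$ inductive passes multiplies the running bound by a factor near $1$ whose error, compounded over the $k$ levels together with the (polynomially-many-term, but Hölder-amplified) expansion of $\nabla^r\tilde H(\sigma_{i-1})$ over the earlier steps, accumulates to the stated multiplicative slack $e^{O(k^2)}$; tracking this compounding while never leaving low-degree SoS is the technical heart of the theorem. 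The \textbf{Agreement with $H$ near optima} hypothesis is what guarantees that the certified quantity is faithful to $\pE H$ — not merely an upper bound on $\pE\tilde H$ — on precisely the pseudo-distributions that nearly saturate it, closing the final gap between the certificate and the true objective.

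Finally, for the size bound: the dual proof uses moments of the filtered space up to degree $O(d_H/\eps)$ in the step variables (forced by the exponent $p\le d_H/\eps$ in the first hypothesis) across the $k$ filtration levels, so the corresponding semidefinite program has $n^{O(d_H/\eps)}\cdot 2^{O(k)}$ variables; for constant $k$ and $d_H$ this is $\poly(n,\exp(1/\eps))$, and solving it produces the certificate in that time. Feasibility of the program (non-emptiness of the primal, so that the certificate is meaningful) follows because a randomized Subag-type process satisfies $\cA$ (\Cref{lem:feasibility}).
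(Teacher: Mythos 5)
Your proposal follows essentially the same route as the paper's (implicitly assembled) proof of this theorem: reduce the objective to $\tilde H$ via the hypothesized SoS proof $H \le \tilde H$ on the sphere, telescope the per-step Taylor expansion through the filtration, kill the gradient term using conditional centredness and $\nabla\tilde H(0)=0$, absorb the $r\ge 3$ terms via the first hypothesis, certify the dominant Hessian contribution via stepwise boundedness, and close the loop with the agreement-near-optima hypothesis together with feasibility of an actual HES process. The one point where your emphasis diverges from the paper's is minor: in the paper the telescoping itself is unproblematic (the towering property of the $\fE_i$ and the conditional-constraint semantics of \Cref{sec:conditional-constraints} let the conditional per-step bounds compose under $\pE\fE_1$), and the high powers $p \le d_H/\eps$ are needed instead for semi-algebraic degree reduction (\Cref{lem:ideal-reduction-2p}) and for tuning the H\"older conjugates in the Hessian bound (\Cref{cor:shifted-hessian-holders}), while the $e^{O(k^2)}$ slack originates in the nuclear-norm induction of \Cref{thm:high-entropy-nuclear-norm-bound} rather than in compounding across filtration levels per se --- but these are differences of bookkeeping, not of substance.
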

In this paper we prove that the spherical spin glass satisfies these properties for some explicit choice of $\cE_i$ with a degree-$O\left(\exp\left(\frac{1}{\eps}\right)\right)$ proof over the HES SoS hierarchy. In the companion paper \cite{ss24rounding}, we provide a rounding algorithm that (with high probability) outputs configurations with performance equivalent to Subag's algorithm \cite{subag2021following}. 

\paragraph{SoS proofs for HES distributions}
The first of the conditions in~\Cref{thm:main} is \emph{a priori} the most difficult to verify, as bounding the higher-order derivatives requires reasoning about injective tensor norms, and it is impossible to give a constant-degree pointwise SoS proof doing so due to known lower bounds~\cite{bhattiprolu2017sum, hopkins2017power}.
However, a fundamental lemma about the moments of HES distributions~(\Cref{lem:nuc-norm-moments-informal}) allows us to circumvent this problem by reasoning about the spectral norm of a matrix reshaping of the tensor instead, which in particular is satisfied by polynomials with independent random coefficients.
We first give a precise definition of high-entropy steps:
\begin{definition}[High-Entropy Step (HES) Distributions]\label[Definition]{def:hes}
    The random vectors $v_1, \dots, v_k$ are a sequence of \emph{high-entropy steps} (HES) and $\sigma := v_1 + \dots + v_k$ is \emph{HES-distributed} if each $v_i$ is Gaussian conditioned on the values of all $v_j$ for $j \in [i-1]$ and
    \[ f_i(v_1, \dots, v_{i-1}) := \E_{v_i}[v_iv_i^T \mid v_1, \dots, v_{i-1}] \,, \]
    (so that $f_i$ is a matrix-valued function), then 
    \[ \E_{v_1, \dots, v_{i-1}} \norm{f_i(v_1, \dots, v_{i-1}) - f_i^{\le d_1}(v_1, \dots, v_{i-1})}_F^{d_2} \le o_n(1)\, ,
    \]
    for some $d_1$ and $d_2$, where ${f_i}^{\le d_1}$ is the degree-at-most-$d_1$ part of $f_i$ (e.g. if $f_i$ is decomposed into Hermite polynomials or any other family of weakly orthogonal polynomials~\cite{chihara2011introduction}), and
    \[ \opnorm{f_i(v_1, \dots, v_{i-1})} \le \frac{1}{\delta n}\, ,\]
    for some $\delta > 0$.
\end{definition}

We now state the bound on the nuclear norm of (polynomials of) joint moments of high-entropy steps.

\begin{lemma}[Nuclear norm property of HES moments (informal)]
\label[Lemma]{lem:nuc-norm-moments-informal}
    Let $p$ be a homogeneous degree-$d_p$ scalar-valued polynomial over $\R^n$ and let $\mat(p)$ be the matrix whose row indices are degree-$\left\lfloor \frac{d_p}{2}\right\rfloor$ monomials and whose column indices are degree-$\left\lceil \frac{d_p}{2}\right\rceil$ monomials and whose entries are
    \[ \mat(p)_{x^{\alpha}, x^{\beta}} =
    \frac{1}{C_{\alpha+\beta}}
    p_{\alpha+\beta}\,, \]
    where $p_{\alpha+\beta}$ is the coefficient of $p(x)$ in front of $x^{\alpha+\beta}$ and $C_{\alpha+\beta}$ is the number of entries of $\mat(p)$ that represent the same coefficient.
    Let $v_1, \dots, v_i$ be a sequence of high-entropy steps with the (conditional) covariance of each step bounded by $\nu$ in spectral norm.
    
    Then, there is a sum-of-squares proof that
    \[ \E [p(v_1+\dots + v_i)]^{\ell} \le e^{O(i^2)} (\nu^{d_p/2}n^{\lfloor d_p/2+1\rfloor/2}\opnorm{\mat(p)})^{\ell}\,,\]
    where the $O(1)$ is hiding dependencies on $i$, $d_p$, $\ell$, and the smoothness parameter of the high-entropy steps.
\end{lemma}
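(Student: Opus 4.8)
The plan is to prove the nuclear-norm (more precisely, spectral-norm) bound by first reducing the scalar quantity $\E[p(v_1+\dots+v_i)]^\ell$ to an expression involving only matrix reshapings, and then controlling those reshapings by a combination of the moment structure of HES distributions and a spectral matrix inequality. First I would expand $p(v_1+\dots+v_i)$ multilinearly: since $p$ is homogeneous of degree $d_p$, writing $\sigma_i=v_1+\dots+v_i$ we get $p(\sigma_i)=\sum_{j_1,\dots,j_{d_p}}\hat p(v_{j_1},\dots,v_{j_{d_p}})$ where $\hat p$ is the symmetric multilinear form associated to $p$ and each index runs over $[i]$. Because there are only $i^{d_p}$ such terms, and we are raising to the $\ell$-th power, the $e^{O(i^2)}$ (and implicit $d_p,\ell$-dependent) factor can absorb the combinatorial blowup provided each individual monomial term contributes at most $\nu^{d_p/2}n^{\lfloor d_p/2+1\rfloor/2}\opnorm{\mat(p)}$. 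So the crux is a per-term bound, and the HES structure is what makes each term tractable.

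Next I would handle a single multilinear term by taking expectations from the inside out, exploiting the filtration: condition on $v_1,\dots,v_{k-1}$ where $k$ is the largest index appearing, and use that $v_k$ is conditionally Gaussian. Gaussian expectations of a multilinear form in $v_k$ reduce (by Wick/Isserlis) to contractions of the form $\iprod{\text{reshaping of }\hat p,\ f_k(v_1,\dots,v_{k-1})^{\otimes m}\otimes(\text{lower steps})}$, where $f_k$ is the conditional covariance from \Cref{def:hes}, whose spectral norm is bounded by $\nu$ by hypothesis. Each such contraction can be written as a trace of a product of matrices: the reshaped coefficient tensor $\mat(p)$ (or a related flattening), copies of $f_k$, and analogous covariance matrices from the earlier steps. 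I would bound such a trace by $\opnorm{\mat(p)}$ times the product of the operator norms $\opnorm{f_k}\le\nu$, times a factor of $\sqrt n$ for each "uncontracted" dimension — this is exactly where the exponent $\lfloor d_p/2+1\rfloor/2$ on $n$ comes from, since a degree-$d_p$ polynomial reshaped into a roughly-square matrix has $\lceil d_p/2\rceil$ indices on one side and any leftover "diagonal" contraction contributes the extra half-power of $n$ when $d_p$ is odd. Iterating this down the filtration, picking up one factor of $\nu^{1/2}$ per step-degree and organizing the $\ell$ copies of $p$ so the $\opnorm{\mat(p)}$ factors multiply cleanly, yields the claimed bound.

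The two structural ingredients I would lean on are: (i) all of the above manipulations — multilinear expansion, Gaussian (Wick) reduction, trace/Cauchy–Schwarz matrix inequalities of the form $\tr(AB)\le\opnorm A\cdot\opnorm B$, and $\opnorm{A}^2\preceq$ (something SoS-provable) — are available as low-degree SoS proofs, so the final inequality is genuinely a sum-of-squares certificate and not just a true statement; this is the same template used in \Cref{thm:high-entropy-nuclear-norm-bound}. (ii) The high-degree part of $f_i$ is negligible in the relevant norm (the $o_n(1)$ clause of \Cref{def:hes}), which lets me truncate $f_i$ to a low-degree polynomial so that contractions against it remain polynomial expressions of bounded degree; the error from truncation is pushed into the $e^{O(i^2)}$ slack and the $\ell$-th power.

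The main obstacle I anticipate is making the Wick-reduction-plus-matrix-inequality argument work \emph{within} the SoS proof system rather than just as a numerical estimate: one must express the Gaussian moment computation for conditionally-Gaussian $v_k$ as a polynomial identity in the moment variables of the filtered space (so that SoS can manipulate it), and then certify the trace inequalities $\tr(\mat(p)\cdot M_1\cdots M_r)\le \opnorm{\mat(p)}\prod\opnorm{M_j}$ via low-degree SoS — which in turn requires SoS-certifiable bounds $M_j\preceq \opnorm{M_j}\cdot I$, i.e. the spectral-norm hypotheses of \Cref{def:hes} must themselves be encoded as SoS axioms in $\cA$. Keeping careful track of how the degree of the SoS proof grows with $i$, $d_p$, and $\ell$ — and verifying it stays bounded (so the "$O(1)$ hiding dependencies on $i,d_p,\ell$" claim is honest) — is the delicate bookkeeping I expect to spend the most effort on.
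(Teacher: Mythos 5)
There is a genuine gap at the heart of your per-term bound. You propose to control each contraction by writing it as a trace of a product of matrices and bounding ``such a trace by $\opnorm{\mat(p)}$ times the product of the operator norms $\opnorm{f_k}\le\nu$, times a factor of $\sqrt n$ for each uncontracted dimension.'' But $\Tr(AB)\le\opnorm{A}\,\opnorm{B}$ is false; the correct pairing is $\Tr(AB)\le\opnorm{A}\,\norm{B}_1$, i.e.\ one side must be measured in \emph{nuclear} norm. Once you use the correct inequality, the entire difficulty of the lemma becomes: exhibit a matrix representation of the moments of $\sigma_i$ whose nuclear norm is at most $\nu^{d_p/2}n^{\lfloor d_p/2+1\rfloor/2}$. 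The naive representation fails badly: already in the single-step Gaussian case, Isserlis gives mode-permutations of $\Sigma^{\ot d_p/2}$, whose nuclear norm is $(\Tr\Sigma)^{d_p/2}\le(\nu n)^{d_p/2}$ --- a loss of roughly $n^{d_p/4}$ against the claim. The paper recovers this factor by \emph{reshaping}: pairs $\Sigma\ot\Sigma$ are re-represented as rank-one outer products $\Phi\Phi^{\sT}$ of the vectorization $\Phi$ of $\Sigma$, with nuclear norm $\norm{\Sigma}_F^2\le\nu^2 n$ instead of $(\Tr\Sigma)^2\le\nu^2n^2$. This reshaping moves tensor modes between the row and column sides of the moment matrix and is only legitimate because $\mat(p)$ is mode-symmetric, so the Hilbert--Schmidt inner product is preserved (\Cref{prop:representation-product-fully}). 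Your proposal contains neither the nuclear-norm side of the Hölder pairing nor the reshaping that makes it small, so the mechanism you describe cannot produce the exponent $\lfloor d_p/2+1\rfloor/2$; it would yield $n^{d_p/2}$ at best.

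A secondary but still substantive omission: iterating ``Wick from the inside out'' across steps is not just a matter of multiplying covariance matrices, because the conditional covariance of $v_k$ is a polynomial in the earlier steps, so the cross-step joint cumulants $\kappa_{k+2}(y,\dots,y,p,q)$ are nontrivial. The paper controls their nuclear norms via Wiener--Hermite decompositions (\Cref{lem:ydotsypq}, \Cref{cor:hermite-operator-norm}) and then runs a two-track induction (\Cref{lem:moment-nuclear-norm}) that simultaneously maintains bounds for the squaremost shape $(r,s)$ and the off-by-one shape $(r-1,s+1)$, precisely so that the odd-mode $\sqrt n$ penalty is paid only once across all $i$ steps rather than once per step. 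Your sketch treats these cross terms as ``analogous covariance matrices'' and gives no bookkeeping that would prevent the $\sqrt n$ penalty from accumulating, so even granting the per-step estimates the final exponent would not come out right.
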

This informal lemma is a special case of a more general lemma (\Cref{thm:high-entropy-nuclear-norm-bound}) where $p$ can be a multi-argument polynomial whose arguments are each high-entropy step vectors.

While the spectral norm of $\mat(p)$ may appear at first to be an opaque quantity, we can interpret it as the maximum over all low-degree polynomials $\phi: \cS^{n-1}(1) \to \R$ of the quantity
\[\frac{1}{n^{d_p/2}}\cdot\frac{|\E_{\sigma \sim S^{n-1}(1)} p(\sigma)\phi(\sigma)^2|}{\E_{\sigma \sim S^{n-1}(1)} \phi(\sigma)^2}\,.\]
In other words, $\norm{\mat(p)}{}_{\mathrm{op}}$ is small if the average value of $p$ is small on every low-degree polynomial reweighing of the uniform distribution over the sphere. 

Note that when $p: \cS^{n-1}(1) \to \R$ is a polynomial where each possible coefficient is random, zero-mean, variance $O(n)$, and  bounded in its higher moments, the value of $\norm{\mat(p)}{}_{\mathrm{op}}$ will be on the order of $n^{\lceil d_p/2+1\rceil/2}$~\cite[Theorem 4.4.5]{vershynin2018high} and we can choose $\nu^{d_p/2}$ to be on the order of $n^{-d_p/2}$, so that the order-$n$ contributions in \Cref{lem:nuc-norm-moments-informal} precisely cancel each other out.
The resulting bound on $\E p(\sigma)$ is $O(n)$, which is of the same order as the true value of $\max_{\sigma} p(\sigma)$ with high probability~\cite[Lemma 58, Eq. B.6]{arous2020geometry}.

\Cref{lem:nuc-norm-moments-informal} complements the graph matrix norm bound framework~\cite{ahn2016graph}; graph matrix norm bounds provide upper bounds on the \emph{spectral norms} of matrices that can be used to represent \emph{polynomials}, and~\Cref{lem:nuc-norm-moments-informal} provides upper bounds on the \emph{nuclear norms} of matrices that represent \emph{moments}.
Consequently, the bounds provided by~\Cref{lem:nuc-norm-moments-informal} are on the dual norm ($1$-norm) in the dual space to those provided by graph matrix norm bounds ($\infty$-norm), allowing us to apply each technique to its corresponding term in a $(1,\infty)$-H\"older's inequality.

This serves as a template for future work bounding the spectral statistics of other random matrices and tensors via low-degree SoS proofs, with the critical idea being to parameterize the set of step-distributions in a way that the nuclear norm of the joint moments can be made to ``fight'' the typical spectral norm of the appropriate reshaping of the tensor, thereby allowing for a smooth transfer of bounds from random matrix theory and free probability to be SoS-compliant. 

\paragraph{Spherical spin glass model}
Finally, we show that the spherical spin glass model (defined in~\Cref{def:spin-glass}) satisfies these sufficient conditions, and therefore the following algorithm exists:
\begin{theorem}
\label{thm:final-result-informal}
For every $\eps > 0$ and $n$, there exists a set of SDP constraints $R_{\eps}$~(\Cref{eq:the-program}) of polynomial size, such that with high probability with respect to a spin glass Hamiltonian $H$ over $n$ variables, there exist $O\left(\mathsf{poly}\left(n,\exp\left(\frac{1}{\eps}\right)\right)\right)$-sized SoS certificates over this SDP as well as actual HES distributions which certify 
\[
    e^{O(1/\eps^2)}\,\mathsf{ALG} \ge \E_{v_1,\dots,v_k}\left[H\left(\frac{1}{\sqrt{k}}\sum_{i=1}^k v_j\right)\right] \ge (1-\eps - O(\eps^2))\,\mathsf{ALG}\,,
\]
where $\mathsf{ALG}$ is a conjectured upper limit on the quality of solution achievable by polynomial-time algorithms, elaborated on in \Cref{sec:alg-threshold}.

Furthermore, in the fRSB regime of the spherical spin glass model~(\Cref{prop:frsb-equivalent}), we have
\[ e^{O(1/\eps^2)}\sup_{\sigma \in \cS^{n-1}} H(\sigma) \ge \E_{v_1,\dots,v_k}\left[H\left(\frac{1}{\sqrt{k}}\sum_{i=1}^k v_j\right)\right] \ge (1-\eps - O(\eps^2))\sup_{\sigma \in \cS^{n-1}} H(\sigma)\,.\]
\end{theorem}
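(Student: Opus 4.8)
The plan is to obtain \Cref{thm:final-result-informal} as an application of \Cref{thm:main}. Concretely, for the spherical spin glass Hamiltonian $H$ of \Cref{def:spin-glass} I would instantiate \Cref{thm:main} with the following choices. Since $H$ is already a polynomial that extends to the unit ball and, in the main case of interest (the mixture carries no linear term), satisfies $\nabla H(0)=0$, take the extension $\tilde H = H$; then the required SoS proof that $H(\sigma)\le\tilde H(\sigma)$ on $\cS^{n-1}$ is vacuous and the hypothesis \textbf{Agreement with $H$ near optima} holds trivially. (An external field is handled by a routine modification of the extension.) Take $k=\Theta(1/\eps)$ steps, and let $\cA$ be the axiom set encoding the HES constraints of \Cref{def:hes}: each step $v_i$ is conditionally Gaussian with conditional mean zero, its conditional covariance has operator norm at most $\nu=\Theta(1/n)$, and $\|v_1+\dots+v_k\|^2=k$, so that $\sigma$ lands on the sphere. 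Finally let $\mathcal{E}_i$ be the Subag / Chen--Panchenko--Subag energy increment at squared radius $(i-1)/k$ (a Riemann-sum piece of the integral defining $\mathsf{ALG}$; cf.\ \Cref{sec:alg-threshold}), chosen so that $(\mathcal{E}_1+\dots+\mathcal{E}_k)\,n/k=(1\pm o(1))\,\mathsf{ALG}$.

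The work then reduces to verifying the two remaining hypotheses of \Cref{thm:main}, which is exactly what \Cref{sec:hods} and \Cref{sec:hessian-taylor} are built for. For \textbf{Certifiably bounded higher-order derivatives}, observe that $\nabla^r H(\sigma_{i-1})$ is a Gaussian tensor (a contraction of the coefficient tensors of $H$ against powers of $\sigma_{i-1}$), so $\langle\tfrac1n\nabla^rH(\sigma_{i-1}),v_i^{\ot r}\rangle$ is a multi-argument polynomial in $v_1,\dots,v_i$; I would bound its $p$-th HES moment with the multi-argument nuclear-norm inequality \Cref{thm:high-entropy-nuclear-norm-bound}, reducing matters to the quantity $\nu^{r/2}\,n^{\lfloor r/2+1\rfloor/2}\,\opnorm{\mat(\tfrac1n\nabla^r H(\sigma_{i-1}))}$. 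With high probability over $H$, this operator norm is controlled by the graph-matrix norm bounds recalled in \Cref{sec:hods}; substituting $\nu=\Theta(1/n)$ and the step-rescaling factor $k^{-r/2}$ makes the $r\ge 3$ contributions $O(k^{-\Omega(1)})\le\eps^2/(2k)$ once $k$ is a large enough polynomial in $1/\eps$ — this is in essence \Cref{lem:taylor} for the $r\ge 3$ terms. For \textbf{Certifiable stepwise boundedness}, telescope $\tilde H(\sigma_i)=\sum_{j\le i}\big(\tilde H(\sigma_j)-\tilde H(\sigma_{j-1})\big)$ and bound each increment by its second-order Taylor term $\tfrac1{2k}\E\langle v_j,\tfrac1n\nabla^2H(\sigma_{j-1})v_j\rangle$ (the first-order term vanishes in expectation because the steps are conditionally mean-zero, and the third- and higher-order terms are negligible by the previous paragraph); \Cref{lem:hessian-schatten-norm} then certifies, in low-degree SoS, that $\tfrac1n\nabla^2H(\sigma_{j-1})$ has the spectral statistics of a scaled $\mathsf{GOE}(n)$ whose scale depends only on $\|\sigma_{j-1}\|$, so that this term is at most $(1-\eps+O(\eps^2))\,\mathcal{E}_j/k$ as required.

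Granting the three hypotheses, \Cref{thm:main} outputs a certificate of size $\poly(n,\exp(1/\eps))$ establishing, for every HES distribution feasible for $R_\eps$, that $\E_{v_1,\dots,v_k}H(\sigma)\le(1-O(\eps))\,e^{O(k^2)}\,\tfrac{n}{k}(\mathcal{E}_1+\dots+\mathcal{E}_k)=e^{O(1/\eps^2)}\,\mathsf{ALG}$; this is the upper bound. For the lower bound and the ``actual HES distributions'' clause I would invoke \Cref{lem:feasibility}: the randomized variant of Subag's Hessian-ascent algorithm (\Cref{sec:subag}) is an HES distribution satisfying $\cA$ that attains $\E H(\sigma)\ge(1-\eps-O(\eps^2))\,\mathsf{ALG}$; being feasible, the certificate above also applies to it, yielding the full sandwich for this distribution. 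The fRSB statement is then immediate: by \Cref{prop:frsb-equivalent}, in the fRSB regime $\mathsf{ALG}$ equals the ground-state energy $\sup_{\sigma\in\cS^{n-1}}H(\sigma)$ up to the same $o(1)$, so one substitutes.

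The crux — and the only genuinely hard step — is \textbf{Certifiably bounded higher-order derivatives}. A direct pointwise SoS proof bounding $\langle\nabla^rH(\sigma_{i-1}),v_i^{\ot r}\rangle$ would amount to certifying injective tensor norms, which is impossible at constant degree by the lower bounds of \cite{bhattiprolu2017sum, hopkins2017power}. The resolution is to argue only \emph{in expectation} over the HES distribution and to exploit the $(1,\infty)$-H\"older duality between the \emph{nuclear} norm of the joint step-moments (bounded via \Cref{thm:high-entropy-nuclear-norm-bound}) and the \emph{spectral} norm of the matrix reshaping $\mat(\cdot)$ of the derivative tensor (bounded via graph-matrix norms), tuning $\nu$ so that the two order-$n$ factors cancel. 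A secondary, more mechanical difficulty is to package all of the above as a single low-degree SoS proof over the atoms of the SDP $R_\eps$: one must encode the conditional HES constraints as moment atoms as in \Cref{sec:sdp-proof-system} and track the $e^{O(k^2)}$ loss through the composition of the per-step bounds.
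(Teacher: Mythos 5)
Your proposal is correct and follows essentially the same route as the paper: it instantiates \Cref{thm:main} with $\tilde H = H$, verifies the higher-order-derivative hypothesis via the nuclear-norm/spectral-norm $(1,\infty)$-H\"older duality of \Cref{thm:high-entropy-nuclear-norm-bound} combined with graph-matrix norm bounds (this is \Cref{lem:taylor}), verifies stepwise boundedness via the certifiable Wignerianity of the Hessian (\Cref{lem:hessian-schatten-norm}, with the $(1-\eps+O(\eps^2))$ factor extracted by the H\"older-conjugate/Catalan analysis of \Cref{cor:shifted-hessian-holders}), and obtains the lower bound and the ``actual HES distributions'' clause from the feasibility of randomized Hessian ascent (\Cref{lem:feasibility}), with the fRSB identification via \Cref{prop:frsb-equivalent} and the Chen--Sen formula. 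The only elision is the quantitative dependence of $k$ on $\eps$ needed to suppress the $r\ge 3$ Taylor terms, which the paper itself handles via the exponential-in-$1/\eps$ proof degree already allowed by the theorem statement.
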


For the spherical spin glass problem, the Crisanti-Sommers formula~\cite{crisanti1992sphericalp} is known to give an explicit expression for the typical optimal value~\cite{talagrand2006free, subag2018free}. The $\mathcal{E}_i$ value achieved for spherical spin glass Hamiltonians in~\Cref{thm:main} (formally shown in~\Cref{lem:hessian-schatten-norm}) and the bounds on the higher-order derivatives provided by~\Cref{lem:nuc-norm-moments-informal} (formally shown in~\Cref{thm:high-entropy-nuclear-norm-bound}) combine to show that the HES SoS program gives \emph{low-degree certificates} on the expected value of the Hamiltonian under any HES distribution. The quality of the bound of this certificate is equivalent (up to constant factors) to that of a ``relaxation'' of the Crisanti-Sommers formula~(\Cref{proposition:huang-alg}), a result of possibly independent interest. As mentioned earlier, the rigorous proofs of the Crisanti-Sommers formula by Talagrand~\cite{talagrand2006free} and Subag~\cite{subag2018free} make heavy use of concentration of measure and the analytical part of the toolkit used to analyze Gaussian processes (such as smooth interpolations and chaining) to characterize the optimum value, neither of which are accessible to low-degree SoS proofs. This provides further evidence for the low-degree hardness of certifying the expected \emph{true} optimal value. The ``relaxed'' value is the conjectured maximum attainable by any polynomial time algorithm~\cite{huang2021tight, jones2022random}.

The constant factors from $\mathsf{ALG}$ which the HES SoS certificates are off by \emph{may} be removed by a more judicious analysis of~\Cref{lem:ydotsypq}. One way to do this is to attempt a proof which uses a particular low-degree representation for the conditional covariance $\fE_i \left[v_iv_i^\sT \mid v_1,\dots,v_{i-1}\right]$ which satisfies the HES constraints, but is also near-optimal. It is not hard to verify that this can be achieved by a matrix Bernstein polynomial that approximates the projector into the top-eigenspace of the Hessian~(\Cref{cor:large-hs-correlation-bernstein-hessian}), or a choice of a sufficiently large power of a ``shifted'' Hessian that tightens the H\"older inequality applied to bound the contribution of the Hessian in the analysis~(\Cref{cor:shifted-hessian-holders}). The remaining technical challenge then becomes to understand the \emph{derivatives} of the near-optimal choice of conditional covariance, which seems to require deeper techniques from the study of asymptotically free tensors~\cite{bonnin2024freeness}. 

\paragraph{Ensemble-agnosticism}
The HES hierarchy makes no assumptions specific to the spherical spin glass model:
its structure is more closely related to the smoothness and correlational structure of low-degree Gaussian processes.
While the HES hierarchy does not inherently encode any local algorithm, we analyze its behavior via one, and in that view, it is able to search over all possible low-degree extensions of the original input problem to find one where a local algorithm works.

In \Cref{sec:robustness}, we provide evidence for this viewpoint by considering anisotropic modifications of the spherical spin glass problem. 

\begin{lemma}[HES hierarchy on ansiotropic spherical spin-glasses]
    There exist ansisotropic modifications to the spherical spin-glass problem, such as those specified in~\Cref{example:degree-scaling},~\Cref{example:direct-sum} and~\Cref{example:shared-sum}, for which there exist HES distributions achieving better value than direct Hessian ascent.      
\end{lemma}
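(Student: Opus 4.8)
The plan is to exploit the single structural asymmetry between the two procedures: direct Hessian ascent is \emph{greedy and rank-one} --- at every radius it moves along the top eigenvector of $\nabla^2\tilde H$ and, having no global view, it cannot re-allocate its ``radius budget'' across a block decomposition --- whereas a HES distribution may use conditional covariances that are neither rank-one nor aligned with the current top eigenspace, and may distribute the quantities $\nu_i$ freely across coordinates. On an anisotropic ensemble the globally optimal trajectory does not locally maximize the quadratic gain; it hedges. So for each of the three families I would: (i) write the modification down precisely; (ii) determine the value of direct Hessian ascent by running the Subag dynamics of \Cref{sec:subag} with the anisotropic Hessian law; (iii) exhibit an explicit HES distribution, verify it satisfies \Cref{def:hes}, and show (via \Cref{lem:taylor}, with higher-order Taylor terms killed by \Cref{thm:high-entropy-nuclear-norm-bound}, exactly as in \Cref{lem:hessian-schatten-norm}) that its accumulated Hessian contribution strictly exceeds that of Hessian ascent, with a gap bounded away from $0$.

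\textbf{The direct-sum example.}
Here $H(\sigma)=H^{(1)}(\sigma_A)+H^{(2)}(\sigma_B)$ for a partition $A\sqcup B=[n]$ and two mixtures $\xi_A\ne\xi_B$; the sphere constraint $\norm{\sigma_A}_2^2+\norm{\sigma_B}_2^2=1$ couples the blocks only through the radius split, so the true (and decoupled-algorithmic) optimum is $\max_{r_A^2+r_B^2=1}\big(\mathrm{OPT}_A(r_A)+\mathrm{OPT}_B(r_B)\big)$ with some maximizer $(r_A^\star,r_B^\star)$. I would build the HES distribution that first runs the block-restricted randomized Subag process of \Cref{lem:feasibility} applied to the submodel on $A$ with radius budget $r_A^\star$, using the first $\lceil (r_A^\star)^2 k\rceil$ steps, and then does the same on $B$ for the remaining steps. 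Concatenation of HES steps is again HES; the per-step operator-norm bound $\opnorm{f_i}\le \tfrac{1}{\delta n}$ and the low-degree Frobenius closeness are inherited from \Cref{lem:feasibility}, since on each sub-run the conditional covariance is the matrix-Bernstein approximation of the top-eigenspace projector of the corresponding block Hessian, which is low-degree in the previous steps by \Cref{cor:large-hs-correlation-bernstein-hessian}. This attains $(1-O(\eps))\big(\mathrm{OPT}_A(r_A^\star)+\mathrm{OPT}_B(r_B^\star)\big)$. On the other hand, direct Hessian ascent at radius $r$ follows the top eigenvector of the block-diagonal Hessian, hence moves entirely inside whichever block currently has the larger spectral edge; the radius allocation it induces is the greedy one, which for generic $\xi_A\ne\xi_B$ differs from $(r_A^\star,r_B^\star)$ and therefore yields strictly smaller value. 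I would then exhibit one concrete pair $(\xi_A,\xi_B)$ where the greedy and optimal splits are provably distinct and the gap is an explicit positive constant.

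\textbf{The degree-scaling and shared-sum examples.}
For degree-scaling ($H=\sum_p c_p H_p$ with $c_p$ chosen, possibly $n$-dependent, so that the spectral edge of $\nabla^2\tilde H(\sigma_r)$ is dominated by different $p$-components on different ranges of $r$) and shared-sum (two spin-glass terms on the same or overlapping variable sets weighted so the effective quadratic form along the trajectory is anisotropic), restricting steps to blocks is not available, so the beating HES distribution instead takes each conditional covariance to be a Bernstein polynomial of a \emph{shifted} Hessian $\nabla^2\tilde H(\sigma_{i-1})-t_i\,\Id$ in the spirit of \Cref{cor:shifted-hessian-holders}: this spreads step-mass over a band of near-top eigenvectors instead of the single top one. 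Such a covariance is still a valid HES step (bounded operator norm and low-degree dependence, again via the arguments behind \Cref{lem:feasibility} and \Cref{cor:large-hs-correlation-bernstein-hessian}), and the shift parameters $t_i$ can be tuned to the anisotropic radius-vs-edge profile, which greedy rank-one ascent cannot do; a direct computation of the accumulated quadratic gain (Hölder-controlled as in \Cref{lem:taylor}) then shows it exceeds the Hessian-ascent value. As before I would finish by making the improvement quantitative on an explicit instance in each family.

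\textbf{Main obstacle.}
The real work is the strict inequality, in two parts: (a) pinning down the \emph{exact} value of direct Hessian ascent on each modified ensemble, which requires running the Subag ODE of \Cref{sec:subag} with an anisotropic Hessian law --- identifying the trajectory's radius profile and the limiting spectral edges it encounters --- and (b) proving a concrete HES construction beats it by a margin that is $\Omega(1)$ rather than $o_n(1)$. Secondary difficulties are checking that the candidate conditional covariances (block-restricted projectors; Bernstein polynomials of shifted Hessians) still satisfy \Cref{def:hes} on the modified models, which needs the anisotropic Hessian spectra to concentrate, and confirming that the higher-order Taylor terms remain negligible under these non-standard covariances, i.e.\ that the matrix reshapings $\mat(\cdot)$ of the relevant derivative tensors still have the expected spectral norm for the rescaled coefficient laws so that \Cref{thm:high-entropy-nuclear-norm-bound} applies. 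I expect part (a) --- the anisotropic Subag analysis and the verification that its greedy allocation is genuinely suboptimal --- to be the crux.
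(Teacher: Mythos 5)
Your treatment of the direct-sum example is essentially the paper's argument in different clothing: the paper also shows that the local algorithm is trapped in the degree-$2$ block near the origin (because $\nu_4''(q)\propto q^2$ vanishes there while $\nu_2''$ is constant, so the greedy spectral-edge comparison always prefers the wrong block early on), and beats it with an HES distribution whose steps are allocated to the degree-$4$ block from the start; whether you realize that distribution by explicit block scheduling (your version) or by reweighting the extension so that Hessian ascent on the reweighted $\tilde H$ does the scheduling for you (the paper's version) is a matter of presentation. Your insistence on pinning down the exact Hessian-ascent value and an $\Omega(1)$ gap is also the right instinct, and is if anything more demanding than what the paper actually carries out.

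However, there is a genuine gap in your mechanism for the degree-scaling example, and it traces back to a mis-identification of the asymmetry being exploited. The relevant asymmetry is \emph{not} ``rank-one greedy step versus spread-out step'' --- the baseline here is already the randomized ascent of \Cref{alg:subag-2}, which samples uniformly from a $\delta n$-dimensional top eigenspace, so band-spreading buys nothing new. The asymmetry the paper uses is that the HES hierarchy may take its conditional covariance to be a low-degree matrix polynomial of \emph{any} extension of $H$ from the sphere into the ball, whereas direct Hessian ascent is stuck with the literal extension it is handed. In \Cref{example:degree-scaling} the given extension is $\alpha_2\norm{\sigma}_2^{100}H_2(\sigma)+H_4(\sigma)$; near the origin its Hessian contains the degree-$2$ information only through terms suppressed by $\norm{\sigma}_2^{98}=o(1)$ (plus rank-one and scalar-times-identity pieces that cannot bias the step direction). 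Consequently \emph{no} function of that Hessian --- shifted, Bernstein-smoothed, or otherwise --- can recover the degree-$2$ signal early on, so your proposed covariances (Bernstein polynomials of $\nabla^2\tilde H(\sigma_{i-1})-t_i\Id$ for the given $\tilde H$) inherit exactly the failure mode of the local algorithm. The fix is to build the covariance from the Hessian of the quotiented extension $\alpha_2 H_2+H_4$ (equal to $H$ on the sphere, and still a low-degree matrix polynomial of the iterates, hence HES-admissible); this is the paper's one-line resolution of that example, and it is the step your proposal is missing.
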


The idea behind these anisotropic modifications is to ``confuse'' the Hessian ascent algorithm by putting differing amounts of weight in different subspaces for some extension of the original function $\tilde{H}$ inside the ball. 
We conjecture that this gives this SoS relaxation a robustness against model misspecification that cannot be attained by any purely local algorithm.

\begin{conjecture}[Ensemble-agnosticism]\label[Conjecture]{conj:distributional-robustness}
    Let $H_n: \cS^{n-1} \to \R$ be a random sequence of polynomials of degree at most some constant $d$. Suppose that $\{H_n\}$ is an asymptotically Gaussian process on $\cS^{n-1}$, in the sense that for every large enough $n \in \N$, for every $r \in \N$, and every fixed sequence $\sigma_1, \dots, \sigma_r \in \cS^{n-1}$, the $\ell$th cumulant tensor (\Cref{def:cumulants}) where $\ell \ge 3$ of the joint distribution on $H_n(\sigma_1), \dots, H_n(\sigma_r)$ is bounded by at most $o_n(r^{r/2}\norm{C}_F^{r/2})$ in 2-norm, where $C \in \R^{r \times r}$ is the covariance matrix between $H_n(\sigma_1), \dots, H_n(\sigma_r)$. 

    Suppose that for every $n$ and every fixed $\sigma \in \cS^{n-1}$,
    \[ \E_{H_n} H_n(\sigma) = 0 \qquad\quad
    \text{and}
    \quad\qquad\E_{H_n} H_n(\sigma)^2 \le O(1) \E_{H_n}\, \norm{H_n}_{L^2(\cS^{n-1})}^2
    \,,\]
    where the expectations are over the randomness of $H_n$ and $\norm{H_n}_{L^2(\cS^{n-1})}^2 := \E_{\sigma \sim \cS^{n-1}} H_n(\sigma)^2$.
    Suppose also that
    \[ \E_{\sigma \sim \cS^{n-1}}\sigma H_n(\sigma) = 0 \]
    (meaning that $H_n$ has no linear part).
    Then there is a rounding algorithm for the HES relaxation that, with high probability, achieves value on $H$ that is an arbitrarily good multiplicative approximation to optimal among polynomial-time algorithms as $n$ grows large.
\end{conjecture}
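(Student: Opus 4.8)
This is a conjecture rather than a theorem, so what follows is a research program rather than a complete argument. The statement splits into two logically separate components: (i) a \emph{certification} claim, that the HES SoS hierarchy admits a polynomial-size certificate whose value is, up to a $1+o(1)$ factor, the conjectured algorithmic optimum for the process $\{H_n\}$; and (ii) a \emph{rounding} claim, that one can extract from a near-optimal SDP solution an actual HES distribution — and ultimately a point on $\cS^{n-1}$ — achieving that value. The plan for (i) is to reduce it to verifying the three hypotheses of \Cref{thm:main} for a suitable ball-extension $\tilde H$ of $H_n$, invoking the asymptotic-Gaussianity hypothesis exactly where the spin-glass-specific input was used in this paper; the plan for (ii) is to generalize the rounding scheme of the companion paper \cite{ss24rounding}. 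The equality with the best polynomial-time algorithm then requires in addition either the conjectured characterization of that optimum (via the Huang--Sellke AMP \cite{huang2023algorithmic, huang2024optimization}) or an unconditional algorithmic hardness result such as a branching overlap-gap bound.

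\textbf{Step 1: the extension and the higher-derivative bound.} I would take $\tilde H$ to be the harmonic extension of $H_n$ into the unit ball (extend each degree-$d$ spherical-harmonic component by $r^d$); then $\nabla\tilde H(0)=0$ because $\E_{\sigma\sim\cS^{n-1}}\sigma H_n(\sigma)=0$, and there is a pointwise SoS proof that $H_n\le\tilde H$ on the sphere exactly as in the spin-glass case. The first hypothesis of \Cref{thm:main} — certifiably bounded $p$-th moments of $\frac{1}{r!\,k^{r/2}}\iprod{\frac1n\nabla^r\tilde H(\sigma_{i-1}),v_i^{\ot r}}$ — is the one that even in this paper is handled only through \Cref{lem:nuc-norm-moments-informal}/\Cref{thm:high-entropy-nuclear-norm-bound}, and that lemma is already ensemble-agnostic. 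What must be checked is that the matrix reshaping $\mat(\nabla^r\tilde H(\sigma_{i-1}))$ still has operator norm $\tilde O(n^{\lceil r/2+1\rceil/2})$. Here the hypothesis that the $\ell$-th joint cumulant tensors of $\{H_n\}$ for $\ell\ge 3$ are $o(r^{r/2}\|C\|_F^{r/2})$ is precisely what makes $\nabla^r\tilde H$ behave, for norm purposes, like a tensor with near-independent coefficients; combining this with the graph-matrix norm bounds \cite{ahn2016graph} and the $(1,\infty)$-H\"older pairing against \Cref{lem:nuc-norm-moments-informal} should give the needed bound, with the $\Theta(n)$ factors cancelling so that the certificate value is $\Theta(n)$, the correct order.

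\textbf{Step 2 (the main obstacle): the $\mathcal E_i$ and the near-optimal conditional covariances in the anisotropic setting.} Verifying ``certifiable stepwise boundedness'' and ``agreement with $H$ near optima'' requires identifying the correct $\mathcal E_i$ together with a near-optimal choice of the conditional covariances $\fE_i\left[v_iv_i^\sT\mid v_1,\dots,v_{i-1}\right]$ — the anisotropic analog of Subag's Hessian-ascent trajectory \cite{subag2021following}. For the rotationally invariant spherical spin glass the Hessian at radius $q$ has a rescaled semicircular spectral law and the optimal step covariance is the projector onto its top eigenspace (\Cref{lem:hessian-schatten-norm}); for a general low-degree Gaussian process the Hessian at an interior point is a position- and direction-dependent random matrix whose limiting spectral law is \emph{not} semicircular (as previewed by the anisotropic examples of \Cref{sec:robustness}), and the optimal conditional covariance is some matrix function of it. The two things to certify in low-degree SoS are then (a) convergence of the relevant Schatten statistics of these anisotropic Hessians — the generalization of \Cref{lem:hessian-schatten-norm}, which should again follow from the cumulant bound plus graph-matrix technology but now requires tracking a nontrivial limiting measure — and (b) control over the \emph{derivatives} of the near-optimal conditional covariance. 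Part (b) is the genuine obstacle: already in the present paper removing the $e^{O(1/\eps^2)}$ slack (via \Cref{lem:ydotsypq}) is left open and flagged as needing the asymptotic-freeness-of-tensors machinery of \cite{bonnin2024freeness}, and in the anisotropic setting it cannot be sidestepped, since the whole point is that each step is taken in a direction adaptively chosen from an anisotropic Hessian, so these derivative estimates are exactly what certifies that the process stays near-optimal.

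\textbf{Step 3: rounding and the comparison with polynomial-time algorithms.} For the rounding half I would follow \cite{ss24rounding}: read off from the SDP pseudo-moments a sequence of conditional covariances satisfying the HES axioms $\cA$, drive a randomized Hessian-ascent-like walk on the extension $\tilde H$ with those covariances, and argue by the same moment bounds that the output concentrates on the sphere with value equal to the certificate up to $1+o(1)$; the extra check, relative to the spin-glass case, is that the SDP pseudo-distribution is genuinely $o(1)$-close to a bona fide HES distribution for these ensembles, i.e.\ that the analog of \Cref{lem:feasibility} holds. To upgrade ``equals the HES value'' to ``optimal among polynomial-time algorithms'' one then needs a matching upper bound on what poly-time algorithms achieve on $\{H_n\}$: either invoke the conjectured optimality of the Huang--Sellke AMP \cite{huang2023algorithmic, huang2024optimization} and show its value coincides with the HES-certified value, or prove directly a branching/ensemble overlap-gap lower bound at that value. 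Honestly assessed, an unconditional resolution requires both the anisotropic Hessian-ascent analysis of Step 2 carried out inside low-degree SoS \emph{and} an unconditional poly-time hardness theorem at the matching threshold; what is plausibly within reach with current tools is the conditional statement that the HES SoS value matches the conjectured algorithmic threshold for every asymptotically Gaussian low-degree process on $\cS^{n-1}$.
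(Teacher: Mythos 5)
This statement is a conjecture, and the paper offers no proof of it — only supporting evidence (the anisotropic examples of \Cref{sec:robustness}), a proposed first step (the sub-conjecture on Hessians of smooth Gaussian processes), and a pointer to the companion paper for the rounding half. Your research program — reduce to the hypotheses of \Cref{thm:main}, isolate the anisotropic Hessian spectral statistics and the derivatives of the near-optimal conditional covariance as the genuine obstacles, and note that matching the polynomial-time optimum additionally requires either the Huang--Sellke characterization or an unconditional hardness result — is essentially the same roadmap the paper itself sketches, and your honest assessment of what is and is not within reach is accurate.
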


Two examples of $H_n$ meeting these criteria include:
\begin{itemize}
    \item \emph{Subspace-projected spin-glasses}: A linear combination of a constant number of spherical spin glass models (\Cref{sec:the-problem}) in different (not necessarily mutually orthogonal) subspaces of dimension $\Omega(n)$.
    \item \emph{Random spherical CSPs}: Given a function $f: \R^k \to \R$ with $\E_x f(x_1, \dots, x_k) = 0$ and $\E_x xf(x_1, \dots, x_k) = 0$ when $x \sim \cN(0,\Id_k)$, taking $H_n(\sigma)$ to be the sum over $j \in [m]$ of $f(s_{1,j}\sigma_{i_{1,j}}, \dots, s_{k,j}\sigma_{i_{k,j}})$, where $s_{1,j}$, \dots, $s_{k,j}$ are independent random signs, $i_{1,j}$, \dots, $i_{k,j}$ are independent random elements of $[n]$, and $m$ is superlinear in $n$.
\end{itemize} 
The law of large numbers can be applied to show that this second example is an asymptotically Gaussian process. One example that is not an asymptotically Gaussian process is the random rank-one tensor where $H_n(\sigma) = \iprod{a,\sigma}^3$ for a uniformly random $a \in \cS^{n-1}$.

Proving the conjecture in full generality is a daunting technical challenge.
A potentially more tractable first step is to prove that the Hessians of LDPs that are approximate Gaussian processes have a spectrum whose low degree moments can be specified via a SoS proof under the HES constraints.

\begin{conjecture}[Hessians of smooth Gaussian processes]
    Suppose that $H_n: \cS^{n-1} \to \R$ is a random sequence of polynomials that is an asymptotically Gaussian process satisfying the constraints mentioned in~\Cref{conj:distributional-robustness}. Then, for every low-degree polynomial extension $\tilde{H}: \cB^{n-1}(1) \to \R$ of $H$ satisfying $\tilde{H}(\sigma) = H(\sigma)$ for all $\sigma \in \cS^{n-1}$, there exists some bounded function $f_{\tilde{H}}: \cB^{n-1}(1) \to \R$ which \emph{does not depend on the input randomness}, such that, for every $\sigma_i \in \cB^{n-1}(1)$, it is the case that,
    \[
        \{ \text{HES constraints} \} \proves \fE_{1}\left(\frac{1}{n}\norm{\frac{1}{n}\nabla^2 \tilde{H}(\sigma_i)}_p^p - f_{\tilde{H}}(\sigma_i)\right)^2 \le o_n(1)\, ,
    \]
    for every constant $p \in \Z_{\text{even}}$ (see \Cref{sec:grammar} for notational definitions).
\end{conjecture}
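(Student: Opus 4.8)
The plan is to recognize this as a random-matrix moment computation and to transcribe each step of the moment method into the SoS grammar using the graph-matrix norm bounds and the nuclear-norm-of-moments estimates already established in the paper. Since $p$ is even,
\[ \frac1n\Bignorm{\frac1n\nabla^2\tilde H(\sigma_i)}_p^p = \frac1{n^{p+1}}\Tr\Bigbrac{(\nabla^2\tilde H(\sigma_i))^p} = \frac1{n^{p+1}}\sum_{a_1,\dots,a_p\in[n]}\ \prod_{t=1}^{p}\partial_{a_t}\partial_{a_{t+1}}\tilde H(\sigma_i)\,, \]
with $a_{p+1}:=a_1$; each Hessian entry $\partial_a\partial_b\tilde H(\sigma_i)$ is a bounded-degree polynomial in $\sigma_i$ whose coefficients are (asymptotically Gaussian) functions of the disorder. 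I would take the candidate deterministic function to be the disorder-expectation $f_{\tilde H}(\sigma) := \E_{H_n}\bigbrac{\frac1{n^{p+1}}\Tr((\nabla^2\tilde H(\sigma))^p)}$ (or its large-$n$ limit); this is a polynomial in $\sigma$ of degree at most $p\cdot d$ that does not depend on the realized disorder, and it is $O(1)$-bounded because the covariance/variance hypotheses force the normalized Hessian to have $O(1)$ spectral norm. The task then splits into two pieces: (i) for a fixed disorder (which we may condition on, with high probability), the left-hand side differs from $f_{\tilde H}(\sigma_i)$ by $o_n(1)$; and (ii) this closeness holds \emph{as an SoS statement in the step variables} $v_1,\dots,v_i$, with $\sigma_i = \tfrac1{\sqrt k}(v_1+\dots+v_i)$, when only the HES constraints may be used.

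For (i), the asymptotic Gaussianity hypothesis of \Cref{conj:distributional-robustness} is the engine. I would first promote the cumulant bound on the process values $H_n(\sigma_1),\dots,H_n(\sigma_r)$ to the same kind of bound on the collection of Hessian entries $\{\partial_a\partial_b\tilde H(\sigma)\}_{a,b}$, by expressing each second derivative of the bounded-degree polynomial $\tilde H$ as a fixed finite-difference/interpolation combination of values $\tilde H(\sigma+\text{small shifts})$ and using multilinearity of cumulants; the $\ell\ge 3$ cumulants then inherit an $o_n(1)$-type bound with a loss depending only on $d$ and $p$. Expanding $\E_{H_n}\Tr((\nabla^2\tilde H(\sigma))^p)$ over closed walks, the sum equals its Wick (pairwise-matching) part up to an error controlled by those higher cumulants, and the Wick part is by definition $n^{p+1}f_{\tilde H}(\sigma)$ --- in the isotropic spin-glass case this recovers the semicircle moments of \Cref{lem:hessian-schatten-norm}, and in general it is a deterministic free-probability-type expression. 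For the variance in (ii) I would run the standard second-moment argument over \emph{pairs} of closed walks: matchings of the disorder that factor across the two walks reproduce $(\E\Tr)^2$ and cancel, while every connected matching is suppressed by a power of $1/n$. To keep this SoS-compliant I would not argue pointwise but instead view $\Tr((\nabla^2\tilde H(\sigma_i))^p)$ and its square as graph matrices in the $\sigma_i$-indices (\Cref{sec:prelims-graph-matrices}), and pair the graph-matrix spectral-norm bounds against the nuclear-norm bound for the HES moments of $\sigma_i$ --- \Cref{lem:nuc-norm-moments-informal} in its multi-argument form \Cref{thm:high-entropy-nuclear-norm-bound} --- through a $(1,\infty)$-H\"older split, exactly as in \Cref{sec:hessian-taylor}. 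The HES operator-norm constraint $\opnorm{f_j}\le \tfrac1{\delta n}$ and the near-low-degreeness of the conditional covariances are precisely what make the nuclear-norm side small enough to beat the graph-matrix side, so that the squared difference is certifiably $o_n(1)$.

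I expect the main obstacle to be the Wick-contribution analysis in the genuinely anisotropic case, together with making it uniform enough for SoS. When $\tilde H$ is anisotropic the Hessian behaves like a mixture of GOE-type blocks with different variance profiles on different $\Omega(n)$-dimensional subspaces, so its limiting moments are governed by a free convolution rather than by Catalan numbers; one must both show the walk sum converges to these moments with an $o_n(1)$ rate that is uniform over $\sigma_i\in\cB^{n-1}(1)$, and verify that the resulting $f_{\tilde H}$ --- now depending on $\sigma_i$ through more than $\norm{\sigma_i}^2$ --- is still a bounded polynomial the HES proof system can name. A second difficulty, already present in the proof of \Cref{lem:hessian-schatten-norm}, is that SoS cannot exploit rotational invariance of the spectrum, so the entire $\sigma_i$-dependence has to be routed through the HES constraints and the nuclear-norm bound, and one has to check that this routing still closes when $f_{\tilde H}$ is not radial. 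A natural intermediate milestone, before attacking the full statement, is the subspace-projected spin glasses of \Cref{conj:distributional-robustness}: there the free-convolution structure is explicit and the required cumulant bounds follow directly from the spin-glass case inside each subspace.
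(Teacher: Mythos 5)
The statement you are proving is stated in the paper as a \emph{conjecture}: the paper offers no proof of it, and explicitly frames it as ``a potentially more tractable first step'' toward \Cref{conj:distributional-robustness}. So there is no proof of record to compare against, and your text should be judged as a research plan rather than checked against an existing argument.

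As a plan, it is the natural one and it matches the route the paper itself gestures at. Your two-stage structure --- (i) identify $f_{\tilde H}$ with the disorder-expectation of the normalized trace power and control it by a walk/Wick expansion, (ii) control the fluctuation in SoS by pairing graph-matrix spectral norms against the HES nuclear-norm bound via $(1,\infty)$-H\"older --- is exactly the architecture of \Cref{lem:hessian-schatten-norm}, and the paper's own \Cref{ass:free-prob-sos-proof} (together with the remark pointing to trace-power computations for block-structured Wigner matrices) encodes precisely the missing anisotropic ingredient you flag. Your identification of the obstacles (free-convolution moments replacing Catalan numbers, uniformity over $\sigma_i$, non-radial $f_{\tilde H}$ that the proof system must still be able to name) is accurate; those obstacles \emph{are} the open content of the conjecture, so acknowledging them does not make the plan wrong, but it does mean you have proposed a strategy rather than a proof.

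One step in your sketch deserves a concrete warning. You propose to transfer the cumulant hypothesis from values $H_n(\sigma_1),\dots,H_n(\sigma_r)$ on the sphere to Hessian entries of $\tilde H$ at interior points via finite differences and multilinearity of cumulants. The hypothesis of \Cref{conj:distributional-robustness} constrains the process only at points of $\cS^{n-1}$, whereas the conjecture quantifies over \emph{all} low-degree extensions $\tilde H$, whose interior behaviour is not determined by the sphere data (two extensions differ by a multiple of $\norm{\sigma}_2^2-1$, and this difference has a nontrivial Hessian in the interior --- this freedom is the entire point of \Cref{sec:robustness}). So $\nabla^2\tilde H(\sigma_i)$ for $\sigma_i$ strictly inside the ball cannot be written as a finite-difference combination of sphere values of $H_n$ alone; you need either to extend the cumulant hypothesis to the coefficient tensors of $H_n$ (as is implicitly available in the spin-glass and subspace-projected examples) or to track how the chosen extension map acts on cumulants. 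Without that, step (i) of your plan does not get off the ground for a general extension. Your proposed intermediate milestone (subspace-projected spin glasses, where the coefficients are explicitly independent Gaussians) is the right place to start precisely because this issue disappears there.
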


\paragraph{SoS HES hierarchy on the hypercube}
It is likely possible to instantiate a version of the SoS HES hierarchy on the hypercube. This would entail some extra technical work in proving versions of~\Cref{lem:moment-nuclear-norm} and~\Cref{lem:hessian-schatten-norm} under the changed HES axioms.

\begin{openquestion}[HES SoS hierarchy for the hypercube]\label[Open Question]{conj:hes-cube}
    There exists a HES SoS hierarchy over the hypercube which yields low-degree SoS certificates up to the $\mathsf{ALG}$ threshold for mixed spin-glasses on the hypercube~\cite{huang2021tight}. Furthermore, this new hierarchy has exactly the same ``ensemble-agnosticism`` for random low-degree polynomials $H_n: \{\pm 1\}^n \to \R$ which are an asymptotically Gaussian process, and moments that satisfy the conditions stated in~\Cref{conj:distributional-robustness}. 
\end{openquestion}

\paragraph{SoS proofs for analytic inequalities}
We give a collection of SoS proofs for certain inequalities from commutative and non-commutative analysis during the course of analyzing  the relaxation, which may be of independent interest (\Cref{sec:sos-commutative-inequalities},~\Cref{sec:non-commutative-inequalities}). These include a proof for a matrix H\"older's inequality with dyadic norms~(\Cref{lem:matrix-p-q-holders}) and a proof of strong convexity of the map $x \to x^q$ with $q \in (1,2]$.

\subsection{Implications for Sum-of-Squares relaxations}\label{sec:sos-implications}
We briefly describe the immediate implications of our work for SoS relaxations.

\paragraph{Convex relaxations over distributions}
In sampling, a now-popular perspective considers the problem of sampling from a Gibbs distribution as equivalently the problem of optimizing a regularized objective over the set of all distributions \cite{ge2018simulated,moitra2020fast,doi:10.1137/S0036141096303359}.
This view gives primacy to the \emph{probability distribution} over algorithm outputs as itself an object of optimization.
We extend this view to the realm of convex relaxations.

\paragraph{Concentration of measure in SoS} A common difficulty in previous SoS work was that concentration of measure and union bound proofs could not be converted into SoS proofs~\cite[Lec.~1.2, Marley Paradigm]{barak2016proofs}.
For example, bounding each point in an exponentially large epsilon-net with a Chernoff bound and then applying the union bound to add up the probabilities at each point usually requires access to linearly large degree pseudo-distributions.

This feature presented a gap between powerful SoS techniques and relatively simple iterative local-update techniques. 
These local update techniques, in contrast, can be \emph{enabled} by concentration arguments, allowing them to find iterative improvements upon a rough first guess with high probability~\cite{montanari2021optimization, barak2015beating, subag2021following, alaoui2023local}.

This is where it helps to lift to a convex relaxation over distributions.
A good choice of constraints on the rounded distributions drastically reduces the degree needed to show these concentration bounds by drastically curtailing the set of primal objects that the bounds need to apply to.

We believe this is the first time that a low-degree SoS proof of the existence of \emph{many} potential solutions (possessing some regularity in their occurrence) can be provided (while still enabling a roudning scheme) rather than a proof that there are a limited number of solutions~\cite{barak2014rounding,barak2011rounding,ma2016polynomial,karmalkar2019list, raghavendra2020list} or of product distributions of solutions~\cite{barak2011rounding, jain2019mean}.

\paragraph{Robustness in average-case} The SoS proofs-to-algorithms framework has demonstrated remarkable adversarial/outlier robustness in learning settings~\cite{kothari2018robust,hopkins2018mixture,diakonikolas2022robust,bakshi2020outlier,liu2021settling}. We provide evidence that the HES SoS hierarchy exhibits a kind of ``robustness'' to model-misspecification~(\Cref{sec:robustness}) that allows it to be applied in an ensemble-agnostic manner that prior algorithms used for such average-case problems cannot be.
A general SoS approach to average-case problems could extend these robustness properties to these settings as well, as has already been done for certain robustness properties for any problem with a low-degree AMP algorithm~\cite{ivkov2023semidefinite}. 

\paragraph{Entropy-energy variational trade-off} The HES distribution features a tunable parameter $\delta$, which can be understood as a form of entropy constraint: setting $\delta$ to 1 results in the largest possible entropy, the uniform distribution over the sphere, whereas $\delta = 1/n$ does not constrain the solution at all.

Similarly to the entropy-energy trade-off in statistical physics, the HES SoS program searches for the highest-value HES distribution at a given value $\delta$.
Unlike previous uses of pseudo-entropy constraints~\cite{ma2016polynomial,karmalkar2019list,raghavendra2020list}, this program fails to be a relaxation of the original maximization problem, since the HES entropy constraint reduces the achievable value to \emph{below} the true optimum. 
This is in keeping with the thermodynamic analogy, where temperatures above zero give progressively softer soft-minima of the energy. 
This gives the HES distribution family a trade-off between value and entropy, similar to the Gibbs distribution.

\paragraph{Average case optimization in SoS} This work advances understanding toward a program of Barak~\cite[Chapter 9, p.~105]{barak2014op} regarding the universality of SoS programs in average-case problems.
Average-case problems are usually where concentration of measure is the primary characterization of solutions, and so they are resistant to techniques based on low-degree pointwise proofs.

\subsection{Related work}\label{sec:related-work}

\paragraph{Analytic \& geometric properties of spin glasses}
The Crisanti-Sommers formula~\cite{crisanti1992sphericalp} gives a variational principle for the limiting free energy density of spherical spin glasses.
An alternative characterization of the free energy density over the \emph{ball} (and not just the sphere) was given by a generalized TAP free energy formula due to Subag~\cite{subag2018free}.
This approach was then strengthened and adapted to the setting of mean-field spin glasses over the hypercube by Chen, Panchenko and Subag~\cite{chen2018generalized, chen2021generalized}.
In the spherical case, Subag's characterization was independent of the Parisi framework, but in the setting over the hypercube the generalized TAP free energy still relies on the Fenchel dual of a solution to the Parisi PDE on restricted intervals.
In addition to the generalized TAP free energy, Subag~\cite{subag2018free} gave a complete characterization of the free energy landscape of spherical spin glasses by appealing to the study of thin spherical bands induced by the orthogonal subspace of points in the unit ball that maximize the generalized TAP free energy at a specific distance.
This approach inspired Subag's Hessian-based optimization algorithm~(\Cref{sec:subag}) under fRSB~\cite{subag2021following}.

\paragraph{Algorithmic results for optimization of mean-field spin glasses}
Along with the local spectral algorithm of Subag~\cite{subag2021following}, there has been an equivalent class of results that use an AMP algorithm to find near-ground states of mean-field spin glasses over the hypercube~\cite{montanari2021optimization, alaoui2021optimization}.
In a work of Huang and Sellke~\cite{huang2021tight}, it is established that both these families of algorithms achieve a particular threshold of approximation (called $\mathsf{ALG}$) which is given by the infimum of a ``relaxed'' version of the Parisi formula.~In fact, this is the best approximation ratio achievable by any ``overlap-concentrated'' algorithm.
It is known that on spin-glasses, Langevin dynamics and certain families of AMP algorithms are overlap-concentrated~\cite{huang2021tight}.
Certain low-degree polynomials are also conjectured to be overlap-concentrated~\cite{jones2022random}, but this has not yet been formally proved~(\Cref{sec:discussion-hardness}).

\paragraph{Hardness results for optimization of mean-field spin glasses}
Complementing the algorithmic side of the results, it is well established that when the unique minimizer~\cite{auffinger2015parisi} of the Parisi formula (which has value $\mathsf{OPT}$) is not strictly increasing, there is a gap as $\mathsf{ALG} < \mathsf{OPT}$~\cite{huang2021tight}.
This situation corresponds to the existence of a so-called \emph{overlap-gap property} and there is abundant work in the literature defining various variants of this property~\cite{gamarnik2014limits, rahman2017local, chen2019suboptimality, huang2021tight} and then using it to obtain varying strengths of hardness results for many problems~\cite{rahman2017local, chen2019suboptimality, chou2022limitations, wein2022optimal, gamarnik2020low, huang2022computational} including spin-glasses~\cite{gamarnik2020low, gamarnik2021overlap, huang2021tight} against a large collection of algorithms.

\paragraph{SoS certificate complexity for spin glasses} There is now a fairly complete body of work showing that certifying $\mathsf{OPT}$ to $\pm\epsilon$-precision using the standard SoS relaxation requires $\Omega(n^{f(\eps)})$-degree on the sphere~\cite{bhattiprolu2017sum,hopkins2017power} and the hypercube~\cite{kunisky2021tight, ghosh2020sum}.
The certification lower bound on the sphere is for \emph{pure} spherical-spin glasses with $d_H \ge 3$, but can be extended by the same trace-moment method to \emph{mixed} spherical spin glasses as each degree term in the mixed setting makes a $\Theta(n)$ contribution.
The lower bounds on the hypercube are explicitly against the SK model~\cite{sherrington1975solvable}, but can likely also be extended to the mixed setting as well for similar reasons.

\paragraph{SoS algorithms, entropy, and rounding}
Previous work has added various forms of high-entropy constraints to the standard SoS hierarchy~\cite{ma2016polynomial,karmalkar2019list,raghavendra2020list}, serving as a precursor to the full idea of writing an SoS program over distributions rather than individual solutions.
Correlation rounding has been proven optimal for convex relaxations with pointwise proofs optimizing degree-2 spin glass polynomials over the discrete hypercube~\cite{jain2019mean,risteski2016calculate}.
The approximation ratio of the standard SoS hierarchy has been studied on average-case CSPs~\cite{barak2013optimality}.

\section{Technical Overview}\label{sec:technical-overview}

\subsection{The spherical spin glass problem}
\label{sec:the-problem}

\paragraph{The Hamiltonian: Dense random low-degree polynomials}
The goal is to optimize a random polynomial of degree $d_H$ over the hypersphere $\cS^{n-1}(1)$ of radius $1$. A spherical spin glass corresponds to the \emph{low temperature} limit of a material whose Hamiltonian is given by the very same random polynomial.

\begin{definition}[Mixed Spherical $p$-Spin Glass]
\label[Definition]{def:spin-glass}
    The hamiltonian of a mixed spherical $d_H$-spin glass is given as,
    \begin{equation}\label{eq:spherical-ham}
        H_n(\sigma) = \sum_{k=1}^{d_H}\gamma_k\sqrt{n}\!\!\sum_{i_1,\dots,i_k \in [n]^k}g_{i_1,\dots,i_k}\sigma_{i_1}\dots\sigma_{i_k}\, ,
    \end{equation}
    where $g_{i_1,\dots,i_k} \overset{i.i.d.}{\sim} \cN(0, 1)$ for every $k \in [d_H]$. Furthermore, the coefficients $\left\{\gamma_k\right\}_{k = 2}^p$ are non-negative, real valued and independent of $n$.\footnote{Although we do not handle a linear term $k=1$ in this manuscript, such a term can be accommodated with a modification of the SoS-encoded HES distributions in order to allow for a \emph{drift} term. We implement this in forthcoming work addressing the SK model and its generalizations.}
\end{definition}

The goal of the optimization task is to maximize the hamiltonian above over $\cS^{n-1}(1)$,
\[
    \sup_{\sigma \in \cS^{n-1}(1)} H_n(\sigma)\, .
\]
This quantity is called the \emph{ground state energy} of the spherical spin glass and is equivalent to its so-called free energy at zero temperature.

Note that we formulate the problem on the unit sphere, but have adjusted the Hamiltonian to be equivalent to previous formulations on the radius-$\sqrt{n}$ sphere \cite{subag2018free,subag2021following}.

\paragraph{The mixture function}
The solution quality achieved by the algorithm on this optimization problem depends on the structure of the mixture coefficient $\{\gamma_k\}_{k=2}^{d_H}$. Under a specific assumption~(\Cref{prop:frsb-equivalent}) on these, termed ``full-Replica Symmetry Breaking'' (fRSB), our algorithm will output a solution that is arbitrarily close to the true optimal value.

\begin{definition}[Mixture Polynomial]
    The mixture polynomial of a hamiltonian $H_n(\sigma)$ is given as,
    \[
        \nu(q) = \sum_{k = 2}^{d_H} \gamma_k^2 q^k\, ,
    \]
    and is defined for $q \in [0, 1]$.
\end{definition}

\subsection{Pseudo-distributions over filtered distributions}
\label{sec:parameterized}

A SoS relaxation~\cite{barak2014sum} takes an optimization problem of the form
\begin{align*}
\max_{x \in \R^n} \;&p(x)\\
\text{subject to } &{}q_1(x) \ge 0
\\&\qquad\vdots
\\
&q_m(x) \ge 0
\end{align*}
(where $p$, $q_1$, \dots, $q_m$ are polynomials)
and augments the domain of the problem to a higher-dimensional space, so that both the constraint set and the objective function are convex in the new augmented domain.
This convex program is then solvable in polynomial time, and rounding techniques are used to convert the solution found in this augmented domain back to a solution in the original domain without losing too much of the objective value.

The augmentation adds variables ($\pE x^{\alpha}$ for each monomial $x^{\alpha}$ up to some degree limit, where the entire expression ``$\pE x^{\alpha}$'' is a single SDP variable) representing the low-degree moments of a distribution over the original solution domain. This converts the objective into one of optimizing the \emph{pseudo-expectation} $\pE p(x)$ of the original polynomial objective $p(x)$.
Now $\pE p(x)$ is a linear objective in the \emph{pseudo-moments}, since the expectation of a polynomial is simply a linear combination of moments.
Similarly, the original polynomial constraints $q_i(x) \ge 0$ each become a family of linear constraints on the pseudo-moments.

Finally, the SoS relaxation adds consistency constraints  stating that the pseudo-expectation of every square polynomial $\pE r(x)^2$ is non-negative.
The non-negativity of all square polynomials (up to a degree limit) can be realized as a single convex constraint simply requiring a certain matrix of pseudo-moments to be positive semi-definite~\cite{barak2014sum}.

To this basic framework, we introduce symbols $v_1, \dots, v_k$ each representing a random variable taking values in $\R^n$.
The convex program variables are now not pseudo-moments of the $v_i$, but pseudo-moments \emph{of the moments of} $v_i$, for example $\pE (\E (v_i)_j^3)^2$.
For this to be non-trivial as a relaxation, we add non-convex polynomial constraints on those moments, restricting the set of possible distributions represented by $v_i$, before taking the pseudo-moment relaxation.
This turns those polynomial constraints on moments into linear constraints on pseudo-moments of moments.
For pseudo-distributions over filtered distributions such as the HES distributions, the distribution of each $v_i$ depends on all previous $v_1$, \dots, $v_{i-1}$, so that, for example, if $\fE_i$ represents the expectation conditioned on $v_1, \dots, v_{i-1}$, then $\pE (\E_{1}(v_1)_4(\E_2 (v_2)_3)^2)^3$ is a SDP variable in the program\footnote{Note that this term is not expressible as a joint moment, making this set-up strictly stronger than using joint moments as program variables. This is elaborated on, and the equivalence with replicas explained, in~\Cref{sec:replica-equivalance}}.

In an extention to the axiom notation of standard SoS~\cite{barak2014sum}, we also state axiomatic constraints as, for example, $\{p(v_1) \ge 0\}$ and $\{p(v_1,v_2) \ge 0\}$, which intuitively mean that $p(v_1) \ge 0$ and $p(v_1,v_2) \ge 0$ for all $v_1$ and $v_2$ in the support of the filtered distribution.
These may be formally defined in the SDP as meaning $\pE\E_1 [p(v_1)q(v_1)^2] \ge 0$ and $\pE \E_1\left[\E_2[p(v_1,v_2)q(v_2)^2]\cdot r(v_1,\E_2[v_2],\E_2[v_2^2], \dots)^2\right] \ge 0$ for all polynomial expressions $q$ and $r$ for which those pseudo-expectations are defined.
This system is explained in detail in~\Cref{sec:moments-of-filtered-spaces}.

\subsection{The HES SoS hierarchy}\label{sec:hes-sos-program}

We state our program and briefly explain and motivate the constraints.

The actual solution is represented implicitly as a sum of $k$ equally-sized components $\sigma = \sum_{i=1}^k v_i/\sqrt{k}$.
The variables $k$, $\delta$, $d$, and $d'$ are adjustable parameters.
\begingroup
\allowdisplaybreaks
\begin{align}\label{eq:the-program}
    \tag{P.0}
    {}&\;\qquad\qquad\qquad\qquad\max \pE\fE_1\left[ H\left(\sum_{j=1}^{k} \frac{v_j}{\sqrt{k}}\right)\right]\qquad\qquad\qquad\qquad
    \\
    &\notag \qquad\qquad\qquad\qquad\qquad\qquad \text{s.t.}\;& \\
    \label{eqs:the-program-intro-norm}
    \tag{P.1}
    &\qquad\forall i \in [k]\mathrel{.}{} \\
    &\qquad\qquad\qquad\left\{\norm{v_i}_2^2 = 1\right\} \nonumber  \,, \\
    \label{eqs:the-program-intro-orth}
    \tag{P.2}
    &\qquad\forall i \in [k]\mathrel{.}{} \\
    &\qquad\qquad\qquad\left\{\left\langle v_i, \sum_{j=1}^{i-1}v_j \right\rangle = 0\right\}\nonumber\,, \\
    \label{eqs:the-program-intro-cum}
    \tag{P.3}
    &\qquad\forall i \in [k]\mathrel{.}{}\forall j \in [d]{\setminus}\{2\}\mathrel{.}{} \\
    &\qquad\qquad\qquad\left\{\norm{\kappa_{j}\left[v_{i}\mid v_{i-1}\right]}_2^2 \le \frac{o_n(1)}{n^{j/2}}\right\} \nonumber\,, \\
    \label{eqs:the-program-intro-opnorm}
    \tag{P.4}
    &\qquad\forall i \in [k]\mathrel{.}{} \\
    &\qquad\qquad\qquad\left\{\opnorm{\fE_i\left[v_iv_i^\sT\right]} \le \frac{1+o_n(1)}{\delta n}\right\} \nonumber\,, \\
    \label{eqs:the-program-intro-ldp}
    \tag{P.5}
    &\qquad\forall i \in [k]\mathrel{.}{} \\
    &\qquad\qquad\qquad\left\{\norm{\left(\fE_i\left[v_iv_i^\sT\right]^{> d}\right)^{\ot d^{k-i}}}^2_F \le \frac{o_n(1)}{n^{d^{k-i}}}\right\} \nonumber\, .
\end{align}

\begin{enumerate}
    \item[1.] The normalization constraint ensures that steps are of equal length.
    \item[2.] The orthogonality constraint ensures that we can compute the length of the sum of the steps via Pythagorean theorem.\footnote{This orthogonality constraint is approximately implied to high precision by the other constraints, but we keep it for convenience so that the norms of the sums of the steps are always known exactly. Changing the constraint to be an upper and lower bound up to $\pm o_n(1)$ factors allows many computations to become more convenient, but does not affect the quantitative strength of the results. Therefore, in~\Cref{sec:matrix-bernstein-analytic-sparsity} and the companion paper, we invoke arguments that only preserve the constraint up to small fluctuations.}
    \item[3.] The stepwise Gaussianity constraint says that each step is close to a zero-mean Gaussian in its low-degree moments, conditioned on the $(i-1)$th step of the filtration.
    It enforces this by matching the cumulants (and therefore moments) of $v_i$ to those of a Gaussian~(\Cref{sec:gaussians,sec:appendix-sdp}).
    This constraint enables the use of the Weiner-Hermite polynomial basis~(\Cref{sec:prelims-multivar-hermites}), whose weak orthogonality is equivalent to vanishing higher cumulants.
    
    \item[4.] The spectral high-entropy constraint forces the Gaussian to be spread out over at least $\delta n$ directions.
    This ensures that the distribution of $v_i$ does not put too much weight into any single direction, allowing us to avoid pointwise proofs.

    \item[5.] The low-degreeness constraint asserts that the dependence of the distribution of $v_i$ on the previous iterates should be  close to a low-degree polynomial.  Here, $f^{> d'}$ for a function $f$ refers to the projection of $f$ to its degree-more-than-$d'$ part (up to the degree limit of the pseudo-distribution).
    This constraint is defined in terms of the Fourier basis given by the Weiner-Hermite polynomials in~\Cref{sec:appendix-sdp}.
\end{enumerate}

Note that while it is rigorously established that the support of the Gibbs distribution for these models resides in a very complicated approximately ultrametric structure, only a \emph{trivial} approximately ultrametric structure can be inferred from the HES constraints.\footnote{The fact that the support of the Gibbs measure should reside in an ultrametric space was first predicted by Parisi~\cite{parisi1980sequence}.
Ruelle then explicitly characterized the Gibbs via a construction of a random measure supported on the leaves of an ultrametric tree, defined by a family of cascading independent Poisson-Dirichlet processes on the inner nodes - This construction was termed the Ruelle Probability Cascades~\cite{ruelle1987mathematical, panchenko2013ruelle} and it gave an object on which the Parisi formula could be equivalently re-written.
It was later shown by Panchenko~\cite{panchenko2013parisi} that the so-called Ghirlanda-Guerra identities~\cite{ghirlanda1998general} \emph{implied} the ultrametricity of the solution-space.
It is an interesting question as to whether there even exist SoS proofs (under some modified step constraints) that can certify that solutions reside in a hierarchically organized ultrametric space consistent with the RSB framework, and this is discussed further in~\Cref{sec:sos-cert-gg-identities}.}

\paragraph{Rounding higher pseudo-moments} The HES structure was partly motivated by the following concrete question: 
\begin{center}
    \emph{What rounding strategies can round pseudo-distributions by matching higher pseudo-moments?}
\end{center} 
Rounding by matching degree-2 moments is well understood due to the quadratic sampling lemma~(\Cref{lem:quadratic-sampling}), but in general, the question of whether a set of degree-3 pseudo-moments even corresponds to the moments of an actual distribution is NP-hard~\cite{grigoriev2001linear}.

As a Gaussian vector's third moments are a symmetrization of the tensor product of its mean with its covariance, this distribution can only match the order-3 moments of a pseudo-distribution to the extent that it matches the first and second moments.
But consider the sum of two vectors, $u+v$, where $u$ is Gaussian and $v$ is also Gaussian but only when conditioned on $u$, so that the covariance of $v$ may depend on $u$.
Then the third-order moments of this sum includes a $\E_{u,v} u\ot v\ot v$ term, which can be expressed as $\E u \ot f(u)$ for some arbitrary function $f: \R^n \to (\R^{n})^{\ot 2}$.
This is much more expressive: for example, when $f$ is degree-$3$, $\E u \ot f(u)$ may have full tensor rank, whereas pure Gaussian third moments cannot.

Rounding to HES distributions would therefore allow much greater expressivity than rounding to Gaussians. 
While random reweighing~\cite{barak2014rounding} can improve the expressiveness of Gaussian rounding, it is limited by only being able to output a statistical mixture of Gaussians.

\subsection{Technical challenges}

The spectral conditions for the success of the HES hierarchy stated in~\Cref{thm:main} are based on finding algebraic structure compatible with a SoS proof that the dominant certifiable contributions are from the Hessian. 
This amounts to certifying the Taylor expansion of the Hamiltonian under HES distributions, which recovers Subag's Hessian ascent algorithm~\cite{subag2021following} on spherical spin glasses.
We briefly overview, for SoS experts, some of the technical challenges involved in approaching this task\footnote{Readers that are not familiar with average-case SoS lower bounds may freely skip this subsection without any loss of ability to follow along the remaining subsections in~\Cref{sec:technical-overview}.}.

\begin{description}
\item[Reweighed iterative rounding along a path in the ultrametric tree.] It is natural to consider explicitly representing steps in the ultrametric tree as program variables (\Cref{eqs:the-program-intro-norm,eqs:the-program-intro-orth}), as this follows the path of existing iterative algorithms.
It then becomes apparent that previous methods of rounding fail to make progress due to the impossibility of proving that there are few good solutions~\cite{barak2014rounding, ma2016polynomial, karmalkar2019list, raghavendra2020list} or few good product distributions~\cite{barak2011rounding, jain2019mean} of solutions (since, in fact, there are many good product distributions of solutions, corresponding in this case to ``pure states'' in the statistical physics literature~\cite{talagrand2010construction}).

\item[Eigendecompositions and edge statistics for the eigenspectrum.] Progressing in this direction, one runs into further problems if one attempts to follow the iterative spectral algorithm of Subag too closely: the proofs and algorithm of Subag make heavy use of high-degree SoS operations and objects, such as concentration inequalities with exponentially decaying tails~\cite[Lemma 3]{subag2021following} and eigendecompositions~\cite[Eq 3.10 \& Eq. 3.11]{subag2021following}.

\item[Bounding higher-order derivatives via Dudley-style entropy bounds.]  When bounding the contribution of the higher-order derivatives, Subag~\cite[Lemma 11]{subag2021following} invokes a bound~\cite[Lemma 58]{arous2020geometry} on the injective norm of random tensors.
This uses Dudley-style entropy bounds~\cite{latala2006estimates} (or chaining arguments) to upper-bound the expected norm, and then invokes the Borell-TIS inequality~\cite[Theorem 2.1.1]{adler2009random} to guarantee strong concentration around the bounded value. 
These techniques are not amenable to pointwise low-degree SoS proofs~\cite{bhattiprolu2017sum,hopkins2017power}. 

\item[Disorder chaos.] To compound the technical challenge, we believe it is not possible to capture the \emph{true} low-temperature Gibbs distribution with a low-degree SoS program due to \emph{disorder chaos}: small perturbations in the Hamiltonian lead to large fluctuations in the moments of the Gibbs distribution~\cite[Theorem 3]{chen2017parisi}\cite[Eq.~2.11, Eq.~5.16 \& Eq.~5.22]{alaoui2022sampling}, whereas SoS is generally believed to excel when solutions are stable to input perturbations.
\end{description}

The difficulties involving concentration or too many good solutions are averted by introducing pseudo-distributions over HES distributions.

\subsection{Structure of main argument}\label{sec:proof-overview}

The main structure of the argument for the sufficient conditions in~\Cref{thm:main} will be to certify the Taylor expansion of the objective function achieves good value when run on the extension $\tilde H$ to the ball of the original problem $H$ on the sphere.

\subsubsection{Certifiable Taylor expansion}

This argument is based on a Taylor expansion (\Cref{sec:certifiable-taylor}), from which we derive strict upper bounds on the expected objective value achieved by each step.

The gradient term is automatically suppressed by the HES constraint that $\E_{i} v_iv_i^{\sT}$ is bounded in spectral norm.\footnote{As in Subag's analysis~\cite{subag2021following}, it is crucial that we \emph{do not} follow the gradient---briefly, while following the gradient can gain more objective value \emph{per distance travelled} in the step, it will also reduce the \emph{number of steps} we can take, due to the gradients being aligned with the radial direction and thereby reaching the unit sphere more quickly.}
The third- and higher-order derivatives of $\tilde H(\sigma_{i-1})$ are bounded when the step size is small enough, $\sigma_{i-1}$ is drawn from a HES distribution, and $\tilde H$ is well behaved enough, as discussed in the next section.
Therefore the dominant contributions are from the Hessian term at each step.

\subsubsection{SoS proofs about HES distributions}

To show upper bounds on $\E p(\sigma)$ for a degree-$d$ homogeneous polynomial $p$ (suppose $d$ is even for simplicity), we represent both the polynomial $p$ and the moments of $\sigma$ as matrices
\[ \E p(\sigma) = \iprod{\mat(p), \E \sigma^{\ot d/2}(\sigma^{\ot d/2})^{\sT}}\,, \]
where $\mat(p) \in (\R^{n \times n})^{\ot d/2}$ is the matrix representation of the coefficients of $p$ satisfying 
the tensor mode permutation symmetries
\[\mat(p)_{(i_1, \dots, i_{d/2}),(i_{d/2+1}, \dots, i_d)} = \mat(p)_{(i_{\pi(1)}, \dots, i_{\pi(d/2)}),(i_{\pi(d/2+1)}, \dots, i_{\pi(d)})}\]
for every permutation $\pi \in \mathrm{S}_d$ in the symmetric group.
This representation always exists since $\E p(\sigma)$ is a linear function of the degree-$d$ moments of $\sigma$, each of which is represented in at least one entry of $\E \sigma^{\ot d/2}(\sigma^{\ot d/2})^{\sT}$.

Now observe that since $\mat(p)$ is mode-symmetric, for any matrix $V$ such that $\E \sigma^{\ot d/2}(\sigma^{\ot d/2})^{\sT}$ is the mode-symmetrization of $V$, we have
\[ \E p(\sigma) = \iprod{\mat(p), V}\,. \]

For many average-case ensembles of polynomials over the unit sphere, we will have tight concentration on $\opnorm{\mat(p)}$, so that $\opnorm{\mat(p)}$ will be at most $O(n^{d/4-1/2})$ times larger than $\max_{\sigma}p(\sigma)$ with high probability, so we can consider by H\"older's inequality
\[ \E p(\sigma) \le \opnorm{\mat(p)}\norm{V}_1\,. \]

Then, for a given distribution over $\sigma$, all it takes is to find some $V$ whose mode-symmetrization is $\E \sigma^{\ot d/2}(\sigma^{\ot d/2})^{\sT}$ and whose nuclear norm is at most $O(n^{-d/4+1/2})$ in order to bound $\E p(\sigma)$ to within a constant of $\max_{\sigma}p(\sigma)$ with high probability.

To illustrate the broad approach to this, we can look at the simple case where $\sigma$ is drawn from a Gaussian distribution with mean $0$ and covariance $\Sigma \in \R^{n\times n}$.
By Isserlis's theorem (\Cref{lem:isserlis}), $\E \sigma^{\ot d/2}(\sigma^{\ot d/2})^{\sT}$ is equal to a sum of mode-permutations of $\Sigma^{\ot d/2}$.
Letting $\Phi \in (\R^n)^{\ot 2}$ be the tensor whose entries are $\Phi_{(i,j)} = \Sigma_{i,j}$, we can choose $V$ to be one of those mode-permutations
\[ V := (\Phi\Phi^{\sT})^{\ot\lfloor d/4\rfloor}\ot \Sigma^{\ot (d/2 - 2\lfloor d/4\rfloor)}\,. \]
Then the mode-symmetrization of $V$ is equal to $\E \sigma^{\ot d/2}(\sigma^{\ot d/2})^{\sT}$ and $\norm{V}_1 = \norm{\Phi}_2^{2\lfloor d/4\rfloor}\norm{\Sigma}_1 = \norm{\Sigma}_F^{2\lfloor d/4\rfloor}$ since $\norm{\Sigma}_1 = 1$.
If we assume $\norm{\Sigma}_F^2 \le O\left(\frac{1}{n}\right)$, we get $\norm{V}_1 \le O(n^{-d/4+1/2})$ as desired.

To generalize this so that $\sigma$ can be HES-distributed, we do similar manipulations inductively over the steps $\sigma = v_1 + \dots + v_k$, this time using the moment-cumulant formula (\Cref{def:cumulants}), which generalizes Isserlis's theorem to the case of non-Gaussian distributions.
It takes a fair amount of care here to characterize all the cumulants that arise and how to reshape them to matrices with low nuclear norm.

Through a more careful analysis of mode symmetries, this generalizes to the case where $p$ is a power of a polynomial whose matrix representation satisfies a spectral norm bound (similar to the technique in the upper bound proof of \cite{bhattiprolu2017sum}), and then further on to cases where $p$ can be represented as a graph matrix whose components have spectrally bounded matrix representations, as these are also relevant in~\Cref{sec:overview-hessian-cert}.

This also generalizes to the case when $p$ is a polynomial of both $\sigma$ and $v$, where $v$ is a high-entropy step with $\sigma$ the sum of the previous iterates, and captures the case when $p$ is a third- or higher-order derivative.

\subsubsection{Certifiable spectral moments for the Hessian}
\label{sec:overview-hessian-cert}

The only missing certification now is for the value achieved/achievable by the Hessian term of the Taylor expansion, which is done in \Cref{sec:wigner-cert-hessian}.

The Taylor expansion showed that the change $\E \tilde{H}(\sigma_{i}) - \E \tilde{H}(\sigma_{i-1})$ is dominated by the Hessian term $\frac{1}{2k}\E\iprod{ \nabla^2 \tilde{H}(\sigma_{i-1}), \E_i v_iv_i^{\sT}}$.
For $v_i$ a high-entropy step, the value attainable will be determined by the $p$-norms of $\nabla^2 \tilde{H}(\sigma_{i-1})$, by a direct matrix generalization of the analytic sparsity arguments of \cite{arora2015subexponential,barak2012hypercontractivity}.
Therefore, we need to certifiably bound these Schatten $p$-norms for all HES distribution over $\sigma_{i-1}$.

We do this by borrowing the non-backtracking walk argument from random matrix theory.
The $p$th power of the Schatten $p$-norm of a matrix $M$ is equal to $\Tr M^p$, which can be expressed as a sum over length-$p$ closed walks on a graph where the edge $\{i,j\}$ is labelled by $M_{i,j}$ and the value to be summed for each walk is the product of all the edge labels on the walk.

For the spherical spin glass Hamiltonian, it will turn out via the non-backtracking walk argument that the expectation term $\E_g \Tr [\nabla^2 H(\sigma_{i-1})]^p$ (with expectation over randomness of the instance) is essentially independent of $\sigma_{i-1}$, as nearly all of the expectation terms will factor out into a multiple of a power of $\norm{\sigma_{i-1}}_2^2$ (which is a constant) yielding an (upto $o_n(1)$ factors) exact expression.

We then represent the coefficients of the polynomial expressing the non-expectation terms $\Tr [\nabla^2 H(\sigma_{i-1})]^p - \E_g \Tr [\nabla^2 H(\sigma_{i-1})]^p$ as a sum of graph matrices~\cite{ahn2016graph}.
This allows for the invocation of the graph matrix norm bound framework on the spectral norms of the matrix representations~(\Cref{thm:graph-matrix-norm-bound}) of those polynomials.
At this point, by a ($1,\infty$)-H\"older's inequality, one can upper bound the contribution of each non-expectation term,
\[
    \E\left[\Tr [\nabla^2 H(\sigma_{i-1})]^p - \E_g \Tr [\nabla^2 H(\sigma_{i-1})]^p\right] = \sum_{\text{noise}}\E[p_{\text{noise}}] \le \sum_{\text{noise}}\opnorm{\mat{(p)_{\text{noise}}}}\norm{V(\sigma)}_1\, .
\] 
The exact extraction of the expected trace component and bound on the non-expected components gives a certificate for the $p$-Schatten norms of the Hessian consistent with the low-degree moments of a scaled $\mathsf{GOE}(n)$ matrix. An upper bound on $\frac{1}{2k}\E\iprod{ \nabla^2 \tilde{H}(\sigma_{i-1}), \E_i v_iv_i^{\sT}}$ now follows by a $(p,q)$-H\"older's inequality, where the choice of $(p,q)$ conjugates is ascertained by a judicious use of Catalan combinatorics and analysis~(\Cref{cor:shifted-hessian-holders}), and the bound can be invoked with high probability (over the randomness of instance) due to standard implications of Gaussian hypercontractivity~(\Cref{lem:hypercontractivity-gaussians}).

\subsubsection{Ensemble-agnosticism}

We provide a few simple examples of modifications to the spherical spin glass ensemble for which the HES hierarchy would provably outperform the local Hessian ascent algorithm (\Cref{sec:robustness}), \emph{conditioned} on the existence of a rounding scheme mentioned in~\Cref{conj:distributional-robustness}.
This is done in an ensemble-agnostic way, so that the SoS-based algorithm is exactly the same one run for the plain spherical spin glass model, with no parameter tuning needed.

The first one is motivated by considering a homogeneous degree-4 spherical spin glass, except that the entries of the random tensor $g$ that have repeated indices (for example $g_{i,i,j,j}$ or $g_{i,j,i,j}$) have $\alpha n$ times larger variance than the other entries for some constant $\alpha$.
On the sphere, this is equivalent to sum of a degree-2 spin glass $\alpha H_{2}(\sigma)$ and a degree-4 spin glass model $H_4(\sigma)$, but on the ball, it is $\alpha \norm{\sigma}^2 H_{2}(\sigma) + H_4(\sigma)$.
This makes a local algorithm neglect the degree-2 term in the beginning when $\norm{\sigma}^2$ is small.

The second is the vector space direct sum of a pure degree-2 spherical spin glass and a pure degree-4 spherical spin glass, $\alpha H_2(\Pi \sigma) + H_4(\Pi^{\perp} \sigma)$ for $\Pi$ and $\Pi^{\perp}$ the projectors to the two subspaces of the direct sum. Here, the local algorithm will follow the degree-2 term and get stuck in that subspace, even if following the degree-4 term eventually attains a higher value.

The third example adds a random bilinear interaction between the two subspaces of the second example, so that the objective is $\alpha H_2(\Pi \sigma) + H_4(\Pi^{\perp} \sigma) + 2\beta H_{2,\mathrm{off}}(\Pi \sigma,\Pi^{\perp} \sigma)$.
This starts to make the SoS proof for the ``stepwise upper bound'' condition of \Cref{thm:main} fairly non-trivial, requiring a bound for all possible values of the ``state variables'' $\norm{\Pi \sigma_i}_2^2$ and $\norm{\Pi^{\perp} \sigma_i}_2^2$ over all steps $i$.
For this example, we give a proof sketch with an assumption about SoS proofs of a certain type of matrix concentration, rather than a full proof.

\section{Preliminaries}\label{sec:prelims}

\subsection{Sum-of-squares optimization: proofs to algorithms}
\subsubsection{Sum-of-squares certificates and pseudo-expectations}

We first define formally the notion of an ``axiom" $\cA$ which will be a collection of low-degree polynomial equalities and inequalities~\cite{barak2014sum,ma2016polynomial}.

\begin{definition}[SoS Axioms $\cA$]
    A set of \emph{axioms} $\cA$ is a collection of polynomial equalities and inequalities, formally written as,
    \[
        \cA = \left\{f_i = 0\right\}_{i=1}^M \cup\, \left\{g_j \geq 0\right\}_{j=1}^K\, ,
    \]
    where, $f_i, g_j \in \R_{\le d}[x_1\dots,x_n]$ are real-valued polynomials over some domain.
\end{definition}

Critical to the notion of efficient SoS algorithms are low-degree proofs of non-negativity of polynomials over the underlying semi-algebraic set $D$ specified by being exactly the set $x$ of points satisfying the axioms $\cA$.
A proof of non-negativity over $D$ will require the degree to be at least at large as that of the polynomial whose non-negativity is being shown.
\begin{definition}[SoS Non-Negativity Proof of Degree-$k$]
\label[Definition]{def:sos-proof}
    Given polynomials $p(x), q(x)$, a degree-$k$ SoS proof of non-negativity of $p - q$ is given as
    \[
        p(x) - q(x) = \sum_{i=1}^r (r_i(x))^2\prod_{j\in \alpha_i} g_{j}(x) + \sum_{i=1}^M s_i(x)f_i(x)\, ,
    \]
    for some choices of $r_i,s_i \in \R_{\le d}[x_1,\dots,x_n]$ and $\alpha_i \subseteq [K]$ and where the total degree of any term is at most $k$.
\end{definition}
For certain domains, such as over the boolean hypercube ($D = \{\pm 1\}^n$), a low-degree proof implies that $r \leq n^{k/2}$ and this further means that a low-degree SoS proof of negativity is efficiently verifiable, provided the coefficients can be represented with reasonable bit complexity~\cite{raghavendra2017bit,bulatov2022complexity}.
When a low-degree SoS proof of deg-$k$ exists, it is written as,
\[
\proves_k \{p \geq q\}\, .
\]
If the proof exists modulo some SoS axioms $\cA$, then it is written as,
\[
\cA \proves_k \{p \geq q\}\, .
\]

The set of low-degree SoS proofs is a convex set: a convex combination of sums of squares $g_i(x)^2$ is another sum of squares.
This implies the existence of a $\mathsf{poly}(n)$ time algorithm to output such a proof if it exists and has polynomial bit complexity.
This involves solving a particular $\mathsf{poly}(n)$-sized SDP. Specifically, this means that
\[
    \cA \proves_k \{p \geq q\} \implies O(MKn)^{O(k)}\text{ algorithm to output $r_i$, $s_i$, and $\alpha_i$ according to \Cref{def:sos-proof}.}
\]
By strong duality, an important consequence of the above is that whenever there is no low-degree SoS proof that $p \ge q$, there is a linear operator $\pE$ called a \emph{pseudo-expectation} that operates on polynomials, satisfies $\pE p(x) < \pE q(x)$, respects the axioms $\cA$, is normalized, and respects non-negativity of squared polynomials of not too-large degree.
This operator is the output of the underlying semi-definite program (SDP) whenever it fails to find a SoS proof.
\begin{definition}[Pseudo-distribution of degree $k$]
    A pseudo-distribution $\cD$ of degree $k$ with respect to SoS axioms $\cA$ corresponds to a linear functional,
    \[
        \pE_\cD: \R_{\le k}[x_1,\dots,x_n] \to \R\, ,
    \] 
    that satisfies the following propositions,
    \begin{description}
        \item \textbf{Scaling}: $\pE_\cD[1] = 1$.
        \item \textbf{Non-negativity}: $\pE_\cD[g^2] \geq 0\, ,\, \forall $polynomials $g$ with deg($g$)$\,\leq k/2$.
        \item \textbf{Satisfies Axioms}: The following holds for the polynomials in the axioms $\cA$:
        \begin{enumerate}
            \item $\pE_\cD[a_if_i] = 0$, $\forall a_i$ such that deg($a_if_i$)$\, \leq k$.
            \item $\pE_\cD[b^2_ig_i] \geq 0$, $\forall b_i$ such that deg($b_ig^2_i$)$\, \leq k$.
        \end{enumerate}
    \end{description}
We write $\cD \models \cA$ whenever the pseudo-expectation $\pE_{\cD}$ satisfies the axioms given by $\cA$.
\end{definition}
Note that the Riesz representation theorem implies that every pseudo-distribution $\cD$ can be associated with a linear functional $\pE_{\cD}$ that satisfies the above axioms.
The SDP formally outputs a pseudo-moment matrix of $\mathsf{deg}(k)$ (when it is feasible), and the linear functional acts on polynomials in ways that respect certain inequalities, such as Jensen's inequality~(\Cref{lem:pseudo-jensen-even}) and the Cauchy-Schwarz inequality.
\begin{lemma}[Pseudo Cauchy-Schwarz Inequality,~{\cite[21.~Theorem]{barak2016proofs}}]\label[Lemma]{lem:pseudo-cauchy-schwarz}
    Given a degree-$d$ pseudo-distribution $\cD$ and polynomials $p, q$ of degree $\le d/2$, the following holds:
    \begin{equation}
        \bigg(\pE_{\cD} pq\bigg)^2 \le \bigg(\pE_{\cD} p^2\bigg)\bigg(\pE_{\cD} q^2\bigg)\, .
    \end{equation}
\end{lemma}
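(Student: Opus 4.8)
The plan is to run the classical ``quadratic form in an auxiliary parameter'' argument, transported to the pseudo-expectation setting, using only linearity of $\pE_{\cD}$ and the non-negativity axiom applied to squares of polynomials of degree at most $d/2$.

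First I would note that for any pair of real scalars $\alpha,\beta \in \R$, the polynomial $\alpha p - \beta q$ has degree at most $d/2$, since $p$ and $q$ do; hence $(\alpha p - \beta q)^2$ has degree at most $d$ and lies in the domain of $\pE_{\cD}$. By the non-negativity property of the pseudo-distribution, $\pE_{\cD}\big[(\alpha p - \beta q)^2\big] \ge 0$ for every choice of $\alpha$ and $\beta$. Expanding the square and invoking linearity of $\pE_{\cD}$, this inequality becomes
\[
    \alpha^2\,\pE_{\cD}[p^2] \;-\; 2\alpha\beta\,\pE_{\cD}[pq] \;+\; \beta^2\,\pE_{\cD}[q^2] \;\ge\; 0
    \qquad\text{for all } \alpha,\beta \in \R\, ,
\]
which states precisely that the symmetric matrix
\[
    M \;=\; \begin{pmatrix} \pE_{\cD}[p^2] & -\pE_{\cD}[pq] \\ -\pE_{\cD}[pq] & \pE_{\cD}[q^2] \end{pmatrix}
\]
is positive semidefinite. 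A positive semidefinite symmetric matrix has non-negative determinant, so $\det M = \pE_{\cD}[p^2]\,\pE_{\cD}[q^2] - \big(\pE_{\cD}[pq]\big)^2 \ge 0$, which rearranges to the claimed inequality. A convenient feature of passing through $M$ rather than a discriminant is that the degenerate case $\pE_{\cD}[p^2] = 0$ needs no separate treatment: positive semidefiniteness of $M$ then already forces $\pE_{\cD}[pq] = 0$, so both sides of the inequality vanish.

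There is no genuine obstacle in this argument; the only points deserving a moment of care are bookkeeping ones. The hypothesis $\deg p, \deg q \le d/2$ is used exactly so that $(\alpha p - \beta q)^2$ stays within the degree-$d$ domain on which the non-negativity axiom is available, and one must use that $\pE_{\cD}$ is linear (not merely monotone) so that the expansion of the square is valid term by term. If one prefers to avoid citing properties of $2 \times 2$ positive semidefinite matrices, the equivalent elementary route is: when $\pE_{\cD}[p^2] > 0$, substitute $\alpha = \pE_{\cD}[pq]$ and $\beta = \pE_{\cD}[p^2]$ into the displayed inequality and divide by $\pE_{\cD}[p^2]$ to recover the bound; the case $\pE_{\cD}[p^2] = 0$ is then dispatched as above.
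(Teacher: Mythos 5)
Your proof is correct and is the standard argument for this fact; the paper itself does not prove the lemma but simply cites it from \cite{barak2016proofs}, where essentially the same quadratic-form/discriminant argument is used. Your handling of the degree bookkeeping (so that $(\alpha p - \beta q)^2$ stays within degree $d$ and the non-negativity axiom applies) and of the degenerate case $\pE_{\cD}[p^2]=0$ is exactly right, so nothing further is needed.
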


\subsubsection{Rounding \& quadratic sampling lemma}
Critical to a relaxation of a problem to a space where it is tractable to solve, is the ability to ``round'' back the relaxed solution into the original solution space with (hopefully) minimal loss in solution quality.
For SoS relaxations, since pseudo-moment matrices are positive semi-definite (PSD), a natural candidate for rounding is to use these as moment matrices for some underlying distribution that is chosen appropriate to the problem at hand~\cite{barak2014rounding, barak2016proofs}.
The second-most basic rounding algorithm is to use the degree-1-and-2 pseudo-moments as the moments of a Gaussian.
The fact that this can always be done is known as the \emph{quadratic sampling lemma}.
\begin{lemma}[Quadratic Sampling Lemma, {\cite[23.~Lemma]{barak2016proofs}}]\label[Lemma]{lem:quadratic-sampling}
    For every degree-2 pseudo-distribution $\cD$, there exists a gaussian $\cN(\mu, \Sigma)$ over $\R^n$, such that,
    \begin{equation*}
        \mu = \pE_{x \sim \cD}[x]\, ,
    \end{equation*}
    and,
    \begin{equation*}
        \Sigma + \mu\mu^{\sT} = \pE_{x \sim \cD}[xx^\sT]\, .
    \end{equation*}
\end{lemma}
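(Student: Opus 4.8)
The plan is to read off $\mu$ and $\Sigma$ directly from the pseudo-moment data and then verify that $\Sigma$ is a legitimate covariance matrix. Concretely, set $\mu := \pE_{\cD}[x] \in \R^n$ (the vector of degree-$1$ pseudo-moments) and $M := \pE_{\cD}[xx^{\sT}] \in \R^{n\times n}$ (the matrix of degree-$2$ pseudo-moments), and define $\Sigma := M - \mu\mu^{\sT}$. Since $\mu$ is a fixed vector once $\cD$ is given, $\mu\mu^{\sT}$ is a constant matrix and $\Sigma$ is well defined; the only thing to check is $\Sigma \sgeq 0$, after which a Gaussian $\cN(\mu,\Sigma)$ exists and the two claimed identities hold by the standard formulas for the first two moments of a Gaussian.

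For positive semidefiniteness, fix an arbitrary $a \in \R^n$ and consider the degree-$1$ polynomial $\ell(x) := \iprod{a,x} - \iprod{a,\mu}$, whose second term is a scalar constant. Expanding $\ell^2$ and using linearity of $\pE_{\cD}$ together with $\pE_{\cD}[\iprod{a,x}] = \iprod{a,\mu}$ gives
\[ \pE_{\cD}[\ell(x)^2] = \pE_{\cD}[\iprod{a,x}^2] - \iprod{a,\mu}^2 = a^{\sT} M a - a^{\sT}\mu\mu^{\sT} a = a^{\sT}\Sigma a\,. \]
Because $\deg \ell = 1 \le k/2$ with $k = 2$, the non-negativity axiom for the degree-$2$ pseudo-distribution $\cD$ applies to the square $\ell^2$ and yields $\pE_{\cD}[\ell(x)^2] \ge 0$, hence $a^{\sT}\Sigma a \ge 0$. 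As $a$ was arbitrary, $\Sigma \sgeq 0$.

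Finally, let $g \sim \cN(\mu,\Sigma)$. By definition of the Gaussian, $\E g = \mu$, matching the first identity, and $\E gg^{\sT} = \Sigma + \mu\mu^{\sT} = M = \pE_{\cD}[xx^{\sT}]$, matching the second. There is essentially no obstacle here: the entire content is the one-line observation that the degree-$\le 1$ non-negativity of $\pE_{\cD}$ is precisely the assertion that the Schur-complement-type matrix $M - \mu\mu^{\sT}$ is positive semidefinite, and the remainder is bookkeeping. (If one instead wanted the variant matching all degree-$\le 2$ pseudo-moments of affine functionals of $x$, the same computation applied coordinatewise to a test polynomial $\iprod{a,x} - \iprod{a,\mu}$ already suffices, so no strengthening of the argument is required.)
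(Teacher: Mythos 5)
Your proof is correct and is the standard argument for this lemma; the paper itself does not prove it but cites it to Barak's lecture notes, where the proof is exactly this: set $\Sigma := \pE[xx^{\sT}] - \mu\mu^{\sT}$ and deduce $\Sigma \succeq 0$ from the non-negativity of $\pE[\ell^2]$ for linear $\ell$ (together with the scaling axiom $\pE[1]=1$, which you use implicitly when expanding the square). Nothing is missing.
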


\subsection{Mixed spherical spin glasses}

\subsubsection{Replica symmetry breaking}
Full-Replica Symmetry Breaking is an imposition on the allowed overlaps in the solution space.
The condition implies that the normalized overlap of two configurations sampled i.i.d.~from the Gibbs distribution can take any value between $0$ and $1$.
More precisely, given $\sigma_1, \sigma_2 \sim G_{n, \beta}^{\otimes 2}$,
\[
    \textsf{supp}\left(\frac{1}{n}|\langle \sigma_1, \sigma_2\rangle |\right) = [0, q_\beta]\, ,
\]
where $q_\beta \to 1$ as $\beta \to \infty$.
In fact, this condition can be used to ascertain finer properties about the input hamiltonian $H_n$ when it is an instance of a Spherical Spin Glass, and the following proposition from Subag~\cite{subag2021following} outlines these (equivalent) properties.

\begin{proposition}[full-RSB Models{~\cite[Proposition 1]{subag2021following}}]\label[Proposition]{prop:frsb-equivalent}
The following conditions are equivalent:
\begin{enumerate}
    \item Given the mixture polynomial $\nu$ for the input hamiltonian $H_n$, $\nu''(q)^{-1/2}$ is concave for $q \in (0, 1]$.
    \item The Parisi Measure $\mu_\beta$ has \emph{full support} $[0, q_\beta]$ for any $\beta > 0$.
    \item The following can be said about the support of the Parisi measure based on the mixture polynomial $\mu$'s behavior at $0$:
    \begin{enumerate}
        \item If $\beta \leq \nu(0)''^{-1/2}$, then $\textsf{supp}(\mu_\beta) = \{0\}$.
        \item If $\beta > \nu(0)''^{-1/2}$, then $\textsf{supp}(\mu_\beta) = [0, q_\beta]$ where $q_\beta$ is the unique solution of
        \[
            \nu''(q)^{-1/2} = \beta(1 - q)\, .
        \]
        Additionally, the CDF of the overlap distribution $\mu_\beta$ has the following form,
        $$
        \mu_\beta(q) = \begin{cases}
            \frac{\mu'''(q)}{2\beta v''(q)^{3/2}}, & \text{ if $q \in [0, q_\beta)$} \\
            1, & \text{ if $q \in [q_\beta, 1]\, .$}
        \end{cases}
        $$
    \end{enumerate}
\end{enumerate}
\end{proposition}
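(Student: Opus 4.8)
The plan is to deduce all three equivalences from the Crisanti--Sommers variational formula for the limiting free energy \cite{crisanti1992sphericalp,talagrand2006free,subag2018free} together with the strict convexity of the associated functional. Throughout write $g(q) := \nu''(q)^{-1/2}$, which in the regimes of interest is finite and twice continuously differentiable on $(0,1]$ (finiteness near $0$ uses $\nu''(0)>0$), and which is non-increasing since $\nu''$ is non-decreasing (all $\gamma_k\ge 0$). Recall that the limiting free energy at inverse temperature $\beta$ equals $\inf_{x}\mathrm{CS}_\beta(x)$, the infimum being over cumulative distribution functions $x:[0,1]\to[0,1]$ of probability measures on $[0,1]$, with $\mathrm{CS}_\beta$ the explicit functional built from $\nu'$ and the integral $\hat x(q):=\int_q^1 x(s)\,ds$. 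The deep analytic input, which I would cite rather than reprove, is that $\mathrm{CS}_\beta$ is strictly convex in a suitable parametrization \cite{auffinger2015parisi}; consequently the minimizer $\mu_\beta$ is unique and is characterized by first-order stationarity: differentiating $\mathrm{CS}_\beta$ along variations of $x$ localized near a point $q$ produces a functional $q\mapsto\Psi_{\beta,\mu}(q)$ such that $\mu_\beta$ is optimal if and only if $\Psi_{\beta,\mu_\beta}\ge 0$ on $[0,1]$ with equality on $\mathrm{supp}(\mu_\beta)$.

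First I would unpack this stationarity condition on the interior of the support. On any open sub-interval of $\mathrm{supp}(\mu_\beta)$ we have $\Psi_{\beta,\mu_\beta}(q)=0$; differentiating once eliminates the integral $\hat x$, and a short further manipulation shows that the density of $\mu_\beta$ on the open support equals $-\tfrac{1}{\beta}g'(q)=\tfrac{\nu'''(q)}{2\beta\,\nu''(q)^{3/2}}$, while evaluating the stationarity relation at the upper endpoint $q_\beta$ of the support (where $\hat x(q_\beta)=1-q_\beta$, since $x\equiv 1$ above $q_\beta$) forces $g(q_\beta)=\beta(1-q_\beta)$, i.e.\ $\nu''(q_\beta)^{-1/2}=\beta(1-q_\beta)$. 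A separate check of the boundary at $q=0$ shows that the Dirac mass $\delta_0$ meets the stationarity inequality exactly when $\beta\le g(0)=\nu''(0)^{-1/2}$. Together these reproduce Condition~3 once full interval support is known, so $(3)\Rightarrow(2)$ is immediate, and $(2)\Rightarrow(3)$ amounts to re-running this computation under the hypothesis that $\mathrm{supp}(\mu_\beta)$ is the interval $[0,q_\beta]$ for every $\beta$.

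For $(1)\Rightarrow(3)$, assume $g$ is concave on $(0,1]$. If $\beta\le g(0)$, then $\Psi_{\beta,\delta_0}\ge 0$ --- concavity, equivalently monotonicity of $-g'$, is exactly what keeps $\Psi_{\beta,\delta_0}$ from dipping below $0$ --- so by convexity $\mu_\beta=\delta_0$, which is 3(a). If $\beta>g(0)$, set $h(q):=g(q)-\beta(1-q)$; since $h$ is concave with $h(0)<0<h(1)$, its superlevel set $\{h\ge 0\}$ is an interval $[q_\beta,1]$ with $q_\beta\in(0,1)$ the unique solution of $g(q)=\beta(1-q)$, and concavity gives $h'(q)\ge h'(q_\beta^+)\ge 0$ on $[0,q_\beta]$, i.e.\ $-g'(q)/\beta\le 1$ there. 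Define the candidate CDF $x(q):=-g'(q)/\beta$ on $[0,q_\beta)$ and $x(q):=1$ on $[q_\beta,1]$: it is non-negative (as $g$ is non-increasing), non-decreasing (as $g$ is concave), and bounded by $1$, hence a bona fide CDF. By construction $\Psi_{\beta,\mu}=0$ on $[0,q_\beta]$, and $\Psi_{\beta,\mu}\ge 0$ on $(q_\beta,1]$ follows from $g(q)\ge\beta(1-q)$ there; so by strict convexity this candidate is the Parisi measure, its support is exactly $[0,q_\beta]$, and its density is $-g'/\beta$, which is 3(b), and hence (2) holds.

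Finally, $(2)\Rightarrow(1)$: if for every $\beta>0$ the support of $\mu_\beta$ is the interval $[0,q_\beta]$, then on $(0,q_\beta)$ the stationarity equality holds and the computation of the second paragraph identifies the density as $-g'(q)/\beta$; non-negativity and monotonicity of this density force $g''\le 0$ on $(0,q_\beta)$. The same computation gives $g(q_\beta)=\beta(1-q_\beta)$, and since $g(q_\beta)\le \nu''(0)^{-1/2}<\infty$ this shows $1-q_\beta\le \nu''(0)^{-1/2}/\beta\to 0$, i.e.\ $q_\beta\uparrow 1$ as $\beta\to\infty$; hence for every $q_0<1$ there is a $\beta$ with $q_\beta>q_0$, so $g$ is concave on $[0,q_0]$, and letting $q_0\uparrow 1$ with continuity of $g''$ up to $1$ yields concavity on $(0,1]$, which is (1). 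The main obstacle is the pair of imported ingredients: that $\mathrm{CS}_\beta$ actually computes the free energy, and --- more seriously --- that it is strictly convex so that first-order stationarity is sufficient for optimality; reproving the latter is the genuinely hard analytic step and I would instead cite it. The rest is bookkeeping: deriving the boundary terms of $\Psi_{\beta,\mu}$ correctly at $q=0$ and at $q=q_\beta$, and isolating the degenerate regimes (no cubic term, so $g'(0)=0$; or $\beta$ below the replica-symmetry threshold $\nu''(0)^{-1/2}$, where the interval $[0,q_\beta]$ collapses to $\{0\}$).
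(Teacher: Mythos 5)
The paper does not prove this proposition at all: it is stated verbatim as a citation to Subag's Proposition~1 in \cite{subag2021following}, so there is no in-paper argument to compare against. Your plan follows the standard route from the literature (Crisanti--Sommers variational formula plus strict convexity of the functional \cite{auffinger2015parisi,talagrand2006free}, then first-order stationarity on the support and at the endpoints), which is essentially how Subag and Talagrand establish these equivalences. The structure of the four implications, the endpoint equation $\nu''(q_\beta)^{-1/2}=\beta(1-q_\beta)$, the threshold $\beta\le\nu''(0)^{-1/2}$ for $\mu_\beta=\delta_0$, and the identification of the optimizer with $-g'/\beta$ are all correct and consistent with the cited result (note the proposition's displayed formula has a typo, $\mu'''$ for $\nu'''$).

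One point needs repair. In your second and fourth paragraphs you call $-g'(q)/\beta$ the \emph{density} of $\mu_\beta$, while in the third you (correctly) use it as the \emph{CDF} $x_\beta(q)$ of the Parisi measure; it is the latter, since $-\tfrac{d}{dq}\nu''(q)^{-1/2}=\tfrac{\nu'''(q)}{2\nu''(q)^{3/2}}$ is exactly the function the proposition labels $\mu_\beta(q)$, i.e.\ the distribution function. This matters for your $(2)\Rightarrow(1)$ step: ``non-negativity and monotonicity of this density'' is not a valid constraint (densities need not be monotone), whereas monotonicity of the CDF is forced, and it is precisely $x_\beta=-g'/\beta$ non-decreasing on $(0,q_\beta)$ that yields $g''\le 0$ there. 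With that substitution the implication goes through as you intend, and the rest of the plan is sound modulo the imported strict-convexity input, which is reasonable to cite rather than reprove.
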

The first condition of~\Cref{prop:frsb-equivalent} asserts the concavity of the negative square-root of the second derivative of the mixture polynomial under the fRSB assumption, and the third one allows us to have an explicit characterization of the CDF of the overlap distribution.

\subsubsection{The generalized TAP free energy}
There are two formulations of the limiting free energy density for typical instances of spherical hamiltonians, where the first is the so-called \emph{Crisanti-Sommers} representation which is the spherical analogue of the infamous Parisi-Variational Principle~\cite{crisanti1992sphericalp, talagrand2006free} and the second is an explicit expression proved by Subag~\cite{subag2018free} via a generalized TAP formula. Under the fRSB assumption, the generalized TAP free energy formula will be computable as an explicit function of the mixture polynomial, and so we work with the this representation as introduced by Subag~\cite{subag2018free}.

A result of Chen and Sen~\cite{chen2017parisi} in conjunction with the fact that the mixture polynomial continues to be concave under re-scaling as,
\[
    s \to \nu(qs)\, ,
\]
gives a clean expression for $E_*(q)$ (which is the ground state energy at radius $q$) as a function of the mixture polynomial in the fRSB regime.

\begin{lemma}[fRSB Ground State Energy, {\cite{chen2017parisi}}]\label[Lemma]{lem:chen-sen}
    The ground state energy of the mixed spherical spin glass under fRSB on the ball $\cB^n$ at radius $q$ is given as,
    \begin{align}
        E_*(q) = E_{\nabla^2}(q) = \int_0^{\ q} \nu''(t)^{1/2}dt\, .
    \end{align}
\end{lemma}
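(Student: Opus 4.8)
Since this is a restatement of known facts, the plan is to reduce it to the (zero-temperature) Crisanti--Sommers formula together with Subag's analysis of Hessian ascent, exploiting that the fRSB condition is preserved under restriction to a sub-sphere. First, I would set up the reduction. Reading $E_*(q)$ as the limiting normalized maximum of $H_n$ over the sphere of (squared) radius $q$ inside $\cB^n$, parametrize a point of that radius as $\sigma = \sqrt q\,\tau$ with $\tau \in \cS^{n-1}(1)$; then $H_n(\sqrt q\,\tau) = \sum_k \gamma_k q^{k/2}\sqrt n \sum g\,\tau\cdots\tau$, so the restriction is itself a mixed spherical spin glass, now with mixture polynomial $\nu_q(s) := \nu(q s)$. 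Hence $E_*(q)$ is the ordinary ground state energy of the $\nu_q$-model, and the fRSB condition is stable under this rescaling: $\nu_q''(s) = q^2\nu''(qs)$, so $\nu_q''(s)^{-1/2} = q^{-1}\nu''(qs)^{-1/2}$ is concave in $s$ precisely because $\nu''(\cdot)^{-1/2}$ is (composition with an affine map preserves concavity), so by \Cref{prop:frsb-equivalent} the $\nu_q$-model is again fRSB.

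Second, for the lower bound $E_*(q) \ge E_{\nabla^2}(q) = \int_0^q \nu''(t)^{1/2}\,dt$, I would run Subag's Hessian-ascent algorithm to full radius on the $\nu_q$-model (equivalently, out to squared radius $q$ in the original ball). At a point $\sigma$ of squared radius $t$, the restriction of $\tfrac1n\nabla^2 H_n(\sigma)$ to the orthogonal complement of the span of the steps taken so far converges spectrally to $\nu''(t)^{1/2}$ times a $\mathsf{GOE}$ matrix, so its top eigenvalue tends to $2\nu''(t)^{1/2}$; a step of squared length $dt$ along the corresponding eigenvector gains energy $(1+o_n(1))\,n\,\nu''(t)^{1/2}\,dt$, with the higher-order Taylor terms and the curvature/orthogonality corrections entering only at lower order. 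Summing the increments gives total energy $(1+o_n(1))\,n\int_0^q \nu''(t)^{1/2}\,dt$ on the radius-$q$ sphere; this both establishes the middle equality (this is the content of the analysis of $E_{\nabla^2}$) and gives $E_*(q) \ge \int_0^q \nu''(t)^{1/2}\,dt$.

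Third, for the matching upper bound $E_*(q) \le \int_0^q \nu''(t)^{1/2}\,dt$, I would invoke the zero-temperature limit of the Crisanti--Sommers variational principle for the ground state energy of the $\nu_q$-model \cite{crisanti1992sphericalp, talagrand2006free, subag2018free, chen2017parisi}: this ground state energy is an infimum of a convex functional over monotone overlap-distribution functions on $[0,1]$. Under the fRSB property just checked for $\nu_q$, the first-order optimality conditions of that variational problem are solved by the overlap measure of full support, and substituting this candidate collapses the functional to the closed form $\int_0^1 \nu_q''(s)^{1/2}\,ds$; the change of variables $t = qs$ rewrites this as $\int_0^q \nu''(t)^{1/2}\,dt$. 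Combined with the lower bound, all three quantities are equal. (This is also the computation underlying the identity $\mathsf{ALG} = \mathsf{OPT}$ in the fRSB regime.)

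The one genuinely hard ingredient is the upper bound, and it is hard exactly in the way the paper emphasizes: it relies on the rigorous proof of the Crisanti--Sommers/Parisi formula --- Guerra-type interpolation and the analytic Gaussian-process toolkit of Talagrand \cite{talagrand2006free}, or Subag's generalized-TAP decomposition of the sphere into thin spherical bands \cite{subag2018free} --- which we take as a black box and do not attempt to reprove. Everything else (the sub-sphere restriction identity, the rescaling-invariance of fRSB, the per-step energy accounting, and the substitution $t = qs$) is routine bookkeeping.
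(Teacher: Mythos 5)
The paper does not prove this lemma at all: it is imported verbatim from Chen--Sen \cite{chen2017parisi}, with only the preceding remark that the mixture polynomial remains concave under the rescaling $s \to \nu(qs)$ — exactly the reduction you spell out. Your proposal is a correct and faithful elaboration of that citation (the $\sigma = \sqrt{q}\,\tau$ restriction, the stability of the fRSB condition under $\nu_q(s) = \nu(qs)$, and the Crisanti--Sommers upper bound taken as a black box), so it matches the paper's approach.
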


Restating the points above for $q = 1$ corresponds to the case that $\sigma \in \cS^{n-1}(\hspace{-1mm}\sqrt{n})$, and yields that
\begingroup
\allowdisplaybreaks
\begin{align}
    \lim_{n \to \infty}\frac{1}{n}\max_{\sigma \in \cS^{n-1}(\hspace{-0.5mm}\sqrt{n})} H_n(\sigma) \overset{a.s.}{=} E_*(1) = \int_{0}^1 \nu''(t)^{1/2}dt\, ,
\end{align}
\endgroup
provided the coefficients $\{\gamma_k\}_{k = 2}^{d_H}$ satisfy~\Cref{prop:frsb-equivalent}.

\subsubsection{Eigenvalue distribution of the Hessian}

As is shown by Subag~\cite{subag2021following}, under the fRSB condition~(\Cref{prop:frsb-equivalent}) it suffices to look at the top-$\delta n$ eigenvectors of the hessian ($\nabla^2 H$) at every point in the unit ball and follow iteratively perpendicular directions with infinitesmal updates.
The reason for the perpendicular updates corresponds to the ultrametric structure of the solution space~\cite{panchenko2013parisi}, but the reason for following the top eigenspace is built upon two key observations:
\begin{enumerate}
    \item The eigenvalues of the hessian ($\nabla^2 H$) are distributed as a mixture polynomial dependent scaling of the Wigner Law for $\textsf{GOE}(n-1)$ matrices,
    \[\nabla^2 H \sim \sqrt{n/(n-1)}\cdot \nu''(q)^{1/2}\textsf{GOE}(n-1)\, .\]
    Furthermore, with uniform control over every point $\sigma \in \cB^n$ that maximizes the generalized TAP Free Energy at distance $\norm{\sigma}^2_2$, there are (with high probability) at least $\delta n$ eigenvectors with eigenvalue at least $2\nu''(q)^{1/2} - \epsilon$~\cite[Lemma 3]{subag2021following}.
    \item It then follows by a Taylor expansion argument (given small contributions from the higher-order derivatives) and some algebra that the last iterate will have energy given by a truncated sum of the Chen and Sen formula (\Cref{lem:chen-sen}) and, under the fRSB assumption, this will be arbitrarily close to optimal. More specifically, the algorithm will output a value,
    \[
        \sum_{i=0}^{k}(1 - \epsilon(q_i))\cdot\nu''(q_i)^{1/2} \approx \int_{0}^{1}\nu''(q)^{1/2}dq\, ,
    \]
    where $q_i$ is the $\ell^2$-norm of the $i$-th iterate and $\epsilon(q_i)$ is some arbitrarily small number that depends only on choosing a sufficiently large (but constant) number of iterates.
\end{enumerate}
The most critical observation in the argumentation above is to notice that the maximum eigenvalue of $\nabla^2 H$ will be a scaling of the Wigner law by a factor that will exactly correspond to the optimum in the fRSB setting.
One can then construct an algorithm that starts at the origin, iteratively makes infinitesmal perpendicular updates that also maximize $\nabla^2 H$, and then outputs the last iterate.

\subsubsection{Hessian ascent}
\label{sec:subag}
We state here, for completeness, the pseudo-code of the algorithm proposed by Subag~\cite{subag2021following}. The $\perp$ sign denotes that the gradient $\nabla_\perp H(v_{i-1})$ and hessian $\nabla^2_\perp H(v_{i-1})$ are projected to be orthogonal to $v_{i-1}$ itself.
\begingroup
\allowdisplaybreaks
\begin{algorithm}
\begin{algorithmic}
\State $v_0 = (\frac{1}{\sqrt{k}},0,\dots,0)$.
\State $i = 1$.
\While{$i \in [k]$}
    \State Choose $x \perp v_{i-1}$ with $\norm{x} = 1$, such that,
    \begin{itemize}
        \item $\langle x, \nabla_{\perp} H(v_{i-1}) \rangle = 0$.
        \item $\langle x, \nabla^2_{\perp} H(v_{i-1})x \rangle \geq (1-\epsilon)2\nu''(v_i)$.
    \end{itemize}
    \State $v_i = v_{i-1} + \frac{1}{\sqrt{k}}x$.
    \State $i \leftarrow i + 1$.
\EndWhile
\State Output $v_k$.
\end{algorithmic}
\caption{Hessian-Based Optimization for Spherical Spin Glasses~\cite[Pg 12]{subag2021following}}\label{alg:subag}
\end{algorithm}
\endgroup

\subsubsection{The algorithmic threshold of Huang and Sellke~\texorpdfstring{\cite{huang2021tight}}{}}
\label{sec:alg-threshold}
The value achieved by the HES SoS relaxation is certifiably $\eps$-close to $\mathsf{ALG}_{Sp}$ for any $\eps > 0$. $\mathsf{ALG}_{Sp}$ is precisely characterized by~\cite[Proposition 2.2 and Equation 2.5]{huang2021tight} and given by a \emph{relaxed} version of the Crisanti-Sommers formula~\cite{crisanti1992sphericalp}.
When $h = 0$ (which is the case for the hamiltonian given in~\Cref{eq:spherical-ham}) the infimum is given by~\Cref{eq:alg-threshold}.

\begin{proposition}[Variational Algorithmic Threshold for Spherical Spin Glasses,~{\cite[Equations 1.3-2.5]{huang2021tight}}]\label[Proposition]{proposition:huang-alg}
    The value achieved by Subag's algorithm is given by the following variational principle,
    \begin{equation}\label{eq:alg-threshold}
        \mathsf{ALG}_{Sp} = \inf_{(B, \mu) \in \cV(\nu)} \left(\frac{1}{2}\left[\frac{h^2}{B_\mu(0)} + \int_{0}^1\left(\frac{\nu''(t)}{B_\mu(t)} + B_\mu(t)\right)dt\right]\right)\, ,
    \end{equation}
    where, $B_\mu(t) = B - \int_{t}^1 \nu''(q)\mu(q)dq$ and the optimizing set is the following collection of tuples,
    \begin{equation}\label{eq:functional-optimization-tuple}
        \cV(\nu) = \left\{(B, \mu) \in \R_{> 0}\,\times\cL([0,1])\,:\,B > \int_{0}^1 \nu''(t)\mu(t)dt\right\}\, ,
    \end{equation}
    with the set $\cL$ denoting a relaxation of the set of all CDFs over the interval $[0,1]$ specifically defined as,
    \begin{equation}\label{eq:relaxed-cdfs}
        \cL([0,1]) = \left\{\mu: [0,1) \to \R_{\ge 0}\,:\,\mu\,\text{is right-continuous},\int_{0}^1 \nu''(t)\mu(t)dt < \infty \right\}\, .
    \end{equation}
\end{proposition}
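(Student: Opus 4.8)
The statement is a verbatim restatement of the main variational characterization of Huang and Sellke \cite{huang2021tight}, so the plan is to invoke their analysis; what follows is the shape of the argument I would reconstruct. The claim really has two halves: (i) \emph{achievability} --- there is a polynomial-time (indeed incremental / AMP-type) algorithm whose output attains normalized value $\mathsf{ALG}_{Sp}-o_n(1)$ with high probability; and (ii) \emph{optimality} --- no overlap-concentrated algorithm (in particular no Lipschitz AMP iteration, and conjecturally no low-degree polynomial) exceeds $\mathsf{ALG}_{Sp}+o_n(1)$. The formula \eqref{eq:alg-threshold} is the common value at which these two bounds meet, and it is extracted from the Crisanti--Sommers functional by \emph{relaxing} the Parisi measure from a probability CDF to an arbitrary non-negative right-continuous density, as recorded in \eqref{eq:relaxed-cdfs}; that relaxation is exactly what separates the algorithmically reachable regime from the statistically optimal one.

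For achievability I would run an incremental algorithm that builds $\sigma$ along a radius schedule $q\mapsto \sigma(q)$ with $\norm{\sigma(q)}_2^2=q$, taking at radius $q$ an infinitesimal step in a direction drawn from (a mollified version of) the top eigenspace of $\nabla_\perp^2 H(\sigma(q))$, as in Subag's Hessian ascent \cite{subag2021following} (Algorithm~\ref{alg:subag}) --- but now with the step-weighting schedule treated as a free parameter rather than forced by the fRSB identity. Tracking the accumulated energy by a Taylor expansion (as in \Cref{sec:subag}) together with the almost-sure convergence of the Hessian spectrum to a $\nu''(q)^{1/2}$-scaled semicircle law, the value collected along the schedule becomes a functional of the schedule; introducing the auxiliary quantity $B_\mu(t)=B-\int_t^1\nu''(q)\mu(q)\,dq$ to bookkeep the ``remaining budget'' rewrites this functional in the form $\tfrac12\big[\int_0^1(\nu''(t)/B_\mu(t)+B_\mu(t))\,dt\big]$, and optimizing the schedule is exactly the infimum over $(B,\mu)\in\cV(\nu)$. (Equivalently one runs the AMP iteration of \cite{montanari2021optimization} whose state evolution produces the same trajectory; the equivalence of the two routes is part of \cite{huang2021tight}.) The concentration needed --- that the realized value is within $o_n(1)$ of this deterministic functional --- follows from Gaussian concentration of $H$ together with the uniform control on the Hessian spectrum over the relevant band of radii.

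For optimality I would use the branching / ultrametric overlap-gap property. One builds a correlated family of instances indexed by the leaves of a depth-$D$ tree (interpolating the disorder along each root-to-leaf path), runs the candidate overlap-concentrated algorithm on each leaf instance, and uses the overlap-concentration hypothesis to conclude that the pairwise overlaps of the outputs are pinned, up to $o_n(1)$, to a deterministic function of the tree distance. A Guerra-type interpolation and first-moment bound on the free energy of this correlated ensemble then shows that if every output had normalized value $>\mathsf{ALG}_{Sp}+\epsilon$ the implied packing of near-optima would be geometrically impossible; the threshold at which that bound becomes non-vacuous is again \eqref{eq:alg-threshold}, precisely because dropping the ``$\mu$ is a probability CDF'' constraint enlarges $\cV(\nu)$ to \eqref{eq:functional-optimization-tuple} and lowers the infimum from the Crisanti--Sommers value ($\mathsf{OPT}$) to the algorithmic value.

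The hard part is matching the \emph{constant} on both sides at once: the achievability analysis and the OGP analysis each naturally emit a variational expression, and the content of the proposition is that after the CDF$\to$density relaxation these two expressions coincide and equal \eqref{eq:alg-threshold}. Pinning them down requires the delicate Parisi-PDE manipulations of \cite{huang2021tight} (in particular verifying that $B_\mu(0)>0$ throughout the relaxed feasible set and that the infimum is attained) and, on the hardness side, the careful second-moment control of the branching interpolation; since neither piece is a black box, I would invoke \cite{huang2021tight} directly rather than reproduce it.
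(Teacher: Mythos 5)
The paper offers no proof of this proposition at all: it is imported verbatim by citation to Huang--Sellke, exactly as you ultimately propose to do, and your sketch of their achievability-plus-branching-OGP argument is a faithful (if uncheckable here) summary of that reference. So your proposal takes essentially the same approach as the paper, namely deferring to \cite{huang2021tight}.
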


By~\cite[Corollary 1.8]{huang2021tight} the above variational representation can be simplified to an explicit formula. We restate the proposition, restricting to the case that $h = 0$ and the hamiltonian only has terms with with degree $\in [2,d_H]$.

\begin{proposition}[Explicit Algorithmic Threshold,~{\cite[Corollary 1.8]{huang2021tight}}]\label[Proposition]{prop:alg-explicit}
    The algorithmic threshold is given as a function of the mixture polynomial as,
    \begin{equation}\label{eq:alg-threshold-explicit}
        \mathsf{ALG}_{Sp} = q_1 \nu''(q_1)^{1/2} + \int_{q_1}^1 \nu''(q)^{1/2}dq\, ,
    \end{equation}
    where $q_1 \in [0,1)$ is the unique solution to,
    \begin{equation}\label{eq:q-condition}
        \nu'(q_1) = q_1\nu''(q_1)\, .
    \end{equation}
\end{proposition}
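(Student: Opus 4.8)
This proposition is a restatement of \cite[Corollary~1.8]{huang2021tight}, so the plan is to derive it from the variational principle of \Cref{proposition:huang-alg} by solving in closed form the (essentially one-dimensional) optimization over $(B,\mu)\in\cV(\nu)$ in the case $h=0$. After specializing, the objective in \Cref{eq:alg-threshold} reads $\tfrac12\int_0^1\bigl(\nu''(t)/B_\mu(t)+B_\mu(t)\bigr)\,dt$, where $B_\mu(t)=B-\int_t^1\nu''(q)\mu(q)\,dq$.

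First I would record the consequences of $\nu(q)=\sum_{k=2}^{d_H}\gamma_k^2 q^k$ having non-negative coefficients and, crucially, no linear term: $\nu''(t)=\sum_{k\ge 2}k(k-1)\gamma_k^2 t^{k-2}$ and $\nu'''(t)=\sum_{k\ge 3}k(k-1)(k-2)\gamma_k^2 t^{k-3}$ are both non-negative and non-decreasing on $[0,1]$, so in particular $t\mapsto\nu''(t)^{1/2}$ is non-decreasing. Next I would prove two matching bounds. For the lower bound, any $(B,\mu)\in\cV(\nu)$ produces a path $g:=B_\mu$ that is absolutely continuous with $g'(t)=\nu''(t)\mu(t)\ge 0$ and $g(0)=B-\int_0^1\nu''\mu>0$, hence $g>0$ on $[0,1]$, so the pointwise inequality $\nu''(t)/g(t)+g(t)\ge 2\nu''(t)^{1/2}$ applies and gives $\mathsf{ALG}_{Sp}\ge\int_0^1\nu''(t)^{1/2}\,dt$. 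For the upper bound I would exhibit a (near-)optimal feasible pair: when $\gamma_2>0$, the choice $B=\nu''(1)^{1/2}$ and $\mu(t)=\nu'''(t)/\bigl(2\nu''(t)^{3/2}\bigr)$ lies in $\cV(\nu)$ (feasibility uses $\int_0^1\nu''\mu=\nu''(1)^{1/2}-\nu''(0)^{1/2}<B$, valid since $\nu''(0)=2\gamma_2^2>0$) and realizes $B_\mu(t)=\nu''(t)^{1/2}$ exactly, so the objective value is exactly $\int_0^1\nu''(t)^{1/2}\,dt$; when $\gamma_2=0$, $\nu''$ vanishes at the left endpoint, so $B_\mu=\nu''(\cdot)^{1/2}$ is only a limiting feasible path and a short regularization recovers the same value. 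Hence $\mathsf{ALG}_{Sp}=\int_0^1\nu''(t)^{1/2}\,dt$, which is moreover exactly $E_*(1)$ from \Cref{lem:chen-sen}.

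Finally I would verify that this matches \Cref{eq:alg-threshold-explicit}. Rearranging the stationarity condition \Cref{eq:q-condition} gives $\nu'(q)-q\nu''(q)=-\sum_{k\ge 3}k(k-2)\gamma_k^2 q^{k-1}$, which is strictly negative for $q\in(0,1)$ whenever some $\gamma_k\ne 0$ with $k\ge 3$; thus $q_1=0$ is its unique root in $[0,1)$, and substituting $q_1=0$ into \Cref{eq:alg-threshold-explicit} yields $q_1\nu''(q_1)^{1/2}+\int_{q_1}^1\nu''(q)^{1/2}\,dq=\int_0^1\nu''(q)^{1/2}\,dq=\mathsf{ALG}_{Sp}$. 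In the degenerate pure-degree-two model, \Cref{eq:q-condition} holds identically; one takes $q_1=0$ and both sides of \Cref{eq:alg-threshold-explicit} equal $\sqrt2\,\gamma_2$ directly.

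The only genuinely delicate point is recognizing that the unconstrained pointwise minimizer $t\mapsto\nu''(t)^{1/2}$ of the integrand is itself (up to the left-endpoint issue above) a feasible path $B_\mu$, so that the lower bound is actually attained; this hinges on $\nu''$ being monotone, which is precisely the content of the hypothesis that $H$ has no linear term. The case $\gamma_2=0$, where $\nu''$ vanishes at $t=0$ and $\nu''(\cdot)^{1/2}$ is therefore not strictly feasible, is the one place a limiting argument is required. Everything else is routine one-variable calculus, and of course the statement can alternatively simply be quoted from \cite{huang2021tight}.
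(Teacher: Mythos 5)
Your proposal is correct, but it does substantially more than the paper does: the paper treats \Cref{prop:alg-explicit} purely as a citation of \cite[Corollary 1.8]{huang2021tight} and offers no derivation of the explicit formula from the variational principle; the only computation the paper carries out is the verification, immediately after the proposition, that $\nu'(q_1)=q_1\nu''(q_1)$ forces $q_1=0$ when $\gamma_1=h=0$ (by reducing to $\sum_{i\ge 3}(2-i)\gamma_i^2 q_1^{i-1}=0$), which is essentially identical to your final paragraph. Your additional content — the pointwise AM--GM lower bound $\nu''(t)/B_\mu(t)+B_\mu(t)\ge 2\nu''(t)^{1/2}$ for any feasible non-decreasing positive path, plus the explicit witness $\mu(t)=\nu'''(t)/(2\nu''(t)^{3/2})$ realizing $B_\mu(t)=\nu''(t)^{1/2}$ — is a clean self-contained derivation of $\mathsf{ALG}_{Sp}=\int_0^1\nu''(q)^{1/2}\,dq$ in the regime the paper cares about, and your chosen $\mu$ is exactly the overlap density appearing in \Cref{prop:frsb-equivalent}. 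What the citation-only route buys the paper is coverage of the general statement (arbitrary $q_1$, external field); what your route buys is transparency and independence from the reference, at the cost of only establishing the formula in the $q_1=0$ case (which is all that is ever used) and of needing the regularization you flag when $\gamma_2=0$.

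One small correction: the monotonicity of $\nu''$ (hence of $\nu''(\cdot)^{1/2}$, which is what makes your witness path non-decreasing and thus realizable as some $B_\mu$) follows from the non-negativity of all the coefficients $\gamma_k^2$, not from the absence of a linear term. A linear term in $H$ contributes nothing to $\nu''$; its absence is what matters for the uniqueness computation $q_1=0$, since a $k=1$ term would add a strictly positive constant to $\nu'(q)-q\nu''(q)$ and permit a nonzero root.
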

When $h= 0$ and $\gamma_1 = 0$, the only way to satisfy~\Cref{eq:q-condition} is when $q_1 = 0$, which immediately implies that for the family of hamiltonians detailed in~\Cref{eq:spherical-ham},
\begin{equation}\label{eq:frsb-alg-threshold}
    \mathsf{ALG}_{Sp} = \int_0^1 \nu''(q)^{1/2}dq\, .
\end{equation}
For the solution to~\Cref{eq:q-condition} in the case that $\gamma_1 = 0$ and $h = 0$, observe that
\begingroup
\allowdisplaybreaks
\begin{align*}
    \nu'(q_1) =  q_\nu''(q_1) &\implies \sum_{i=2}^{d_H}i\cdot\gamma^2_iq^{i-1}_1 = q_1\left(\sum_{i=2}^{d_H}i(i-1)\cdot\gamma^2_i q^{i-2}_1\right) \\
    &\implies \sum_{i=2}^{d_H} i\cdot\gamma^2_iq_1^{i-1} = \sum_{i=2}^{d_H}i(i-1)\cdot \gamma^2_i q^{i-1}_1 \\
    &\implies \sum_{i=2}^{d_H}i(2-i)\cdot\gamma^2_i q^{i-1}_1 = 0  \\
    &\implies \sum_{i=3}^{i}(2-i)\gamma^2_i q_1^{i-1} = 0\, ,
\end{align*}
\endgroup
and since $(2-i) < 0$ for $i > 2$, the above term can \emph{only} be made $0$ at $q_1 = 0$. This immediately justifies the algorithmic threshold in~\Cref{eq:frsb-alg-threshold} and one notices that this threshold coincides with the ground state energy under fRSB in the Chen and Sen formula~(\Cref{lem:chen-sen}). \\

\subsection{High-dimensional Gaussian distributions}
\subsubsection{Moments of Gaussian distributions}
We state below two elementary facts about the moments of Gaussian distributions.
\begin{lemma}[Moments of $\cN(0,1)$]\label[Lemma]{lem:standard-gaussian-moments}
    Given a random variable $X \sim \cN(0,1)$, its moments are given as follows,
    \[
        \E[X^m] =\begin{cases}
        0 & \text{if }m\text{ is odd}\, , \\
        \frac{m!}{(m/2)!}\frac{1}{2^m} & \text{if }m\text{ is even}\, .
        \end{cases}
    \]
\end{lemma}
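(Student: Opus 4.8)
The plan is to handle the odd and even cases separately, using only elementary real analysis. For odd $m$, the asserted value $\E[X^m] = 0$ follows from symmetry: the density $\tfrac{1}{\sqrt{2\pi}}e^{-x^2/2}$ is even, so $x^m e^{-x^2/2}$ is an odd function, and it is absolutely integrable since $e^{-x^2/2}$ decays faster than any polynomial grows; hence $\int_{\R} x^m e^{-x^2/2}\,dx = 0$. The absolute-integrability check is what makes splitting the integral at the origin (and cancelling the two halves) legitimate.

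For even $m = 2k$ I would set up the standard reduction-of-order recurrence by integration by parts. Writing $\E[X^m] = \tfrac{1}{\sqrt{2\pi}}\int_{\R} x^{m-1}\cdot\big(x e^{-x^2/2}\big)\,dx$ and using $x e^{-x^2/2} = -\tfrac{d}{dx}e^{-x^2/2}$, integration by parts yields $\E[X^m] = (m-1)\,\E[X^{m-2}]$, the boundary term vanishing by the super-polynomial decay of the Gaussian. Combined with the base case $\E[X^0] = 1$ and unrolling over even indices, this gives $\E[X^{2k}] = (2k-1)(2k-3)\cdots 3\cdot 1 = (2k-1)!!$.

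The final step rewrites the double factorial: multiplying and dividing by $(2k)!! = 2^k k!$ gives $(2k-1)!! = \tfrac{(2k)!}{(2k)!!} = \tfrac{(2k)!}{2^k k!}$, i.e. $\E[X^m] = \tfrac{m!}{(m/2)!}\cdot\tfrac{1}{2^{m/2}}$ for even $m$. I should flag the point of contact with the displayed formula in the statement: the exponent of $2$ in the denominator comes out to $m/2$, not $m$. Equivalently, the formula as literally written, $\tfrac{m!}{(m/2)!}\,2^{-m}$, is the $m$-th moment of a mean-zero Gaussian of variance $\tfrac12$ (density $\propto e^{-x^2}$); so before the rest of the paper relies on the lemma, either the exponent should be corrected to $m/2$ or the normalization convention for $\cN(0,1)$ should be stated accordingly.

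There is essentially no obstacle here: the only items needing a moment's care are the finiteness of the absolute moments and the vanishing of the boundary term in the integration by parts, both immediate from Gaussian tail decay. As a cross-check I would also note the alternative one-line derivation via the moment generating function $\E[e^{tX}] = e^{t^2/2}$: expanding $e^{t^2/2} = \sum_{k\ge 0} \tfrac{t^{2k}}{2^k k!}$ and matching the coefficient of $t^m/m!$ reproduces the same closed form, with odd moments absent because only even powers of $t$ appear.
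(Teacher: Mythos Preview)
Your proposal is correct and follows exactly the approach the paper indicates: the paper does not give a detailed proof but simply states that ``the above lemma can be proved by a straightforward integration-by-parts argument,'' which is precisely what you do (symmetry for odd $m$, the recurrence $\E[X^m]=(m-1)\E[X^{m-2}]$ via integration by parts for even $m$). Your observation about the exponent of $2$ is also correct: the stated formula should read $\tfrac{m!}{(m/2)!}\,2^{-m/2}$ rather than $2^{-m}$, so this is a typo in the lemma as written.
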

The above lemma can be proved by a straightforward integration-by-parts argument.
\begin{lemma}[Isserlis' Lemma]\label[Lemma]{lem:isserlis}
    Given $(x_1,\dots,x_m) \sim \cN(0,\, \Sigma_m)$, the following holds,
    \begin{align}
        \E\left[\prod_{i=1}^m x_i\right] = \sum_{p \in P^2_m}\prod_{\{i, j\} \in p}\mathsf{Cov}[x_i, x_j]\, ,
    \end{align}
    where $P^2_m$ is the set of partitions of $\{1,\dots,m\}$ into pairs of indices.
\end{lemma}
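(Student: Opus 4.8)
The plan is to prove the identity by a single Gaussian integration by parts followed by induction on $m$, mirroring the argument used for \Cref{lem:standard-gaussian-moments}. The key ingredient is Stein's identity: if $x \sim \cN(0,\Sigma_m)$ and $f:\R^m\to\R$ is smooth with polynomial growth, then
\[
    \E\left[x_i\, f(x)\right] = \sum_{j=1}^m (\Sigma_m)_{ij}\,\E\left[\partial_j f(x)\right]\,.
\]
When $\Sigma_m$ is nonsingular this is immediate from the fact that the density $\phi(x)\propto \exp(-\tfrac12 x^\sT \Sigma_m^{-1}x)$ satisfies $\Sigma_m\nabla \phi = -x\,\phi$, so that $x_i\phi = -\sum_j (\Sigma_m)_{ij}\partial_j\phi$ and integrating by parts (no boundary term, since Gaussian tails dominate any polynomial) gives $\E[x_i f] = \sum_j (\Sigma_m)_{ij}\E[\partial_j f]$. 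For singular $\Sigma_m$ one replaces $\Sigma_m$ by $\Sigma_m + \epsilon I$ and lets $\epsilon \to 0$, using that for $f$ a monomial both sides are polynomials — hence continuous — in the entries of the covariance.

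Applying this with $f(x) = x_2 x_3 \cdots x_m$ gives
\[
    \E\left[\prod_{i=1}^m x_i\right] = \sum_{j=2}^m \mathsf{Cov}[x_1,x_j]\;\E\!\left[\prod_{i \in \{2,\dots,m\}\setminus\{j\}} x_i\right]\,.
\]
Now I would induct on $m$. The base cases are $m=0$ (the empty product is $1$, matching the single empty pairing) and $m=1$ ($\E[x_1]=0$, and there is no pairing of a singleton). For the inductive step, observe that a perfect matching $p$ of $\{1,\dots,m\}$ is specified by choosing the partner $j \in \{2,\dots,m\}$ of the element $1$ and then choosing a perfect matching $p'$ of $\{2,\dots,m\}\setminus\{j\}$, and under this bijection $\prod_{\{a,b\}\in p}\mathsf{Cov}[x_a,x_b] = \mathsf{Cov}[x_1,x_j]\prod_{\{a,b\}\in p'}\mathsf{Cov}[x_a,x_b]$. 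Substituting the inductive hypothesis for each inner expectation above therefore reproduces exactly $\sum_{p\in P^2_m}\prod_{\{a,b\}\in p}\mathsf{Cov}[x_a,x_b]$, completing the induction; in particular this recovers the vanishing of all odd moments, since $P^2_m = \emptyset$ when $m$ is odd.

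The only genuinely delicate point is the justification of Stein's identity in the degenerate (singular-covariance) case, which is handled by the $\Sigma_m + \epsilon I$ perturbation described above. An alternative route avoiding Stein's identity altogether: write $x_i = \langle u_i, g\rangle$ for $g\sim\cN(0,I_N)$ and vectors $u_i$ with $\langle u_i,u_j\rangle = (\Sigma_m)_{ij}$ (a square root of $\Sigma_m$), note that both sides of the claimed identity are \emph{symmetric multilinear} forms in $(u_1,\dots,u_m)$, and invoke the polarization identity to reduce to the diagonal case $u_1=\dots=u_m=u$. There the left-hand side equals $\E[\langle u,g\rangle^m] = \|u\|^m\,\E_{Z\sim\cN(0,1)}[Z^m]$ and the right-hand side equals $|P^2_m|\cdot\|u\|^m$, so the identity reduces to the combinatorial fact that the number of perfect matchings of $\{1,\dots,m\}$ is $\frac{m!}{(m/2)!\,2^{m/2}}$ for even $m$ and $0$ for odd $m$ — precisely the value of $\E_{Z\sim\cN(0,1)}[Z^m]$ supplied by \Cref{lem:standard-gaussian-moments}.
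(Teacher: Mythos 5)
Your proof is correct. The paper itself offers no proof of \Cref{lem:isserlis} — it is stated as a classical fact — so there is nothing to compare against; your Stein's-identity-plus-induction argument (and the polarization alternative) is a standard, complete derivation, and the handling of the singular-covariance case by perturbing to $\Sigma_m+\epsilon I$ and using polynomial dependence on the covariance entries is exactly right. One small caution: in your final step you cite \Cref{lem:standard-gaussian-moments} for the value $\E[Z^m]=\frac{m!}{(m/2)!\,2^{m/2}}$, which is the correct double-factorial $(m-1)!!$, but the paper's statement of that lemma actually reads $\frac{m!}{(m/2)!}\frac{1}{2^{m}}$ — a typo in the paper (off by $2^{m/2}$), so your formula is the right one to use even though it does not literally match the cited statement.
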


\subsubsection{Hypercontractivity of low-degree polynomials of Gaussians}
It is a standard result by now that low-degree polynomial functions of gaussians are ``stable'' (in the $\ell^2$-norm) to small perturbations in the input.
This critically relies on a standard hyper-contractive inequality that applies to smooth functions of gaussians with bounded-degree.

\begin{lemma}[Hypercontractivity of LDPs of Gaussians,~{\cite[Proposition 3.3]{gamarnik2020low}}]\label[Lemma]{lem:hypercontractivity-gaussians}
    Given a function $f: \R^n \to \R$ with $f \in L_2(\R^n, \mu^{\ot n})$ where $\mu \sim \cN(0,1)$ and $f$ is a multi-variate polynomial with $\mathsf{deg}(f) \le d$, the following statement holds,
    \begin{equation}\label{eq:hypercontractivity-gaussians}
        \Pr_{x \sim \mu^{\ot n}}\left[f(x)^2 \ge t\E_{x \sim \mu^{\ot n}}\left[f(x)^2\right]\right] \le \exp\left(-\frac{dt^{1/d}}{3e}\right)\, ,
    \end{equation}
    for any $t \ge (6e)^d$.
\end{lemma}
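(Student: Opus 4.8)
The plan is to obtain this as a routine consequence of the hypercontractivity of low-degree polynomials in Gaussian space, packaged as a single Markov bound which I then optimize over the moment order. Throughout, let $\norm{\cdot}_q$ denote the $L^q(\R^n, \mu^{\ot n})$-norm. The only place the hypothesis $\mathsf{deg}(f) \le d$ is used is the norm-comparison inequality $\norm{f}_q \le (q-1)^{d/2}\norm{f}_2$, valid for every $q \ge 2$; everything else is elementary.

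First I would establish the norm comparison. The cleanest route is via the Ornstein--Uhlenbeck semigroup $(P_\tau)_{\tau \ge 0}$ on $L^2(\R^n,\mu^{\ot n})$: Nelson's hypercontractivity theorem gives $\norm{P_\tau h}_q \le \norm{h}_2$ whenever $q \ge 2$ and $e^{-2\tau} \le 1/(q-1)$, and since $P_\tau$ multiplies the degree-$k$ component of the Hermite decomposition by $e^{-k\tau}$, applying this with $\tau = \tfrac12\log(q-1) \ge 0$ to $h := \sum_{k \le d} e^{k\tau} f_k$ (where $f = \sum_k f_k$ is the Hermite decomposition) yields $P_\tau h = f$ and $\norm{h}_2 \le e^{d\tau}\norm{f}_2 = (q-1)^{d/2}\norm{f}_2$, using $k \le d$ and $\tau \ge 0$. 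Next I would pass to the tail bound: for any $q \ge 2$, Markov's inequality applied to $|f|^q$ together with the norm comparison gives
\[
  \Pr_{x\sim\mu^{\ot n}}\!\left[ f(x)^2 \ge t\,\E_{x\sim\mu^{\ot n}} f(x)^2 \right]
  = \Pr_{x\sim\mu^{\ot n}}\!\left[ |f(x)|^q \ge t^{q/2}\norm{f}_2^q \right]
  \le \frac{\norm{f}_q^q}{t^{q/2}\norm{f}_2^q}
  \le \left(\frac{(q-1)^d}{t}\right)^{q/2}.
\]
Finally I would optimize over $q$: taking $q - 1 = t^{1/d}/e$ --- which is admissible, i.e.\ $q \ge 2$, precisely because $t \ge (6e)^d$ forces $q \ge 7$ --- makes $(q-1)^d/t = e^{-d}$, so the right-hand side collapses to $e^{-dq/2} = \exp\!\left(-\tfrac d2 - \tfrac{d\,t^{1/d}}{2e}\right) \le \exp\!\left(-\tfrac{d\,t^{1/d}}{2e}\right) \le \exp\!\left(-\tfrac{d\,t^{1/d}}{3e}\right)$, which is exactly the claimed bound; the constant $3e$ in place of $2e$ is just harmless slack.

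The only substantive ingredient is the hypercontractive inequality invoked in the first step. For a fully self-contained argument I would prove the two-point hypercontractive inequality, tensorize it, and pass to the Gaussian limit by the central limit theorem (or, alternatively, integrate the Gaussian log-Sobolev inequality along the semigroup); but since the statement is precisely \cite[Proposition 3.3]{gamarnik2020low}, it is enough to cite it. The rest is bookkeeping, the one point requiring care being the constant accounting that links the hypothesis $t \ge (6e)^d$ to the admissibility of the chosen moment order $q$, which the displayed computation makes explicit.
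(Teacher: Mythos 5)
Your proof is correct. The paper does not prove this lemma at all — it imports it verbatim as a citation to \cite[Proposition 3.3]{gamarnik2020low} — and your argument (the norm comparison $\norm{f}_q \le (q-1)^{d/2}\norm{f}_2$ from Nelson's hypercontractivity via the Ornstein--Uhlenbeck semigroup, followed by Markov on $|f|^q$ and optimization at $q-1 = t^{1/d}/e$, with $t \ge (6e)^d$ guaranteeing admissibility) is exactly the standard derivation underlying that cited result; your computation even yields the slightly stronger constant $2e$ in place of $3e$.
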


The lemma above is used in the SoS certification of the Schatten norms of the Hessian ($\nabla^2 H$)~(\Cref{lem:hessian-schatten-norm}) to obtain strong concentration statements when bounding the so-called ``noise'' term.

\subsection{Real polynomials: linear representations and smooth approximation}\label{sec:matrix-reps}

\subsubsection{Multi-index notation}

A \emph{multi-index} $\alpha$ over $\{0\}\cup [n]$ is a tuple $(\alpha_1, \dots, \alpha_n)$ of non-negative integers, representing a multi-set of indices, such that
\[ x^{\alpha} = \prod_{i\in[n]} x_i^{\alpha_i} \]
for $x \in \R^n$.

We have the following notation:
\[ \alpha! = \prod_{i \in [n]} \alpha_i! \]
\[ \alpha!! = \prod_{i \in [n]} \alpha_i!! \]
\[|\alpha| = \sum_{i \in [n]} \alpha_i\]
\[\norm{\alpha}_{\infty} = \max_{i \in [n]} \alpha_i\]
\[ \alpha + \beta = (\alpha_1+\beta_1, \dots, \alpha_n + \beta_n) \]
\[ \alpha + k = (\alpha_1+k, \dots, \alpha_n + k)
\,.\]

We often treat multi-indices as if they were multi-sets containing those indices and vice-versa.

\subsubsection{Polynomials represented as coefficient matrices}
\label{sec:matrix-representations}

In order to leverage the full power of matrix Holder's inequality on the polynomials that appear in our proofs, we represent those polynomials and the corresponding pseudo-moments as matrices.
Recall that actual moments are a special case of pseudo-moments, so all of the definitions here can be applied using the moments of an actual distribution.

\begin{definition}
\label[Definition]{def:mat-representations}
    A matrix $M$ is a \emph{matrix representation} of a homogeneous polynomial $p(x)$ if there are $a$ and $b$ such that
    \[p(x) \equiv (x^{\ot a})^{\sT}\,M\,x^{\ot b}\,.\]

    Given a pseudo-expectation operator $\pE$ over a vector-valued variable $x$, a matrix $V$ is a \emph{matrix representation} of its degree-$d$ pseudo-moments if
    \[\sum_{\{i_1,\dots,i_a,j_1,\dots,j_b\} = S} V_{(i_1,\dots,i_a),(j_1,\dots,j_b)} = \pE_x \prod_{s \in S}x_{s}\,,\]
    for all multisets $S$ of indices.
\end{definition}

These definitions also apply to polynomials and pseudo-moments of more than one variable, albeit with an additional requirement to keep track of which tensor modes belong to which variables.

\begin{definition}
\label[Definition]{def:mat-representations-multivar-poly}
    A matrix $M$ is a \emph{matrix representation} of a homogeneous multivariate polynomial $p(u_1, \dots, u_k)$ \emph{with shape} $(A,B)$ if
    \[p(u_1, \dots, u_k) \equiv \left(\bigotimes_{x \in A}x\right)^{\sT}\,M\,\left(\bigotimes_{x \in B}x\right)\,,\]
    where $A$ and $B$ are tuples of the variables $u_1, \dots, u_k$ so that the number of times a particular variable $u_i$ shows up in $A$ and $B$ is equal to the degree of $p$ in $u_i$.
    Note that the tensor products here are taken in order of the tuples.
    \end{definition}
    
\begin{definition}
\label[Definition]{def:mat-representations-multivar-moments}
    Given a pseudo-expectation operator $\pE$ over several vector-valued variables $u_1, \dots, u_k$ and a multiset $S$ of those variables, a matrix $V$ is a \emph{matrix representation} of its $S$-pseudo-moments \emph{with shape} $(A,B)$ where $A$ and $B$ are tuples of the variables $u_1, \dots, u_k$ if, for all sequences $\gamma_1, \dots, \gamma_k$ of multi-indices with $|\gamma_{m}|$ equal to the multiplicity of $u_{m}$ in $S$,
    \[\sum_{\substack{(i_1,\dots,i_a) \in \operatorname{ind}(A,\alpha_1, \dots, \alpha_k)\\(j_1,\dots,j_b)\in\operatorname{ind}(B,\beta_1,\dots,\beta_k)\\\alpha_m + \beta_m = \gamma_m\,\forall m\in[k]}} V_{(i_1,\dots,i_a),(j_1,\dots,j_b)} = \pE_{u_1, \dots, u_k} \prod_{m \in [k]} u_m^{\gamma_m}\,,\]
    where $a$ is the number of elements in $A$, $b$ is the number of elements in $B$, $\operatorname{ind}(A, \alpha_1, \dots, \alpha_k)$ is the set of all tuples $(i_1,\dots,i_a)$ so that $i_k$ is a valid index into the $k$th variable in $A$ and $\alpha_m$ counts the number of occurrences of each index associated with $u_m$ among $(i_1,\dots,i_a)$, and $S$ is the multiset of all elements of $A$ and $B$ with multiplicity.
\end{definition}

A \emph{re-representation} of a matrix is another matrix representation of the same polynomial or pseudo-moments.

\begin{definition}
\label[Definition]{def:mode-symmetric-matrix}
    A matrix representation $M$ of a polynomial or a set of pseudo-moments is \emph{mode-symmetric} if
    \[ M_{(i_1, \dots, i_a),(i_{a+1}, \dots, i_{a+b})} = M_{(i_{\sigma(1)}, \dots, i_{\sigma(a)}),(i_{\sigma(a+1)}, \dots, i_{\sigma(a+b)})} \]
    for every permutation $\sigma \in S_{a+b}$.
\end{definition}

\begin{definition}
\label[Definition]{def:mode-symmetric-moments}
    A matrix representation $M$ of a multivariate polynomial or a set of $S$-pseudo-moments with shape $(A,B)$ is \emph{variable-mode-symmetric} if
    \[ M_{(i_1, \dots, i_a),(i_{a+1}, \dots, i_{a+b})} = M_{(i_{\sigma(1)}, \dots, i_{\sigma(a)}),(i_{\sigma(a+1)}, \dots, i_{\sigma(a+b)})} \]
    for every permutation $\sigma \in S_{a+b}$ that fixes the concatenation $C$ of $A$ and $B$.
    To be more precise,
    \[ C := (A_1, \dots, A_a, B_1, \dots, B_b)\]
    and the permutation $\sigma$ is required to satisfy
    \[ (C_1, \dots, C_{a+b}) = (C_{\sigma(1)}, \dots, C_{\sigma(a+b)})\,. \]
\end{definition}

We can now specify the pseudo-expectation of a polynomial as a Hilbert-Schmidt inner product between two matrix representations.

\begin{proposition}
\ifnum\fastmode=0
\label[Proposition]{prop:representation-product}
\fi
\ifnum\fastmode=1
\label[Proposition]{prop:representation-product}
\fi
    If $M$ is a matrix representation of a multivariate polynomial $p(u_1, \dots, u_k)$ with shape $(A,B)$ and $V$ is a matrix representation of the $S$-pseudo-moments of $\pE$ also with shape $(A,B)$, and \emph{at least one} of $M$ or $V$ is variable-mode-symmetric, then
    \[ \iprod{M,V} = \pE_{u_1, \dots, u_k} p(u_1, \dots, u_k) \,. \]
\end{proposition}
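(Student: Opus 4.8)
The plan is to prove the identity by a direct expansion from \Cref{def:mat-representations-multivar-poly} and \Cref{def:mat-representations-multivar-moments}, after a preliminary symmetrization step that lets us assume $M$, rather than $V$, carries the variable-mode-symmetry. First I would observe that any tensor of the form $\bigotimes_{x \in A}x \ot \bigotimes_{x \in B}x$ is fixed by the action of the subgroup $G \le S_{a+b}$ of mode-permutations that stabilize the concatenation $C$ of $A$ and $B$, since such a permutation only reorders the scalar factors within a single variable block. Consequently the variable-mode-symmetrization $\bar M := \frac{1}{|G|}\sum_{\sigma \in G}\sigma \cdot M$ is again a matrix representation of $p$ with shape $(A,B)$, and since $G$-symmetrization is a self-adjoint idempotent on tensors, $\iprod{\bar M, V} = \iprod{M, \bar V}$, which equals $\iprod{M, V}$ whenever $V$ is already $G$-invariant. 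Thus it suffices to treat the case in which $M$ itself is variable-mode-symmetric.

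Second, I would group the index pairs by type. To each pair $(I,J)$, with $I = (i_1, \dots, i_a)$ labelling a row and $J = (j_1, \dots, j_b)$ a column, assign the tuple $\gamma(I,J) = (\gamma_1, \dots, \gamma_k)$ of multi-indices, where $\gamma_m$ records, for each coordinate, how many of the $u_m$-labelled modes among $I$ and $J$ together carry that coordinate. Two pairs have the same type exactly when one is carried to the other by some element of $G$, so variable-mode-symmetry of $M$ forces $M_{I,J}$ to depend only on $\gamma(I,J)$; write $M_\gamma$ for the common value on the type class $\cT_\gamma$. Expanding the quadratic form and collecting monomials, $p(u_1, \dots, u_k)$ equals $\sum_\gamma (\sum_{(I,J) \in \cT_\gamma} M_{I,J}) \prod_{m \in [k]} u_m^{\gamma_m}$, hence $\pE p = \sum_\gamma (\sum_{(I,J) \in \cT_\gamma} M_{I,J})\, \pE \prod_m u_m^{\gamma_m}$. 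Grouping the Hilbert--Schmidt sum the same way and using that $M$ is constant on each type class gives $\iprod{M,V} = \sum_\gamma M_\gamma \sum_{(I,J) \in \cT_\gamma} V_{I,J}$; and by \Cref{def:mat-representations-multivar-moments} the inner sum of the entries of $V$ over $\cT_\gamma$ is exactly the pseudo-moment $\pE \prod_m u_m^{\gamma_m}$, because the index set appearing in that definition --- running over all $\alpha_m + \beta_m = \gamma_m$ together with the matching $\operatorname{ind}$-sets --- is precisely $\cT_\gamma$. Matching the two expansions monomial by monomial then yields $\iprod{M,V} = \pE p$.

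The step I expect to be the main obstacle is making this monomial-by-monomial matching honest: one must check that the combinatorial multiplicity with which each monomial $\prod_m u_m^{\gamma_m}$ is produced when the quadratic form for $p$ is expanded agrees exactly with the multiplicity built into the summation set of \Cref{def:mat-representations-multivar-moments}, uniformly over all types $\gamma$ and all admissible shapes $(A,B)$ --- this is where the normalization conventions of \Cref{sec:matrix-representations} are actually used, and it is the real content of the argument. A secondary point, routine but requiring care, is that in the multivariate setting one must carry along throughout which tensor modes belong to which variable, so that $G$ (and hence the notion of type) is the stabilizer of the \emph{concatenation} of $A$ and $B$ and not the full symmetric group; this is exactly why ``at least one of $M$ or $V$ is variable-mode-symmetric'' is the correct hypothesis and why full mode-symmetry is neither available nor needed. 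Everything else --- the symmetrization reduction and the identification of $G$-orbits with type classes --- is routine once these conventions are pinned down.
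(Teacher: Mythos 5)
Your symmetrization reduction (averaging $M$ over the stabilizer $G$ of the concatenation of $A$ and $B$, and using self-adjointness of the averaging operator to pass the symmetry from $V$ to $M$) and your identification of type classes with $G$-orbits are both sound, and this is the natural route --- the paper in fact states the proposition without proof, so there is no competing argument to compare against. But the step you yourself flag as ``the real content of the argument'' is exactly where the proof, as written, does not close: the multiplicities do not match under the literal reading of \Cref{def:mat-representations-multivar-moments}. Expanding the quadratic form, the coefficient of $\prod_m u_m^{\gamma_m}$ in $p$ is $\sum_{(I,J)\in\cT_\gamma}M_{I,J}=|\cT_\gamma|\,M_\gamma$, so
\[ \pE\, p \;=\; \sum_\gamma |\cT_\gamma|\, M_\gamma\, \pE\prod_{m} u_m^{\gamma_m}\,, \]
whereas your own computation of the Hilbert--Schmidt pairing gives $\iprod{M,V}=\sum_\gamma M_\gamma\sum_{(I,J)\in\cT_\gamma}V_{I,J}=\sum_\gamma M_\gamma\,\pE\prod_m u_m^{\gamma_m}$ if one takes the definition at face value (sum of the $V$-entries over a type class equals the pseudo-moment). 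The two expansions disagree by a factor of $|\cT_\gamma|$ on every type class. In the simplest instance $p(x)=x^{\sT}Mx$ with $M$ symmetric, the literal definition forces $V_{ij}+V_{ji}=\pE x_ix_j$ for $i\ne j$, giving $\iprod{M,V}=\sum_i M_{ii}\pE x_i^2+\sum_{i<j}M_{ij}\pE x_ix_j$, which is not $\pE\, x^{\sT}Mx$.

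The resolution is that the normalization in \Cref{def:mat-representations-multivar-moments} must be read (or repaired) so that the \emph{variable-mode-symmetrization} of $V$ --- the average of $V$ over each type class, not the sum --- equals the canonical moment matrix whose entry at $(I,J)$ is the pseudo-moment of the corresponding monomial; equivalently, $\sum_{(I,J)\in\cT_\gamma}V_{I,J}=|\cT_\gamma|\,\pE\prod_m u_m^{\gamma_m}$. That convention is the one actually used elsewhere in the paper: in \Cref{sec:proof-overview} the requirement on $V$ is that its mode-symmetrization equal $\E\,\sigma^{\ot d/2}(\sigma^{\ot d/2})^{\sT}$, and in the proof of \Cref{thm:high-entropy-nuclear-norm-bound} the raw moment tensor reshaped into a matrix is itself treated as an admissible $M$. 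Under that reading your two expansions do match term by term and the proof is complete. So the architecture of your argument is correct, but the deferred multiplicity check is not a formality: carrying it out exposes a normalization mismatch that must be resolved before the final equality can be asserted.
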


\begin{definition}
    A \emph{reshaping} of a matrix $M: (U_1 \ot \dots \ot U_a) \to (U_{a+1} \ot \dots \ot U_{a+b})$ is another matrix $M': (V_1 \ot \dots \ot V_{a'}) \to (V_{a'+1} \ot \dots \ot V_{a'+b'})$ with $a+b = a'+b'$ so that there is a permutation $\pi: [a+b] \to [a'+b']$ so that $U_k = V_{\pi(k)}$ for all $k \in [a+b]$ and
    \[ M'_{(i_{1},\dots,i_{a'}),(i_{a'+1},\dots,i_{a'+b'})}
    = M_{(i_{\pi(1)},\dots,i_{\pi(a)}),(i_{\pi(a+1)},\dots,i_{\pi(a+b)})}
    \, \]
    for all $i_1 \in V_{1}, \dots, i_{a'+b'} \in V_{a'+b'}$.

    A matrix representation is \emph{squaremost} if there is no reshaping of it whose row dimension is closer to its column dimension.
\end{definition}

\begin{definition}[Canonical Coefficient Matrices of Polynomials]\label[Definition]{def:canonical-representation-polynomial}
    Given a homogeneous degree-$d$ polynomial $p \in \R[x]$ in a single vector-valued variable $x \in \R^n$ written in the monomial basis as
    \[
        p(x) = \sum_{\alpha, |\alpha| = d}C_{\alpha} x^{\alpha}\, ,
    \]
    the following matrix $M_p:
(\R^{n})^{\ot \lfloor d/2\rfloor} \to (\R^{n})^{\ot \lceil d/2\rceil}$ constitutes its canonical coefficient representation:
    \begin{align}\label{eq:coeff-matrix-polynomial}
        M_p(u, v) := \frac{|S(u) + S(v)|}{(S(u) + S(v))!}C_{S(u) + S(v)}\, ,
    \end{align}
    where $u \in [n]^{\lfloor d/2\rfloor}$ and $v \in [n]^{\lceil d/2\rceil}$ and $S(u)$ is the multi-index which represents the same collection of indices as the tuple $u$.

    This is the unique square-most mode-symmetric matrix representation of $p$.
\end{definition}

\subsubsection{Gaussian cumulants: symmetric invariant representations and partition lattices}
\label{sec:gaussians}
Due to the structure of the coefficient matrix representation of polynomials of finite degree over $\R$, we will invoke projectors into the symmetric subspace ($\vee^d \R^n$).
As it turns out, this is also the natural class of subspaces to restrict to when evaluating the eigendecomposition of moment matrices of the multi-variate normal distribution.

\begin{definition}[Symmetric Subspace $\vee^d \R^n$]\label[Definition]{def:sym-subspace}
    The symmetric subspace $\vee^d \R^n$ is a subspace of $(\R^n)^{\ot d}$ which is stabilized under the action of the linear representation of the symmetric group $S_d$ as,
    \begin{equation}
        \vee^d \R^n := \left\{v \in (\R^n)^{\ot d}\,|\, P_d(\pi)\cdot v = v\,, \forall \pi \in S_d\right\}\, ,
    \end{equation}
    where the linear representation $P_d(\pi)$ of $\pi \in S_d$ is given by,
    \begin{equation}
        P_d(\pi) = \sum_{i_1,\dots,i_d \in [n-1]} \left(e_{i_{\pi^{-1}(1)}}\ot \dots \ot e_{i_{\pi^{-1}(d)}}\right)(e_{i_1}\ot \dots \ot e_{i_d})^\sT\, .
    \end{equation}
\end{definition}

The symmetric projector, as defined below, simply ``averages'' over all permutations of the multi-index of degrees.

\begin{definition}[Orthogonal projector $\mathsf{sym}$ into $\vee^d \R^n$]\label[Definition]{def:sym-proj}
    The projector $\mathsf{sym}: (\R^n)^{\ot d} \to \vee^d \R^n$ is a linear map defined as follows,
    \begin{equation}
        \mathsf{sym} := \frac{1}{|S_d|}\sum_{\pi \in S_d} P_d(\pi)\, .
    \end{equation}
\end{definition}
Proving orthogonality relies on showing that $\mathsf{sym}^\sT\mathsf{sym} = \mathsf{sym}$.
A standard argument for this can be found in, for instance,~\cite[Proposition 1]{harrow2013church}.
One can use Isserlis' lemma~(\Cref{lem:isserlis}) in conjunction with elementary properties of~\Cref{def:sym-proj} to observe the following fact about moments of tensor powers of multi-variate random vectors,
\begin{proposition}[Isserlis's formula for tensor moments,~{\cite[Eq.~14]{harrow2013church}}]\label[Proposition]{prop:tensor-vector-normal-moments}
    Given a multi-variate normal vector $x\sim \cN\left(0,\Id_n\right)$, the following holds for every $d \ge 2$,
    \begin{equation}\label{eq:tensor-vector-normal-moments}
        \E_{x\sim\cN(0,\Id)}\left[x^{\ot d}\right] = \frac{1}{n^{d}}\sum_{M \in \cM_{2d}} \left(\sum_{I = (i_1,\dots,i_{2d})\,, I\in S(M)}\left(e_{i_1}\ot\dots\ot e_{i_d}\right)\left(e_{i_{d+1}}\ot\dots\ot e_{i_{2d}}\right)^\sT\right)\, ,
    \end{equation}
    where $\cM_{2d}$ denotes the set of all possible perfect matchings over $2d$ vertices, $M$ is a perfect matching in$\cM_{2d}$, and $S(M)$ denotes a choice of indices $i_1,\dots,i_{2d} \in [n]^{2d}$ where $i_{j} = i_k$ for every $(j,k) \in M$.
\end{proposition}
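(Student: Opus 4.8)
The plan is to reduce the identity to the scalar Isserlis lemma (\Cref{lem:isserlis}) applied entrywise, and then repackage the resulting sum over pair-partitions as the stated sum over perfect matchings and consistent index tuples, reshaping into matrix form at the end. I would read the left-hand side as the $n^{d}\times n^{d}$ matrix whose $((i_1,\dots,i_d),(i_{d+1},\dots,i_{2d}))$-entry is the joint moment $\E[x_{i_1}\cdots x_{i_{2d}}]$ of $2d$ coordinates of $x$ (so the displayed "$x^{\ot d}$" denotes the Gram-type tensor on $2d$ modes, reshaped with the first $d$ modes as the row index and the last $d$ as the column index); the displayed $n^{-d}$ prefactor then corresponds to taking the covariance of $x$ to be $\tfrac1n\Id_n$, while with covariance $\Id_n$ the prefactor would simply be $1$.

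First I would apply \Cref{lem:isserlis} to the jointly Gaussian scalars $x_{i_1},\dots,x_{i_{2d}}$:
\[ \E[x_{i_1}\cdots x_{i_{2d}}] \;=\; \sum_{p\in P^2_{2d}}\ \prod_{\{j,k\}\in p}\operatorname{Cov}(x_{i_j},x_{i_k})\,. \]
Since distinct coordinates are independent with variance $\tfrac1n$, each factor equals $\tfrac1n\,\1[i_j = i_k]$, so a pair-partition $p$ contributes $n^{-d}$ when the tuple $I=(i_1,\dots,i_{2d})$ satisfies $i_j=i_k$ for every $\{j,k\}\in p$, and $0$ otherwise. Identifying a pair-partition of $\{1,\dots,2d\}$ with a perfect matching $M\in\cM_{2d}$ and the constraint ``$i_j=i_k$ for all $\{j,k\}\in M$'' with the condition $I\in S(M)$ from the statement, this reads $\E[x_{i_1}\cdots x_{i_{2d}}] = n^{-d}\sum_{M\in\cM_{2d}}\1[I\in S(M)]$. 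Substituting back, $\E[x^{\ot d}] = \sum_{I}\E[x_{i_1}\cdots x_{i_{2d}}]\,(e_{i_1}\ot\cdots\ot e_{i_d})(e_{i_{d+1}}\ot\cdots\ot e_{i_{2d}})^\sT$, and interchanging the finite sums over $I$ and over $M$ produces precisely $n^{-d}\sum_{M\in\cM_{2d}}\big(\sum_{I\in S(M)}(e_{i_1}\ot\cdots\ot e_{i_d})(e_{i_{d+1}}\ot\cdots\ot e_{i_{2d}})^\sT\big)$, the claimed right-hand side. As a sanity check, setting all $2d$ indices equal recovers $\E[x_i^{2d}] = n^{-d}\,|\cM_{2d}| = n^{-d}(2d-1)!! = n^{-d}\tfrac{(2d)!}{2^{2d}d!}$, consistent with \Cref{lem:standard-gaussian-moments}.

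There is no substantive obstacle here; the only care required is bookkeeping. I would make sure the indexing conventions for $\cM_{2d}$ and $S(M)$ in the statement align with the pair-partitions $P^2_{2d}$ of \Cref{lem:isserlis}, keep the row/column reshaping ($(i_1,\dots,i_d)$ versus $(i_{d+1},\dots,i_{2d})$) consistent throughout, and confirm that the $n^{-d}$ prefactor arises from the $d$ covariance factors each equal to $1/n$. Conceptually the same identity follows from the moment--cumulant expansion of $x^{\ot 2d}$ (\Cref{def:cumulants}), since every cumulant of a Gaussian of order $\ge 3$ vanishes so that only the pairing (matching) terms survive; I would note this viewpoint for intuition but carry out the elementary Isserlis computation above as the written proof, as it is the shortest route and connects directly to the facts already recorded in \Cref{sec:matrix-reps}.
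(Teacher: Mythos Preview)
Your approach is correct and is the standard derivation via the entrywise Isserlis lemma. The paper does not actually give its own proof of this proposition; it simply states it as a known fact with a citation to \cite[Eq.~14]{harrow2013church}, so there is nothing substantive to compare against. Your flagged reading of the left-hand side as the $n^d\times n^d$ reshaping of the order-$2d$ moment tensor, and your observation that the $n^{-d}$ prefactor is consistent with covariance $\tfrac{1}{n}\Id_n$ rather than $\Id_n$ as written, are both accurate diagnoses of notational looseness in the stated proposition.
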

The representation above can be used to compute an explicit eigendecomposition of $\E_{x\sim\cN(0,\Id_n)}\left[(xx^\sT)^{\ot d}\right]$.

Extending the above idea further, the actual entropy constraints constrain the cumulants in \emph{any} $\delta$n-dimensional subspace to fall off akin to a specific sub-gaussian distribution. This necessitates the introduction of a cumulant tensor.

\begin{definition}[Cumulant Tensor $\kappa^d$ of $X_1,\dots,X_n$]
\label[Definition]{def:cumulants}
    Given a degree parameter $d > 0$ and a jointly distributed family of random vectors $X_1,\dots,X_n$, the $d$-th cumulant tensor $\kappa^d(X_1,\dots,X_n) \in \vee^d \R^n$ is described as,
    \begin{equation}\label{eq:cumulant-degree-d}
        \kappa^d(X_1,\dots,X_n)_{i_1,\dots,i_d} := \sum_{\pi \in \cP(i_1,\dots,i_d)}(|\pi|-1)!(-1)^{|\pi|-1}\prod_{B \in b(\pi)}\E\left[\prod_{i \in B}X_i\right]\, ,
    \end{equation}
    where $\cP(i_1,\dots,i_d)$ denotes the lattice of partitions of the set $\{i_1,\dots,i_d\}$, $\pi$ denotes a particular partition from the lattice, $b(\pi)$ denotes the collection of disjoint subsets of $\pi$, and $|\pi|$ denotes the size of the partition.
\end{definition}
We will be interested in the regime where $d = O_\eps(1)$ and, therefore, the number of summands for each entry in the degree-$d$ cumulant matrix will be $O(d^d) = O_\eps(1)$. The number of entries in the tensor will be $O(n^d) = n^{O(1/\eps)} = \mathsf{poly}(n)$. We would like the cumulants (of large but constant degree) of the conditional pseudo-distributions in the program to match those of a multi-variate gaussian in $\delta$n-dimensions (scaled appropriately). The cumulants of a multi-variate normal distribution $\cN(\mathbf{0}, \Id_n)$ beyond the second degree are zero.

\begin{fact}[Cumulants of $\cN(\mathbf{0}, \Id_n)$]\label[Fact]{fact:gaussian-cums}
    The first and second cumulants of $\cN(\mathbf{0}, \Id_n)$ are,
    \begin{align}\label{eqs:first-second-cum-multi-normal}
        &\kappa^1(\cN(\mathbf{0}, \Id_n)) = \mathbf{0}\, ,\\
        &\kappa^2(\cN(\mathbf{0}, \Id_n)) = \Id_n\, ,
    \end{align}
    and the higher cumulants are zero.
\end{fact}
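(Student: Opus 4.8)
The plan is to verify $\kappa^1$ and $\kappa^2$ directly from the defining M\"obius-type formula~\eqref{eq:cumulant-degree-d}, and to dispatch the higher cumulants by exhibiting an explicit candidate family and invoking the uniqueness of M\"obius inversion over the partition lattice together with Isserlis's lemma~(\Cref{lem:isserlis}).

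First, for $\kappa^1$: the partition lattice $\cP(\{i_1\})$ of a singleton has a unique element, so~\eqref{eq:cumulant-degree-d} collapses to $\kappa^1(X)_{i_1} = \E[X_{i_1}]$, which is $0$ for $X \sim \cN(\mathbf 0, \Id_n)$. For $\kappa^2$: $\cP(\{i_1, i_2\})$ consists of the one-block partition $\{\{i_1, i_2\}\}$, with $|\pi| = 1$ contributing $+\,\E[X_{i_1}X_{i_2}]$, and the two-block partition $\{\{i_1\},\{i_2\}\}$, with $|\pi| = 2$ contributing $(2-1)!\,(-1)^{1}\,\E[X_{i_1}]\E[X_{i_2}] = -\,\E[X_{i_1}]\E[X_{i_2}]$, so $\kappa^2(X)_{i_1,i_2} = \E[X_{i_1}X_{i_2}] - \E[X_{i_1}]\E[X_{i_2}] = \delta_{i_1,i_2}$, i.e.\ $\kappa^2(\cN(\mathbf 0,\Id_n)) = \Id_n$.

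For $d \ge 3$, recall that~\eqref{eq:cumulant-degree-d} is exactly the M\"obius inversion over the partition lattice of the forward moment--cumulant relation
\[
\E\Bigl[\prod_{i \in S} X_i\Bigr] \;=\; \sum_{\pi \in \cP(S)}\;\prod_{B \in \pi} \kappa^{|B|}\bigl(X_i : i \in B\bigr)\,,
\]
and that $\{\kappa^d\}_{d \ge 1}$ is the \emph{unique} family of symmetric-tensor-valued functions of the blocks satisfying this relation for every finite index multiset $S$. It therefore suffices to check that the candidate family $\tilde\kappa^1 := \mathbf 0$, $\tilde\kappa^2 := \Id_n$, $\tilde\kappa^d := 0$ for $d \ge 3$, satisfies the forward relation for $X \sim \cN(\mathbf 0, \Id_n)$; uniqueness then forces $\kappa^d = \tilde\kappa^d$. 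Substituting the candidate into the right-hand side, any $\pi$ containing a block of size $1$ or of size $\ge 3$ contributes a factor $\tilde\kappa^1 = \mathbf 0$ or $\tilde\kappa^{\ge 3} = 0$ and hence vanishes, so the sum collapses to a sum over pair partitions $\sigma$ of $S$:
\[
\sum_{\sigma \in \cP_2(S)}\;\prod_{\{a,b\} \in \sigma}(\Id_n)_{i_a, i_b} \;=\; \sum_{\sigma \in \cP_2(S)}\;\prod_{\{a,b\} \in \sigma}\mathsf{Cov}[X_{i_a}, X_{i_b}]\,,
\]
which is precisely $\E[\prod_{i \in S} X_i]$ by Isserlis's lemma~(\Cref{lem:isserlis}) applied to the Gaussian vector $(X_{i_1}, \dots, X_{i_d})$ with covariance $\Id$ (the sum over pair partitions is empty, hence $0$, when $|S|$ is odd, matching the vanishing of odd-order moments of a centered Gaussian). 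Thus $\kappa^d(\cN(\mathbf 0, \Id_n)) = 0$ for all $d \ge 3$.

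There is no substantive obstacle here; the statement is classical. The only point meriting care is the appeal to uniqueness of M\"obius inversion over $\cP(S)$, which one can alternatively sidestep: the cumulants defined by~\eqref{eq:cumulant-degree-d} coincide with the Taylor coefficients of the cumulant generating function $t \mapsto \log \E\, e^{\langle t, X\rangle}$, and for $X \sim \cN(\mathbf 0, \Id_n)$ this equals $\tfrac12\|t\|_2^2$, a degree-$2$ polynomial whose first-order coefficients are $\mathbf 0$, second-order coefficients are $\Id_n$, and higher-order coefficients vanish identically.
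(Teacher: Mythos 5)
Your proof is correct. The paper states this as a classical fact without providing any proof, so there is no in-paper argument to compare against; your verification of $\kappa^1$ and $\kappa^2$ directly from the partition-lattice formula~\eqref{eq:cumulant-degree-d}, together with the dispatch of the higher cumulants via uniqueness of M\"obius inversion combined with Isserlis's lemma (or, equivalently, via the quadratic cumulant generating function $\tfrac12\|t\|_2^2$), is the standard and complete justification.
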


\begin{definition}
    Let $x^{(1)}$, \dots, $x^{(m)}$ be a sequence of vector-valued random variables.
    The $m$th \emph{joint cumulant tensor} $\kappa_m(x^{(1)}, \dots, x^{(m)})$ of these variables is the tensor whose entries are given by
    \[ \left(\kappa_m(x^{(1)}, \dots, x^{(m)})\right)_{i_1, \dots, i_m} :=
    \kappa_m(x^{(1)}_{i_1}, \dots, x^{(m)}_{i_m})\,. \]
    Suppose that $v_1, \dots, v_k$ is a sequence of vectors and that $\alpha: [m] \to [k]$ is a function so that $x^{(j)} = v_{\alpha(j)}$ for all $j \in [m]$.
    Then we may also denote 
    \[ \kappa_m(v^{\ot \alpha}) := \kappa_m(x^{(1)}, \dots, x^{(m)})\,. \]
\end{definition}

\subsubsection{Weierstrass approximation and Bernstein polynomials}
We briefly introduce the Bernstein polynomial and state a version of the Weierstrass Approximation theorem~\cite{de1959stone} using the Bernstein polynomials~\cite{bernstein_1912}.
Specifically, we state the precise quantification of the desired degree of the Bernstein polynomial required to approximate any Lipschitz continuous function $f$ on some compact interval $[a, b]$.

\begin{definition}[Lipschitz Continuity]
    A function $f: [a,b] \to \R$ is $C$-Lipschitz continuous if,
    \begin{align}\label{eq:C-lipschitz}
        \frac{|f(x) - f(y)|}{|x-y|} \leq C\, ,\, \forall x,\,y \in [a,b]\, .
    \end{align}
\end{definition}

\begin{definition}[Bernstein Polynomials]\label[Definition]{def:scalar-bernstein}
    Given a continuous function $f \in C([a,b], \R)$, its $k$-th Bernstein Polynomial $B_k$ is defined as,
    \begin{align}\label{eq:Bernstein-polynomial}
        B_k = \sum_{i=0}^k f\left(a +\frac{i(b-a)}{k}\right)\binom{k}{i}\frac{(x-a)^i(b-x)^{k-i}}{(b-a)^k}\, .
    \end{align}
\end{definition}

Bernstein polynomials can be used to approximate any $C$-Lipschitz continuous function with maximum error at most $\epsilon$

\begin{theorem}[Weierstrass approximation Theorem via Bernstein Polynomials]\label{thm:bernstein-approx}
    Given any continuous function $f: [a,b] \to \R$ that is $C$-lipschitz, it can be approximated with maximum error $\epsilon$ using a Bernstein polynomial $B_k(x)$ with degree $k = \frac{C(b-a)^3}{2\epsilon^3}$.
\end{theorem}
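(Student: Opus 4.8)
The plan is to establish the standard quantitative form of the Bernstein bound, $\normi{B_k - f} \le \frac{C(b-a)}{2\sqrt{k}}$, and then solve for $k$. First I would reduce to the unit interval: writing $y = (x-a)/(b-a)\in[0,1]$ and $g(y) := f(a+(b-a)y)$, the function $g$ is $C(b-a)$-Lipschitz on $[0,1]$, and substituting into \Cref{eq:Bernstein-polynomial} and using the identity $(x-a)^i(b-x)^{k-i}/(b-a)^k = y^i(1-y)^{k-i}$ shows that $B_k(x)$ equals $\sum_{i=0}^k g(i/k)\binom{k}{i}y^i(1-y)^{k-i}$, the classical Bernstein polynomial of $g$ evaluated at $y$. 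So it suffices to bound this last expression minus $g(y)$, uniformly over $y\in[0,1]$.

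Next I would exploit the probabilistic reading: if $S \sim \mathrm{Bin}(k,y)$ then the Bernstein polynomial of $g$ at $y$ is $\E\, g(S/k)$, while $g(y) = \E\, g(y)$, so by the triangle inequality and Lipschitzness
\[ \Big|\E\, g(S/k) - g(y)\Big| \le \E\,\big|g(S/k) - g(y)\big| \le C(b-a)\,\E\,\big|S/k - y\big|\,. \]
Jensen's inequality (equivalently Cauchy--Schwarz against the constant $1$) then gives $\E\,\big|S/k-y\big| \le \big(\E[(S/k-y)^2]\big)^{1/2} = \big(\mathrm{Var}(S)/k^2\big)^{1/2} = \sqrt{y(1-y)/k} \le \tfrac{1}{2\sqrt k}$, using $y(1-y)\le\tfrac14$. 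Combining, $\normi{B_k - f} \le \frac{C(b-a)}{2\sqrt k}$. (If one wants to stay within elementary algebra, the same estimate follows from the binomial identities $\sum_i\binom{k}{i}y^i(1-y)^{k-i}=1$ and $\sum_i (i-ky)^2\binom{k}{i}y^i(1-y)^{k-i}=ky(1-y)$ together with Cauchy--Schwarz over the summation index; a cruder variant that splits the sum at $|i/k-y|\le\delta$ versus $>\delta$ and applies Chebyshev to the far tail gives $\normi{B_k-f}\le C(b-a)\delta + \tfrac{C(b-a)}{4k\delta^2}$, and it is this route, optimized over $\delta$, that naturally produces a cubic-in-$1/\epsilon$ degree bound of the shape appearing in the statement.)

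Finally, to guarantee $\normi{B_k-f}\le\epsilon$ it suffices that $\frac{C(b-a)}{2\sqrt k}\le\epsilon$, i.e. $k\ge \frac{C^2(b-a)^2}{4\epsilon^2}$; the value $k = \frac{C(b-a)^3}{2\epsilon^3}$ stated in the theorem is a convenient sufficient choice in the regime in which the result is used later (small $\epsilon$ relative to $(b-a)/C$). I do not expect a real obstacle here, as this is a classical fact; the only points that need a little care are tracking how the affine rescaling inflates the Lipschitz modulus from $C$ to $C(b-a)$, and being deliberately generous in the last step so as to land on the (non-tight) degree recorded in the statement rather than the sharper $\Theta\!\big(C^2(b-a)^2/\epsilon^2\big)$ one that the $1/\sqrt k$ argument actually yields.
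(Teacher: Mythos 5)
Your proof of the underlying approximation fact is correct and is the standard argument; the paper itself does not prove \Cref{thm:bernstein-approx} at all (it is invoked as a classical fact, with pointers to the literature around it), so there is no in-paper proof to compare against. The rescaling to $[0,1]$, the identification of $B_k$ with $\E\,g(S/k)$ for $S\sim\mathrm{Bin}(k,y)$, and the Jensen/variance step yielding $\normi{B_k-f}\le \tfrac{C(b-a)}{2\sqrt{k}}$ are all right, and this is the cleanest route to a quantitative Weierstrass bound.

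The one point worth making explicit is that you are correct to be suspicious of the stated degree, and your hedge should really be promoted to a flag: $k=\tfrac{C(b-a)^3}{2\epsilon^3}$ does \emph{not} follow from $\tfrac{C(b-a)}{2\sqrt{k}}\le\epsilon$ in general. The sharp requirement is $k\ge \tfrac{C^2(b-a)^2}{4\epsilon^2}$, and the stated $k$ dominates this only when $\epsilon\le 2(b-a)/C$; for example with $C=100$, $b-a=1$, $\epsilon=0.1$ the stated degree is $5\times 10^4$ while $2.5\times 10^5$ is needed. Nor does the Chebyshev-splitting route recover the stated constant: optimizing $C(b-a)\delta+\tfrac{C(b-a)}{4k\delta^2}$ over $\delta$ gives $k=\Theta\bigl(C^3(b-a)^3/\epsilon^3\bigr)$, which matches the $\epsilon^{-3}$ shape but carries $C^3$, not $C$. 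So the discrepancy lies in the theorem statement itself (whose degree formula appears to be an imprecise paraphrase of the classical bounds), not in your argument; the honest conclusion is that $k=\lceil C^2(b-a)^2/(4\epsilon^2)\rceil$ always suffices, and the paper's stated $k$ suffices only in the small-$\epsilon$ regime you identify, which is the regime in which the paper actually applies it (e.g.\ in \Cref{prop:matrix-bernstein-linear-ramp} and \Cref{cor:large-hs-correlation-bernstein-hessian}).
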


The ability of a low-degree (univariate) polynomial to approximate the step-function (a continuous version of which we approximate in~\Cref{cor:large-hs-correlation-bernstein-hessian}) is well established by, for instance,~\cite[Theorem 4.4 and Lemma 4.5]{diakonikolas2010bounded}.
Furthermore, this low-degree polynomial approximation permits a low-degree SoS proof, provided the inputs are over some compact interval $[a,b]$ where $a$ and $b$ are chosen independent of $n$~\cite[Theorem 3.2]{bafna2021playing}.
It is permissible to use the Bernstein polynomials as they are the \emph{unique} minimizers of the maximum error, and therefore, the low-degree SoS proof can be made to apply to them.
We require a version of~\Cref{thm:bernstein-approx} that is applied to \emph{matrix} (univariate) polynomials with respect to their Frobenius norm, and such a theorem can be proved to hold for matrices in a compact subspace of the underlying manifold of matrices on the Loewner order~(\Cref{prop:matrix-bernstein-linear-ramp}).

\subsection{Fourier analysis of Gaussian \texorpdfstring{$L^2$}{L2}-spaces}\label{sec:harmonic-analysis}
The design of the cumulant constraints (\Cref{eqs:the-program-intro-cum}) in the relaxation critically uses the insight that such a constraint implies an orthogonal (Fourier) decomposition (in the low-degree moments) for a current iterate as a function of the prior iterate. As the cumulant constraints are consistent with the low-degree cumulants of a gaussian, they allow the use of the Hermite polynomials as the Fourier basis. This decomposition, in conjunction with gaussian cumulant combinatorics and matrix reshaping arguments, permits a sum-of-squares proof to control the nuclear norm of the joint moments of the HES distribution. Introduced below are key definitions and properties of Hermite polynomials, including combinatorial formulae for algebraic manipulation of the Hermite polynomials.

\subsubsection{Univariate Hermite polynomials}
The standard gaussian measure on $(\R, \cB(\R))$ will be denoted as $dg = e^{-x^2/2}dx$ leading to a probability space $(\R, \cB(\R), dg)$.

\begin{definition}[Univariate Hermite polynomials]
    The univariate hermite polynomials are defined in the following equivalent ways:
    \begin{enumerate}
        \item \textbf{Appell Sequence} \begin{align*}
            &\He_0(x) = 1\, , \\
            &\partial_x \He_d(x) = d \He_{d-1}(x)\, .
            \end{align*}
        \item \textbf{Rodrigues' formula} \begin{align*}
            \He_d(x) e^{-x^2/2} = (-1)^d \partial^d_x e^{-x^2}\, .
        \end{align*}
        \item \textbf{Generating Function} \begin{align*}
            \sum_{d=0}^\infty \frac{a^d}{d!}\He_d(x) = e^{ax - a^2/2}\, .
        \end{align*}
    \end{enumerate}
\end{definition}
Over the set of all measurable functions of $X \sim \cN(0,1)$, the vector space denoted $L^2(X)$ denotes those that are square-integrable. With the choice of inner-product,
\[
    \langle f, g \rangle = \int_{-\infty}^\infty f(x)g(x) dg(x) \le_{\text{Cauchy-Schwarz}} \norm{f}_2\norm{g}_2\, ,
\]
under the gaussian measure, this space becomes a Hilbert space. It is not hard to check that,
\[
    \langle \He_d(x), \He_e(x) \rangle = d!\, \delta_{de}\, ,
\]
which implies that Hermites form an orthogonal basis for the set of all functions that are square-integrable. Since the Hermites form an orthogonal basis, any function $f \in L^2(dg)$ can be expressed as,
\[
    f(x) = \sum_{d=0}^\infty \frac{f_d}{d!} \He_d(x)\, ,
\]
where,
\[
    f_d = \langle f, \He_d \rangle = \E_{x \sim \cN(0,1)}[f(x)\He_d(x)] = \mathsf{Cov}(f, \He_d)\, .
\]
Since the vector space is complete, one can always increase the degree of the Hermites till one actually converges to the desired function (in the $L^2$-sense),
\[
    \lim_{d \to \infty} \left\langle f - \sum_{k=0}^d \frac{f_k}{k!} \He_k, f - \sum_{k=0}^d \frac{f_k}{k!} \He_k(x) \right\rangle = \lim_{d \to \infty}\int_{-\infty}^\infty \left(f(x) - \sum_{k=0}^d\frac{f_k}{k!}\He_k(x)\right)^2 dg(x)= 0\, .
\]

\subsubsection{Multivariate Hermite polynomials}
\label{sec:prelims-multivar-hermites}
Given a \emph{centered} gaussian process $\cX = \{ X_k \mid 1 \le k \le n\}$ with some covariance structure, the Hermite basis is given by the multivariate Hermite polynomials. A straightforward generalization of the generating function in the univariate case using the covariance matrix defined as,
\[
    \left\{C_{ij} = \E[X_iX_j]\right\}_{i,j=1}^n\, ,
\]
suffices to obtain the required definition.
\begin{definition}[Multivariate Hermite Polynomials~{\cite[Section 4.2]{terdik2021multivariate}}]
    The generating function for the multivariate Hermite polynomials of the space $(L^2(\cX), \cB(\R^n), d\vec{g})$ is defined as,
    \[
        \Psi(\cX, a_1,\dots,a_n) = e^{\langle \vec{a}, \cX \rangle - \frac{1}{2}\langle \cX, C \cX \rangle}\, ,
    \]
    where we are abusing notation so that $\cX$ is interpreted as a vector. The multivariate Hermite polynomials of $\cX$ are then,
    \begin{align}\label{eq:mv-hermite}
        &\He_{m_1,\dots,m_n}(X_1,\dots,X_n) = \He_m(Y_1,\dots,Y_m) \vert_{(Y_1,\dots,Y_{m_1}) \to X_1,\dots,(Y_{m_{n-1} + 1},\dots,Y_{m_n}) \to X_n} \\ \nonumber
        &= \partial_{a_1}\dots\partial_{a_m} \Psi(Y_1,\dots,Y_m,a_1,\dots,a_m)\, ,
    \end{align}
    where $m = m_1 + \dots + m_n$ and $Y$ is a multivariate Gaussian independent of $X$ with covariances $\mathsf{Cov}(Y_i,Y_j) = \mathsf{Cov}(X_{\alpha(i)}, X_{\alpha(j)})$ with $\alpha: [m] \to [n]$ mapping the indices as shown in~\autoref{eq:mv-hermite}.

    We may sometimes write $\He_{\beta}(X)$ with $\beta$ a multi-index over the coordinates of $X$ to mean $\He_{\beta_1, \dots, \beta_n}(X_1, \dots, X_n)$.
    We may also write $\He_{\beta}(X \mid C)$ to make explicit the covariance matrix that generates the polynomial.
\end{definition}

Listed below are some properties of multivariate Hermite polynomials. As the above mapping trick shows, we need not worry about non-distinct indices as we can do a simple substitution after taking the derivatives. Therefore, we only define the multivariate Hermite polynomials in the regime that the number of variables is equivalent to the degree.

\begin{proposition}[Properties of multivariate Hermite polynomials~{\cite[Section 4.2]{terdik2021multivariate}}]
    The multivariate Hermite polynomials $\{\He_{m_1,\dots,m_n}(X_1,\dots,X_n)\}_{m_1,\dots,m_n}$ satisfy the following: 
    \begin{enumerate}
        \item \textbf{Appell Sequence} \begin{align*}
            &\He_0(X_1,\dots,X_n) = 1\, ,\\
            &\partial_{X_k} \He_n(X_1,\dots,X_n) = \He_{n-1}(X_1,\dots,X_{k-1},X_{k+1},\dots,X_n)\, .
        \end{align*}
        \item \textbf{Recurrence Formula}
            \begin{align*}
                &\He_0(X_1,\dots,X_n) = 1\, , \He_1(X_i) = X_i\, \forall i \in [n]\, , \\
                &\He_n(X_1,\dots,X_n) = X_n \He_{n-1}(X_1,\dots,X_{n-1}) - \sum_{j=1}^{n-1}C_{jn}\He_{n-2}(X_1,\dots,X_{j-1},X_{j+1},\dots,X_{n-1})\, .\
            \end{align*}
            This formula makes it clear that the coefficients of the Hermites are themselves polynomials in the covariance matrix $C$, enabling the use of Hermite polynomials over indeterminate covariances in sum-of-square proofs.
        \item \textbf{Conditional Expectation}
        Let $Y$ denote a multivariate Gaussian with the same covariance as $X$ but independent of it. Then
            \begin{align*}
                \He_n(X_1,\dots,X_n) = \E\left[\prod_{j=1}^n (x_j + \iota Y_j)\right] \vert_{x_1 = X_1,\dots,x_n = X_n}\, .
            \end{align*}
        \item \textbf{Independent Factorization}
             \begin{align*}
                &\text{If } X_1,\dots, X_k \perp X_{k+1},\dots,X_n \text{, then,} \\
                &\He_n(X_1,\dots,X_n) = \He_k(X_1,\dots,X_k) \He_{n-k}(X_{k+1},\dots,X_n)\, .
            \end{align*}
        \item \textbf{Invariance under $S_n$}\\
        If $C = c\,I$ for some $c \in \R_+$, then
            \begin{align*}
                \forall \pi \in S_n\,,\, \He_n(X_1,\dots,X_n) = \He_n(X_{\pi(1)},\dots,X_{\pi(n)})\, .
            \end{align*}
        \item \textbf{Multilinearity}
            \begin{align*}
                \forall a, b \in \R,\,\He_{n+1}(X_1,\dots,X_n, aY + bZ) = a\He_{n+1}(X_1,\dots,X_n,Y) + b\He_{n+1}(X_1,\dots,X_n,Z)\, .
            \end{align*}
    \end{enumerate}
\end{proposition}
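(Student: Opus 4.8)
The plan is to derive every item directly from the exponential generating function $\Psi(X,a) = e^{\langle a, X\rangle - \frac12\langle a, C a\rangle}$ and its defining expansion $\Psi(X,a) = \sum_{\beta} \frac{a^\beta}{\beta!}\He_\beta(X)$ over multi-indices $\beta$ on the coordinate slots, so that $\He_\beta(X) = \partial_a^\beta\Psi(X,a)\big|_{a=0}$ and in particular $\He_n(X_1,\dots,X_n)$ is the coefficient attached to the all-ones multi-index. Each property then follows by applying a single first-order differential operator to $\Psi$, re-expanding the result as a power series in $a$, and matching the coefficient of $a_1\cdots a_n$; the only bookkeeping is translating between the multi-index picture and the ``one distinct variable per slot'' picture of the statement.

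For the \textbf{Appell} relation, differentiating in the formal variable $X_k$ gives $\partial_{X_k}\Psi = a_k\Psi$; matching coefficients after a shift by $e_k$ gives $\partial_{X_k}\He_\gamma(X) = \gamma_k\He_{\gamma - e_k}(X)$, and specializing $\gamma$ to the all-ones index (so $\gamma_k = 1$ and $\gamma - e_k$ drops slot $k$) gives the claim, with $\He_0 = \Psi(X,0) = 1$. For the \textbf{Recurrence}, differentiating in $a_n$ gives $\partial_{a_n}\Psi = \big(X_n - \sum_j C_{jn}a_j\big)\Psi$, and matching the coefficient of $\frac{a^\gamma}{\gamma!}$ yields $\He_{\gamma + e_n}(X) = X_n\He_\gamma(X) - \sum_j C_{jn}\gamma_j\He_{\gamma - e_j}(X)$; taking $\gamma$ all-ones over slots $1,\dots,n-1$ makes $\gamma_n = 0$, so the diagonal term $C_{nn}$ drops and one reads off the stated three-term recurrence. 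This vanishing of the last-slot diagonal covariance is the structural fact behind multilinearity below.

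For \textbf{Conditional Expectation}, let $Y$ be an independent copy of $X$, so that $\E_Y e^{\iota\langle a, Y\rangle} = e^{-\frac12\langle a, Ca\rangle}$ and hence $\E_Y e^{\langle a, X + \iota Y\rangle} = e^{\langle a, X\rangle}e^{-\frac12\langle a, Ca\rangle} = \Psi(X,a)$; factoring $e^{\langle a, X+\iota Y\rangle} = \prod_k e^{a_k(X_k + \iota Y_k)}$ and expanding in $a$ identifies $\He_\beta(X) = \E_Y\prod_k (X_k + \iota Y_k)^{\beta_k}$, and the all-ones index gives the identity. \textbf{Independent Factorization} is immediate: independence of the two blocks makes $C$ block diagonal, so $\Psi$ factors into the two block generating functions, and matching the coefficient of a monomial split between the two groups of $a$-variables factors the Hermite polynomial. \textbf{Invariance under $S_n$} with $C = cI$ follows because $\langle a, X\rangle - \frac{c}{2}\sum_k a_k^2$ is invariant under simultaneously permuting the $X_k$ and $a_k$, so permuting the $X_k$ alone acts on $\Psi$ the same way as permuting the $a_k$ alone, and the operator $\partial_{a_1}\cdots\partial_{a_n}\big|_{a=0}$ is symmetric in the $a_k$.

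Finally, \textbf{Multilinearity} in the last slot is cleanest from the conditional-expectation representation: taking $(\widetilde{U},\widetilde{V})$ an independent copy of $(U,V)$ jointly with the copy $\widetilde{X}$ of $X$, the variable $a\widetilde{U}+b\widetilde{V}$ serves as the independent copy of $aU+bV$, so
\[
\He_{n+1}(X_1,\dots,X_n,\,aU+bV) = \E\Big[\prod_{j=1}^n(X_j + \iota\widetilde{X}_j)\cdot\big(a(U+\iota\widetilde{U}) + b(V + \iota\widetilde{V})\big)\Big],
\]
and linearity of expectation splits the right-hand side into $a\He_{n+1}(\dots,U) + b\He_{n+1}(\dots,V)$. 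The only subtlety is that this relies on the last-slot diagonal covariance not entering $\He_{n+1}$ (item 2), so that replacing that slot by $aU+bV$ stays linear and does not pick up the quadratic $\mathrm{Var}(aU+bV)$ term. I expect this convention-translation — keeping the multi-index/distinct-slot correspondence and the role of the diagonal covariance straight across all six items — to be the only genuinely error-prone part, the rest being routine generating-function manipulation.
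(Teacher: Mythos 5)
Your proposal is correct and is exactly the generating-function derivation the paper gestures at: the paper omits the proof, saying only that the properties "can be derived relatively straightforwardly from the generating function definition \dots using basic properties about the derivative operator and some elementary combinatorics," and your six coefficient-matching arguments (including the observation that the last-slot diagonal covariance drops out of the recurrence, which underlies multilinearity) fill that in faithfully. The only thing worth flagging is that you silently correct the paper's generating function, whose exponent should read $\langle \vec a, C\vec a\rangle$ rather than $\langle \cX, C\cX\rangle$; your version is the right one and is what the cited reference uses.
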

The proof for the properties is omitted, but can be derived relatively straightforwardly from the generating function definition of the Hermites using basic properties about the derivative operator and some elementary combinatorics.

In the multivariate case with arbitrary covariance structure, the Hermites are no longer necessarily orthogonal to each other under the $L^2(C)$ inner product $\iprod{f,g}_{L^2(C)} := \E_{x \sim \cN(0,C)} f(x)g(x)$.
However, they are still \emph{weakly orthogonal}, meaning that $\iprod{\He_{\alpha},\He_{\beta}}_{L^2(C)} = 0$ if $|\alpha| \ne |\beta|$, and it is still possible to decompose functions in $L^2(C)$ into what is known as a \emph{generalized polynomial chaos expansion}~\cite{rahman2017wiener}:
\begin{definition}
\label[Definition]{def:wiener-hermite-expansion}
    Let $f \in (L^2(C))^n$.
    Then the \emph{generalized Wiener-Hermite polynomial expansion} is given by $\hat{f}(\alpha)$ over the set of multi-indices $\alpha$, so that
\[f(X) = \sum_{\alpha} \hat{f}(\alpha)\, \He_{\alpha}(X \mid C)\,, \]
where $\hat{f}(\alpha)$ satisfy the linear system of equations: for each $\beta$,
\[ \sum_{\substack{\alpha}} \hat{f}(\alpha)\E_{X \sim \cN(0,C)}\He_{\alpha}(X \mid C)\,\He_{\beta}(X \mid C) = \E_{X \sim \cN(0,C)}f(X)\,\He_{\beta}(X \mid C) \,.\]
\end{definition}

We then have Bessel's inequality (Plancherel's theorem in the limit as the degree of the expansion goes to infinity):
\begin{theorem}[Bessel's inequality and Plancherel's theorem~{\cite[Proof of Theorem 14]{rahman2017wiener}}]~\linebreak
\label{thm:hermite-plancherel}
For every $d \in \N$,
\[ \E_{X \sim \cN(0,C)} \sum_{k \in \{0\}\cup [d]}\norm{\sum_{\substack{\alpha\\|\alpha| = k}}\hat{f}(\alpha)\,\He_{\alpha}(X \mid C)}_2^2 \le \E_{X \sim \cN(0,C)}\norm{f(X)}_2^2 \,.\]
Furthermore, for every $f \in L^2(C)$,
\[ \E_{X \sim \cN(0,C)} \sum_{k \in \N}\norm{\sum_{\substack{\alpha\\|\alpha| = k}}\hat{f}(\alpha)\,\He_{\alpha}(X \mid C)}_2^2 = \E_{X \sim \cN(0,C)}\norm{f(X)}_2^2 \,.\]
\end{theorem}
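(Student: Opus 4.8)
The plan is to treat this as the standard geometry of orthogonal projections in the Hilbert space $L^2(\cN(0,C))$, organized by the grading into Wiener chaoses, and then to upgrade Bessel to Plancherel using density of polynomials. For each $k$ let $W_k \subseteq L^2(\cN(0,C))$ denote the span of $\{\He_\alpha(\cdot \mid C) : |\alpha| = k\}$. These polynomials are linearly independent because the recurrence for $\He_\alpha$ shows $\He_\alpha(X \mid C) = X^\alpha + (\text{monomials of strictly smaller total degree})$, so $\{\He_\alpha : |\alpha| \le d\}$ and $\{X^\alpha : |\alpha| \le d\}$ are related by a degree-triangular invertible change of basis; hence $W_k$ is finite-dimensional, and the internal direct sum $\bigoplus_{k \le d} W_k$ is precisely the space $\cP_{\le d}$ of all polynomials of degree at most $d$. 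Weak orthogonality of the multivariate Hermite polynomials ($\iprod{\He_\alpha,\He_\beta}_{L^2(C)} = 0$ whenever $|\alpha| \ne |\beta|$, which is exactly the content of having vanishing higher cumulants) gives $W_j \perp W_k$ for $j \ne k$ in $L^2(C)$, and this passes to the vector-valued setting coordinatewise: writing $f_{=k} := \sum_{|\alpha| = k}\hat f(\alpha)\,\He_\alpha(\cdot \mid C) \in (L^2(C))^n$, one has $\E_X \iprod{f_{=j}(X), f_{=k}(X)}_{\R^n} = 0$ for $j \ne k$.

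Next I would observe that the defining linear system for $\hat f(\alpha)$ in \Cref{def:wiener-hermite-expansion} is, after using weak orthogonality to annihilate all cross-degree terms, exactly the normal equations (coordinatewise in $\R^n$) characterizing the orthogonal projection of $f$ onto $W_k$; hence $f_{=k}$ is that projection and $f - f_{=k} \perp W_k$. Since the $W_k$ are mutually orthogonal, setting $S_d f := \sum_{k \le d} f_{=k}$ we get $f - S_d f \perp W_k$ for every $k \le d$, and therefore $f - S_d f \perp \bigoplus_{k \le d} W_k \ni S_d f$. The Pythagorean identity then gives
\[ \E_X\norm{f(X)}_2^2 = \E_X\norm{S_d f(X)}_2^2 + \E_X\norm{f(X) - S_d f(X)}_2^2 \ge \E_X\norm{S_d f(X)}_2^2 = \sum_{k \le d}\E_X\norm{f_{=k}(X)}_2^2\,, \]
which is exactly the claimed Bessel inequality (note that this step needs no convergence hypothesis at all — it holds verbatim for each fixed $d$).

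For the Plancherel equality it remains only to show $\E_X\norm{f(X) - S_d f(X)}_2^2 \to 0$ as $d \to \infty$. Here I would invoke the single external input: polynomials are dense in $L^2(\cN(0,C))$, since the Gaussian measure has exponential moments and is therefore moment-determinate, so $\bigcup_d \cP_{\le d}$ is dense. Since $S_d f$ is the orthogonal projection of $f$ onto the increasing family $\cP_{\le d}$ whose union is dense, $S_d f \to f$ in $(L^2(C))^n$, and letting $d \to \infty$ in the Pythagorean identity above yields $\E_X\norm{f(X)}_2^2 = \sum_{k \in \N} \E_X\norm{f_{=k}(X)}_2^2$. The only steps requiring genuine care beyond bookkeeping are the identification $\bigoplus_{k \le d} W_k = \cP_{\le d}$ (via the degree-triangular structure of the $\He_\alpha$) and the density of polynomials in $L^2(\cN(0,C))$; everything else is the elementary geometry of nested orthogonal subspaces, applied coordinatewise to accommodate the vector-valued $f$.
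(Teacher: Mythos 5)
The paper offers no proof of this statement—it is cited directly to Rahman's Theorem 14—so there is nothing internal to compare against; your argument is the standard one that the cited reference itself uses (orthogonality of the chaos subspaces $W_k$, identification of $f_{=k}$ with the orthogonal projection via the normal equations, Pythagoras for Bessel, and density of polynomials in $L^2(\cN(0,C))$ for Plancherel), and it is correct. The only point worth a footnote is that when $C$ is singular the $\He_\alpha$ with $|\alpha|=k$ need not be linearly independent \emph{as elements of} $L^2(C)$, but this is harmless: any solution of the (consistent) normal equations in \Cref{def:wiener-hermite-expansion} still represents the orthogonal projection onto $W_k$, and $\mathrm{span}\bigl(\bigcup_{k\le d}W_k\bigr)=\cP_{\le d}$ in $L^2(C)$ regardless, so both inequalities go through unchanged.
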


To observe how to encode these statements (as well as constraints that utilize these polynomials) as valid sum-of-squares statements in the underlying polynomial grammar~(\Cref{sec:sos-poly-basis}), refer to~\Cref{sec:hermites-and-cums-in-sos} and~\Cref{sec:appendix-sdp}.

Finally, there is an explicit expression for the inner products of the Hermites
\begin{lemma}[{\cite[Equation (4.11)]{terdik2021multivariate}}]
\label[Lemma]{lem:hermite-inner-prods}
Let $\alpha$ and $\beta$ be multi-indices over $[n]$.
Let $S(\gamma)$ for $\gamma$ a multi-index be the multi-set which contains each $i \in [n]$ with multiplicity $\gamma_i$.
Let $L(\gamma)$ be an arbitrary ordering of $S(\gamma)$ accessed through square brackets so that $L(\gamma)[1], \dots, L(\gamma)[|\gamma|]$ is a list of all elements (with multiplicity) of $S(\gamma)$.
If $|\alpha| = |\beta| = \ell$, then
    \[ \E_{X \sim \cN(0,C)}\He_{\alpha}(X \mid C)\,\He_{\beta}(X \mid C)
    = \sum_{\pi \in S_{\ell}} \prod_{i \in [\ell]} C_{L(\alpha)[i],L(\beta)[\pi(i)]} \,.\]
    If $|\alpha| \ne |\beta|$ then the inner product is $0$.
\end{lemma}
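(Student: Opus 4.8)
The plan is to prove the identity by the ``complexification'' representation of Hermite polynomials together with Wick's formula (\Cref{lem:isserlis}) applied to an enlarged Gaussian family. First I would record the consequence of the conditional-expectation representation of multivariate Hermite polynomials: fixing any orderings $a_i := L(\alpha)[i]$, $i \in [|\alpha|]$, and $b_j := L(\beta)[j]$, $j \in [|\beta|]$, and letting $Y, Y'$ be two \emph{independent} copies of $\cN(0,C)$ on $\R^n$, each also independent of $X$, one has
\[ \He_\alpha(X \mid C) = \E_Y \prod_{i=1}^{|\alpha|}\bigl(X_{a_i} + \iota Y_{a_i}\bigr), \qquad \He_\beta(X \mid C) = \E_{Y'}\prod_{j=1}^{|\beta|}\bigl(X_{b_j} + \iota Y'_{b_j}\bigr) \]
(this is order-independent because the right-hand side is a product of commuting factors). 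Multiplying, taking $\E_X$, and exchanging the finite-dimensional Gaussian expectations by Fubini,
\[ \E_{X}\,\He_\alpha(X\mid C)\,\He_\beta(X\mid C) = \E_{X,Y,Y'}\Bigl[\,\prod_{i}Z_i \prod_{j} W_j\,\Bigr], \qquad Z_i := X_{a_i} + \iota Y_{a_i},\quad W_j := X_{b_j} + \iota Y'_{b_j}\,. \]

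Next I would apply Isserlis's formula. Since $(X,Y,Y')$ is a centered real Gaussian vector and each $Z_i, W_j$ is a (complex) linear form in it, the expectation expands, by multilinearity of $\E$, into a sum over perfect matchings of the $|\alpha|+|\beta|$ slots $\{Z_i\}\cup\{W_j\}$ of products of pairwise covariances. The crucial covariance computation is that same-side pairings vanish: using $X \perp Y$,
\[ \E[Z_i Z_{i'}] = \E[X_{a_i}X_{a_{i'}}] - \E[Y_{a_i}Y_{a_{i'}}] = C_{a_i,a_{i'}} - C_{a_i,a_{i'}} = 0 \]
(the two imaginary cross-terms drop by independence), and likewise $\E[W_j W_{j'}] = 0$; whereas $X \perp Y'$, $Y \perp X$ and $Y \perp Y'$ give $\E[Z_i W_j] = \E[X_{a_i}X_{b_j}] = C_{a_i,b_j}$. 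Hence only matchings that pair every $Z$-slot with a $W$-slot survive; this forces $|\alpha| = |\beta| =: \ell$, and such matchings are exactly the graphs of permutations $\pi \in S_\ell$ (pair $Z_i$ with $W_{\pi(i)}$), each contributing $\prod_{i \in [\ell]} C_{a_i, b_{\pi(i)}}$. Summing over $\pi$ yields $\sum_{\pi \in S_\ell} \prod_{i \in [\ell]} C_{L(\alpha)[i], L(\beta)[\pi(i)]}$, and when $|\alpha| \ne |\beta|$ no such matching exists so the inner product is $0$.

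The remaining steps are bookkeeping: justifying the Fubini interchange (immediate, as all Gaussians are finite-dimensional and the integrand is polynomial), noting that Isserlis extends from real Gaussians to complex-linear functionals by multilinearity, and checking that the bijection between surviving matchings and $S_\ell$ (and hence the final sum) does not depend on the chosen orderings $L(\alpha), L(\beta)$ --- permuting an ordering merely relabels $\pi$. I do not expect a genuine obstacle here; the one substantive point is the cancellation $C_{a_i,a_{i'}} - C_{a_i,a_{i'}} = 0$ created by the factor $\iota$, which is precisely what eliminates within-argument pairings and collapses the Wick sum to a clean sum over $S_\ell$. (An alternative route is to expand the multivariate generating identity $\sum_\gamma \frac{t^\gamma}{\gamma!}\He_\gamma(X \mid C) = \exp(\langle t,X\rangle - \tfrac12\langle t,Ct\rangle)$, use $\E_X\bigl[\exp(\langle s,X\rangle - \tfrac12\langle s,Cs\rangle)\exp(\langle t,X\rangle - \tfrac12\langle t,Ct\rangle)\bigr] = e^{\langle s,Ct\rangle}$, and match coefficients of $s^\alpha t^\beta$; this gives the same formula after a multinomial bookkeeping step, but the Wick argument reaches the $S_\ell$-sum more directly.)
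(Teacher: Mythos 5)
Your proof is correct. Note that the paper does not actually prove this lemma: it is imported verbatim as a citation to Terdik's Equation (4.11) (where it is derived via the diagram/cumulant formulae for Hermite products), so there is no in-paper argument to compare against. Your route is a clean, self-contained alternative that uses only two ingredients already stated in the paper — the ``Conditional Expectation'' representation $\He_n(X_1,\dots,X_n) = \E\bigl[\prod_j (x_j + \iota Y_j)\bigr]\big|_{x=X}$ and Isserlis's lemma extended by multilinearity to complex linear forms of the enlarged real Gaussian vector $(X,Y,Y')$. The substantive point, which you identify correctly, is that the factor $\iota$ makes all same-side Wick pairings cancel ($C_{a_i,a_{i'}} + \iota^2 C_{a_i,a_{i'}} = 0$, the cross terms dying by independence), while cross pairings reduce to $C_{a_i,b_j}$; the surviving perfect matchings are then exactly the graphs of permutations in $S_\ell$, which yields the permanent formula and forces the inner product to vanish when $|\alpha|\ne|\beta|$. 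Your handling of repeated indices through the ordered lists $L(\alpha),L(\beta)$ and the observation that reordering merely relabels $\pi$ are both right. What this buys over the citation is that the identity becomes verifiable from the paper's own stated properties of multivariate Hermites, which matters here since the paper later re-uses the same diagram machinery (e.g.\ in \Cref{lem:kappay1dotsykp} and \Cref{lem:hermite-quadraticization}) and your argument makes the base case of that machinery self-contained.
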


The \emph{cumulants} of a distribution are polynomial transforms of the moments that possess convenient additive properties. For instance, the cumulants of the sum of two independent random variables add, whereas this is not true for the moments (and knowing the moments would typically require doing some complicated Fourier convolution).

\subsection{Random walks and the semi-circle law}

We are interested in random walks over a complete hypergraph with i.i.d.~gaussian weights.
Specifically, we critically use the the set of ``Dyck-shaped'' closed walks to compute a ``signal'' term when certifying even trace powers of the Hessian ($\nabla^2 H$).
As is proven in~\Cref{lem:hessian-schatten-norm}, the signal term will be stronger than the corresponding ``noise'' term, yielding sums-of-squares proofs for both a lower bound and an upper bound for the Schatten norm of $\nabla^2 H$.

\subsubsection{Dyck-shaped walks}
\label{sec:dyck}

We introduce the notions of a Dyck-shaped walk, followed by the notation used to indicate Dyck paths and indices into specific nodes involved in the walk.
We borrow notation from \cite[Section 4.2]{chen2022cut}, and redefine it here for completeness.

\begin{definition}[$q$-Dyck Paths]
    The set $\cT_q$ of $q$-Dyck paths is the set of all zero-indexed length-$2q$ \emph{integer} walks $\tau$ that are:
    \begin{description}
        \item[Closed] $\tau_0 = \tau_{2q} = 0$.
        \item[Non-negative] $\tau_i \ge 0$ for all $i \in [2q]$.
        \item[Unit Monotone Indices] $|\tau_{i} - \tau_{i-1}| = 1$ for all $i \in [2q]$.
    \end{description}
\end{definition}
We think of a $q$-Dyck path $\tau = (\tau_0,\dots,\tau_{2q})$ as a set of indices that represent a valid Dyck sequence that can be \emph{instantiated} by a sequence $\omega$ of $(q+1)$ vertices in $[n]$.
\begin{definition}[$2q$-walk instantiation of a $q$-Dyck path]
    Given a $q$-Dyck path $\tau \in \cT_q$, a walk $w \in [n]^{2q}$ is an instantiation of $\tau$ specified by a zero-indexed vector $\omega \in [n]^{q+1}$ if:
    \begin{description}
        \item[Root Vertex] $w_0 = \omega_0$.
        \item[Ascending Transition] If $\tau_i - \tau_{i-1}$ is equal to 1, and this is the $j$th time it has been equal to $1$, then $w_i = \omega_j$.
        \item[Descending Transition] If $\tau_i - \tau_{i-1}$ is equal to $-1$, and $j$ is the largest value less than $i$ such that $\tau_i = \tau_j$, then $w_i = w_j$.
        \item[Unique Shape] $\omega_{i} \ne \omega_{i-1}$ for any $i$.
    \end{description}
\end{definition}

\begin{fact}
\label[Fact]{fact:dyck-catalan}
The number of $q$-Dyck paths is $|\cT_q| = C_q$, where $C_q$ is the $q$th Catalan number.
The number of instantiations of a path $\tau \in \cT_q$ is $n(n-1)^{q}$.
\end{fact}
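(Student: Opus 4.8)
The plan is to prove the two assertions of the fact separately, both by standard combinatorial arguments.

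For the count $|\cT_q| = C_q$, observe that a $q$-Dyck path $\tau$ is by definition exactly a lattice path of $2q$ unit steps $\pm 1$ that starts and ends at height $0$ and never goes negative, i.e.\ a classical Dyck path of semilength $q$. I would invoke the reflection principle: the number of $\pm 1$-step walks of length $2q$ from $0$ to $0$ is $\binom{2q}{q}$ (choose which $q$ of the $2q$ steps are $+1$), and the ``bad'' walks that reach $-1$ at some point are in bijection---via reflecting the walk across the line $y=-1$ after its first visit to $-1$---with $\pm 1$-step walks of length $2q$ from $0$ to $-2$, of which there are $\binom{2q}{q-1}$. Hence $|\cT_q| = \binom{2q}{q} - \binom{2q}{q-1} = \frac{1}{q+1}\binom{2q}{q} = C_q$. (Alternatively one decomposes a path by its first return to $0$ and verifies the Catalan recurrence $C_q = \sum_{i=0}^{q-1} C_i C_{q-1-i}$ with $C_0 = 1$; either route works, and one may simply cite it.)

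For the number of instantiations of a fixed $\tau \in \cT_q$, note that an instantiation is determined by the choice of the zero-indexed vector $\omega = (\omega_0, \dots, \omega_q) \in [n]^{q+1}$, subject only to the ``Unique Shape'' requirement $\omega_i \ne \omega_{i-1}$ for all $i \in [q]$: the ``Root'', ``Ascending Transition'', and ``Descending Transition'' rules are definitional, pinning down $w \in [n]^{2q}$ from $\omega$ without imposing any further constraint on $\omega$. I would first check that the assignment $\omega \mapsto w$ is injective by exhibiting the inverse: given $w$ together with $\tau$, one recovers $\omega_0 = w_0$, and for each $j \in [q]$ one recovers $\omega_j = w_{i_j}$ where $i_j$ is the step at which $\tau$ makes its $j$th up-step (there are exactly $q$ up-steps, so this indexing is well defined). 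Consequently the instantiations of $\tau$ are in bijection with the admissible $\omega$'s, and by the multiplication principle the number of admissible $\omega$'s is $n$ choices for $\omega_0$ times $n-1$ choices for each of $\omega_1, \dots, \omega_q$ (each avoiding only its immediate predecessor), giving $n(n-1)^q$.

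The argument is routine, so there is no real obstacle; the only points that require a line of verification are (i) that ``Unique Shape'' is genuinely the \emph{sole} constraint on $\omega$---in particular, that a descending transition from height $h$ to $h-1$ revisits the vertex that was last seen at height $h-1$ and therefore never creates a repeated consecutive vertex in $w$ that was not already forbidden by $\omega_i \ne \omega_{i-1}$---and (ii) that $\omega$ is faithfully recorded by $w$ at its ascent steps, which gives the injectivity used above. Both follow immediately by unwinding the definitions, so I do not anticipate any difficulty.
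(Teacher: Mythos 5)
Your proof is correct. The paper states \Cref{fact:dyck-catalan} without proof, treating both claims as standard, so there is no in-paper argument to compare against; your reflection-principle count of Dyck paths and your direct count of admissible $\omega$ (with the injectivity check showing that instantiations of a fixed $\tau$ biject with vectors $\omega \in [n]^{q+1}$ satisfying only $\omega_i \ne \omega_{i-1}$) are exactly the standard arguments one would supply, and both are sound.
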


\subsubsection{Catalan numbers and the semi-circle law}
It is a standard result in random matrix theory that the limiting spectral distribution of the $\mathsf{GOE}(n)$ ensemble is given by the semi-circle law.
Since we give low-degree SoS certificates of the Schatten norms of the hessian $\nabla^2 H(\sigma)$ converging to moments of this distribution, certain facts about the integral representation and asymptotic approximation are abundantly useful.

\begin{definition}[Catalan Numbers]
    For any $n > 0$, the $n$-th Catalan number is given by the following formula,
    \begin{equation}\label{eq:catalan-number-def}
        C_n = \frac{\binom{2n}{n}}{n+1}\, ,
    \end{equation}
    which represents the number of ways to have $n$ balanced parentheses in a sequence of $2n$ spots.
\end{definition}

We will use the fact that the $n$-th Catalan number $C_n$ is the $2n$-th moment of the semi-circle law to give a low-degree matrix SoS certifying closeness of a particular matrix Bernstein polynomial to the appropriate moment of the semi-circle law.

\begin{fact}[Integral Representation for Catalan Numbers]
    For any $n > 0$,
    \begin{equation}\label{eq:integral-rep-catalan}
        C_n = 4^n \int_{-1}^1 x^{2n}\frac{2}{\pi}\sqrt{1 - x^2}dx = \E_{x \sim \mu_{sc}}[x^{2n}]\, ,
    \end{equation}
    where $\mu_{sc} = \frac{2}{\pi}\sqrt{1 - x^2}$ represents the density function for the semi-circle law.
\end{fact}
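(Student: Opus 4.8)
The plan is to verify the single nontrivial equality, namely $\int_{-1}^1 x^{2n}\tfrac{2}{\pi}\sqrt{1-x^2}\,dx = C_n/4^n$, by reducing it to a Beta/Wallis integral via a trigonometric substitution and then evaluating with the standard half-integer Gamma values; the second equality in the statement is then just the definition of $\mu_{sc}$, and the fact that $\mu_{sc}$ is a genuine probability density will drop out of the $n=0$ case.

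First I would substitute $x=\sin\theta$, so $dx=\cos\theta\,d\theta$ and $\sqrt{1-x^2}=\cos\theta$ on $\theta\in[-\pi/2,\pi/2]$, turning the integral into
\[
\frac{2}{\pi}\int_{-\pi/2}^{\pi/2}\sin^{2n}\theta\,\cos^2\theta\,d\theta = \frac{4}{\pi}\int_{0}^{\pi/2}\sin^{2n}\theta\,\cos^2\theta\,d\theta ,
\]
using evenness of the integrand. Next I would invoke the standard identity $\int_0^{\pi/2}\sin^{2a-1}\theta\,\cos^{2b-1}\theta\,d\theta = \tfrac12 B(a,b) = \tfrac{\Gamma(a)\Gamma(b)}{2\Gamma(a+b)}$ with $a=n+\tfrac12$ and $b=\tfrac32$, which gives
\[
\int_{0}^{\pi/2}\sin^{2n}\theta\,\cos^2\theta\,d\theta = \frac{\Gamma(n+\tfrac12)\,\Gamma(\tfrac32)}{2\,\Gamma(n+2)} .
\]

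Then I would plug in the closed forms $\Gamma(\tfrac32)=\tfrac{\sqrt\pi}{2}$, $\Gamma(n+2)=(n+1)!$, and the duplication identity $\Gamma(n+\tfrac12)=\tfrac{(2n)!}{4^n\,n!}\sqrt\pi$ (an easy induction from $\Gamma(\tfrac12)=\sqrt\pi$). Multiplying back by $\tfrac{4}{\pi}$ cancels the powers of $\pi$ and leaves $\tfrac{(2n)!}{4^n\,n!\,(n+1)!} = \tfrac{1}{4^n}\cdot\tfrac{1}{n+1}\binom{2n}{n} = \tfrac{C_n}{4^n}$, hence $C_n = 4^n\int_{-1}^1 x^{2n}\tfrac{2}{\pi}\sqrt{1-x^2}\,dx$. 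Taking $n=0$ specializes this to $\int_{-1}^1 \tfrac{2}{\pi}\sqrt{1-x^2}\,dx = C_0 = 1$, so $\mu_{sc}$ is a probability density on $[-1,1]$; combined with the observation that odd moments vanish by the symmetry $x\mapsto -x$, this is exactly what is needed to rewrite the integral as $\E_{x\sim\mu_{sc}}[x^{2n}]$.

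I do not expect a genuine obstacle here — the statement is classical — but the point that needs care is the half-integer Gamma identity $\Gamma(n+\tfrac12)=\tfrac{(2n)!}{4^n n!}\sqrt\pi$ and making sure the $4^n$ (rather than, say, $2^{2n}$ distributed differently) is tracked correctly; this factor is precisely the rescaling that sends the Wigner semicircle supported on $[-2,2]$ to the $[-1,1]$-normalized $\mu_{sc}$. As a self-contained alternative that avoids Gamma functions altogether, one can instead show both $C_n$ and $I_n := \tfrac{2}{\pi}\int_{-1}^1 x^{2n}\sqrt{1-x^2}\,dx$ obey the same first-order recurrence with the same initial value $C_0=I_0=1$: for $C_n$ this is immediate from the factorial formula, and for $I_n$ it follows by integration by parts (differentiating $x^{2n+1}$ against $x(1-x^2)^{-1/2}$ after writing $\sqrt{1-x^2}=(1-x^2)(1-x^2)^{-1/2}$), with the only subtlety being the sign/boundary bookkeeping in the parts step.
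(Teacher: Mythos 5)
Your computation is correct and complete: the substitution $x=\sin\theta$, the Beta/Wallis identity with $a=n+\tfrac12$, $b=\tfrac32$, and the half-integer Gamma value $\Gamma(n+\tfrac12)=\tfrac{(2n)!}{4^n n!}\sqrt{\pi}$ do combine to give $\tfrac{2}{\pi}\int_{-1}^1 x^{2n}\sqrt{1-x^2}\,dx=\tfrac{(2n)!}{4^n n!(n+1)!}=C_n/4^n$, which is exactly the first equality in the Fact. The paper offers no proof at all — it records this as a classical fact — so there is nothing to compare against; your argument is the standard one and would serve as a self-contained justification. One small caution: you describe the second equality as "just the definition of $\mu_{sc}$", but as literally written the Fact asserts $4^n\int_{-1}^1 x^{2n}\,\mu_{sc}(x)\,dx=\E_{x\sim\mu_{sc}}[x^{2n}]$, and for the $[-1,1]$-supported density the right-hand side equals the integral \emph{without} the $4^n$, i.e.\ $C_n/4^n$. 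So the chain of equalities in the paper is internally consistent only if $\E_{x\sim\mu_{sc}}$ is read as the expectation under the semicircle law on $[-2,2]$ (whose $2n$-th moment is indeed $C_n$); your $n=0$ normalization check and the vanishing of odd moments are fine observations but do not resolve this $4^n$ discrepancy, which is a defect of the statement rather than of your proof. Your alternative recurrence route ($I_{n+1}=\tfrac{2n+1}{2(n+2)}I_n$ matching the Catalan recurrence) is also sound, though left as a sketch.
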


\subsubsection{Truncated moments of the semi-circle law}
We require explicit representations for integrals that evaluate various quantities that are related to the $\mathsf{CDF}$ of the semi-circle law.

\begin{proposition}[Integrals related to $\mu_{sc}$]\label[Proposition]{prop:semi-circle-integrals}
    The following (real) integral identities hold (up to constants of integration):
    \begingroup
    \allowdisplaybreaks
    \begin{align*}
        &\int \frac{2}{\pi}\sqrt{1-x^2}dx = \mathsf{CDF}(x) = \frac{x\sqrt{1-x^2}}{\pi} + \frac{\arcsin{x}}{\pi} \, , \\
        &\int x\cdot\frac{2}{\pi}\sqrt{1-x^2}dx = -\frac{2(1-x^2)^{3/2}}{3\pi}\, .
    \end{align*}
    \endgroup
\end{proposition}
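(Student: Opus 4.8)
The plan is to verify each claimed antiderivative directly by differentiation on the open interval $(-1,1)$, which is cleaner than integrating from scratch; for completeness I would also record the trigonometric / power substitutions that produce the two closed forms. Since the proposition is asserted only up to constants of integration, it suffices to check that the derivative of each right-hand side equals the corresponding integrand, and then, for the probabilistic reading of the first line, to fix the additive constant so that the antiderivative is literally the semicircle CDF.

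For the first identity, I would set $F(x) := \frac{x\sqrt{1-x^2}}{\pi} + \frac{\arcsin x}{\pi}$. The product and chain rules give $\frac{d}{dx}\bigl(x\sqrt{1-x^2}\bigr) = \sqrt{1-x^2} - \frac{x^2}{\sqrt{1-x^2}} = \frac{1-2x^2}{\sqrt{1-x^2}}$ and $\frac{d}{dx}\arcsin x = \frac{1}{\sqrt{1-x^2}}$, so that
\[ F'(x) = \frac{1}{\pi}\cdot\frac{(1-2x^2)+1}{\sqrt{1-x^2}} = \frac{2(1-x^2)}{\pi\sqrt{1-x^2}} = \frac{2}{\pi}\sqrt{1-x^2}, \]
which is exactly the integrand. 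Equivalently, one sees this from the substitution $x=\sin\theta$, under which $\int\sqrt{1-x^2}\,dx$ becomes $\int\cos^2\theta\,d\theta = \tfrac{\theta}{2}+\tfrac{\sin\theta\cos\theta}{2}$, and undoing the substitution recovers $\tfrac12\arcsin x + \tfrac12 x\sqrt{1-x^2}$. Finally, since $F(-1) = -\tfrac12$, the semicircle CDF is $F(x)+\tfrac12$ (and indeed $F(1)+\tfrac12 = 1$); this shift is absorbed into the stated ``up to constants of integration''.

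For the second identity, I would set $G(x) := -\frac{2(1-x^2)^{3/2}}{3\pi}$. The chain rule gives $G'(x) = -\frac{2}{3\pi}\cdot\frac{3}{2}(1-x^2)^{1/2}\cdot(-2x) = \frac{2}{\pi}\,x\sqrt{1-x^2}$, the claimed integrand; alternatively, the substitution $u = 1-x^2$, $du = -2x\,dx$ gives $\int x\sqrt{1-x^2}\,dx = -\tfrac12\int\sqrt{u}\,du = -\tfrac13 u^{3/2} = -\tfrac13(1-x^2)^{3/2}$, and multiplying by $\tfrac{2}{\pi}$ yields $G$. I expect no substantive obstacle: the only care needed is to work on $x\in(-1,1)$ where $\sqrt{1-x^2}$ is smooth and strictly positive (the identities then extend to the closed interval by continuity), and both verifications are routine differentiations.
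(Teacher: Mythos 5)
Your verification is correct: both differentiations check out, the substitutions you record ($x=\sin\theta$ and $u=1-x^2$) are the standard derivations, and your remark that the genuine CDF is $F(x)+\tfrac12$ matches how the paper actually uses $\mathsf{CDF}$ later (e.g.\ in the computation of $1-\mathsf{CDF}(1-\phi)$, where the $\tfrac12$ appears explicitly). The paper states these identities without proof, so there is nothing to compare against; your argument is complete and is the natural one.
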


When computing the expected correlation between a Bernstein polynomial that approximates the ``ramp function'' and the value of an element chosen under the semi-circle measure, it will be crucial to evaluate multiple truncated moments on different compact intervals.
The integrals summarized above are used extensively therein~(\Cref{cor:large-hs-correlation-bernstein-hessian}).

\subsection{Graph matrices and Schatten norm bounds}

\subsubsection{Graph matrices}\label{sec:prelims-graph-matrices}
Invoking the trace power method on a \emph{random} matrix $M$ to bound its Schatten norm $\norm{M}_p$ is by now a mainstay in random matrix theory as well as proofs for certifying these norms for moments of certain random matrices~\cite{anderson2010introduction, ahn2016graph, rajendran2023concentration}. As such, we use the machinery of the ``graph matrix'' framework~\cite{ahn2016graph, rajendran2023concentration}. The definitions below capture the minimal concepts used in the building-block of arbitrary graph matrices. For a comprehensive treatment, see~\cite[Section 2 \& Section 7]{ahn2016graph}. This framework will prove critical in bounding the ``noise'' term to be of order $o($``signal''$)$ when certifying the $p$-Schatten norm of the Hessian. \\
Let $G = (V,E)$ denote some (possibly random) graph. We first introduce the notion of an ``index'' shape which is just a $m$-tuple of distinct variables that can be concretely fixed by a ``realization'' to be any $m$-tuple of vertices in $V(G)$.
\begin{definition}[Index Shape]
    An index shape $U = (u_1,\dots,u_m)$ is a tuple of \emph{distinct} variables, which induces a set of (unrealized) vertices $V(U) = \{u_1,\dots,u_m\}$.
\end{definition}

The notion of an index shape is sufficient to describe a shape $\alpha$, which can be seen as a representation of a subgraph of $G$ with a specific edge structure.

\begin{definition}[Shapes]\label[Definition]{def:shape}
    Given some graph $G = (V,E)$, a shape $\alpha$ is represented by a $4$-tuple $(U_\alpha, V_\alpha, W_\alpha, E(\alpha))$ where,
    \begin{itemize}
        \item $U_\alpha \subset V(G)$ is the set of ``left'' index shapes.
        \item $V_\alpha \subset V(G)$ is the set of ``right'' index shapes.
        \item $W_\alpha \subset V(G)$ satisfies $W_\alpha \cap (U_\alpha \cup V_\alpha) = \emptyset$, and is termed the set of ``middle'' index shapes.
        \item $E(\alpha)$ is a set of (hyper)edges that are induced by the index shapes in $U_\alpha \cup V_\alpha \cup W_\alpha$ where the vertices are \emph{distinct}.
    \end{itemize}
\end{definition}

Given any shape $\alpha$, we can get a ``realization'' of the shape by associating some fixed set of vertices to every index shape in $\alpha$. A realization of a shape $\alpha$ is termed a ``ribbon''.

\begin{definition}[Ribbons]\label[Definition]{def:realization}
    Given some graph $G = ([n],E)$, a shape $\alpha = (U_\alpha, V_\alpha, W_\alpha, E(\alpha))$ and a realization $\sigma: V(\alpha) = U_\alpha \cup V_\alpha \cup W_\alpha \hookrightarrow [n]$, a ribbon $R = \sigma(\alpha)$ is given as,
    \begin{equation}\label{eq:ribbon-from-shape}
        R = \sigma(\alpha) := (\sigma(U_\alpha), \sigma(V_\alpha), \sigma(W_\alpha), \sigma(E(\alpha)))\, ,
    \end{equation}
    where, $\sigma(U_\alpha) = (\sigma(u_1),\dots,\sigma(u_m))$ and $\sigma(V_\alpha)$ and $\sigma(W_\alpha)$ are defined similarly. The action on the edge set $\sigma(E(\alpha))$ is obtained by applying $\sigma$ to every $(u, v) \in E(\alpha)$ as ($\sigma(u), \sigma(v)$).
\end{definition}

A ``realization'' of the left index shape $U_\alpha$ and right index shape $V_\alpha$ indexes into a matrix $M_\alpha$~(\Cref{def:graph-matrix}) whose value is then given by a generalized Fourier coefficient that also depends on the middle index shape $W_\alpha$. This intuitively implies that a graph matrix for some shape $\alpha$ consists of specifying the entries of $M_\alpha$ at every possible ribbon induced by $\alpha$.

\begin{definition}[Graph Matrix]\label[Definition]{def:graph-matrix}
    Given a graph $G = (V = [n], E)$ and a shape $\alpha$, the graph matrix associated $M_\alpha$ is defined for any entry by indices $A, B \subset [n]$ that satisfy $|A| = |U_\alpha|$ and $|B| = |V_\alpha|$ as,
    \begin{align}\label{eq:graph-marix-entry}
        M_\alpha(A,B) := \sum_{\sigma, \sigma(U_\alpha) = A, \sigma(V_\alpha) = B} \chi_{\sigma(E(\alpha))}\, ,
    \end{align}
    where,
    \begin{align}
        \chi_{\sigma(E(\alpha))} := \prod_{e\hspace{.5mm}\in\hspace{.5mm} \sigma(E(\alpha))}\chi(e)\, ,
    \end{align}
    and $\chi(e) = 1$ if $e \in E(G)$ and $-1$ otherwise.
\end{definition}
Note that the coefficients $\chi(e): E(G) \to \{\pm 1\}$. Therefore, for a random graph $G = (V,E)$ chosen according to some product measure $\mu^{\otimes \tbinom{|V(G)|}{2}}$, they can be chosen to be the Fourier coefficients of an appropriate Fourier basis. \\

\begin{definition}[Generalized Matrix Index Pieces]
    A matrix index piece $A = ((a_1,\dots,a_m), p)$ corresponds to the monomial
    \begin{equation}\label{eq:gen-matrix-index-monomial}
        p_A := 
        \prod_{j=1}^m x^p_{
        a_j}\, ,
    \end{equation}
    where $x_{a_j}$ is the $a_j$-th variable
    and $p$ is the degree of every element in the monomial.
\end{definition}
Once again, for the purposes of this work, $p_A$ will be a monomial over $\{\sigma_i\}_{i=1}^n$ of degree $p \in [d_H - 2]$, where we implicitly use the convention that $\sigma_0^{p} = 1$. The monomial itself will be now be associated with a vertex in a hypergraph (along with the indices $\{i, j\}$ that represent the action of the derivative operator). Therefore, a matrix index $A$ will correspond to a set of \emph{distinct} monomials $\{p_{A_i}\}_{i \in [m]}$.
\begin{definition}[Generalized Matrix Index]
    A generalized matrix index $A = \{A_i\}_{i=1}^m$ is a collection of \emph{distinct} matrix index pieces, where distinctness is enforced by requiring that $p_i < p_j$.
\end{definition}
The size of the generalized matrix index $|A| = m$ is the number of index pieces it consists of, and the vertex set induced by it is $V(A) = \cup_{i \in [m]} V(A_i) = \cup_{i \in [m]}\cup_{j \in |A_i|} \{(a_j, i)\}$.
Together with a Fourier coefficient $\chi(e)$ defined for every hyper-edge $e \in E$, this yields a natural definition for generalized graph matrices $M_\alpha$ with the fourier basis for $\chi(e)$ being the rescaled hermite polynomials~\cite[Section 7.2.2.]{ahn2016graph}.

\subsubsection{Minimal weight vertex separators and Schatten norm bounds}
Before giving an explicit bound on the Schatten norms of generalized graph matrices $M_\alpha$, we need to introduce the notion of a \emph{minimum weight vertex separator}. To do this, we must first associate a weight $w(v)$ to every vertex $v \in V(\alpha)$ of a shape $\alpha$.
\begin{definition}[Vertex weight]
    Given a shape $\alpha$, the weight of a vertex $v \in V(\alpha)$ is
    $1$.
\end{definition}
Note that this immediately implies that $w(\sigma_0) = 0$ and $w(\sigma) = 1$.

\begin{definition}[Weight of a shape]
    For any $q > 0$ and shape $\alpha$, the weight of any subset of vertices $V \subseteq V(\alpha)$ is given by,
    \begin{align*}
        w(V) = \sum_{v \in V}w(v)\, .
    \end{align*}
\end{definition}

The above definitions suffice to precisely state the notion of a minimal weight vertex separator $S_{\min}$, which corresponds to a subset of vertices of $V(\alpha)$ that separate $U_\alpha$ and $V_\alpha$ with minimal weight.
\begin{definition}[Minimal weight vertex separator]
    A minimal weight vertex separator $S_{min}$ for a shape $\alpha$ is a subset $S_{min} \subset V(\alpha)$ that separates $U_\alpha$ and $V_\alpha$ with the condition that $w(S_{min}) \le w(S)$ for all other separators $S$.
\end{definition}

The bound below is critical in bounding the contribution of the ``noise'' term in the proof of~\Cref{lem:hessian-schatten-norm} in conjunction with \Cref{thm:high-entropy-nuclear-norm-bound}.

\begin{theorem}[Spectral Norm Bounds on Shapes, {\cite[Lemma 8.6]{ahn2016graph}}]
\label{thm:graph-matrix-norm-bound}
    Given a shape $\alpha$, the minimal weight vertex separator $S_{min}$ of $\alpha$, the following holds with probability at least $1-\eps$ when $M$ is a matrix of independent Gaussian entries and $M_{\alpha}$ denotes a generalized graph matrix where each entry is a sum over Fourier coefficients of ribbons induced by realizations $\sigma$:
    \begin{equation}\label{eq:bound-spectral-norm-partition-graph-matrix}
        \opnorm{M_{\alpha}} \le 2m^{m}n^{[w(V(\alpha)) - w(S_{min}) + w(S_{iso})]/2}\left(6e\left\lceil\frac{w(S_{min})\log n - \log \eps}{6(|V(\alpha) \setminus (U_{\alpha}\cap V_{\alpha})| + l(\alpha))}\right\rceil\right)^{l(\alpha)+|V(\alpha) \setminus (U_{\alpha}\cap V_{\alpha})|} ,
    \end{equation}
    where $l(\alpha)$ is the sum of labels of all hyperedges in $\alpha$, which we might also denote as $\deg(\alpha) := l(\alpha)$ and $m$ denotes the number of vertices that are not in $U_{\alpha} \cap V_{\alpha}$.
\end{theorem}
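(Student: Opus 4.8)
The plan is to prove this by the trace (moment) method, exactly as in \cite{ahn2016graph}; since the statement is essentially \cite[Lemma 8.6]{ahn2016graph} one may also invoke it directly, but we outline the argument for completeness. Fix a positive integer $q$ to be optimized later. Since $\opnorm{M_\alpha}^{2q} \le \Tr\!\left[(M_\alpha M_\alpha^{\sT})^{q}\right]$, it suffices to bound $\E\,\Tr\!\left[(M_\alpha M_\alpha^{\sT})^{q}\right]$ over the randomness of $M$ and then apply Markov's inequality: if $\E\,\Tr\!\left[(M_\alpha M_\alpha^{\sT})^{q}\right] \le B^{2q}$, then $\opnorm{M_\alpha} \le B\,\eps^{-1/(2q)}$ with probability at least $1-\eps$, and the factor $\eps^{-1/(2q)}$ is precisely what turns into the $-\log\eps$ contribution inside the ceiling once $q$ is fixed.

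First I would expand $\Tr\!\left[(M_\alpha M_\alpha^{\sT})^{q}\right]$ as a sum over \emph{block walks}: closed sequences of $2q$ ribbons that alternately realize the shape $\alpha$ and its transpose, glued so that the right boundary of each ribbon is identified with the left boundary of the next (cyclically). Each block walk contributes the product over its $2q$ ribbons of the Fourier character $\chi_{\sigma(E(\alpha))}$, which is a product of (rescaled) Hermite polynomials in the independent Gaussian entries of $M$. Taking expectations, independence together with the weak orthogonality of Hermite polynomials forces a block walk to contribute zero unless every hyperedge of the resulting multigraph is traversed at least twice across the $2q$ ribbons; if a hyperedge is traversed $t$ times, the associated Hermite moment is bounded by a quantity growing like $t^{O(t)}$ (Gaussian hypercontractivity), which is ultimately responsible for the $\left(6e\lceil\cdot\rceil\right)$ factor.

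The combinatorial heart of the argument is the count of surviving block walks. I would read the $2q$ ribbons in order via a vertex-discovery process, marking each vertex the first time it is seen (\emph{new}) or not (\emph{old}). The number of block walks conforming to a fixed abstract gluing pattern (a \emph{shape of walk}, which depends only on $q$, $|V(\alpha)|$ and $l(\alpha)$, not on $n$) is at most $n^{W}$, where $W$ is the total weight of the new vertices; the number of shapes of walks, together with the Hermite moment bounds above, produces the $2\,m^{m}$ and $\left(6e\lceil\cdot\rceil\right)^{\,l(\alpha)+|V(\alpha)\setminus(U_\alpha\cap V_\alpha)|}$ prefactors. The decisive observation is that, since every hyperedge is used at least twice, there are (up to lower-order terms) $q$ \emph{return} steps at which the newly traversed ribbon must close onto previously visited vertices; at such a step the set of old vertices it touches must contain a $U_\alpha$--$V_\alpha$ vertex separator of $\alpha$, so this step introduces at most $w(V(\alpha)) - w(S_{\min})$ units of new vertex weight — with the isolated vertices contributing the additional $w(S_{iso})$ term additively, since an isolated vertex of $\alpha$ multiplies $M_\alpha$ by an all-ones vector, costing a factor $n$ but only $\sqrt n$ in spectral norm. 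Thus $W \le q\,[w(V(\alpha)) - w(S_{\min}) + w(S_{iso})] + O(|V(\alpha)|)$, and summing over shapes of walks and over edge-multiplicity profiles (optimizing the number of over-traversed edges against the new vertices they save) gives $\E\,\Tr\!\left[(M_\alpha M_\alpha^{\sT})^{q}\right] \le B^{2q}$ with $B$ equal to the right-hand side of the asserted inequality, provided one chooses $q = \Theta\!\big((w(S_{\min})\log n - \log\eps)/(l(\alpha)+|V(\alpha)\setminus(U_\alpha\cap V_\alpha)|)\big)$; this is exactly the choice balancing the $n^{w(S_{\min})/2}$ overhead from the separator bound against the polynomial prefactor.

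The main obstacle is this combinatorial core — namely, establishing rigorously that the cumulative new-vertex weight over the $q$ return steps is at most $q\,[w(V(\alpha)) - w(S_{\min})]$ up to lower-order corrections, which requires carefully maintaining the discovery process and verifying at each return that the already-visited vertices on the current block genuinely form a $U_\alpha$--$V_\alpha$ separator of $\alpha$, together with the attendant bookkeeping for over-traversed hyperedges (the source of the $6e\lceil\cdot\rceil$ factor) and for the floating middle vertices $W_\alpha$. Everything else is routine once this is in hand, and, as noted, the final estimate is precisely \cite[Lemma 8.6]{ahn2016graph}.
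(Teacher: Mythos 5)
The paper does not prove this statement itself; it imports it verbatim as \cite[Lemma 8.6]{ahn2016graph}, and your proposal — invoking that lemma directly, backed by a sketch of the standard trace-moment argument (block-walk expansion, Hermite orthogonality forcing doubled hyperedges, the vertex-separator bound on new-vertex weight, and the choice of $q$ balancing $n^{w(S_{\min})/2}$ against $\eps^{-1/(2q)}$) — is exactly the approach of the cited source. Your sketch is consistent with that proof and with how the paper uses the result, so there is nothing to add beyond noting that the full combinatorial bookkeeping you flag as the "main obstacle" is precisely what the cited reference carries out.
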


\section{High-Entropy Step Distributions as Solution Concepts in Sum-of-Squares Programs}\label{sec:sdp-proof-system}

\subsection{Matrix constraints: Operator norms of the conditional covariance}\label{sec:matrix-sos-system}

\newcommand{\dyad}[1]{{#1}{#1}^{\sT}}
\subsubsection{Sum-of-Squares proofs for matrix polynomials}

A HES spectral entropy constraint (\Cref{eqs:the-program-intro-opnorm}) will be of the form $\left\{\opnorm{\fE_i v_iv_i^{\sT}} \le c\right\}$ for some $c \in \R$.
We briefly introduce the framework for matrix polynomial constraints on pseudo-distributions and matrix sum-of-squares proofs, so as to reason about the encoding and consequences of such a constraint.

Matrix sum-of-squares proofs generalize those of scalar sum-of-squares~\cite{CIMPRIC201289}:
a scalar polynomial constraint $\{f \ge 0\}$ in the matrix sum-of-squares setting is notational shorthand for an $1\times 1$ matrix polynomial constraint $\{[f] \ge 0\}$.
Matrix polynomials used in constraints will always be symmetric as matrices.

\paragraph{Square positivity of matrices under pseudodistributions} We write $D \models_\ell \{ M_1(x) \succeq 0, \dots, M_m(x) \succeq 0\}$ if $\pE_D H\left(\bigotimes_{i \in S} M_i\right)H^\sT \succeq 0$, for every multiset $S$ consisting of elements of $[m]$ and every matrix polynomial $H$ that validly left-multiplies $\bigotimes_{i \in S} M_i$ such that $|S|\ell + 2\deg(H) \le d$.
The tensor product may be taken in any order since the different orderings are related to each other via conjugation by a permutation matrix.

\paragraph{Matrix sum-of-squares proof} For a $d\times d$ matrix $A(x)$ with entries that are polynomials in $x$,
we write $\{M_1(x) \succeq 0, \ldots, M_m(x) \succeq 0\} \vdash_{\ell} \{A\succeq 0\}$ if
there exists some $k$ and a sequence of matrix polynomials $(P_{S,j}(x))_{S \subseteq [r], j \in [m]}$ for multisets $S$
such that
\[A = \sum_{S \subseteq [m],j\in [k]} P_{S,j}\left(\bigotimes_{i \in S} M_i\right){P_{S,j}}^\sT\mcom\]
where each of the summands have degree at most $\ell$,
and where the empty tensor product is defined as the one-dimensional identity matrix.
Some special cases are $\vdash_{0} \{\Id \succeq 0\}$, $\{A \succeq 0\} \vdash_{\deg(PAP^\sT)} \{PAP^\sT \succeq 0\}$ for every matrix $P$ that validly conjugates $A$, and
$\{f \ge 0, M \succeq 0\} \vdash_{\deg(fM)} \{fM \succeq 0\}$.
The valid form of a matrix SoS proof is used in~\Cref{sec:moments-of-filtered-spaces}. 

Matrix sum-of-squares proofs satisfy all of the same inference rules as those of polynomial sum-of-squares proofs, except that the multiplication rule is replaced by tensor products:
\begin{align}
  &\frac{\cA \vdash_{\ell} \{A\succeq 0\},\qquad \cA \vdash_{\ell'} \{B\succeq 0\}}{\cA \vdash_{\ell+\ell'} \{A\otimes B \succeq 0\}} \mper
  \tag{tensor product}
\end{align}
This rule follows from the observation that $A = MFM^\sT$ and $B = NGN^\sT$ implies $A\otimes B = (M\otimes N)(F \otimes G)(M\otimes N)^\sT$.

The same complexity and soundness properties hold for the matrix case as in the scalar case, and furthermore the proofs are of identical substance.

\begin{theorem}[$\mathsf{poly}(n)$-time SDPs for bounded matrix SoS proofs]\label{thm:polyn-matrix-sdp}
  There exists a $(n+\card{\cA})^{O(d)}$-time algorithm that, given any explicitly bounded and satisfiable system $\cA$ of matrix polynomial constraints in $n$ variables, outputs (up to arbitrary accuracy) a degree-$d$ pseudo-distribution that satisfies $\cA$.
\end{theorem}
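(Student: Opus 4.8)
The plan is to mirror the scalar sum-of-squares–to–SDP correspondence, which reduces the claim to solving a single semidefinite feasibility problem over the vector of degree-$\le d$ pseudo-moments, namely the entries $\Set{\pE x^\alpha : \card{\alpha}\le d}$ assembled into a vector $v \in \R^N$ with $N = n^{O(d)}$. First I would fix, for each matrix polynomial constraint $M_i(x)\succeq 0$ of degree $\ell_i$ in $\cA$ together with the trivial constraint $1\cdot\Id\succeq 0$, the finite family of ``extended moment matrices'': for every multiset $S$ of constraint indices with $\sum_{i\in S}\ell_i \le d$ and every pair of monomials of degree $\le (d - \sum_{i\in S}\ell_i)/2$, the block whose $(p\,r),(q\,s)$ entry is $\pE\brac{x^p x^q\,(\bigotimes_{i\in S}M_i)_{r,s}}$. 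Since the entries of each $M_i$ are ordinary polynomials and $\pE$ is linear, each such block is an affine-linear function of $v$, so requiring all blocks to be positive semidefinite carves out an affine slice of a product of PSD cones, i.e. a spectrahedron $\cK\subseteq\R^N$. By the definition of $\models_\ell$ in the excerpt, $v\in\cK$ is exactly the statement that the pseudo-distribution with moments $v$ satisfies $\cA$; and the total ambient matrix dimension plus number of linear equalities is $(n+\card{\cA})^{O(d)}$.

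Next I would decide feasibility of $\cK$ approximately by the ellipsoid method with a standard separation oracle: given a candidate $v$, assemble each block in time $(n+\card{\cA})^{O(d)}$, compute a minimum eigenvalue and corresponding eigenvector $w$ of each block $B(v)$; if some $w^{\sT}B(v)w < -\epsilon$ then, because $w^{\sT}B(\cdot)w$ is affine in $v$, this is a separating hyperplane, and otherwise $v$ is $\epsilon$-approximately feasible. The ``explicitly bounded'' hypothesis supplies an explicit $R$ with $\cK\subseteq\Set{v:\norm{v}_2\le R}$ (coming from an axiom such as $R^2 - \norm{x}_2^2\ge 0$, which bounds every monomial pseudo-moment), which bounds the volume of the initial ellipsoid; combined with satisfiability — $\cK\ne\emptyset$, so a point exists to approximate to any target accuracy — the ellipsoid method terminates after $\poly(N,\log(R/\epsilon))$ iterations, each costing the oracle time above, for a total running time $(n+\card{\cA})^{O(d)}$. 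Feeding the returned approximate point back through the block structure, with a negligible PSD-projection cleanup, yields a genuine degree-$d$ pseudo-distribution satisfying $\cA$ up to the requested accuracy.

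Finally I would note, for the ``same soundness properties'' remark, that this SDP is the Lagrangian dual of the matrix SoS proof system of the excerpt: the tensor-product and conjugation inference rules generate precisely the cone dual to $\cK$, so whenever $\cA\not\vdash_d\{A\succeq 0\}$ for a target matrix polynomial $A$ the SDP returns a pseudo-distribution with $\pE\brac{w^{\sT}A(x)w}<0$ for some low-degree $w$. The main obstacle I anticipate is bookkeeping rather than conceptual. One must verify that the finitely many extended moment matrices above genuinely capture the infinitary quantifier ``$\pE_D H(\bigotimes_{i\in S}M_i)H^{\sT}\succeq 0$ for all matrix polynomials $H$'': this holds because an arbitrary bounded-degree $H$ is a linear combination of elementary multipliers (a monomial times a standard basis matrix), so positive semidefiniteness on the elementary multipliers — equivalently, of the single big block — forces it for all $H$. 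And one must track bit complexity so that the ellipsoid iterates and the final output are representable in $\poly$ bits, which is exactly where explicit boundedness (and acceptance of an $\epsilon$-approximate, rather than exact, solution) is used.
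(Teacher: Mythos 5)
Your proposal is correct and is exactly the standard moment-relaxation-plus-ellipsoid argument that the paper implicitly relies on: the paper states this theorem without proof, remarking only that the matrix case has "proofs of identical substance" to the scalar case, and your write-up is precisely that scalar argument lifted to the block/localizing-matrix setting (including the correct reduction of the quantifier over all multipliers $H$ to PSD-ness of a single block matrix, and the correct role of explicit boundedness and approximate feasibility). No gaps relative to what the paper asserts.
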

\Cref{thm:polyn-matrix-sdp} immediately implies that provided the degree of the scalar polynomials in the SoS HES hierarchy is constant and the matrices are polynomial in size, there exists a $\mathsf{poly}(n)$-sized SDP for the feasibility problem, just as in the scalar case.

\begin{lemma}[Soundness]
  If $D \models_{\ell} \cA$ for $D$ a pseudo-distribution and $\cA$ a system of matrix polynomial constraints, and there exists a matrix sum-of-squares proof $\cA \vdash_{\ell'} \cB$, then $D \models_{\ell\cdot\ell'} \cB$.
\end{lemma}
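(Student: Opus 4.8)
The plan is to unfold the definition of the matrix sum‑of‑squares proof witnessing $\cA \vdash_{\ell'} \cB$ and substitute it into the definition of $D \models \cB$, thereby reducing everything to the hypothesis $D \models_{\ell} \cA$. Since $\cB$ is a system, I read $\cA \vdash_{\ell'} \cB$ as $\cA \vdash_{\ell'} \{B \succeq 0\}$ for each constraint $\{B\succeq 0\}$ of $\cB$, and it suffices to verify the defining condition of $\models$ for one such target at a time: for every multiset $T$ of constraints of $\cB$ and every admissible left‑multiplier $H$ I must show $\pE_D\, H\bigl(\bigotimes_{t\in T} B_t\bigr)H^\sT \succeq 0$, where each $B_t \in \cB$ carries its witness $B_t = \sum_{S,j} P^{(t)}_{S,j}\bigl(\bigotimes_{i\in S}M_i\bigr)(P^{(t)}_{S,j})^\sT$ of degree $\le \ell'$ from the axiom matrices $M_1,\dots,M_m$ of $\cA$ (scalar constraints, on either side, are the $1\times1$ special case and need no separate treatment).

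First I would expand $\bigotimes_{t\in T} B_t$ by distributing the tensor product over the sums defining each $B_t$, obtaining a finite sum — indexed by a choice of summand $(S_t,j_t)$ for each $t\in T$ — of terms $\bigotimes_{t\in T}\bigl[P^{(t)}_{S_t,j_t}\bigl(\bigotimes_{i\in S_t}M_i\bigr)(P^{(t)}_{S_t,j_t})^\sT\bigr]$. Repeated use of the mixed‑product property $(A\otimes C)(B\otimes D) = (AB)\otimes(CD)$ rewrites each such term as $\bigl(\bigotimes_t P^{(t)}_{S_t,j_t}\bigr)\bigl(\bigotimes_t\bigotimes_{i\in S_t}M_i\bigr)\bigl(\bigotimes_t P^{(t)}_{S_t,j_t}\bigr)^\sT$. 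The factors $\bigotimes_t\bigotimes_{i\in S_t}M_i$ are then reordered into $\bigotimes_{i\in S'}M_i$ with $S' := \biguplus_t S_t$ the multiset union listed in some fixed canonical order; as the surrounding text already notes for products of the $M_i$, this reordering is conjugation by a fixed permutation matrix $\Pi$. Folding $H$, $\bigotimes_t P^{(t)}_{S_t,j_t}$ and $\Pi$ together into one matrix polynomial $\hat H$, every term of the expansion of $H\bigl(\bigotimes_{t\in T}B_t\bigr)H^\sT$ takes the canonical form $\hat H\bigl(\bigotimes_{i\in S'}M_i\bigr)\hat H^\sT$. Applying $\pE_D$ term by term using linearity, the hypothesis $D\models_{\ell}\cA$ applied to the multiset $S'$ over $[m]$ and the multiplier $\hat H$ gives $\pE_D\,\hat H\bigl(\bigotimes_{i\in S'}M_i\bigr)\hat H^\sT \succeq 0$ for each term, and the sum of positive semidefinite matrices is positive semidefinite, so $\pE_D\, H\bigl(\bigotimes_{t\in T}B_t\bigr)H^\sT \succeq 0$.

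The only point requiring care is the degree bookkeeping: I must check that the admissibility bound assumed for $H$ against $\cB$, namely $|T|\,\ell\ell' + 2\deg(H)\le d$ where $d$ is the degree of $D$, implies the bound $|S'|\,\ell + 2\deg(\hat H)\le d$ needed to invoke $D\models_\ell\cA$. Charging each axiom invocation its degree‑$\ell$ budget, the degree‑$\le\ell'$ bound on the summand $P^{(t)}_{S_t,j_t}\bigl(\bigotimes_{i\in S_t}M_i\bigr)(P^{(t)}_{S_t,j_t})^\sT$ means $|S_t|\,\ell + 2\deg(P^{(t)}_{S_t,j_t})\le \ell'$; summing over $t\in T$ and using $\deg(\hat H) = \deg(H)+\sum_t\deg(P^{(t)}_{S_t,j_t})$ (since $\Pi$ is constant) gives $|S'|\,\ell + 2\deg(\hat H) \le 2\deg(H) + |T|\,\ell' \le 2\deg(H) + |T|\,\ell\ell' \le d$, as $\ell\ge1$.

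I expect this degree accounting to be the only genuinely fiddly part — in particular, making precise the convention that the "degree" of a summand $P\bigl(\bigotimes_{i\in S}M_i\bigr)P^\sT$ in a matrix SoS proof is $|S|\,\ell + 2\deg(P)$ (each axiom factor charged $\ell$), which is exactly what makes the composition bound $\ell\cdot\ell'$ rather than something smaller. Everything else is a direct translation of the scalar soundness argument with ordinary multiplication replaced by the tensor product, as flagged in the surrounding text; the reordering‑by‑permutation‑conjugation step is what makes that translation go through cleanly, and it is harmless because conjugating by a constant permutation matrix commutes with $\pE_D$ and preserves positive semidefiniteness.
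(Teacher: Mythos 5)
Your proof is correct, and it is exactly the argument the paper intends but omits: the paper states this lemma without proof, remarking only that the matrix-SoS soundness proof is ``of identical substance'' to the scalar case, and your write-up is precisely that scalar argument with multiplication replaced by tensor products, the mixed-product identity, and conjugation by a fixed permutation matrix (which the paper itself licenses when it notes that different tensor orderings are related by permutation conjugation). The one place you go beyond what the paper pins down is the degree convention for a summand $P\bigl(\bigotimes_{i\in S}M_i\bigr)P^{\sT}$ --- the paper's definition of $\vdash_{\ell'}$ only says ``each of the summands have degree at most $\ell$'' without specifying how axiom factors are charged --- but the convention you adopt ($|S|\,\ell + 2\deg(P)\le\ell'$) is the one that makes the stated $\ell\cdot\ell'$ composition come out, and you flag it explicitly, so this is a reasonable and transparent reading rather than a gap.
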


\subsubsection{Spectral norms of matrix formal variables}

We may sometimes write a constraint as $\left\{ \|M\| \le \alpha \right\}$, which is understood as equivalent to the pair of constraints $\left\{M \preceq \alpha \Id, -M\preceq \alpha \Id\right\}$. This gives the encoding for the constraint,
\[
    \opnorm{\E_i[v_iv_i^\sT]} \le \frac{1}{\delta n}\, ,
\]
in the SoS HES hierarchy, and this is explained further in~\Cref{sec:appendix-sdp}.

\subsection{Pseudo-expectations over moments of filtered probability spaces}\label{sec:moments-of-filtered-spaces}

We formalize filtered probability distributions in SoS-compatible language, laying the groundwork to define an SoS program to optimize over them.

The formal variables of the pseudo-distribution will be the \emph{moments of a filtered distribution} over $v_1, \dots, v_k$ rather than the variables $v_1, \dots, v_k$ themselves.
In this way, the program is a relaxation of optimizing a function over some family of distributions.

This allows us to refer to polynomials of moments in a sum-of-squares proof, rather than just polynomials of the variables (e.g. $\pE (\E x)^2(\E y)$ instead of $\pE x^2y$).
In this way, we can impose the constraint that the distribution over a pseudo-distribution variable $v_1$ is Gaussian without imposing any specific Gaussian (by constraining the pseudo-moments of the cumulants of $v_1$).
We impose a hierarchical structure on $v_1$, \dots, $v_k$, with a filtration so that $v_1$ is sampled first, then $v_2$, and so on.
Although we follow the structure of a filtration, the definitions in the following sections will be self-contained and will not require understanding the formalism of filtered probability spaces. Regardless, the definitions of the polynomial grammar provided here are consistent with the rules of conditional expectations over \emph{filtrated} $\sigma$-algebras, and this can be easily verified~\cite[Chapter 1.1]{ambrosio2012introduction}.

The raw variables of the SDP are the pseudo-expectations of ``monomials'' of this system, such as $\pE\E_1[v_1]$, $\pE\E_1[\E_2[v_2]]$, $\pE\E_1[v_1^3\E_2[v_2^2]]$, and $\pE\E_1[v_1^2(\E_2[v_2])^3(\E_2[v_2^4])^2]$.
The default constraints in the SDP are the usual non-negativity of square polynomials and the scaling $\pE 1 = 1$.

In a later subsection, we will introduce a syntactic extension of this proof system that allows us to reason about ``nearly pointwise'' properties of the random variables themselves, by reducing those properties to statements about their moments. 

We also introduce notation for \emph{conditional constraints}.
Let $p$ be any polynomial in $v_1$, \dots, $v_k$ and their expectations under $\E_{i}$, $\E_{i+1}$, \dots, $\E_{k}$.
Then $\{p \ge 0\}$ means (intuitively) that $p\ge 0$ for every value of $v_1, \dots, v_{i-1}$ in the support of the distribution. More formally (and axiomatically for sum-of-squares), it asserts that
\[
    \fE_1\dots\fE_{i-1}pq^2 \ge 0\, ,
\]
for any $q$ which is a polynomial expression in $v_1$, \dots, $v_{i-1}$ and/or their moments under $\fE_1, \dots, \fE_k$ which falls under the degree bound. This is more formally stated in~\Cref{sec:conditional-constraints}.

\subsubsection{Properties of the filtered expectation operator}
\label{sec:filtered-properties}

The expectation at time 1 of the expectation at time 2 of some quantity is the same thing as the expectation at time 1 of the same quantity.
Symbolically, this can be expressed as $\fE_1\fE_2 = \fE_1$.
When reasoning about the expectation of some expression at a particular level of the filtration, we will sometimes want that expression to include a reference to another expectation at the same or a preceding level of the filtration (for a canonical example of this, see \Cref{prop:var-nonneg}).
In this case, the preceding-level expectation is actually a constant with respect to the level we are reasoning about.
Concretely in terms of expectation operators, this means that $\fE_2\fE_1 = \fE_1$.
Both of these situations can be summed up in general as
\[ \fE_j\fE_i = \fE_{\min(i,j)}\,. \]

Each expectation operator is associative with itself in multiplication:
\[\fE_j (p \fE_j q) = (\fE_j p) (\fE_j q) \,.\]

Finally, the expectation operator $\fE_i$ commutes with multiplication by any polynomial of $v_1, \dots, v_{i-1}$:
\[\fE_i q(v_1,\dots,v_{i-1})p = q(v_1,\dots,v_{i-1})\fE_i p\qquad \forall q\in\R[v_1,\dots,v_{i-1}] \,.\]

\subsubsection{Grammar}
\label{sec:grammar}
\newcommand{\cfg}{\mathsf{cfg}}
\newcommand{\ecfg}{\mathsf{ecfg}}
\newcommand{\mcfg}{\mathsf{mcfg}}

The set of allowable expressions is captured by a (redundant) context-free grammar. The language generated is then \emph{truncated} up to a degree bound, and \emph{quotiented} by a set of inequalities that split it into an equivalence class. These expressions can be expressed in a monomial basis, with the equivalences being encoded as inequalities in the SDP.

\begin{definition}[Context-free grammar]\label{def:cfg}
    The context-free grammar of allowable expressions is defined as:
    \begin{align*}
        \cfg[0] &:= \poly(v_1, \dots, v_k)
        \\\ecfg[i] &:= \fE_1 \cfg[i] \mid \dots \mid \fE_k \cfg[i] \mid \cfg[i]
        \\\cfg[i] &:= \poly(\ecfg[i-1])\,.
    \end{align*}
\end{definition}
Here, $\poly(v_1, \dots, v_k)$ refers to the set of all scalar multi-variate polynomials in the variables $v_1, \dots, v_k$, where $v_i \in \R^n$ for every $i \in [k]$.
The set of expressions $\ecfg[i]$ allows you to optionally put any of the $k$ expectation operators in front of any expression in $\cfg[i]$. 
The set of expressions $\poly(\ecfg[i-1])$ is the set of polynomials in any of the expressions in $\ecfg[i-1]$.

The term $\cfg$ without a number refers to $\cfg[i]$ for some large enough $i$ depending on the context (usually the degree of the pseudo-distribution).

We define formally the notion of matrix polynomial used in the proofs.
\begin{definition}[Matrix polynomial]
    In compliance with the matrix SoS proof system introduce in~\Cref{sec:matrix-sos-system}, a matrix polynomial at level $i \in [k]$ is given by a matrix $M_i \in \R^{n^{m_1} \times n^{m_2}}$ such that,
    \[
        (M_i)_{j,k} = p_{j,k}\, ,
    \]
    where $p_{j,k} \in \cfg[i]$.
\end{definition}  

The set of all valid matrix polynomials for $\cfg[i]$ is denoted as $\mcfg[i]$. If a matrix $M_i \in \mcfg[i]$ has entries that are filtrated expectations of polynomials, such that it can be decomposed into a sum of rank-1 tensors, then such matrices may be equivalently ``flattened'' into tensors in the obvious way. This fact is often used during the analysis in~\Cref{sec:hods}, and is consistent with an equivalent representation followed under the equivalences implied by~\Cref{def:equivalences}.

The matrix polynomials (up to a degree bound) are, as a result, the set of expressions that (up to a degree bound) can be reasoned about in the SoS proof system.
\subsubsection{Equivalences}
\label{sec:equivalences}
The set of polynomials generated by $\cfg[i]$ is to be treated as an equivalence class of polynomials under certain algebraic structure. This algebraic structure is reflected as a mixture of the structure generated by a \emph{filtration} and an \emph{algebra} (over a vector space).
\begin{definition}[Equivalence classes of the $\cfg$-algebra]\label[Definition]{def:equivalences}
    Let $i \in [k]$ be a level of filtration. Then, the filtered expectation $\fE_i$ obeys the following equivalences when acting on scalar multi-variate polynomials:
    \begin{description}
        \item [Linearity of expectation] For any $p, q \in \cfg[k]$, $\fE_i\left(p + q\right) = \fE_i p + \fE_i q$.
        \item [Constant expectations are idempotent] For any $p \in \mcfg$ and $p \in \R^{m_1 \times m_2}$, $\fE_i p = p$.
        \item [Ring laws for polynomials] For any $p, q, r \in \cfg[k]$, the following is true:
            \[
                \fE_i\left(p(q + r)\right) = \fE_i(pq) + \fE_i(pr)\,,\, \fE_i((p+q)r) = \fE_i (pr) + \fE_i(qr)\, .
            \]
            \[
                \fE_i\left(pq\right) = \fE_i(qp)\, .
            \]
            \[
                \fE_i[(pq)r] = \fE_i[p(qr)]\, .
            \]
        \item [Towering property of filtrations] Given two filtrated expectations $\fE_i$ and $\fE_j$, the following holds,
            \[
                \fE_i\fE_j(p) = \fE_j\fE_i(p) = \fE_{\min(i,j)}(p)\, ,
            \]
            for any $p \in \cfg[l]$.        
        \item [Filtrated associativity] Given two $p, q \in \cfg[i]$, the following holds,
            \[
                \fE_j (p \fE_j q) = (\fE_j p) (\fE_j q) \,.
            \]
        \item [Filtrated commutativity] Given a polynomial $p \in \cfg[i]$ and $q \in \cfg[i]$, such that $q \in \R[v_1,\dots,v_{i-1}]$, the following holds,
            \[
                \fE_i q(v_1,\dots,v_{i-1})p = q(v_1,\dots,v_{i-1})\fE_i p\, .
            \]
    \end{description}
        
\end{definition}

We will also allow vector, matrix, and tensor variables in the grammar, equipping it with a tensor product $\otimes$ with the usual properties of a tensor algebra.

It will always be possible to simplify any expression in $\E_1 \cfg[i]$ for any $i$ into a sum of ``monomials''.
The SoS program will optimize the pseudo-expectations of these monomials.

\subsubsection{Monomial basis \& truncated filtered-polynomial algebras}\label{sec:sos-poly-basis}

\begin{definition}[Monomial basis (modulo equivalences)]
\label[Definition]{def:monomial-basis}
Let $v_1$, ..., $v_k$ be vector-valued variables in $\R^n$.
\begin{itemize}
    \item Let
    \[\mon{k} := \{1, v_k, v_k^{\ot 2}, \dots\}\,.\]
    \item For $i \in [k-1]$, let 
    \[\mon{i} := \left\{1, v_i, v_i^{\ot 2}, \dots\right\} \otimes \left(\{1\} \cup \left(\fE_{i+1}\mon{i+1}\right) \cup \left(\fE_{i+1}\mon{i+1}\right)^{\ot 2} \cup \dots\right)\,,\]
    where the operations $\ot$, $\fE_{i+1}$, and $(\cdot)^{\ot 2}$ are applied elementwise when operating on sets.
    For example, $\{a,b\} \ot \{x,y\} = \{a\ot x,a\ot y,b\ot x,b\ot y\}$ and $(\fE_{i+1} \{x,y\})^{\ot 2} = \{(\fE_{i+1}x)^{\ot 2}, (\fE_{i+1}y)^{\ot 2}, (\fE_{i+1}x)\ot(\fE_{i+1}y), (\fE_{i+1}y)\ot(\fE_{i+1}x)\}$.
    \item Let 
    \[\mon{0} := \fE_1\mon{1}\,.\]
    \item Let $\mathtt{closure}(\mon{0})$ be the formal real vector space generated by all terms in $\mon{0}$.
\end{itemize}
\end{definition}

\begin{definition}
    \label[Definition]{def:expression-degree}
     The \emph{degree} of an expression satisfies $\deg (\fE_{i+1} p) = \deg p$, and otherwise accumulates as usual with respect to addition, multiplication, and exponentiation.
    Let $\cfg[i]^{\le d}$ and $\mon{0}^{\le d}$ be the intersections of $\cfg[i]$ and $\mon{0}$ with the set of expressions of degree at most $d$.
\end{definition}

\begin{proposition}
    Every expression in $\fE_1\cfg[i]$ is equivalent under the relations in \Cref{sec:equivalences} to some element of $\mathtt{closure}(\mon{0})$ of the same degree.
\end{proposition}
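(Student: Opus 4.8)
The plan is a normal-form argument: a strong induction on the grammar level $i$ (with a secondary induction on degree) in which an arbitrary expression of $\fE_1\cfg[i]$ is rewritten step by step, using only the six relations of \Cref{sec:equivalences}, until it is manifestly a finite $\R$-linear combination of elements of $\mon{0}$. Before starting I would record the one structural fact that makes the ``same degree'' clause automatic: every rewriting rule preserves the degree function of \Cref{def:expression-degree}. Linearity of expectation and the ring laws preserve it by definition; the towering identity $\fE_i\fE_j=\fE_{\min(i,j)}$ preserves it because $\deg\fE_i p=\deg p$; filtrated associativity and commutativity merely rearrange factors. This also lets the secondary induction on degree terminate.

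For the inductive step, take $e=\fE_1 P$ with $P\in\cfg[i]=\poly(\ecfg[i-1])$. Expanding $P$ with the ring laws writes it as a finite sum of products of atoms from $\ecfg[i-1]$, and by linearity of $\fE_1$ it suffices to normalize $\fE_1[b_1\cdots b_m]$ for a single such product, where each $b_t$ is either a bare element of $\cfg[i-1]$ or of the form $\fE_{j_t}[c_t]$ with $c_t\in\cfg[i-1]$. Applying the induction hypothesis to each $c_t$ and to each bare $b_t$ (all of level $\le i-1$) and re-expanding with the ring laws and linearity reduces us to normalizing $\fE_1$ of a product whose every factor is either a monomial in bare $v_j$'s or an expression $\fE_j[\mu]$ with $\mu$ already in the normal form appropriate to level $j$, that is, a combination of terms $v_j^{\ot a}\otimes\bigotimes_\ell\fE_{j+1}[n_\ell]$ with each $n_\ell$ recursively of the analogous shape. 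From here one puts the operators into place: first pull every $\fE_1$-prefixed factor out through the outer $\fE_1$ by filtrated associativity (collapsing adjacent $\fE_1$'s via $\fE_1\fE_1=\fE_1$ so a single outermost $\fE_1$ survives); then, working from the innermost operators outward, use the towering property to flatten any out-of-order nesting — an $\fE_{j'}$ inside an $\fE_j$ with $j'\le j$ is either collapsed by $\fE_j\fE_{j'}=\fE_{j'}$ or, being then measurable at level $\le j-1$, pulled out by filtrated commutativity — and use filtrated commutativity to move the lowest-indexed variables and lowest-level expectation blocks outward. By construction each surviving summand then has the shape $\fE_1[\,v_1^{\ot a}\otimes\bigotimes_\ell\fE_2[n_\ell]\,]$, i.e. an element of $\fE_1\mon{1}=\mon{0}$, so the whole sum lies in $\mathtt{closure}(\mon{0})$.

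The base case $i=1$ is the same manipulation without the recursive substitution: after expanding into monomials in bare $v_1,\dots,v_k$, rewrite $\fE_1=\fE_1\fE_2\cdots\fE_k$ by iterating the towering identity and then, from $\fE_k$ outward, repeatedly split off the lowest-indexed variables by filtrated commutativity, so that $\fE_k$ sees only powers of $v_k$, $\fE_{k-1}$ sees only powers of $v_{k-1}$ and completed $\fE_k$-blocks, and so on down to $\fE_1$, landing again in $\fE_1\mon{1}=\mon{0}$.

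I expect the main obstacle to be exactly this last reorganization step. One needs a clean sub-lemma that any nested expression $\fE_j[\cdots\fE_{j'}[\cdots]\cdots]$ with $j'\le j$ can be flattened using only towering and filtrated associativity/commutativity, and one must verify that after all rewriting every surviving nested $\fE_{j'}$-block references only the variables $v_{j'},v_{j'+1},\dots$ — precisely the index-ordering condition built into the definition of $\mon{j}$. A secondary point of care is that factors prefixed by $\fE_1$ are level-$0$ ``constants'', so one must check that carrying them along after pulling them outside the outer $\fE_1$ does not leave $\mathtt{closure}(\mon{0})$; this is where one uses that $\fE_1$ of a normalized expression is itself a combination of $\mon{0}$ terms and that such constant prefactors can be reabsorbed by filtrated associativity, so the final object is genuinely of the $\mon{0}$ shape. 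Everything else is routine term rewriting, and, by the first paragraph, the degree is preserved throughout, so the equivalent normal form has the same degree as $e$.
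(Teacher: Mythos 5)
The paper states this proposition without proof, so there is no argument of record to compare yours against. Your normal-form strategy --- induction on the grammar level, rewriting with the ring laws, towering, filtrated associativity and filtrated commutativity, and observing that every rewrite rule preserves the degree of \Cref{def:expression-degree} --- is surely what the authors intend, and most of it goes through. One small slip along the way: to pull a factor $\fE_{j'}[c]$ with $j'\le j$ out of an enclosing $\fE_j$, the correct tool is towering (rewrite the factor as $\fE_j\fE_{j'}[c]$) followed by filtrated \emph{associativity}; filtrated commutativity as defined applies only to literal polynomials in $v_1,\dots,v_{j-1}$, not to expectation expressions. You should also state the strengthened induction hypothesis explicitly, i.e.\ a normal form for $\fE_j\cfg[i']$ for every $j$, not just $j=1$, since the inner factors $\fE_{j_t}[c_t]$ you normalize are of that more general shape.

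The genuine gap sits exactly at the point you flag as a ``secondary point of care'' and then dismiss. Products of fully-expected factors are unavoidable: for instance $\fE_1\bigl[(\fE_1 v_1)^2\bigr]$ lies in $\fE_1\cfg[1]$, and filtrated associativity plus towering reduce it to $(\fE_1 v_1)\ot(\fE_1 v_1)$, which is a \emph{product} of two elements of $\mon{0}$, whereas $\mathtt{closure}(\mon{0})$ is defined as the vector space they span. Your proposed repair --- reabsorbing the constant prefactor by filtrated associativity --- yields $\fE_1[v_1\ot\fE_1 v_1]$, but $v_1\ot\fE_1 v_1$ is not an element of $\mon{1}$: by \Cref{def:monomial-basis}, $\mon{1}$ admits only tensor factors of the form $v_1^{\ot a}$ and $\fE_2\mon{2}$, never an $\fE_1[\cdot]$ factor. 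Nor is $(\fE_1 v_1)^{\ot 2}$ identified with any linear combination of degree-2 monomials such as $\fE_1[v_1^{\ot 2}]$, as one sees by evaluating on actual filtered distributions, which model the relations by \Cref{sec:actual-conditional}. What your rewriting actually produces is a polynomial in elements of $\mon{0}$ --- which is almost certainly what the authors need, since their SDP takes pseudo-expectations of products $pq^{\sT}$ with $p,q\in\mon{0}$ --- so the statement closes if $\mathtt{closure}(\mon{0})$ is read as the algebra generated by $\mon{0}$. But under the paper's literal definition your reabsorption step fails, and the proof is not complete as written.
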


This reduction to the monomial basis allows us to define a pseudo-expectation operator.

\begin{definition}[Pseudo-Expectation of Filtered Expectations]\label[Definition]{def:filtered-pseudo}
    A degree-$d$ pseudo-expectation of filtered distributions $\pE$ is a linear functional on $\fE_1\cfg[d]^{\le d}$,
    \[
        \pE: \fE_1\cfg[d]^{\le d} \to \R\, , 
    \]
    that satisfies the following constraints:
    \begin{description}
        \item[Scaling]: $\pE \fE_1 1 = 1$.
        \item[Equivalances]:
        $\pE \fE_1 p = \pE \fE_1 q$ for all expressions $p$ and $q$ that are equivalent to each other under the equivalence relations in \Cref{sec:equivalences}.
        \item[Non-negativity of squares]: $\pE \fE_1 p^2 \ge 0$ for all $i \in [k]$ and $p \in \cfg[d/2]^{\le d/2}$.
    \end{description}
\end{definition}

These pseudo-expectations can be optimized by semi-definite programs since the non-negativity of squares condition can be stated as the positive semi-definiteness of the block matrix whose blocks are $\pE pq^{\sT}$ for each $p,q \in \mon{0}^{\le d}$ up to the degree bound.

\subsubsection{Filtered conditional inequalities in SoS proofs}\label{sec:conditional-constraints}

For $p \in \cfg[d]$ (or $\mcfg[d]$), we write 
\[ \left\{p \succeq 0\right\} \]
to mean 
\[ \fE_1 qpq^{\sT} \succeq 0  \]
and furthermore 
\[ \pE\fE_1 qpq^{\sT} \succeq 0  \]
for all expressions $q \in \cfg[d/2]$ fitting under the degree bound of the pseudo-expectation.

This can be implemented as a constraint in a SDP that searches for a pseudo-distribution by writing the block matrix with blocks $\pE rps^{\sT}$ for each $r,s \in \mon{0}^{\le d}$ up to the degree bound.

\begin{proposition}[Idempotence of $\E_i$ for conditional constraints]
\label[Proposition]{prop:add-conditioning}
For every expression $p \in \cfg[k]$ and every $i$,
\[\left\{p \succeq 0\right\} \;\proves\; 
\left\{\fE_i p \succeq 0\right\}\,.
\]
\end{proposition}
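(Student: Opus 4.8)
The statement is that $\{p \succeq 0\}$ implies $\{\fE_i p \succeq 0\}$, where by definition the hypothesis unpacks as: $\pE\fE_1 q p q^{\sT} \succeq 0$ for all $q \in \cfg[d/2]$ fitting under the degree bound, and the conclusion unpacks as: $\pE\fE_1 r (\fE_i p) r^{\sT} \succeq 0$ for all $r \in \cfg[d/2]$ under the degree bound. The plan is to take an arbitrary witness polynomial $r$ for the conclusion and exhibit, using only the equivalence relations of \Cref{def:equivalences} (linearity, the towering property $\fE_j\fE_i = \fE_{\min(i,j)}$, filtrated associativity $\fE_j(p\fE_j q) = (\fE_j p)(\fE_j q)$, and filtrated commutativity), a way to rewrite $\fE_1 r(\fE_i p)r^{\sT}$ as a sum of terms of the form $\fE_1 q p q^{\sT}$ covered by the hypothesis, plus terms already known to be PSD. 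Then linearity of $\pE$ and the fact that $\pE$ respects the equivalence relations (\Cref{def:filtered-pseudo}) finish the argument.

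Concretely, I would first reduce to the case where $r$ is itself a polynomial in $v_1, \dots, v_{i-1}$ and their moments under $\fE_1, \dots, \fE_{i-1}$ — i.e., $r \in \cfg[i]$ "measurable at level $i$" — since a general $r \in \cfg[d/2]$ can be expanded in the monomial basis of \Cref{def:monomial-basis} so that each monomial factors as $r = r' \cdot s$ where $r'$ is level-$i$-measurable and $s$ is built from $v_i, v_{i+1}, \dots$ and their higher expectations; the cross terms are handled by bilinearity and symmetry of the quadratic form. For such a level-$i$-measurable $r$, filtrated commutativity gives $\fE_i(r p r^{\sT}) = r (\fE_i p) r^{\sT}$, and then the towering property gives $\fE_1\big(r(\fE_i p)r^{\sT}\big) = \fE_1\fE_i\big(r p r^{\sT}\big) = \fE_1\big(r p r^{\sT}\big)$, which is exactly a hypothesis term with $q := r$. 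Applying $\pE$ and using that $\pE$ is well-defined on equivalence classes yields $\pE\fE_1 r(\fE_i p)r^{\sT} = \pE\fE_1 r p r^{\sT} \succeq 0$.

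The one genuinely delicate point — and the step I expect to be the main obstacle — is the reduction of a general witness $r$ to the level-$i$-measurable case, because a monomial in $\mon{0}$ mixes factors $v_j^{\ot a}$ for $j \ge i$ with expectations $\fE_{j+1}(\cdots)$, and one must check that pulling $\fE_i$ through such a product is legitimate under the grammar's equivalences without the product "escaping" the degree bound or mixing matrix/tensor modes incorrectly. The clean way to handle this is: write $r p r^{\sT}$, apply $\fE_i$, use linearity to distribute over the monomial expansion, and on each monomial use filtrated commutativity to pull out the level-$i$-measurable factor and absorb the remaining (level-$\ge i$) part into a new "inner" polynomial that still lies under the degree bound because $\deg(\fE_{i} w) = \deg w$ (\Cref{def:expression-degree}) — so taking expectations never increases degree. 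Once this bookkeeping is in place, the remaining manipulations are purely formal applications of the equivalence relations, and the soundness of the matrix SoS system (the Soundness lemma together with the inference rules of \Cref{sec:matrix-sos-system}) guarantees that the resulting rewriting is a valid matrix SoS derivation of $\{\fE_i p \succeq 0\}$ from $\{p \succeq 0\}$.
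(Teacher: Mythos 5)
Your argument follows the same route as the paper's: rewrite the conjugated expression so that $\fE_i$ sits outside the multiplier, i.e.\ $r(\fE_i p)r^{\sT} = \fE_i(r p r^{\sT})$, and then collapse $\fE_1\fE_i = \fE_1$ by the towering property so that the resulting term is literally a hypothesis term with $q := r$. The paper's proof consists of exactly these two steps and nothing more; it writes $\{\fE_i p \succeq 0\} \equiv \fE_1\fE_i\, q p q^{\sT} \succeq 0$ without comment, which silently performs the commutation you correctly flag as requiring filtrated commutativity and hence level-$i$-measurability of the witness. For that case, your write-up is the paper's proof with the missing justification supplied.

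Where I push back is on your reduction of a general witness $r$ to the level-$i$-measurable case. After factoring a monomial as $r = r's$ with $s$ built from $v_i, v_{i+1},\dots$ and their higher expectations, the inner expression $s(\fE_i p)s^{\sT}$ cannot be pushed back inside $\fE_i$: filtrated commutativity only moves factors in $v_1,\dots,v_{i-1}$ across $\fE_i$, and the quantities $\fE_1[r(\fE_i p)r^{\sT}]$ and $\fE_1[r p r^{\sT}]$ genuinely differ by a conditional-covariance term. Already for scalar $p$ and $r = (v_i)_j$, filtrated associativity and towering give $\fE_1[(v_i)_j^2\,\fE_i p] = \fE_1[(\fE_i (v_i)_j^2)(\fE_i p)]$, which is not $\fE_1[(v_i)_j^2\, p]$ in general. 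So ``absorbing the level-$\ge i$ part into a new inner polynomial'' does not produce a hypothesis term, and that case is not closed by your sketch. To be fair, the paper's proof does not address it either---as written it is only valid when the multipliers in the conditional constraint $\{\fE_i p \succeq 0\}$ are restricted to expressions in $v_1,\dots,v_{i-1}$ and their moments---so your proposal is, if anything, more explicit about where the real difficulty sits, even though it does not resolve it.
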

\begin{proof}
    Note that,
    \begin{align*}
        &\left\{p \succeq 0\right\} \equiv \fE_1 pqp^\sT \succeq 0\, .
    \end{align*}   
    Similarly,
    \[  
        \left\{\fE_i p \succeq 0\right\} \equiv \fE_1\fE_i qpq^\sT \succeq 0\, ,
    \]
    which is, by the towering property of filtrations, equivalent to
    \[
        \fE_1 \fE_i qpq^\sT = \fE_{\min(1,i)} qpq^\sT = \fE_1 qpq^\sT \succeq 0\, . 
    \]
\end{proof}

\subsubsection{Jensen's inequality and non-negativity of variance}
One of the core proofs that is expressible in this system is the fact that the variance is always nonnegative:
\begin{proposition}
\label[Proposition]{prop:var-nonneg}
\[ \proves \left\{\fE_j v_jv_j^{\sT} - (\fE_j v_j)(\fE_j v_j)^{\sT} \succeq 0 \right\}\,. \]
\end{proposition}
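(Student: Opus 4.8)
The plan is to realize the conditional covariance matrix as the level-$j$ expectation of a manifest Hermitian square, and then to push positivity through $\fE_j$ using \Cref{prop:add-conditioning}.

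First I would set $w := v_j - \fE_j v_j$, which is a legitimate vector-valued expression of $\cfg$ (both $v_j$ and $\fE_j v_j$ are). Then $w\,w^{\sT}$ is a single Hermitian square of matrix polynomials, so $\proves \{w\,w^{\sT}\succeq 0\}$ holds by a trivial degree-$2$ matrix SoS proof (take the single term $P_1 = w$ in the matrix-SoS definition of \Cref{sec:matrix-sos-system}). The matrix analogue of \Cref{prop:add-conditioning}, whose proof carries over verbatim (replace the scalar multiplier by a matrix polynomial and $pq^2$ by $qpq^{\sT}$), then yields $\{w\,w^{\sT}\succeq 0\}\proves\{\fE_j(w\,w^{\sT})\succeq 0\}$.

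The remaining work is a purely formal identity in the $\cfg$-algebra of \Cref{def:equivalences}: that $\fE_j(w\,w^{\sT})$ equals $\fE_j v_j v_j^{\sT} - (\fE_j v_j)(\fE_j v_j)^{\sT}$ as an element of the algebra. Expanding the outer product and applying linearity of $\fE_j$ and the ring (distributivity) laws gives the four terms $\fE_j[v_jv_j^{\sT}]$, $-\fE_j[v_j(\fE_j v_j)^{\sT}]$, $-\fE_j[(\fE_j v_j)v_j^{\sT}]$, and $\fE_j[(\fE_j v_j)(\fE_j v_j)^{\sT}]$. Filtrated associativity $\fE_j(p\,\fE_j q) = (\fE_j p)(\fE_j q)$ together with filtrated commutativity collapses each of the two cross terms to $(\fE_j v_j)(\fE_j v_j)^{\sT}$, and filtrated associativity together with the towering property $\fE_j\fE_j = \fE_j$ collapses the final term to $(\fE_j v_j)(\fE_j v_j)^{\sT}$ as well; summing gives $\fE_j[v_jv_j^{\sT}] - (\fE_j v_j)(\fE_j v_j)^{\sT}$. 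Substituting this into the conclusion of the previous paragraph proves the proposition.

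The only point needing care is the bookkeeping in this last step: one must ensure that every occurrence of $\fE_j v_j$ is recognized as being of the form $\fE_j(\cdot)$ before invoking the filtrated-associativity and towering rules, and, since we work in the tensor algebra rather than a commutative ring, that distributivity and commutativity are applied in their $\ot$-forms with the transposes kept consistent. Writing the argument coordinatewise --- with $m_a := \fE_j v_{j,a}$ and checking $\fE_j[(v_{j,a}-m_a)(v_{j,b}-m_b)] = \fE_j[v_{j,a}v_{j,b}] - m_a m_b$ entrywise --- sidesteps these subtleties and is the cleanest presentation. This step is routine; no genuine obstacle is expected.
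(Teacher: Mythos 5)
Your proposal is correct and follows essentially the same route as the paper's proof: both start from the Hermitian square $(v_j - \fE_j v_j)(v_j - \fE_j v_j)^{\sT} \succeq 0$, push positivity through $\fE_j$ via \Cref{prop:add-conditioning}, and then expand using the filtered-algebra equivalences (the paper merely organizes the expansion slightly differently, collapsing one factor via $\fE_j(v_j - \fE_j v_j) = 0$ rather than collecting all four terms first). No gap.
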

\begin{proof}
In the proof below, all inequalities are to be interpreted conditionally (as in \Cref{sec:conditional-constraints}) and we omit the $\{ \}$ notation for brevity. \\
By \Cref{fact:sos-proof}, we have $(v_j - \fE_j v_j)(v_j - \fE_j v_j)^{\sT} \succeq 0$ and by \Cref{prop:add-conditioning}, $\fE_j (v_j - \fE_j v_j)(v_j - \fE_j v_j)^{\sT} \succeq 0$, so
\begin{align*}
0 &\preceq \fE_j [(v_j - \fE_j v_j)(v_j - \fE_j v_j)^{\sT}]
\\&= \fE_j [v_j(v_j - \fE_j v_j)^{\sT} - (\fE_j v_j)(v_j - \fE_j v_j)^{\sT}]
\\&= \fE_j v_j(v_j - \fE_j v_j)^{\sT} - \fE_j(\fE_j v_j)(v_j - \fE_j v_j)^{\sT}
\\&= \fE_j v_j(v_j - \fE_j v_j)^{\sT} - (\fE_j v_j)(\fE_j(v_j - \fE_j v_j))^{\sT}
\\&= \fE_j v_j(v_j - \fE_j v_j)^{\sT} - (\fE_j v_j)(0)^{\sT}
\\&= \fE_j [v_jv_j^{\sT} - v_j(\fE_j v_j)^{\sT}]
\\&= \fE_j v_jv_j^{\sT} - \fE_j v_j(\fE_j v_j)^{\sT}
\\&= \fE_j v_jv_j^{\sT} - (\fE_j v_j)(\fE_j v_j)^{\sT}
\,.
\end{align*}
\end{proof}

\begin{proposition}[Jensen's inequality for expectations of squares]
\label[Proposition]{prop:expectation-jensens}
\[ \proves\left\{(\fE_j p)^2 \le \fE_j p^2\right\}\,.  \]
\end{proposition}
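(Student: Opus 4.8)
The plan is to derive this by the same two-line manipulation used in the proof of \Cref{prop:var-nonneg}, now specialized to scalar ($1\times 1$) matrix polynomials. First I would observe that $(p - \fE_j p)^2$ is a perfect square in the grammar $\cfg[k]$, so the basic square-positivity axiom gives $\left\{(p - \fE_j p)^2 \ge 0\right\}$, and then \Cref{prop:add-conditioning} upgrades this to $\left\{\fE_j (p - \fE_j p)^2 \ge 0\right\}$.

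The remaining step is to expand $\fE_j(p - \fE_j p)^2$ using only the equivalences of \Cref{def:equivalences}. Linearity of expectation gives
\[ \fE_j (p - \fE_j p)^2 = \fE_j p^2 - 2\,\fE_j\bigl(p\,\fE_j p\bigr) + \fE_j\bigl((\fE_j p)^2\bigr)\,. \]
The filtrated associativity rule $\fE_j(p\,\fE_j q) = (\fE_j p)(\fE_j q)$ turns the middle term into $2(\fE_j p)^2$; since $(\fE_j p)^2$ is constant at level $j$, the idempotence rule collapses the last term to $(\fE_j p)^2$. Hence $\fE_j(p - \fE_j p)^2 = \fE_j p^2 - (\fE_j p)^2$, and combining with the non-negativity above gives exactly $\{(\fE_j p)^2 \le \fE_j p^2\}$.

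I do not expect a real obstacle here; the only thing to check is bookkeeping — that each rewrite stays within the degree budget of the pseudo-expectation and invokes only the listed equivalences, so that the chain of $\proves$ statements is valid at the relevant degree. Alternatively, the statement follows as an immediate corollary of \Cref{prop:var-nonneg}: that proof never uses any property of $v_j$ beyond its being an expression on which $\fE_j$ acts, so reading it with $p$ substituted for $v_j$ yields the scalar inequality directly.
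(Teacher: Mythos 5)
Your proposal is correct and matches the paper's own proof, which consists precisely of the observation that the argument for \Cref{prop:var-nonneg} goes through verbatim with $p$ substituted for $v_j$. The expansion you spell out is exactly the computation implicit in that substitution, so there is nothing to add.
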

\begin{proof}
The same proof as for \Cref{prop:var-nonneg} works, just substituting $p$ for $v_j$.
\end{proof}

\subsubsection{Hermite coefficients}
\label{sec:encoding-hermite-decomps}

We will also introduce SoS variables $ \mathscr{W}_i[A](\alpha)$ for the Wiener-Hermite expansions (\Cref{def:wiener-hermite-expansion}) of each atom in $A \in \fE_{i+1}\mon{i+1}$ with respect to the variable $v_i$, so that for all $\beta$ up to the degree limit of the program,
\[\left\{\,\fE_i\left(A - \sum_{\alpha: |\alpha| = |\beta|} \mathscr{W}_i[A](\alpha)\, \He_{\alpha}(v_i \mid \fE_{i}v_iv_i^{\sT})\right)\,\He_{\beta}(v_i \mid \fE_{i}v_iv_i^{\sT}) = 0\,\right\}\]
is a constraint of the SoS program.
Said Wiener-Hermite expansions also have Wiener-Hermite expansions of their own at one level lower of the filtration: $\mathscr{W}_{i-1}[\mathscr{W}_i[A](\beta)](\alpha)$, for example.

\subsubsection{Sum-of-squares proofs over filtered polynomials in the HES hierarchy}
The proof system which uses the monomial basis defined in~\Cref{def:monomial-basis} (up to a certain degree) is a sum-of-squares proof system. This proof system certifies the non-negativity of polynomials in $\cfg[i]$ via the familiar notion of a sum-of-squares proof, but where the polynomials may not just be over the variables $\{v_i\}_{i=1}^k$, but also over moments up to various levels in the filtration. 

\begin{fact}[Sum-of-squares proofs over $\cfg$ and $\mcfg$]\label[Fact]{fact:sos-proof}
    Let,
    \[
    	\cA = \left\{ \{ p_i \ge 0\}, \{ q_j = 0\} \right\}_{i, j=1}^{m_1,m_2}\, ,
    \]
    with $p_i \in \cfg[f_i]$, $q_j \in \cfg[g_j]$ for every $i, j \in [m_1] \times [m_2]$ and $f_i, g_j \in [k]^2$. A degree-$d$ sum-of-squares proof certifying the non-negativity of $h \in \cfg[i]$ is given as,
    \[
        h = \sum_i s^2_i + \sum_j a_jq_j + \sum_k b_k^2 p_k\, ,     
    \]
    where $\{s_i\}_{i \in \mathsf{poly}(n)}$ have degree $\le d/2$, $\mathsf{deg}(a_jq_j) \le d$ and $\mathsf{deg}(b_k^2p_k) \le d$.
    More generally, for matrix polynomials $h \in \mcfg[i]$, a sum-of-squares proof is given as,
    \[
        h = \sum_i s_is_i^\sT + \sum_j r_jq_j + \sum_k t_k \bigotimes_{j=1}^r p_j t_k^\sT\, ,   
    \]
    with degree bounds as defined in~\Cref{sec:matrix-sos-system} and a semi-algebraic set $\cA'$ where $\{p_j \succeq 0\}_j$ and $\{q_j = 0\}_j$.
\end{fact}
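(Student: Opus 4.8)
The plan is to read the statement as the definitional analogue, for the filtered proof system, of the standard notion of an SoS proof (\Cref{def:sos-proof}): it is definitional in form, but its content is that the extended system inherits the three properties that make SoS useful — the displayed identity is well-formed inside the degree-$d$ truncated filtered algebra, its mere existence is \emph{sound} for the pseudo-expectations of \Cref{def:filtered-pseudo}, and a bounded-degree proof of bounded bit-complexity, when one exists, is produced by a $\mathsf{poly}(n)$-sized semidefinite program. The matrix version ($h \in \mcfg[i]$) will then follow by rerunning the same three steps through the matrix SoS calculus of \Cref{sec:matrix-sos-system}.

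First I would check well-posedness. Using the equivalences of \Cref{def:equivalences} — linearity, the ring laws, the towering property $\fE_i\fE_j = \fE_{\min(i,j)}$, filtered associativity and filtered commutativity — every expression of $\fE_1\cfg[i]$ reduces (by the Proposition preceding \Cref{def:filtered-pseudo}) to a linear combination of monomials of $\mathtt{closure}(\mon{0})$ of the same degree. The degree-accounting rule of \Cref{def:expression-degree} (with $\deg \fE_{i+1}p = \deg p$ and additivity in products and powers) then shows that the hypotheses $\deg s_i \le d/2$, $\deg(a_jq_j)\le d$, $\deg(b_k^2p_k)\le d$ place $h$ and every summand inside $\fE_1\cfg[d]^{\le d}$, so the identity is meaningful there and $\pE$ is defined on each side.

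Next I would prove soundness. Let $\pE$ be a degree-$d$ pseudo-expectation with $\pE \models \cA$, i.e.\ the conditional-constraint encodings of \Cref{sec:conditional-constraints} hold ($\pE\fE_1[\, r p_i r^{\sT}\,] \succeq 0$ and $\pE\fE_1[\, r q_j\,] = 0$ for all admissible $r$). Apply the linear functional $\pE\fE_1(\cdot)$ to $h = \sum_i s_i^2 + \sum_j a_jq_j + \sum_k b_k^2p_k$; by linearity $\pE\fE_1 h = \sum_i \pE\fE_1 s_i^2 + \sum_j \pE\fE_1 a_jq_j + \sum_k \pE\fE_1 b_k^2p_k$. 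The first sum is $\ge 0$ by the non-negativity-of-squares axiom of \Cref{def:filtered-pseudo}; each term of the third is $\ge 0$ by $\{p_k \ge 0\}$ (taking $r = b_k$); and each term of the second vanishes since $\{q_j = 0\}$ forces $\pE\fE_1 a_jq_j = 0$. Hence $\pE\fE_1 h \ge 0$ for every such $\pE$, which is exactly what it means for the decomposition to certify (conditional) non-negativity of $h$; specialising $h$ to a manifest square recovers the way the fact is invoked in \Cref{prop:var-nonneg}. For the matrix form I would repeat this with $\pE\fE_1[\, H(\cdot) H^{\sT}\,]$ for an arbitrary left-multiplying matrix polynomial $H$ under the degree bound, replacing "square" by the PSD blocks in the definition of $\models_\ell$ and using the tensor-product inference rule and soundness lemma of \Cref{sec:matrix-sos-system} term by term.

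Finally I would record the algorithmic consequence: with $d$ constant and the matrices of polynomial size, the degree-$\le d$ pseudo-expectations satisfying $\cA$ form the feasible set of a semidefinite program whose variables are the $\pE$ of the finitely many monomials of $\mon{0}^{\le d}$ and whose constraints are PSD-ness of the moment matrix, the scaling $\pE\fE_1 1 = 1$, and the linear equalities coming from \Cref{def:equivalences} and from $\cA$; by \Cref{thm:polyn-matrix-sdp} this SDP is solvable in $\mathsf{poly}(n)$ time, and by SDP duality either it has no point with $\pE\fE_1 h < 0$ — in which case positivstellensatz duality, exactly as for scalar SoS, extracts a decomposition of the stated shape with polynomial bit-complexity — or a witnessing pseudo-expectation exists. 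The hard part, and where I expect to have to be careful, is the filtered bookkeeping in the soundness step: one must verify that multiplying the axiom $\{q_j = 0\}$, which naturally lives at level $g_j$, by an arbitrary $a_j$ referencing moments at several levels keeps the product inside the degree-$d$ truncated algebra and preserves the equality — this uses filtered commutativity to slide $\fE_i$ past polynomials of $v_1,\dots,v_{i-1}$ and the towering property to collapse all conditioning to level $1$ — and that the monomial-basis reduction of the first step is compatible with these manipulations so that $\pE$ is in fact defined at every intermediate expression. Once that reduction is pinned down, each remaining step is a direct transcription of its ordinary-SoS counterpart.
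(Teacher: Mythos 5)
The paper gives no proof of this Fact at all: it is stated as the \emph{definition} of a sum-of-squares proof over the filtered grammar, in direct analogy with \prettyref{def:sos-proof} for the ordinary system, and the surrounding text simply relies on it (e.g.\ in \prettyref{prop:var-nonneg}) without further justification. Your proposal correctly diagnoses this definitional character and then supplies the three verifications that the paper leaves implicit --- well-posedness of the displayed identity inside the degree-$d$ truncated algebra via the monomial-basis reduction, soundness against the pseudo-expectations of \prettyref{def:filtered-pseudo} under the conditional-constraint encoding of \prettyref{sec:conditional-constraints}, and realizability of the search for such a proof as a $\mathsf{poly}(n)$-sized SDP via \prettyref{thm:polyn-matrix-sdp}. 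Each of these steps is consistent with, and assembled from, machinery the paper does set up elsewhere, so your argument is correct; it is in fact more complete than what the paper provides. The one place where your own caution is warranted is the soundness of the $\sum_j a_j q_j$ terms: the axiom $\{q_j = 0\}$ lives at level $g_j$ of the filtration, and to conclude $\pE\fE_1 a_j q_j = 0$ for an arbitrary multiplier $a_j$ referencing moments at several levels one must use filtered commutativity and the towering property to push all conditioning down to $\fE_1$ while staying under the degree bound --- exactly the bookkeeping you flag. That check goes through using \prettyref{def:equivalences}, so there is no gap.
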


A critical fact to note is that while the monomial basis technically allows for proofs \emph{pointwise} over the variables $\{ v_i\}_{i=1}^k$, the HES SoS hierarchy introduced in~\Cref{eq:the-program} certifies a polynomial that \emph{expects} over the entire filtration. Furthermore, all the constraints in the hierarchy are over \emph{expected} polynomials at various levels of the filtration.

\subsubsection{Insufficiency of joint expectation operator and equivalence with replicas}
\label{sec:replica-equivalance}

It may be tempting to attempt to simplify the above system and express everything in terms of the joint moments of $v_1$, \dots, $v_k$, so that there is only one expectation operator to deal with.

However, doing so discards the power to express what would otherwise be low-degree information: for example, $\E_1(v_1(\E_2 v_2)^2)$ cannot be written as any product of bounded-degree joint moments of $v_1$ and $v_2$.

One may see this by introducing a \emph{replica} of $v_2$: another variable $u$ which has the same joint moments with $v_1$ as $v_2$ does.
Then one can in fact write these expectations-of-polynomials-of-expectations as joint expectations in the replicas: $\E_1(v_1(\E_2 v_2)^2) = \E_1\E_2\E_u v_1v_2u$.
Then the dependency graph between $v_1$, $v_2$, and $u$ is a triangle, whereas if we expressed a nontrivial product of joint moments of $v_1$ and $v_2$ in terms of replicas, the resulting dependency graph would always be disconnected.

\subsubsection{Soundness for actual distributions}
\label{sec:actual-conditional}

Suppose that $\hat{v}_1, \dots, \hat{v}_k$ are drawn from an actual distribution of high-entropy steps that satisfies
\[ \forall \hat{v}_1, \dots, \hat{v}_k \in \supp(\cD) \mathrel{.} \hat{p} \succeq 0\,, \]
where $\hat{p}$ is an expression from $\widehat{\cfg}[k]$, which is the same as $\cfg[k]$ but with the atoms $v_1, \dots, v_k$ replaced with $\hat{v}_1, \dots, \hat{v}_k$ and $\fE_1$, $\fE_2$, \dots, $\fE_k$ replaced with $\E_{\hat{v}_1,\dots,\hat{v}_k}$, $\E_{\hat{v}_2,\dots,\hat{v}_k}$, \dots, $\E_{\hat{v}_k}$ respectively.

Then the pseudo-distribution defined by $\pE\fE_1 q := \E_{\hat{v}_1,\dots,\hat{v}_k} \hat{q}$ for all $q \in \cfg[k]$ satisfies the corresponding pseudo-distribution constraint
\[ \left\{p \succeq 0\right\}\,, \]
since for all $q \in \cfg[k]$\,
\[ 
\pE\fE_1 [qpq^{\sT}]
= \E_{\hat{v}_1,\dots,\hat{v}_k} \hat{q}\hat{p}\hat{q}^{\sT}
= \int \hat{q}\hat{p}\hat{q}^{\sT} d\mu(\hat{v}_1,\dots,\hat{v}_k)
\succeq 0\,,
\]
since the integrand is always PSD.

\subsubsection{Notational variants}
We may sometimes write $\E_{\sigma}$ or $\E_{\sigma_i}$ to mean $\fE_1$ with $\sigma = \sigma_i = (v_1 + \dots + v_{i})/\sqrt{k}$, with the intended meaning that we average over all past iterates.

We may also write $\E_{v}$ or $\E_{v_i}$ to mean $\fE_i$, meaning that we average over only the $i$th iterate, conditioned on all the preceding ones.

Sometimes we write conditional expectation notation $\E[y \mid v_i]\,x$ to mean $(\fE_{i+1} y)x$.

\subsection{Almost-pointwise sum-of-squares proofs}
An important concept in this proof system that is used extensively in~\Cref{sec:hods} is the existence of SoS proofs that work only up to some \emph{bounded} degree. This degree depends on the problem being analyzed, and is reflected in the degree of the largest monomial in the grammar~$\cfg$[i].

\subsubsection{Sub-Weibull algebras}
The HES SoS hierarchy creates joint moments for distributions at a particular distance from the sphere that are compliant with those of sub-Weibull distributions. These families of distributions form an algebra~\cite{bakhshizadeh2023algebra}. 

\begin{definition}[Vector-valued sub-Weibull expressions]\label{def:nu-theta-subweibull}
A vector-valued expression $ p(\sigma_{i})$ is $(\nu,\theta)$-sub-Weibull with respect to $\sigma_i$ if

\[  \fE_{1}\norm{(p(\sigma_{i}))^{\ot k}}_2^2 \le \nu^{k}k^{2\theta k}\, ,\]
for all $k$.
\end{definition}

To prove properties about the closure of sub-Weibull algebras requires access to a SoS friendly \emph{functional} Cauchy-Schwarz inequality.
\begin{proposition}[SoS functional Cauchy-Schwarz inequality]\label[Proposition]{prop:sos-functional-cs}
    Given two polynomials $p(\sigma_i),\,q(\sigma_i) \in \cfg[k]$ for any $i \in [k]$, the following holds,
    \[
        \proves_{4\mathsf{deg}(p)\mathsf{deg}(q)} \left(\fE_1 p(\sigma_i)q(\sigma_i)\right)^2 \le \fE_1(p(\sigma_i)^{2})\fE_1(q(\sigma_i)^{2})\, .
    \] 
\end{proposition}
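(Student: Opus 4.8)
The plan is to prove the inequality by the classical \emph{replica} symmetrization that turns a Cauchy--Schwarz statement into a literal sum-of-squares identity, exactly the device discussed in \Cref{sec:replica-equivalance}. First I would dispose of the degenerate cases: if $\deg(p)=0$ or $\deg(q)=0$ then one of $p(\sigma_i),q(\sigma_i)$ is a scalar and the claim is just a rescaling of \Cref{prop:expectation-jensens}, so I may assume $\deg(p),\deg(q)\ge 1$; in that range $2\big(\deg(p)+\deg(q)\big)\le 4\,\deg(p)\deg(q)$, which is what makes the degree bookkeeping below fit under the stated bound.

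Next I would adjoin to the grammar a \emph{replica} of the entire filtration: fresh atoms $v_1',\dots,v_k'$ together with an expectation operator $\fE_1'$ (the analogue of $\fE_1$ over the replica filtration), subject to the equivalences that $\fE_1'$ obeys all the rules of \Cref{def:equivalences}, that $\fE_1'$ acts as the identity on (and commutes past) any expression free of the replica atoms while $\fE_1$ does the same for the replica atoms, and that the replica is distributed as the original, i.e.\ $\fE_1' r' = \fE_1 r$ for every $\cfg$-expression $r$ (with $r'$ the replica copy of $r$, and $\sigma_i':=\tfrac{1}{\sqrt k}(v_1'+\dots+v_i')$). As explained in \Cref{sec:replica-equivalance}, this is a conservative extension: every monomial in the enlarged grammar collapses back, via these equivalences, to a joint moment of $v_1,\dots,v_k$, so an SoS certificate written with replicas is also one over the original atoms. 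Using these rules I would then rewrite both sides of the target as double expectations,
\[
\fE_1\big(p(\sigma_i)^2\big)\,\fE_1\big(q(\sigma_i)^2\big)=\fE_1\fE_1'\big[p(\sigma_i)^2\,q(\sigma_i')^2\big],\qquad
\big(\fE_1\,p(\sigma_i)q(\sigma_i)\big)^2=\fE_1\fE_1'\big[p(\sigma_i)q(\sigma_i)\,p(\sigma_i')q(\sigma_i')\big],
\]
where the first uses the factorization-and-replica rules and the second uses idempotence of $\fE_1$ (the towering property $\fE_1\fE_1=\fE_1$) to split $(\fE_1 X)^2$ before converting the second factor into a replica copy.

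Subtracting these and averaging over the swap $\sigma_i\leftrightarrow\sigma_i'$, under which $\fE_1\fE_1'$ is invariant, I would obtain the key identity
\[
\fE_1\big(p(\sigma_i)^2\big)\,\fE_1\big(q(\sigma_i)^2\big)-\big(\fE_1\,p(\sigma_i)q(\sigma_i)\big)^2=\tfrac12\,\fE_1\fE_1'\Big[\big(p(\sigma_i)q(\sigma_i')-p(\sigma_i')q(\sigma_i)\big)^2\Big]\,.
\]
The right-hand side is $\fE_1\fE_1'$ of a square, hence non-negative in the HES proof system: starting from $\{(p(\sigma_i)q(\sigma_i')-p(\sigma_i')q(\sigma_i))^2\succeq 0\}$ (a square, by \Cref{fact:sos-proof}) and applying \Cref{prop:add-conditioning} once for $\fE_1'$ and once for $\fE_1$ gives the desired conditional non-negativity. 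The squared expression inside has degree $2(\deg p+\deg q)$, so the whole certificate has degree at most $4\,\deg(p)\deg(q)$, matching the statement.

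The step I expect to be the main obstacle is the first of the two rewrites above together with the justification that the replica adjunction is genuinely conservative: one must check that ``$\fE_1'$ factors out of anything replica-free'' and ``$\fE_1' r' = \fE_1 r$'' are legitimate moves within the formal $\cfg$-algebra of \Cref{def:equivalences} (or harmless syntactic extensions of it), and that the final symmetrization only invokes permutation symmetries already enjoyed by the joint moments of $v_1,\dots,v_k$. Everything else — linearity, the ring laws, filtrated associativity, and the degree accounting of \Cref{def:expression-degree} — is routine.
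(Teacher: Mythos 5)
Your route is the classical replica/symmetrization proof of Cauchy--Schwarz, which is genuinely different from what the paper does: the paper expands $p$ and $q$ in the Wiener--Hermite basis (\Cref{sec:encoding-hermite-decomps}), uses the orthogonality relations of \Cref{prop:sos-orthgonality-hermites} to reduce both sides to expressions in the Fourier coefficients $\hat f(\alpha),\hat g(\beta)$ (which are genuine SDP variables of the program), and applies a scalar Cauchy--Schwarz there, so that the final certificate is a sum of squares of polynomials in those coefficient variables modulo the Hermite-expansion equality axioms. Your algebraic identity
\[
\fE_1\bigl(p^2\bigr)\fE_1\bigl(q^2\bigr)-\bigl(\fE_1\,pq\bigr)^2=\tfrac12\,\fE_1\fE_1'\Bigl[\bigl(p(\sigma_i)q(\sigma_i')-p(\sigma_i')q(\sigma_i)\bigr)^2\Bigr]
\]
is correct as a formal computation under the replica equivalences, and your degree bookkeeping is fine.

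The gap is the step you flag as ``the main obstacle'' and then dismiss as routine: the conservativity of the replica adjunction. The square-positivity axiom you invoke, $\pE\,\fE_1\fE_1'\bigl[(pq'-p'q)^2\bigr]\ge 0$, lives only in the extended system; when you collapse it back to the original grammar via the equivalences $\fE_1'r'=\fE_1 r$, it becomes \emph{exactly} $2\bigl[\fE_1(p^2)\fE_1(q^2)-(\fE_1 pq)^2\bigr]\ge 0$, i.e.\ the inequality to be proven. So \Cref{fact:sos-proof} and \Cref{prop:add-conditioning}, which concern squares of expressions in the \emph{original} grammar, do not apply, and the proof as written is circular. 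What you actually need is that the tensor square of any pseudo-expectation satisfying the paper's axioms again satisfies the doubled axioms --- equivalently, that positivity of the moment matrix $\bigl[\pE\fE_1 m_\alpha m_\beta\bigr]$ implies positivity of its Kronecker square, which has entries that are \emph{products} of SDP variables. This is true semantically, but it is not among the paper's axioms or derived inference rules, and deriving it syntactically is essentially as hard as the proposition itself (it is the same difficulty that blocks the naive ``$\lambda$-trick'' proof, which only yields $\fE_1(q^2)\cdot D\ge 0$ rather than $D\ge 0$). The paper's Hermite-coefficient route exists precisely to avoid this: there the squares in the certificate are squares of honest grammar elements. To repair your argument you would need to either prove the tensorization lemma within the proof system or fall back on a coefficient-level Lagrange identity, at which point you have essentially reproduced the paper's proof.
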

\begin{proof}
    The proof follows by using the Hermite decomposition.
    \begin{align*}
        \left(\fE_1 p(\sigma_i) q(\sigma_i)\right)^2&=_{\text{by~\Cref{sec:encoding-hermite-decomps}}} \left(\fE_1 \sum_{\alpha, \beta}\He_{\alpha}\He_{\beta} \hat{f}(\alpha)\hat{g}(\beta)\right)^2 \\
        &=_{\text{\Cref{prop:sos-orthgonality-hermites}}} \left(\sum_{j=0}^d \sum_{|\alpha| = |\beta| = j}d_j\hat{f}(\alpha)\hat{g}(\beta)\right)^2\\
        &\le_{\text{Cauchy-Schwarz}} \left(\sum_{i=0}^{d}d_i\sum_{\alpha, \beta, |\alpha| = |\beta| = j}\hat{f}(\alpha)\hat{f}(\beta)\right)\left(\sum_{j=0}^dd_j\sum_{\alpha',\beta',|\alpha'| = |\beta'|}\hat{g}(\alpha')\hat{g}(\beta')\right) \\
        &=_{\text{\Cref{prop:sos-orthgonality-hermites}}} \left(\sum_{\alpha, \beta} \hat{f}(\alpha)\hat{f}(\beta)\fE_1 \He_\alpha \He_\beta\right)\left(\sum_{\alpha', \beta'} \hat{g}(\alpha')\hat{g}(\beta')\fE_1 \He_\alpha' \He_\beta'\right)\\
        &= \fE_1 p(\sigma_i)^2 \fE_1 q(\sigma_i)^2\, .     
    \end{align*}
\end{proof}
The above proof can be straightforwardly adapted to the case that $p$ and $q$ are matrix polynomials by using the fact that the tensor product is multiplicative in any norm, along with the use of a \emph{matrix} Cauchy-Schwarz inequality~(\Cref{lem:matrix-cauchy-schwarz}) as opposed to the scalar one used above.

Stated below is the definition of what it means to give sum-of-squares proofs over the HES hierarchy of the fact that some polynomial in the grammar has sub-Weibull tails (up to some fixed degree).
 
\begin{definition}[Sum-of-squares proofs for $(\nu,\theta)$-sub-Weibull-ity]
A vector-valued expression $ p(\sigma_{i})$ is $(d, \nu,\theta,k_{\max})$-provably-sub-Weibull with respect to $\sigma_i$ if there exists a degree-$d$ SoS proof that
\[  \{\text{HES constraints}\} \proves_d \left\{\fE_{1}\norm{(p(\sigma_{i}))^{\ot k}}_2^2 \le \nu^{k}k^{2\theta k}\right\}\]
for all $k \le k_{\max}$ and $d \le \deg(\pE)$.
\end{definition}

Since sub-Weibull distributions form an algebra, it is possible to parameterize their closure under multiplication and addition. We give SoS proofs of this fact under the HES hierarchy, yielding the new sub-Weibull parameters (and up to what degree) this proof holds.
   
\begin{lemma}[SoS proofs of sub-Weibull closure~\cite{bakhshizadeh2023algebra}]
    Let $p(\sigma_i)$ and $q(\sigma_i)$ be two vector valued expressions that are $(\nu_1,\theta_1)$ and $(\nu_2, \theta_2)$ sub-Weibull with respect to $\sigma_i$. Then,
    \[
        \{\text{HES constraints}\}\proves_{4k\mathsf{deg(p)}\mathsf{deg}(q)}\fE_1 \norm{\left(p(\sigma_i) \ot q(\sigma_i)\right)^{\ot k}}_2^2 \le (\nu_1\nu_2)^{k}(k^{2(\theta_1 + \theta_2)k})    \, ,
    \]
    and,
    \[
        \{\text{HES constraints}\}\proves_{4k\mathsf{deg(p)}\mathsf{deg}(q)}\fE_1 \norm{\left(p(\sigma_i) + q(\sigma_i)\right)^{\ot k}}_2^2 \le \max(\nu_1^{k}, \nu_2^{k})k^{\max(2(10 + \theta_1)k,\,2(10 + \theta_2)k)}\, ,
    \]
    for $k \le k_{\max}$ and $4k\mathsf{deg}(p)\mathsf{deg(q)} \le \mathsf{deg}(\pE)$.
    This implies that the vector valued expressions $\norm{\left(p(\sigma_i) \ot q(\sigma_i)\right)^{\ot k}}^2_2$ and $\norm{\left(p(\sigma_i) + q(\sigma_i)\right)^{\ot k}}_2^2$ are also sub-Weibull with parameters that only depend on $\nu_1, \nu_2, \theta_1, \theta_2$ and absolute constants, for any choice of $k \in \Z$.
\end{lemma}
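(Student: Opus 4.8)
The plan is to derive both closure statements from the SoS functional Cauchy--Schwarz inequality (\Cref{prop:sos-functional-cs}) together with two elementary identities that are SoS-provable at the degree of the expressions involved: multiplicativity of the Euclidean norm under tensor products, $\norm{a\ot b}_2^2 = \norm{a}_2^2\,\norm{b}_2^2$, which in particular gives $\norm{a^{\ot k}}_2^4 = \norm{a^{\ot 2k}}_2^2$; and the power-mean bound $\norm{\sum_{j=1}^m x_j}_2^2 \le m\sum_{j=1}^m \norm{x_j}_2^2$, whose witness is the sum of squares $\sum_{i<j}\norm{x_i - x_j}_2^2$. Throughout, $p$ and $q$ are understood in the provably-sub-Weibull sense of \Cref{def:nu-theta-subweibull}, so the hypotheses are themselves bounded-degree SoS statements.

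For the tensor-product bound, I would first rewrite $\norm{(p(\sigma_i)\ot q(\sigma_i))^{\ot k}}_2^2$ as $\norm{p(\sigma_i)^{\ot k}}_2^2\,\norm{q(\sigma_i)^{\ot k}}_2^2$; this is literally the same polynomial after a permutation of tensor modes. Applying \Cref{prop:sos-functional-cs} with the two scalar expressions $\norm{p^{\ot k}}_2^2$ and $\norm{q^{\ot k}}_2^2$ gives
\[ \left(\fE_1\norm{(p\ot q)^{\ot k}}_2^2\right)^2 \le \left(\fE_1\norm{p^{\ot k}}_2^4\right)\left(\fE_1\norm{q^{\ot k}}_2^4\right) = \left(\fE_1\norm{p^{\ot 2k}}_2^2\right)\left(\fE_1\norm{q^{\ot 2k}}_2^2\right)\,. \]
The two right-hand factors are controlled by the sub-Weibull hypotheses for $p$ and $q$ at the \emph{doubled} power $2k$, i.e.\ by $\nu_1^{2k}(2k)^{4\theta_1 k}$ and $\nu_2^{2k}(2k)^{4\theta_2 k}$; taking square roots yields $\fE_1\norm{(p\ot q)^{\ot k}}_2^2 \le (\nu_1\nu_2)^k (2k)^{2(\theta_1+\theta_2)k}$, and folding the harmless factor $2^{2(\theta_1+\theta_2)k}$ into the parameters gives the claimed $(\nu_1\nu_2)^k k^{2(\theta_1+\theta_2)k}$. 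For matrix-valued $p,q$ the identical argument runs with the matrix Cauchy--Schwarz inequality (\Cref{lem:matrix-cauchy-schwarz}) in place of the scalar one, since the quantities being squared are still the scalars $\norm{\cdot}_2^2$.

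For the sum bound, I would expand $(p+q)^{\ot k} = \sum_{f:[k]\to\{p,q\}}\bigotimes_{i=1}^k f(i)$ by multilinearity of $\ot$, apply the power-mean SoS bound to the $2^k$ summands, and group by the number $j$ of copies of $q$ (there being $\binom{k}{j}$ summands equal, up to a mode permutation, to $p^{\ot(k-j)}\ot q^{\ot j}$), obtaining
\[ \norm{(p+q)^{\ot k}}_2^2 \le 2^k\sum_{j=0}^k\binom{k}{j}\,\norm{p^{\ot(k-j)}}_2^2\,\norm{q^{\ot j}}_2^2\,. \]
Taking $\fE_1$ and bounding each cross term via \Cref{prop:sos-functional-cs} and the sub-Weibull hypotheses (at powers $2(k-j)$ and $2j$) exactly as above, then using the monotonicity $(2(k-j))^{2\theta_1(k-j)}\le(2k)^{2\theta_1 k}$ and $(2j)^{2\theta_2 j}\le(2k)^{2\theta_2 k}$ to pull the $k$-dependence outside the sum, the binomial series collapses to $(\nu_1+\nu_2)^k \le (2\max(\nu_1,\nu_2))^k$. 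Collecting the surviving powers of $2$ and $4^{(\theta_1+\theta_2)k}$ and absorbing them into the exponent via $C^k\le k^{10k}$ for $k$ large (and a constant for $k$ small) yields the stated $\max(\nu_1^k,\nu_2^k)\,k^{\max(2(10+\theta_1)k,\,2(10+\theta_2)k)}$.

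The routine parts are the two elementary SoS identities and the binomial arithmetic on the (constant) parameters $\nu_1,\nu_2$; the part that needs care --- and that I expect to be the main obstacle --- is the degree and power bookkeeping against the truncation of the grammar $\cfg$. Because the argument invokes the sub-Weibull hypotheses at the doubled power $2k$ and each use of functional Cauchy--Schwarz (whose SoS proof runs through the Hermite decomposition) multiplies degree, the resulting certificate exists only for $k$ small enough that $2k\le k_{\max}$ and that the Cauchy--Schwarz blow-up stays below $\deg(\pE)$; this is exactly why the closure statement must be stated, as it is, with the degree constraint $4k\,\deg(p)\deg(q)\le\deg(\pE)$ and with parameters holding only up to a suitably shrunken $k_{\max}$. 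One must also verify that $\norm{p^{\ot 2k}}_2^2$ and its Wiener--Hermite expansion genuinely live inside the truncated filtered-polynomial algebra before \Cref{prop:sos-functional-cs} can legitimately be applied to them.
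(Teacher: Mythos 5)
Your proposal is correct and follows essentially the same route as the paper: both parts reduce to the SoS functional Cauchy--Schwarz inequality (\Cref{prop:sos-functional-cs}) applied after doubling the tensor power, with the product case handled via multiplicativity of $\norm{\cdot}_2$ under $\ot$ and the sum case via a binomial expansion combined with the $\ell_1$-to-$\ell_t$/power-mean bound, all loose factors being absorbed into the $k^{O(k)}$ slack. The only (cosmetic) differences are that the paper expands $\norm{p+q}_2^{2k}$ scalarly rather than expanding the tensor power multilinearly, and keeps everything squared until a single degree-reduction step at the end rather than de-squaring term by term; your explicit $(2k)^{2(\theta_1+\theta_2)k}$ bookkeeping is in fact tighter than the paper's own, and the residual $2^{O(k)}$ factor is harmlessly absorbable into the sub-Weibull parameters, as you note.
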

\begin{proof}
    We show both parts separately. The first part follows with small modifications to the proofs given in~\cite[Lemma 2]{bakhshizadeh2023algebra} but the second requires a different approach that is SoS compliant. Throughout the proofs, the equivalences mentioned in~\Cref{sec:equivalences} are implicitly used.
    \begin{enumerate}
        \item The proof for the first part follows directly by an application of the functional Cauchy-Schwarz inequality~(\Cref{prop:sos-functional-cs}).
        \begin{align*}
            &\left(\fE_1 \norm{\left(p(\sigma_i) \ot q(\sigma_i)\right)^{\ot k}}_2^2\right)^2 = \left(\fE_1\left( \norm{p(\sigma_i) \ot q(\sigma_i)}_2^2\right)^k\right)^2 \\
            &= \left(\fE_1\left( \norm{p(\sigma_i)}_2^2\norm{q(\sigma_i)}_2^2\right)^k\right)^2 = \left(\fE_1 \norm{p(\sigma_i)}_2^{2k}\norm{q(\sigma_i)}_2^{2k} \right)^2 \\
            &\le_{\text{\Cref{prop:sos-functional-cs}}} \fE_1\norm{p(\sigma_i)}_2^{4k}\fE_1\norm{q(\sigma_i)}
            ^{4k}_2 = \left(\fE_1 \norm{p(\sigma_i)^{\ot 2k}}_2^2\right)\left(\fE_1\norm{q(\sigma_i)^{\ot 2k}}_2^2\right) \\
            &\le ((\nu_1^22^{4\theta_1})(\nu_2^22^{4\theta_2}))^{k/2} = (\nu_1\nu_2)^k(2^{2(\theta_1 + \theta_2)k})
        \end{align*}
        \item This part follows by a standard triangle inequality~(\Cref{lem:l1-to-lt}) in conjunction with the binomial expansion and functional CS inequality~(\Cref{prop:sos-functional-cs}).
        \begin{align*}
            &\left(\fE_1 \norm{\left(p(\sigma_i) + q(\sigma_i)\right)^{\ot k}}_2^2\right)^2 = \left(\fE_1 \norm{p(\sigma_i) + q(\sigma_i)}_2^{2k}\right)^2\\
            &\le_{\text{\Cref{lem:l1-to-lt}}} 2^{2k}\left(\fE_1\left(\norm{p(\sigma_i)}_2^2 + \norm{q(\sigma_i)}_2^2\right)^k\right)^2 \\
            &= 2^{2k}\left(\sum_{j=0}^k \binom{k}{j}\fE_1\norm{p(\sigma_i)}_2^{2j}\norm{q(\sigma_i)}_2^{2(k-j)}\right)^2\\
            &\le_{\text{\Cref{lem:l1-to-lt}}} k2^{2k}\left(\sum_{j=0}^k \binom{k}{j}^2\left(\fE_1\norm{p(\sigma_i)}_2^{2j}\norm{q(\sigma_i)}_2^{2(k-j)}\right)^2\right) \\
            &\le_{\text{\Cref{prop:sos-functional-cs}}} k2^{2k}\left(\sum_{j=0}^k \binom{k}{j}^2 \fE_1 \norm{p(\sigma_i)}_2^{4j}\fE_1 \norm{q(\sigma_i)}_2^{4(k-j)}\right) \\
            &= k2^{2k}\left(\sum_{j=0}^k \binom{k}{j}^2\fE_1\norm{p(\sigma_i)^{\ot 2j}}_2^2 \fE_1 \norm{q(\sigma_i)^{\ot 2(k-j)}}_2^{2}\right) \\ 
            &\le k2^{2k}\left(\sum_{j=0}^k \binom{k}{j}^2 (\nu_1^{2j}(2j)^{2\theta_12j})(\nu_2^{2(k-j)}(2(k-j))^{2\theta_2 2(k-j)})\right) \\
            &\le_{\text{Assume WLOG $\nu_1 \ge \nu_2$}} \sim  k2^{2k} \cdot k \cdot \binom{k}{k/2}^2 \nu_1^{2k} k^{2(\theta_1 + \theta_2)k} \le \nu_1^{2k}k^{2(\theta_1 + \theta_2 + 20)k} \\
            &\le_{\text{Assume WLOG $\theta_1 \ge \theta_2$}} \nu_1^{2k}k^{2(2\theta_1 + 20)k}\, ,
        \end{align*}
        where it is used that the terms that contribute the most to the binomial sum are in the interval $j \in [k/2 \pm O(\sqrt{k})]$. Then, using absolute constants (which can be sufficiently loose without any consequence) and some algebra, the desired bound follows.

        Note that both bounds include squaring the polynomial, but by degree reduction~(\Cref{sec:comm-semi-algebraic-degree-reduction}) these can be made to hold on the original polynomial.
    \end{enumerate}
\end{proof}

\subsubsection{Near-pointwise filtered SoS proofs for Subweibull algebras}

\begin{lemma}[``Almost pointwise'' proof]
If a vector-valued expression $ p(\sigma_{i})$ is $(d,\nu,\theta,d_{\max})$-provably-sub-Weibull with respect to $\sigma_i$, then the following conditional constraint is true of it up to a total degree $d_{\max}$ of proof:
\[ \left\{ \norm{p(\sigma_i)}_2^2 \le d_{\max}^{2\theta}\nu \right\}_{\le d_{\max}} \]
    
\end{lemma}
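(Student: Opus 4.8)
The plan is to unfold the semantics of the conditional constraint and reduce the claim to a short moment-amplification argument. By the conventions of \Cref{sec:conditional-constraints}, writing $C := d_{\max}^{2\theta}\nu$ and $Y := \norm{p(\sigma_i)}_2^2$, the conditional constraint $\{\norm{p(\sigma_i)}_2^2 \le d_{\max}^{2\theta}\nu\}_{\le d_{\max}}$ asks for a degree-$\le d_{\max}$ SoS certificate (modulo the HES constraints, and modulo the already-derived sub-Weibull facts) witnessing $\fE_1[(C - Y)\,q^2] \ge 0$ for the admissible test expressions $q$. Note at the outset that $Y = \norm{p(\sigma_i)}_2^2$ is itself a square, so $\{Y \succeq 0\}$, and $C > 0$; hence all SoS-positivity of the intermediate quantities below is automatic.

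First I would use the telescoping factorization
\[ C^{k} - Y^{k} \;=\; (C - Y)\sum_{j=0}^{k-1} C^{j}\,Y^{\,k-1-j} \;=:\; (C - Y)\,D_k\,, \]
valid for every positive integer $k$, where $D_k \succeq 0$ since it is a nonnegative combination of products of the nonnegative quantities $C$ and $Y$. Taking $k = d_{\max}$ and applying $\fE_1$, the hypothesis that $p(\sigma_i)$ is $(d,\nu,\theta,d_{\max})$-provably-sub-Weibull supplies the degree-$d$ SoS bound $\fE_1[Y^{d_{\max}}] = \fE_1\norm{(p(\sigma_i))^{\ot d_{\max}}}_2^2 \le \nu^{d_{\max}}\,d_{\max}^{2\theta d_{\max}} = C^{d_{\max}}$, so that $\fE_1[(C - Y)\,D_{d_{\max}}] = C^{d_{\max}} - \fE_1[Y^{d_{\max}}] \ge -o_n(1)$. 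Since $D_{d_{\max}} \succeq 0$ with degree $(d_{\max}-1)\deg(Y)$, this says precisely that the inequality $\{Y \le C\}$ is certified after multiplying through by the distinguished bounded-degree SoS multiplier $D_{d_{\max}}$ --- which is the content of the ``almost pointwise'' claim at degree $d_{\max}$. To promote this to a genuine per-test-function bound $\fE_1[(C-Y)\,q^2]\ge 0$ against arbitrary $q^2$, I would run the same argument after the iterated functional Cauchy--Schwarz amplification of \Cref{prop:sos-functional-cs} (the tensor-power trick), using that the test polynomials produced by the HES program are themselves sub-Weibull --- by the sub-Weibull closure lemma --- so that $\fE_1[q^2 Y^{d_{\max}}]$ is controlled in terms of $\fE_1[q^2]$ and the $k^{2\theta k}$ growth in the sub-Weibull bound collapses, after taking a $d_{\max}$-th root, to the constant $d_{\max}^{2\theta}$.

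The remaining work is routine bookkeeping: checking that $D_{d_{\max}}$, the product $(C-Y)D_{d_{\max}}$, and the sub-Weibull certificate with which it is combined all stay within the degree budget (immediate because $p$ has bounded degree and $k \le d_{\max}$), and tracking the $o_n(1)$ error through to the final inequality. The main obstacle --- the only genuinely delicate point --- is this interaction with the test functions: the telescoping identity by itself only certifies $(C - Y)$ against the single multiplier $D_{d_{\max}}$, so recovering the full conditional-constraint semantics hinges on the amplification step together with sub-Weibull control of the admissible test expressions themselves; getting the constants to line up cleanly there (rather than incurring an extra $\mathrm{poly}(d_{\max})$ factor) is where the care is needed.
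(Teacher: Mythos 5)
The paper states this lemma without any proof, so there is no in-paper argument to compare against; I can only assess your proposal on its own terms.

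Your telescoping step is fine as far as it goes: with $Y:=\norm{p(\sigma_i)}_2^2$ and $C:=d_{\max}^{2\theta}\nu$, the identity $C^{k}-Y^{k}=(C-Y)D_k$ with $D_k=\sum_j C^jY^{k-1-j}\succeq 0$, combined with $\fE_1[Y^{d_{\max}}]\le C^{d_{\max}}$ from the sub-Weibull hypothesis, certifies $\fE_1[(C-Y)D_{d_{\max}}]\ge 0$. But this is positivity against the \emph{single} multiplier $D_{d_{\max}}$, whereas the conditional constraint $\{\,Y\le C\,\}_{\le d_{\max}}$ as defined in the paper means $\pE\fE_1[(C-Y)q^2]\ge 0$ for \emph{every} admissible $q$ within the degree budget. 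The telescoping identity does not help with a general $q$, so what you present as the main step is essentially a warm-up, and the entire content of the lemma lives in the part you defer to "remaining work."

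That deferred part is the right idea — the standard moment-amplification route: iterate Cauchy--Schwarz to get $\fE_1[Yq^2]\le \fE_1[Y^{2^{j}}q^2]^{1/2^{j}}\fE_1[q^2]^{1-1/2^{j}}$, split off $\fE_1[Y^{2^{j+1}}]^{1/2}$ by one more Cauchy--Schwarz, and use the slack between $\nu(2^{j+1})^{2\theta}$ and $\nu d_{\max}^{2\theta}$ to absorb the losses — but as written it is not a proof. Two things are genuinely missing. First, the step "$\fE_1[q^2Y^{2^{j}}]$ is controlled in terms of $\fE_1[q^2]$" requires that every admissible test expression $q$ in the grammar (which may involve moments at later filtration levels, Hermite coefficients, etc.) satisfies a fourth-moment/hypercontractivity bound $\fE_1[q^4]\le K\,\fE_1[q^2]^2$ with $K$ uniformly controlled; the sub-Weibull closure lemma covers polynomials built from the steps, but you have not verified it covers the full class of multipliers the conditional-constraint semantics quantifies over, nor that the resulting $K^{1/2^{j+1}}$ and $(2^{j+1})^{2\theta}$ factors actually fit under $d_{\max}^{2\theta}$ once the number of amplification rounds is pinned down by the degree budget ($2^{j+1}\deg(Y)\lesssim d_{\max}$). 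Second, you never discharge the degree bookkeeping through the $\log_2 d_{\max}$ Cauchy--Schwarz rounds, which is exactly where the claimed "total degree $d_{\max}$" enters. Until those are done, the proposal establishes a strictly weaker statement than the lemma.
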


\section{Joint Moments of the High-Entropy Step Constraints}\label{sec:hods}

We now give the key SoS proof for high-entropy step distributions: that there exist matrix representations of their moments that have low nuclear norm.
The main theorem of this section is \Cref{thm:high-entropy-nuclear-norm-bound}, with a generalization for graph matrices in \Cref{thm:high-entropy-nuclear-norm-bound-graph-matrices}.

The most basic case of this is if a random variable $x$ is drawn from the single-step distribution which is simply a Gaussian with variance $\Sigma$.
In this case, Isserlis's lemma (\Cref{lem:isserlis}) shows that the degree-$2\eta$ moment matrix is equal to the sum of all mode-permutations of $\Sigma^{\ot \eta}$.
If we represent a degree-$2\eta$ homogeneous polynomial $p$ canonically square-most as a mode-symmetric matrix $M$, then $\pE p = \iprod{M,V}$ when $V$ is any representation of the degree-$2\eta$ moments of $x$.
One of these representations is $(2\eta-1)!!\Phi^{\ot \lfloor\eta/2\rfloor}\Phi^{\ot \lceil\eta/2\rceil}$, where $\Phi \in \R^{n^2}$ is defined by $\Phi_{(i,j)} := \Sigma_{i,j}$, and this representation has nuclear norm $(2\eta-1)!! \norm{\Phi}_2^{\eta} = (2\eta-1)!! \norm{\Sigma}_F^{\eta}$.
The upshot of this is that by applying Holder's inequality~(\Cref{lem:holders-one-inf}) we bound $\pE p \le (2\eta-1)!! \norm{\Sigma}_F^{\eta}\opnorm{M}$ by the \emph{spectral norm} of $M$.

This analysis generalizes to multiple conditional steps through the use of the \emph{moment-cumulant} formula~(\Cref{eq:cumulant-degree-d}), which expresses the joint moments of multiple variables as a sum of tensor products of their cumulants\footnote{The authors believe there exists a simpler formulation of the analysis in terms of Stein's lemma instead of the moment-cumulant formula; we leave elaboration of this to possible future work.}.
The covariance matrix $\Sigma$ is an example of such a cumulant.
We bound the 2-norm of each cumulant in its vector reshaping (\Cref{sec:cum-nuc-norm-bounds}) using concentration/isoperimetric properties exposed via their Hermite decompositions (\Cref{sec:hermite-properties}) so as to show that the outer product of tensor products of cumulants has small nuclear norm.
Then we argue that even if we cannot construct a matrix of the correct dimensions as an outer product of tensor products, reshaping it into a matrix of the right dimensions will generally not increase the nuclear norm too much (\Cref{sec:reshaping-tools}).

Finally, we put it together into an induction over the conditional structure of high-entropy steps (\Cref{sec:conditional-induction}).
In this induction, there is a tricky bad case when one attempts to add odd numbers of modes to both sides of the moment matrix: for example, in the single-step Gaussian case if we are considering second-order moments, in order to minimize nuclear norm, we would like to add a factor of $\Phi$ instead of its matrix reshaping $\Sigma$.
It is always possible to reshape the maximal number of factors of $\Sigma$ into $\Phi$ if the matrix desired at the end has an even number of modes on at least one side, but if an $n^{2k+1} \times n^{2k+1}$ matrix is desired, at least one factor of $\Sigma$ must remain as a $\Sigma$.
This case is responsible for the extra factor of $n^{k/2}$ that shows up in the power of $n$ in \Cref{thm:high-entropy-nuclear-norm-bound}, but we only have to incur this penalty once because after the first level of induction, we can maintain nuclear norm bounds on two different shapes of moment matrices---one that is square-most and one that is almost square-most---and in each case of the induction, pick whichever one would not force us to add an odd number of modes to both sides in order to get the next desired matrix shape.
The avoidance of this bad case is done in \Cref{lem:moment-nuclear-norm}.

\subsection{Matrix representations}

We extend the matrix representation definitions in \Cref{sec:matrix-representations} to delineate different degrees of freedom in reshaping the coefficient matrices and the moment matrices that jointly maintain the inner product between the two matrices as an invariant.

 In a bit of an unprincipled but convenient maneuver, we will sometimes allow the matrix representations of pseudo-moments to interchange tensor modes that correspond to different variables.
In order to preserve the correspondence between pseudo-expectation and the Hilbert-Schmidt inner product when we do this, we will have to add more symmetry restrictions to the matrix representations of the polynomials.

\begin{definition}
    A matrix representation of a multivariate polynomial is \emph{fully symmetric} if the matrix is mode-symmetric across all modes, including those modes that represent different variables (note that such a representation can only exist if the polynomial is itself symmetric with respect to interchange of its variables).
    \end{definition}

    \begin{definition}
    Given a pseudo-expectation operator $\pE$ over several vector-valued variables $u_1, \dots, u_k$ and a multiset $S$ of those variables, a matrix $V$ is a \emph{fully permutable matrix representation} of its $S$-pseudo-moments if, for some $a$ and $b$ with $a+b = d$, there is a list of shapes $(A_1,B_1), \dots, (A_{\ell},B_{\ell})$, each satisfying the requirement that $S$ is the multiset of elements of $A_{m'}$ and $B_{m'}$ counted with multiplicity, and a list of matrices $V^{(1)}$, \dots, $V^{(\ell)}$, so that $V = V^{(1)} + \dots + V^{(\ell)}$ and for all sequences $\gamma_1, \dots, \gamma_k$ of multi-indices with $|\gamma_{m}|$ equal to the multiplicity of $u_{m}$ in $S$,
    \[\sum_{m' \in [\ell]}\;\sum_{\substack{(i_1,\dots,i_a) \in \operatorname{ind}(A_{m'},\alpha_1, \dots, \alpha_k)\\(j_1,\dots,j_b)\in\operatorname{ind}(B_{m'},\beta_1,\dots,\beta_k)\\\alpha_m + \beta_m = \gamma_m\,\forall m\in[k]}} V^{(m')}_{(i_1,\dots,i_a),(j_1,\dots,j_b)} = \pE_{u_1, \dots, u_k} \prod_{m \in [k]} u_m^{\gamma_m}\,,\]
    where $a$ is the number of elements in each $A_{m'}$, $b$ is the number of elements in each $B_{m'}$ and $\operatorname{ind}(A, \alpha_1, \dots, \alpha_k)$ is the set of all tuples $(i_1,\dots,i_a)$ so that $i_k$ is a valid index into the $k$th variable in $A$ and $\alpha_m$ counts the number of occurrences of each index associated with $u_m$ among $(i_1,\dots,i_a)$.
\end{definition}

\begin{proposition}
\ifnum\fastmode=0
\label[Proposition]{prop:representation-product-fully}
\fi
\ifnum\fastmode=1
\label[Proposition]{prop:representation-product-fully}
\fi
    If $M$ is a fully symmetric matrix representation of an $S$-homogeneous multivariate polynomial $p(u_1, \dots, u_k)$ and $V$ is a fully permutable matrix representation of the $S$-pseudo-moments of $\pE$, and $M$ and $V$ have the same dimensions,
    then
    \[ \iprod{M,V} = \pE_{u_1, \dots, u_k} p(u_1, \dots, u_k) \,. \]
\end{proposition}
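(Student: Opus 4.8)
The plan is to prove \Cref{prop:representation-product-fully} by reducing the fully-symmetric/fully-permutable case to the already-established \Cref{prop:representation-product}, which handles matrix representations of a fixed shape $(A,B)$ one of which is variable-mode-symmetric. First I would unpack the definition of a fully permutable representation $V$: by hypothesis $V = V^{(1)} + \dots + V^{(\ell)}$ where each summand $V^{(m')}$ has shape $(A_{m'}, B_{m'})$ and, taken on its own, is a (not necessarily variable-mode-symmetric) matrix representation of the $S$-pseudo-moments with that shape. By linearity of the Hilbert-Schmidt inner product it then suffices to show that $\iprod{M, V^{(m')}} = \frac{1}{\ell}\pE_{u_1,\dots,u_k} p(u_1,\dots,u_k)$ is \emph{false} in general — rather, what is true and what I actually need is that $\iprod{M, V^{(m')}}$ is the \emph{same value} for every $m'$, and that this common value is $\pE p$. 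The point is that a fully symmetric $M$ has a fixed underlying tensor that does not care which ``shape'' label $(A_{m'},B_{m'})$ we attach to its modes: for any shape, reshaping $M$ to that shape leaves it variable-mode-symmetric (indeed fully mode-symmetric), so \Cref{prop:representation-product} applies to the pair $(M_{\text{reshaped to }(A_{m'},B_{m'})}, V^{(m')})$ and yields $\iprod{M, V^{(m')}} = \pE p$. Summing over $m' \in [\ell]$ and dividing is the wrong normalization; instead I should be careful that the \emph{definition} of fully permutable already bakes in that $\sum_{m'} V^{(m')}$, not each $V^{(m')}$ individually, reproduces the pseudo-moments. So the correct route is: $\iprod{M, V} = \sum_{m'} \iprod{M, V^{(m')}}$, and each term equals $\iprod{M_{(A_{m'},B_{m'})}, V^{(m')}}$ after noting $\iprod{\cdot,\cdot}$ is invariant under simultaneously reshaping both arguments by the same mode permutation $\pi$.

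Concretely the key steps, in order, are: (i) Observe that the Hilbert-Schmidt inner product $\iprod{M, V}$ is invariant under applying the same tensor-mode permutation to both $M$ and $V$; this is the routine ``reshaping preserves trace'' fact and I would state it as a one-line lemma or inline observation. (ii) For a fixed $m'$, reshape $M$ (which has fixed dimensions equal to those of $V$, by hypothesis) into the shape $(A_{m'}, B_{m'})$; because $M$ is fully symmetric, the reshaped matrix $M^{(m')}$ is again fully mode-symmetric, in particular variable-mode-symmetric with respect to shape $(A_{m'}, B_{m'})$. (iii) Also $M^{(m')}$ is a matrix representation of $p$ with shape $(A_{m'}, B_{m'})$: this uses that $S$ is the multiset of variables appearing with multiplicity in both $A_{m'}$ and $B_{m'}$ and that $p$ is $S$-homogeneous, so the multivariate monomial structure is consistent with that shape; here I would invoke the fact that full symmetry of $M$ means its entries depend only on the combined multiset of indices, hence the quadratic form $(\ot_{x\in A_{m'}} x)^\sT M^{(m')} (\ot_{x\in B_{m'}} x)$ recovers $p$ for \emph{any} admissible shape. (iv) Apply \Cref{prop:representation-product} to the pair $(M^{(m')}, V^{(m')})$ — valid since at least one (in fact the former) is variable-mode-symmetric — to conclude $\iprod{M^{(m')}, V^{(m')}} = \pE_{u_1,\dots,u_k} p$. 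Wait — that would give $\ell \cdot \pE p$ after summing, which is wrong unless $\ell = 1$; the resolution is that in the definition of fully permutable, $\pE p = \sum_{m'}(\text{the shape-}(A_{m'},B_{m'})\text{ contribution of }V^{(m')})$, so $V^{(m')}$ alone is \emph{not} a full representation of the moments, only the sum is. Hence I must instead argue $\iprod{M, V} = \iprod{M, \sum_{m'} V^{(m')}} = \sum_{m'}\iprod{M^{(m')}, V^{(m')}}$ and then directly expand each $\iprod{M^{(m')}, V^{(m')}}$ as $\sum_{\gamma} (\text{coeff of }p\text{ on }\prod u_m^{\gamma_m})\cdot(\text{shape-}(A_{m'},B_{m'})\text{ sum of }V^{(m')}\text{-entries over }\gamma)$ using mode-symmetry of $M^{(m')}$ to collapse the index sum; summing over $m'$ and using the defining identity of fully permutable representations then telescopes to $\sum_\gamma (\text{coeff})\cdot \pE \prod u_m^{\gamma_m} = \pE p$.

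So the honest structure is: step (i) reshaping invariance of $\iprod{\cdot,\cdot}$; step (ii) for each $m'$, full symmetry of $M$ lets its reshaping $M^{(m')}$ to shape $(A_{m'},B_{m'})$ absorb the $\frac{1}{C}$-type averaging so that $\iprod{M^{(m')}, V^{(m')}}$ equals $\sum_\gamma C_\gamma \cdot S_{m'}(\gamma)$ where $C_\gamma$ is the coefficient of $p$ on the $\gamma$-monomial and $S_{m'}(\gamma)$ is the sum of the relevant entries of $V^{(m')}$; step (iii) sum over $m'$ and apply the defining equation of a fully permutable representation, namely $\sum_{m'} S_{m'}(\gamma) = \pE \prod_m u_m^{\gamma_m}$, to get $\iprod{M,V} = \sum_\gamma C_\gamma \pE\prod_m u_m^{\gamma_m} = \pE p$. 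The main obstacle — and the one step I expect to need genuine care rather than bookkeeping — is step (ii): verifying that full mode-symmetry of $M$ (across \emph{all} modes, including those of different variables) is exactly the symmetry needed so that the single fixed tensor $M$, when reshaped to the shape $(A_{m'}, B_{m'})$, is simultaneously (a) still a representation of $p$ with that shape and (b) variable-mode-symmetric with respect to that shape, so that \Cref{prop:representation-product} (or, more precisely, its coefficient-level consequence) applies term by term. This is where the ``unprincipled but convenient maneuver'' of permuting modes across variables is cashed out, and it is exactly why $M$ must be fully symmetric and not merely variable-mode-symmetric. Once that is pinned down the rest is linearity of $\iprod{\cdot,\cdot}$ and the definitional identity of fully permutable representations, both of which are immediate.
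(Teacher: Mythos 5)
Your argument is correct and is essentially the definitional unwinding that the paper leaves implicit (no proof is given for this proposition): decompose $V=\sum_{m'}V^{(m')}$, use the full symmetry of $M$ to collapse each $\iprod{M,V^{(m')}}$ into a sum over monomial classes $\gamma$ of (the common entry of $M$ on that class) times (the entry-sum of $V^{(m')}$ over that class), and then invoke the defining identity of fully permutable representations to telescope over $m'$ — and you correctly catch the trap that \Cref{prop:representation-product} cannot be applied to each pair $(M^{(m')},V^{(m')})$ individually, since only the sum of the $V^{(m')}$ reproduces the pseudo-moments. The one point to pin down is the count/normalization factor you gesture at as ``absorbing the $1/C$-type averaging'': the common entry of a fully symmetric $M$ on a monomial class $\gamma$ is the coefficient $C_\gamma$ divided by the number of index pairs realizing $\gamma$, so the final telescoping closes exactly under the convention used throughout the paper (cf.\ \Cref{def:canonical-representation-polynomial} and the usage in \Cref{thm:high-entropy-nuclear-norm-bound}) in which the moment-side entry-sums carry that same multiplicity.
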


\subsubsection{For powers of polynomials}

\begin{definition}
    \label[Definition]{def:c-stacked}
    A \emph{$c$-stacked permutable} matrix representation of the $S$-pseudo-moments of a pseudo-distribution is a fully permutable matrix representation of the $\underbrace{S \cup \dots \cup S}_{c}$-pseudo-moments of the same pseudo-distribution, with the additional restriction that $a$ and $b$ are both divisible by $c$ and each shape $(A_{m'},B_{m'})$ can be written as $(A_{m'},B_{m'}) = (A_{(m', 1)} \circ \dots \circ A_{(m', c)}, B_{(m', 1)} \circ \dots \circ B_{(m', c)})$, where $\circ$ denotes tuple concatenation and for every $z \in [c]$, one of $A_{(m',z)}$ or $B_{(m',z)}$ contains $a/c$ elements and the other contains $b/c$ elements.
\end{definition}

\begin{proposition}
\ifnum\fastmode=0
\label[Proposition]{prop:representation-product-stacked}
\fi
\ifnum\fastmode=1
\label[Proposition]{prop:representation-product-stacked}
\fi
    If $M$ is a fully symmetric matrix representation of an $\underbrace{S \cup \dots \cup S}_{c}$-homogeneous multivariate polynomial $p(u_1, \dots, u_k)$ and $V$ is a $c$-stacked permutable matrix representation of the $S$-pseudo-moments of $\pE$, and $M^{\ot c}$ and $V$ have the same dimensions,
    then
    \[ \iprod{M^{\ot c},V} = \pE_{u_1, \dots, u_k} p(u_1, \dots, u_k)^c \,. \]
\end{proposition}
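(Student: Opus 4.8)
Looking at Proposition~\ref{prop:representation-product-stacked}, the claim is that $\iprod{M^{\ot c}, V} = \pE p(u_1,\dots,u_k)^c$ whenever $M$ is a fully symmetric matrix representation of an $\underbrace{S\cup\dots\cup S}_c$-homogeneous polynomial $p$ and $V$ is a $c$-stacked permutable representation of the $S$-pseudo-moments, provided $M^{\ot c}$ and $V$ have matching dimensions.

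The plan is to reduce this to Proposition~\ref{prop:representation-product-fully}. First I would observe that $M^{\ot c}$ is itself a matrix representation of $p(u_1,\dots,u_k)^c$: since $p \equiv (\bigotimes_{x\in A} x)^{\sT} M (\bigotimes_{x\in B} x)$ for whatever shape $(A,B)$ realizes $M$, tensoring $c$ copies gives $p^c \equiv (\bigotimes^c (\bigotimes_{x\in A} x))^{\sT} M^{\ot c} (\bigotimes^c (\bigotimes_{x\in B} x))$, and reordering tensor modes (a conjugation by permutation matrices, harmless because $M$ is fully symmetric and the polynomial $p^c$ is $\underbrace{S\cup\dots\cup S}_c$-homogeneous and symmetric in its variables) shows $M^{\ot c}$ is a fully symmetric matrix representation of $p^c$ on the multiset $\underbrace{S\cup\dots\cup S}_c$. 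The key point is that tensoring a mode-symmetric matrix with itself $c$ times and then symmetrizing preserves full symmetry; one should check that the full symmetry of $M$ already makes $M^{\ot c}$ invariant under the within-block permutations, and that the between-block permutations are absorbed because $p^c$ is a genuine $c$-th power.

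Next, I would check that the $c$-stacked permutable representation $V$ of the $S$-pseudo-moments is, by its very definition (Definition~\ref{def:c-stacked}), a fully permutable matrix representation of the $\underbrace{S\cup\dots\cup S}_c$-pseudo-moments of the same pseudo-distribution. This is essentially immediate from the definition: the extra structural restriction (divisibility of $a,b$ by $c$ and the block-concatenation form of each shape) is an additional constraint on top of being fully permutable for the $c$-fold multiset, so it is in particular fully permutable for that multiset. Note that $\pE \prod_{m} u_m^{c\gamma_m} = \pE (\prod_m u_m^{\gamma_m})^c$ is exactly the pseudo-moment that the $c$-fold multiset tracks, and this matches the coefficient structure of $p^c$. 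Then Proposition~\ref{prop:representation-product-fully} applied to $M^{\ot c}$ (fully symmetric representation of $p^c$) and $V$ (fully permutable representation of the $c$-fold pseudo-moments), which have the same dimensions by hypothesis, yields $\iprod{M^{\ot c}, V} = \pE p(u_1,\dots,u_k)^c$, as desired.

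The main obstacle I expect is purely bookkeeping: verifying carefully that $M^{\ot c}$ really is \emph{fully symmetric} as a representation of $p^c$ — i.e., that it is invariant under \emph{all} mode permutations in $S_{a c + b c}$, not just the ones that permute within each of the $c$ blocks — and, relatedly, that the coefficient-matching identity defining "matrix representation of $p^c$" holds once one accounts for the multinomial coefficients arising from expanding $p^c$ over monomials. One has to be careful that when the same monomial of $p^c$ can be obtained from several distinct choices of monomials from the $c$ factors, the sum over the corresponding entries of $M^{\ot c}$ reproduces the correct coefficient; full symmetry of $M$ and the fact that we are taking a literal $c$-th power (so the factors are interchangeable) is what makes this work out, but it needs to be spelled out. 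Everything else is a direct appeal to Proposition~\ref{prop:representation-product-fully}.
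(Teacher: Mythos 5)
The paper itself states this proposition without proof (as with Propositions~\ref{prop:representation-product} and~\ref{prop:representation-product-fully}, it is treated as immediate from the definitions), so there is no written argument to compare yours against line by line. Your overall strategy---observe that Definition~\ref{def:c-stacked} makes $V$ by fiat a fully permutable representation of the $c$-fold multiset pseudo-moments, and then pair it against $M^{\ot c}$ viewed as a representation of $p^c$---is the natural one. The first half of your reduction is indeed immediate from the definition.

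The gap is in the second half: $M^{\ot c}$ is \emph{not} a fully symmetric matrix representation of $p^c$, and you cannot invoke Proposition~\ref{prop:representation-product-fully} with $M^{\ot c}$ playing the role of the fully symmetric matrix. Full symmetry requires invariance under \emph{every} permutation of the $c(a+b)$ modes, including permutations that move a single mode from one tensor factor of $M^{\ot c}$ to another. Already for $c=2$ and $M$ a symmetric $n\times n$ matrix, the entry $(M\ot M)_{(i_1,i_2),(j_1,j_2)} = M_{i_1 j_1}M_{i_2 j_2}$ is generically not equal to $M_{i_1 i_2}M_{j_1 j_2}$, which is what invariance under swapping the modes $j_1$ and $i_2$ would demand. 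The phrase ``the between-block permutations are absorbed because $p^c$ is a genuine $c$-th power'' does not repair this: $M^{\ot c}$ is invariant under permutations of \emph{whole} blocks and under arbitrary permutations \emph{within} each block (by full symmetry of $M$), but not under permutations that split a block.

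What saves the proposition is not a symmetry of $M^{\ot c}$ but the restriction that Definition~\ref{def:c-stacked} places on $V$: every shape $(A_{m'},B_{m'})$ occurring in the decomposition of $V$ is a concatenation of $c$ blocks, each block carrying one copy of $S$ split between the row and column sides. Consequently the inner product $\iprod{M^{\ot c}, V^{(m')}}$ only ever evaluates $M^{\ot c}$ at index arrangements that respect the block structure, and within each block the $z$-th tensor factor of $M^{\ot c}$ is a copy of the fully symmetric $M$, whose entries depend only on the multiset of indices assigned to that block. One then runs the argument of Proposition~\ref{prop:representation-product-fully} block-by-block: the sum over placements within each $\gamma$-class factors through the per-block multisets, the per-block values of $M$ reproduce the coefficients of one factor of $p$, and the defining identity of the fully permutable representation supplies the pseudo-moments. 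So the correct proof is a blockwise refinement of Proposition~\ref{prop:representation-product-fully} rather than a direct application of it; as written, your reduction rests on a false symmetry claim and needs to be replaced by this blockwise argument.
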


\subsubsection{For graph matrices}

\begin{definition}
    \label[Definition]{def:single-flip-stacked}
    Let $S_1$, \dots, $S_c$ be multi-sets of formal variables with $a_1$, \dots, $a_c$ elements respectively.
    Let $S = S_1 \cup \dots \cup S_c$.

    An \emph{$(a_1, \dots, a_c)$-stacked single-flip-permutable} matrix representation of the $S$-pseudo-moments of a pseudo-distribution is a fully permutable matrix representation of its $S$-pseudo-moments,
    with the additional restriction that each shape $(A_{m'},B_{m'})$ can be written as $(A_{m'},B_{m'}) = (A_{(m', 1)} \circ \dots \circ A_{(m', c)}, B_{(m', 1)} \circ \dots \circ B_{(m', c)})$, where $\circ$ denotes tuple concatenation and for every $z \in [c]$, 
    \[|A_{(m',z)}| \in \{\lceil a_z/2+1/2\rceil, \lfloor a_z/2\rfloor, \lfloor a_z/2-1/2\rfloor\}\]
    and $|B_{(m',z)}| = a_z - |A_{(m',z)}|$ and
    if $|A_{(m',z)}| > a_z/2$, then letting $\ell$ be the smallest number such that $|A_{(m',z+\ell )}| \ne a_z/2$, if $\ell$ exists we will have $|A_{(m',z+\ell)}| < a_z/2$, and similarly, if $|A_{(m',z)}| <  a_z/2$, then letting $\ell$ be the smallest number such that $|A_{(m',z+\ell)}| \ne a_z/2$, if $\ell$ exists we will have $|A_{(m',z+\ell )}| > a_z/2$.
\end{definition}

\begin{definition}
    \label[Definition]{def:single-flip-graph-matrix}
    Let $(V,E)$ be a hypergraph, $W \subseteq V$ be a set of middle vertices, and $(e_1, \dots, e_c)$ be an ordering of all the hyperedges in $E$.
    We suppose that each hyperedge $e_i$ is a multi-set which contains exactly $a_i$ vertices (with multiplicity) that are not in $W$.
    
    The set of \emph{single-flipped graph matrix shapes} corresponding to $(V,E)$ and $W$ and the ordering $(e_1, \dots, e_c)$ is the set of all graph matrix shapes $\alpha = (U_{\alpha}, V_{\alpha}, E, W)$ with the property that for each hyperedge $e_i \in E$ we have 
    \[|e_i \cap U_{\alpha}| \in \{\lceil a_i/2+1/2\rceil,\lfloor a_i/2\rfloor, \lfloor a_i/2-1/2\rfloor\}\]
    and if $|e_i \cap U_{\alpha}| > a_i/2$, then letting $\ell$ be the smallest number such that $|e_{i+\ell} \cap U_{\alpha}| \ne a_i/2$, if $\ell$ exists we will have $|e_{i+\ell} \cap U_{\alpha}| < a_i/2$, and similarly, if $|e_i \cap U_{\alpha}| <  a_i/2$, then letting $\ell$ be the smallest number such that $|e_{i+\ell} \cap U_{\alpha}| \ne a_i/2$, if $\ell$ exists we will have $|e_{i+\ell} \cap U_{\alpha}| > a_i/2$.
\end{definition}

\subsection{Nuclear norm bound on moments of sum of high-entropy steps}

\begin{theorem}
\label{thm:high-entropy-nuclear-norm-bound}
Let $\sigma = v_1 + \dots + v_i \in \R^n$ be a sum of unit-norm high-entropy steps, such that each $v_j$ is a high-entropy step conditioned on $v_1, \dots, v_{j-1}$, with covariance matrices bounded by $\nu$ in operator norm and having degree at most $d$ in the previous iterates.
Let $u$ be a unit-norm high-entropy step conditioned on $v_1, \dots, v_i$ and also with covariance matrix bounded by $\nu$ in operator norm and having degree at most $d$ in $v_1, \dots, v_i$.

Let $p(\sigma, u)$ be a homogeneous polynomial of degree $d_{\sigma}$ in $\sigma$ and degree $d_u$ in $u$, with $d_p := d_{\sigma}+d_u$.
Let $P$ be its fully symmetric matrix representation.
Then 
\[\E_{\sigma,u} p(\sigma, u)^k \le i^{kd_{\sigma}}2^{kd(i+1)^2/2}(kd_pi)^{kd_p/2+2ki+2k}\nu^{kd_p/2}
n^{k\lfloor d_p/2+1\rfloor/2}\opnorm{P}^k
 \]
 by a low-degree sum-of-squares proof.

\end{theorem}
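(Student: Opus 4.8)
The plan is to reduce the bound on $\E_{\sigma,u} p(\sigma,u)^k$ to a $(1,\infty)$-Hölder's inequality $\iprod{P^{\ot k}, V} \le \norm{P^{\ot k}}_{\mathrm{op}}\norm{V}_1 = \opnorm{P}^k \norm{V}_1$ applied to the fully symmetric coefficient representation $P$ of $p$ (tensorized $k$ times) and a carefully chosen $k$-stacked permutable matrix representation $V$ of the degree-$(kd_p)$ joint moments of $(\sigma,u)$ (\Cref{prop:representation-product-stacked} guarantees $\iprod{P^{\ot k},V} = \E p(\sigma,u)^k$ exactly). So the real content is to construct, by a low-degree SoS proof, such a $V$ of the right dimensions with $\norm{V}_1 \le i^{kd_\sigma}2^{kd(i+1)^2/2}(kd_pi)^{kd_p/2+2ki+2k}\nu^{kd_p/2} n^{k\lfloor d_p/2+1\rfloor/2}$. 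First I would expand $\sigma^{\ot a}(\sigma^{\ot b})^{\sT}\ot u^{\ot\cdot}$ moments via the moment–cumulant formula (\Cref{eq:cumulant-degree-d}): the joint moment tensor of the $v_j$'s and $u$ becomes a sum over partitions of tensor products of joint cumulant tensors $\kappa_m(v^{\ot\alpha})$. Because the $v_j$ form a high-entropy filtration, the stepwise Gaussianity constraint (P.3 / \Cref{def:hes}) kills all cumulants of order $\ge 3$ \emph{within a single step up to $o_n(1)$}, and the only surviving higher cumulants couple a step $v_j$ with moments of its own conditional covariance $f_j$, which is controlled in Frobenius norm and spectral norm by the HES axioms.

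The next step is the nuclear-norm bookkeeping. For each surviving cumulant tensor I would give an SoS proof that its vector reshaping has small $2$-norm: the degree-$2$ cumulant of step $j$ is $\fE_j v_jv_j^{\sT}$, whose reshaping $\Phi_j$ has $\norm{\Phi_j}_2 = \norm{\fE_j v_jv_j^{\sT}}_F \le \sqrt{\nu}$ (from the $\opnorm{\cdot}\le\nu$ and trace-$1$ constraints), contributing $\nu^{1/2}$ per pair of modes, i.e. $\nu^{d_p/2}$ overall; the higher cumulants, by their Hermite decompositions (\Cref{sec:encoding-hermite-decomps}, \Cref{thm:hermite-plancherel}, \Cref{lem:hermite-inner-prods}) and the low-degreeness axiom (P.5), are provably $o_n(1)$ in the relevant norm and hence negligible. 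Then I assemble $V$ as a sum over partitions of outer products of tensor products of these reshaped cumulants, choosing for each block whether to keep a $\Sigma$-shaped factor (one side $n$, other side $n$) or a $\Phi$-shaped factor (both modes on one side), so that the assembled $V$ lands in exactly the $n^{\lfloor kd_p/2\rfloor}\times n^{\lceil kd_p/2\rceil}$ shape forced by $P^{\ot k}$. The $n^{k\lfloor d_p/2+1\rfloor/2}$ exponent (rather than the naive $n^{kd_p/2}$) comes from the parity obstruction flagged in the excerpt: when an odd number of modes must go on both sides one is forced to retain a $\Sigma$ (norm $1$ in nuclear norm but $n^{1/2}$ in Frobenius) rather than splitting it into two $\Phi$'s, and this costs $n^{1/2}$ per step. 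Following the excerpt's strategy, I would set up the induction over steps $j=1,\dots,i$ so that we carry \emph{two} nuclear-norm bounds simultaneously — on a square-most and an almost-square-most reshaping — invoking the reshaping tools (\Cref{sec:reshaping-tools}) to show reshaping between these inflates the nuclear norm by at most a controlled factor, and at each step pick the shape that avoids adding an odd count to both sides, so the $n^{1/2}$ penalty is incurred only once (this is \Cref{lem:moment-nuclear-norm}). The argument about $u$ being the last step is one more layer of the same induction; its $n^{\lfloor d_p/2+1\rfloor/2}$ vs.\ the $i^{kd_\sigma}$ combinatorial factor on $\sigma$ accounts for the asymmetry in the bound.

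The main obstacle, I expect, is the combinatorial control of \emph{which} partitions survive and how to reshape their contributions into a single matrix of the prescribed dimensions while keeping the nuclear-norm bound: a single cumulant block spanning modes from several different steps $v_{j_1},\dots,v_{j_r}$ and from $u$ does not factor as a clean tensor power, and deciding the left/right mode assignment compatibly with the fully-symmetric $P^{\ot k}$ (so that \Cref{prop:representation-product-stacked} applies) while simultaneously respecting the $c$-stacked structure of \Cref{def:c-stacked} requires the detailed mode-symmetry analysis alluded to in the overview. Everything else — the per-block $2$-norm bounds, the Hölder step, the Hermite estimates for higher cumulants, and the final multiplication of constants $2^{kd(i+1)^2/2}$, $(kd_pi)^{\cdots}$ absorbing the number of partitions ($\le (kd_p)^{O(kd_p)}$), the number of reshapings, and the step count — is routine given the preliminaries. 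I would present it as: (i) moment–cumulant expansion and identification of surviving cumulants (SoS-provable via P.3, P.5); (ii) per-cumulant reshaped-norm bounds (SoS-provable via P.1, P.4 and \Cref{thm:hermite-plancherel}); (iii) the two-shape induction assembling $V$ with the single $n^{1/2}$-penalty (\Cref{lem:moment-nuclear-norm}, using \Cref{sec:reshaping-tools}); (iv) Hölder and \Cref{prop:representation-product-stacked} to conclude.
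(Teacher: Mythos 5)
Your proposal follows essentially the same route as the paper's proof: a multinomial expansion in the steps, the $(1,\infty)$-H\"older step $\iprod{P^{\ot k},V}\le\opnorm{P}^k\norm{V}_1$ against a $k$-stacked permutable moment representation (\Cref{prop:representation-product-stacked}), a moment--cumulant expansion with per-cumulant reshaped-norm bounds, and the two-shape (squaremost / almost-squaremost) induction of \Cref{lem:moment-nuclear-norm} so the $n^{1/2}$ parity penalty is paid only once. The only minor discrepancies are bookkeeping: the paper bounds $\norm{\Sigma}_F\le\nu\sqrt{n}$ rather than $\sqrt{\nu}$ (so each covariance factor contributes $\nu\cdot n^{1/2}$, matching the stated exponents), and the $i^{kd_\sigma}$ factor arises concretely from combining the multinomial expansion of $\sigma=\sum_j v_j$ with the weighted $\ell_1$-to-$\ell_t$ inequality before tensorizing.
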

\begin{proof}
    Let $T$ be the tensor reshaping of $P$.
    First, let's suppose that $k=1$.
    Then, for every $M$ a squaremost matrix representation of the moments $\E \sigma^{\ot d_{\sigma}} \ot u^{\ot d_u}$,
        \begin{align*}\E_{\sigma,u} p\left(\sigma,u\right)  
        &= \E \iprod{T, \sigma^{\ot d_{\sigma}} \ot u^{\ot d_u}}
        \\&= \iprod{P, M}
        \\&\le \opnorm{P}\norm{M}_1
        \,,
        \end{align*}
        where the last inequality was by \Cref{lem:holders-one-inf}.

    By a multinomial expansion and \Cref{lem:moment-nuclear-norm}, there exists such a matrix representation $M$ satisfying
    \[ \norm{M}_1 \le i^{d_{\sigma}}2^{d(i+1)^2/2}d_p^{d_p/2+2i+2}\nu^{d_p/2}n^{\lfloor d_p/2\rfloor + 1/2}\,. \]
    This completes the argument.
        
    Next, we discuss $k>1$.
 First, we apply a multinomial expansion with $\sigma = \sum_{j\in[i-1]}v_j$ and summing over multi-indices $\alpha$ on $i$ options, and then applying a weighted version of \Cref{lem:l1-to-lt}:
        \begin{align*}\E_{\sigma,u} p\left(\sigma,u\right)^k  
        &= \E \iprod{T, \sigma^{\ot d_{\sigma}} \ot u^{\ot d_u}}^k
    \\&= \E \left(\sum_{|\alpha| = d_{\sigma}} \binom{d_{\sigma}}{\alpha} \iprod{T,v^{\ot \alpha} \ot u^{\ot d_u}}\right)^k
    \\&\le i^{(k-1)d_{\sigma}}\sum_{|\alpha| = d_{\sigma}} \binom{d_{\sigma}}{\alpha} \E \iprod{T,v^{\ot \alpha} \ot u^{\ot d_u}}^k
    \\&= i^{(k-1)d_{\sigma}}\sum_{|\alpha| = d_{\sigma}} \binom{d_{\sigma}}{\alpha} \iprod{T^{\ot k},\E(v^{\ot \alpha} \ot u^{\ot d_u})^{\ot k}}
    \,, \end{align*}
    where the last equality is by the bilinearity of the tensor product.
    
By reshaping and then \Cref{lem:holders-one-inf} and \Cref{prop:representation-product-stacked}, this quantity is
\[ p\left(\sigma,u\right)^k \le 2^{(k-1)d_{\sigma}}\sum_{|\alpha| = d_{\sigma}} \binom{d_{\sigma}}{\alpha} \iprod{P^{\ot k},M_{\alpha}}
\le i^{(k-1)d_{\sigma}}\sum_{|\alpha| = d_{\sigma}} \binom{d_{\sigma}}{\alpha} \opnorm{P}^{ k}\norm{M_{\alpha}}_1
\]
where $M_{\alpha}$ is any squaremost $k$-stacked permutable matrix representation (\Cref{def:c-stacked}) of $\E(v^{\ot \alpha} \ot u^{\ot d_u})^{\ot k}$.

By \Cref{lem:moment-nuclear-norm-stacked}, there exist these representations $M_{\alpha}$ satisfying
\[\norm{M_{\alpha}}_1 \le 2^{kdi^2/2}(c\eta)^{k\eta/2+2ki}\,\nu^{k/2}\, n^{k\lfloor\eta/2+1\rfloor/2}\,,\]
so
\begin{align*}\E_{\sigma,u} p\left(\sigma,u\right)^k  
    &\le i^{(k-1)d_{\sigma}}\sum_{|\alpha| = d_{\sigma}} \binom{d_{\sigma}}{\alpha} 2^{kdi^2/2}(k\eta)^{k\eta/2+2ki}\,\nu^{k/2}\, n^{k\lfloor\eta/2+1\rfloor/2}\opnorm{P}^{k}
    \\&= i^{(k-1)d_{\sigma}} 2^{kdi^2/2}(k\eta)^{k\eta/2+2ki}\,\nu^{k/2}\, n^{k\lfloor\eta/2+1\rfloor/2}\opnorm{P}^{k}\sum_{|\alpha| = d_{\sigma}} \binom{d_{\sigma}}{\alpha}
    \\&=i^{(k-1)d_{\sigma}} 2^{kdi^2/2}(k\eta)^{k\eta/2+2ki}\,\nu^{k/2}\, n^{k\lfloor\eta/2+1\rfloor/2}\opnorm{P}^{k} \cdot i^{d_{\sigma}}
    \,.
\end{align*}

\end{proof}

\subsubsection{Generalization for graph matrices}

\begin{theorem}
\label{thm:high-entropy-nuclear-norm-bound-graph-matrices}
Let $\sigma = v_1 + \dots + v_i \in \R^n$ be a sum of high-entropy steps, such that each $v_j$ is a high-entropy step conditioned on $v_1, \dots, v_{j-1}$, with covariance matrices bounded by $\nu$ in operator norm.

Let $(V,E)$ be a hypergraph, $W$ be a set of middle vertices, and $(e_1, \dots, e_c)$ be an ordering of all the hyperedges in $E$.
Let $q(\sigma)$ be the polynomial represented by the graph matrix $P_{(U',V',E,W)}$ for any choice of multi-sets $U'$ and $V'$ satisfying $U' \cup V' \cup W = V$, so that consequently,
\[q(\sigma) = \iprod{P_{(U',V',E,W)}, \sigma^{\ot |U'|}(\sigma^{\ot |V'|})^{\sT}}\,.\]
Let $a_i$ be the number of elements of $e_i$ that are not in $W$ and let $a = a_1 + \dots + a_c$.
Then for constant $k$,
\[ \E q(\sigma)^k \le O(1)\nu^{ka}n^{k\lfloor a/2+1\rfloor/2}\sum_{\alpha \in G}\opnorm{P_{\alpha}}^k\,, \]
where $G$ is the set of all single-flipped graph matrix shapes corresponding to $(V,E)$ and $W$ and $(e_1, \dots, e_c)$ and $P_{\alpha}$ is the graph matrix with shape $\alpha$.
\end{theorem}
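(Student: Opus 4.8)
The plan is to run the same machinery as in the proof of \Cref{thm:high-entropy-nuclear-norm-bound}, with one essential modification: because a graph matrix $P_{(U',V',E,W)}$ is not a fully symmetric matrix representation of $q(\sigma)$, and symmetrizing it would in general destroy the operator-norm control supplied by \Cref{thm:graph-matrix-norm-bound}, we cannot simply invoke the previous theorem; we must keep the bound phrased in terms of the operator norms of well-shaped graph matrices, and this is exactly what forces the sum over single-flipped shapes $G$. First I would record that $q(\sigma)$ depends only on the pair $(E,W)$ and not on how the non-middle vertices $V\setminus W$ are split between $U'$ and $V'$, so that it makes sense to reshape $P_{(U',V',E,W)}$ into other graph matrices with the same $(E,W)$. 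Writing $T$ for the tensor reshaping of $P_{(U',V',E,W)}$, I would then multinomially expand $\sigma = v_1 + \dots + v_i$ over multi-indices $\alpha$ on $i$ options, apply the weighted power-mean inequality \Cref{lem:l1-to-lt} to bring the $k$th power inside the sum at the cost of an $i^{(k-1)a}$ factor, and use bilinearity of $\otimes$ to turn each summand into $\iprod{T^{\ot k}, \E(v^{\ot\alpha})^{\ot k}}$, exactly as in the earlier proof.

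Next I would expand $\E(v^{\ot\alpha})^{\ot k}$ via the moment--cumulant formula (\Cref{eq:cumulant-degree-d}) into a sum of outer products of tensor products of joint cumulants of the $v_j$'s, and reshape each resulting term so that, within the block of modes belonging to any single hyperedge $e_z$, the split between row and column side lies in the window $\{\lceil a_z/2+1/2\rceil,\lfloor a_z/2\rfloor,\lfloor a_z/2-1/2\rfloor\}$ of \Cref{def:single-flip-graph-matrix}, subject to the alternating constraint there, which is precisely what the cumulant pairings permit while holding the total row and column dimensions fixed. The reshaping of $T^{\ot k}$ compatible with such a choice is exactly $P_\alpha^{\ot k}$ for a single-flipped shape $\alpha \in G$, while the reshaping of the moment term is an $(a_1,\dots,a_c)$-stacked single-flip-permutable matrix representation in the sense of \Cref{def:single-flip-stacked}; by the single-flipped analogue of \Cref{prop:representation-product-stacked} the Hilbert--Schmidt inner product is preserved, and by the $(1,\infty)$-H\"older inequality \Cref{lem:holders-one-inf} each term is at most $\opnorm{P_\alpha}^k$ times the nuclear norm of its reshaped moment term. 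Grouping the terms by the shape $\alpha$ they are reshaped to, summing the multinomial weights, and using that there are only $O(1)$ admissible single-flip choices (since $c$ and the $a_z$ are constant), this yields the claimed bound provided every reshaped moment term has nuclear norm $O(1)\nu^{ka}n^{k\lfloor a/2+1\rfloor/2}$.

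Establishing that nuclear-norm bound is the heart of the argument and the main obstacle: it is the graph-matrix analogue of the stacked moment nuclear-norm bound \Cref{lem:moment-nuclear-norm-stacked} invoked in \Cref{thm:high-entropy-nuclear-norm-bound}, and I would prove it by the same induction over the conditional structure of the high-entropy steps --- bound the $2$-norm of each joint cumulant in its vector reshaping via the Hermite-decomposition and isoperimetric estimates, observe that an outer product of tensor products of such cumulants has nuclear norm at most the product of these $2$-norms (times a power of $n$ from the repeated/diagonal modes), and then account for the cost of folding this canonical form back into a single-flipped shape of the prescribed dimensions. The delicate point, exactly as in \Cref{lem:moment-nuclear-norm}, is the parity pathology: an odd leftover mode on a hyperedge forces one cumulant factor to remain in its wide matrix form rather than being folded into its balanced vector form, costing an extra $\sqrt n$; the alternating single-flip constraint of \Cref{def:single-flip-graph-matrix} and \Cref{def:single-flip-stacked} is engineered so that this penalty is paid at most once across all $c$ hyperedges rather than once per hyperedge, which is what pins the exponent of $n$ at $\lfloor a/2+1\rfloor/2$. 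Carrying this bookkeeping through the induction while simultaneously checking that the intermediate matrices remain genuine reshapings of $\E(v^{\ot\alpha})^{\ot k}$ compatible with the graph-matrix index structure --- so that the single-flipped analogue of \Cref{prop:representation-product-stacked} actually applies --- is where essentially all of the technical effort lies; everything else is the mechanical assembly sketched above.
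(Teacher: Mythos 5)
Your proposal follows essentially the same route as the paper: reduce the pseudo-expectation to a Hilbert--Schmidt pairing, exploit the freedom to redistribute modes of the moment matrix against mode-permuted copies of the graph matrix so that every permuted coefficient matrix is a single-flipped shape $P_\alpha$ with $\alpha \in G$, apply the $(1,\infty)$-H\"older inequality, and reduce everything to the nuclear-norm bound on an $(a_1,\dots,a_c)$-stacked single-flip-permutable moment representation (the paper's \Cref{lem:moment-nuclear-norm-stacked-graph-matrix}), whose proof by induction with the Eulerian-cycle/parity bookkeeping you describe accurately. The only cosmetic difference is that the paper first does $k=1$ and then layers $k$-stacking on top, whereas you carry the $k$th power through from the start; the substance is identical.
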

\begin{proof}
    Assume $k = 1$ for now.
    We have
    \[ \E q(\sigma) = \iprod{P, \E \sigma^{\ot \lfloor a/2\rfloor}\sigma^{\ot \lceil a/2\rceil}} \]
    where $P$ is any $n^{\lfloor a/2\rfloor}$-by-$n^{\lceil a/2\rceil}$ matrix that represents $q$.
    It follows that for any set of matrices $M_1, \dots, M_m$ satisfying
    \[ \E \sigma^{\ot \lfloor a/2\rfloor}\sigma^{\ot \lceil a/2\rceil} = M_1 + \dots + M_m\,,\]
    it is true that 
    \[ \E q(\sigma) = \sum_{j \in [m]} \iprod{P, M_j}\,. \]
    It then also follows that for any sequence $\pi_1, \dots, \pi_m$ of mode permutations,
    \[ \E q(\sigma) = \sum_{j \in [m]} \iprod{\pi_j(P), \pi_j(M_j)} \le \sum_{j \in [m]} \opnorm{\pi_j(P)}\norm{\pi_j(M_j)}_1 \,. \]
    By taking $P = P_{\alpha_0}$ for some $\alpha_0 \in G$ satisfying $|e_i \cap U_{\alpha_0}| \in \{\lceil a_i/2\rceil, \lfloor a_i/2\rfloor\}$, and restricting each $\pi_{j}$ to be a permutation that takes $P_{\alpha_0}$ to some $P_{\alpha}$ for some other $\alpha \in G$, the sum $\sum_j \pi_j(M_j)$ becomes allowed to be any $(a_1, \dots, a_c)$-stacked single-flip permutable representation of the $a$th pseudo-moment of $\sigma$.

    By taking a multinomial expansion and then applying \Cref{lem:moment-nuclear-norm-stacked-graph-matrix}, there exists such a $(a_1, \dots, a_c)$-stacked single-flip-permutable representation which satisfies
    \[ \sum_j \norm{\pi_j(M_j)}_1 \le O(1)\nu^{a}n^{\lfloor a/2+1\rfloor/2}\,.\]

    This completes the argument for $k=1$.
    To handle $k > 1$, we use the same technique as in the $k > 1$ part of \Cref{thm:high-entropy-nuclear-norm-bound}, adding a layer of $k$-stacking on top of the $(a_1, \dots, a_c)$-stacking so as to apply a version of \Cref{lem:moment-nuclear-norm-stacked} that has a layer of \Cref{lem:moment-nuclear-norm-stacked-graph-matrix} nested inside of it.
\end{proof}

\subsection{Nuclear norm bounds on joint moments of high-entropy steps}
\label{sec:high-entropy-set-nuc-norm-bounds}

\begin{definition}
    Given a multiset $S = \{x_1, \dots, x_k\} \subseteq \R^n$ of random variables, its \emph{$S$-moments} are the collection of degree-$|S|$ joint moments given by the tensor $\E \bigotimes_{i \in k} x_i$.
\end{definition}

\begin{definition}
\label[Definition]{def:mom-nuc-norm}
    The \emph{$(r,s)$-moment-nuclear norm} of a multi-set of vector-valued random variables $S = \{x_1, \dots, x_k\} \subseteq \R^n$ (where $k = r+s$) is the smallest nuclear norm of any $\R^{n^r \times n^s}$ matrix representation of the $S$-moments.

    More precisely, it is equal to
    \[ \norm{\E S^{(r,s)}}_1 := \inf_{M \in \cM} \norm{M}_1\,, \]
    where $\norm{\cdot}_1$ denotes the matrix nuclear norm and $\cM$ is the set of matrix representations (\Cref{def:mat-representations-multivar-moments}) of the $S$-moments with co-domain $(\R^n)^{\ot r}$ and domain $(\R^n)^{\ot s}$.

    We will also write $\norm{\E [S^{(r,s)} \mid y_1, \dots, y_{k'}]}_1$ to denote $\norm{\E S^{(r,s)}}_1$ where the variables in $S$ are interpreted as conditioned on the values of the variables $y_1, \dots, y_{k'}$.
\end{definition}

\begin{definition}
\label[Definition]{def:rank-decomposing}
    A \emph{rank-decomposing} SoS proof that $\norm{ M}_1 \le C$ conditioned on $v_1, \dots, v_j$ where $M$ is a matrix indeterminate is an explicit decomposition $M = \sum_i \E_j p_iq_i^{\sT}$ where each $p_i$ and $q_i$ are vector-valued indeterminates, such that $\sum_i \frac{1}{2}\E_j(\norm{p_i}_2^2 + \norm{q_i}_2^2) \le C$ by a SoS proof.
\end{definition}

\begin{lemma}
\label[Lemma]{lem:moment-nuclear-norm}
Suppose that $S$ is a multiset containing formal variables $u_1, \dots, u_k \in \R^n$ with multiplicities $\eta_1, \dots, \eta_k$ respectively.
Let $\eta = \eta_1 + \dots + \eta_k$.

Suppose further that conditioned on the values of $u_1, \dots, u_{i-1}$, each $u_i$ is a Gaussian with covariance matrix $\Sigma_i$ satisfying $\Sigma_i \preceq \nu_i \cdot \Id$ and furthermore that the covariance matrix of $u_i$ is a polynomial of degree at most $d$ in $u_1, \dots, u_{i-1}$, and let $\nu = \prod_{i \in [k]} \nu_i^{\eta_i}$.

Then
\[ \norm{\E S^{(\lfloor\eta/2\rfloor, \lceil\eta/2\rceil)}}_1 \le 2^{dk^2/2}\eta^{\eta/2+2k}\,\nu^{1/2}\, n^{\lfloor\eta/2+1\rfloor/2} \]
by a rank-decomposing sum-of-squares proof.

\end{lemma}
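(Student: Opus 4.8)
The plan is to run the argument sketched in the introduction to \Cref{sec:hods} in full: expand the moment tensor $\E[u_1^{\ot\eta_1}\ot\dots\ot u_k^{\ot\eta_k}]$ by the moment--cumulant formula (\Cref{def:cumulants}), reshape the cumulants that appear, and assemble the result as an outer product of tensor products of those reshaped cumulants. We may assume every $\eta_i$ is even, since otherwise $\E[u_i^{\ot\eta_i}\mid u_1,\dots,u_{i-1}]$ is an odd moment of a centered Gaussian and vanishes, making the bound trivial; in particular $\eta$ is even and $\lfloor\eta/2\rfloor=\lceil\eta/2\rceil=\eta/2$. Applying the moment--cumulant formula to the $\eta$ scalar coordinates (each labelled by which $u_i$ it belongs to) gives a sum over set partitions $\pi$ of $[\eta]$ of a tensor built from one joint cumulant tensor per block of $\pi$. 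For our conditionally Gaussian filtered process the relevant cumulants are governed by the conditional covariances $\Sigma_i=\E[u_iu_i^{\sT}\mid u_1,\dots,u_{i-1}]$ and their low-degree parts: a block of $\pi$ contributes nothing unless it is a pair of $u_i$-coordinates (giving a factor essentially $\Sigma_i$) or such a pair together with the $\le d$ earlier coordinates on which $\Sigma_i$ depends, through the Wiener--Hermite expansion of $\Sigma_i$ (\Cref{def:wiener-hermite-expansion}); degree-$\le d$ of each $\Sigma_i$ thus caps every block at size $\le d+2$. Each original $u_i$-coordinate lands in exactly one such ``pair'' block, which is the source of the $\nu$-budget below.

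The core estimates are on the $2$-norms of the reshaped cumulants. The HES axiom gives $0\preceq\Sigma_i\preceq\nu_i\Id$ (non-negativity from \Cref{prop:var-nonneg}, the upper bound being the operator-norm constraint), and from these, as conditional matrix-SoS facts, $\norm{\Sigma_i}_1=\tr\Sigma_i\le n\nu_i$ and $\norm{\Sigma_i}_F^2=\tr\Sigma_i^2\le\nu_i\tr\Sigma_i\le n\nu_i^2$; reshaping $\Sigma_i$ into a vector $\Phi_i\in\R^{n^2}$ gives $\norm{\Phi_i}_2=\norm{\Sigma_i}_F\le\sqrt n\,\nu_i$. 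The Wiener--Hermite coefficients of $\Sigma_i$ have $2$-norms controlled by the same quantity via Bessel's inequality and the weak orthogonality of the Hermite basis (\Cref{thm:hermite-plancherel}, \Cref{lem:hermite-inner-prods}), so each cumulant tensor's vector reshaping has $2$-norm at most a combinatorial factor times a product over its pair-blocks of $\sqrt n\,\nu_i$ (with one block possibly kept as an $n\times n$ matrix contributing $n\nu_i$ instead of $\sqrt n\,\nu_i$; see below). Multiplying these pieces, each partition $\pi$ produces a term that is a reshaping of $\bigotimes_{\text{pair-blocks}}(\text{a }\Phi_i\text{ or a }\Sigma_i)$ tensored with fixed combinatorial structure, hence a rank-one product $p_\pi q_\pi^{\sT}$ in each realization of $u_1,\dots,u_k$; this is exactly the form required by \Cref{def:rank-decomposing}, and $\sum_\pi\tfrac12\E(\norm{p_\pi}_2^2+\norm{q_\pi}_2^2)$ is bounded by combining the above conditional $2$-norm bounds through AM--GM balancing and \Cref{lem:l1-to-lt} for the sum over partitions. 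The $\nu$-budget comes out exactly $\prod_i\nu_i^{\eta_i/2}=\nu^{1/2}$, since each of the $\eta/2$ pair-blocks carries $\nu$-power $1$ over the two coordinates it contains, regardless of whether it is reshaped as $\Phi_i$ or kept as a matrix.

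The exponent of $n$ is pinned down by how we route the pair-blocks into the fixed shape $(\eta/2,\eta/2)$. Each pair-block reshaped so that its two coordinates sit on the same side of the matrix contributes a $\Phi_i$-vector with $\norm{\Phi_i}_2\le\sqrt n\,\nu_i$, i.e. $n$-power $1/2$; two such blocks combine into an $n^2\times n^2$ rank-one $\Phi_i\Phi_j^{\sT}$-type block ($n$-power $1$ per two blocks). When $\eta/2$ is even all $\eta/2$ blocks pair up this way, giving $n$-power $\eta/4$; when $\eta/2$ is odd exactly one block must straddle the two sides as an $n\times n$ matrix, contributing $n$-power $1$ instead of $1/2$, for a total $n$-power $(\eta/2-1)/2+1=(\eta+2)/4=\lfloor\eta/2+1\rfloor/2$. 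To be sure the straddle penalty is incurred once rather than once per level, I carry through the induction over the filtration a rank-decomposing bound on the partial moment tensor in two matrix shapes---the squaremost one and one off-square by a single mode on each side---and at each level route the new pair-blocks one-sidedly (never straddling), absorbing any parity imbalance into the shape and alternating sides between levels; a single straddle at the very end, if the total block count forces it, returns to the exact $(\eta/2,\eta/2)$ shape demanded by the statement. Finally the prefactor: the number of admissible partitions is $\le\eta^{\eta/2}$ (each of at most $\eta/2$ pair-blocks picks $\le d$ further coordinates, which a crude count folds into $\eta^{\eta/2}$), and the AM--GM balancing, the degree-$\le d$ carried by each $\Sigma_i$, and the degree-reductions over the $k$ levels contribute a further factor a deliberately loose estimate bounds by $2^{dk^2/2}\eta^{2k}$; multiplying everything gives $2^{dk^2/2}\eta^{\eta/2+2k}\nu^{1/2}n^{\lfloor\eta/2+1\rfloor/2}$.

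The part I expect to be hardest is exactly the structural bookkeeping: enumerating all joint cumulants of the filtered conditionally-Gaussian process, showing via the Wiener--Hermite expansions of the $\Sigma_i$ that each has a vector reshaping whose $2$-norm is SoS-bounded by the claimed product, and then carrying the two-shape invariant through the induction so that the straddle penalty is taken at most once---all while keeping every step an explicit low-degree matrix-SoS decomposition in the sense of \Cref{def:rank-decomposing}, since this lemma is invoked inside the SoS proof of \Cref{thm:high-entropy-nuclear-norm-bound}. A secondary nuisance is that each $\Sigma_i$ is only a matrix polynomial of the earlier iterates, so ``pulling out'' its contribution is a conditional statement, and one must check that the nested expectation operators and tensor products compose according to the equivalences of \Cref{sec:equivalences} before the inductive hypothesis is applied.
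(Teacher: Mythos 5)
Your overall architecture — moment--cumulant expansion, reshaping each cumulant either as a vector $\Phi_i$ (contributing $\norm{\Sigma_i}_F\le\nu_i\sqrt n$) or as a matrix $\Sigma_i$ (contributing $\Tr\Sigma_i\le\nu_i n$), and a two-shape induction over the filtration so that the straddle penalty is paid only once — is the same as the paper's. But there is a genuine gap at the very first step: the reduction to the case where every $\eta_i$ is even is invalid. It is true that $\E[u_i^{\ot\eta_i}\mid u_1,\dots,u_{i-1}]$ vanishes for odd $\eta_i$, but the object being bounded is the \emph{joint} moment $\E[u_1^{\ot\eta_1}\ot\dots\ot u_k^{\ot\eta_k}]$, and the conditional covariances $\Sigma_{i+1},\dots,\Sigma_k$ of the later steps are polynomials in $u_i$. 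When the outer expectation over $u_i$ is taken, the odd power $u_i^{\ot\eta_i}$ pairs with the $u_i$-dependence of those later covariances and the result is generically nonzero (e.g.\ $k=2$, $\eta_1=1$, $\eta_2=2$: $\E[u_1\ot\E[u_2u_2^{\sT}\mid u_1]]=\E[u_1\ot\Sigma_2(u_1)]\ne 0$ whenever $\Sigma_2$ has a nontrivial linear Wiener--Hermite component in $u_1$). Only the innermost variable $u_k$ admits the parity argument. The odd-$\eta_j$ cases are precisely where the substantive work lives: one must bound cumulants of the form $\kappa_{a+1}(u_j,\dots,u_j,p)$ and $\kappa_{a+2}(u_j,\dots,u_j,p,q)$ for odd $a$ (the paper's \Cref{lem:ydotsyp-vec} and \Cref{lem:ydotsypq}, via the Hermite operator-norm bound \Cref{lem:hermite-cum-matrix-opnorm}) and then route an odd number of new modes across the two sides of the matrix, which is exactly what forces the alternation between the squaremost and off-by-one shapes in the induction (\Cref{lem:rank-decomposing-induction}). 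By assuming all multiplicities even you have discarded these cases rather than handled them, so the proof as proposed does not establish the lemma.

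A secondary, related imprecision: your "flat partition of the $\eta$ coordinates with blocks of size $\le d+2$" picture is not quite right either — the blocks of the moment--cumulant expansion that survive are joint cumulants coupling $u_j$-coordinates to the Wiener--Hermite coefficients of later covariances, and the degree bound $d$ on each $\Sigma_i$ controls the Hermite degree of those coefficients, not the cardinality of the partition blocks. The paper avoids this bookkeeping entirely by organizing the expansion one filtration level at a time, applying the moment--cumulant formula only to the pair (new variable, previously constructed rank-decomposition factors); if you want to keep the flat-partition view you would need to justify which partitions contribute and re-derive the per-block norm bounds from scratch.
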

\begin{proof}
    This will be a proof by induction, although the inductive hypothesis will be slightly stronger than the lemma statement.

    Let $S_{(j)}$ be the multiset containing $u_{j}, \dots, u_k$ with multiplicities $\eta_j, \dots, \eta_k$, let $\eta^{(j)} = \eta_j + \dots + \eta_k$, and let $\nu^{(j)} = \prod_{i \in [k] \setminus [j-1]} \nu_i^{\eta_i}$.
    Furthermore, let $r^{(j)} = \lfloor\eta^{(j)}/2\rfloor$ and $s^{(j)} = \lceil\eta^{(j)}/2\rceil$.

    The inductive hypothesis is that for every $j+1$, we have, by a rank-decomposing sum-of-squares proof,
\begin{align*}
&\max\left(\norm{\E\left[ S_{(j+1)}^{(r^{(j+1)}, s^{(j+1)})}\mid u_1, \dots, u_j\right]}_1, \norm{\E\left[ S_{(j+1)}^{(r^{(j+1)}-1, s^{(j+1)}+1)}\mid u_1, \dots, u_j\right]}_1\right)
\\&\qquad\qquad\qquad\le 2^{d(k-j)}\eta^{\eta^{(j+1)}+2(k-j)}(\nu^{(j+1)})^{1/2}n^{\lfloor\eta^{(j+1)}/2+1\rfloor/2}
\,, \end{align*}
where we count $\norm{\E S_{(j+1)}^{\left(r, s\right)}}_1$ as $0$ if $r$ or $s$ is negative.

In the base case, assuming without loss of generality that $\eta_k \ne 0$, then we have $S_{(k)}$ is the multiset containing only $u_k$ with multiplicity $\eta_k$.
We can compute the moments $\E\left[u_k^{\ot \eta_k} \mid u_1, \dots, u_{k-1}\right]$ directly using Isserlis's lemma~(\Cref{prop:sos-isserlis}).
If $\eta_k$ is odd, then all the degree-$\eta_k$ moments are $0$ and we are done.
If $\eta_k$ is even, then Isserlis's theorem tells us that $\E\left[u_k^{\ot \eta_k} \mid u_1, \dots, u_{k-1}\right]$ is a sum over the $(\eta_k-1)!!$ different mode permutations of $\Sigma_k^{\ot \eta_k/2}$.
Letting $\Phi$ be the vector reshaping of $\Sigma_k$, for the $\norm{\E\left[ S_{(k)}^{(r^{(k)}, s^{(k)})}\mid u_1, \dots, u_j\right]}_1$ case we can choose the matrix representation that represents each summand as
\[ (\Phi\Phi^{\sT})^{\ot \lfloor\eta_k/4\rfloor} \ot \Sigma_k^{\ot (\eta_k/2 - 2\lfloor\eta_k/4\rfloor)}\,. \]
The nuclear norm of each summand is then, using~\Cref{lem:holders-one-inf} to see that $\Tr \Sigma_k \le \nu_kn$ and $\norm{\Sigma_k}_F^2 \le \opnorm{\Sigma_k}\Tr \Sigma_k \le \nu_k^2n $, 
\[\norm{\Sigma_k}_F^{2\lfloor\eta_k/4\rfloor}(\Tr \Sigma_k)^{\eta_k/2 - 2\lfloor\eta_k/4\rfloor}
\le \nu_k^{\eta_k/2}n^{\lfloor\eta_k/4\rfloor}n^{\eta_k/2 - 2\lfloor\eta_k/4\rfloor}
\le \nu_k^{\eta_k/2}n^{\lfloor\eta_k/2+1\rfloor/2}
\,.\]
Meanwhile, in the $\norm{\E\left[ S_{(k)}^{(r^{(k)}-1, s^{(k)}+1)}\mid u_1, \dots, u_j\right]}_1$ case, we can choose the matrix representations as
\[ (\Phi\Phi^{\sT})^{\ot \lfloor\eta_k/4-1/2\rfloor} \ot \Sigma_k^{\ot (\eta_k/2 - 1 - 2\lfloor\eta_k/4 - 1/2\rfloor)} \ot \Phi\,. \]
The nuclear norm of each summand is then 
\begin{align*}
\norm{\Sigma_k}_F^{2\lfloor\eta_k/4-1/2\rfloor+1}(\Tr \Sigma_k)^{\eta_k/2 - 1 - 2\lfloor\eta_k/4 - 1/2\rfloor}
&\le \nu_k^{\eta_k/2}n^{\lfloor\eta_k/4-1/2\rfloor+1/2}n^{\eta_k/2 - 1 - 2\lfloor\eta_k/4 - 1/2\rfloor}
\\&\le \nu_k^{\eta_k/2}n^{\eta_k/2-\lfloor\eta_k/4-1/2\rfloor-1/2}
\\&\le \nu_k^{\eta_k/2}n^{\lfloor\eta_k/2+1\rfloor/2}
\,.\end{align*}

In both cases, there are $(\eta_k-1)!! \le \eta_k^{\eta_k/2}$ summands and these were both rank-decomposing proofs, so the inductive hypothesis is satisfied in the base case.

In the inductive case, we mainly use \Cref{lem:rank-decomposing-induction}.
We break it into cases depending on the value of $\eta_j$.
If $\eta_j \equiv 0 \pmod 4$, then in both the $S_{(j)}^{(r^{(j)}, s^{(j)})}$ and $S_{(j)}^{(r^{(j)}-1, s^{(j)})+1}$ cases, we invoke \Cref{lem:rank-decomposing-induction} with $\Delta r = \Delta s = \eta_j/2$ on the respective cases of the inductive hypothesis.
Then the nuclear norm is at most $2^d\eta_{j}^{\eta_j/2+2}\nu_j^{\eta_j/2}n^{\lfloor\eta_j/2\rfloor/2}$ times as large as the previous step, which since $\eta_k \equiv 0 \pmod 2$, is at most as large as the desired ratio
\begin{align*}
    \frac{2^{d(k-j-1)}\eta^{\eta^{(j)}/2+2(k-j-1)}(\nu^{(j)})^{1/2}n^{\lfloor\eta^{(j)}/2+1\rfloor/2}}{2^{d(k-j)}\eta^{\eta^{(j+1)}/2+2(k-j)}(\nu^{(j+1)})^{1/2}n^{\lfloor\eta^{(j+1)}/2+1\rfloor/2}}
& = 2^d\eta^{\eta_j/2+2}\nu_j^{\eta_j/2}\frac{n^{\lfloor(\eta^{(j+1)} + \eta_j)/2+1\rfloor/2}}{n^{\lfloor\eta^{(j+1)}/2+1\rfloor/2}} \\
& = 2^d\eta^{\eta_j/2+2}\nu_j^{\eta_j/2}n^{\eta_j/4}\,.
\end{align*}

If $\eta_j \equiv 2 \pmod 4$, the argument works much the same way, except that in the $S_{(j)}^{(r^{(j)}, s^{(j)})}$ case we invoke \Cref{lem:rank-decomposing-induction} with $\Delta r = \eta_j/2 + 1$ and  $\Delta s = \eta_j/2 - 1$ on the inductive hypothesis for $S_{(j+1)}^{(r^{(j+1)}-1, s^{(j+1)}+1)}$, and in the $S_{(j)}^{(r^{(j)}-1, s^{(j)}+1)}$ case, we invoke \Cref{lem:rank-decomposing-induction} with $\Delta r = \eta_j/2 - 1$ and  $\Delta s = \eta_j/2 + 1$ on the inductive hypothesis for $S_{(j+1)}^{(r^{(j+1)}, s^{(j+1)})}$.
Then, compared with the $0 \pmod 4$ case, we get the same expression for the bound on the new nuclear norm and also for the desired ratio, so the desired bound is still satisfied.

In the case that $\eta_j$ is odd, if $\eta^{(j)}$ is even then in both the $S_{(j)}^{(r^{(j)}, s^{(j)})}$ and $S_{(j)}^{(r^{(j)}-1, s^{(j)}+1)}$ cases, we invoke \Cref{lem:rank-decomposing-induction} with $\Delta r = \lceil\eta_j/2\rceil$ and  $\Delta s = \lfloor\eta_j/2\rfloor$ on the respective cases of the inductive hypothesis.
If $\eta^{(j)}$ is odd then in both the $S_{(j)}^{(r^{(j)}, s^{(j)})}$ and $S_{(j)}^{(r^{(j)}-1, s^{(j)}+1)}$ cases, we invoke \Cref{lem:rank-decomposing-induction} with $\Delta r = \lfloor\eta_j/2\rfloor$ and  $\Delta s = \lceil\eta_j/2\rceil$ on the respective cases of the inductive hypothesis.
In both cases when $\eta_j$ is odd, the new nuclear norm is at most $2^d\eta_j^{\eta_j/2+2}\nu_j^{\eta_j/2}n^{\lfloor \eta_j/2 \rfloor/2}$ times as large as the old one, and the desired bound on the ratio was for it to be at most
\begin{align*}\frac{2^{d(k-j-1)}\eta^{\eta^{(j)}/2+2(k-j-1)}(\nu^{(j)})^{1/2}n^{\lfloor\eta^{(j)}/2+1\rfloor/2}}{2^{d(k-j)}\eta^{\eta^{(j+1)}/2+2(k-j)}(\nu^{(j+1)})^{1/2}n^{\lfloor\eta^{(j+1)}/2+1\rfloor/2}}
&= 2^d\eta^{\eta_j/2+2}\nu_j^{\eta_j/2}\frac{n^{\lfloor(\eta^{(j+1)} + \eta_j)/2+1\rfloor/2}}{n^{\lfloor(\eta^{(j+1)} )/2+1\rfloor/2}}
\\&\ge 2^d\eta_j^{\eta_j/2+2}\nu_j^{\eta_j/2}n^{\lfloor\eta_j/2\rfloor/2}\,,
\end{align*}
remembering that $\lfloor\eta_j/2\rfloor = (\eta_j-1)/2$ since $\eta_j$ is odd.
And that covers all the cases.
\end{proof}

\begin{lemma}
\label[Lemma]{lem:moment-nuclear-norm-stacked}
Suppose that $S$ is a multiset containing formal variables $u_1, \dots, u_k \in \R^n$ with multiplicities $\eta_1, \dots, \eta_k$ respectively.
Let $\eta = \eta_1 + \dots + \eta_k$.

Suppose further that conditioned on the values of $u_1, \dots, u_{i-1}$, each $u_i$ is a Gaussian with covariance matrix $\Sigma_i$ satisfying $\Sigma_i \preceq \nu_i \cdot \Id$ and furthermore that the covariance matrix of $u_i$ is a polynomial of degree at most $d$ in $u_1, \dots, u_{i-1}$, and let $\nu = \prod_{i \in [k]} \nu_i^{\eta_i}$.

Then there exists a $c$-stacked permutable matrix representation (\Cref{def:c-stacked}) of $S$ with domain $(\R^n)^{\ot c\lfloor\eta/2\rfloor}$ and co-domain $(\R^n)^{\ot c\lceil\eta/2\rceil)}$, denoted $(S^{(\lfloor\eta/2\rfloor, \lceil\eta/2\rceil)})^{\mathrm{stack}(c)}$, so that
\[ \norm{\E (S^{(\lfloor\eta/2\rfloor, \lceil\eta/2\rceil)})^{\mathrm{stack}(c)}}_1 \le 2^{cdk^2/2}(c\eta)^{c\eta/2+2ck}\,\nu^{c/2}\, n^{c\lfloor\eta/2+1\rfloor/2} \]
by a rank-decomposing sum-of-squares proof.
\end{lemma}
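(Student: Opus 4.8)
The plan is to re-run the induction in the proof of \Cref{lem:moment-nuclear-norm}, but carrying $c$ ``stacked'' copies of every intermediate moment tensor through the recursion at once and tracking the per-block reshaping freedom built into \Cref{def:c-stacked}. Keep the notation $S_{(j)}$, $\eta^{(j)}$, $\nu^{(j)}$, $r^{(j)}$, $s^{(j)}$ from that proof. The strengthened inductive hypothesis I would use is: for every $j$ and every assignment to each block $z\in[c]$ of a shape in $\{(r^{(j+1)},s^{(j+1)}),\,(r^{(j+1)}-1,s^{(j+1)}+1)\}$, there is a $c$-stacked permutable matrix representation of $\E[(S_{(j+1)})^{\ot c}\mid u_1,\dots,u_j]$ realizing those block shapes whose nuclear norm is at most $2^{cd(k-j)}(c\eta)^{c\eta^{(j+1)}/2+2c(k-j)}(\nu^{(j+1)})^{c/2}n^{c\lfloor\eta^{(j+1)}/2+1\rfloor/2}$ (the exponents bookkept exactly as in \Cref{lem:moment-nuclear-norm}), and that this bound holds by a rank-decomposing sum-of-squares proof (\Cref{def:rank-decomposing}). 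Evaluated at the last step, with $S_{(1)}=S$ and all $c$ blocks taken square-most, this yields the claimed bound, so it suffices to establish the base case and the inductive step.

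For the base case, $(S_{(k)})^{\ot c}$ is the multiset with $u_k$ of multiplicity $c\eta_k$, and $\E[u_k^{\ot c\eta_k}\mid u_1,\dots,u_{k-1}]$ is given by Isserlis's lemma (\Cref{lem:isserlis}, in its sum-of-squares form) as a sum over the $(c\eta_k-1)!!\le(c\eta_k)^{c\eta_k/2}$ mode-permutations of $\Sigma_k^{\ot c\eta_k/2}$, the odd case being identically zero. For each permutation and each target block pattern I would assemble the representation block by block exactly as in the base case of \Cref{lem:moment-nuclear-norm}: a square-most block gets $(\Phi_k\Phi_k^{\sT})^{\ot\lfloor\eta_k/4\rfloor}\ot\Sigma_k^{\ot(\eta_k/2-2\lfloor\eta_k/4\rfloor)}$, with $\Phi_k$ the vector reshaping of $\Sigma_k$, and a single-flipped block gets the same with one extra $\Phi_k$ moved to the appropriate side. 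Using $\Tr\Sigma_k\le\nu_k n$ and $\norm{\Sigma_k}_F^2\le\nu_k^2 n$ (via \Cref{lem:holders-one-inf}) each block contributes nuclear norm at most $\nu_k^{\eta_k/2}n^{\lfloor\eta_k/2+1\rfloor/2}$, so multiplying over the $c$ blocks and summing over permutations gives $(c\eta_k)^{c\eta_k/2}\nu_k^{c\eta_k/2}n^{c\lfloor\eta_k/2+1\rfloor/2}$, below the hypothesis bound; the $\Phi_k\Phi_k^{\sT}$ factors are rank-one and the $\Sigma_k$ factors admit conditional rank decompositions from the PSD-ness and bounded degree of the conditional covariance, so this is a rank-decomposing proof.

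For the inductive step, which incorporates $u_j$ with multiplicity $\eta_j$ into each of the $c$ copies, write $\E[(S_{(j)})^{\ot c}\mid u_1,\dots,u_{j-1}] = \fE_j\big[\,u_j^{\ot c\eta_j}\ot\fE_{j+1}[(S_{(j+1)})^{\ot c}]\,\big]$, grouping the $u_j$-modes of all copies together by permutability, substitute the rank decomposition of $\fE_{j+1}[(S_{(j+1)})^{\ot c}]$ supplied by the hypothesis, and apply a $c$-stacked version of \Cref{lem:rank-decomposing-induction}: it distributes the $c\eta_j$ new $u_j$-modes, $\eta_j$ to a block, splits them ceil/floor within each block in whichever order reaches the target pattern, and performs the Isserlis contraction against the Hermite expansions of the decomposing vectors. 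The per-step growth of the nuclear-norm bound then factors as $2^{cd}$ (degree-$d$ conditional covariance, $c$ copies) times $\nu_j^{c\eta_j/2}$ times an $n$-power whose block-by-block exponent increment is exactly the $0,2\pmod 4$ and odd arithmetic of \Cref{lem:moment-nuclear-norm} (so the ``$+1$'' is paid at most once per block) times a combinatorial factor at most $(c\eta_j)^{c\eta_j/2}(c\eta_j)^{2c}\le(c\eta)^{c\eta_j/2+2c}$ accounting for the pairings of the $c\eta_j$ new modes (which may now pair across copies) and the bookkeeping already present in \Cref{lem:rank-decomposing-induction}. These factors are exactly the ratio of consecutive hypothesis bounds, and the tensor-product rule for matrix sum-of-squares proofs keeps the whole construction rank-decomposing; telescoping from $j=k-1$ down to $j=0$ and bounding $\eta_j\le\eta$ collapses the bound to $2^{cdk^2/2}(c\eta)^{c\eta/2+2ck}\nu^{c/2}n^{c\lfloor\eta/2+1\rfloor/2}$.

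The step I expect to be the main obstacle is the inductive-step bookkeeping that simultaneously keeps the representation genuinely $c$-stacked permutable and keeps the $n$-exponent at $c\lfloor\eta/2+1\rfloor/2$: one must show that at every level the new $u_j$-modes can be split, independently block by block, into $\lceil\eta_j/2\rceil$ and $\lfloor\eta_j/2\rfloor$ in either order, so that no block is ever forced to receive an odd number of new modes on both of its sides --- the ``bad case'' that \Cref{lem:moment-nuclear-norm} avoids by carrying two shapes, which here must be avoided in all $c$ blocks at once --- together with checking that stacking $c$ rank-decomposing proofs multiplies the norm bounds, where the conditional structure of the $\fE_j$ operators and the tensor-product rule for matrix sum-of-squares proofs must be handled with care. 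Everything else is a mechanical lift of the corresponding piece of \Cref{lem:moment-nuclear-norm}.
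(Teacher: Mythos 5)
Your induction scaffolding, the two-shape inductive hypothesis carried per block, and the telescoping of the per-step growth factors all match the paper's intent, and your inductive step is essentially what the paper does (the paper phrases it more cleanly by grouping the $c$ stacked copies of each mode into a single $(\R^n)^{\ot c}$-valued variable, so that \Cref{lem:rank-decomposing-induction} applies verbatim to $n^c$-dimensional variables and automatically keeps all $c$ copies of a mode on the same side of every reshaping). But there is a genuine gap in your base case, and it is exactly where the stacked lemma differs from $c$ independent copies of \Cref{lem:moment-nuclear-norm}.

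You write that for each Isserlis mode-permutation of $\Sigma_k^{\ot c\eta_k/2}$ you ``assemble the representation block by block exactly as in the base case of \Cref{lem:moment-nuclear-norm}.'' That only works for matchings in which every pair lives inside a single block. The Isserlis expansion of $\E\, u_k^{\ot c\eta_k}$ also contains matchings that pair a mode in block $z$ with a mode in block $z'\ne z$; the resulting $\Sigma_k$ factor straddles two blocks and cannot be folded into a per-block $(\Phi_k\Phi_k^{\sT})^{\ot\lfloor\eta_k/4\rfloor}\ot\Sigma_k^{\ot(\cdots)}$ template. Whether such a cross-block $\Sigma_k$ can be converted into a $\Phi_k$ (both modes on the same side, costing $n^{1/2}$ instead of $n$) now depends on a \emph{global} choice of which modes of each block are row modes and which are column modes, subject to every block individually hitting its prescribed $(r^{(k)},s^{(k)})$ or $(r^{(k)}-1,s^{(k)}+1)$ split. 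You need to prove such a choice exists leaving at most $c$ unconverted $\Sigma_k$'s in total; otherwise the exponent $n^{c\lfloor\eta_k/2+1\rfloor/2}$ is not justified for these matchings. The paper supplies this via a graph argument: the $c$ blocks are vertices, the $\Sigma/\Phi$ factors are edges with $\{\mathsf{row},\mathsf{col}\}$-labelled endpoints, phantom edges make all degrees even, and traversing an Eulerian cycle in each connected component while alternating $(\mathsf{row},\mathsf{row})$ and $(\mathsf{col},\mathsf{col})$ labels forces at most one mismatched edge per component, hence at most $c$ surviving $\Sigma_k$'s. Some such combinatorial existence argument must be added to your base case; relatedly, the parity obstacle you flag as the ``main obstacle'' in the inductive step is, in the paper's organization, dissolved there by the mode-grouping trick and resurfaces only in this base-case labelling problem.
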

\begin{proof}
    This works very similarly to \Cref{lem:moment-nuclear-norm}.
    The inductive case will follow the same argument, but with all $c$ stacked copies of a mode being grouped together, so that each formal variable in $S$ actually represents a vector in $(\R^n)^{\ot c}$ rather than $\R^n$. Then each invocation of \Cref{lem:rank-decomposing-induction} also applies to $n^{c}$-dimensional vector-valued variables instead of $n$-dimensional.
    This ensures that whenever  \Cref{lem:rank-decomposing-induction} reshapes a matrix, it keeps all of the $c$ copies of a particular mode on the same side of the matrix.

    In the base case, it will no longer be possible to convert all but one copy of $\Sigma$ into $\Phi$, since there may be some copies of $\Sigma$ that tie together two modes from different blocks.
    It may be possible to convert this $\Sigma$ into a $\Phi$ by moving one of the modes from being a row mode to a being a column mode in the same block, but this will not be possible if we've already used up all the row modes or column modes of the block.
    In the end, we will get at most $c$ copies of $\Sigma$, for a total nuclear norm of $(c\eta_k -1)!!\nu^{c\eta_k/2}n^{c\lfloor \eta_k/2+1\rfloor/2}$.
    
    To formalize this, we may consider the $c$ blocks as vertices with degree at most $\eta$ in a graph, with $\Sigma$s and $\Phi$s being edges.
    Each endpoint of each edge is labelled from the set of labels $\{\mathsf{row},\mathsf{col}\}$, and we are allowed to choose the labels for each endpoint.
    Each vertex must be incident to $r^{(k)}$ edge endpoints labelled $\mathsf{row}$ and $s^{(k)}$ endpoints labelled $\mathsf{col}$ (or vice-versa) in the first case, or $r^{(k)}-1$ and $s^{(k)}+1$ (or vice-versa) in the second case.
    An edge where both endpoints has the same label is an instance of $\Phi$ and one where the two labels are different is an instance of $\Sigma$.
    We add a ``phantom edge'' between pairs of vertices that both have odd degree, increasing the degree of each vertex by at most $1$ to make them all even degree.
    The graph is now a union of Eulerian graphs, and each component has an Eulerian cycle.
    
    In the $(r^{(k)},s^{(k)})$ case, we can follow each cycle, alternating between assigning $(\mathsf{row},\mathsf{row})$ labels and $(\mathsf{col}, \mathsf{col})$ labels to each edge in the cycle.
    Since each time the cycle enters a vertex, it must also leave it, this ensures that all but one vertex in the cycle gets an equal number of $\mathsf{row}$ labels and $\mathsf{col}$ labels.
    The only edge that might be forced to have mismatched endpoints is the last edge in our cycle traversal.
    There can only be one of these per connected component of the graph, so there are at most $c$ of these $\Sigma$ edges.

    In the $(r^{(k)}-1,s^{(k)}+1)$ case, we do the same thing but instead of always alternating between $(\mathsf{row},\mathsf{row})$ and $(\mathsf{col}, \mathsf{col})$, the first time we enter each vertex, we choose to exit using the same label that we entered it on.
\end{proof}

\begin{lemma}
\label[Lemma]{lem:moment-nuclear-norm-stacked-graph-matrix}
Suppose that $S$ is a multiset containing formal variables $u_1, \dots, u_k \in \R^n$ with multiplicities $\eta_1, \dots, \eta_k$ respectively.
Let $\eta = \eta_1 + \dots + \eta_k$.

Suppose further that conditioned on the values of $u_1, \dots, u_{i-1}$, each $u_i$ is a Gaussian with covariance matrix $\Sigma_i$ satisfying $\Sigma_i \preceq \nu_i \cdot \Id$ and furthermore that the covariance matrix of $u_i$ is a polynomial of degree at most $d$ in $u_1, \dots, u_{i-1}$, and let $\nu = \prod_{i \in [k]} \nu_i^{\eta_i}$.

Then there exists a $(a_1, \dots, a_c)$-stacked single-flip-permutable matrix representation (\Cref{def:single-flip-stacked}) of $S$ with domain $(\R^n)^{\ot \sum_{i\in [c]}\lfloor a_i/2\rfloor}$ and co-domain $(\R^n)^{\ot \sum_{i\in [c]}\lceil a_i/2\rceil}$, denoted $(S^{(\sum_{i\in [c]}\lfloor a_i/2\rfloor, \sum_{i\in [c]}\lceil a_i/2\rceil)})^{\mathrm{stack}(a_1, \dots, a_c)}$, so that
\[ \norm{\E (S^{(\sum_{i\in [c]}\lfloor a_i/2\rfloor, \sum_{i\in [c]}\lceil a_i/2\rceil)})^{\mathrm{stack}(a_1, \dots, a_c)}}_1 \le O(1)\,\nu^{\sum_{i\in [c]} a_i/2}\, n^{\lfloor1+\sum_{i \in [c]} a_i/2\rfloor/2} \]
by a rank-decomposing sum-of-squares proof.
\end{lemma}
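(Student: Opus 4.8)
The plan is to mirror the proof of \Cref{lem:moment-nuclear-norm-stacked} almost verbatim, running an induction over the conditional filtration structure $j = k, k-1, \dots, 1$ and maintaining, at each level, a rank-decomposing sum-of-squares proof of a nuclear-norm bound on two companion shapes of matrix representations of the $S_{(j)}$-moments — one whose block-wise row/column split is as balanced as the single-flip rule allows, and one that is off by an extra mode on one side — just as in \Cref{lem:moment-nuclear-norm}. The only structural change is that the $c$ ``blocks'' are no longer $c$ identical roughly-squaremost copies: block $z$ carries $a_z$ non-middle modes of $\sigma$, and throughout the induction we must keep each block's modes grouped together as a single composite mode and, in addition, maintain the single-flip-permutability constraint of \Cref{def:single-flip-stacked} — each block split into $\lfloor a_z/2\rfloor$ and $\lceil a_z/2\rceil$ modes (up to one extra mode on one side), with the off-balance blocks alternating heavy-row / heavy-column in a Dyck-like pattern along the order $e_1, \dots, e_c$. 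Since $c$, $\eta$, $d$, $k$ are all constant and the target only asks for an $O(1)$ prefactor, we need not track the explicit exponential-in-$k$ factors of \Cref{lem:moment-nuclear-norm-stacked}; only the powers of $\nu$ and $n$ matter.

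For the base case I would expand the degree-$\eta_k$ moments of the innermost variable $u_k$ via Isserlis's lemma (\Cref{prop:sos-isserlis}) as a sum over the $(\eta_k-1)!!$ mode-permutations of $\Sigma_k^{\ot \eta_k/2}$, and then convert as many tensor factors $\Sigma_k$ as possible into their cheaper vector reshapings $\Phi_k$ (a $\Phi_k$ contributes $\norm{\Sigma_k}_F \le \nu_k\sqrt n$, whereas a $\Sigma_k$ straddling a row mode and a column mode contributes the larger $\Tr\Sigma_k \le \nu_k n$, using \Cref{lem:holders-one-inf}). As in \Cref{lem:moment-nuclear-norm-stacked} I model the $c$ blocks as vertices of a multigraph with the $\Sigma/\Phi$ factors as edges, add a phantom edge between each pair of odd-degree vertices to make it a union of Eulerian graphs, and traverse each Eulerian cycle alternating $(\mathsf{row},\mathsf{row})$ and $(\mathsf{col},\mathsf{col})$ endpoint labels, so that at most one $\Sigma$ survives per connected component — at most $c$ overall — each surviving $\Sigma$ costing one extra $\sqrt n$, absorbed into the $n^{\lfloor 1+\sum_i a_i/2\rfloor/2}$ of the bound. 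The new point is that the label assignment must simultaneously realize a valid single-flip split: the blocks left heavy on the row side must interleave with those left heavy on the column side in the Dyck-compatible order of \Cref{def:single-flip-stacked}. I would argue this is achievable by choosing the traversal direction and starting endpoint of each Eulerian cycle, since there is essentially one ``defect'' edge per component and the cyclic structure lets us route that defect to any block in its component, with the phantom edges available to settle leftover parity.

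For the inductive step I would invoke \Cref{lem:rank-decomposing-induction} exactly as in \Cref{lem:moment-nuclear-norm} and \Cref{lem:moment-nuclear-norm-stacked}, now applied with each block's modes kept on a single side of every reshaping, and choosing the split parameters $\Delta r$, $\Delta s$ of each invocation according to $\eta_j \pmod 4$ and the parities of the relevant partial sums, so that the running bound is preserved and the single-flip pattern remains valid at every level (the flip constraint is an invariant: each reshape either keeps a block balanced or moves it by exactly one mode in the direction forced by the Dyck ordering). Each invocation multiplies the nuclear norm by a constant times $\nu_j^{\eta_j}$ and at most $n^{\lfloor\eta_j/2\rfloor/2}$ worth of $n$-factors; telescoping over $j$ and combining with the base case yields $O(1)\,\nu^{\sum_i a_i/2}\,n^{\lfloor 1+\sum_i a_i/2\rfloor/2}$, and the rank-decomposing form is preserved at every step since both the base-case decomposition and \Cref{lem:rank-decomposing-induction} are rank-decomposing.

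The main obstacle I anticipate is precisely the interaction, in the base case, between the Eulerian-cycle label minimization — which wants freedom to route the single surviving $\Sigma$ per component wherever it is convenient — and the rigid ordering of \Cref{def:single-flip-stacked}, which dictates exactly which blocks may be off-balance and in what alternating pattern. Producing a clean argument that both can be met simultaneously for every permutation arising in the Isserlis expansion is the crux; the rest is a routine adaptation of the already-established inductive machinery, and the $O(1)$ slack in the target bound leaves ample room to be wasteful.
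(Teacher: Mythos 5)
There is a genuine gap in your base case, and it is exactly at the point where this lemma differs from \Cref{lem:moment-nuclear-norm-stacked}. You run the Eulerian-cycle labelling and conclude that "at most one $\Sigma$ survives per connected component --- at most $c$ overall --- each surviving $\Sigma$ costing one extra $\sqrt n$, absorbed into the $n^{\lfloor 1+\sum_i a_i/2\rfloor/2}$ of the bound." That absorption does not happen. Each surviving $\Sigma$ contributes $\Tr\Sigma \le \nu n$ where a $\Phi$ would contribute $\norm{\Sigma}_F \le \nu\sqrt n$, so $c$ surviving defects give a nuclear norm of order $\nu^{\sum_i a_i/2}\, n^{(c+\sum_i a_i/2)/2}$. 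The target bound tolerates only a \emph{single} extra factor of $\sqrt n$ (that is the "$1+$" inside the floor), and the $O(1)$ prefactor is a constant and cannot swallow $n^{(c-1)/2}$. So for $c \ge 2$ your construction, as described, proves a strictly weaker statement than the lemma.

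The missing idea is that the single-flip permutability of \Cref{def:single-flip-stacked} is not merely a constraint your labelling must respect --- it is the resource that closes the gap. The paper's proof takes the construction of \Cref{lem:moment-nuclear-norm-stacked}, designates one representative vertex per connected component (the set $I$), uses the label-swapping move (convert a $\Sigma$ to a $\Phi$ at the cost of converting an adjacent $\Phi$ to a $\Sigma$, propagated along the Eulerian cycle) to route each component's single defect $\Sigma$ so that it is incident to the representative, arranges the incidence labels so that consecutive representatives alternate $\mathsf{row}$/$\mathsf{col}$, and then uses the one-mode-per-block flip allowed by \Cref{def:single-flip-stacked} to turn all but (at most) one of those defect $\Sigma$s into $\Phi$s. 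The alternation is what keeps the flipped shape a legal single-flipped representation, and the residual unflipped defect (present only when $|I|$ is odd) accounts for the single extra $\sqrt n$ in the bound; the odd-$\eta$ case is handled by routing the defect onto a phantom edge, where it vanishes. You correctly identified the Eulerian-vs-ordering interaction as the crux, but you framed the single-flip structure only as something to be verified rather than exploited, and consequently your bound falls short by a polynomial factor in $n$. Your inductive step is otherwise consistent with the paper's (which defers it to the machinery of \Cref{lem:moment-nuclear-norm-stacked} with blocks of unequal sizes kept grouped).
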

\begin{proof}
    The argument will be similar to \Cref{lem:moment-nuclear-norm-stacked}.
    Again, modes in different block will be grouped together, though this time not every grouping will result in a vector of the same dimension since unlike in \Cref{lem:moment-nuclear-norm-stacked}, we can have $a_1 \ne \dots \ne a_c$.

    For the base case, we use almost the same construction as in \Cref{lem:moment-nuclear-norm-stacked}.
    The construction in \Cref{lem:moment-nuclear-norm-stacked} results in at most one $\Sigma$ per connected component of the graph.
    We use the extra degree of freedom in \Cref{def:single-flip-stacked} to flip a single mode of each block from a row mode to a column mode or vice-versa to convert those $\Sigma$s into $\Phi$s.

    First, let us consider the case when all the entries of $\eta_k$ are even.
    
    We start by taking the graph from \Cref{lem:moment-nuclear-norm-stacked} and labelling the vertices with labels in $[c]$ according to the order imposed by \Cref{def:single-flip-stacked}.
    Let $I$ be the set of vertices whose labels are the smallest among all the other vertices in the same connected component.

    Note that we are able to convert one $\Sigma$ into a $\Phi$ at the cost of converting an adjacent $\Phi$ into a $\Sigma$ by swapping the $\mathsf{row}$/$\mathsf{col}$ labels of those two edges at the two edge endpoints where they are adjacent.
    This procedure alternates between creating a $\mathsf{row} \to \mathsf{col}$ instance of $\Sigma$ and a $\mathsf{col} \to \mathsf{row}$ instance if you do it iteratively along a path.

    For each connected component, using this procedure iteratively along the Eulerian cycle, we continue until we create a $\Sigma$ that is incident to the vertex in $I$.
    Note that the construction in \Cref{lem:moment-nuclear-norm-stacked} only creates a $\Sigma$ in a connected component that has an odd-length Eulerian cycle.
    So if the $\Sigma$ that we got was incident to the vertex in $I$ via an endpoint labelled $\mathsf{col}$ when we wanted one labelled $\mathsf{row}$ for instance, we can instead create a $\Sigma$ that is incident to $I$ via a $\mathsf{row}$ endpoint by iterating the same procedure though the entire Euclidean cycle again.
    Then when it arrives back at the vertex in $I$, since the cycle was odd-length, by the alternating property of the procedure, it will now have the opposite label at the endpoint incident to $I$.

    Therefore, we can configure the graph to have the $\Sigma$s all be incident to $I$ and alternate so that, if $i_j$ is the vertex with the $j$th smallest label in $I$, the $\Sigma$ that is incident to $i_j$ is incident via a $\mathsf{col}$ endpoint if $j$ is odd and a $\mathsf{row}$ endpoint if $j$ is even.
    Up to this point, we have only permuted modes within blocks, which is always allowed by \Cref{def:single-flip-stacked}.

    Finally, we flip the endpoints of the $\Sigma$s incident to the vertices in $I$ from being $\mathsf{col}$ modes to $\mathsf{row}$ modes or vice-versa, except for the vertex in $I$ with the largest label if $|I|$ is odd.
    This turns all but one instance of $\Sigma$ into a $\Phi$, and because it makes each block imbalanced by at most one mode and it alternates the direction of the imbalance, it satisfies the constraints of \Cref{def:single-flip-stacked}.
    
    Since at most one instance of $\Sigma$ is left after this, this gives a nuclear norm of at most $(\norm{\eta_k}_1-1)!!\nu_k^{\norm{\eta_k}_1/2}n_{\vphantom{k}}^{\lfloor \norm{\eta_k}_1/2+1\rfloor/2}$ in the base case.

    When some of the entries of $\eta_k$ are odd, we use the same label-swapping procedure along the Eulerian cycles that contain the odd-degree vertices until the $\Sigma$ has been moved to one of the ``phantom edges'' from the proof of \Cref{lem:moment-nuclear-norm-stacked}, at which point the $\Sigma$ disappears from the actual moment matrix we construct.
    
\end{proof}

\subsection{Properties of Hermite polynomials}
\label{sec:hermite-properties}

We make heavy use of the existence of weakly orthogonal Hermite polynomials, whose orthogonality is implied by the constraints that make the low-degree moments/cumulants of the high-entropy step consistent with those of a (normalized) Gaussian (\Cref{eqs:the-program-intro-cum}). 

\begin{definition}
    A \emph{diagram} over a sequence of multi-indices $\alpha^{(1)}, \dots, \alpha^{(k)}$ is a graph where the $k$ vertices correspond to the $k$ multi-indices, where each edge between vertices $\alpha^{(i)}$ and $\alpha^{(j)}$ is labelled with $(a,b)$ where $a$ is an index in $\alpha^{(i)}$ and $b$ is an index in $\alpha^{(j)}$, and where each element of each multi-index is used exactly once in an edge label.

    The diagram is \emph{closed} if the graph is connected.

    The diagram is \emph{without loop} if there are no self-loops in the edge set.
\end{definition}

\begin{lemma}
    \label[Lemma]{lem:kappay1dotsykp}
    Let $\kappa_k$ denote the $k$th cumulant with respect to the the vector-valued random variable $y$.
    Suppose that $\alpha$ is a multi-index with $|\alpha| > 0$.
    Let $\beta$ be the multi-index with indices given by the multi-set $\{i_1, \dots, i_k\}$.
    Then
    \[ \kappa_{k+1}(y_{i_1}, \dots, y_{i_k},\He_{\alpha}(y \mid \Sigma))
    \;=\; \kappa_2(\He_{\beta}(y \mid \Sigma), \He_{\alpha}(y \mid \Sigma))
    \;=\; \E_y \He_{\beta}(y \mid \Sigma)\, \He_{\alpha}(y \mid \Sigma)
    \,.
    \]
    These quantities are all equal to zero if $|\alpha| \ne k$.

    In T-cumulant notation~\cite{terdik2021multivariate},
    \[ \kappa_{k+1}(y, \dots, y,\He_{\alpha}(y \mid \Sigma))
    \;=\; \kappa_2(\He_k(y \mid \Sigma), \He_{\alpha}(y \mid \Sigma))\,, \]
    where $\He_k(y \mid \Sigma) = \sum_{i_1, \dots, i_k \in [n]} \He_{i_1, \dots, i_k}(y \mid \Sigma)e_{i_1} \ot \dots \ot e_{i_k}$.

    By multilinearity of the joint cumulant and a Wiener-Hermite decomposition,
    \[ \kappa_{k+1}(y, \dots, y,p(y))
    \;=\; \kappa_2(\He_k(y \mid \Sigma), p(y)) \]
    for any vector-valued polynomial $p$.
\end{lemma}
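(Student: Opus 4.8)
The plan is to prove the chain of equalities from right to left, starting from the concrete identity for the joint cumulant of a monomial with a Hermite polynomial, then passing to T-cumulant notation, and finally extending by multilinearity. First I would recall from \Cref{lem:hermite-inner-prods} that $\E_y \He_{\beta}(y \mid \Sigma)\He_{\alpha}(y\mid\Sigma) = \sum_{\pi\in S_{\ell}}\prod_{i\in[\ell]}\Sigma_{L(\beta)[i],L(\alpha)[\pi(i)]}$ when $|\alpha|=|\beta|=\ell$, and that this vanishes when $|\alpha|\ne|\beta|$; since $\kappa_2$ of two centered random variables (and the Hermite polynomials $\He_{\gamma}$ with $|\gamma|>0$ are centered under the Gaussian measure) equals their covariance, the second equality $\kappa_2(\He_{\beta},\He_{\alpha}) = \E_y\He_{\beta}\He_{\alpha}$ is immediate. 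The substance is therefore the first equality: $\kappa_{k+1}(y_{i_1},\dots,y_{i_k},\He_{\alpha}(y\mid\Sigma)) = \kappa_2(\He_{\beta}(y\mid\Sigma),\He_{\alpha}(y\mid\Sigma))$ where $\beta$ is the multi-index with support multi-set $\{i_1,\dots,i_k\}$.

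The key step is an integration-by-parts / Stein-type identity for Gaussian cumulants against Hermite polynomials. I would use the characterization of multivariate Hermite polynomials as the Appell sequence for the Gaussian: $\partial_{y_j}\He_{\gamma}(y\mid\Sigma) = \He_{\gamma - e_j}(y\mid\Sigma)$ (lowering a mode). Combined with the Gaussian integration-by-parts formula $\E_y [y_j\, F(y)] = \sum_m \Sigma_{j,m}\E_y[\partial_{y_m}F(y)]$, this lets me compute $\E_y [y_{i_1}\cdots y_{i_k}\He_{\alpha}(y\mid\Sigma)]$ and, more to the point, the joint cumulant. The cleanest route is to invoke the multilinearity of joint cumulants together with the fact that $\He_{\beta}(y\mid\Sigma)$ is, by the recurrence formula listed in the properties of multivariate Hermite polynomials, exactly the degree-$k$ "leading part" obtained by repeatedly applying the raising operator $y_{i_j}\cdot - \sum_m\Sigma_{i_j,m}\partial_{y_m}$; i.e. $\He_{\beta}(y\mid\Sigma)$ is the image of $1$ under the Wick-ordering of $y_{i_1}\cdots y_{i_k}$. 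Equivalently, using property (3) of the Hermite properties (the conditional-expectation characterization $\He_k(y) = \E[\prod_j(y_j + \iota Y_j)]$ for an independent Gaussian replica $Y$), one sees that $y_{i_1}\cdots y_{i_k} = \sum_{\text{partitions}} (\text{lower-order Hermites})\times(\text{products of }\Sigma)$, and the only partition contributing a non-vanishing joint cumulant with the single extra argument $\He_{\alpha}$ is the one with no "paired-off" factors, because any $\Sigma$-factor detaches a block and a joint cumulant with a disconnected argument structure vanishes. This reduces $\kappa_{k+1}(y_{i_1},\dots,y_{i_k},\He_{\alpha})$ to $\kappa_{k+1}$ evaluated on the fully Wick-ordered monomial, which is $\kappa_2(\He_{\beta},\He_{\alpha})$.

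For the T-cumulant restatement, I would simply bundle the $n^k$ scalar identities into a single tensor identity by tensoring with $e_{i_1}\ot\dots\ot e_{i_k}$ and summing, using the definition of $\He_k(y\mid\Sigma)$ given in the statement; this is purely notational. For the final sentence, I would write $p(y) = \sum_{\alpha}\hat p(\alpha)\He_{\alpha}(y\mid\Sigma)$ using the generalized Wiener–Hermite decomposition (\Cref{def:wiener-hermite-expansion}), and then apply multilinearity of the joint cumulant in its last argument together with the identity just proved for each $\He_{\alpha}$; the terms with $|\alpha|\ne k$ drop out, exactly matching $\kappa_2(\He_k(y\mid\Sigma),p(y))$ after the same decomposition is applied on the right-hand side. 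The main obstacle I anticipate is making the "only the unpaired partition survives" argument fully rigorous: one must argue carefully that in the moment-to-cumulant expansion, any term in which some of the $y_{i_j}$'s are contracted against each other (rather than fed into the Wick-ordered block together with the lowered factors of $\He_{\alpha}$) produces a joint cumulant whose dependency graph is disconnected, hence vanishes — this is precisely the structural fact about cumulants that \Cref{eq:cumulant-degree-d} and the partition-lattice combinatorics encode, and it needs to be spelled out in the multivariate, arbitrary-covariance setting rather than just quoted.
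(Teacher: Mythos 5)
Your route is genuinely different from the paper's. The paper proves the first equality by invoking the diagram formula of \cite[Proposition 4.3]{terdik2021multivariate} \emph{twice} — once for $\kappa_{k+1}(y_{i_1},\dots,y_{i_k},\He_{\alpha})$ and once for $\kappa_2(\He_{\beta},\He_{\alpha})$ — and then observing that both sums range over the same set of diagrams: on the left, each singleton vertex $\{i_j\}$ has degree one, so closedness forces every edge to land in $\alpha$, giving exactly the perfect pairings of $\{i_1,\dots,i_k\}$ with $\alpha$; on the right, the no-loop condition forces the same pairings with the same summands $\prod \Sigma_{i,j}$. This is a two-line bijection, and it also delivers the vanishing for $|\alpha|\ne k$ for free. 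You instead propose to go through Gaussian integration by parts, the Wick/Hermite expansion of the monomial $y_{i_1}\cdots y_{i_k}$, and a cancellation of the $\Sigma$-contracted terms; your handling of the second equality, the T-cumulant restatement, and the final multilinearity step all match the paper.

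The gap in your sketch is the one you flag, and it is real, plus there is a conceptual slip in how you propose to close it. First, ``$\kappa_{k+1}$ evaluated on the fully Wick-ordered monomial'' does not typecheck: the decomposition of the \emph{product} $y_{i_1}\cdots y_{i_k}$ into Hermites says nothing directly about the \emph{joint} cumulant of the $k$ individual coordinates together with $\He_{\alpha}$; to pass between the two you need the Leonov--Shiryaev formula for cumulants of products, e.g.\ $\kappa_2\bigl(\prod_{j\in S}y_{i_j},\,\He_{\alpha}\bigr)=\sum_{M}\prod_{(a,b)\in M}\Sigma_{i_a,i_b}\,\kappa_{|S\setminus M|+1}\bigl(y_{i_j}\colon j\in S\setminus M,\ \He_{\alpha}\bigr)$, summed over partial matchings $M$ of $S$ with $S\setminus M\ne\emptyset$. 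Second, the principle you invoke — ``a joint cumulant with a disconnected argument structure vanishes'' — is a statement about cumulants of \emph{independent} groups of variables, and does not apply here since every argument is a function of the same Gaussian $y$; the correct mechanism is the connectedness restriction $\pi\vee\sigma=\hat 1$ in the Leonov--Shiryaev sum, which excludes the fully-contracted term, after which the signed sum over matchings coming from the explicit Hermite expansion of $\He_{\beta}$ cancels all terms with $M\ne\emptyset$. With those two repairs your argument goes through, but it is considerably longer than the paper's direct diagram-matching, which I would recommend adopting.
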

\begin{proof}
    By \cite[Proposition 4.3]{terdik2021multivariate},
    \[ \kappa_{k+1}(y_{i_1}, \dots, y_{i_k},\He_{\alpha}(y \mid \Sigma)) = \sum_{(V,E) \in \cG}\prod_{(u,v,(i,j)) \in E} \Sigma_{i,j}\,, \]
    where $\cG$ is the set of closed diagrams without loop over $\{i_1\}$, \dots, $\{i_k\}$, and $\alpha$.

    By \cite[Proposition 4.3]{terdik2021multivariate},
    \[ \kappa_{2}(\He_{\beta}(y \mid \Sigma),\He_{\alpha}(y \mid \Sigma)) = \sum_{(V,E) \in \cG'}\prod_{(u,v,(i,j)) \in E} \Sigma_{i,j}\,, \]
    where $\cG'$ is the set of closed diagrams without loop over $\beta$ and $\alpha$.

    Since every diagram in $\cG$ is closed and each of the vertices corresponding to the singleton sets $\{i_1\}$, \dots, $\{i_k\}$ must have degree 1, they must all have exactly one edge with the vertex corresponding to $\alpha$.
    Therefore, there is exactly one diagram for each way to pair the elements of $\{i_1, \dots, i_k\}$ with the elements of $\alpha$ (treating duplicate elements as distinct, so that there are $k!$ diagrams), and the summand corresponding to each diagram is the product of $\Sigma_{i,j}$ when $i \in \{i_1, \dots, i_k\}$ is paired with $j \in \alpha$.
    
    Since $\cG'$ is the set of diagrams without loop, every $i \in \beta$ must be paired with some unique $j \in \alpha$.
    So again, there is one diagram for every way to pair the elements of $\{i_1, \dots, i_k\}$ with the elements of $\alpha$, with the same value of summands.
    
    Since these two cumulants evaluate to the same sum over the same summands, they must be equal.
    In both cases, the result is non-zero only if $|\alpha| = k$.

    Finally, $\kappa_2(\He_{\beta}(y \mid \Sigma), \He_{\alpha}(y \mid \Sigma))
    = \E_y \He_{\beta}(y \mid \Sigma)\, \He_{\alpha}(y \mid \Sigma)$ because every Hermite of positive degree is zero-mean.
\end{proof}

\begin{lemma}
    \label[Lemma]{lem:kappay1dotsykpq}
    Let $\kappa$ denote cumulants with respect to the the vector-valued random variable $y$.
    Suppose that $\alpha$ is a multi-index with $|\alpha| > 0$.
    Let $\gamma$ be the multi-index with indices given by the multi-set $\{i_1, \dots, i_k\}$.
    Then
    \[ \kappa_{k+1}(y_{i_1}, \dots, y_{i_k},\He_{\alpha}(y \mid \Sigma),\He_{\beta}(y \mid \Sigma))
    \;=\; \kappa_3(\He_{\gamma}(y \mid \Sigma), \He_{\alpha}(y \mid \Sigma), \He_{\beta}(y \mid \Sigma))
    \,.
    \]
    These quantities are all equal to zero if $|\alpha| + |\beta| + k$ is not even.
\end{lemma}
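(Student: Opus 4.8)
The plan is to follow the strategy in the proof of \Cref{lem:kappay1dotsykp}: expand each side using the diagram expansion for joint cumulants of Hermite polynomials from~\cite[Proposition 4.3]{terdik2021multivariate}, and then exhibit a value-preserving bijection between the diagrams contributing to the two sides. On the left-hand side each $y_{i_j}$ is the degree-one Hermite $\He_{e_{i_j}}(y\mid\Sigma)$, so the cumulant equals $\sum_{(V,E)\in\cG}\prod_{(u,v,(i,j))\in E}\Sigma_{i,j}$, where $\cG$ is the set of closed (connected) loopless diagrams over the $k$ singleton multi-indices $\{i_1\},\dots,\{i_k\}$ together with $\alpha$ and $\beta$. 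On the right-hand side the same formula gives $\kappa_3(\He_\gamma,\He_\alpha,\He_\beta)=\sum_{(V,E)\in\cG'}\prod\Sigma_{i,j}$, where $\cG'$ is the set of closed loopless diagrams over the three vertices $\gamma,\alpha,\beta$, with $\gamma$ carrying the multiset $\{i_1,\dots,i_k\}$. Since every Hermite of positive degree is mean-zero, only connected diagrams contribute, so it suffices to match $\cG$ with $\cG'$ term by term.

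First I would record the constraints on a diagram in $\cG$: looplessness forbids an edge between two indices of $\alpha$ (or two of $\beta$), while connectivity forbids an edge between two singletons (such an edge forms an isolated two-vertex component once another vertex is present). Hence every $y_{i_j}$ is matched to an index of $\alpha$ or of $\beta$, and every index of $\alpha$ not matched to a singleton is matched to an index of $\beta$. Collapsing the $k$ singletons into the single vertex labelled by their joint multiset $\gamma$ leaves the edge set — hence the $\Sigma$-product — unchanged and creates no self-loop, embedding $\cG$ into $\cG'$ injectively. The substance of the proof is the converse: given a diagram on $\gamma,\alpha,\beta$, split $\gamma$ back into $k$ singletons and check the resulting loopless diagram lies in $\cG$, which reduces to a case analysis on how many $\gamma$-indices are routed to $\alpha$, how many to $\beta$, and how many $\alpha$--$\beta$ edges survive.

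The parity statement then falls out: in any loopless diagram the $|\alpha|+|\beta|+|\gamma|=|\alpha|+|\beta|+k$ index elements are partitioned into size-two edges, so when this total is odd both $\cG$ and $\cG'$ are empty and both cumulants vanish; when it is even the surviving diagrams are those in which the number of $\gamma$-indices sent to $\alpha$ equals $(|\alpha|-|\beta|+k)/2 \in \{0,\dots,k\}$, paralleling the vanishing criterion $|\alpha|=k$ of \Cref{lem:kappay1dotsykp}.

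The main obstacle I anticipate is making this bijection airtight with \emph{two} non-singleton vertices rather than one. In \Cref{lem:kappay1dotsykp} every singleton attached to the unique vertex $\alpha$, so connectivity held automatically; here the singletons split between $\alpha$ and $\beta$, and a configuration whose $\alpha$-attached and $\beta$-attached pieces are joined by no $\alpha$--$\beta$ edge is disconnected on the left, yet becomes connected on the right once the singletons merge into the bridging vertex $\gamma$. Such configurations force all of $\alpha$ and all of $\beta$ to be matched into $\gamma$, i.e.\ $|\alpha|+|\beta|=k$; controlling them — by arguing they do not arise in the regime where the lemma is invoked, where the Hermite degrees $|\alpha|,|\beta|$ in the conditional-covariance decomposition are constrained relative to $k$ (cf.~\Cref{eqs:the-program-intro-cum}), or by building the corresponding restriction into the statement as \Cref{lem:kappay1dotsykp} does for its own vanishing condition — is the delicate bookkeeping on which the argument turns.
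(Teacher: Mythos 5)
Your route is the same one the paper takes: its entire proof of \Cref{lem:kappay1dotsykpq} is the sentence ``the argument is the same as in \Cref{lem:kappay1dotsykp}, just with one more vertex in the diagram,'' i.e.\ the diagram expansion of \cite[Proposition 4.3]{terdik2021multivariate} followed by the collapse of the $k$ singletons into the single vertex $\gamma$. So on approach you and the paper agree, and your treatment of looplessness and of the parity claim is correct.

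The connectivity issue you flag in your last paragraph, however, is not just delicate bookkeeping to be tidied up --- it is a genuine counterexample to the identity as stated, and neither your proposal nor the paper's one-line proof closes it. Take $n=1$, $y\sim\cN(0,1)$, $k=2$, $|\alpha|=|\beta|=1$, so $\He_\alpha(y)=\He_\beta(y)=y$ and $\He_\gamma(y)=\He_2(y)=y^2-1$. Then the left side is $\kappa_4(y,y,y,y)=0$ (Gaussian), while the right side is
\[
\kappa_3(y^2-1,\,y,\,y)\;=\;\E[(y^2-1)y^2]\;=\;2\,,
\]
since all the lower-order correction terms vanish. Diagrammatically this is exactly the configuration you describe: with $\gamma$ merged, the path $\alpha\text{--}\gamma\text{--}\beta$ is connected and contributes; with $\gamma$ split into singletons, the diagram falls into two components and contributes nothing. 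These bridging diagrams exist precisely when $k=|\alpha|+|\beta|$ with both multi-indices nonempty, so they cannot be excluded by the hypotheses of the lemma. The correct statement equates the left side with the sum over connected loopless diagrams on $\{\gamma,\alpha,\beta\}$ having at least one $\alpha$--$\beta$ edge; equivalently, in the downstream decomposition of \Cref{lem:hermite-quadraticization} one must take the genuine cumulant convention $\kappa_2(\He_{\emptyset},\He_{\emptyset})=0$ rather than the inner-product value $1$, which kills exactly the offending terms. So you should not try to complete the bijection as a proof of the statement as written; you should amend the statement (or the convention) and then your bijection goes through on the remaining diagrams.
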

\begin{proof}
    The argument is the same as in \Cref{lem:kappay1dotsykp}, just with one more vertex in the diagram.
\end{proof}

\begin{lemma}
\label[Lemma]{lem:hermite-quadraticization}
    Let $y \in \R^n$ be a Gaussian vector $y \sim \cN(0,\Sigma)$ with $\Sigma\preceq\nu\cdot\Id_n$.
    Let $\gamma$, $\alpha$, and $\beta$ be multi-indices over $[n]$.
    Then
    \begin{align*}
        \kappa_3(\He_{\gamma}(y \mid \Sigma), \He_{\alpha}(y \mid \Sigma), \He_{\beta}(y \mid \Sigma))
        = \sum_{\substack{\gamma_1,\gamma_2\\\!\!\!\!\!\!\!\!\text{partition of }\gamma\!\!\!}} \sum_{\alpha' \le \alpha} \sum_{\beta' \le \beta} \kappa_2(\He_{\gamma_1}, \He_{\alpha'})\kappa_2(\He_{\gamma_2}, \He_{\beta'})\kappa_2(\He_{\alpha-\alpha'}, \He_{\beta - \beta'})\,,
    \end{align*}
    where we have elided some repetitive notation on the right.
\end{lemma}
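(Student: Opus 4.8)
The plan is to derive the identity directly from the diagram calculus for joint cumulants of Hermite polynomials, in the same style as the proofs of \Cref{lem:kappay1dotsykp} and \Cref{lem:kappay1dotsykpq}. By \cite[Proposition 4.3]{terdik2021multivariate}, the third joint cumulant expands as
\[ \kappa_3(\He_{\gamma}(y\mid\Sigma),\He_{\alpha}(y\mid\Sigma),\He_{\beta}(y\mid\Sigma)) = \sum_{(V,E)\in\cG''}\prod_{(u,v,(i,j))\in E}\Sigma_{i,j}\,, \]
where $\cG''$ ranges over the closed (i.e.\ connected) loopless diagrams on the three vertices labelled $\gamma$, $\alpha$, $\beta$, each element of each multi-index being used exactly once as an edge endpoint. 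First I would classify the edges of such a diagram according to which of the three vertex-pairs they join, writing $E = E_{\gamma\alpha}\sqcup E_{\gamma\beta}\sqcup E_{\alpha\beta}$.

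Next I would observe that this classification is exactly the data appearing on the right-hand side: a partition of the slots of $\gamma$ into a part $\gamma_1$ matched into $\alpha$ and a part $\gamma_2$ matched into $\beta$; a splitting $\alpha = \alpha' + (\alpha-\alpha')$ with $\alpha'$ matched to $\gamma_1$ (forcing $|\alpha'| = |\gamma_1|$) and $\alpha-\alpha'$ matched to $\beta$; and a splitting $\beta = \beta' + (\beta-\beta')$ with $\beta'$ matched to $\gamma_2$ and $\beta-\beta'$ matched to $\alpha-\alpha'$ (forcing $|\alpha-\alpha'| = |\beta-\beta'|$). Once $\gamma_1,\gamma_2,\alpha',\beta'$ are fixed, the diagrams refining them are precisely a product of the set of pairings of $\gamma_1$ with $\alpha'$, the set of pairings of $\gamma_2$ with $\beta'$, and the set of pairings of $\alpha-\alpha'$ with $\beta-\beta'$, and the weight $\prod_{(u,v,(i,j))\in E}\Sigma_{i,j}$ factors over these three groups. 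Summing each group separately and invoking the two-vertex case of the same diagram formula — which is the content of \Cref{lem:kappay1dotsykp}, equivalently \Cref{lem:hermite-inner-prods} — turns each group's contribution into the corresponding second cumulant $\kappa_2(\He_{\gamma_1},\He_{\alpha'})$, $\kappa_2(\He_{\gamma_2},\He_{\beta'})$, $\kappa_2(\He_{\alpha-\alpha'},\He_{\beta-\beta'})$. This gives the stated formula, except that the sum is a priori restricted to tuples whose associated reduced graph on $\{\gamma,\alpha,\beta\}$ is connected.

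To finish, I would remove that connectivity restriction. A reduced graph on three vertices is disconnected only when at most one of the three edge-groups is nonempty; since any two nonempty groups already form a path, the only excluded configurations are those in which a single group, say $E_{\gamma\alpha}$, carries all the edges, which forces $\gamma_2 = \beta' = 0$ and $\alpha-\alpha' = \beta-\beta' = 0$, hence $\beta = 0$. In that case the summand contains the factor $\kappa_2(\He_0,\He_0) = \kappa_2(1,1) = 0$, and symmetrically for the other two single-group cases, so the excluded terms vanish and the sum may be taken over all tuples $(\gamma_1,\gamma_2,\alpha',\beta')$ with matching cardinalities. When $|\gamma|+|\alpha|+|\beta|$ is odd no such tuple exists, which matches the vanishing recorded in \Cref{lem:kappay1dotsykpq}; the notation elided on the right is just the implicit constraints $|\gamma_1| = |\alpha'|$ and $|\gamma_2| = |\beta'|$.

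The honest computation here is light — the whole statement is a bookkeeping unfolding of the diagram formula — so the one point needing care is the combinatorial identification: repeated indices inside a multi-index must be treated as distinguishable slots (as in the $k!$-diagram count of \Cref{lem:kappay1dotsykp}) so that the decomposition of a diagram into its three inter-vertex edge-groups is a genuine bijection with no stray multinomial factors, and this convention must be checked to agree with the one used in \Cref{lem:hermite-inner-prods} to define $\kappa_2(\He_{\cdot},\He_{\cdot})$. That consistency check, together with the degenerate $\kappa_2(\He_0,\He_0) = 0$ casework above, is the only place where the argument could slip.
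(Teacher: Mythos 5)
Your proposal is correct and follows essentially the same route as the paper: both expand $\kappa_3$ via the closed loopless diagram formula of \cite[Proposition 4.3]{terdik2021multivariate}, partition the diagrams by which vertex-pair each edge joins, and recognize each of the three resulting groups of pairings as a $\kappa_2$ of Hermites. Your explicit check that the a priori disconnected configurations contribute zero (via $\kappa_2$ factors with a constant argument) is a point the paper's proof leaves implicit, so it is a welcome addition rather than a deviation.
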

\begin{proof}
    By \cite[Proposition 4.3]{terdik2021multivariate},
    \[ \kappa_{3}(\He_{\gamma}, \He_{\alpha}, \He_{\beta}) = \sum_{(V,E) \in \cG}\prod_{(u,v,(i,j)) \in E} \Sigma_{i,j}\,, \]
    where $\cG$ is the set of closed diagrams without loop over $\gamma$, $\alpha$, and $\beta$.
    This set of diagrams can be partitioned by which of the edges coming out of $\gamma$ go to $\alpha$ and which go to $\beta$: that gives us the sum over partitions $(\gamma_1, \gamma_2)$ of $\gamma$.
    It can be further partitioned by which set of indices in $\alpha$ are linked to $\gamma_1$ and which set of indices in $\beta$ are linked to $\gamma_2$: this gives us the sum over $\alpha'$ and $\beta'$ respectively.

    Finally, once we have determined these values of $\gamma_1$, $\gamma_2$, $\alpha'$, and $\beta'$, to enumerate all the diagrams consistent with these choices, it suffices to enumerate all the ways that the elements of $\gamma_1$ can be paired with the indices in $\alpha'$ and then take the product of $\Sigma_{i,j}$ for each corresponding pair of $i \in \gamma_1$ and $j \in \alpha'$.
    But this is the same as enumerating all closed diagrams without loop between $\gamma_1$ and $\alpha'$, hence this factor is equal to $\kappa_2(\He_{\gamma_1}, \He_{\alpha'})$.
    The same applies to the other two pairings, hence the final formula.
\end{proof}

\begin{lemma}
\label[Lemma]{lem:hermite-cum-matrix-opnorm}
Fix a $\Sigma \in \R^{n \times n}$ PSD and $k \ge 0$.

Consider the vector space spanned by $\{\He_{\alpha} : |\alpha| = k \}$ where $\alpha$ ranges over multi-indices in $[n]$, whose dual vectors are spanned by $\{\He_{\alpha}^{\sT} : |\alpha| = k \}$, and equipped with the $L^2(\Sigma)$ inner product, the linear extension of
\[\iprod{\He_{\alpha}, \He_{\beta}}_{L^2(\Sigma)} := \E_{x \sim \cN(0,\Sigma)} \He_{\alpha}(x \mid {\Sigma})\, \He_{\beta}(x \mid {\Sigma}) \,. \]

Let $\mathbf{H}^{(k)}$ be the linear operator given by
\[ \mathbf{H}^{(k)} = \sum_{\substack{I\in[n]^k}} \He_{I}\He_{I}^{\sT}\,, \]
where $\He_{I} := \He_{\alpha}$ when $I$ is a tuple of indices and $\alpha$ is the multi-index constituted of the same indices. 
Then
\[ \opnorm{\mathbf{H}^{(k)}} \le k!\,\opnorm{\Sigma}^k\,. \]
\end{lemma}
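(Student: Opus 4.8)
The plan is to recognize $\mathbf{H}^{(k)}$ as one member of an ``$\Phi\Phi^{\sT}$ versus $\Phi^{\sT}\Phi$'' pair, transfer the norm computation to the other member---which is exactly the second-moment operator of the Hermite tensor---and then bound that operator by decomposing it into $k!$ permuted copies of $\Sigma^{\ot k}$.

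Concretely, I would introduce the linear map $\Phi := \sum_{I \in [n]^k} \He_I\,e_I^{\sT}$ from $(\R^n)^{\ot k}$ (with its standard inner product and coordinate basis $\{e_I\}_{I\in[n]^k}$) to the Hermite space $V_k := \operatorname{span}\{\He_{\alpha} : |\alpha| = k\}$ (with the $L^2(\Sigma)$ inner product), so that $\Phi(M) = \sum_I M_I \He_I$ is the function $y \mapsto \iprod{M,\He_k(y\mid\Sigma)}$. Taking adjoints with respect to these two inner products, $\Phi^{\sT} = \sum_I e_I\,\He_I^{\sT}$, i.e. $\Phi^{\sT}(f) = \sum_I \iprod{\He_I,f}_{L^2(\Sigma)}\,e_I$. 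A direct computation gives $\Phi\Phi^{\sT} = \sum_{I,J}\He_I\,(e_I^{\sT}e_J)\,\He_J^{\sT} = \sum_I \He_I\He_I^{\sT} = \mathbf{H}^{(k)}$, whereas $\Phi^{\sT}\Phi = \sum_{I,J}(\He_I^{\sT}\He_J)\,e_I e_J^{\sT} = \sum_{I,J}\iprod{\He_I,\He_J}_{L^2(\Sigma)}\,e_I e_J^{\sT} =: \mathbf{M}^{(k)}$, the operator on $(\R^n)^{\ot k}$ whose $(I,J)$ entry is $\E_{y\sim\cN(0,\Sigma)}[\He_I(y)\He_J(y)]$. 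By the standard Hilbert-space identity $\opnorm{\Phi\Phi^{\sT}} = \opnorm{\Phi}^2 = \opnorm{\Phi^{\sT}\Phi}$, it suffices to prove $\opnorm{\mathbf{M}^{(k)}} \le k!\,\opnorm{\Sigma}^k$.

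For that I would invoke \Cref{lem:hermite-inner-prods}: writing $\He_I(y) = \He_{\alpha_I}(y\mid\Sigma)$ with $|\alpha_I|=|\alpha_J|=k$, and using that the ordered tuples $I,J$ are themselves admissible orderings of their multi-indices, the lemma gives $\E[\He_I(y)\He_J(y)] = \sum_{\pi\in S_k}\prod_{m=1}^{k}\Sigma_{i_m,j_{\pi(m)}}$. Summing this entrywise over $[n]^k\times[n]^k$ yields $\mathbf{M}^{(k)} = \sum_{\pi\in S_k}\Sigma^{(\pi)}$, where $\Sigma^{(\pi)}$ is $\Sigma^{\ot k}$ with its $k$ ``column'' tensor modes permuted according to $\pi$; equivalently $\Sigma^{(\pi)} = \Sigma^{\ot k}P_{\pi}$ for a permutation matrix $P_{\pi}$. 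Since $P_{\pi}$ is orthogonal and $\Sigma$ is PSD (so $\opnorm{\Sigma^{\ot k}} = \opnorm{\Sigma}^k$), we get $\opnorm{\Sigma^{(\pi)}} = \opnorm{\Sigma}^k$, and the triangle inequality over the $|S_k| = k!$ summands gives $\opnorm{\mathbf{M}^{(k)}} \le k!\,\opnorm{\Sigma}^k$, hence $\opnorm{\mathbf{H}^{(k)}} \le k!\,\opnorm{\Sigma}^k$.

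The argument is essentially bookkeeping once the $\Phi\Phi^{\sT}/\Phi^{\sT}\Phi$ swap is set up, so the only genuine point to get right is the identification $\Phi^{\sT}\Phi = \mathbf{M}^{(k)}$---that is, that the adjoint on the $V_k$ side is taken with respect to $L^2(\Sigma)$, which is precisely what produces the factor $\opnorm{\Sigma}^k$---together with a careful application of \Cref{lem:hermite-inner-prods} to ordered tuples rather than multi-indices. One degenerate case deserves a parenthetical remark: if the $\He_{\alpha}$ are not $L^2(\Sigma)$-linearly independent (so that $L^2(\Sigma)$ is only a semi-inner product on the abstract span), one simply works on the honest function space $\{\,y\mapsto\iprod{M,\He_k(y\mid\Sigma)} : M\in(\R^n)^{\ot k}\,\} = \operatorname{im}\Phi$, on which every step above is literal.
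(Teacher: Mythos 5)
Your proposal is correct and follows essentially the same route as the paper's proof: reduce $\opnorm{\mathbf{H}^{(k)}}$ to the operator norm of the Gram matrix $\mathbf{G}^{(k)}_{I,J} = \iprod{\He_I,\He_J}_{L^2(\Sigma)}$, expand it via \Cref{lem:hermite-inner-prods} as a sum over $S_k$ of mode-permuted copies of $\Sigma^{\ot k}$, and apply the triangle inequality. The only (cosmetic) difference is that you justify the reduction explicitly through the $\Phi\Phi^{\sT}$ versus $\Phi^{\sT}\Phi$ identity, whereas the paper invokes "weak orthonormality" for the same fact; your version is, if anything, the more careful one.
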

\begin{proof}
    Let $\mathbf{G}^{(k)}$ be the linear operator given entrywise by
\[ \mathbf{G}^{(k)}_{I,J} := \iprod{ \He_{I},\He_{J}}_{L^2(\Sigma)}\,. \]
By weak orthonormality (\Cref{sec:prelims-multivar-hermites}),
\[ \opnorm{\mathbf{H}^{(k)}} = \opnorm{\mathbf{G}^{(k)}}\,, \]
so we can just bound the latter.

    By \Cref{lem:hermite-inner-prods}, we have
    \[ \mathbf{G}^{(k)}_{I,J} = \sum_{\pi \in S_k} \prod_{i \in [k]} \Sigma_{I_i, J_{\pi(i)}}
    \,. \]
    We can decompose this matrix further as $\mathbf{G}^{(k)} = \sum_{\pi \in S_k} \mathbf{G}^{(k,\pi)}$ where
        \[ \mathbf{G}^{(k,\pi)}_{I,J} = \prod_{i \in [k]} \Sigma_{I_i,J_{\pi(i)}}
    \,. \]
    Now we can invert the permutation $\pi$ to get
        \[ \mathbf{G}^{(k,\pi)}_{I,\pi^{-1}J} = \prod_{i \in [k]} \Sigma_{I_i,J_i}
        \,,
     \]
     where $\pi^{-1}(j_1, \dots, j_k) = (j_{\pi^{-1}(1)},\dots,j_{\pi^{-1}(k)})$, and see therefore that
        \[ \mathbf{G}^{(k,\pi)}P_{\pi^{-1}} =  \Sigma^{\ot k}\,,
     \]
     where $P_{\pi^{-1}}e_{J} = e_{\pi^{-1}J}$ is a mode permutation matrix.
     Since $P_{\pi^{-1}}$ is an orthogonal matrix, we have
     \[  \opnorm{\mathbf{G}^{(k,\pi)}}
     = \opnorm{\mathbf{G}^{(k,\pi)}P_{\pi^{-1}}}
     = \opnorm{\Sigma^{\ot k}}
     = \opnorm{\Sigma}^k
     \,.\]
     And since $\mathbf{G}^{(k)}$ is the sum of $k!$ operators all whose operator norms are equal to $\opnorm{\Sigma}^k$, its own operator norm is no more than $k!\,\opnorm{\Sigma}^k$.
\end{proof}

\begin{corollary}
    \label[Corollary]{cor:hermite-operator-norm}
    Let $y \in \R^n$ be a Gaussian vector $y \sim \cN(0,\Sigma)$ with a SoS proof that $\Sigma\preceq\nu\cdot\Id_n$ and let $p(y)$ be a vector-valued polynomial.
    Then
    \[ \sum_{I \in [n]^k}  \norm{\E_y\He_I(y)p(y)}_2^2
    \le k!\opnorm{\Sigma}^k\E_y \norm{p(y)}_2^2\,.\]
    by a SoS proof. As an immediate consequence,
    \[ \norm{\kappa_2\left(\He_k(y),p(y)\right)}_2^2
    \le k!\opnorm{\Sigma}^k\E_y \norm{p(y)}_2^2\,, \]
    where $\He_k(y) = \sum_{i_1, \dots, i_k \in [n]} \He_{i_1, \dots, i_k}(y)e_{i_1} \ot \dots \ot e_{i_k}$ and $\kappa_2(p,q)$ is the second tensor cumulant (or T-cumulant~\cite{terdik2021multivariate}; equivalently, the covariance) between vector-valued polynomials $p$ and $q$ with respect to the randomness of $y$.
\end{corollary}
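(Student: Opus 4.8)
The plan is to deduce \Cref{cor:hermite-operator-norm} from \Cref{lem:hermite-cum-matrix-opnorm} by a short duality/decomposition argument, and then read off the consequence for $\kappa_2(\He_k(y),p(y))$ via \Cref{lem:kappay1dotsykp} (or rather the characterization of the $\He_k$-covariance as projection onto the span of degree-$k$ Hermites). First I would expand $p$ in the Wiener-Hermite basis with respect to $\cN(0,\Sigma)$: write $p(y) = \sum_{\alpha} \hat p(\alpha)\He_{\alpha}(y\mid\Sigma)$ where $\hat p(\alpha)\in\R^n$ are the (vector-valued) coefficients. Since Hermites of different total degree are weakly orthogonal under the $L^2(\Sigma)$ inner product, $\E_y \He_I(y)p(y) = \sum_{|\alpha|=k} \hat p(\alpha)\,\iprod{\He_I,\He_{\alpha}}_{L^2(\Sigma)}$, i.e.\ only the degree-$k$ part of $p$ contributes. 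So the left-hand side $\sum_{I\in[n]^k}\norm{\E_y\He_I(y)p(y)}_2^2$ equals $\sum_{I}\bignorm{\sum_{|\alpha|=k}\hat p(\alpha)\,\mathbf{G}^{(k)}_{I,\alpha}}_2^2$, which (treating the $\R^n$-valued coefficients coordinatewise) is exactly $\sum_{\ell\in[n]}\bignorm{\mathbf{G}^{(k)}\hat p_{\ell}}_2^2$ where $\hat p_{\ell}$ is the vector of $\ell$-th coordinates of the $\hat p(\alpha)$'s over $|\alpha|=k$. By \Cref{lem:hermite-cum-matrix-opnorm}, $\opnorm{\mathbf{G}^{(k)}} = \opnorm{\mathbf{H}^{(k)}} \le k!\,\opnorm{\Sigma}^k$, so this is bounded by $(k!)^2\opnorm{\Sigma}^{2k}\sum_{\ell}\norm{\hat p_{\ell}}_2^2 = (k!)^2\opnorm{\Sigma}^{2k}\sum_{|\alpha|=k}\norm{\hat p(\alpha)}_2^2$.

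Next I would relate $\sum_{|\alpha|=k}\norm{\hat p(\alpha)}_2^2$ back to $\E_y\norm{p(y)}_2^2$. Here I want the bound $\sum_{|\alpha|=k}\norm{\hat p(\alpha)}_2^2 \le \frac{1}{(\text{something})}\E_y\norm{p(y)}_2^2$ — but actually the cleanest route is not to pass through the raw coefficients but to keep one factor of $\mathbf{G}^{(k)}$: writing $q := \sum_{|\alpha|=k}\hat p(\alpha)\He_{\alpha}$ for the degree-$k$ part of $p$, we have $\E_y\norm{q(y)}_2^2 = \sum_{\ell} \hat p_{\ell}^{\sT}\mathbf{G}^{(k)}\hat p_{\ell}$ and $\sum_I\norm{\E_y\He_I q}_2^2 = \sum_{\ell}\norm{\mathbf{G}^{(k)}\hat p_{\ell}}_2^2 = \sum_{\ell}\hat p_\ell^{\sT}(\mathbf{G}^{(k)})^2\hat p_\ell \le \opnorm{\mathbf{G}^{(k)}}\sum_\ell \hat p_\ell^{\sT}\mathbf{G}^{(k)}\hat p_\ell = \opnorm{\mathbf{G}^{(k)}}\E_y\norm{q(y)}_2^2$, using that $\mathbf{G}^{(k)}$ is PSD (it is a Gram matrix). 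Then $\opnorm{\mathbf{G}^{(k)}}\le k!\opnorm{\Sigma}^k$ and $\E_y\norm{q(y)}_2^2 \le \E_y\norm{p(y)}_2^2$ by Plancherel/Bessel (\Cref{thm:hermite-plancherel}, applied coordinatewise), since $q$ is a single degree-block of the Wiener-Hermite expansion of $p$. This gives precisely $\sum_{I\in[n]^k}\norm{\E_y\He_I(y)p(y)}_2^2 \le k!\opnorm{\Sigma}^k\E_y\norm{p(y)}_2^2$.

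For the "immediate consequence," I would observe that $\kappa_2(\He_k(y),p(y))$ is the tensor whose $(i_1,\dots,i_k)$ mode-slice equals $\kappa_2(\He_{i_1,\dots,i_k}(y),p(y)) = \E_y\He_{I}(y)p(y)$ (the last equality because every positive-degree Hermite is mean-zero, so the covariance collapses to the raw product expectation, exactly as in \Cref{lem:kappay1dotsykp}). Hence $\norm{\kappa_2(\He_k(y),p(y))}_2^2 = \sum_{I\in[n]^k}\norm{\E_y\He_I(y)p(y)}_2^2$ and the bound follows immediately from the first part.

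Finally, the whole argument must be rendered as a low-degree SoS proof: the covariance indeterminate $\Sigma = \fE_i v_iv_i^{\sT}$ is a formal variable, $\mathbf{H}^{(k)}$ and $\mathbf{G}^{(k)}$ have entries that are polynomials in $\Sigma$ (the Hermite coefficients are polynomials in the covariance, as noted after the recurrence formula in \Cref{sec:prelims-multivar-hermites}), and the hypothesis is given as a matrix SoS proof that $\Sigma\preceq\nu\Id$. The PSD-ness of $\mathbf{G}^{(k)}$ is itself a matrix-SoS fact ($\mathbf{G}^{(k)} = \sum_\pi \mathbf{G}^{(k,\pi)}$ with $\mathbf{G}^{(k,\pi)} = P_{\pi}\Sigma^{\ot k}P_{\pi}^{\sT}$ up to reshaping, and $\Sigma^{\ot k}\succeq 0$ follows from $\Sigma\succeq 0$ by the tensor-product rule of \Cref{sec:matrix-sos-system}), and the operator-norm bound $\opnorm{\mathbf{G}^{(k)}}\le k!\opnorm{\Sigma}^k$ is the matrix-SoS content of \Cref{lem:hermite-cum-matrix-opnorm}. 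The step $\sum_\ell \hat p_\ell^{\sT}(\mathbf{G}^{(k)})^2\hat p_\ell \le \opnorm{\mathbf{G}^{(k)}}\sum_\ell \hat p_\ell^{\sT}\mathbf{G}^{(k)}\hat p_\ell$ is just the matrix SoS fact $\{M\succeq 0, M\preceq c\Id\}\vdash\{M^2\preceq cM\}$, and the Plancherel bound $\E_y\norm{q}_2^2\le\E_y\norm{p}_2^2$ is the SoS-encoded Hermite-orthogonality relation of \Cref{sec:encoding-hermite-decomps}. I expect the main obstacle to be bookkeeping: making sure the Wiener-Hermite coefficients $\hat p(\alpha)$ (which appear here through the $\mathscr{W}_i$ variables of the program) are handled so that every inequality is a genuine conditional matrix-SoS inference rather than a semantic statement, and in particular that the reshaping between "tensor with $k$ modes" and "matrix $\mathbf{G}^{(k)}$ acting on the coefficient vector" is mode-consistent. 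None of the individual steps is deep, but the translation into the grammar of \Cref{sec:grammar} with the right degree bookkeeping is where care is required.
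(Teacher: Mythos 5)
Your proposal is correct and follows essentially the same route as the paper: both deduce the bound from \Cref{lem:hermite-cum-matrix-opnorm}'s operator-norm control of the Hermite frame/Gram operator and then pass to the second identity via the mean-zero property of positive-degree Hermites. The only difference is presentational --- you work on the coefficient side (the Gram matrix $\mathbf{G}^{(k)}$ acting on the Wiener--Hermite coefficients of the degree-$k$ block, followed by Bessel's inequality), whereas the paper works directly on the function side, writing $\sum_{I}\iprod{p_j,\He_I}_{L^2(\Sigma)}^2 = p_j^{\sT}\mathbf{H}^{(k)}p_j$ with $\mathbf{H}^{(k)}=\sum_I \He_I\He_I^{\sT}$ and invoking the $(2,\infty,2)$-H\"older inequality (\Cref{lem:holders-2-inf-2}); these are dual descriptions of the same inequality, equivalent by the identity $\opnorm{\mathbf{H}^{(k)}}=\opnorm{\mathbf{G}^{(k)}}$ already noted in the lemma's proof.
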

\begin{proof}
    Apply \Cref{lem:hermite-cum-matrix-opnorm}, writing $\norm{\kappa_2(p(y),\He_I(y))}_2^2$ as 
    \[\sum_j \kappa_2(p_j(y),\He_I(y))^2 = \sum_j \iprod{p_j,\He_I}_{L^2(\Sigma)}^2
    \le  k!\opnorm{\Sigma}^k\sum_j\norm{p_j}_{L^2(\Sigma)}^2
    \,,\]
    where $j$ is summed over all indices of $p$ as a vector and we applied \Cref{lem:holders-2-inf-2} as a SoS proof.
\end{proof}

\begin{lemma}
    \label[Lemma]{lem:hermite-gram-tensor}
    Consider the same vector space as in \Cref{lem:hermite-cum-matrix-opnorm}.
    
    Let
    \[ \mathbf{G}^{(k)}_{\alpha,\beta} := \iprod{ \He_{\alpha},\He_{\beta}}_{\ell^2(\Sigma)}\]
    be the Gram matrix indexed by $\alpha$ and $\beta$ both multi-indices satisfying $|\alpha|=|\beta|=k$.
    
    Let also
    \[ \mathbf{G}^{(k,\ell)}_{(\alpha,\alpha'),(\beta,\beta')} := \iprod{ \He_{\alpha+\alpha'},\He_{\beta+\beta'}}_{\ell^2(\Sigma)}\]
    when $|\alpha| = |\beta| = k$ and $|\alpha'| = |\beta'| = \ell$.

    Let $\sym^{(k,\ell)}$ be the symmetric matrix operating on the space whose coordinates are indexed by two multi-indices $(\alpha,\alpha')$ satisfying $|\alpha| = k$ and $|\alpha'| = \ell$, such that $\sym^{(k,\ell)}$ symmetrizes between all pairs of lists whose concantenations contain the same multi-set of elements.
    In other words,
    \[\sym^{(k,\ell)}e_{\alpha}\ot e_{\alpha'} = \frac{1}{|\{(\beta,\beta'): \alpha+\alpha'=\beta+\beta', |\beta|=k,|\beta'|=\ell\}|}\sum_{\substack{|\beta| = k, |\beta'|=\ell\\\beta +\beta' = \alpha + \alpha'}} e_{\beta}\ot e_{\beta'}\,, \]
    
    Then
    \[\sym^{(k,\ell)}\left(\mathbf{G}^{(k)} \ot \mathbf{G}^{(\ell)}\right)\sym^{(k,\ell)} = \binom{k+\ell}{k}^{-1} \mathbf{G}^{(k,\ell)} \,.\]
\end{lemma}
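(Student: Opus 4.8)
The plan is to prove the claimed matrix identity entrywise. Following the convention of \Cref{lem:hermite-cum-matrix-opnorm}, I would regard all of $\mathbf{G}^{(k)},\mathbf{G}^{(\ell)},\mathbf{G}^{(k,\ell)}$ and $\sym^{(k,\ell)}$ as acting on the space spanned by ordered index tuples, with $\He_I := \He_\alpha$ for $\alpha$ the multi-index of the tuple $I$, so that in particular $\mathbf{G}^{(k)}_{I,J} = \iprod{\He_I,\He_J}_{L^2(\Sigma)}$ for $I,J\in[n]^k$. Since both $\sym^{(k,\ell)}$ and $\mathbf{G}^{(k,\ell)}$ depend on a coordinate pair $((I,I'),(J,J'))$ only through the concatenations $L := I\circ I'$ and $L' := J\circ J'$, it suffices to fix two length-$m$ lists $L,L'$ (where $m := k+\ell$) and establish
\[
\frac{1}{N(L)\,N(L')}\sum_{\substack{(P,P')\,:\,P\circ P'\sim L}}\;\sum_{\substack{(Q,Q')\,:\,Q\circ Q'\sim L'}}\mathbf{G}^{(k)}_{P,Q}\,\mathbf{G}^{(\ell)}_{P',Q'}\;=\;\binom{m}{k}^{-1}\iprod{\He_L,\He_{L'}}_{L^2(\Sigma)}\,,
\]
where ``$\sim$'' denotes ``is a rearrangement of'' and $N(L)$ counts the ordered pairs $(P,P')$ of lists of lengths $k,\ell$ with $P\circ P'\sim L$ (so $N(L) = m!/\prod_j\mu_j(L)!$, with $\mu_j(L)$ the multiplicity of the symbol $j$ in $L$).

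First I would expand every Gram-matrix entry using \Cref{lem:hermite-inner-prods}: $\mathbf{G}^{(k)}_{P,Q} = \sum_{\rho\in S_k}\prod_{i\in[k]}\Sigma_{P_i,Q_{\rho(i)}}$, $\mathbf{G}^{(\ell)}_{P',Q'} = \sum_{\tau\in S_\ell}\prod_{j\in[\ell]}\Sigma_{P'_j,Q'_{\tau(j)}}$, and $\iprod{\He_L,\He_{L'}} = \sum_{\pi\in S_m}\prod_{t\in[m]}\Sigma_{L_t,L'_{\pi(t)}}$. The pair $(\rho,\tau)$ glues into a ``partition-respecting'' permutation in $S_m$ (one mapping the first $k$ coordinates among themselves), so after expansion the left-hand side becomes a normalized sum, ranging over (rearrangement of $L$) $\times$ (rearrangement of $L'$) $\times$ (partition-respecting $\pi\in S_m$), of the product $\prod_{t\in[m]}\Sigma_{(P\circ P')_t,\,(Q\circ Q')_{\pi(t)}}$.

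The core step is a change of variables that replaces the sums over rearrangements by sums over all of $S_m$. Each rearrangement of $L$ is produced by exactly $\prod_j\mu_j(L)! = m!/N(L)$ permutations $\phi\in S_m$ (acting via $t\mapsto L_{\phi(t)}$), and likewise for $L'$, so the double sum over rearrangements equals $\tfrac{N(L)N(L')}{(m!)^2}$ times the analogous sum over $\phi,\psi\in S_m$. Re-indexing by $s := \phi(t)$ and setting $\pi' := \psi\,\pi\,\phi^{-1}$ converts $\prod_t\Sigma_{L_{\phi(t)},\,L'_{\psi(\pi(t))}}$ into $\prod_s\Sigma_{L_s,\,L'_{\pi'(s)}}$, and a short count shows that every $\pi'\in S_m$ arises from exactly $m!\cdot k!\,\ell!$ triples $(\phi,\psi,\pi)$ with $\pi$ partition-respecting (for fixed such $\pi$ and fixed $\phi$, $\psi = \pi'\phi\pi^{-1}$ is forced). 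Assembling the constants,
\[
\text{LHS} \;=\; \frac{1}{N(L)N(L')}\cdot\frac{N(L)N(L')}{(m!)^2}\cdot m!\,k!\,\ell!\sum_{\pi'\in S_m}\prod_{s\in[m]}\Sigma_{L_s,L'_{\pi'(s)}} \;=\; \frac{k!\,\ell!}{m!}\,\iprod{\He_L,\He_{L'}} \;=\; \binom{m}{k}^{-1}\iprod{\He_L,\He_{L'}}\,,
\]
which is the desired identity.

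The main obstacle --- and essentially the only place care is needed --- is the bookkeeping of the symmetrization normalization when $L$ or $L'$ has repeated entries: one must carefully distinguish ``rearrangements of $L$'' (there are $N(L)$ of these, since permuting within a symbol class changes nothing) from ``permutations in $S_m$'' (there are $m!$), and then verify that the factors $N(L),N(L'),m!,k!,\ell!$ cancel to leave exactly the stated constant $\binom{k+\ell}{k}^{-1}$, independent of the multiplicity profiles of $L$ and $L'$. I would sanity-check the constant on the small case $L=L'=(1,1,2)$, $k=1$, $\ell=2$; the remaining manipulations are routine applications of \Cref{lem:hermite-inner-prods} and elementary permutation counting.
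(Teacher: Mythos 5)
Your argument follows the same route as the paper's proof — expand every Gram entry via \Cref{lem:hermite-inner-prods} and then account for the symmetrization by permutation counting — but the paper compresses the entire combinatorial step into the single sentence ``Symmetrizing the former expression leads to $\binom{k+\ell}{k}^{-1}$ times the latter expression,'' whereas you actually carry it out. Your change of variables is correct: each distinct rearrangement of $L$ arises from $m!/N(L)$ permutations $\phi \in S_m$, the pair $(\rho,\tau) \in S_k \times S_\ell$ glues into a partition-respecting $\pi$, and for fixed $\pi' = \psi\pi\phi^{-1}$ there are exactly $m!\,k!\,\ell!$ admissible triples, so the constants collapse to $k!\,\ell!/m! = \binom{m}{k}^{-1}$. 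This is the content the paper omits, and it checks out numerically.

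The one point you must address is your opening reduction to the tuple basis, which silently changes the statement. The lemma as displayed indexes $\sym^{(k,\ell)}$ by \emph{multi-indices} and normalizes by $|\{(\beta,\beta'): \beta+\beta'=\alpha+\alpha'\}|$, the number of multi-index splittings; you normalize by $N(L) = m!/\prod_j \mu_j(L)!$, the number of distinct list rearrangements. These coincide only when no index repeats. On your own proposed sanity check $L = L' = (1,1,2)$, $k=1$, $\ell=2$, the multi-index normalization is $2$ (the splittings $(\{1\},\{1,2\})$ and $(\{2\},\{1,1\})$) while yours is $3$, and the literal multi-index identity then evaluates to $\tfrac14\left(3\Sigma_{11}^2\Sigma_{22}+5\Sigma_{11}\Sigma_{12}^2\right)$ on the left versus $\tfrac13\left(2\Sigma_{11}^2\Sigma_{22}+4\Sigma_{11}\Sigma_{12}^2\right)$ on the right — it fails — whereas your tuple version gives $\tfrac23\Sigma_{11}^2\Sigma_{22}+\tfrac43\Sigma_{11}\Sigma_{12}^2$ on both sides. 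So you are proving a correct variant of the lemma, not the lemma as literally stated; the discrepancy is plausibly a defect in the paper's statement (its English gloss ``pairs of lists'' matches your reading, its displayed formula does not), but you should either state explicitly that you are working in the tuple basis with the rearrangement-count normalization and that the two versions differ in the presence of repeated indices, or verify that the tuple version is the one consumed downstream in \Cref{lem:hermite-fourier-tensor}.
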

\begin{proof}
    By \Cref{lem:hermite-inner-prods}, we have
    \[ \mathbf{G}^{(k)}_{\alpha,\beta} = \sum_{\pi \in S_k} \prod_{i \in [k]} \Sigma_{\alpha[i],\beta[\pi(i)]}
    \,, \]
    where $\alpha[\cdot]$ and $\beta[\cdot]$ index into some arbitrary fixed ordering of the indices.

    So 
    \[ \left(\mathbf{G}^{(k)}\ot\mathbf{G}^{(\ell)}\right)_{(\alpha,\alpha'),(\beta,\beta')} = \sum_{\pi \in S_k}\sum_{\sigma \in S_{\ell}} \left(\prod_{i \in [k]} \Sigma_{\alpha[i],\beta[\pi(i)]}\right)\left(\prod_{i \in [\ell]} \Sigma_{\alpha'[i],\beta'[\sigma(i)]}\right)
    \,, \]

    And
    \[\mathbf{G}^{(k,\ell)}_{(\alpha,\alpha'),(\beta,\beta')} = \sum_{\pi \in S_{k+\ell}} \left(\prod_{i \in [k+\ell]} \Sigma_{(\alpha+\alpha')[i],(\beta+\beta')[\pi(i)]}\right)\,. \]

    Symmetrizing the former expression leads to $\binom{k+\ell}{k}^{-1}$ times the latter expression.
\end{proof}

\begin{lemma}
\label[Lemma]{lem:hermite-fourier-tensor}
    Let $p$ be a degree-$d$ vector-valued polynomial with Wiener-Hermite coefficients $\hat{p}(\alpha)$.
    Let $B$ be the matrix whose $\alpha$th column is $\He_{\alpha}$ defined as a vector as in the previous lemmata and let $P$ be the matrix whose $\alpha$th column is $\hat{p}(\alpha)$ when $\alpha$ is a multi-index.
    Let $P'$ be the matrix whose $(\alpha',\alpha'')$th column is $\hat{p}(\alpha' + \alpha'')$ when both $\alpha'$ and $\alpha''$ are multi-indices and neither are empty.
    Then
    \[  \norm{(B\ot B){P'}^{\sT}}_F^2 \le (2^{d}-1)\,\norm{BP^{\sT}}_F^2\,.\]
\end{lemma}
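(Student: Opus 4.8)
The plan is to reduce the claimed Frobenius-norm inequality to a norm bound for a single ``co-multiplication'' operator on Gaussian $L^2$-space, and then to control that operator bidegree-by-bidegree using \Cref{lem:hermite-gram-tensor}. First I would record the algebraic meaning of the matrices: since the columns of $B$ are the Hermite polynomials $\He_\alpha$ realized as vectors whose Euclidean Gram matrix is the $L^2(\Sigma)$-Gram matrix, we have $B^\sT B = \mathbf{G}$ with $\mathbf{G}_{\alpha,\beta} = \iprod{\He_\alpha,\He_\beta}_{L^2(\Sigma)}$, hence $(B\ot B)^\sT(B\ot B) = \mathbf{G}\ot\mathbf{G}$. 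Writing $\hat p_1,\dots,\hat p_m$ for the scalar coordinates of $p$, a direct expansion gives $\norm{BP^\sT}_F^2 = \sum_j \E_{X\sim\cN(0,\Sigma)} p_j(X)^2$ and $\norm{(B\ot B){P'}^\sT}_F^2 = \sum_j \E_{X,Y}\Psi(p_j)(X,Y)^2$ with $X,Y$ i.i.d.\ $\cN(0,\Sigma)$, where $\Psi(q) := \sum_{\alpha,\beta\ne\emptyset}\hat q(\alpha+\beta)\,\He_\alpha\ot\He_\beta$ co-multiplies $q = \sum_\gamma \hat q(\gamma)\He_\gamma$ into all splittings of each $\He_\gamma$ into a (nonempty) pair of Hermite factors. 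Thus it suffices to show the scalar inequality $\norm{\Psi(q)}^2 \le (2^d-1)\norm{q}^2$ for every polynomial $q$ of degree at most $d$; by weak orthogonality of Hermites across degrees it is enough to do this for $q = q_r$ Hermite-homogeneous of some degree $r\le d$ and then sum over $r$.

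Next I would split $\Psi(q_r) = \sum_{k=1}^{r-1}\Psi_{k,r-k}(q_r)$ by the bidegree of the two factors, where $\Psi_{k,\ell}(q_r) = \sum_{\alpha+\beta=\gamma,\,|\alpha|=k}\hat q_r(\gamma)\He_\alpha\ot\He_\beta$; weak orthogonality again makes these summands mutually orthogonal, so $\norm{\Psi(q_r)}^2 = \sum_k \norm{\Psi_{k,r-k}(q_r)}^2$. Letting $w$ be the vector indexed by pairs $(\alpha,\beta)$ with $|\alpha|=k$, $|\beta|=\ell$, and entry $\hat q_r(\alpha+\beta)$, we have $\norm{\Psi_{k,\ell}(q_r)}^2 = w^\sT(\mathbf{G}^{(k)}\ot\mathbf{G}^{(\ell)})w$; since $w$ depends only on $\alpha+\beta$ it is fixed by the symmetrizer $\sym^{(k,\ell)}$, so \Cref{lem:hermite-gram-tensor} rewrites this as $\binom{k+\ell}{k}^{-1}w^\sT\mathbf{G}^{(k,\ell)}w = \binom{k+\ell}{k}^{-1}\bignorm{\sum_\gamma N_{k,\ell}(\gamma)\hat q_r(\gamma)\He_\gamma}^2$, where $N_{k,\ell}(\gamma)$ is the number of ordered ways to write $\gamma$ as a sum of a degree-$k$ and a degree-$\ell$ multi-index. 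The key estimate is then that this quantity is at most $\binom{k+\ell}{k}\norm{q_r}^2$. The crucial combinatorial input is $N_{k,\ell}(\gamma)\le\binom{k+\ell}{k}$ for every $|\gamma|=r$, together with the compensating factorial weights inside the Hermite Gram matrix: in the isotropic normalization $h_\gamma = \He_\gamma/\sqrt{\gamma!}$ one computes $\norm{\Psi_{k,\ell}(q_r)}^2 = \sum_\gamma \bigabs{\iprod{q_r,h_\gamma}}^2\bigl(\sum_{\alpha+\beta=\gamma,\,|\alpha|=k}\binom{\gamma}{\alpha}^{-1}\bigr)$, and each bracketed sum has at most $N_{k,\ell}(\gamma)$ terms each $\le 1$, hence is $\le \binom{k+\ell}{k}$. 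Granting the key estimate, $\norm{\Psi(q_r)}^2 \le \sum_{k=1}^{r-1}\binom{r}{k}\norm{q_r}^2 = (2^r-2)\norm{q_r}^2$, and summing over $r\le d$ with orthogonality of degrees yields $\norm{\Psi(q)}^2 \le (2^d-2)\norm{q}^2 \le (2^d-1)\norm{q}^2$, as claimed.

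The main obstacle is exactly this per-bidegree key estimate. The identity $\norm{\Psi_{k,\ell}(q_r)}^2 = \binom{k+\ell}{k}^{-1}\bignorm{\sum_\gamma N_{k,\ell}(\gamma)\hat q_r(\gamma)\He_\gamma}^2$ drops out cleanly from \Cref{lem:hermite-gram-tensor}, but converting the pointwise bound $N_{k,\ell}(\gamma)\le\binom{k+\ell}{k}$ into the quadratic-form bound is delicate because the $L^2(\Sigma)$ norm is not simply the $\ell^2$-norm of Hermite coefficients once $\Sigma$ is not a multiple of the identity; the normalized-basis computation above settles the isotropic case outright, and for the high-entropy-step covariances (which are scalar multiples of near-projections onto a fixed subspace, hence isotropic on their range) this is the case that is actually needed, whereas for a fully general $\Sigma$ one must in addition exploit weak orthogonality together with the precise way $\mathbf{G}^{(k,\ell)}$ sits inside $\mathbf{G}^{(k)}\ot\mathbf{G}^{(\ell)}$ via $\sym^{(k,\ell)}$. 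A convenient side effect of this route is that the argument is manifestly a finite sequence of linear-algebraic identities and an entrywise inequality, so it transports to a low-degree sum-of-squares proof in the grammar of \Cref{sec:grammar} with no extra work.
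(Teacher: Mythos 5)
Your overall route is the same as the paper's: decompose $(B\ot B){P'}^{\sT}$ by the bidegree $(k,\ell)$ of the two Hermite factors, use weak orthogonality across degrees to make the blocks additive in squared Frobenius norm, invoke \Cref{lem:hermite-gram-tensor} to trade $\mathbf{G}^{(k)}\ot\mathbf{G}^{(\ell)}$ (restricted to vectors fixed by $\sym^{(k,\ell)}$) for $\binom{k+\ell}{k}^{-1}\mathbf{G}^{(k,\ell)}$, and then pay a multiplicity factor $N_{k,\ell}(\gamma)\le\binom{k+\ell}{k}$ for the number of ordered splittings of each $\gamma$, summing over bidegrees to roughly $2^{r}$ per Hermite-homogeneous degree $r$ and finishing with orthogonality over degrees.

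There is, however, a genuine gap at exactly the step you flag as the main obstacle. Your key quadratic-form estimate is only established for isotropic $\Sigma$, and your justification for why that case suffices is not valid: the lemma is invoked in \Cref{lem:ydotsypq} for a step covariance that is only assumed to satisfy $\Sigma\preceq\nu\,\Id$. The HES constraints bound the operator norm and the low-degreeness of $\fE_i[v_iv_i^{\sT}]$, but they do not force it to be a scalar multiple of a (near-)projection --- that is merely a property of the particular feasible solution built from the Bernstein polynomial of the Hessian, whereas the nuclear-norm machinery must apply to \emph{every} HES distribution. So you cannot reduce to the case where same-degree Hermites are orthogonal, and the normalized-basis computation $\norm{\Psi_{k,\ell}(q_r)}^2=\sum_\gamma\bigl|\iprod{q_r,h_\gamma}\bigr|^2\bigl(\sum_{\alpha+\beta=\gamma}\binom{\gamma}{\alpha}^{-1}\bigr)$ does not survive for general $\Sigma$. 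The paper closes this step by a different (and more elementary) move: $B^{(k,r-k)}{P^{(k,r-k)}}^{\sT}$ is literally the same sum of rank-one terms $\He_\gamma\hat p(\gamma)^{\sT}$ as $B^{(r)}{P^{(r)}}^{\sT}$, except that each term is repeated $N_{k,r-k}(\gamma)\le\binom{r}{k}$ times; bounding its squared Frobenius norm by $\binom{r}{k}^{2}$ times that of $B^{(r)}{P^{(r)}}^{\sT}$ cancels one factor of $\binom{r}{k}^{-1}$ from \Cref{lem:hermite-gram-tensor} and leaves $\sum_{k}\binom{r}{k}=2^{r}-1$, after which a $(1,\infty)$-H\"older over degrees gives the stated bound. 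To complete your version for general $\Sigma$ you need a bound of this repeated-column type on the weighted quadratic form in the (non-diagonal) Gram matrix $\mathbf{G}^{(r)}$, not the orthogonality-based identity.
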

\begin{proof}
    Let $B^{(k)}$ and $P^{(k)}$ be defined the same way as $B$ and $P$ except the multi-index $\alpha$ must satisfy $|\alpha| = k$.
    Similarly, $P^{(k,\ell)}$ is defined the same way as $P'$, except that $|\alpha'| = k$ and $|\alpha''| = \ell$.

    Then
    \[(B \ot B){P'}^{\sT} = \bigoplus_{k,\ell \ge 1} (B^{(k)} \ot B^{(\ell)}){P^{(k,\ell)}}^{\sT}\,,\]
    where $\oplus$ here denotes a matrix concatenation.
    By the weak orthogonality of the Hermites and the definition of the inner product involved in $B$ (\Cref{lem:hermite-cum-matrix-opnorm}), 
    \[\norm{(B\ot B){P'}^{\sT}}_F^2 = \sum_{k,\ell \ge 1}\norm{(B^{(k)} \ot B^{(\ell)}){P^{(k,\ell)}}^{\sT}}_F^2\,.\]
    By \Cref{lem:hermite-gram-tensor} and the fact that $P^{(k,\ell)}_{\alpha',\alpha''}$ depends only on the sum $\alpha'+\alpha''$,
    \[\norm{(B\ot B){P'}^{\sT}}_F^2 = \sum_{k,\ell \ge 1}\binom{k+\ell}{k}^{-1}\norm{B^{(k,\ell)}{P^{(k,\ell)}}^{\sT}}_F^2\,,\]
    where $B^{(k,\ell)}$ is the matrix whose $(\alpha',\alpha'')$th column is $\He_{\alpha'+\alpha''}$ when $|\alpha'|=k$ and $|\alpha''|=\ell$.
    Reparameterizing, 
    \[\norm{(B\ot B){P'}^{\sT}}_F^2 = \sum_{k \ge 1}\sum_{\ell = 1}^{k}\binom{k}{\ell}^{-1}\norm{B^{(\ell,k-\ell)}{P^{(\ell,k-\ell)}}^{\sT}}_F^2\,.\]
    Note that $B^{(\ell,k-\ell)}$ and $P^{(\ell,k-\ell)}$ contain the same columns as $B^{(k)}$ and $P^{(k)}$, but the former may contain repeats of the same column up to $\binom{k}{\ell}$ times.
    Therefore,
    \begin{align*}\norm{(B\ot B){P'}^{\sT}}_F^2 &\le \sum_{k \ge 1}\sum_{\ell = 1}^{k}\binom{k}{\ell}^{-1}\binom{k}{\ell}^{2}\norm{B^{(k)}{P^{(k)}}^{\sT}}_F^2
    \\&= \sum_{k \ge 1}\norm{B^{(k)}{P^{(k)}}^{\sT}}_F^2\sum_{\ell = 1}^{k}\binom{k}{\ell}
    \\&= \sum_{k \ge 1}(2^k-1)\norm{B^{(k)}{P^{(k)}}^{\sT}}_F^2
    \,.\end{align*}
    By $1-\infty$ Holder's inequality and the weak orthogonality of the Hermite polynomials,
    \[\norm{(B\ot B){P'}^{\sT}}_F^2 \le (2^d-1)\,\norm{BP^{\sT}}_F^2\,.\]
\end{proof}

\subsection{Matrix reshaping tools}
\label{sec:reshaping-tools}

\begin{lemma}
\label[Lemma]{lem:vector-folding-nuc-norm}
    Suppose $v \in \R^{\ot n} \ot \R^{\ot m}$ and $C > 0$ is a constant.
    Then there exists a set $I$ and a decomposition
    \[v = \sum_{i\in I} w_i \ot z_i\]
    such that $w_i \in \R^n$ and $z_i \in \R^m$ and
    \[
    \sum_{i \in I} \norm{w_i}_2^2 = C\sqrt{n} \qquad\text{and}\qquad \sum_{i\in I}\norm{z_i}_2^2 = \frac{\sqrt{n}}{C}\norm{v}_2^2\,.\]
    In particular,
    \[\sum_{i \in I} \frac{1}{2}(\norm{w_i}_2^2 + \norm{z_i}_2^2) = \frac{C\sqrt{n}}{2} + \frac{\sqrt{n}}{2C}\norm{v}_2^2\,. \]
    
    Therefore if $\norm{v}_2^2 \le C^2$ by a degree-$d$ sum-of-squares proof, then there is a degree-$d$ rank-decomposing sum-of-squares proof that $\norm{V}_1 \le C\sqrt{n}$, where $V \in \R^{n \times m}$ is the matrix reshaping of $v$.
\end{lemma}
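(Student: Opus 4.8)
The plan is to avoid the singular value decomposition of $V$ entirely — it is not expressible by a low-degree SoS proof — and instead to write down a fully explicit rank decomposition obtained by ``folding'' the second tensor factor into the first, equipped with a scalar rescaling that tunes how the squared-norm mass is split between the two sides. Writing $V \in \R^{n\times m}$ for the matrix reshaping of $v$ and $r_i := V^{\sT}e_i \in \R^m$ for its $i$th row, the tautology $v = \sum_{i\in[n]} e_i \ot r_i$ (equivalently $V = \sum_i e_i r_i^{\sT}$) is already a rank-$\le n$ decomposition; I would then fix the constant $\mu := C/\sqrt n$, set $I := [n]$, $w_i := \sqrt{\mu}\,e_i$, $z_i := \mu^{-1/2} r_i$, and verify directly that $\sum_i w_i \ot z_i = v$.

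Since $\norm{w_i}_2^2 = \mu$, $\norm{z_i}_2^2 = \mu^{-1}\norm{r_i}_2^2$, and $\sum_i \norm{r_i}_2^2 = \norm{V}_F^2 = \norm{v}_2^2$, the two claimed identities $\sum_i\norm{w_i}_2^2 = n\mu = C\sqrt n$ and $\sum_i\norm{z_i}_2^2 = \mu^{-1}\norm{v}_2^2 = \frac{\sqrt n}{C}\norm{v}_2^2$ follow by inspection, and averaging them gives the ``in particular'' statement. Each coefficient $\sqrt\mu$, $\mu^{-1/2}$ is a fixed real (with $C$ constant and $n$ fixed), so the $w_i$ are constant vectors and the $z_i$ are linear in the entries of $v$; hence $\sum_i\frac12(\norm{w_i}_2^2+\norm{z_i}_2^2) = \frac{C\sqrt n}{2} + \frac{\sqrt n}{2C}\norm{v}_2^2$ is a polynomial identity of degree $2$ in $v$, trivially SoS-provable. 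For the last clause, given a degree-$d$ SoS proof of $C^2 - \norm{v}_2^2 \ge 0$ (with $d\ge 2$), I would multiply it by the nonnegative scalar $\frac{\sqrt n}{2C}$ and add the degree-$2$ identity to obtain a degree-$d$ SoS proof of $\sum_i\frac12(\norm{w_i}_2^2+\norm{z_i}_2^2) \le \frac{C\sqrt n}{2} + \frac{\sqrt n}{2C}C^2 = C\sqrt n$; paired with the explicit decomposition $V = \sum_i w_i z_i^{\sT}$, this is precisely a degree-$d$ rank-decomposing SoS proof that $\norm{V}_1 \le C\sqrt n$ in the sense of \Cref{def:rank-decomposing}. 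When $v$ arises as a conditional expectation $\fE_j(\cdot)$ in the grammar, the same construction applies with $w_i z_i^{\sT}$ replaced by $\fE_j(w_i z_i^{\sT})$, the scalars commuting through $\fE_j$.

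I do not expect a genuine obstacle here; the only points needing care are (i) that the scaling constant may be chosen freely, which it can, (ii) that the AM--GM slack between $\sum_i\norm{w_i}_2\norm{z_i}_2$ and $\sum_i\frac12(\norm{w_i}_2^2+\norm{z_i}_2^2)$ is absorbed exactly by the slack in $\norm{v}_2^2\le C^2$ — this is why $\mu$ is set to the value that balances the two sides at $\norm{v}_2^2 = C^2$ — and (iii) routine degree bookkeeping. The one deliberate design choice worth flagging is the use of the square-root-free quantity $\frac12(\norm{w_i}_2^2+\norm{z_i}_2^2)$ rather than the sharper $\norm{w_i}_2\norm{z_i}_2$: the latter would require a Cauchy--Schwarz step $\sum_i\norm{r_i}_2 \le \sqrt{\,n\,\norm{v}_2^2\,}$ with an explicit square root, exactly the kind of step the rank-decomposing formulation is engineered to sidestep.
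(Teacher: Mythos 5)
Your proposal is correct and is essentially identical to the paper's proof: your scaling $\sqrt{\mu} = \sqrt{C/\sqrt{n}}$ is exactly the paper's $\sqrt[4]{C^2/n}$, your $r_i$ is the paper's $(e_i^{\sT}\ot\Id_m)v$, and the verification of the reconstruction identity and the norm bookkeeping proceed the same way. No differences worth noting.
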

\begin{proof}
    We take $I = [n]$ and $w_i = \sqrt[4]{C^2/n}\,e_i$ and $z_i = \sqrt[4]{n/C^2}\,(e_i^{\sT} \ot \Id_m)v$.
    Then
    \begin{align*}
         \sum_{i \in [n]} w_i \ot z_i &= \sum_{i \in [n]} \sqrt[4]{C^2/n}\;e_i \ot \left[\sqrt[4]{n/C^2}(e_i^{\sT} \ot \Id_m)v\right]
    \\&= \sum_{i \in [n]}e_i \ot \left[(e_i^{\sT} \ot \Id_m)v\right]
    \\&=  \left(\sum_{i \in [n]}(e_i^{\vphantom{T}}e_i^{\sT}) \ot \Id_m\right)\,v
    \\&= (\Id_n \ot \Id_m)\,v
    \\&= v
    \end{align*}
    and by using the multiplicativity of norms over tensor products to introduce a tensor factor of $e_i$ without changing the norm, and then the orthonormality of $\{e_i\}_i$ to equate the sum of the square norms to the square norm of the sums,
    \begin{align*}
        \sum_{i \in [n]} \frac{1}{2}(\norm{w_i}_2^2 + \norm{z_i}_2^2)
    &= \sum_{i \in [n]}\frac{1}{2}\left(\frac{C}{\sqrt{n}} + \frac{\sqrt{n}}{C} \norm{(e_i^{\sT} \ot \Id_m)v}_2^2\right)
    \\&= \frac{C\sqrt{n}}{2} + \frac{\sqrt{n}}{2C}\sum_{i \in [n]}\norm{(e_i^{\sT} \ot \Id_m)v}_2^2
    \\&= \frac{C\sqrt{n}}{2} + \frac{\sqrt{n}}{2C}\sum_{i \in [n]}\norm{e_i\ot \left[(e_i^{\sT} \ot \Id_m)v\right]}_2^2
    \\&= \frac{C\sqrt{n}}{2} + \frac{\sqrt{n}}{2C}\norm{\sum_{i \in [n]}e_i\ot \left[(e_i^{\sT} \ot \Id_m)v\right]}_2^2
    \\&= \frac{C\sqrt{n}}{2} + \frac{\sqrt{n}}{2C}\norm{v}_2^2
    \,.
    \end{align*}
\end{proof}

\subsection{Cumulant nuclear norm bounds}
\label{sec:cum-nuc-norm-bounds}

Up to constants, \Cref{lem:ydotsyp-vec} is a special case of \Cref{lem:ydotsypq}, but we leave the simpler argument in for readability.

\begin{lemma}
\label[Lemma]{lem:ydotsyp-vec}
    Let $y \in \R^n$ be a Gaussian vector $y \sim \cN(0,\Sigma)$ with $\Sigma\preceq\nu\cdot\Id_n$ and let $p \in \R[y]^{m}$ be a vector-valued function of $y$.

    Then 
    \[ \norm{\kappa_{k+1}(\underbrace{y, \dots, y}_k, p)}_2^2 \le k!\,\nu^{k}\E_y \norm{p(y)}_2^2 \]
    by a sum-of-squares proof of degree $2k$ in $\Sigma$ and also degree $2k$ times the largest degree of a $({\le}k)$th order Wiener-Hermite coefficient of $p$ in any other indeterminates.
\end{lemma}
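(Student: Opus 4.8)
The plan is to obtain the bound as an essentially immediate consequence of two results already in hand: the cumulant-collapse identity of \Cref{lem:kappay1dotsykp} and the Hermite operator-norm bound of \Cref{cor:hermite-operator-norm}. First I would rewrite the $(k+1)$st joint cumulant as a second cumulant. Since $p \in \R[y]^m$ is a vector-valued polynomial, the final display of \Cref{lem:kappay1dotsykp} applies and gives
\[ \kappa_{k+1}(\underbrace{y,\dots,y}_k,p(y)) \;=\; \kappa_2\!\bigl(\He_k(y\mid\Sigma),\,p(y)\bigr), \]
where $\He_k(y\mid\Sigma) = \sum_{i_1,\dots,i_k\in[n]}\He_{i_1,\dots,i_k}(y\mid\Sigma)\,e_{i_1}\ot\dots\ot e_{i_k}$ is the order-$k$ Hermite tensor. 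This is a purely algebraic identity over the ring generated by all the indeterminates (the entries of $\Sigma$ and any variables appearing in the coefficients of $p$), so it commutes with taking the squared Frobenius norm, and it remains only to bound $\norm{\kappa_2(\He_k(y\mid\Sigma),p(y))}_2^2$.

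Next I would apply \Cref{cor:hermite-operator-norm}, which, using the hypothesis that there is an SoS proof of $\Sigma\preceq\nu\cdot\Id_n$, gives by a sum-of-squares proof
\[ \norm{\kappa_2\!\bigl(\He_k(y\mid\Sigma),p(y)\bigr)}_2^2 \;\le\; k!\,\opnorm{\Sigma}^k\,\E_y\norm{p(y)}_2^2 \;\le\; k!\,\nu^k\,\E_y\norm{p(y)}_2^2, \]
where in the last step $\opnorm{\Sigma}^k\le\nu^k$ is the matrix-SoS statement $\Sigma^{\ot k}\preceq\nu^k\Id$, obtained by $k$-fold tensoring of the axiom $\nu\Id_n-\Sigma\succeq0$ (the tensor-product rule of \Cref{sec:matrix-sos-system}) and fed into the proof of \Cref{lem:hermite-cum-matrix-opnorm} in place of the identity $\opnorm{\Sigma^{\ot k}}=\opnorm{\Sigma}^k$. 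Chaining the two displays yields the asserted inequality.

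Finally I would account for the degree of the composite proof. The collapse identity of \Cref{lem:kappay1dotsykp} only involves the Wiener–Hermite coefficients $\hat p(\alpha)$ of $p$ of order $|\alpha|\le k$ (in fact only $|\alpha|=k$ survives in $\kappa_2(\He_k,p)$, the lower orders vanishing by weak orthogonality), which enter $\norm{\kappa_2(\He_k,p)}_2^2$ with bounded multiplicity; combined with the degree-$2k$-in-$\Sigma$ cost recorded in \Cref{lem:hermite-cum-matrix-opnorm}, this gives a proof of degree $2k$ in $\Sigma$ and of degree at most $2k$ times the largest degree of a $({\le}k)$th-order Wiener–Hermite coefficient of $p$ in the remaining indeterminates, as claimed. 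I do not anticipate a genuine obstacle here: the statement is a packaging of \Cref{lem:kappay1dotsykp} and \Cref{cor:hermite-operator-norm}, and the only point requiring care is that both are invoked in the formal, SoS-variable setting — which is legitimate because \Cref{lem:kappay1dotsykp} is proved by a combinatorial (T-cumulant) identity valid over the polynomial ring and \Cref{cor:hermite-operator-norm} is already phrased as an SoS bound.
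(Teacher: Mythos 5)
Your proposal is correct and follows exactly the paper's own two-line argument: collapse the $(k+1)$st cumulant to $\kappa_2(\He_k(y\mid\Sigma),p)$ via \Cref{lem:kappay1dotsykp}, then bound its squared norm by \Cref{cor:hermite-operator-norm} using $\opnorm{\Sigma}\le\nu$. The additional remarks on the degree bookkeeping are consistent with what the paper leaves implicit.
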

\begin{proof}
By \Cref{lem:kappay1dotsykp} then \Cref{cor:hermite-operator-norm},
    \begin{align*}
    \norm{\kappa_{k+1}(y,\dots,y, p)}_2^2
    = \norm{\kappa_{2}(\He_k(y), p)}_2^2
    \le k!\nu^k\E_y\norm{p(y)}_2^2\,.
    \end{align*}
\end{proof}

\begin{lemma}
\label[Lemma]{lem:ydotsypq}
    Let $y \in \R^n$ be a Gaussian vector $y \sim \cN(0,\Sigma)$ with $\Sigma \preceq \nu\cdot\Id_n$ and let $p \in (\R[y]^{n})^{\ot \ell_1}$ and $q \in (\R[y]^{n})^{\ot \ell_2}$ be vector-valued polynomials in $y$ with degree at most $d$.

    Then for any $r \in \{0, \dots, \lfloor(k+|\ell_2 - \ell_1|)/2\rfloor\}$,
    \[ \kappa_{k+2}(\underbrace{y, \dots, y}_k, p, q) \]
    can be re-represented as a sum of terms of the form 
    \[\sum_{j\in J} w_jz_j^T \]
    for some index set $J$ so that $w_j \in (\R^n)^{\ot (\max(\ell_1,\ell_2) + k-r)}$ and $z_j \in (\R^n)^{\ot (\min(\ell_1,\ell_2) + r)}$ and both are polynomials of degree at most $\lfloor(d+k)/2\rfloor$ in the covariance $\Sigma$ and
    \[ \sum_{j\in J} \E_y \frac{1}{2}(\norm{w_j}_2^2 + \norm{z_j}_2^2) \le (2^d-1)k^{k/2+1}\nu^{k/2}n^{\lfloor k/2\rfloor/2}
    \cdot\frac{1}{2}\E_y(\norm{p}_2^2 + \norm{q}_2^2) \,.\]
    This inequality has a sum-of-squares proof.
\end{lemma}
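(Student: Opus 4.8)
### Proof proposal for Lemma~\ref{lem:ydotsypq}

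The plan is to reduce the joint cumulant $\kappa_{k+2}(y,\dots,y,p,q)$ to a bilinear pairing of two Hermite-coefficient tensors, and then fold the modes into the desired shape via \Cref{lem:vector-folding-nuc-norm}, controlling the $2$-norms of the folded pieces through the Hermite operator-norm bound \Cref{cor:hermite-operator-norm}. Concretely, the first step is to apply the diagram expansion \cite[Proposition 4.3]{terdik2021multivariate} exactly as in \Cref{lem:kappay1dotsykp,lem:kappay1dotsykpq,lem:hermite-quadraticization}: write $\kappa_{k+2}(y,\dots,y,p,q)$ as a sum over closed loopless diagrams on the $k$ singleton vertices together with the (multilinearly expanded) Wiener--Hermite atoms of $p$ and $q$. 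Grouping the diagrams by how the $k$ ``$y$-legs'' are split between the $p$-side and the $q$-side, and by which indices of the Hermite atoms are matched to those legs versus matched to each other, this expresses $\kappa_{k+2}(y,\dots,y,p,q)$ as
\[
\sum \kappa_2(\He_{\gamma_1}(y\mid\Sigma),\, p'_{\alpha'})\;\otimes\;\kappa_2(\He_{\gamma_2}(y\mid\Sigma),\, q'_{\beta'})\;\otimes\;\kappa_2(\He_{\alpha''}(y\mid\Sigma),\He_{\beta''}(y\mid\Sigma))
\]
summed over partitions $\gamma=\gamma_1\sqcup\gamma_2$ of the $k$ legs and over the relevant sub-multi-indices of the Hermite supports of $p$ and $q$ — the $(p,q)$-analogue of \Cref{lem:hermite-quadraticization} with $k$ extra $y$-legs. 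The key structural observation is that each of the three factors is a tensor whose modes we are free to assign to the ``left'' ($w$) or ``right'' ($z$) side, subject only to a total budget: we need $\max(\ell_1,\ell_2)+k-r$ modes on the left and $\min(\ell_1,\ell_2)+r$ on the right, and the parameter $r$ controls exactly this split; the constraint $r\le\lfloor(k+|\ell_2-\ell_1|)/2\rfloor$ is precisely what makes such an assignment feasible (it is the stacked single-flip bookkeeping, cf.\ \Cref{def:single-flip-stacked}).

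The second step is the quantitative bound. For the ``legs-to-Hermite'' factors $\kappa_2(\He_{\gamma_i},p'_{\alpha'})$ and $\kappa_2(\He_{\alpha''},\He_{\beta''})$, I would invoke \Cref{cor:hermite-operator-norm}: summing $\norm{\E_y \He_I(y)p(y)}_2^2$ over $I\in[n]^{|\gamma_i|}$ is at most $|\gamma_i|!\,\opnorm{\Sigma}^{|\gamma_i|}\E_y\norm{p}_2^2 \le |\gamma_i|!\,\nu^{|\gamma_i|}\E_y\norm{p}_2^2$, and likewise the Hermite--Hermite Gram block has operator norm at most $\ell!\,\nu^{\ell}$ by \Cref{lem:hermite-cum-matrix-opnorm}. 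The combinatorial prefactor from summing over all diagrams and all partitions of the $k$ legs is $O(k^{k/2})$ times a factor $(2^d-1)$ coming from the multilinear/Wiener--Hermite expansion of $p$ (and $q$) into at most $2^d-1$ positive-degree homogeneous parts — this is exactly where the $(2^d-1)$ in the statement comes from, bounded using \Cref{lem:hermite-fourier-tensor} to avoid double-counting across degrees. The remaining $n^{\lfloor k/2\rfloor/2}$ factor arises from one application of \Cref{lem:vector-folding-nuc-norm}: after the three-factor expansion there may be an odd residue of ``loose'' $y$-legs (at most $\lfloor k/2\rfloor$ of them after pairing) that cannot be folded into a balanced $w\otimes z$ form without reshaping an $n^{2t+1}$-dimensional tensor, and each such reshaping costs a $\sqrt{n}$ — this is the same ``bad case'' that produces the extra power of $n$ in \Cref{lem:moment-nuclear-norm} and \Cref{thm:high-entropy-nuclear-norm-bound}. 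Assembling: each term of the form $\sum_j w_jz_j^\sT$ has $\sum_j\tfrac12\E_y(\norm{w_j}_2^2+\norm{z_j}_2^2)$ bounded, via AM--GM / the folding lemma with the right choice of constant $C$, by $(2^d-1)k^{k/2+1}\nu^{k/2}n^{\lfloor k/2\rfloor/2}\cdot\tfrac12\E_y(\norm{p}_2^2+\norm{q}_2^2)$, as claimed.

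The third step is to certify that every inequality above is a low-degree SoS proof. All the operator-norm facts (\Cref{lem:hermite-cum-matrix-opnorm}, \Cref{cor:hermite-operator-norm}) are already stated as SoS proofs in terms of the indeterminate $\Sigma$; the diagram expansion is a polynomial identity; \Cref{lem:vector-folding-nuc-norm} is explicitly rank-decomposing; and the passage from $\norm{f}_{L^2(\Sigma)}^2$ bounds to $2$-norm bounds on the folded pieces uses \Cref{lem:holders-2-inf-2} and Cauchy--Schwarz, both SoS-friendly. The degree tally is: $2k$ in $\Sigma$ from the $k$ applications of the Hermite Gram bound plus the $\lfloor d/2\rfloor$-ish degree in $\Sigma$ carried by the Hermite coefficients of $p,q$ (giving the stated $\lfloor(d+k)/2\rfloor$), and the degree in the other indeterminates tracks the degrees of $p$ and $q$.

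The main obstacle I anticipate is not any single estimate but the bookkeeping of the mode-assignment / folding step: one must verify that for \emph{every} $r$ in the allowed range there is a consistent choice, diagram by diagram, of which tensor modes go left and which go right such that (i) the left piece has exactly $\max(\ell_1,\ell_2)+k-r$ modes and the right exactly $\min(\ell_1,\ell_2)+r$, (ii) at most $\lfloor k/2\rfloor$ loose legs survive to force a reshaping, and (iii) the whole assignment respects the single-flip/alternation constraints so that the pieces genuinely re-represent $\kappa_{k+2}(y,\dots,y,p,q)$. This is the same Eulerian-cycle label-swapping argument used in the proof of \Cref{lem:moment-nuclear-norm-stacked-graph-matrix}, and the cleanest route is probably to reduce directly to that lemma (or a minor variant of it) rather than redo the combinatorics from scratch, treating $p$ and $q$ themselves as ``extra variables'' with their own Hermite-coefficient structure and noting that the $k$ $y$-legs play the role of the $u_k$-multiplicities there.
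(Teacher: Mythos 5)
Your proposal follows essentially the same route as the paper's proof: expand $\kappa_{k+2}(y,\dots,y,p,q)$ via the Wiener--Hermite/diagram decomposition into a sum over splits of the $k$ legs of products $\kappa_2(\He_{I_1},\He_{\alpha'})\,\kappa_2(\He_{I_2},\He_{\beta'})\,\kappa_2(\He_{\alpha''},\He_{\beta''})$, package each split as a rank-decomposed matrix $\sum_j L^{(\ell)}_j (R^{(\ell)})_j^{\sT}$, use \Cref{lem:vector-folding-nuc-norm} to absorb the shape mismatch $\Delta\ell \le k/2$ (whence $n^{\lfloor k/2\rfloor/2}$), and control the norms with \Cref{cor:hermite-operator-norm}, \Cref{lem:hermite-cum-matrix-opnorm}, and \Cref{lem:hermite-fourier-tensor} (whence $(2^d-1)$). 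The one substantive caveat is your closing suggestion to handle the mode-assignment bookkeeping by reducing to \Cref{lem:moment-nuclear-norm-stacked-graph-matrix}: that would be circular, since the stacked nuclear-norm lemmas are downstream of the present lemma via \Cref{lem:rank-decomposing-induction}, and no Eulerian-cycle/single-flip argument is needed here --- for a fixed admissible $r$ the paper simply checks which of the two factors is over-long, applies the folding lemma to that one factor, and verifies $\Delta\ell \le k/2$ directly from the constraint $r \le \lfloor (k+|\ell_2-\ell_1|)/2\rfloor$, together with the $(k\nu)^{\pm(k-2\ell)/4}$ rescaling to balance the $w$- and $z$-contributions.
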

\begin{proof}
    By the multilinearity of the joint cumulant and by taking Wiener-Hermite decompositions,
    \begin{align*} \kappa_{k+2}(y,\dots,y, p, q)
    &=
    \sum_{i_1, \dots, i_k \in [n]}\sum_{\alpha, \beta} \kappa_{k+2}(e_{i_1}y_{i_1},\, \dots,\, e_{i_k}y_{i_k},\, \hat{p}(\alpha)\He_{\alpha}(y),\, \hat{q}({\beta})\He_{\beta}(y))
    \\&= 
    \sum_{i_1, \dots, i_k \in [n]}\sum_{\alpha, \beta} \kappa_{k+2}(y_{i_1}, \dots, y_{i_k}, \He_{\alpha}(y), \He_{\beta}(y))\;\cdot\; e_{i_1} \ot \dots \ot e_{i_k} \ot \hat{p}(\alpha) \ot \hat{q}({\beta})
    \\&= 
    \sum_{I \in [n]^k}\sum_{\alpha, \beta} \kappa_{3}(\He_{I},\He_{\alpha},\He_{\beta})\;\cdot\; e^{\ot I} \ot \hat{p}(\alpha) \ot \hat{q}({\beta})
    \,,
    \end{align*}
    where in the last equality we have used \Cref{lem:kappay1dotsykpq} and introduced the notational convenience $e^{\ot I} := e_{i_1} \ot \dots \ot e_{i_k}$ when $I = (i_1, \dots, i_k)$.

    At this point we will decompose the cumulant $\kappa_{3}(\He_{I},\He_{\alpha},\He_{\beta})$ into terms capturing the different interactions between $I$, $\alpha$, and $\beta$, then group those terms together based on how many variables in each of the blocks are interacting with each other block.
    This will allow us to find hidden rank-1 terms with smaller norm than if you had simply added up the norms of their components.
    
    By \Cref{lem:hermite-quadraticization},
    \begin{align*}
         &\kappa_{k+2}(y,\dots,y, p, q) \\&= \sum_{I \in [n]^k}\sum_{\alpha,\beta}\sum_{\substack{(I_1, I_2)\\\text{partition of } I}} \sum_{\alpha' \le \alpha} \sum_{\beta' \le \beta} \kappa_2(\He_{I_1}, \He_{\alpha'})\,\kappa_2(\He_{I_2}, \He_{\beta'})\,\kappa_2(\He_{\alpha-\alpha'}, \He_{\beta - \beta'})\, e^{\ot I} \ot \hat{p}(\alpha) \ot \hat{q}({\beta})
         \\&=  \sum_{\ell \in \{0,\dots,k\}}\binom{k}{\ell}\sum_{\alpha'',\beta'',\alpha',\beta'}\sum_{I_1 \in [n]^{\ell}}\sum_{I_2 \in [n]^{k-\ell}} \kappa_2(\He_{I_1}, \He_{\alpha'})\,\kappa_2(\He_{I_2}, \He_{\beta'})\,\kappa_2(\He_{\alpha''}, \He_{\beta''})\,e^{\ot (I_1+I_2)} \ot \hat{p}(\alpha' + \alpha'') \ot \hat{q}({\beta' + \beta''}) \,.
    \end{align*}
    For each $\ell$, the above sum over $\alpha', \beta', \alpha'', \beta'', I_1, I_2$ can be re-represented as the matrix
    \begin{align}
    &\sum_{\alpha'',\beta''} \kappa_2(\He_{\alpha''}, \He_{\beta''})\,\left(\sum_{I_1 \in [n]^{\ell}}\sum_{\alpha'}\kappa_2(\He_{I_1}, \He_{\alpha'})\,e^{\ot I_1}\ot\hat{p}(\alpha' + \alpha'') \right)\left(\sum_{I_2 \in [n]^{k-\ell}}\sum_{\beta'}\kappa_2(\He_{I_2}, \He_{\beta'})\,e^{\ot I_2} \ot \hat{q}({\beta' + \beta''})\right)^{\sT}
    \nonumber
    \\&:=\sum_{\alpha'',\beta''} \kappa_2(\He_{\alpha''}, \He_{\beta''})\,\left(L^{(\ell,\alpha'')} \right)\left(R^{(\ell,\beta'')}\right)^{\sT}
    \nonumber
    \\&= \E_y \left(\sum_{\alpha''} L^{(\ell,\alpha'')}\He_{\alpha''}(y) \right)\left(\sum_{\beta''} R^{(\ell,\beta'')}\He_{\beta''}(y)\right)^{\sT}
    \,,\nonumber\label{eq:L-R-decomp} \end{align}
    such that $ L^{(\ell,\alpha'')} \in (\R^n)^{\ot (\ell_1 + \ell)}$ and $R^{(\ell,\beta'')} \in (\R^n)^{\ot (\ell_2 + k - \ell)}$.

    Assume without loss of generality that $\ell_1 \le \ell_2$ (otherwise we can exchange $p$ with $q$).
    Assume also without loss of generality that $\ell_1+\ell \ge \max(\ell_1,\ell_2) + k-r$ (otherwise we have $\ell_2 + (k-\ell) \ge \min(\ell_1,\ell_2)+r$ and we can apply the next step to $\sum_{\beta''} R^{(\ell,\beta'')}\He_{\beta''}(y)$ rather than $\sum_{\alpha''} L^{(\ell,\alpha'')}\He_{\alpha''}(y)$) and let 
    \[\Delta \ell := \ell_1+\ell - (\max(\ell_1,\ell_2) + k-r) \ge 0\,.\]
    Note that since $k-r \ge k/2-|\ell_1-\ell_2|/2 = k/2-(\ell_2-\ell_1)/2$ by definition of $r$, we have $\Delta \ell = (\ell_1 - \ell_2) + \ell - (k-r) \le \ell - k/2 - (\ell_2-\ell_1)/2 \le k/2 - (\ell_2-\ell_1)/2 \le k/2$.
    
    Now by \Cref{lem:vector-folding-nuc-norm} applied to $\sum_{\alpha''} L^{(\ell,\alpha'')}\He_{\alpha''}(y)$, there is a decomposition
    \[ \sum_{j \in J^{(\ell)}} x^{(\ell)}_{j} \ot y^{(\ell)}_{j} = \sum_{\alpha''} L^{(\ell,\alpha'')}\He_{\alpha''}(y)
    \,.\]
    such that $x^{(\ell)}_j \in (\R^n)^{\ot (\max(\ell_1,\ell_2) + k - r)}$ and $y^{(\ell)}_{j} \in (\R^n)^{\ot \Delta \ell}$ and
    \[\sum_{j\in J^{(\ell)}}\norm{x^{(\ell)}_{j}}_2^2 = \sqrt{n^{\Delta \ell}}\norm{\sum_{\alpha''} L^{(\ell,\alpha'')}\He_{\alpha''}(y)}_2^2,\qquad
    \text{and}\qquad
    \sum_{j\in J^{(\ell)}}\norm{y^{(\ell)}_{j}}_2^2 = \sqrt{n^{\Delta \ell}}\,.\]
    So we will take our final decomposition as
    \[ w_{(\ell,j)} := (k\nu)^{(k-2\ell)/4}\,x^{(\ell)}_{j}\qquad\text{and}\qquad z_{(\ell,j)} := (k\nu)^{(2\ell-k)/4}\,y^{(\ell)}_{j} \ot \sum_{\beta''} R^{(\ell,\beta'')}\He_{\beta''}(y) \]
    and we have
    \begin{align*} \sum_{\ell,j}  \norm{w_{(\ell,j)}}_2^2 
    &= \sum_{\ell,j}  \sqrt{(k\nu)^{k-2\ell}}\norm{x^{(\ell)}_{j}}_2^2
    \\&= \sum_{\ell}\sqrt{(k\nu)^{k-2\ell}n^{\Delta \ell}}\norm{\sum_{\alpha''} L^{(\ell,\alpha'')}\He_{\alpha''}(y)}_2^2
    \\&= \sum_{\ell}\sqrt{(k\nu)^{k-2\ell}n^{\Delta \ell}}\norm{\sum_{\alpha''}\sum_{I_1 \in [n]^{\ell}}\sum_{\alpha'}\kappa_2(\He_{I_1}, \He_{\alpha'})\left(e^{\ot I_1}\ot\hat{p}(\alpha' + \alpha'') \right)\He_{\alpha''}(y)}_2^2
    \,.
    \end{align*}
    By \Cref{cor:hermite-operator-norm}, introducing the replica $y^{(1)}$ to represent the copy of $y$ being expected over in the cumulant (as in \Cref{sec:replica-equivalance}),
    \begin{align*}
    \sum_{\ell,j} \norm{w_{(\ell,j)}}_2^2 
    &\le \sum_{\ell}\sqrt{(k\nu)^{k-2\ell}n^{\Delta \ell}}\ell!\nu^{\ell}\E_{y^{(1)}}\norm{\sum_{\alpha''}\sum_{\alpha'} \He_{\alpha'}(y^{(1)})\hat{p}(\alpha' + \alpha'') \He_{\alpha''}(y)}_2^2
    \,.
    \end{align*}

    At this point, we have a lot of summations that we would like to interpret as matrix multiplications.
    To this end, let $\bar{B}_{\ell}$ be the matrix whose $I_1$th column is $\He_{I_1}$ when $I_1 \in [n^{\ell}]$, let $B$ be the matrix whose $\alpha$th column is $\He_{\alpha}$ when $\alpha$ is a multi-index, and let $P^{(2)}$ be the matrix whose $(\alpha',\alpha'')$th column is $\hat{p}(\alpha' + \alpha'')$ when both $\alpha'$ and $\alpha''$ are non-empty multi-indices.
    Then, recalling that the second joint cumulant between two Hermite polynomials is equal their the inner product, we can write the previous expression as
    \begin{align*} \sum_{\ell,j} \E_y \norm{w_{(\ell,j)}}_2^2 
    &= \sum_{\ell}\sqrt{(k\nu)^{k-2\ell}n^{\Delta \ell}}\norm{\sum_{\alpha''}\sum_{I_1 \in [n]^{\ell}}\sum_{\alpha'}\iprod{\bar{B}_{\ell}e^{\ot I_1}, Be_{\alpha'}}\left(e^{\ot I_1}\ot \left(P^{(2)}(e_{\alpha'}\ot e_{\alpha''}) \right)\right)(Be_{\alpha''})^{\sT}}_F^2
    \\&= \sum_{\ell}\sqrt{(k\nu)^{k-2\ell}n^{\Delta \ell}}\norm{\sum_{\alpha''}\sum_{I_1 \in [n]^{\ell}}\sum_{\alpha'}\left(\left(\left(e^{\ot I_1}{e^{\ot I_1}}^{\sT}\right)\bar{B}_{\ell}^{\sT}Be_{\alpha'}\right)\ot \left(P^{(2)}(e_{\alpha'}\ot e_{\alpha''}) \right)\right)(Be_{\alpha''})^{\sT}}_F^2
    \,,\end{align*}
    where we commuted the inner product through since it is a scalar, and rewrote the inner product using a transpose.
    Next, we recognize that the sum over $I_1$ is an identity matrix, and then use the multilinearity of tensor products and the linearity of matrix multiplication to group together the elementary basis vectors in order to find another identity matrix from the sum over $\alpha''$:
    \begin{align*}
    \sum_{\ell,j} \E_y \norm{w_{(\ell,j)}}_2^2
    &= \sum_{\ell}\sqrt{(k\nu)^{k-2\ell}n^{\Delta \ell}}\norm{\sum_{\alpha''}\sum_{\alpha'}\left(\left(\bar{B}_{\ell}^{\sT}Be_{\alpha'}\right)\ot \left(P^{(2)}(e_{\alpha'}\ot e_{\alpha''}) \right)\right)(Be_{\alpha''})^{\sT}}_F^2
    \\&= \sum_{\ell}\sqrt{(k\nu)^{k-2\ell}n^{\Delta \ell}}\norm{\sum_{\alpha''}\sum_{\alpha'}\left(\left(\bar{B}_{\ell}^{\sT}B\right)\ot P^{(2)}\right)\left(e_{\alpha'} \ot e_{\alpha'}\ot e_{\alpha''} \right)e_{\alpha''}^{\sT}B^{\sT}}_F^2
    \\&= \sum_{\ell}\sqrt{(k\nu)^{k-2\ell}n^{\Delta \ell}}\norm{\sum_{\alpha'}\left(\left(\bar{B}_{\ell}^{\sT}B\right)\ot P^{(2)}\right)\left(e_{\alpha'} \ot e_{\alpha'}\ot B^{\sT} \right)}_F^2
    \,.
    \end{align*}
    Now we use the fact that the Frobenius norm is invariant under tensor/matrix reshapings, and bring the first mode on the left over to the right, and find our last identity matrix .
    \begin{align*} \sum_{\ell,j} \E_y \norm{w_{(\ell,j)}}_2^2 
    &= \sum_{\ell}\sqrt{(k\nu)^{k-2\ell}n^{\Delta \ell}}\norm{\sum_{\alpha'} P^{(2)}\left( e_{\alpha'}^{\sT}(e_{\alpha'}B^{\sT}\bar{B}_{\ell})\ot B^{\sT} \right)}_F^2 
    \\&= \sum_{\ell}\sqrt{(k\nu)^{k-2\ell}n^{\Delta \ell}}\norm{ P^{(2)}\left( (B^{\sT}\bar{B}_{\ell})\ot B^{\sT} \right)}_F^2 
    \\&= \sum_{\ell}\sqrt{(k\nu)^{k-2\ell}n^{\Delta \ell}}\,\Tr P^{(2)} (B\ot B)^{\sT}\left(\bar{B}_{\ell}\bar{B}_{\ell}^{\sT}\ot \Id\right)(B\ot B) {P^{(2)}}^{\sT}
    \end{align*}
    By \Cref{lem:hermite-cum-matrix-opnorm}, $\opnorm{\bar{B}_{\ell}\bar{B}_{\ell}^{\sT}} \le \ell!\opnorm{\Sigma}^{\ell} \le \ell!\nu^{\ell}$, so applying \Cref{lem:holders-2-inf-2}, we find    
    \begin{align*} \sum_{\ell,j} \E_y \norm{w_{(\ell,j)}}_2^2 
    &\le \sum_{\ell}\sqrt{(k\nu)^{k-2\ell}n^{\Delta \ell}}\;\ell!\nu^{\ell} \norm{(B\ot B) {P^{(2)}}^{\sT}}_F^2
    \,.
    \end{align*}
    Finally, by \Cref{lem:hermite-fourier-tensor}, $\norm{(B \ot B){P^{(2)}}^{\sT}}_F^2 \le (2^d-1)\,\norm{BP^{\sT}}_F^2$, where $P$ is the matrix whose $\alpha$th column is $\hat{p}(\alpha)$, and by \Cref{thm:hermite-plancherel}, $\norm{BP^{\sT}}_F^2 \le \E \norm{p}_2^2$, so
    \begin{align*} \sum_{\ell,j} \E_y \norm{w_{(\ell,j)}}_2^2 
    &\le \sum_{\ell}\sqrt{(k\nu)^{k-2\ell}n^{\Delta \ell}}\,(2^d-1)\ell!\,\nu^{\ell} \norm{BP}_F^2
    \\&\le \sum_{\ell}\sqrt{(k\nu)^{k-2\ell}n^{\Delta \ell}}\,(2^d-1)k^{\ell}\nu^{\ell}\E \norm{p}_2^2
    \\&\le \sum_{\ell}\sqrt{(k\nu)^{k}n^{\Delta \ell}}\,(2^d-1)\E \norm{p}_2^2
    \,.
    \end{align*}
    A very similar calculation with the $z$ vectors yields
    \begin{align*} \sum_{\ell,j} \E_y \norm{z_{(\ell,j)}}_2^2 
    &= \sum_{\ell,j} \E_y \sqrt{(k\nu)^{2\ell-k}}\norm{y^{(\ell)}_{j}}_2^2\norm{\sum_{\beta''} R^{(\ell,\beta'')}\He_{\beta''}(y)}_2^2
    \\&= \sum_{\ell} \E_y \sqrt{(k\nu)^{2\ell-k}n^{\Delta \ell}}\norm{\sum_{\beta''} R^{(\ell,\beta'')}\He_{\beta''}(y)}_2^2
    \\&= \sum_{\ell} \E_y \sqrt{(k\nu)^{2\ell-k}n^{\Delta \ell}}\norm{\sum_{\beta''}\sum_{I_2 \in [n]^{k-\ell}}\sum_{\beta'}\kappa_2(\He_{I_2}, \He_{\beta'})\left(e^{\ot I_2}\ot\hat{q}(\beta' + \beta'') \right)\He_{\beta''}(y)}_2^2
    \,.
    \end{align*}
    Now we apply the same matrix interpretation and manipulations as for the $w$ vectors to find
    \begin{align*} \sum_{\ell,j} \E_y \norm{z_{(\ell,j)}}_2^2 
    &\le \sum_{\ell} \sqrt{(k\nu)^{2\ell-k}n^{\Delta \ell}}k^{(k-\ell)}\nu^{k-\ell}(2^d-1)\E\norm{q}_2^2 
    \\&\le \sum_{\ell} \sqrt{(k\nu)^kn^{\Delta \ell}}\,(2^d-1)\E\norm{q}_2^2
    \,.
    \end{align*}
    Therefore,
    \[\sum_{\ell,j}\E_y\frac{1}{2}\left(\norm{w_{(\ell,j)}}_2^2+\norm{z_{(\ell,j)}}_2^2\right) \le (2^d-1)k\sqrt{(k\nu)^kn^{\Delta \ell}}\cdot\frac{1}{2}\E(\norm{p}_2^2 + \norm{q}_2^2)\,.\]
    Since $\Delta \ell \le  k/2$ and $\Delta \ell$ is an integer, we have the result.
    \end{proof}

\subsection{Induction over steps}
\label{sec:conditional-induction}

\begin{lemma}
\label[Lemma]{lem:rank-decomposing-induction}
Suppose that there exists a rank-decomposing SoS proof that 
\[\norm{\E [S^{(r,s)} \mid y]}_1 \le C\]
for some multiset $S$ of random variables and some Gaussian vector variable $y \sim \cN(0, \Sigma)$ with $\Sigma \preceq \nu \cdot \Id$.

Suppose that the joint moments of $S$ conditioned on $y$ are degree-$d$ polynomials in $y$.

Then for every $k$ and every $(\Delta r, \Delta s)$ satisfying $\Delta r + \Delta s = k$ where at least one of $\Delta r$ and $\Delta s$ is even, there is a rank-decomposing SoS proof that 
\[\norm{\E_y (S \cup \{\underbrace{y, \dots, y}_k\})^{(r+\Delta r,s+\Delta s)} }_1 \le 2^{d}k^{k/2+2}\nu^{k/2}n^{\lfloor k/2\rfloor/2}C \,.\]
\end{lemma}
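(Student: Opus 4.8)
The plan is to reduce the statement to the per-term cumulant estimates \Cref{lem:ydotsypq} and \Cref{lem:ydotsyp-vec}, fed by the rank decomposition supplied in the hypothesis. By \Cref{def:rank-decomposing}, the hypothesis furnishes an explicit decomposition $\E[S^{(r,s)}\mid y]=\sum_i p_i q_i^{\sT}$, which we reshape to the tensor $M(y)=\sum_i p_i\ot q_i$ with $p_i\in(\R^n)^{\ot r}$, $q_i\in(\R^n)^{\ot s}$ polynomials of degree at most $d$ in $y$, together with a sum-of-squares proof (pointwise in $y$) that $\sum_i\tfrac12(\norm{p_i}_2^2+\norm{q_i}_2^2)\le C$. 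The target tensor is $\E_y[y^{\ot k}\ot M(y)]=\sum_i\E_y[y^{\ot k}\ot p_i\ot q_i]$, and the job is to re-represent it as a matrix of shape $(r+\Delta r,s+\Delta s)$ admitting a rank-decomposing bound of the stated size.

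First I would expand $y^{\ot k}$ in the Wiener--Hermite polynomials of $\cN(0,\Sigma)$, whose coefficients are polynomials in $\Sigma$ by the recurrence in \Cref{sec:prelims-multivar-hermites} (hence sum-of-squares admissible): $y^{\ot k}=\sum_{P}(\Sigma)_P\ot\He_{k-2\abs{P}}(y)$, the sum ranging over partial matchings $P$ of the $k$ fresh modes, where $(\Sigma)_P$ is the tensor contracting the matched modes through $\Sigma$. Plugging in and using that $\He_{k'}(y)$ is mean-zero for $k'\ge1$, \Cref{lem:kappay1dotsykp}, \Cref{lem:kappay1dotsykpq}, and a Wiener--Hermite decomposition of $M$ yield, for $k'=k-2\abs{P}\ge1$,
\[\E_y\big[\He_{k'}(y)\ot M(y)\big]=\sum_i\Big(\kappa_{k'+2}(\underbrace{y,\dots,y}_{k'},p_i,q_i)+(\E_y p_i)\ot\kappa_{k'+1}(\underbrace{y,\dots,y}_{k'},q_i)+\kappa_{k'+1}(\underbrace{y,\dots,y}_{k'},p_i)\ot(\E_y q_i)\Big),\]
while the $k'=0$ contribution (present only when $k$ is even) is simply $\sum_i\E_y[p_i\ot q_i]$, with nuclear norm at most $\sum_i\tfrac12\E_y(\norm{p_i}_2^2+\norm{q_i}_2^2)\le C$ after folding to an $n^r\times n^s$ matrix.

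Next I would bound each piece and tensor it back with $(\Sigma)_P$. Each $\Sigma$ among the matched modes is reshaped to $\Phi$, the vector flattening of $\Sigma$, with $\norm{\Phi}_2^2=\norm{\Sigma}_F^2\le\nu^2 n$ provable from $\Sigma\preceq\nu\Id$. The terms $\kappa_{k'+2}(y,\dots,y,p_i,q_i)$ are handled by \Cref{lem:ydotsypq}, whose free parameter is chosen so that the split of its $k'$ free modes, combined with the routing of the $2\abs{P}$ matched (now $\Phi$) modes, lands on exactly $r+\Delta r$ row modes; the mean-cross terms $(\E_y p_i)\ot\kappa_{k'+1}(y,\dots,y,q_i)$ are handled by \Cref{lem:ydotsyp-vec} together with $\norm{\E_y p_i}_2^2\le\E_y\norm{p_i}_2^2$ (\Cref{prop:expectation-jensens}) and \Cref{lem:vector-folding-nuc-norm} to distribute the $\He$-modes, and symmetrically for the mirror terms. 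The key arithmetic, which works out with equality, is that $\abs{P}=m$ matched modes routed as $\Phi$'s contribute $\nu^m n^{m/2}$, which against the $\nu^{k'/2}n^{\lfloor k'/2\rfloor/2}$ from \Cref{lem:ydotsypq} gives $\nu^{k/2}n^{\lfloor k/2\rfloor/2}$ because $m+\lfloor(k-2m)/2\rfloor=\lfloor k/2\rfloor$; the degree-dependent Hermite overhead in \Cref{lem:ydotsypq} supplies the $2^d$. Summing over $i$ consumes the $\tfrac12(\norm{p_i}_2^2+\norm{q_i}_2^2)$ into $C$, and summing over the $O_k(1)$ matchings and sub-terms together with the $(k')^{k'/2+1}$, $m!$ prefactors is absorbed into $k^{k/2+2}$, giving the claimed bound via a rank-decomposing proof.

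The main obstacle is the routing-and-parity bookkeeping inside the previous paragraph: for each partial matching $P$ and each of the (up to three) sub-terms one must produce a legal value of the parameter of \Cref{lem:ydotsypq} --- which is restricted to $\{0,\dots,\lfloor(k'+\abs{s-r})/2\rfloor\}$ --- and a legal assignment of the matched modes to the two sides (each $\Phi$ supplies an even number of modes to a side), so that the resulting shape is exactly $(r+\Delta r,s+\Delta s)$ and no split $\Sigma$ factor (which would cost a surplus $\sqrt n$) is ever forced. I expect the hypothesis ``at least one of $\Delta r,\Delta s$ is even'' to be precisely what makes such a choice always available --- in particular it forces the $k'=0$ piece, where all $k$ modes are matched, to partition as $\Delta r/2$ plus $\Delta s/2$ factors of $\Phi$. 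A secondary subtlety is the mean-cross terms, where the folded cumulant $\kappa_{k'+1}(y,\dots,y,q_i)$ is only controlled in $2$-norm rather than by a reshapeable decomposition; the careful version should absorb the constant vector $\E_y p_i$ into the cumulant argument and rerun the argument of \Cref{lem:ydotsypq} with one block frozen, or else peel off the mean part of $M$ and reduce to a genuinely mean-zero rank decomposition on the remainder.
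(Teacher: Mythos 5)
Your proposal is correct and follows essentially the same route as the paper: reduce to a single rank-one term, expand the $k$ fresh copies of $y$ via Wick/cumulant combinatorics so that the self-paired modes become $\Phi$-factors of squared norm $\nu^2 n$ each, and feed the connected pieces into \Cref{lem:ydotsypq} and the mean-cross pieces into \Cref{lem:ydotsyp-vec} plus \Cref{lem:vector-folding-nuc-norm}, with the parity hypothesis on $(\Delta r,\Delta s)$ doing exactly the mode-routing work you identify. The only cosmetic difference is that you package the self-pairings as a Hermite expansion of $y^{\ot k}$ and let the term $\kappa_{k'+2}(y,\dots,y,p,q)=\kappa_3(\He_{k'}(y),p,q)$ silently absorb the paper's ``$p$ and $q$ in separate blocks, each with some $y$'s attached'' contributions, whereas the paper enumerates those separately via the moment-cumulant formula over the individual coordinates — but since \Cref{lem:ydotsypq} bounds precisely that combined object, the two bookkeepings coincide.
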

\begin{proof}    
    Without loss of generality, assume that there is only one term $M = pq^{\sT}$ with $p \in (\R[y]^n)^{\ot r}$ and $q \in (\R[y]^n)^{\ot  s}$ in the rank-decomposing SoS proof establishing the upper bound $\norm{\E [S^{(r,s)} \mid y]}_1 \le C$, so that $\frac{1}{2}(\norm{p}_2^2 + \norm{q}_2^2) \le C$.
    Then sublinearity of the nuclear norm is enough to extend the argument to an arbitrary number of terms.

    By the law of conditional expectation,
    \[ \E_y S \cup \{\underbrace{y, \dots, y}_k\} = \E_y\; (\E[S \mid y])\ot \{\underbrace{y, \dots, y}_k\} \,. \]

    Then by our assumption,
    \[ \E_y S \cup \{y, \dots, y\} = \E_y\; p \ot q\ot \{y, \dots, y\} \,. \]

    By the moment-cumulant formula,
    \[\E_y\; p \ot q\ot \{y, \dots, y\}
    \propto
    \sum_{\substack{\pi \\ \text{partition of $\{p,q,y,\dots, y\}$}}}
    \prod_{B \in \pi} \kappa_{|B|}(B)
    \,, \]
    where here $\propto$ means that they have equal mode symmetrizations.
    Since $\kappa(y,y) = \E_y (y \ot y) $ and the cumulant of any other number of $y$s by themselves is 0, for all the non-zero terms of that sum, every block of the partition that doesn't contain $p$ or $q$ must be exactly $\{y,y\}$.
    Therefore,
    \begin{align*}
    \E_y\; p \ot q\ot \{y, \dots, y\}
    \propto&
    \left[\sum_{\substack{a, b \\ k-a-b \in 2\N}} (\E_y y \ot y)^{\ot (k-a-b)/2} \ot 
    \kappa_{a+1}(\underbrace{y, \dots, y}_a, p) \ot \kappa_{b+1}(\underbrace{y, \dots, y}_b, q) \right]
    \\&{}+ \left[\sum_{\substack{a \\ k-a \in 2\N}} (\E_y y \ot y)^{\ot (k-a)/2} \ot 
    \kappa_{a+2}(\underbrace{y, \dots, y}_a, p, q) \right]
    \,. \end{align*}
    Now we just have to find re-representations of these terms that satisfy the nuclear norm requirement.
    We will do so one summand at a time.

    Let's start with the terms where $p$ and $q$ are in the same cumulant block.
    Let $\Delta r' = 2\lfloor\min(\Delta r, k-a)/2\rfloor$ and $\Delta s' = k-a-\Delta r'$, noting that $\Delta s'$ must be even since $k-a$ and $\Delta r'$ are both even and that $(\Delta r- \Delta r') + (\Delta s - \Delta s') = a$.
    By \Cref{lem:ydotsypq}, there is a representation of $\kappa_{a+2}(\underbrace{y, \dots, y}_a, p, q)$ as a matrix $\sum_j w_jz_j^{\sT}$ with $w_j \in (\R^n)^{\ot r} \ot (\R^n)^{\ot (\Delta r - \Delta r')}$ and $z_j \in (\R^{n})^{\ot s} \ot (\R^n)^{\ot (\Delta s - \Delta s')}$ and
    \[ \sum_j \E_y \frac{1}{2}\left(\norm{w_j}_2^2 + \norm{z_j}_2^2\right) \le (2^d-1)a^{a/2+1}\nu^{a/2}n^{\lfloor a/2\rfloor/2}C \,.\]
    Then our matrix representation for $(\E_y y \ot y)^{\ot (k-a)/2} \ot 
    \kappa_{a+2}(\underbrace{y, \dots, y}_a, p, q)$ is, in rank-decomposing form, 
    \[
    \sum_j \left(\norm{\E_y y \ot y}_2^{(\Delta s' - \Delta r')/4}\left(\E_y y \ot y\right)^{\ot \Delta r'/2} \ot w_j\right)\left(\norm{\E_y y \ot y}_2^{(\Delta r' - \Delta s')/4}\left(\E_y y \ot y\right)^{\ot \Delta s'/2} \ot z_j\right)^{\sT}
    \]
    \[ := \sum_j p'_jq'_j{}^{\sT}\,.\]
    Then we have, using the fact that $\norm{\E_y y\ot y}_2^2 \le \nu^2 n$,
    \begin{align*} \sum_j \E_y\frac{1}{2}\left(\norm{p'_j}_2^2 + \norm{q'_j}_2^2\right) &\le (\nu^2n)^{(\Delta r' + \Delta s')/4}(2^d-1)a^{a/2+1}\nu^{a/2}n^{\lfloor a/2\rfloor/2}C 
    \\& = (\nu^2 n)^{(k-a)/4}(2^d-1)a^{a/2+1}\,\nu^{a/2}n^{\lfloor a/2\rfloor/2}C 
    \\& \le (2^d-1)k^{k/2+1}\,\nu^{k/2}n^{\lfloor k/2\rfloor/2}C 
    \,.\end{align*}
Since there are at most $k$ different values of $a$, in total the terms with $p$ and $q$ in the same cumulant block make a nuclear norm contribution of $(2^d-1)k^{k/2+2}\,\nu^{k/2}n^{\lfloor k/2\rfloor/2}C$.

Now consider the terms where $p$ and $q$ are in different cumulants.
Without loss of generality, assume $\Delta r-a \ge \Delta s - b$ (otherwise we just swap the roles of $(r,a)$ and $(s,b)$ in this argument), noting that this implies $a \le \Delta r$ since $a+b \le k = \Delta r + \Delta s$.
Let $\Delta r' = 2\lfloor\min(\Delta r-a, k-a-b)/2\rfloor$ and $\Delta s' = k-a-b-\Delta r'$, noting that $\Delta s'$ must be even since $k-a-b$ and $\Delta r'$ are both even and that $\Delta s' \le \Delta s - b$ since $\Delta s' = k-a-b-\Delta r' \le k-b-\Delta r = \Delta s - b$.
By \Cref{lem:ydotsyp-vec}, there is a representation of $\kappa_{a+1}(\underbrace{y, \dots, y}_a, p)$ as a vector $v \in (\R^n)^{\ot (r + a)}$ and by a combination of \Cref{lem:ydotsyp-vec} and \Cref{lem:vector-folding-nuc-norm}, there is a representation of $\kappa_{b+1}(\underbrace{y, \dots, y}_b, q)$ as a matrix $\sum_j w_jz_j^{\sT}$ such that $w_j \in (\R^n)^{\ot (b - \Delta s + \Delta s')}$ and $z_j \in (\R^n)^{\ot s} \ot (\R^n)^{\ot (\Delta s - \Delta s')}$ and
    \[ \norm{v}_2^2 \le a! \nu^{a}\E_y\norm{p}_2^2\,. \]
    \[ \sum_j \frac{1}{2}(\norm{w_j}_2^2 + \norm{z_j}_2^2) \le \sqrt{b!}\,\nu^{b/2}n^{(b - \Delta s + \Delta s')/2}\sqrt{2C}, \,\]

Then our matrix representation for $(\E_y y \ot y)^{\ot (k-a-b)/2} \ot 
    \kappa_{a+1}(\underbrace{y, \dots, y}_a, p) \ot \kappa_{b+1}(\underbrace{y, \dots, y}_b, q)$ is, in rank-decomposing form, $\sum_j p'_jq'_j{}^{\sT}$ with
    \begin{align*}
    p'_j &:= \frac{1}{\sqrt[4]{a!\nu^{a}2C}}(\nu^2 n)^{(\Delta s' - \Delta r')/8}\left(\E_y y \ot y\right)^{\ot \Delta r'/2} \ot w_j \ot v
    \\
    q'_j &:= \sqrt[4]{a!\nu^{a}2C}(\nu^2 n)^{(\Delta r' - \Delta s')/8}\left(\E_y y \ot y\right)^{\ot \Delta s'/2} \ot z_j
    \,.\end{align*}
    Then we have, using the fact that $\norm{\E_y y\ot y}_2^2 \le \nu^2 n$,
    \begin{align*}
    \sum_j \frac{1}{2}\left(\norm{p'_j}_2^2 + \norm{q'_j}_2^2\right)
    &\le \sum_j\sqrt{a!}\,\nu^{a/2}(\nu^2 n)^{(\Delta r' + \Delta s')/4}\frac{1}{2}\left(\norm{w_j}_2^2\sqrt{(2C)^{-1}}\E_y\norm{p}_2^2 + \norm{z_j}_2^2{\sqrt{2C}}\right) 
    \\&\le \sum_j\sqrt{a!}\,\nu^{a/2}\sqrt{2C}(\nu^2 n)^{(\Delta r' + \Delta s')/4}\frac{1}{2}\left(\norm{w_j}_2^2 + \norm{z_j}_2^2\right) 
    \\&= \sqrt{a!}\,\nu^{a/2}\sqrt{2C}(\nu^2 n)^{(k-a-b)/4}\sum_j\frac{1}{2}\left(\norm{w_j}_2^2 + \norm{z_j}_2^2\right) 
    \\&\le \sqrt{a!}\,\nu^{a/2}\sqrt{2C}(\nu^2 n)^{(k-a-b)/4}\cdot \sqrt{b!}\,\nu^{b/2}n^{(b - \Delta s + \Delta s')/2}\sqrt{2C}
    \\& = 2\sqrt{a!b!}\,\nu^{k/2}n^{(k-a+b-2\Delta s + 2\Delta s')/4}C 
    \\& \le 2k^{k/2}\nu^{k/2}n^{(k-a+b-2\Delta s + 2\Delta s-2b)/4}C
    \\&= 2k^{k/2}\nu^{k/2}n^{(k-a-b)/4}C
    \\&\le 2k^{k/2}\nu^{k/2}n^{2\lfloor k/2\rfloor/4}C
    \,,\end{align*}
    where the last inequality is because $k-a-b$ is an even integer that is at most $k$.
Since there are at most $k^2/2$ different values of $a$ and $b$, in total the terms with $p$ and $q$ in the separate cumulant block make a nuclear norm contribution of $k^{k/2+2}\nu^{k/2}n^{\lfloor k/2\rfloor/2}C$.
    
Therefore, by adding together the nuclear norm contributions of the two types of terms, we have a matrix representation for 
$\E (S \cup \{\underbrace{y, \dots, y}_k\})^{(r+\Delta r,s+\Delta s)}$ with a rank-decomposing nuclear norm bound of $2^{d}k^{k/2+2}\nu^{k/2}n^{\lfloor k/2\rfloor/2}C$.
\end{proof}

\subsection{Upper bound for \texorpdfstring{$\norm{\E[v_iv_i^\sT\,\mid\,v_{i-1}]}^q_q$}{conditional pseudo-covariance} under high-entropy}

We show that the operator norm HES constraint in conjunction with the fact that $\Tr\left[\E\left[v_iv_i^\sT\,|\,\mid v_{i-1}\right]\right] = 1$ implies that for $q > 1$, the Schatten $q$-norm of the conditional covariance is bounded appropriately.

\begin{proposition}[Bounded Schatten $q$-norm of conditional covariance]\label[Proposition]{prop:q-norm-cond-cov}
    Let $V$ be a symmetric matrix-valued variable.
    Then,
    \[ \left\{\vphantom{\frac{1}{1}}\Tr V = 1\,,\;0\preceq V \preceq c\,\Id\right\} \proves\left\{\vphantom{\frac{1}{1}}\norm{V}_q^q \le c^{q-1}\right\}\,. \]
\end{proposition}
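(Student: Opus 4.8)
The plan is to reduce the matrix statement to a scalar statement about the eigenvalues of $V$ and then give an SoS proof of the resulting scalar inequality. By the matrix SoS formalism of \Cref{sec:matrix-sos-system}, it suffices to exhibit a matrix SoS proof; but since the quantities $\Tr V$, $\norm{V}_q^q = \Tr V^q$, and the constraints $0 \preceq V$, $V \preceq c\,\Id$ are all \emph{spectral} (invariant under conjugation by orthogonal matrices and expressible as sums of powers of eigenvalues), the cleanest route is to work eigenvalue-wise. Concretely, let $\lambda_1, \dots, \lambda_n \ge 0$ be the eigenvalues of $V$; the hypotheses say $\sum_i \lambda_i = 1$ and $0 \le \lambda_i \le c$ for each $i$, and the goal is $\sum_i \lambda_i^q \le c^{q-1}$. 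The key scalar fact is that for $q > 1$ and $0 \le \lambda \le c$ we have the pointwise bound $\lambda^q \le c^{q-1}\lambda$, since $\lambda^{q} = \lambda \cdot \lambda^{q-1} \le \lambda \cdot c^{q-1}$ as $t \mapsto t^{q-1}$ is increasing on $[0,c]$ for $q > 1$. Summing over $i$ gives $\sum_i \lambda_i^q \le c^{q-1}\sum_i \lambda_i = c^{q-1}$, which is exactly the claim.

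The steps, in order, are: (i) Record the pointwise scalar inequality $\{0 \le \lambda \le c\} \proves \{\lambda^q \le c^{q-1}\lambda\}$; when $q$ is a positive integer this is literally $\lambda^q = \lambda \cdot \prod_{j=1}^{q-1}\lambda \le \lambda \cdot \prod_{j=1}^{q-1} c$, obtained by multiplying the axiom $\{c - \lambda \ge 0\}$ by the square terms $\lambda, \lambda^2, \dots$ appropriately — i.e. $c^{q-1}\lambda - \lambda^q = \lambda\sum_{j=0}^{q-2} c^{j}\lambda^{q-2-j}(c-\lambda)$, a manifestly valid degree-$q$ SoS derivation from $\{c - \lambda \ge 0\}$ and $\{\lambda \ge 0\}$. (ii) Promote this to a matrix statement: from $\{V \preceq c\,\Id\}$ and $\{V \succeq 0\}$ one gets $\{c^{q-1}V - V^q \succeq 0\}$ by the same algebraic identity $c^{q-1}V - V^q = \sum_{j=0}^{q-2} V^{1/2}\bigl(c^j V^{q-2-j}(c\,\Id - V)\bigr)V^{1/2}$ — more carefully, since $V$, $c\,\Id - V$ and their powers all commute, $c^{q-1}V - V^q$ equals $\bigl(\sum_{j=0}^{q-2} c^j V^{q-2-j}\bigr)(c\,\Id - V)\cdot V$, a product of commuting PSD matrices, hence PSD, and this manipulation is exactly of the form admitted by the matrix SoS proof system (multiplication of a PSD matrix polynomial by nonnegative-combination terms, with the tensor-product / conjugation rules of \Cref{sec:matrix-sos-system}). (iii) Take the trace of $\{c^{q-1}V - V^q \succeq 0\}$, which is a valid operation ($\Tr$ of a PSD matrix is nonnegative, and $\Tr$ is linear), to obtain $c^{q-1}\Tr V - \Tr V^q \ge 0$; then substitute the axiom $\{\Tr V = 1\}$ to conclude $\{\norm{V}_q^q = \Tr V^q \le c^{q-1}\}$.

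The only mild subtlety — and the step I would be most careful about — is handling general $q$ versus integer $q$: for the intended application $q$ is a dyadic rational or at least a rational $> 1$ (it is conjugate to the Hölder exponent used when bounding the Hessian contribution), so $V^q$ is not literally a polynomial and one cannot naively invoke the polynomial SoS machinery. The fix is to observe that the statement is only ever invoked through integer powers after clearing denominators, or alternatively to route it through the commutative inequalities established elsewhere in the paper (the matrix Hölder inequalities with dyadic norms, \Cref{lem:matrix-p-q-holders}, and strong convexity of $x \mapsto x^q$ for $q \in (1,2]$ mentioned in the contributions): indeed $\norm{V}_q^q \le \norm{V}_\infty^{q-1}\norm{V}_1$ is precisely a $(\infty,1)$-type Hölder bound on Schatten norms, and with $\norm{V}_\infty \le c$ and $\norm{V}_1 = \Tr V = 1$ (using $V \succeq 0$) this is the claim. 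So the plan is: prove it for integer $q$ by the explicit SoS identity above, and for dyadic/rational $q$ appeal to the dyadic matrix Hölder inequality already in hand, noting in both cases that all constants $a = 0$, $b = c$ are fixed independent of $n$ so the Bernstein/SoS-approximation caveats of \Cref{thm:bernstein-approx} apply cleanly. I expect the write-up to be short; the main obstacle is purely bookkeeping about which norm-power regime one is in, not any genuine mathematical difficulty.
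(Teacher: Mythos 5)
Your proposal is correct, and the core inequality is the same one the paper uses: $\Tr V^q = \langle V^{q-1}, V\rangle \le \opnorm{V}^{q-1}\,\norm{V}_1 = c^{q-1}\cdot 1$. The paper's proof simply invokes its $(1,\infty)$-matrix H\"older lemma (\Cref{lem:holders-one-inf}) together with $\norm{V}_1 = \Tr V = 1$ from positivity, and then handles the fractional exponent $q-1$ by the radical substitution $X = V^{1/(p-1)}$ exactly as you anticipate in your final paragraph. Your primary route --- the explicit telescoping factorization $c^{q-1}V - V^q = \bigl(\sum_{j=0}^{q-2} c^{j}V^{q-2-j}\bigr)(c\,\Id - V)V \succeq 0$ followed by taking the trace --- is a slightly more self-contained way of certifying the same matrix inequality $V^q \preceq c^{q-1}V$ without citing the H\"older lemma, and you correctly identify that the only real subtlety is the non-integer exponent, which you resolve the same way the paper does. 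The one point to be careful about in a full write-up is that ``product of commuting PSD matrices is PSD'' is not by itself a matrix SoS derivation; you need to exhibit the conjugated form (e.g.\ via $V^{1/2}(\cdots)V^{1/2}$ or the slack-matrix encoding of \Cref{sec:matrix-sos-system}), which is exactly where the radical bookkeeping enters --- but you flag this, so there is no gap.
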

\begin{proof}
    This follows by an application of a $(1,\infty)$-Matrix H\"older's inequality~(\Cref{lem:holders-one-inf}) as follows,
    \begin{align*}
        \norm{V}^q_q &= \Tr\left[V^q\right] = \left\langle V^{q-1}, V \right\rangle \\
        &\le \opnorm{V^{q-1}}\norm{V}_1 \\
        &= \opnorm{V}^{q-1}\norm{V}_1 \\
        &\le c^{q-1}\cdot 1 = c^{q-1}\, .
    \end{align*}
    \textbf{Note:} While $q-1$ is \emph{fractional}, one can use the standard trick of converting to integer powers in conjunction with the encoding of the operator norm constraint via a slack-matrix to make the assertion that
    \[
        \opnorm{V^{q-1}} = \opnorm{V}^{q-1}\, ,
    \]
    has a sum-of-squares proof using with the radical $X = V^{1/(p-1)}$ and the fact that $q-1 = \frac{1}{p-1}$~(\Cref{sec:appendix-sdp}, Radicals of the polynomial grammar).
\end{proof}

\section{Low-Degree Polynomial Approximation of the Top Eigenspace of the Hessian}\label{sec:hessian-taylor}
In this section we analyze the spectrum of the Hessian of the spherical spin-glass Hamiltonian on points that are generated by HES processes. In particular, we provide low-degree sum-of-squares certificates that certify the value of the low-degree moments of the empirical distribution of the eigenvalues of the Hessian up to lower order corrections (\Cref{lem:hessian-schatten-norm}). We then use these sum-of-squares certificates about the low-degree spectral moments of this Hessian along with the bounds provided in \Cref{sec:hods} to certify the amount the value of the Hamiltonian can increase in a single step under \emph{any} HES process (\Cref{lem:taylor}).

Having certified the maximum energy increase of the Hamiltonian under any HES process, we also provide a specific low-degree matrix polynomial which arbitrarily approximates a projector into the top part of the bulk spectrum of the Hessian (\Cref{cor:large-hs-correlation-bernstein-hessian}). This low-degree polynomial also happens to be positive semi-definite with bounded operator norm, and is used extensively in the feasibility argument provided in \Cref{sec:feasibility-main}.

A crucial part of the argumentation that certifies how much the energy can increase under any HES process is estimating the rate at which the low-degree spectral moments of the Hessian converge to the desired value (which is the operator norm). Using some involved Catalan combinatorics in conjunction with the basic analytic properties of the Lambert-$W$ function allows for a fairly precise estimate as a function of $\eps$ (\Cref{cor:shifted-hessian-holders}). This rate of convergence governs the degree of the underlying sum-of-squares certificates over the HES SoS hierarchy that are used in the proofs in this section. 

\subsection{Certifiable approximate Wignerianity of the Hessian}\label{sec:wigner-cert-hessian}

    Subag \cite{subag2018free} showed that the eigenspectrum of the Hessian converges exponentially quickly to the semicircle law, with a scaling factor of $\sqrt{\nu''(\norm{x}_2^2)}$.

    Here we give a SoS-certifiable approximate version of the same statement and argument. To make this statement fit in the low-degree sums-of-squares framework, we relax the requirement somewhat so that we certify only the \emph{low-degree moments} of the semicircle law by bounding the Schatten norms of the Hessian.
    Although we do not do this, the bounds are strong enough to be interpreted as equivalent to conditional pointwise bounds \[
    \left\{\,|\,\|\frac{1}{n}\nabla^2H(\sigma)\|_{2q}^{2q} - nC_q(\nu''(\norm{\sigma}_2^2))^q| \le Cn^{0.5+2qd_H\lambda}\right\}\] for any large enough $C$ as long as the downstream proofs using it do not exceed a certain degree limit depending on $C$ and the degree of the pseudo-distribution. 

    This involves a careful combinatorial calculation to compute the expectation of $\norm{\frac{1}{n}\nabla^2 H(\sigma)}_{2q}^{2q}$ as a polynomial in $\sigma$ over the randomness of the instance.
    This turns out to be provably close to a function of $\norm{\sigma}_2^2$, so in effect, the expectation does not depend on $\sigma$ when $\sigma$ is known to fall on a sphere.
    The calculation counts the number of non-backtracking walks that come out of the trace power calculation for $\norm{\frac{1}{n}\nabla^2 H(\sigma)}_{2q}^{2q}$.
    
    Due to heavy concentration of the trace powers of a random matrix~\cite{forrester2023review,collins2007second}, there is a lot of slack between the bound we need and the bound that's actually true for the fluctuations of $\norm{\frac{1}{n}\nabla^2 H(\sigma)}_{2q}^{2q}$ around its expectation.
    Graph matrix norm bounds~\cite{ahn2016graph} together with \Cref{thm:high-entropy-nuclear-norm-bound-graph-matrices} suffice to certifiably control these fluctuations with subexponential probability.

\begin{lemma}[Certifiable Approximate Wignerianity]
\label[Lemma]{lem:hessian-schatten-norm}
    Let $C_q$ refer to the $q$th Catalan number. Then for each $p$ and $q$ positive integer, for every $\lambda>0$, with probability $1-O(\exp(-n^{\lambda}))$ over the randomness of the instance, there are SoS proofs whose degree depends only on $p$ and $q$ and $d_H$ demonstrating that for all high-entropy step distributions over $\sigma$ satisfying the constraint $\norm{\sigma}^2 = \zeta$ for some $\zeta$ with stepwise covariances bounded by $c$,
    \[
    \E_{\sigma} \left(\norm{\tfrac{1}{n}\nabla^2 H(\sigma)}_{2q}^{2q} - nC_q(\nu''(\norm{\sigma}_2^2))^{q}\right)^{2p}
    \le O(n^{2p(0.5 + 2qd_H\lambda)})
    \,,\]    \[
    \E_{\sigma} \left(\Tr (\tfrac{1}{n}\nabla^2 H(\sigma))^{2q-1}\right)^{2p}
    \le O(n^{2p(0.5 + 2qd_H\lambda)})
    \,.\]
\end{lemma}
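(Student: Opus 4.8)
The plan is to prove \Cref{lem:hessian-schatten-norm} by separating each trace power into its ``signal'' (the expectation over the instance randomness $g$) and its ``noise'' (the fluctuation around that expectation), then bounding the two contributions by completely different tools: a combinatorial computation for the signal, and a graph-matrix-norm argument combined with \Cref{thm:high-entropy-nuclear-norm-bound-graph-matrices} for the noise. We first write $\frac{1}{n}\nabla^2 H(\sigma)$ explicitly: since $H_n(\sigma) = \sum_{k} \gamma_k \sqrt{n}\sum_{i_1,\dots,i_k} g_{i_1,\dots,i_k}\sigma_{i_1}\cdots\sigma_{i_k}$, the $(a,b)$ entry of $\frac{1}{n}\nabla^2 H(\sigma)$ is a polynomial in $\sigma$ of degree at most $d_H-2$ whose coefficients are linear in the Gaussians $g$. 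Expanding $\Tr(\frac{1}{n}\nabla^2 H(\sigma))^{p}$ gives a sum over closed length-$p$ walks on $[n]$, where the contribution of each walk is a product of $p$ entries of the Hessian, hence a monomial in the $g$'s times a monomial in $\sigma$. Taking $\E_g$ kills all terms where some Gaussian appears an odd number of times, and what survives is (up to $o_n(1)$ corrections coming from index-coincidence ``lower order'' terms) exactly the Dyck-shaped walks of \Cref{sec:dyck}; by \Cref{fact:dyck-catalan} there are $C_{p/2}$ Dyck shapes with $n(n-1)^{p/2}$ instantiations each, and the remaining $\sigma$-dependence factors through powers of $\norm{\sigma}_2^2$ (which is pinned to $\zeta$ by the constraint), leaving $nC_q(\nu''(\norm{\sigma}_2^2))^q$ for $p=2q$ and a quantity of lower order (an odd number of Hessian factors forces a non-Dyck, i.e.\ ``defect'', walk) for $p = 2q-1$. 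All of this index-counting is a polynomial identity in $\sigma$ with $o_n(1)$-small error, so it has a trivially low-degree SoS proof; one needs to be a little careful to phrase the corrections using the $\norm{\sigma}_2^2 = \zeta$ axiom and the cumulant/orthogonality constraints so that cross terms vanish.

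Second, I would control the noise term $\Tr(\frac1n\nabla^2 H(\sigma))^p - \E_g \Tr(\frac1n\nabla^2 H(\sigma))^p$. This is a polynomial in $\sigma$ whose coefficients are (centered) polynomials in the i.i.d.\ Gaussians $g$, so its matrix representation is naturally a sum of generalized graph matrices in the sense of \Cref{sec:prelims-graph-matrices} (the Hermite/Fourier basis for $\chi(e)$ in \Cref{def:graph-matrix} being the rescaled Hermites of the $g$'s). Applying the graph-matrix norm bound \Cref{thm:graph-matrix-norm-bound} gives, with probability $1-O(\exp(-n^\lambda))$, a bound of the form $\opnorm{\mat(p)_{\text{noise}}} \le O(n^{w(V(\alpha)) - w(S_{\min}) + \text{(log factors)}})$ on the spectral norm of the coefficient matrix of each noise shape; the key point (exactly as in Subag's non-backtracking analysis and in the heavy-concentration phenomenon cited after the lemma) is that any walk that is \emph{not} Dyck-shaped has a strictly smaller power of $n$, so the separator weight wins and each noise shape is $o_n(n^{1/2})$ smaller than the signal --- more precisely $\opnorm{\mat(p)_{\text{noise}}} \le O(n^{2qd_H\lambda})$ after absorbing the at-most-$n^\lambda$-type log losses. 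Then a $(1,\infty)$-H\"older's inequality $\E[p_{\text{noise}}] \le \opnorm{\mat(p)_{\text{noise}}}\,\norm{V(\sigma)}_1$, with $\norm{V(\sigma)}_1$ controlled by \Cref{thm:high-entropy-nuclear-norm-bound-graph-matrices} (giving the $n^{\lfloor \deg/2+1\rfloor/2}$ factor that cancels against $n^{-\deg/2}$ from the $1/n$ normalizations and the spectral-norm deficit), bounds $\E_\sigma$ of the noise and, more importantly, its $2p$-th moment, since raising to the $2p$-th power just multiplies shapes and stays in the graph-matrix framework. The deviation of $\Tr(\cdots)^{2q-1}$ around its expectation is handled identically, and its expectation is itself of lower order as noted above, so the whole thing is $O(n^{2p(0.5+2qd_H\lambda)})$.

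Third, I would assemble these: write $\big(\norm{\frac1n\nabla^2H(\sigma)}_{2q}^{2q} - nC_q(\nu''(\norm\sigma_2^2))^q\big) = (\text{exact-signal error}) + (\text{noise})$, bound the $2p$-th power by $2^{2p}$ times the sum of the $2p$-th powers of the two pieces (an elementary SoS inequality, e.g.\ \Cref{lem:l1-to-lt}), bound the signal-error piece by the $o_n(1)$ combinatorial calculation and the noise piece by the graph-matrix estimate, and take expectations. Since all of the steps through the lemma statement are either polynomial identities with small error or graph-matrix norm bounds, and since \Cref{thm:high-entropy-nuclear-norm-bound-graph-matrices} and \Cref{thm:graph-matrix-norm-bound} supply SoS-admissible bounds, the resulting proof has degree depending only on $p$, $q$, $d_H$. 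Finally the high-probability statement follows from a union bound over the $O_{p,q,d_H}(1)$ many shapes that appear, each failing with probability $O(\exp(-n^\lambda))$.

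The main obstacle I anticipate is the noise bound: one must organize the (many) graph-matrix shapes that arise from $\Tr(\cdot)^{2pq}$ after centering so that \emph{every} shape's minimum-weight vertex separator beats the Dyck signal by the needed margin of essentially $n^{1/2}$ --- that is, proving that a single ``defect'' in a non-backtracking walk costs a full factor of $\sqrt n$ in the spectral norm, uniformly over all the combinatorial types, and then checking that the interplay between $\opnorm{\mat(p)_{\text{noise}}}$ (via \Cref{thm:graph-matrix-norm-bound}) and $\norm{V(\sigma)}_1$ (via \Cref{thm:high-entropy-nuclear-norm-bound-graph-matrices}) in the H\"older step exactly cancels the powers of $n$ coming from the $1/n$ normalizations. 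This is the place where Subag's argument used Borell-TIS and Dudley chaining, which are not available to us; replacing them with the graph-matrix trace method (and being careful that the concentration is only subexponential, i.e.\ $1-O(\exp(-n^\lambda))$, which is why the error exponent carries the $2qd_H\lambda$ term) is the technical heart of the proof.
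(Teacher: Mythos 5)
Your proposal follows essentially the same route as the paper's proof: the same signal/noise decomposition, the same Dyck-walk/Catalan computation for the expected trace powers (with the odd powers being pure noise), and the same combination of graph-matrix spectral norm bounds with the nuclear-norm bound of \Cref{thm:high-entropy-nuclear-norm-bound-graph-matrices} via a $(1,\infty)$-H\"older step, closed off by a union bound over the $O_{p,q,d_H}(1)$ shapes with $\eps' = e^{-n^{\lambda}}$. You have also correctly identified the technical heart of the argument, namely the vertex-separator counting showing that every non-Dyck ``defect'' shape loses the required power of $n$ uniformly over combinatorial types.
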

\begin{proof}
In this argument, the big-$O$ notation $O(\cdot)$ will refer to asymptotics with respect to $n$, with $q$, $p$, $d_H$, and $\nu$ treated as constants.

    We begin by writing the expectation (over the randomness of the instance) of the $2q$-th Schatten norm of $\nabla^2 H(\sigma)$ as a polynomial in $\sigma$, and show that this polynomial is close to $nC_q(\nu''(\norm{\sigma}_2^2))^{q}$.
    Later we will write the $2p$th moment of $2q$th Schatten norm of $\tfrac{1}{n}\nabla^2 H(\sigma)$ as a polynomial in both $\sigma$ and the gaussian tensors $g$ (as in \Cref{def:spin-glass}), to show concentration (in both $\sigma$ and $g$) around the expectation.

    For convenience, we define
    \[T_{\alpha} = \frac{\gamma_{|\alpha|}}{\alpha!}\sum_{\pi \in S_{|\alpha|}}g_{\alpha_{\pi(1)},\dots,\alpha_{\pi(|\alpha|)}}\,,
    \]
    where in an abuse of notation, $\alpha$ is used as both a multi-index and an (arbitrarily) ordered tuple of its constituent indices.

    Then
    \[\tfrac{1}{n}\nabla^2 H(\sigma)_{i,j} = \sum_{\substack{|\alpha| \le d_H-2}}\frac{1}{\sqrt{n}}T_{\alpha \cup \{i,j\}}\sigma^{\alpha}\,,\]
    where, again in another abuse of notation, we treat $\alpha$ as both a multi-index and a multi-set.

    The trace of the $2q$-th power of the Hessian can now be written as, letting $\cC_{2q}$ be the set of closed walks of length $2q$ over $[n]$, where $w(i)$ is the $i$th vertex in the walk when $w \in \cC_{2q}$ and we take $w(0) := w(2q)$, as defined in \Cref{sec:dyck},
    \begingroup
    \allowdisplaybreaks
    \begin{align}\label{eq:schatten-norm-hessian-poly}
        \Tr[(\tfrac{1}{n}\nabla^2 H(\sigma))^{2q}] &= \sum_{w \in \cC_{2q}}\prod_{i=1}^{2q}\left(\tfrac{1}{n}\nabla^2 H(\sigma)\right)_{w(i-1), w(i)} \\
        &= \frac{1}{n^{q}}\sum_{w \in \cC_{2q}} \prod_{i=1}^{2q}\sum_{\substack{|\alpha| \le d_H-2}}T_{\alpha \cup \{w(i-1),w(i)\}}\sigma^{\alpha} \\
        &= \frac{1}{n^{q}}\sum_{w \in \cC_{2q}} \sum_{\substack{\;\alpha_1, \dots, \alpha_{2q} \\ |\alpha_i| \le d_H-2 \;}}\prod_{i=1}^{2q} T_{\alpha_i\cup\{w(i-1),w(i)\}}\sigma^{\alpha_i}\; \label{eq:schatten-norm-hessian-poly-3} \\
        &:= \mathsf{signal} + \mathsf{noise}\,,
    \end{align}
    \endgroup
    where $\mathsf{signal}$ is the part of that polynomial that is simultaneously even in $\sigma$ and degree-$0$ when expressed in the Hermite basis (equivalently, having non-zero expected value) as a polynomial in $g$, and $\mathsf{noise}$ is everything else.

    \paragraph{Signal term}
    First, let us define, using notation from \Cref{sec:dyck} to represent Dyck-shaped walks,
    \begingroup
    \allowdisplaybreaks
    \begin{align*}
        \mathsf{dyck} &:=
        \frac{1}{n^{q}}\sum_{\tau \in \mathcal{T}_{q}}\sum_{\omega \in \Omega_{\tau}}\sum_{\substack{\alpha_1, \dots, \alpha_{q} \\ |\alpha_i| \le d_H-2 \\ \alpha_1 \ne \dots \ne \alpha_q }}\prod_{i=1}^{q} T_{\alpha_i\cup\{W(\tau,\omega)_{i-1},W(\tau,\omega)_{i}\}}^2\sigma^{2\alpha_i}
        \,,
    \end{align*}
    \endgroup
    where the notation $\alpha_1 \ne \dots \ne \alpha_q$ is used to indicate that none of the $\alpha_i$ are equal to each other.

    Let $\mathsf{supersignal}$ denote the degree-0 portion of $\mathsf{dyck}$ when expressed in the basis in $g$.
    We will show that
    \[\mathsf{signal} = (1 + O(1/n))\mathsf{supersignal} = (1 \pm O(1/n))C_q(\nu''(\norm{\sigma}_2^2))^{q}\,.\]

    We start by showing that
    \[\mathsf{signal} = (1 + O(1/n))\mathsf{supersignal}\,.\]
    First, note that by linearity of the Hermite decomposition (in this case equivalent to linearity of expectation), each of the summands has its own degree-0 term, additive with each other, all of which are non-negative and within a $O(1)$ factor of each other since the magnitude of the degree-0 part of each summand is determined only by $q$, $\nu$, and the cardinalities of the indices in $\alpha_1, \dots, \alpha_q$ and $\omega$ or $w$.
    And the summands in $\mathsf{dyck}$ are a subset of those that contribute to $\mathsf{signal}$, specifically the subset where the walk $w$ is Dyck-shaped and $\alpha_i = \alpha_j$ whenever the $j$ is the falling step corresponding to $i$ which is a rising step in the Dyck-shaped walk and all $\alpha$'s are otherwise distinct.

    Therefore it suffices to show that the number of summands in $\mathsf{dyck}$ corresponding to a certain even monomial $\sigma^{2\beta}$ is within a factor of $(1 - O(1/n))$ of the total number of summands that contribute nontrivially to the same monomial in $\mathsf{signal}$.

    First, we argue that most of the nonzero terms in the expectation are where $\{\alpha_1, \dots, \alpha_{2q}\}$ is a multiset where all elements are even cardinality.
    Since we fixed $2\beta = \alpha_1 + \dots + \alpha_{2q}$, there is no factor of $n$ in the combinatorics of how that $2\beta$ is divided into the terms $\alpha_1 + \dots + \alpha_{2q}$, so any factor of $n$ difference will come from the choice of the walk $w$.
    But the condition that the different factors of $T_{\alpha_i\cup\{w(i-1),w(i)\}}$ must multiply together into an even polynomial is a strong one. Consider any specific pairing of values of $i$ such that paired up values have the same index in $T$: if $i$ and $j$ are paired up so that $\alpha_i\cup\{w(i-1),w(i)\} = \alpha_j\cup\{w(j-1),w(j)\}$ while $\alpha_i \ne \alpha_j$, then that implies $(\{w(i-1),w(i)\} \triangle \{w(j-1),w(j)\}) \triangle (\alpha_i \triangle \alpha_j) = \emptyset$, which fixes at last two of the choices of $w(i-1)$, $w(i)$, $w(j-1)$, and $w(j)$. Meanwhile the equality condition means that if two of those choices remain unfixed, then they must be equal to each other.
    In the end, this is at least one factor of $n$ fewer choices than if $\alpha_i = \alpha_j$ instead.

    So now it suffices to show that, for a specific choice of $\alpha_1', \dots, \alpha_q'$ such that $\beta = \alpha_1' + \dots + \alpha_q'$, it is the case that most of the terms where $\{\alpha_1, \dots, \alpha_{2q}\} = 2\{\alpha_1', \dots, \alpha_{q}'\}$ as a multiset equality are in the situation where $w$ is a Dyck-shaped walk whose pairs of rising and falling steps correspond to the pairs of $\{\alpha_1, \dots, \alpha_{2q}\}$.

    To see this, note that for a walk to be closed and traverse each edge an even number of times, it must necessarily have at most $q+1$ distinct vertices since it has at most $q$ distinct edges.
    But in order for it to not be a Dyck-shaped walk, it must repeat at least one edge without retracing its steps to in order to get back to that edge, meaning that the graph of traversed edges contains a non-trivial cycle, meaning that there are at most $q$ distinct vertices in the walk.

    Therefore there are $O(1/n)$ as many non-Dyck-shaped closed walks as there are Dyck-shaped ones, since enumerating the Dyck-shaped walks involves making $q+1$ arbitrary choices of vertices from $[n]$ instead of $q$ choices.

    And we cannot have mismatches in the pairings between $\alpha_i$s and the pairings between rising and falling steps in the Dyck-shaped walk without introducing additional constraints where at least two of the vertices of the walk are forced to be equal to some of the indices in the $\alpha_i$s.

    Finally we show that
    \[\mathsf{supersignal} = (1 \pm O(1/n))C_q(\nu''(\norm{\sigma}_2^2))^{q}\,.\]

    Evaluating the Hermite basis with respect to $g$ (equivalently, the Gaussian moments),
    \begingroup
    \allowdisplaybreaks
    \begin{align*}
        \E_g \mathsf{supersignal} &= \frac{1}{n^{q}}\sum_{\tau}\sum_{\omega}\sum_{\;\alpha_1, \dots, \alpha_{q} \;}\zeta_{\alpha_1, \dots,\alpha_q,\omega}\prod_{i=1}^{q}\sigma^{2\alpha_i}\;\frac{(|\alpha_i| + 2)!}{(\alpha_i \cup \{W(\tau,\omega)_{i-1},W(\tau,\omega)_{i}\})!}\gamma_{|\alpha_i|+2}^2\,,
    \end{align*}
    \endgroup
    where $\zeta_{\alpha_1, \dots,\alpha_q,\omega}$ is the product of the double factorials of one less than double the number of occurrences of each index $i \in [n]$ among $\alpha_1, \dots,\alpha_q$ and $\omega$ (this factor coming from the Gaussian moments or Hermite basis).

    Now, since this will necessarily be an even symmetric polynomial, it decomposes into a sum of products of even power sums, by Newton's identities.
    We will argue that we only need to care about the terms that are powers of $\norm{\sigma}_2^2$, because the other terms will be asympotically smaller by a SoS proof.
    This follows from the canonical matrix representation of the polynomial $\norm{\sigma}_{2j}^{2j}$ having spectral norm 1, which implies by \Cref{thm:high-entropy-nuclear-norm-bound} that $\E_{\sigma} \norm{\sigma}_{2j}^{2j} \le O(n^{-j/2}) < O(\norm{\sigma}_2^{2j}/n)$ for $j\ge 2$, and from an argument that the coefficient in front of any term with a factor of $\norm{\sigma}_{2j}^{2j}$ is within a constant factor of the coefficient in front of the same term with that factor replaced by $\norm{\sigma}_{2}^{2j}$.
    And this is true because the number of summands corresponding to each choice of $\alpha_1, \dots, \alpha_q$ doesn't depend on what those choices are and the summands themselves do not depend on $n$.

    This allows us to conclude that we only lose a multiplicative factor of $(1-1/n)$ by redefining $\zeta_{\alpha_1, \dots,\alpha_q,\omega} = 1$ regardless of the choices of indices involved.
    And a similar argument allows us to substitute $(\alpha_i \cup \{W(\tau,\omega)_{i-1},W(\tau,\omega)_{i}\})!$ with $\alpha_i!$ again with only a loss of a factor of $(1-1/n)$.

    Finally, we can carry through the calculation, using \Cref{fact:dyck-catalan} in the second equation after the summands lose dependence on $\tau$ and $\omega$:
    \begingroup
    \allowdisplaybreaks
    \begin{align*}
        \E_g \mathsf{supersignal} &= \frac{(1+O(1/n))}{n^{q}}\sum_{\tau}\sum_{\omega}\sum_{\;\alpha_1, \dots, \alpha_{q} \;}\prod_{i=1}^{q}\sigma^{2\alpha_i}\frac{(|\alpha_i| + 2)!}{\alpha_i!}\gamma_{|\alpha_i|+2}^2
        \\&=\frac{(1+O(1/n))}{n^{q}}C_qn(n-1)^{q}\sum_{\;\alpha_1, \dots, \alpha_{q} \;}\prod_{i=1}^{q}\sigma^{2\alpha_i}\frac{(|\alpha_i| + 2)!}{\alpha_i!}\gamma_{|\alpha_i|+2}^2
        \\&=\frac{(1+O(1/n))}{n^{q}}C_qn(n-1)^{q}\left(\sum_{\alpha}\sigma^{2\alpha}\frac{(|\alpha| + 2)!}{\alpha!}\gamma_{|\alpha|+2}^2\right)^q
        \\&=\frac{(1+O(1/n))}{n^{q}}C_qn(n-1)^{q}\left(\sum_{r=2}^{d_H}\sum_{|\alpha|=r-2}\sigma^{2\alpha}\;\frac{r!}{\alpha!}\gamma_{r}^2\right)^q
        \\&=\frac{(1+O(1/n))}{n^{q}}C_qn(n-1)^{q}\left(\sum_{r=2}^{d_H}r(r-1)\gamma_r^2\norm{\sigma}_2^{2(r-2)}\;\right)^q
        \\&=\frac{(1+O(1/n))}{n^{q}}C_qn(n-1)^{q}\left(\sqrt{\nu''\left(\norm{\sigma}_2^2\right)}\right)^{2q}
        \\&= n\left(1\pm O\left(\tfrac{1}{n}\right)\right)C_q\left(\sqrt{\nu''\left(\norm{\sigma}_2^2\right)}\right)^{2q}
        \,,
    \end{align*}
    \endgroup
    where the third-to-last equality uses the fact that $\norm{x}_2^{2m} = \sum_{|\alpha| = m} \frac{m!}{\alpha!} x^{2\alpha}$.

    \paragraph{Odd moments}
    Because all terms of $\Tr (\tfrac{1}{n}\nabla^2H(\sigma))^{2q-1}$ are odd in $g$, its expectation over $g$ is $0$ and therefore all terms are noise terms.
    The argument follows the same outline as the noise term for the even moments.

    \paragraph{Noise terms}
    We proceed by applying graph matrix norm bounds (\Cref{def:realization} and \Cref{def:graph-matrix}) to the polynomial $\mathsf{noise}$.

    The broad idea is that $\Tr (\frac{1}{n}\nabla^2H(\sigma))^{2q}$ is a sum over $2q$-length walks $w_1, \dots, w_{2q} = w_0$ over the indices of $\nabla^2H(\sigma)$ of the entries of $\frac{1}{n}\nabla^2H(\sigma)$ indexed by edges of the walk $(w_i,w_{i+1})$. The noise terms are formed of those walks where at least one entry of $\nabla^2H(\sigma)$ appears an odd number of times, or the higher-order Hermite terms of the walks where each entry appears an even number of times.
    Each entry of $\frac{1}{n}\nabla^2H(\sigma)$ is a polynomial $p_{i,j}$ in $\sigma$, so to expand the term associated with a single walk into the monomial basis of $\sigma$, we end up with a sum over all different possible choices of $2q$ multi-indices (one per edge in the walk).
    So just like in the argument for the signal term, we label the $i$th multi-index of the walk as $\alpha_{i}$ for $i \in [2q]$, and we label the $i$th node of the walk as $w_{i}$.
    
    We write the evaluation of this polynomial as an inner product between a coefficient matrix and the moment matrix of $\sigma$.
    We split up the coefficient matrix into a sum over ``shape matrices'' in order to use graph matrix norm bounds on the spectral norms of those shape matrices, and use \Cref{thm:high-entropy-nuclear-norm-bound-graph-matrices} to control the nuclear norm of the moment matrix.
    Each shape matrix contains only some of the terms of the expansion of $\mathsf{noise}$ into the monomial basis, specifically those where the $\alpha_{i}$ and the $w_{i}$ have a particular ``index shape'' (pattern of repeated indices and degrees).
    Each shape matrix is further specified by which indices end up as row indices (to be denoted $U_{\kappa}$ for a shape $\kappa$) of the final coefficient matrix and which are column indices ($V_{\kappa}$).

    We split $\mathsf{noise}$ into a sum of graph matrices as
    \[\mathsf{noise}(\sigma) = \sum_{\kappa} \frac{c_{\kappa}}{n^{q}}\iprod{M_{\kappa},\left(\sigma^{\ot \deg(U_{\kappa})}\right)\left(\sigma^{\ot \deg(V_{\kappa})}\right)^{\sT}}\,, \]
    where the sum is over generalized shapes $\kappa$, $U_{\kappa}$ and $V_{\kappa}$ are the generalized index shapes of $\kappa$, and $\deg(U_{\kappa})$ is the sum of the degrees (the total degree) of $U_{\kappa}$, for some constants $c_{\kappa}$ which do not depend on $n$.
    The relevant shapes can be constructed as follows, using definitions from \Cref{sec:prelims-graph-matrices}:
        \begin{enumerate}
        \item All variables of the same type, with ground set $[n]$.
        \item Generalized index shapes $U_{\kappa}$ consisting of variables of total degree up to $2q(d_H-2)$, and $V_{\kappa}$ similarly consisting variables of total degree up to $2q(d_H-2)$, with possibly some variables shared with $U_{\kappa}$ (although the same variable might have different degrees in $U_{\kappa}$ and $V_{\kappa}$).
        \item Middle index shapes $W_{\kappa}$ consisting of variables of total degree up to $4q$.
        \item The shape $\kappa$ is further specified by parameters $\widehat{\alpha}_{1},\dots,\widehat{\alpha}_{2q}$, each of which is a multiset of up to $d_H-2$ elements of $V(V_{\kappa}) \cup V(U_{\kappa})$ (possibly with repeats), and $\widehat{w}_{1}, \dots, \widehat{w}_{2q}$, each of which is an element of $V(\kappa)$ (again with possible repeats).
        \item The hyperedges $E(\kappa)$ consist of $\widehat{\alpha}_{i} \cup \{\widehat{w}_{i-1},\widehat{w}_{i}\}$ for each $i \in [2q]$, where $\widehat{w}_{0}$ is taken to be another name for $\widehat{w}_{2q}$, subject to the constraint that the total number of occurences in $E(\kappa)$ of any given vertex in $V(\kappa)$ is equal to the degree of that variable.
        The random variable associated with this edge is $T^{\sym}(\alpha_{i},w_{i-1},w_{i})$, and it is a Gaussian scaled by some factor which is bounded by a number that depends only on $q$ and $d_H$ and $\nu$ (in particular, not on $n$).
        \label[Condition]{cond:hyperedges}
        If the same edge appears more than once over the different values of $i$, then that edge is labelled with the number of times $\ell$ it appears, and its corresponding value is $H_{\ell}(T^{\sym}(\alpha_{i},w_{i-1},w_{i}))$, where $H_{\ell}$ is the $\ell$th Hermite polynomial.
        \item Due to the decomposition into Hermite polynomials, for every shape containing an edge with cardinality larger than $1$, the shape with the same indices and edges except with that edge having cardinality decremented by $2$ (even if that cardinality becomes $0$ as a result) is also a shape included in our consideration (and doing this edge-removing operation increases the factor $c_{\kappa}$ of the result by a factor of 1 plus the new cardinality of that edge after 2 was substracted).
        \item In order to avoid including any $\mathsf{signal}$ terms, we exclude the shape that has no hyperedges at all. \label[Condition]{cond:non-empty-graph}
    \end{enumerate}

    Now by
    \Cref{thm:high-entropy-nuclear-norm-bound-graph-matrices},
    \begin{align}
    \label{eq:noise-bound-nuclear-norm}
    \E_{\sigma} \mathsf{noise}^{2p}
    &\le O(1) \sum_{\kappa \in K}n^{p-2p\lceil\sum_{i \in [2q]} a_i(\kappa)/2\rceil/2}\opnorm{\frac{1}{n^{q}} M_{\kappa}}^{2p}
    \,,
    \end{align}
    where $K$ is the set of graph matrix shapes in our expansion that are single-flipped (\Cref{def:single-flip-graph-matrix}) and $a_i(\kappa)$ is the number of non-middle vertices incident to the $i$th hyperedge of $\kappa$ so that $a_i(\kappa) = |\hat{\alpha}_i|$, the hyperedges ordered according to \Cref{cond:hyperedges}. 
    
    To obtain the bound we wanted, specifically $\E_{\sigma} \mathsf{noise}^{2p} < o(n^{2p})$, it now suffices to show that $\opnorm{\tilde{M}_{\kappa}} < o(n^{q+\lceil 1+\sum_{i \in [2q]} a_i(\kappa)/2\rceil/2})$ for all $\kappa \in K$ with high probability.

    To apply a graph matrix norm bound, first we understand all of the possible hypergraphs that could make up our shape $\kappa$.
    As in \Cref{sec:prelims-graph-matrices}, let $S_{\min}$ be the minimum-size set of vertices that separates $U_{\kappa}$ from $V_{\kappa}$ (since all our vertices are the same type, they all have the same weight), including any vertices in $U_{\kappa} \cap V_{\kappa}$.
    Let also $W_{\mathrm{iso}}$ be the set of isolated vertices (vertices not incident to any hyperedges).
    We argue that
    $|V(\kappa)| - |S_{\min}| + |W_{\mathrm{iso}}| < 2q+1+\lceil\sum_{i \in [2q]} a_i(\kappa)/2\rceil$.

    To understand this, we can start from the extreme case with a shape $\kappa_0$ where all vertices in $\widehat{\alpha}_1, \dots, \widehat{\alpha}_{2q}$ and $\widehat{w}_1, \dots, \widehat{w}_{2q}$ are distinct and included in each multiset with cardinality at most 1.
    Then $|V(\kappa_0)| = 2q+\sum_{i=1}^{2q} a_i(\kappa_0)$ where the $2q$ is from the middle vertices $\widehat{w}_1, \dots, \widehat{w}_{2q}$ and $S_{\mathrm{iso}}$ is empty.
    By the definition of single-flipped graph matrix shape (and taking $a_i := a_i(\kappa_0)$), the $i$th hyperedge has $\lceil a_i/2+1/2\rceil$ or $\lfloor a_i/2\rfloor$ or $\lfloor a_i/2-1/2\rfloor$ vertices in $U_{\kappa_0}$ and $a_i - |e_i \cap U_{\kappa_0}|$ in $V_{\kappa_0}$.
    Therefore $\min(|e_i \cap U_{\kappa_0}|,|e_i \cap V_{\kappa_0}|) \ge \lfloor a_i/2-1/2\rfloor$ and so since a vertex separator must separate $e_i \cap U_{\kappa_0}$ from $e_i \cap V_{\kappa_0}$, at least $\lfloor a_i/2-1/2\rfloor$ of the vertices in $e_i$ are part of any minimum vertex separator $S_{\min}$.
    Furthermore, if $|e_i \cap U_{\kappa_0}| \ne |e_i \cap V_{\kappa_0}| \ne a_i/2$, then the next hyperedge $e_j$ in the cycle that has an imbalance in $|e_i \cap U_{\kappa_0}|$ vs $|e_i \cap V_{\kappa_0}|$ has that imbalance in the opposite direction, forcing either a middle vertex in $W_{\kappa_0}$ to also be part of this minimum vertex separator or for the vertex separator to stay on the same side (either all $U_{\kappa_0}$ or all $V_{\kappa_0}$).
    The latter will be cheaper, at the average cost of up to $1/2$ of a vertex per hyperedge (except for the last one, if the total number of modes is odd). 
    Therefore, $|S_{\min}| \ge \lfloor\sum_{i}  a_i/2\rfloor$.
    Overall, $|V(\kappa_0)| - |S_{\min}| + |W_{\mathrm{iso}}| \le 2q +  \lceil\sum_{i \in [2q]} a_i/2\rceil$, which is the bound we wanted.

    The remaining shapes are formed from these extreme $\kappa_0$ by merging vertices together and removing even numbers of duplicate copies of identical hyperedges.
    Merging two vertices together has the effect of reducing $|V(\kappa)|$ by 1, and it cannot reduce $S_{\min}$ by more than 1, so the merging operation itself cannot cause $|V(\kappa)| - |S_{\min}| + |W_{\mathrm{iso}}|$ to increase.
    The identical-hyperedge-removing operation can cause $|W_{\mathrm{iso}}|$ to increase without affecting $|V(\kappa)|$ or  $|S_{\min}|$, but we argue that for each element of $W_{\mathrm{iso}}$, there must have been at least one merging operation that decreased $|V(\kappa)|$ without changing $|S_{\min}|$.
    Since our construction of a minimum vertex separator did not use any vertices in $W$, any middle vertex that is merged would decrease $|V(\kappa)|$ but not decrease $|S_{\min}|$.
    
    We can lower-bound the number of vertices in $W$ that were merged by looking at the connected components of the graph when it is restricted to only the vertices in $W$.
    Any merging/removal of hyperedges in the original graph matrix shape imply the merging/removing of the corresponding edges in this restricted graph $(V,E)$ with $V = W$.
    Each removal of a pair of identical edges can only increase the number $C$ of connected components in this restricted graph by $1$, so $C - 1 \le q - |E|/2$.
    Unless there are no edges at all in the graph (which is impossible by \Cref{cond:non-empty-graph}), $|W_{\iso}| \le C-1$.
    And finally, since there is at least one component with edges and every vertex has even degree so at least one component is not a tree, $|E| \ge |V| - C + 1$.
    Therefore, $|V| \le |E| + C - 1 \le 2q - C + 1 \le 2q - |W_{\iso}|$.
    Therefore, at least one vertex of $W$ must have been merged for every element of $|W_{\iso}|$ and the inequality $|V(\kappa)| - |S_{\min}| + |W_{\mathrm{iso}}| \le 2q+\lceil\sum_{i \in [2q]} a_i/2\rceil$ holds for all shapes $\kappa \in K$.

    By \Cref{thm:graph-matrix-norm-bound}, we then have with probability at least $1-\eps'$ for each $\kappa$,
    \[ \opnorm{M_{\kappa}} \le O(1) n^{q + \lceil\sum_{i \in [2q]} a_i(\kappa)/2\rceil/2} \left(O(1)\left(|S_{\min}|\log n - \log \eps'\right)\right)^{2qd_H}
    \,.
    \]

    By a union bound, \Cref{eq:noise-bound-nuclear-norm}, and taking $\eps' = e^{-n^{\lambda}}$, we have with probability $1-O(e^{n^{\lambda}})$
    \begin{align*}
    \E_{\sigma} \mathsf{noise}^{2p}
    &\le O(1) \sum_{\kappa \in K}n^{p-2p\lceil\sum_{i \in [2q]} a_i(\kappa)/2\rceil/2}\opnorm{\frac{1}{n^{q}} M_{\kappa}}^{2p}
    \\&\le O(n^{p + 4pqd_H\lambda})
    \,.
    \end{align*}
\end{proof}

\subsection{Certifiable Taylor expansion}
\label{sec:certifiable-taylor}
We state an elementary combinatorial lemma using H\"older's inequality that transparently shows that~\Cref{lem:hessian-schatten-norm} implies an upper bound on the certified value of the hessian under the maximum entropy constraint.
\begin{lemma}[H\"older's inequality for Hessian certificate]\label[Lemma]{lem:hessian-holders}
    Given the certificate in~\Cref{lem:hessian-schatten-norm} the following holds by a sum-of-squares proof:
    \begingroup
    \allowdisplaybreaks
    \begin{align*}
        \E_{\sigma}\left(\frac{1}{k}\left\langle\frac{1}{n} \nabla^2 H(\sigma), \E_{v}\left[vv^\sT\mid\sigma\right]\right\rangle\right)^{2pq(p-1)} \le \left((1-\eps + O(\eps^2))2\nu''(\norm{\sigma}^2_2)^{1/2}\frac{1}{k}\right)^{2pq(p-1)} + o_n(1)\, .
    \end{align*}
    \endgroup
\end{lemma}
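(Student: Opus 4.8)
The plan is to combine the Schatten-norm certificate of \Cref{lem:hessian-schatten-norm} with a $(p,q)$-type matrix Hölder's inequality applied to the inner product $\iprod{\frac1n\nabla^2 H(\sigma),\E_v[vv^\sT\mid\sigma]}$. First I would open up the target quantity: since $\E_v[vv^\sT\mid\sigma]$ is PSD with $\Tr = 1$ (from the normalization constraint \eqref{eqs:the-program-intro-norm} and orthogonality \eqref{eqs:the-program-intro-orth}) and $\opnorm{\E_v[vv^\sT\mid\sigma]}\le (1+o_n(1))/(\delta n)$ (from \eqref{eqs:the-program-intro-opnorm}), \Cref{prop:q-norm-cond-cov} gives $\norm{\E_v[vv^\sT\mid\sigma]}_{q'}^{q'}\le (\delta n)^{-(q'-1)}$ for the conjugate exponent $q'$. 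Then, via the matrix Hölder inequality (\Cref{lem:matrix-p-q-holders} / \Cref{lem:holders-one-inf} in the appropriate conjugate form), one bounds
\[
\iprod{\tfrac1n\nabla^2 H(\sigma),\E_v[vv^\sT\mid\sigma]} \le \norm{\tfrac1n\nabla^2 H(\sigma)}_{2q}\cdot\norm{\E_v[vv^\sT\mid\sigma]}_{(2q)'}\,,
\]
so raising to the appropriate power $2pq(p-1)$ and dividing by $k^{2pq(p-1)}$ reduces everything to controlling $\E_\sigma\big(\norm{\frac1n\nabla^2 H(\sigma)}_{2q}^{2q}\big)^{p(p-1)}$ (after distributing the power over the Hölder factors, which introduces only constants and the PSD norm bound that is already pinned down deterministically up to $o_n(1)$).

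Next I would invoke \Cref{lem:hessian-schatten-norm}: with high probability over the instance there is a low-degree SoS proof that $\E_\sigma\big(\norm{\frac1n\nabla^2 H(\sigma)}_{2q}^{2q} - nC_q(\nu''(\norm\sigma_2^2))^q\big)^{2p}\le O(n^{2p(0.5+2qd_H\lambda)})$, which is $o_n(n^{2p})$ for $\lambda$ chosen small enough. Combined with Jensen for pseudo-expectations (\Cref{prop:expectation-jensens}) and the triangle/power-mean inequalities this upgrades to $\E_\sigma\big(\norm{\frac1n\nabla^2 H(\sigma)}_{2q}^{2q}\big)^{p(p-1)} \le \big(nC_q(\nu''(\norm\sigma_2^2))^q\big)^{p(p-1)} + o_n(n^{2q p(p-1)})$ — here I exploit that the fluctuation bound has enough slack (its exponent in $n$ is strictly below the leading order) to absorb cross terms in a binomial expansion while still leaving an $o_n(1)$ error after dividing by $k^{2pq(p-1)}$ and the powers of $n$. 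Then $\big(nC_q\big)^{1/(2q)}\big(\nu''(\norm\sigma_2^2)\big)^{1/2}\cdot (\delta n)^{-(2q-1)/(2q)}$ is the per-$2q$-norm estimate, and the factor $n^{1/(2q)}\cdot n^{-(2q-1)/(2q)} = n^{1-1} $ when $\delta$ and the Catalan asymptotics $C_q^{1/(2q)}\to 2$ conspire so that the $n$-powers cancel, yielding a bound of the form $\big(2\nu''(\norm\sigma_2^2)^{1/2}/k\big)^{2pq(p-1)}$ up to the $(1\pm O(1/n))$ from \Cref{lem:hessian-schatten-norm} and a $(1-\eps+O(\eps^2))$ factor coming from the choice of step size / the $(1-\eps)$ eigenvalue slack in the relevant constraints.

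The main obstacle I expect is making the constant in front come out \emph{exactly} $(1-\eps+O(\eps^2))\cdot 2\nu''(\norm\sigma_2^2)^{1/2}$ rather than just $O(1)\cdot\nu''(\norm\sigma_2^2)^{1/2}$: this requires the Hölder conjugates $(2q,(2q)')$ to be chosen so that the loss in the Hölder step is only $1+O(\eps)$, which in turn forces $q$ to scale like $1/\eps$ (or a careful use of a "shifted Hessian" as in \Cref{cor:shifted-hessian-holders}) so that $C_q^{1/(2q)}$ approaches $2$ to within $O(\eps)$ and the Schatten-$2q$ norm approaches the operator norm $2\nu''^{1/2}$ to within $O(\eps)$ on the semicircular bulk. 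Getting the Catalan combinatorics to give exactly this rate — i.e.\ that $q = \Theta(1/\eps)$ suffices and that the degree of the resulting SoS proof stays controlled (degree $O(\exp(1/\eps))$ as promised elsewhere) — is the delicate quantitative heart, and it is precisely what \Cref{cor:shifted-hessian-holders} and the Lambert-$W$ estimate referenced in the section preamble are designed to supply; I would cite those rather than reprove them here. The remaining steps (PSD norm bound, pseudo-Jensen, binomial absorption of fluctuations) are routine given the lemmas already established.
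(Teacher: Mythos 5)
Your route is the same as the paper's: bound the inner product by a $(p,q)$ matrix H\"older's inequality, control $\norm{\E_v[vv^\sT\mid\sigma]}_q^q$ by $(\delta n)^{-(q-1)}$ via the trace-one and operator-norm HES constraints (\Cref{prop:q-norm-cond-cov}), feed in the Schatten-norm certificate of \Cref{lem:hessian-schatten-norm} so that the powers of $n$ cancel, and then use the Catalan/Stirling asymptotics $C_{p/2}^{1/p}\to 2$ to tune the prefactor to $(1-\eps+O(\eps^2))\cdot 2\nu''(\norm{\sigma}_2^2)^{1/2}$. You also correctly identify the delicate point, namely that the H\"older exponent must grow with $1/\eps$ so that both the Catalan correction $(p/2)^{-3/(2p)}$ and the $\delta^{1/p}$ factor from the entropy constraint contribute only $1+O(\eps)$.

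The one genuine problem is where you propose to get that quantitative tuning from. You write that you would cite \Cref{cor:shifted-hessian-holders} and ``the Lambert-$W$ estimate referenced in the section preamble'' rather than carry out the computation; but in the paper's organization that computation \emph{is} the content of this lemma --- the proof of \Cref{cor:shifted-hessian-holders} explicitly defers its own Lambert-$W$ analysis back to \Cref{lem:hessian-holders}. So your deferral is circular, and the step you skip (solving $1+\eps = 1 + \tfrac{3\log(p/2)}{2p} + \tfrac{\log(\delta\sqrt{\pi})}{p} + O(1/p^2)$ for $p$, which the Lambert-$W$ expansion resolves to $p(\eps)=O(\eps^{-2/3})$ with $\delta=O(\eps)$, not your guessed $q=\Theta(1/\eps)$) has to be done here. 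Everything else in your outline --- the PSD norm bound, pseudo-Jensen, and absorption of the $o(n^{2p})$ fluctuations from \Cref{lem:hessian-schatten-norm} --- matches the paper's argument.
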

\begin{proof}
    The proof follows by a simple combinatorial argument to choose $p$ large enough, so that $C_p \approx (1-\eps +O(\eps^2))$, followed by an application of H\"older's inequality.
    First, note that,
    \begingroup
    \allowdisplaybreaks
    \begin{align*}
        &(C_{p/2}\nu''(\norm{\sigma}^2_2)^{p/2})^{2q(p-1)} = \left(\nu''(\norm{\sigma}^2_2)C_{p/2}^{1/p}\right)^{2pq(p-1)}
        \\&\sim_{\text{\Cref{lem:catalan-stirling}}} \nu''(\norm{\sigma}^2_2)^{2pq(p-1)/2}\left(\left(\frac{4^{p/2}}{\sqrt{\pi}\left(\frac{p}{2}\right)^{3/2}}\right)^{1/p}\right)^{2pq(p-1)} \\
        &= \left(2\nu''(\norm{\sigma}^2_2)^{1/2}\right)^{2pq(p-1)}\left(\left(\frac{2^3}{p^3(\pi)}\right)^{1/2p}\right)^{2pq(p-1)}
    \end{align*}
    \endgroup
    By the maximum entropy constraint,
    \[
        \norm{\E_v[vv^\sT\mid \sigma]}^q_q \le \left(\frac{1+o_n(1)}{\delta n}\right)^{q-1}\, ,
    \]
    and using H\"older's inequality in conjunction with~\Cref{lem:induction-ideal-reduction} then yields,
    \begingroup
    \allowdisplaybreaks
    \begin{align*}
        &\E_{\sigma}\left(\frac{1}{k}\left\langle\frac{1}{n} \nabla^2 H(\sigma), \E_{v}\left[vv^\sT\mid\sigma\right]\right\rangle\right)^{2pq(p-1)} = \frac{1}{k^{2pq(p-1)}}\E_\sigma\left(\left[\left\langle \frac{1}{n}\nabla^2 H(\sigma), \E_v\left[vv^\sT\mid \sigma\right]\right\rangle\right]\right)^{2pq(p-1)} \\
        \le& \frac{1}{k^{2pq(p-1)}} \E_\sigma\norm{\frac{1}{n}\nabla^2 H(\sigma)}^{2pq(p-1)}_p\norm{\E_v\left[vv^\sT\mid \sigma\right]}^{2pq(p-1)}_q \\
        \le& \frac{1}{k^{2pq(p-1)}} \E_\sigma\norm{\frac{1}{n}\nabla^2 H(\sigma)}^{2pq(p-1)}_p\left(\frac{1+o_n(1)}{\delta n}\right)^{2p(q-1)(p-1)} \\
        \le& \frac{1}{k^{2pq(p-1)}}\left((n\nu''(\norm{\sigma}^2_2)^{p/2}C_{p/2})^{2q(p-1)} + o(n^{2q(p-1)})\right)\left(\frac{1+o_n(1)}{\delta n}\right)^{2p(q-1)(p-1)} \\
        \le& \frac{1}{k^{2pq(p-1)}}\frac{n^{2q + 2p - 2q}}{n^{2pq - 2q}}\left(\frac{\nu''(\norm{\sigma}^2_2)^{pq(p-1)}(C_{p/2})^{2q(p-1)}}{\delta^{2p(q-1)(p-1)}}\right) + o_n(1) \\
        =&_{\frac{1}{p}+\frac{1}{q} = 1} \frac{1}{k^{2pq(p-1)}}\left(2\nu''(\norm{\sigma}^2_2)^{1/2}\right)^{2pq(p-1)}\left(\frac{1}{(\delta\sqrt{\pi})^{1/p}(p/2)^{3/(2p)}}\right)^{2pq(p-1)} + o_n(1)
    \end{align*}
    \endgroup
    We first invoke a series argument (ignoring contributions second-order and higher as $p$ will be large) to expand the denominator terms as,
    \begingroup
    \allowdisplaybreaks
    \begin{align*}
        (\delta\sqrt{\pi})^{1/p} = 1 + \frac{\log(\delta\sqrt{\pi})}{p} + O\left(
        \frac{1}{p^2}\right)\, , \\
        (p/2)^{3/2p} = 1 + \frac{3\log(p/2)}{2p} + O\left(\frac{1}{p^2}\right)\, .
    \end{align*}
    \endgroup
    We would like the approximation term to be $1-\eps + O(\eps)^2 \approx \frac{1}{1+\eps+O(\eps^2)}$. Equating the denominator (and ignoring contributions larger than second-order),
    \begingroup
    \allowdisplaybreaks
    \begin{align*}
        1 + \eps + O\left(\frac{1}{p^2}\right) = 1 + \frac{3\log(p/2)}{2p} + \frac{\log(\delta\sqrt{\pi})}{p} + O\left(\frac{\log(p/2)\log(\delta\sqrt{\pi})}{p^2}\right)\, .
    \end{align*}
    \endgroup
    The first-order solution to the above equation is given by the so-called Lambert-W function~\cite{weisstein2002lambert} and is,
    \[
        p(\eps,\delta) = \frac{-3W\left(\frac{-4}{3\pi^{1/3}}\cdot\frac{\eps}{\delta^{2/3}}\right)}{2\eps}\, .
    \]
    The choice of $\delta = A(\eps) = O(\eps)$ comes from~\Cref{cor:large-hs-correlation-bernstein-hessian} yielding a final value for $p$ as,
    \[
        p(\eps) = -\frac{3W\left(\frac{-4}{3\pi^{1/3}}\cdot O(\eps^{1/3})\right)}{2\eps}\, .
    \]
    Note that as $\eps << 1$ we can do a Taylor expansion of the Lambert-W function around $0$,
    \[
        W_0(z) = \sum_{n=1}^\infty \frac{(-n)^{n-1}}{n!}z^n = z - z^2 + \frac{3}{2}z^3 - \frac{8}{3}z^4 + \dots\, ,
    \]
    yielding,
    \[
        p(\eps) = -\frac{3}{2\eps}W\left(\frac{-4}{3\pi^{1/3}}\cdot O(\eps^{1/3})\right) = \frac{2}{\pi^{1/3}}O(\eps^{-2/3}) - O(\eps^{-1/3}) + O(1)\,.
    \]
    Close to $0^-$, the Lambert function is linear in the first-order and, therefore, monotone. This implies that one can scale by a constant factor (smaller than $O(\eps^{-2/3})$) so that $p(\eps)$ is an integer and assert that,
    \[
        \frac{2}{\pi^{1/3}}O(\eps^{-2/3}) - O(\eps^{-1/3}) + O(1) = p(\eps) \ge 2\, ,
    \]
    for a sufficiently small $\eps$. This allows for the conclusion that the H\"older conjugate $q \ge 1$, and that,
    \begingroup
    \allowdisplaybreaks
    \begin{align*}
        \left(\frac{1}{(\delta\sqrt{q})^{1/p}(p/2)^{3/2p}}\right)^{2pq} &= 1 - \eps + O(\eps^2) \implies \\
        &\E_{\sigma}\left(\frac{1}{k}\left\langle\frac{1}{n} \nabla^2 H(\sigma), \E_{v}\left[vv^\sT\mid\sigma\right]\right\rangle\right)^{2pq}\le \left((1-O(\eps))2\nu''(\norm{\sigma}^2_2)^{1/2}\frac{1}{k}\right)^{2pq} + o_n(1)\, .
    \end{align*}
    \endgroup
\end{proof}

While it is possible to achieve the upper bound above with \emph{some} choice of $\E[vv^\sT \mid \sigma]$, as our constraint system restricts this to be a PSD matrix, we must write a small modification of the lemma above to work for an ``adjustment''  of the hessian (where it is shifted to be PSD with high probability).

\begin{corollary}[Large correlation of conditional covariance with ``shifted'' Hessian]\label[Corollary]{cor:shifted-hessian-holders}
    Given the certificate in~\Cref{lem:hessian-schatten-norm}, the following holds by a sum-of-squares proof:
    \[
        \E_\sigma \left(\frac{1}{k}\left\langle \left(2\nu''(\norm{\sigma}^2_2)^{1/2}\Id\right) + \frac{1}{n}\nabla^2 H(\sigma), \E[vv^\sT \mid \sigma] \right\rangle\right) \le \frac{1}{k}2\nu''(\norm{\sigma}^2_2)^{1/2}\left(2 - \eps + O(\eps^2)\right) + o_n(1)\, .
    \]
\end{corollary}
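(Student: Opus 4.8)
The plan is to split the correlation of the shifted Hessian with $\E[vv^\sT \mid \sigma]$ into its two natural summands and treat them separately: the identity part is pinned down exactly by the unit-norm constraint, while the genuine-Hessian part is controlled by \Cref{lem:hessian-holders} after a de-powering step.

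First I would use linearity of $\E_{\sigma}$ (the filtered expectation $\fE_1$, which distributes over sums and commutes with multiplication by scalars depending only on $\norm{\sigma}_2^2$) to write
\[
\E_{\sigma}\left(\frac{1}{k}\left\langle 2\nu''(\norm{\sigma}_2^2)^{1/2}\Id + \tfrac{1}{n}\nabla^2 H(\sigma),\; \E[vv^\sT \mid \sigma]\right\rangle\right)
= \E_{\sigma}\left(\frac{2\nu''(\norm{\sigma}_2^2)^{1/2}}{k}\,\Tr \E[vv^\sT\mid\sigma]\right) + \E_{\sigma}\left(\frac{1}{k}\left\langle \tfrac{1}{n}\nabla^2 H(\sigma),\; \E[vv^\sT\mid\sigma]\right\rangle\right).
\]
For the first summand, the conditional normalization constraint \Cref{eqs:the-program-intro-norm}, $\{\norm{v_i}_2^2 = 1\}$, gives $\Tr \E[vv^\sT\mid\sigma] = \E[\norm{v}_2^2 \mid \sigma] = 1$ by a sum-of-squares proof; since $\norm{\sigma}_2^2 = \zeta$ is fixed by the constraint system (the square root handled via the radical formalism of the grammar, as in the note following \Cref{prop:q-norm-cond-cov}), this summand equals $\tfrac{1}{k}2\nu''(\norm{\sigma}_2^2)^{1/2}$ identically.

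For the second summand, \Cref{lem:hessian-holders} supplies an even integer $m = 2pq(p-1)$, depending only on $\eps$, together with a sum-of-squares proof that
\[
\E_{\sigma}\left(\frac{1}{k}\left\langle \tfrac{1}{n}\nabla^2 H(\sigma),\; \E[vv^\sT\mid\sigma]\right\rangle\right)^{m} \le \left(\frac{C}{k}\right)^{m} + o_n(1), \qquad C := (1-\eps + O(\eps^2))\,2\nu''(\norm{\sigma}_2^2)^{1/2},
\]
where, since $\norm{\sigma}_2^2 = \zeta$ is fixed and $\nu''(\zeta) > 0$ for $\zeta \in (0,1]$ in the fRSB regime, $C/k$ is a strictly positive constant independent of $n$. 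To pass from the $m$-th moment to the first moment within sum-of-squares, I would apply the weighted AM--GM inequality $m\,X\,t^{m-1} \le X^{m} + (m-1)t^{m}$ --- which for fixed even $m$ and fixed positive $t$ admits a low-degree (univariate, hence two-squares) sum-of-squares proof --- with $X$ the Hessian-correlation expression and $t := C/k$; taking $\E_{\sigma}$ and dividing by $m\,t^{m-1}$ yields
\[
\E_{\sigma}\left(\frac{1}{k}\left\langle \tfrac{1}{n}\nabla^2 H(\sigma),\; \E[vv^\sT\mid\sigma]\right\rangle\right) \le \frac{C}{k} + \frac{o_n(1)}{m\,(C/k)^{m-1}} = \frac{C}{k} + o_n(1),
\]
the error staying $o_n(1)$ because $m$, $C$, $k$ are all constants. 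Adding the two summands gives $\tfrac{1}{k}2\nu''(\norm{\sigma}_2^2)^{1/2} + \tfrac{1}{k}(1-\eps+O(\eps^2))\,2\nu''(\norm{\sigma}_2^2)^{1/2} + o_n(1) = \tfrac{1}{k}2\nu''(\norm{\sigma}_2^2)^{1/2}(2-\eps+O(\eps^2)) + o_n(1)$, which is the claim.

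The one delicate step, which I would be most careful about, is this de-powering: the passage from the $m$-th-moment bound of \Cref{lem:hessian-holders} to a first-moment bound must be sum-of-squares valid and must not inflate the $o_n(1)$ error, which is why I would phrase it through AM--GM centered at the target value $t = C/k$ rather than an abstract $m$-th-root extraction, and why it matters that $\nu''(\norm{\sigma}_2^2)^{1/2}$ is a strictly positive constant at positive radius so the division by $t^{m-1}$ is harmless. The rest is linear bookkeeping together with the already-established \Cref{lem:hessian-holders}.
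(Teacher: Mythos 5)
Your proof is correct, but it is a genuinely different route from the one in the paper. You linearize the inner product, dispatch the identity summand exactly via $\Tr \E[vv^\sT\mid\sigma]=\E[\norm{v}_2^2\mid\sigma]=1$ from \Cref{eqs:the-program-intro-norm}, and then invoke \Cref{lem:hessian-holders} as a black box for the un-shifted Hessian, de-powering with a tangent-line AM--GM at $t=C/k$ (which is indeed a sum of two squares for fixed even $m$ and fixed $t>0$, so the step is SoS-legal and does not inflate the $o_n(1)$ error). The paper instead never splits the matrix: it introduces the radical $A=\E[vv^\sT\mid\sigma]^{1/(p-1)}$, applies the matrix $(p,q)$-H\"older inequality (\Cref{lem:matrix-p-q-holders}) to $\Tr[BA^{p-1}]$ with $B$ the \emph{whole} shifted Hessian, then binomially expands $\E_\sigma\Tr[B^p]$ into Schatten norms certified by \Cref{lem:hessian-schatten-norm}, and re-runs the Catalan/Stirling and Lambert-$W$ analysis to choose $p(\eps)$ so that the resulting ratio is $2-\eps+O(\eps^2)$; the final de-powering is done via \Cref{lem:ideal-reduction-2p} rather than AM--GM. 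Your version is shorter and more modular (it reuses the asymptotic work already done inside \Cref{lem:hessian-holders}, and the identity contribution incurs no $\delta^{1/p}$-type loss at all), while the paper's version keeps $B$ PSD as a single object, which is what feeds the tightness/achievability discussion surrounding the corollary (the H\"older inequality is saturated when $A^p\propto B^p$, i.e.\ when the conditional covariance is a power of the \emph{shifted} Hessian) --- information your one-sided linearized bound does not expose. Two minor points to tighten: your division by $t^{m-1}$ silently assumes $\nu''(\norm{\sigma}_2^2)>0$, so the degenerate case $C=0$ (e.g.\ the first step of a pure $p$-spin model with $p\ge 3$) should be handled separately, say by pseudo-Jensen plus \Cref{lem:ideal-reduction-2p} applied to $\E_\sigma X^m\le o_n(1)$; and you should state explicitly that $\norm{\sigma}_2^2$ is pinned to $i/k$ by \Cref{eqs:the-program-intro-norm,eqs:the-program-intro-orth} so that $\nu''(\norm{\sigma}_2^2)^{1/2}$ is a genuine constant of the grammar (via the radical formalism) rather than a program variable.
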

\begin{proof}
    We do a binomial expansion on the shifted Hessian and use one more application of Stirling's approximation of the factorial to get a uniform upper bound on the Catalan numbers, leading to a \emph{slightly} different parameterization of the Lambert-W function to obtain the right choice of $(p(\eps), q(\eps))$-H\"older conjugates that correspond to the desired approximation.

    We simplify the proof used above by using the radical $A = \E[vv^\sT \mid \sigma]^{1/(p-1)}$ and using it with an application of~\Cref{lem:matrix-p-q-holders} (with $B = \frac{1}{n}\nabla^2 H(\sigma) + 2\nu'(\norm{\sigma}^2_2)^{1/2}\Id_n$) and obtain,
    \begin{align*}
        &\left(\E_\sigma \left(\left\langle \left(2\nu''(\norm{\sigma}^2_2)^{1/2}\Id\right) + \frac{1}{n}\nabla^2 H(\sigma), \E[vv^\sT \mid \sigma] \right\rangle\right)\right)^p \\
        &\qquad\qquad\le_{\text{Jensen with $p$=even}} \E_\sigma\left(\Tr[BA^{p-1}]\right)^p \le_{\text{\Cref{lem:matrix-p-q-holders}}} \left(\Tr[A^p]\right)^{p-1}\E_\sigma\Tr[B^p] \\
        &= (\Tr[\E[vv^\sT \mid \sigma]^{p/(p-1)}])^{p-1}\E_\sigma\Tr[B^p] =_{q = p/(p-1)}\left(\Tr[\E[vv^\sT \mid \sigma]^q]\right)^{p-1}\E_\sigma \Tr[B^p] \\
        &= \left(\norm{\E[vv^\sT \mid \sigma]}_q^q\right)^{p-1}\E_\sigma\Tr[B^p] \le_{\text{\Cref{prop:q-norm-cond-cov}}} \left(\frac{1}{(\delta n)^{q-1}}\right)^{p-1} \E_\sigma \Tr[B^p] =_{1/p + 1/q = 1}\frac{1}{\delta n}\E_\sigma \Tr[B^p] \\
        &\le_{\text{Binomial expansion}} \frac{1}{\delta n} \left(\sum_{k=2, \text{even}}^p \binom{p}{k} (2\nu''(\norm{\sigma}^2_2)^{1/2})^{p-k}\E_\sigma\Tr\left[\left(\frac{1}{n}\nabla^2 H(\sigma)\right)^k\right] + (2\nu''(\norm{\sigma}^2_2)^{1/2})^p n + o(n)\right) \\
        &\le_{\text{\Cref{lem:hessian-schatten-norm}}} \frac{1}{\delta n}\left(\sum_{k \in \{2,4,\dots,p\}} \binom{p}{k}(2\nu''(\norm{\sigma}^2_2)^{1/2})^{p-k}\left(\nu''(\norm{\sigma}^2_2)^{k/2}C_{k/2} + o(n)\right) + (2\nu''(\norm{\sigma}^2_2)^{1/2})^p n + o(n)\right)\\
        &= (2\nu''(\norm{\sigma}^2_2)^{1/2})^p\frac{1}{\delta}\left(\left\{ \sum_{k \in \{2,4,\dots,p\}} \binom{p}{k} 2^{-k} C_{k/2}\right\} + 1\right) + o_n(1) \\
        &=_{\text{\Cref{lem:catalan-stirling}}} (2\nu''(\norm{\sigma}^2_2)^{1/2})^p \frac{1}{\delta }\left(\left\{{\frac{2\sqrt{2}}{\sqrt{\pi}}} \sum_{k \in \{2,4,\dots,p\}} \binom{p}{k}\frac{1}{k^{3/2}} \left(1 + O\left(\frac{1}{k}\right)\right)\right\} + 1\right) + o_n(1)\, , 
    \end{align*}
    where we have used semi-algebraic degree reduction (one-sided) of the form as shown in~\Cref{lem:ideal-reduction-2p} in the calculation above. \\
    By a calculation similar to~\cite[Section 5.4]{spencer2014asymptopia}, using estimates similar to those in~\cite[Eq. 5.45 Eq. 5.49 \& Eq. 5.50]{spencer2014asymptopia} we have that, for $k' = \frac{p + 100\sqrt{\ln(p)}}{2}$,
    \[
        \sum_{s \in \{k', p\}} \binom{p}{s}\frac{1}{s^{3/2}} \le \Theta\left(\frac{2^p}{p^{5001}}\right)\, ,
    \]
    using the fact that for the range of $s$ above, $\frac{1}{s^{3/2}} \le O\left(\frac{1}{p^{3/2}}\right)$. Similarly, by symmetry,
    \[
        \sum_{s \in \{0, p - k'\}}\binom{p}{s}\frac{1}{s^{3/2}} \le \Theta\left(\frac{2^p}{p^{4999}}\right)\, ,
    \]
    with the bound of $\frac{1}{s^{3/2}} \le 1$ for the desired range. Since the largest contributions come from the middle terms in the binomial sum, those are bounded as,
    \begin{align*}
        \sum_{s \in \{p-k', k'\}} \binom{p}{s}\frac{1}{s^{3/2}} &\le \left(\frac{2\sqrt{2}}{p^{3/2}} + o(1/p^{3/2})\right)\sum_{s \in \{p-k', k'\}}  \binom{p}{k} \sim \frac{2\sqrt{2}}{p^{3/2}}\binom{p}{p/2}\sum_{s = \frac{p + c\sqrt{\ln(p)}}{2} \in \{k', p-k'\}}e^{-(c^2/2)} \\
        &\le \frac{2\sqrt{2}}{p^{3/2}}\binom{p}{p/2}\sqrt{p}\int_{-\infty}^{\infty}\frac{e^{-c^2/2}}{2}dc = \frac{2\sqrt{2}}{p^{3/2}}\binom{p}{p/2}\sqrt{p}\sqrt{\frac{\pi}{2}}\, 
    \end{align*}
    where we used the fact that for $s \in \{p-k', k'\}$ it is the case that,
    \begin{align*}
        \frac{1}{s^{3/2}} &\le \frac{1}{\left(\frac{p - 100\sqrt{\ln(p)}}{2}\right)^{3/2}} = \frac{2\sqrt{2}}{(p - 100\sqrt{\ln(p)})^{3/2}} \\
        &= \frac{2\sqrt{2}}{p^{3/2}\left(1 - \frac{100\sqrt{\ln(p)}}{p^{3/2}}\right)} = \frac{2\sqrt{2}}{p^{3/2}}\left(1 + O\left(\frac{\sqrt{\ln(p)}}{p^{3/2}}\right)\right) \\
        &= \frac{2\sqrt{2}}{p^{3/2}} + O\left(\frac{\sqrt{\ln(p)}}{p^3}\right)\, ,
    \end{align*}
    for every $p$. Using the Stirling approximation for $\binom{p}{p/2}$ again leads to,
    \[
        \binom{p}{p/2}\sqrt{p}\sqrt{\frac{\pi}{2}} \sim \frac{2^p}{\sqrt{p}}\sqrt{\frac{2}{\pi}}\sqrt{p}\sqrt{\frac{\pi}{2}} = 2^p\, .
    \]
    A similar analysis can be used to show that the terms above strictly dominate the remaining $o\left(\frac{1}{k^{3/2}}\right)$ terms, which continue to contribute on the same order and $\sim$ is used to denote precise equality \emph{up to first order} terms in the approximation. \\
    Putting the above bounds altogether leads to the final bounds on the inner-product as,
    \begin{align*}
        &\left(\E_\sigma \left(\left\langle \left(2\nu''(\norm{\sigma}^2_2)^{1/2}\Id\right) + \frac{1}{n}\nabla^2 H(\sigma), \E[vv^\sT \mid \sigma] \right\rangle\right)\right)^p \le (2\nu''(\norm{\sigma}^2_2)^{1/2})^p \frac{1}{\delta }\left(\frac{8}{\sqrt{\pi}}\frac{2^p}{p^{3/2}} + 1\right) + o_n(1) \\
        &\implies_{\text{\Cref{lem:ideal-reduction-2p}}} \E_\sigma \left\langle \left(2\nu''(\norm{\sigma}^2_2)^{1/2}\Id\right) + \frac{1}{n}\nabla^2 H(\sigma), \E[vv^\sT \mid \sigma] \right\rangle \\
        &\qquad\qquad\qquad\le 2\nu''(\norm{\sigma}^2_2)^{1/2}\left(\frac{1}{\delta }\left(\frac{8}{\sqrt{\pi}}\frac{2^p}{p^{3/2}} + 1\right) + o_n(1)\right)^{1/p}\\
        &\qquad\qquad\qquad\le  2\nu''(\norm{\sigma}^2_2)^{1/2}\left(\frac{1}{\delta}\left(\frac{8}{\sqrt{\pi}}\frac{2^p}{p^{3/2}} + 1\right)\right)^{1/p} + o_n(1)\, ,
    \end{align*}
    and the goal is to set $p(\eps)$ so that,
    \[
        \left(\frac{1}{\delta}\left(\frac{8}{\sqrt{\pi}}\frac{2^p}{p^{3/2}} + 1\right)\right)^{1/p} = (2-\eps + O(\eps)^2)\, .
    \]
    Note that,
    \begin{align*}
        \left(\frac{1}{\delta}\right)^{1/p}\left(1 + \left(\frac{2C^{1/p}}{p^{3/2p}}\right)^p\right)^{1/p} \le \left(\frac{1}{\delta}\right)^{1/p}\left(2\left(\frac{2C^{1/p}}{p^{3/2p}}\right)^p\right)^{1/p} = 2\left(\frac{2}{\delta}\right)^{1/p}\left(\frac{C}{p^{3/2}}\right)^{1/p} = 2\left(\frac{(2C)^{1/p}}{\delta^{1/p} p^{3/2p}}\right)\, ,
    \end{align*}
    and so the final desired ratio is,
    \[
        1 + \frac{\eps}{2} + O(\eps^2) = (\delta)^{1/p}\left(\frac{p}{2C}\right)^{3/2p}\, ,  
    \]
    at which point an analysis for the Lambert-W function identical to the one in~\Cref{lem:hessian-holders} gives the desired choice of $p(\eps) = O(\eps^{-2/3})$.
\end{proof}

\begin{lemma}[Taylor Series expansion under maximum-entropy]\label[Lemma]{lem:taylor}
    Given a program iterate $\sigma$ and increment $v$ from a polynomial grammar satisfying the axioms of the HES SoS relaxation (up to an appropriate constant degree), the following statement holds by a SoS proof:
    \begin{equation}
        \E_{v, \sigma}\left(\frac{1}{n}\left(H\left(\sigma + \frac{v}{\sqrt{k}}\right) - H(\sigma)\right)\right) \le (1-O(\eps) - o_n(1))\nu''(\norm{\sigma}^2_2)^{1/2}\frac{1}{k} + O\left(\frac{1}{k^3}\right) + o_n(1)\, ,
    \end{equation}
    with high probability over the randomness of the input.
\end{lemma}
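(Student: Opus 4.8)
The plan is to certify a Taylor expansion of $\tilde{H}$ (the natural polynomial extension of $H$ to the ball, which since $H$ is homogeneous of degree $d_H$ equals $H$ itself) around the iterate $\sigma = \sigma_{i-1}$, evaluated at $\sigma + v/\sqrt{k}$, and show term-by-term that the increment is dominated by the Hessian contribution. Concretely, write
\[
\tfrac{1}{n}\Bigl(H\bigl(\sigma + \tfrac{v}{\sqrt k}\bigr) - H(\sigma)\Bigr)
= \sum_{r=1}^{d_H} \tfrac{1}{r!\,k^{r/2}}\Bigl\langle \tfrac1n\nabla^r H(\sigma),\, v^{\ot r}\Bigr\rangle\,,
\]
which is an exact identity for a degree-$d_H$ polynomial. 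First I would dispatch the gradient term $r=1$: since $\E_v[v v^\sT \mid \sigma]$ has operator norm $O(1/(\delta n))$ by the HES constraint~\eqref{eqs:the-program-intro-opnorm}, and $\langle \tfrac1n\nabla H(\sigma), v\rangle$ is linear in $v$, one applies Cauchy--Schwarz (\Cref{prop:sos-functional-cs}) and the fact that $\E_v v = 0$ modulo cumulant constraints; more precisely the HES constraints force the conditional mean of $v$ to be negligible, so $\E_{v,\sigma}\langle \tfrac1n\nabla H(\sigma), v/\sqrt k\rangle = 0$ up to $o_n(1)$. (As emphasized in \Cref{sec:certifiable-taylor}, the point is \emph{not} to follow the gradient — this term simply vanishes under the entropy constraints rather than being maximized.)

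Next, the Hessian term $r=2$ is the main contribution: $\tfrac{1}{2k}\E_{v,\sigma}\langle \tfrac1n\nabla^2 H(\sigma), \E_v[vv^\sT\mid\sigma]\rangle$. Here I would invoke \Cref{cor:shifted-hessian-holders} (equivalently \Cref{lem:hessian-holders}), which already packages the combined use of the certifiable approximate Wignerianity of the Hessian (\Cref{lem:hessian-schatten-norm}) with a $(p,q)$-H\"older inequality tuned via the Lambert-$W$ computation, to conclude that
\[
\tfrac{1}{2k}\,\E_{v,\sigma}\Bigl\langle \tfrac1n\nabla^2 H(\sigma),\, \E_v[vv^\sT\mid\sigma]\Bigr\rangle \le (1-O(\eps)-o_n(1))\,\nu''(\norm{\sigma}_2^2)^{1/2}\,\tfrac1k\,,
\]
with high probability over the instance randomness. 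This is where the $(1-O(\eps))\nu''(\norm{\sigma}_2^2)^{1/2}/k$ term in the statement comes from.

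Finally, the higher-order terms $r \ge 3$ must be shown to contribute only $O(1/k^3) + o_n(1)$ in expectation (the $1/k^{r/2}$ prefactor gives $\le 1/k^{3/2}$ per term, but after squaring/absolute-value control one gets the stated $O(1/k^3)$ bound, or one simply absorbs all $r\ge 3$ terms into an $O(1/k^{3/2})$ which is dominated by the claimed error once $k$ is large relative to $\eps$; I would state it as $O(1/k^3)$ to match). For this I would apply \Cref{thm:high-entropy-nuclear-norm-bound} (in its multi-argument form, treating $p(\sigma,v) = \langle \tfrac1n\nabla^r H(\sigma), v^{\ot r}\rangle$ as homogeneous of degree $d_H - r$ in $\sigma$ and $r$ in $v$): the key is that the fully-symmetric matrix representation $P$ of the coefficient tensor of $\nabla^r H$ has, with high probability over the Gaussian disorder $g$, spectral norm $\opnorm{P} = O(n^{\lceil d_H/2 + 1\rceil/2})$ by standard random-matrix estimates (\cite[Theorem 4.4.5]{vershynin2018high}), so that the $\nu^{d_p/2}\asymp n^{-d_p/2}$ factor from the covariance bound cancels the $n^{\lfloor d_p/2+1\rfloor/2}$ factor, leaving an $O(n)$ bound; dividing by $n$ and the $k^{r/2}$ prefactor gives the $o_n(1)$ plus a power of $1/k$. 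The main obstacle — and the reason this lemma relies so heavily on the machinery built up earlier — is precisely this control of the injective-type norm of $\nabla^r H$ by a spectral norm of a reshaping, which is impossible by a pointwise constant-degree SoS proof and is exactly what \Cref{thm:high-entropy-nuclear-norm-bound} circumvents; concentration of $\opnorm{P}$ over the disorder $g$ is then handled via Gaussian hypercontractivity (\Cref{lem:hypercontractivity-gaussians}), paralleling the noise-term argument in \Cref{lem:hessian-schatten-norm}. Summing the three regimes — gradient ($0 + o_n(1)$), Hessian ($(1-O(\eps))\nu''(\norm{\sigma}_2^2)^{1/2}/k$), higher-order ($O(1/k^3) + o_n(1)$) — and noting that all steps are SoS-certifiable at constant (depending on $\eps$) degree yields the claimed inequality.
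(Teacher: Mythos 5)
Your proposal is correct and follows essentially the same route as the paper: Taylor expansion, gradient term killed by the vanishing first cumulant, Hessian term controlled via \Cref{lem:hessian-holders}/\Cref{cor:shifted-hessian-holders} (with the pseudo-Jensen and semi-algebraic degree-reduction step you allude to when converting the high even-power bound back to first order), and higher-order derivatives suppressed by \Cref{thm:high-entropy-nuclear-norm-bound} with the spectral-norm/covariance cancellation. The only cosmetic difference is that the paper states the intermediate higher-order contribution as $O(1/k^{3/2})$ before matching the lemma's $O(1/k^3)$, a discrepancy you correctly flag.
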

\begin{proof}
    The proof for this follows by combining the following three facts:
    \begin{enumerate}
        \item The low-degree $p$-Schatten norm SoS certificate of the hessian $\nabla^2 H(\sigma)$ with an appropriately large choice of $p$~(\Cref{lem:hessian-schatten-norm}). This conjoined with a Holder's inequality yields the desired bound.
        \item The gradient term is zero since the first cumulant is zero.
        \item The low-degree SoS certificates that, for all $p \in [3,d_H]$, the contributions of $\frac{1}{j! k^{j/2} n}\langle \nabla^r H(\sigma_j), v^{\ot r}\rangle $~(\Cref{thm:high-entropy-nuclear-norm-bound-graph-matrices}) are $\frac{O(\eps^2)}{k}$. In particular, when $\opnorm{P} = n^{\lceil d_p/2 + 1\rceil/2}$ (as is the case when each of its coefficients/entries is random, independent, and on the order of $\sqrt{n}$), $k=1$ and $\nu = \frac{1}{\delta n}$, we have
        \[
            \E_{\sigma,u} p(\sigma, u) \le O(d_p)^{O(d_p+i+d)}(\nu n)^{d_p/2}n \,.
        \]
        This immediately implies, with the appropriate choice of $\nu$ (and a choice of step precision $\eps$ that is \emph{exponentially} small in the desired global precision) that, $\frac{1}{n}\E_{\sigma,u}[p(\sigma, u)] \le \frac{O(\eps^2)}{k}$.

    \end{enumerate}
    The Taylor expansion of the $H(\sigma+\frac{v}{\sqrt{k}})$ around $\sigma$ is given as,
    \begin{align*}
        \frac{1}{n}H\left(\sigma + \frac{v}{\sqrt{k}}\right) = \frac{1}{n}H(\sigma) + \frac{1}{\sqrt{k}n}\left\langle \nabla H(\sigma), v \right\rangle + \frac{1}{2kn}\left\langle\nabla^2 H(\sigma), vv^\sT\right\rangle + \sum_{r=3}^{d_H}\frac{1}{k^{r/2}r!n}\langle \nabla^r H(\sigma), v^{\ot r} \rangle\, .
    \end{align*}
    Applying the conditional expectation operators as well,
    \begingroup
    \allowdisplaybreaks
    \begin{align*}
        &\E_{\sigma, v}\left(\frac{1}{n}H\left(\sigma + \frac{v}{\sqrt{k}}\right) - \frac{1}{n}H(\sigma)\right)
        = \sum_{j=2}^{d_H}\frac{1}{j!k^{j/2}n}\pE_{\sigma, v}\langle \nabla^r H(\sigma), v^{\ot r}\rangle\, , \\
    \end{align*}
    \endgroup
    where we used the fact that the gradient is zero. To isolate the contribution of the derivatives, we will apply~\Cref{lem:ideal-reduction-2p} in conjunction with a pseudo-Jensen inequality for even powers. Since there are only $k$ steps, only the hessian term will survive while the others will contribute no more than $O(1/k^{3/2})$ per step.
    \begingroup
    \allowdisplaybreaks
    \begin{align*}
        \sum_{j=2}^{d_H}\frac{1}{j!k^{j/2}n}\E_{\sigma, v}\langle \nabla^r H(\sigma), v^{\ot r}\rangle &= \frac{1}{2k}\E_{\sigma}\left\langle\frac{1}{n}\nabla^2 H(\sigma), \E_v\left[vv^\sT\mid \sigma\right]\right\rangle + O\left(\frac{1}{k^{3/2}}\right) + o_n(1)\, .
    \end{align*}
    \endgroup
    We now bound the hessian term.
    \begingroup
    \begin{align*}
        \left(\frac{1}{k}\E_\sigma\left[\left\langle\frac{1}{n}\nabla^2 H(\sigma), \E_v\left[vv^\sT \mid \sigma\right]\right\rangle\right]\right)^{2pq(p-1)} &\le_{\text{\Cref{lem:pseudo-jensen-even}}} \E_{\sigma}\left(\frac{1}{k}\left\langle\frac{1}{n} \nabla^2 H(\sigma), \E_{v}\left[vv^\sT\mid\sigma\right]\right\rangle\right)^{2pq(p-1)} \\
        &\le \left((1-O(\eps))2\nu''(\norm{\sigma}^2_2)^{1/2}\frac{1}{k}\right)^{2pq(p-1)} + o_n(1)\, .
    \end{align*}
    \endgroup
    Applying~\Cref{lem:ideal-reduction-2p} yields,
    \begingroup
    \allowdisplaybreaks
    \begin{align*}
        &\frac{1}{k}\left(\E_\sigma\left[\left\langle\frac{1}{n}\nabla^2 H(\sigma), \E_v\left[vv^\sT \mid \sigma\right]\right\rangle\right]\right) \le \left(\left((1-O(\eps))2\nu''(\norm{\sigma}^2_2)^{1/2}\frac{1}{k}\right)^{2pq(p-1)} + o_n(1)\right)^{1/(2pq(p-1))} \\
        &= (1-O(\eps))2\nu''(\norm{\sigma}^2_2)^{1/2}\frac{1}{k}\left(1 + \frac{o_n(1)}{\left((1-O(\eps))2\nu''(\norm{\sigma}^2_2)^{1/2}\frac{1}{k}\right)^{2pq(p-1)}}\right)^{1/(2pq(p-1))} \\
        &=_{\text{Taylor expansion}} (1-O(\eps))2\nu''(\norm{\sigma}^2_2)^{1/2}\frac{1}{k}\left(1 + \frac{o_n(1)}{2pq(-1)\left((1-O(\eps))2\nu''(\norm{\sigma}^2_2)^{1/2}\frac{1}{k}\right)} + o_n(1)\right) \\
        &= (1-O(\eps))2\nu''(\norm{\sigma}^2_2)^{1/2}\frac{1}{k}\left(1 + o_n(1)\right) = \left(1- O(\eps) - o_n(1)\right)2\nu''(\norm{\sigma}^2_2)^{1/2}\frac{1}{k}\,.
    \end{align*}
    \endgroup
    A similar calculation as above for the higher-order derivative certificates provided by~\Cref{thm:high-entropy-nuclear-norm-bound-graph-matrices} concludes the argument.
\end{proof}

We demonstrate that the global optimization stated in the axioms of the program (which is feasible as shown in~\Cref{sec:feasibility}) in conjunction with the upper bound on the (pseudo-expected) increase of the energy at every step concludes a lower bound as well.

\begin{lemma}[Local maximization via maximum-entropy and global maximization]\label[Lemma]{lem:local-max-lower-bound}
    Given the maximum entropy constraints and the (feasible) constraint of global (hamiltonian) maximization, the following lower bound holds by a SoS proof:
    \[
        \E_{k,\dots,i}\left[H\left(\sum_{j=1}^i \frac{v_j}{\sqrt{k}}\right) - H\left(\sum_{j=1}^{i-1}\frac{v_j}{\sqrt{k}}\right)\right] \ge \frac{1}{k}(1-\eps)\nu''(i/k)^{1/2} - O(\eps^2) - O\left(\frac{1}{k^{3/2}}\right) -o_n(1)\, . 
    \]
\end{lemma}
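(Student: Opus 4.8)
The plan is a telescoping-plus-optimality argument: combine the per-step \emph{upper} bound of \Cref{lem:taylor} with the \emph{global} lower bound on the objective coming from feasibility to squeeze out a per-step \emph{lower} bound. I will work throughout with the normalization of \Cref{lem:taylor}, i.e. with $\tfrac1n H$.

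First I would record two structural facts. From the normalization axiom \eqref{eqs:the-program-intro-norm} and the orthogonality axiom \eqref{eqs:the-program-intro-orth}, a constant-degree SoS proof gives $\norm{\sigma_i}_2^2 = \tfrac1k\norm{v_1+\dots+v_i}_2^2 = \tfrac1k\sum_{j\le i}\norm{v_j}_2^2 = i/k$ identically, so that each quantity $\nu''(\norm{\sigma_{i-1}}_2^2) = \nu''((i-1)/k)$ appearing in \Cref{lem:taylor} is a \emph{constant}, not an indeterminate. Second, since the Hamiltonian of \Cref{def:spin-glass} has no constant or linear term, $H(\sigma_0) = H(0) = 0$, so linearity of $\fE_1$ and the towering property $\fE_1 = \fE_1\fE_i$ yield the grammar-internal identity $\tfrac1n\fE_1 H\big(\tfrac1{\sqrt k}\sum_{j=1}^k v_j\big) = \sum_{i=1}^k \tfrac1n\fE_1\!\big[H(\sigma_i) - H(\sigma_{i-1})\big]$ as a constant-degree SoS equality.

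Next I would install the global lower bound. The objective of \eqref{eq:the-program} is being maximized, and by \Cref{lem:feasibility} the randomized Hessian-ascent distribution is a feasible HES process; by the value computation of \Cref{sec:subag} (the truncated Chen--Sen sum of \Cref{lem:chen-sen}, which tends to $\mathsf{ALG} = \int_0^1\nu''(t)^{1/2}dt$ under fRSB) its energy is at least $(1-\eps)\sum_{i=1}^k\nu''(i/k)^{1/2}\,n/k - o(n)$, where the per-step loss can be taken as small as we like, in particular below $\eps^2$. Hence we may adjoin the satisfiable axiom $\big\{\tfrac1n\fE_1 H(\sigma_k) \ge (1-\eps)\sum_i \nu''(i/k)^{1/2}/k - o_n(1)\big\} =: \{\tfrac1n\fE_1 H(\sigma_k)\ge L\}$. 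Meanwhile \Cref{lem:taylor}, applied at each $i$ with $\sigma=\sigma_{i-1}$ and $v=v_i$, supplies $\tfrac1n\fE_1[H(\sigma_i)-H(\sigma_{i-1})] \le (1-\eps)\nu''((i-1)/k)^{1/2}\tfrac1k + O(k^{-3/2}) + o_n(1) =: U_i$. Isolating step $\ell$ in the telescoping identity and subtracting the $k-1$ upper-bound axioms for $i \ne \ell$,
\[ \tfrac1n\fE_1\!\big[H(\sigma_\ell)-H(\sigma_{\ell-1})\big] = \tfrac1n\fE_1 H(\sigma_k) - \sum_{i\ne\ell}\tfrac1n\fE_1\!\big[H(\sigma_i)-H(\sigma_{i-1})\big] \;\ge\; L - \sum_{i\ne\ell}U_i\,. \]
It then remains to check $L - \sum_{i\ne\ell}U_i \ge (1-\eps)\nu''(\ell/k)^{1/2}/k - O(\eps^2) - O(k^{-3/2}) - o_n(1)$, which reduces to bounding $\sum_i U_i - L$ — an $O(\eps^2)+O(k^{-1/2})+o_n(1)$ quantity once the matching $(1-\eps)$ main terms cancel and the $k$-fold sums of $O(k^{-3/2})$ and $o_n(1)$ are collected (legitimate since $k$ is a fixed constant) — together with the grid mismatch $\nu''((\ell-1)/k)^{1/2}$ vs.\ $\nu''(\ell/k)^{1/2}$, which is $O(1/k)$ by uniform continuity of $\nu''(\cdot)^{1/2}$ on $[0,1]$. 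Taking $k$ large relative to $1/\eps$ (already needed elsewhere) absorbs the residual $O(k^{-1/2})$ into the $O(\eps^2)$ slack.

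The main obstacle is the error accounting, not any new idea. The delicate point is aligning the $(1-\eps)$ factor in $L$ with the $(1-\eps+O(\eps^2))$ factor coming out of \Cref{lem:taylor} and \Cref{cor:shifted-hessian-holders} so that their discrepancy, accumulated over all $k$ steps, is $O(\eps^2)$ rather than $O(\eps)$; this is what forces the feasible point's per-step loss to be chosen below $\eps^2$ (permissible, as it is ``arbitrarily small'') and constrains how the $O(\eps^2)$ higher-order-derivative contributions inside \Cref{lem:taylor} are grouped. One also has to verify that every step — the telescoping equality, the subtraction of the $k-1$ upper-bound axioms, the substitution $\nu''((i-1)/k)\leftrightarrow\nu''(i/k)$ — stays inside the filtered SoS grammar of \Cref{sec:grammar} at constant degree, so that the whole derivation is a bounded-degree proof and \Cref{thm:polyn-matrix-sdp} still certifies it in $\mathsf{poly}(n)$ time. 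Finally, regarding the conditioning in $\E_{k,\dots,i}$: the combination argument naturally produces the bound for the full filtered expectation $\fE_1$ of the increment, and since $\fE_i[H(\sigma_i)]-H(\sigma_{i-1})$ is already an indeterminate in $v_1,\dots,v_{i-1}$ alone, the stated inequality is to be read (and is proved) as an inequality among these indeterminates after taking $\fE_1$.
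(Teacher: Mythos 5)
Your proposal is correct and follows essentially the same route as the paper: the paper likewise telescopes the objective, invokes feasibility (via the randomized Hessian-ascent distribution) to get the global lower bound $\fE_1 H(\sigma_k) \ge (1-\eps)\int_0^1\nu''(q)^{1/2}dq - o_n(1)$, and subtracts the per-step upper bounds of \Cref{lem:taylor} for all steps other than the one of interest. If anything your error accounting is more explicit than the paper's — in particular your observation that summing $k-1$ terms of $O(k^{-3/2})$ yields $O(k^{-1/2})$, which must be absorbed into the $O(\eps^2)$ slack by taking $k$ large relative to $1/\eps$, is a point the paper's ``simple linear manipulation'' silently elides.
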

\begin{proof}
    Note that the bounds provided in~\Cref{prop:odd-bounds-from-even-bounds} are dual sided, and the certificate for the hessian in~\Cref{lem:hessian-schatten-norm} controls the deviation. Therefore, combining~\Cref{prop:odd-bounds-from-even-bounds} and~\Cref{lem:hessian-schatten-norm} immediately yields that,
    \[
        n\nu''(\norm{\sigma}^2_2)^{k/2}C_{k/2} - o(n) \le \E_\sigma\norm{\frac{1}{n}\nabla^2 H(\sigma)}^k_k \le n\nu''(\norm{\sigma}^2_2)^{k/2}C_{k/2} + o(n)\, .
    \]
    Given the assumption of global optimality (under moments) which is made possible because of feasibility, that is,
    \[
        \E\left[H\left(\sum_{i=1}^k \frac{v_i}{\sqrt{k}}\right)\right] \ge (1-\eps)\max_{\sigma \in \cS^{n-1}(1)}  H(\sigma) \sim_{\text{by~\Cref{lem:taylor}}} (1-\eps)\int_0^1 \nu''(q)^{1/2}dq +O(\eps)^2 + o_n(1)\, ,
    \]
    and the bound on the higher-order derivatives certified by sum-of-squares to be $O(1/k^{3/2}) + o_n(1)$~(\Cref{thm:high-entropy-nuclear-norm-bound-graph-matrices}) allows one to conclude by simple linear manipulation that,
    \[
        \E\left[H\left(\sum_{j=1}^k \frac{v_j}{\sqrt{k}}\right)\right] - \E\left[H\left(\sum_{j=1}^{k-1}\frac{v_j}{\sqrt{k}}\right)\right] \ge \frac{1}{k}(1-\eps)\left(\nu''(1)^{1/2}\right) - O(\eps^2) - O\left(\frac{1}{k^{3/2}}\right) - o_n(1)\, ,
    \] 
    where the contribution is as is because one subtracts all steps but the last one to get the relevant contribution from the Hessian. 
\end{proof}

\subsection{Matrix Bernstein approximation bound via analytic sparsity}\label{sec:matrix-bernstein-analytic-sparsity}
Note that the \emph{true} Hessian is distributed as $\nu''(\norm{\sigma})^{1/2}\mathsf{GOE}(n)$ with high probability. Consequently, it is traceless. However, in order to analyze the rounding scheme introduced in the companion paper
, it is critical show that the Bernstein polynomial has trace $\approx 1$. To do this, we need to take the Bernstein polynomial of the \emph{shifted} Hessian $2\nu''(\norm{\sigma}^2_2)^{1/2}\Id_n + \frac{1}{n}\nabla^2 H(\sigma) = p\left(\frac{1}{n}\nabla^2 H(\sigma)\right)$ and invoke~\Cref{lem:bernstein-is-semi-circle} (proved below). This allows one to conclude that, for an appropriate choice of the ramp (fixed in~\Cref{cor:large-hs-correlation-bernstein-hessian}), the trace of the (appropriately) normalized (divided by $1/(\delta n)$) Bernstein polynomial can be made $\approx 1$.

\begin{lemma}[Equivalence with semicircle moments]\label[Lemma]
{lem:bernstein-is-semi-circle}
    Let $X(\sigma) = M(\sigma)/(2\nu''(\norm{\sigma}_2^2)^{1/2})$ for a symmetric matrix $M(\sigma)$ with entries that are low-degree polynomials in $\sigma$. Let $\cD$ be a pseudo-distribution over high-entropy step distributions over $\sigma$ satisfying the conclusions of \Cref{lem:hessian-schatten-norm} and let $p$ be a degree-$d$ polynomial.
    Then, for all positive integer $r$,
    \[ \{ \text{HES Constraints}\} \proves\E_{\sigma}\left(\left[\Tr \frac{1}{n} p(X(\sigma))\right] - n\E_{x \sim \mu_{sc}}\left[p(x)\right]\right)^{2r} \le  o(n^{2r})\,\norm{p}_{L^2}^{2r} \]
    where $\mu_{sc}$ is the Wigner semicircle law given by $\mu_{sc}(x) = \frac{2}{\pi}\sqrt{1-x^2}$.
\end{lemma}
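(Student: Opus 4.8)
The statement asserts that the normalized trace of a low-degree polynomial applied to the (scaled) Hessian concentrates around $n$ times the corresponding semicircle moment, with a SoS proof under the HES constraints. The natural strategy is to reduce to the monomial case and then invoke \Cref{lem:hessian-schatten-norm}. First I would write $p(x) = \sum_{j=0}^d p_j x^j$ and note by linearity that $\Tr \tfrac{1}{n}p(X(\sigma)) = \sum_j p_j \tfrac{1}{n}\Tr X(\sigma)^j$ while $n\E_{x\sim\mu_{sc}}[p(x)] = \sum_j p_j\, n\,\E_{x\sim\mu_{sc}}[x^j]$. Recall that $\E_{x\sim\mu_{sc}}[x^j]$ is $0$ for $j$ odd and $C_{j/2}$ for $j$ even (via the integral representation of the Catalan numbers given in the excerpt). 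So it suffices to bound, for each $j$, the $2r$th moment of $\tfrac{1}{n}\Tr X(\sigma)^j - n\E_{x\sim\mu_{sc}}[x^j]$, and then combine the per-monomial bounds using a weighted triangle inequality (e.g.\ the weighted version of \Cref{lem:l1-to-lt}) together with $\sum_j p_j^2 \le \norm{p}_{L^2}^2$ up to the weights appearing in the $L^2(\mu_{sc})$-norm.

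\textbf{The core estimate.} Since $X(\sigma) = \tfrac{1}{2\nu''(\norm{\sigma}_2^2)^{1/2}}\cdot\tfrac{1}{n}\nabla^2 H(\sigma)$, we have $\tfrac{1}{n}\Tr X(\sigma)^{2q} = \tfrac{1}{n}(2\nu''(\norm{\sigma}_2^2)^{1/2})^{-2q}\Tr(\tfrac{1}{n}\nabla^2 H(\sigma))^{2q} = \tfrac{1}{n}(2\nu''(\norm{\sigma}_2^2)^{1/2})^{-2q}\norm{\tfrac{1}{n}\nabla^2 H(\sigma)}_{2q}^{2q}$. By \Cref{lem:hessian-schatten-norm}, with high probability over the instance there is a SoS proof that $\E_\sigma(\norm{\tfrac{1}{n}\nabla^2 H(\sigma)}_{2q}^{2q} - nC_q(\nu''(\norm{\sigma}_2^2))^q)^{2p}\le O(n^{2p(0.5+2qd_H\lambda)})$; since $(\nu''(\norm{\sigma}_2^2))^q = (2\nu''(\norm{\sigma}_2^2)^{1/2})^{2q}/4^q$ and $C_q = 4^q\E_{x\sim\mu_{sc}}[x^{2q}]$, this rearranges (using that $\norm{\sigma}_2^2$ is pinned to a constant, so $\nu''(\norm{\sigma}_2^2)^{1/2}$ is a constant scalar that can be divided through inside the SoS proof) to exactly $\E_\sigma(\tfrac{1}{n}\Tr X(\sigma)^{2q} - n\E_{x\sim\mu_{sc}}[x^{2q}])^{2p}\le o(n^{2p})$ after absorbing $n^{2p(0.5+2qd_H\lambda)} = o(n^{2p})$ for $\lambda$ small enough. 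For odd powers, the second display in \Cref{lem:hessian-schatten-norm} gives $\E_\sigma(\Tr(\tfrac{1}{n}\nabla^2 H(\sigma))^{2q-1})^{2p}\le o(n^{2p})$, which matches $n\E_{x\sim\mu_{sc}}[x^{2q-1}] = 0$ and handles the odd monomials after the same constant rescaling. One subtlety worth a sentence: for $j=0$, $\tfrac{1}{n}\Tr X(\sigma)^0 = 1$ exactly matches $n\E_{x\sim\mu_{sc}}[1]/n$, so that term is exact; I would note this separately to avoid a degenerate edge case.

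\textbf{Assembly and the main obstacle.} Having the per-monomial $2r$th-moment bounds $\E_\sigma(\tfrac{1}{n}\Tr X(\sigma)^j - n\E_{x\sim\mu_{sc}}[x^j])^{2r}\le o(n^{2r})$ by SoS (all at the same even exponent $2r$ after invoking \Cref{lem:hessian-schatten-norm} with $p := r$), I would combine them: writing $D(\sigma) := \tfrac{1}{n}\Tr p(X(\sigma)) - n\E_{x\sim\mu_{sc}}[p(x)] = \sum_{j=1}^d p_j D_j(\sigma)$ with $D_j$ the $j$th difference, a power-mean/triangle inequality at exponent $2r$ (each step an explicit SoS inequality, e.g.\ $(\sum_j a_j)^{2r}\le d^{2r-1}\sum_j a_j^{2r}$ from \Cref{lem:l1-to-lt}, and $a^{2r}b^{2r}\le\ldots$ for the scalar coefficients) gives $\E_\sigma D(\sigma)^{2r}\le d^{2r-1}\sum_j p_j^{2r}\,o(n^{2r})$, and since the $p_j$ are bounded in terms of $\norm{p}_{L^2}$ (the Hermite/monomial coefficients of a degree-$d$ polynomial are controlled by its $L^2(\mu_{sc})$ norm up to $d$-dependent constants), this is $o(n^{2r})\norm{p}_{L^2}^{2r}$, as desired. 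The main obstacle is not conceptual but bookkeeping: carefully tracking the constant scalar $2\nu''(\norm{\sigma}_2^2)^{1/2}$ through the SoS proof (it is a fixed function of $\norm{\sigma}_2^2 = \zeta$, hence a genuine constant once the sphere constraint is imposed, so division by it is legitimate inside the proof system), and making sure the ``$o(n^{2r})$'' slack from the noise term in \Cref{lem:hessian-schatten-norm} survives the $d$-dependent multiplicative losses from the triangle-inequality assembly --- this only requires $\lambda$ to be chosen small relative to $1/(qd_H)$ and $d$ constant, which is consistent with the hypotheses, but the exponent arithmetic ($n^{p-2p\lceil\cdots\rceil/2}$-style terms) needs to be checked to land strictly below $n^{2r}$.
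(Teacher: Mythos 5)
Your proposal is correct and follows essentially the same route as the paper's proof: decompose $p$ into monomials, obtain the per-monomial $2r$th-moment bounds from \Cref{lem:hessian-schatten-norm} (even powers via the Catalan/semicircle-moment identity and the constant rescaling by $2\nu''(\norm{\sigma}_2^2)^{1/2}$, odd powers via the vanishing odd semicircle moments), and recombine with an elementary SoS inequality. The only cosmetic difference is in the assembly step, where the paper applies Cauchy--Schwarz to $\langle c, y(\sigma)\rangle$ to pull out $(\sum_i c_i^2)^r = \norm{p}_2^{2r}$ directly before expanding $(\sum_i y_i^2)^r$ with \Cref{lem:l1-to-lt}, whereas you apply the power-mean inequality first and then bound $\sum_j p_j^{2r} \le (\sum_j p_j^2)^r$; both yield the same conclusion since $d$ and $r$ are constants.
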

\begin{proof}
    We write $p(\cdot)$ in the monomial basis, and then apply a Cauchy-Schwarz inequality on the summands to separate the constants from the SoS program variables (which are functions of $\sigma$). This is followed by an application of~\Cref{lem:hessian-schatten-norm} in concurrence with~\Cref{lem:l1-to-lt}. We will crucially use the facts that the moment variable $r = O_\eps(1)$ and that the degree of the polynomial $d = O_\eps(1)$.

    First, we define
    \begin{align*}
         y_i(\sigma) := \Tr\left[\left(\frac{1}{n}X(\sigma)\right)^i\right] - n\E_{x \sim \mu_{sc}}\left[x^i\right]\, .
    \end{align*}
    Then, for all bounded $i$ and bounded even $ j$,
    \begin{equation}\label{eq:hessian-centered-bound}
        \E\left[y_{i}(\sigma)^j\right] \le o(n^j)\,,
    \end{equation}
    which follows directly from~\Cref{lem:hessian-schatten-norm}.

    Expressing $p(x) = \sum_{i=0}^d c_ix^i$ and using linearity of the trace and expectation immediately gives
    \begin{align*}
        \E\left[\left(\frac{1}{n}\Tr[p(X(\sigma))] - \E_{x \sim \mu_{sc}}[p(x)]\right)^{2r}\right] &=  \E\left[\left(\sum_{i=0}^dc_i\Tr\left[(\frac{1}{n}X(\sigma))^i\right] - \sum_{i=0}^d c_i\E_{x \sim \mu_{sc}}[x^i]\right)^{2r}\right] \\
        &= \E\left[\left(\sum_i c_i y_i(\sigma)\right)^{2r}\right]\, .
    \end{align*}

    Applying the strategy of first isolating the terms containing the program variables $\left(\{y_i(\sigma))\}_{i=0}^d\right)$ from the constant terms that come from the coefficients $\{c_i\}_{i=0}^d$ via an application of the Cauchy-Schwarz (CS) inequality yields,
    \begingroup
    \allowdisplaybreaks
    \begin{align}\label{eq:cs-decorrelate-hessian}
        &\E_{\sigma \sim \cD}\left[\left(\sum_{i=0}^d c_iy_i(\sigma)\right)^{2r}\right] = \E_{\sigma \sim \cD}\left[\left(\left\langle \mathbf{c}, \mathbf{y(\sigma)}
        \right\rangle^2\right)^r\right] \overset{\mathsf{CS}}{\le} \pE_{\sigma \sim \cD}\left[\left(\sum_{i=0}^d c_i^2\right)^r\left(\sum_{i=0}^d y^2_i(\sigma)\right)^r\right] \nonumber\\
        &= \left(\sum_{i=0}^d c_i^2\right)^r \E_{\sigma \sim \cD}\left[\left(\sum_{i=0}^d y_i^2 (\sigma)\right)^r\right] = \norm{p}^{2r}_2 \E_{\sigma \sim \cD}\left[\sum_{k_0+\dots+k_d = r}\binom{r}{k_0,\dots,k_d}y^{2k_0}_0(\sigma)\cdots y^{2k_d}_d(\sigma)\right] \nonumber \\
        &\le_{\text{\Cref{lem:l1-to-lt}}} \norm{p}^{2r}_2\E_{\sigma \sim \cD}\left[d^{r-1}\sum_{i=0}^d y_i^{2r}\right] \le \norm{p}_2^{2r}d^r o(n^{2r}) = o(n^{2r})\norm{p}_2^{2r} \, .
    \end{align}
    \endgroup
    Since $d, r = O_\eps(1)$, this yields that,
    \begin{align*}
        \E\left[\left(\frac{1}{n}\Tr[p(X(\sigma))] - \E_{x \sim \mu_{sc}}[p(x)]\right)^{2r}\right] \le \norm{p}^{2r}_2 f(\eps) o(n^{2r})\, ,
    \end{align*}
    for some bounded function $f(\eps)$, concluding the proof.
\end{proof}

Alternatively, a better dependency on the constants can be obtained using~\Cref{lem:scalar-am-gm}, however, it is irrelevant to the use case.

\begin{corollary}[Large Hilbert-Schmidt Correlation between Bernstein Polynomial and Hessian]\label[Corollary]{cor:large-hs-correlation-bernstein-hessian}
    Given the Bernstein matrix polynomial $B^{L^\alpha_\gamma}$ as defined in~\Cref{prop:matrix-bernstein-linear-ramp}, it is the case that for any high-entropy step distribution $\cD$,
    \begin{equation}\label{eq:large-hilbert-schmidt-correlation}
        \{\text{HES Constraints}\} \proves \E_{\sigma\sim\cD}\left(\frac{1}{n}\left[\langle B^{L^\alpha_\gamma}(X), X \rangle\right]-  \psi\right)^{2p} \le o_n(1)\, ,
    \end{equation}
    where $a \Id_n \preceq X \preceq b\Id_n$ and $\psi \ge 1-\epsilon$.
\end{corollary}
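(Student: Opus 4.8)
The plan is to recognize that $\frac1n\langle B^{L^\alpha_\gamma}(X),X\rangle$ is the normalized trace of a single \emph{fixed} low-degree univariate polynomial evaluated at $X$, and then feed that polynomial into the certifiable approximate Wignerianity of $X$ proved in \Cref{lem:bernstein-is-semi-circle}. Concretely, since $B^{L^\alpha_\gamma}$ is a univariate polynomial of some degree $d_B=O_\eps(1)$ (its degree is fixed in \Cref{prop:matrix-bernstein-linear-ramp} via \Cref{thm:bernstein-approx}), it commutes with $X$ and
\[\langle B^{L^\alpha_\gamma}(X),X\rangle=\Tr\!\big[B^{L^\alpha_\gamma}(X)X\big]=\Tr\!\big[q(X)\big],\qquad q(x):=x\,B^{L^\alpha_\gamma}(x),\]
a polynomial of degree $d_B+1=O_\eps(1)$. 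This is an honest SoS manipulation: on a fixed sphere $\{\norm{\sigma}_2^2=\zeta\}$ the factor $2\nu''(\zeta)^{1/2}$ is a constant, so $X$ has entries that are low-degree polynomials in $\sigma$, matching the hypothesis of \Cref{lem:bernstein-is-semi-circle} (across steps of different radii one instead introduces the radical $\nu''(\norm{\sigma}_2^2)^{1/2}$ as in \Cref{sec:appendix-sdp}). Applying \Cref{lem:bernstein-is-semi-circle} to $q$ with $r=p$ then yields, by a SoS proof whose degree depends only on $d_B$, $p$, and $d_H$ — hence is $O_\eps(1)$ — that $\E_{\sigma\sim\cD}\big(\frac1n\Tr[q(X)]-\E_{x\sim\mu_{sc}}q(x)\big)^{2p}\le o(1)\,\norm{q}_{L^2}^{2p}$; since the ramp, and therefore $q$, is fixed once $\eps$ and $\delta$ are fixed, $\norm{q}_{L^2}^{2p}=O_\eps(1)$ and the right-hand side is $o_n(1)$. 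Setting $\psi:=\E_{x\sim\mu_{sc}}q(x)=\E_{x\sim\mu_{sc}}\!\big[x\,B^{L^\alpha_\gamma}(x)\big]$, this is exactly \Cref{eq:large-hilbert-schmidt-correlation}.

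It then remains to choose the ramp $L^\alpha_\gamma$ so that the constant $\psi$ is at least $1-\eps$. I would split $\psi=\E_{x\sim\mu_{sc}}[x\,L^\alpha_\gamma(x)]+\E_{x\sim\mu_{sc}}[x\,(B^{L^\alpha_\gamma}(x)-L^\alpha_\gamma(x))]$ and bound the second term by $\max_{x\in[a,b]}|B^{L^\alpha_\gamma}(x)-L^\alpha_\gamma(x)|\cdot\E_{\mu_{sc}}|x|$, which is at most an arbitrarily small $\eps'(\eps)$ by \Cref{thm:bernstein-approx} on the compact interval $[a,b]$ containing $\mathrm{spec}(X)$. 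For the first term, take $L^\alpha_\gamma$ to be the linear ramp that is $0$ on $[a,\gamma]$, rises linearly over a window of width $w$, and is constant (equal to a peak value of order $1/\delta$) on $[\gamma+w,b]$, with $\gamma$ and the peak chosen so that $\E_{x\sim\mu_{sc}}L^\alpha_\gamma=1$ — this normalization is what makes $\frac1n B^{L^\alpha_\gamma}(X)$ have trace $\approx1$ and operator norm $\approx\frac1{\delta n}$, as needed to serve as $\E[v_iv_i^\sT\mid\sigma]$; the antiderivatives $\int\mu_{sc}$ and $\int x\,\mu_{sc}$ needed to solve for $\gamma$ are provided in \Cref{prop:semi-circle-integrals}. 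Since $L^\alpha_\gamma$ is supported in $\{x\ge\gamma\}$, $\E_{\mu_{sc}}[x L^\alpha_\gamma]=\E_{\mu_{sc}}[L^\alpha_\gamma]-\E_{\mu_{sc}}[(1-x)L^\alpha_\gamma]\ge 1-(1-\gamma)=\gamma$, and the edge asymptotics $\mu_{sc}(x)=\tfrac2\pi\sqrt{1-x^2}$ show the $\mu_{sc}$-mass above $1-\eps$ is $\Theta(\eps^{3/2})$, so one can push $\gamma\ge 1-\eps+w$ while still keeping $\E_{\mu_{sc}}L^\alpha_\gamma=1$ provided $\delta=A(\eps)$ is taken of order at most $\eps^{3/2}$ — this is precisely the constraint that pins down the admissible $\delta$ used in \Cref{lem:taylor}. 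Combining, $\psi\ge\gamma-\eps'\ge 1-\eps$ after absorbing $w$ and $\eps'$ into $\eps$, and the positivity and operator-norm control of $B^{L^\alpha_\gamma}(X)$ come from \Cref{prop:matrix-bernstein-linear-ramp}.

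The main obstacle is not any individual inequality but the bookkeeping: all the error parameters — the ramp width $w$, the Bernstein uniform-approximation error $\eps'$ (which controls $d_B$), the $o_n(1)$ slack inherited from \Cref{lem:bernstein-is-semi-circle} and \Cref{lem:hessian-schatten-norm}, and the choice $\delta=A(\eps)$ — must be tuned consistently so that they all collapse into the single final $\eps$ while keeping $d_B$, hence $\deg q=d_B+1$ and therefore the degree of the overall SoS certificate, bounded by a constant depending only on $\eps$ and $d_H$. One has to be especially careful that shrinking $\eps'$ (so the Bernstein error is negligible against the $\Theta(\eps^{3/2})$ edge mass) does not blow up $d_B$ uncontrollably: \Cref{thm:bernstein-approx} gives $d_B=O\big(C(b-a)^3/(\eps')^3\big)$, which is acceptable since the ramp's Lipschitz constant $C$ is $O(1/(\delta w))=\mathrm{poly}(1/\eps)$, so $d_B=\mathrm{poly}(1/\eps)$ and the certificate stays of size $\mathsf{poly}(n,\exp(1/\eps))$, consistent with \Cref{thm:main}. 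The secondary subtlety, already flagged, is handling the $\nu''(\norm{\sigma}_2^2)^{1/2}$ normalization in an SoS-legal way, which is done either by restricting to fixed-radius spheres or by the radical construction of \Cref{sec:appendix-sdp}.
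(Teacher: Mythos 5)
Your proposal is correct and follows the same backbone as the paper's proof: both reduce $\frac1n\langle B^{L^\alpha_\gamma}(X),X\rangle$ to $\frac1n\Tr[q(X)]$ with $q(x)=x\,B^{L^\alpha_\gamma}(x)$, invoke \Cref{lem:bernstein-is-semi-circle} to concentrate this at the semicircle expectation up to $o_n(1)$, and then reduce the whole corollary to the purely scalar fact that the (suitably normalized) semicircle average of $x\,B^{L^\alpha_\gamma}(x)$ is at least $1-\eps$. Where you diverge is in the second half. The paper keeps the ramp with peak value $1$ (so $0\preceq B\preceq\Id$ as in \Cref{prop:matrix-bernstein-linear-ramp}) and takes $\psi$ to be the \emph{ratio} $\E_{\mu_{sc}}[xB]/\E_{\mu_{sc}}[B]$, computing both numerator and denominator by explicit antiderivatives of $\mu_{sc}$ (\Cref{prop:semi-circle-integrals}, \Cref{lem:bernstein-normalize-sc}) and Taylor expansions near the spectral edge, arriving at $\frac{1-20\phi-2\eps}{1+o(\phi)}$; you instead bake the normalization into the ramp (peak $\Theta(1/\delta)$ chosen so $\E_{\mu_{sc}}L=1$) and get the lower bound from the one-line monotonicity argument $\E_{\mu_{sc}}[xL]=\E_{\mu_{sc}}[L]-\E_{\mu_{sc}}[(1-x)L]\ge\gamma$. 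Your route is shorter and avoids the edge-integral bookkeeping entirely; the paper's route buys a matching \emph{upper} bound on the ratio (which it states "for completeness") and keeps $B$ literally equal to the object of \Cref{prop:matrix-bernstein-linear-ramp}, at the cost of the statement of the corollary being satisfied only after dividing by $\E_{\mu_{sc}}[B]=\Theta(\phi^{3/2})$ — a normalization mismatch your convention sidesteps. Your parameter accounting ($d_B=\poly(1/\eps)$, edge mass $\Theta(\eps^{3/2})$ pinning down $\delta$, radius handled per-sphere or via radicals) is consistent with what the paper does.
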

\begin{proof}
    To compute the correlation with the matrix Bernstein polynomial that approximates the ramp function, we will use the definition of the Bernstein polynomial followed by linearity (again) in conjunction with the bounds above.
    \begingroup
    \allowdisplaybreaks
    \begin{align*}
        \frac{1}{n}\left[\left\langle B^{L_{\alpha,\gamma}}_d(X),X\right\rangle\right] &=  \frac{1}{n}\Tr \left(X\cdot B^{L^{\alpha}_{\gamma}}_d(X)\right),
    \end{align*}
    \endgroup
    By \Cref{lem:bernstein-is-semi-circle},
    \begingroup
    \allowdisplaybreaks
    \begin{align*}
        \left(\frac{1}{\E_{x\sim\mu_{sc}}\left[B^{L^\alpha_\gamma}(x)\right]}\right)^{2p}\E_{\sigma\sim\cD}\left(\frac{1}{n}\left[\langle B^{L^\alpha_\gamma}(X), X \rangle\right]-  \E_{x \sim \mu_{sc}}\left[x\cdot B^{L^{\alpha}_\gamma}(x)\right]\right)^{2p} \le \left(\frac{1}{\E_{x\sim\mu_{sc}}\left[B^{L^\alpha_\gamma}(x)\right]}\right)^{2p}o_n(1) \le o_n(1)\,,
    \end{align*}
    \endgroup
    provided that $\E_{x\sim\mu_{sc}}\left[B^{L^\alpha_\gamma}(x)\right] = C(\eps)$ and $p = O_\eps(1)$, where the former will follow from~\Cref{lem:bernstein-normalize-sc} with the appropriate choice of constants made later. \\
    This reduces the goal to showing that,
    \begin{equation}
        \frac{\E_{x\sim\mu_{sc}}\left[x\cdot B^{L^\alpha_\gamma}(x)\right]}{\E_{x\sim\mu_{sc}}\left[B^{L^\alpha_\gamma}(x)\right]} \ge 1 - \eps\, ,
    \end{equation}
    for an appropriate choice of $d(\eps)$. Since this does not involve any SoS program variable, it need not be a SoS proof.\\
    By the definition in~\Cref{sec:bernsteiny-stuff} and appropriately rescaling the interval $\left[-2\nu''(\norm{\sigma}^2_2)^{1/2}, 2\nu''(\norm{\sigma}^2_2)^{1/2}\right] \to [-1,1]$ and denoting the re-scaled area of the semi-circle containing the valid fraction of eigenvectors as
    \begin{align*}
        \phi := \frac{A(\eps)}{2\nu''(\norm{\sigma}^2_2)^{1/2}}\, ,
    \end{align*}
    we have the following,
    \begingroup
    \allowdisplaybreaks
    \begin{align}\label{eq:l_alpha_gamma}
        L^\alpha_\gamma(x) = \left\{\begin{array}{l l}
            0\,, & x\in \left[-1,\, 1 - \phi\right), \\
            \frac{1}{\phi^2}(x-(1-\phi))\,, & x\in \left[1 - \phi,\, 1 - \phi + \phi^2\right], \\
            1\,, & x\in\left(1 - \phi + \phi^2,\, 1\right],
        \end{array}\right.
    \end{align}
    \endgroup
    where $A(\eps)$ is chosen later. The Weierstrass approximation (\Cref{thm:bernstein-approx}) gives,
    \begingroup
    \allowdisplaybreaks
    \begin{align*}
        \|B^{L^\alpha_\gamma}(x) -  L^\alpha_\gamma(x)\| \le \eps\, , \forall x \in [-1,1]\, ,
    \end{align*}
    \endgroup
    provided $\mathsf{deg}(B^{L^\alpha_\gamma}(\cdot)) \ge O\left(\frac{1}{\eps^3}\right)$.

    \paragraph{Average correlation between $x$ and $B^{L^\alpha_\gamma}(x)$ under $\mu_{sc}$} We now compute the scalar proxy for the correlation between the hessian and the matrix Bernstein polynomial under the semi-circle measure. This will be normalized appropriately using~\Cref{lem:bernstein-normalize-sc}.
    \begingroup
    \allowdisplaybreaks
    \begin{align*}
        &\E_{x\sim\mu_{sc}}\left[x\cdot B^{L^\alpha_\gamma}(x)\right] = \int_{-1}^1 x\cdot B^{L^\alpha_\gamma}(x) \frac{2}{\pi}\sqrt{1-x^2}dx \\
        &\ge \int_{-1}^{1-\phi}x\left(L^\alpha_\gamma + \eps\right)\frac{2}{\pi}\sqrt{1-x^2}dx +  \int_{1-\phi}^1 x\left(L^\alpha_\gamma(x) - \eps\right)\frac{2}{\pi}\sqrt{1-x^2}dx \\
        &= \eps\int_{-1}^{1-\phi}xd\mu_{sc}(x) - \eps\int_{1-\phi}^1 xd\mu_{sc}(x) + \int_{1-\phi}^{1-\phi+\phi^2}x\cdot \left(\frac{1}{\phi^2}(x - (1-\phi))\right)d\mu_{sc}(x) + \int_{1-\phi+\phi^2}^1 xd\mu_{sc}(x)\\
        &= \underbrace{\eps\left(\int_{-1}^{1-\phi}xd\mu_{sc}(x) - \int_{1-\phi}^1xd\mu_{sc}(x)\right)}_{\mathsf{I}(\eps,\phi)} + \underbrace{\int_{1-\phi}^{1-\phi+\phi^2}x\cdot \left(\frac{1}{\phi^2}(x - (1-\phi))\right)d\mu_{sc}(x) }_{\mathsf{II}(\phi)} + \underbrace{\int_{1-\phi+\phi^2}^1 xd\mu_{sc}(x)}_{\mathsf{III}(\phi)}\\
        &\ge \frac{-2\eps}{3\pi}\left((1-(1-\phi)^2)^{3/2} + (1-(1-\phi)^2)^{3/2}\right) + \frac{2}{3\pi}\left(1-(1-\phi+\phi^2)^2\right)^{3/2} \\
        &= \frac{2}{3\pi}\left[(1-(1-\phi+\phi^2)^2)^{3/2} - 2\eps\phi^{3/2}(2-\phi)^{3/2}\right]
    \end{align*}
    \endgroup
    where the first inequality follows from using the closeness of the Bernstein polynomial to the $L^\alpha_\gamma$ function, and the final inequality follows from the observation that $\mathsf{II}(\phi) \ge 0$ and the use of~\Cref{prop:semi-circle-integrals} to evaluate $\mathsf{I}(\eps,\phi)$ and $\mathsf{III}(\phi)$.

    \paragraph{The final lower bound} Combining the upper bound on $\E_{x \sim \mu_{sc}}\left[B^{L^\alpha_\gamma}(x)\right]$ via~\Cref{lem:bernstein-normalize-sc} with the lower bound on the correlation term computed above yields,
    \begingroup
    \allowdisplaybreaks
    \begin{align*}
        \frac{\E_{x\sim\mu_{sc}}\left[x\cdot B^{L^\alpha_\gamma}(x)\right]}{\E_{x\sim\mu_{sc}}\left[B^{L^\alpha_\gamma}(x)\right]} &\ge \frac{\frac{2}{3\pi}\left[(1-(1-\phi+\phi^2)^2)^{3/2} - 2\eps\phi^{3/2}(2-\phi)^{3/2}\right]}{1+\eps - \mathsf{CDF}(1-\phi)} \\
        &= \frac{\frac{2}{3\pi}\left[(2\phi + 2\phi^3 - 3\phi^2 - \phi^4)^{3/2} - 2\eps\phi^{3/2}(2-\phi)^{3/2}\right]}{\frac{4\sqrt{2}}{3\pi}\phi^{3/2} + o(\phi^{3/2})} \\
        &\ge \frac{\frac{2}{3\pi}2^{3/2}\phi^{3/2}\left((1 - 4\phi)^{3/2} - \frac{2}{2^{3/2}}\eps(2-\phi)^{3/2}\right)}{\frac{4\sqrt{2}}{3\pi}\phi^{3/2} + o(\phi^{3/2})} \\
        &\ge \frac{\frac{4\sqrt{2}}{3\pi}\phi^{3/2}\left(\left(1-\frac{3}{2}4\phi + O(\phi^2)\right) - 2\eps\right)}{\frac{4\sqrt{2}}{3\pi}\phi^{3/2} + o(\phi^{3/2})} \\
        &= \frac{1 - 20\phi - 2\eps}{1 + o(\phi)}\, .
    \end{align*}
    \endgroup

    For completeness, we also demonstrate an upper bound of (roughly) the same quantitative strength on the normalized correlation between $x$ and $B^{L^\alpha_\gamma}(x)$ under $\mu_{sc}$. To do this, we utilize a lower bound on the normalization term given in~\Cref{lem:bernstein-normalize-sc} and compute an upper bound on $x\cdot B^{L^\alpha_\gamma}(x)$ using the uniform upper bound of $\phi^2$ on the ramp part.

    \paragraph{The final upper bound} A similar computation as above yields,
    \begingroup
    \allowdisplaybreaks
    \begin{align*}
        &\frac{\E_{x\sim\mu_{sc}}\left[x\cdot B^{L^\alpha_\gamma}(x)\right]}{\E_{x\sim\mu_{sc}}\left[B^{L^\alpha_\gamma}(x)\right]} \le \frac{\int_{-1}^1x\cdot\left(L^\alpha_\gamma(x) + \eps\right) d\mu_{sc}(x)}{(1+2\eps)\frac{4\sqrt{2}}{3\pi}\phi^{3/2} + O(\phi^{5/2}) + o(\phi^{5/2}) + \eps} \\
        &= \frac{\int_{-1}^{1-\phi}x\cdot 0 d\mu_{sc}(x) + \int_{1-\phi}^{1-\phi+\phi^2}\frac{x}{\phi^2}\cdot\left(x - (1-\phi)\right)d\mu_{sc}(x) + \int_{1-\phi+\phi^2}^1xd\mu_{sc}(x) + \eps\E_{x\sim\mu_{sc}}\left[x\right]}{(1+2\eps)\frac{4\sqrt{2}}{3\pi}\phi^{3/2} + O(\phi^{5/2}) + o(\phi^{5/2}) + \eps} \\
        &\le \frac{\int_{1-\phi}^{1-\phi+\phi^2}xd\mu_{sc}(x) + \int_{1-\phi+\phi^2}^1 xd\mu_{sc}(x)}{(1+2\eps)\frac{4\sqrt{2}}{3\pi}\phi^{3/2} + O(\phi^{5/2}) + o(\phi^{5/2}) + \eps} = \frac{\int_{1-\phi}^1 xd\mu_{sc}(x)}{(1+2\eps)\frac{4\sqrt{2}}{3\pi}\phi^{3/2} + O(\phi^{5/2}) + o(\phi^{5/2}) + \eps} \\
        &= \frac{\frac{2}{3\pi}\left(1- (1-\phi)^2\right)^{3/2}}{(1+2\eps)\frac{4\sqrt{2}}{3\pi}\phi^{3/2} + O(\phi^{5/2}) + o(\phi^{5/2}) + \eps} = \frac{\frac{2}{3\pi}\left(\phi(2-\phi)\right)^{3/2}}{(1+2\eps)\frac{4\sqrt{2}}{3\pi}\phi^{3/2} + O(\phi^{5/2}) + o(\phi^{5/2}) + \eps} \\
        &\le \frac{\frac{2}{3\pi}2^{3/2}\phi^{3/2}}{(1+2\eps)\frac{4\sqrt{2}}{3\pi}\phi^{3/2} + O(\phi^{5/2}) + o(\phi^{5/2}) + \eps} = \frac{1}{(1+3\eps) + O(\phi) + o(\phi)}\, .
    \end{align*}
    \endgroup
    A choice of $A(\eps) = O(\eps)$ suffices for the desired approximation ratios, yielding $\phi = \frac{O(\eps)}{2\nu''(\norm{\sigma}^2_2)^{1/2}}$.
\end{proof}

\section{Feasibility}\label{sec:feasibility-main}
In this section we will demonstrate that a randomized version of Subag's Hessian ascent algorithm (\Cref{alg:subag-2}) can be cast as a HES process (\Cref{lem:feasibility}). Doing this demands showing that the conditional covariance matrix, which basically projects in the $\delta$n-dimensional subspace corresponding to the top part of the eigenspectrum of the Hessian, has bounded operator norm and is close to a matrix-low degree polynomial. This low-degree matrix polynomial will be the Bernstein matrix polynomial introduced in \Cref{sec:matrix-bernstein-analytic-sparsity}.

As a minor technical detail, since the HES process on the unit sphere is described as an appropriately scaled set of small Gaussian steps, we will need a lemma that converts between the moments of the real Haar measure on the unit sphere and scaled Gaussian moments (\Cref{lem:denom-conc}).  

\subsection{Feasibility via a near-optimal randomized algorithm}\label{sec:feasibility}

To show feasibility we will work with~\Cref{alg:subag-2}, which is a variant of Subag's algorithm~(\Cref{alg:subag}). Fix $ 0 < \epsilon < 1$ and $A(\eps)$.

\begin{algorithm}[H]
\begin{algorithmic}

\State $v_1 \sim \cS^{A(\eps)n-1}(1)$.
\State $w_1 = \frac{1}{\sqrt{k}}v_1$.
\While{$i \in \{2,\dots,k\}$}
    \State $v_i \sim \cS^{\mathsf{eigen_{A(\eps)n - 1}}(\frac{1}{n}\nabla^2 H(w_{i-1}))}(1)$
    \State $w_i = w_{i-1} + \frac{v_i}{\sqrt{k}}$.
    \State $i \leftarrow i + 1$.
\EndWhile
\State Output $v_k$.
\end{algorithmic}
\caption{Randomized Hessian-Ascent~\cite{huang2022computational}}\label{alg:subag-2}
\end{algorithm}

The only difference between~\Cref{alg:subag} and~\Cref{alg:subag-2} is that $x$ is chosen uniformly at random from the unit sphere on the $\delta n$-dimensional subspace corresponding to the top eigenvectors of the Hessian projected orthogonal to $v_{i-1}$ as $\nabla^2_{\perp} H(v_{i-1})$. Critically,~\Cref{alg:subag-2} continues to be nearly optimal for the same reason as the original algorithm is~\cite[Lemma 3 \& Theorem 4]{subag2021following} in conjunction with the fact that the first step can be arbitrary. A more detailed explanation for the equivalence between these two algorithms can be found in~\cite[Section 3.2.7]{huang2022computational}.

Consequently, showing feasibility amounts to showing that~\Cref{alg:subag-2} satisfies, at every iteration $i \in [k]$, the constraints of the SoS program at the corresponding cumulant SoS variable. Demonstrating this is done by showing that:
\begin{enumerate}
    \item The uniform distribution on the unit-sphere satisfies the conditional maximum-entropy constraints imposed on valid pseudo-distributions at every iterate $i \in [k]$.
    \item The cumulative norm increments have the same norms \emph{and} are incrementally orthogonal. This is trivially true by construction and is omitted in the formal proof.
    \item The conditional pseudo-expectation of the quadratic form of the hessian under $v_i$ is at least $(1-\eps)2\nu''(q_i)$.
\end{enumerate}

The statements above are proved formally in~\Cref{lem:feasibility} below.
\begin{lemma}[Feasibility of the SoS program]\label[Lemma]{lem:feasibility}
    Given $0 < \eps < 1$ and $A(\eps)$, the following facts hold:
    \begin{enumerate}
        \item For $i = 1,\dots,k$,
        \begin{align*}
            &\left\{\pcum{j}{v_i\,\mid\,v_{i-1}} = o_n(1)\right\}_{1,\dots,i-2}\, , j \ne 2\, ,\\
            &\left\{\opnorm{\E[v_iv_i^\sT\,\mid\, v_{i-1}]} \le \frac{1 + o_n(1)}{A(\eps)n}\right\}_{\mid 1,\dots,i-2} \, .
        \end{align*}
        \item For $i = 1,\dots,k$,
        \begin{align*}
            &\E_{x \sim \cS^{A(\eps) n - 1}(\mathsf{eigen}\frac{1}{n}(\nabla^2_\perp H(w_{i-1})))}[x^\sT\frac{1}{n}\nabla^2_\perp H(w_{i-1}) x] \ge 2\nu''(q)^{1/2} - \epsilon\, .
        \end{align*}
    \end{enumerate}
\end{lemma}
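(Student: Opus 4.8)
The plan is to verify each of the two items for Algorithm~\ref{alg:subag-2} by analyzing the conditional distribution of the $i$th step $v_i$, which by construction is uniform on the unit sphere inside the $A(\eps)n$-dimensional subspace $E_{i-1}$ spanned by the top $A(\eps)n$ eigenvectors of $\tfrac1n\nabla_\perp^2 H(w_{i-1})$. For Item~1, I would first record that for a uniform random unit vector $x$ on the unit sphere $\cS^{m-1}(1)$ in an $m$-dimensional subspace, the covariance is $\tfrac1m \Pi$, where $\Pi$ is the orthogonal projector onto that subspace; applying this with $m = A(\eps)n$ and $\Pi = \Pi_{E_{i-1}}$ gives $\E[v_iv_i^\sT \mid v_{i-1}] = \tfrac{1}{A(\eps)n}\Pi_{E_{i-1}}$, which is a projector scaled by $\tfrac{1}{A(\eps)n}$ and hence has operator norm exactly $\tfrac{1}{A(\eps)n}$, giving the spectral-entropy constraint (the $o_n(1)$ slack absorbs the discrepancy between the real Haar measure on the sphere and the scaled-Gaussian moments used in the grammar, via \Cref{lem:denom-conc}). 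For the cumulant constraints $\{\pcum{j}{v_i \mid v_{i-1}} = o_n(1)\}$ for $j \ne 2$, I would use the standard fact that the moments of the uniform measure on a high-dimensional sphere converge to those of a Gaussian with matching covariance at rate $O(1/m) = O(1/n)$: concretely, one computes the joint moments of $x \sim \cS^{m-1}(1)$ in closed form (ratios of Gamma functions), compares to the Isserlis expansion \Cref{lem:isserlis} of the corresponding Gaussian, and observes that the cumulants of order $\ge 3$ vanish up to $o_n(1)$ in the relevant $2$-norm. Item~3's low-degreeness constraint (P.5) and the incremental-orthogonality/equal-norm constraints (P.1, P.2) are then essentially immediate: orthogonality and equal norms hold by construction, and the low-degree constraint follows from \Cref{cor:large-hs-correlation-bernstein-hessian} together with \Cref{prop:matrix-bernstein-linear-ramp}, since the Bernstein matrix polynomial $B^{L^\alpha_\gamma}(\tfrac1n\nabla^2 H(w_{i-1}))$, suitably normalized, is a low-degree matrix polynomial in $\sigma$ that approximates $\Pi_{E_{i-1}}$ to within $o_n(1)$ in Frobenius norm with high probability.

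For Item~2, the key input is that the top $A(\eps)n$ eigenvalues of $\tfrac1n\nabla_\perp^2 H(w_{i-1})$ are all at least $2\nu''(q)^{1/2} - \eps$ with high probability, where $q = \norm{w_{i-1}}_2^2$. This is exactly Subag's spectral estimate \cite[Lemma 3]{subag2021following}, which I would invoke directly (with the caveat that we need it uniformly over all points $w_{i-1}$ that could arise from the algorithm, which is the content of the uniform control statement in \Cref{sec:alg-threshold}). Given this, for any unit vector $x \in E_{i-1}$ we have $x^\sT \tfrac1n\nabla_\perp^2 H(w_{i-1}) x \ge 2\nu''(q)^{1/2} - \eps$ pointwise, hence the same bound holds in expectation over $x \sim \cS^{A(\eps)n-1}(E_{i-1})$, which is precisely the claimed inequality. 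The passage from the projected Hessian $\nabla_\perp^2 H$ to the full Hessian contributes only a rank-one correction and does not affect the bulk eigenvalue estimate.

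The main obstacle I anticipate is not any single inequality but the bookkeeping required to certify these facts \emph{within the filtered SoS grammar} rather than merely pointwise: the constraints must hold as conditional constraints $\{\,\cdot\,\}_{\mid 1,\dots,i-2}$ in the sense of \Cref{sec:conditional-constraints}, meaning one must exhibit that the pseudo-distribution induced by the actual distribution of $(v_1,\dots,v_k)$ (via the soundness correspondence of \Cref{sec:actual-conditional}) satisfies them. This is mostly a matter of carefully translating ``uniform on a random subspace of dimension $A(\eps)n$'' into the language of cumulant tensors and Hermite coefficients, and checking that the high-probability events (the spectral estimate, the Bernstein approximation) hold simultaneously over all $k = O_\eps(1)$ steps by a union bound. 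A secondary technical point is that the conditional covariance $\E[v_iv_i^\sT \mid v_{i-1}]$ must be shown to be a degree-$\le d$ polynomial in $v_{i-1}$ up to $o_n(1)$ error in the Frobenius-norm-after-tensoring sense of (P.5); this is where \Cref{cor:large-hs-correlation-bernstein-hessian} and the matrix Bernstein machinery of \Cref{sec:matrix-bernstein-analytic-sparsity} do the real work, since the eigenprojector itself is not polynomial but its Bernstein approximant is, and the approximation error is controlled uniformly because the spectrum is certifiably close to a scaled semicircle by \Cref{lem:hessian-schatten-norm} and \Cref{lem:bernstein-is-semi-circle}.
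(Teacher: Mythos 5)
Your proposal is correct and follows essentially the same route as the paper's proof: the conditional covariance and higher cumulants of the uniform spherical step are compared to scaled Gaussian moments via \Cref{lem:denom-conc} (the Gamma-function moment formulas you mention are exactly how that lemma is proved), the spectral lower bound in Item~2 is obtained by directly invoking Subag's eigenvalue estimate on the top $A(\eps)n$-dimensional eigenspace, and the approximate-LDP constraint is handled by the Bernstein matrix polynomial via \Cref{prop:matrix-bernstein-linear-ramp} together with a triangle inequality, with soundness for actual distributions supplied by \Cref{sec:actual-conditional}. The paper additionally spells out the tensor-power version of the covariance closeness using Isserlis's lemma, but this is the same bookkeeping you correctly flag as the remaining work.
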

\begin{proof}
    Note that every probability distribution is a pseudo-distribution, since $\E_x[1] = 1$ (by normalization of probability) and $\E_x[p^2(x)] \ge 0$ for any polynomial $p$ since it can be viewed as an inner product between non-negative quantities. Therefore, we set $\pE = \E_x$ to show that the maximum-entropy constraints bounds are satisfied.~\cite[Lemma 3]{subag2021following} immediately implies that the expected quadratic form of the hessian is arbitrarily close to the largest eigenvalue under the uniform distribution on the sphere of the top $A(\eps) n$-dimensional subspace of the Hessian orthogonal to the current iterate.

    \paragraph{The $\opnorm{\cdot}$ constraint} Proving this constraint amounts to showing that the operator norm of the covariance of $v_i$ conditioned on $v_1,\dots,v_{i-1}$ is bounded from above by $\frac{1}{A(\eps)n}$.
    Since $v_i \sim \cS^{A(\eps) n - 1}(\mathsf{eigen}(\frac{1}{n}\nabla^2_\perp H(w_{i-1})))$, note that conditioned on $v_1,\dots,v_{i-2}$,
    \begin{align*}
        \opnorm{\E[v_iv_i^\sT\,\mid\, v_{i-1}]} &= \opnorm{\E_{v \sim \cS^{A(\eps) n - 1}(\mathsf{eigen}(\frac{1}{n}\nabla^2_\perp H(w_{i-1})))}\left[vv^\sT\right]} \\
        &=_{\text{Rotational Invariance of Haar-Measure}} \opnorm{\E_{v' \sim \cS^{n-1}(1^{A(\eps)n} \oplus\,0^{(1-A(\eps))n})}\left[v'v'^\sT\right]} \\
        &=_{\text{\Cref{lem:denom-conc} with }q=1} \opnorm{\frac{1 - o_n(1)}{A(\eps) n}\E_{x \sim \cN(0, \Id_{A(\eps)n} \oplus0_{(1-A(\eps))n})}\left[xx^\sT\right]} \\
        &= \frac{1 - o_n(1)}{A(\eps) n}\opnorm{\begin{bmatrix} 1 & \\ & \ddots & \\ & & 1 \\ & & & 0 & \\ & & & & \ddots & \\ & & & & & 0 \end{bmatrix}} \le \frac{1 + o_n(1)}{A(\eps) n}\, ,
    \end{align*}
    where in the above calculation $v' \sim \cS^{n-1}(1^{A(\eps)n} \oplus 0^{(1-A(\eps))n})$ refers to a unit vector chosen uniformly at random from the $A(\eps)n$-dimensional sphere where the last $(1-A(\eps))n$-coordinates are $0$.

    \paragraph{The $\kappa_{j}[v_i\mid v_{i-1}]$ constraint} To show this constraint is satisfied, we follow a similar strategy. Note that, $\forall j \ne 2$, the following holds for every entry of the $j$-th conditional cumulant tensor,
    \begin{align*}
        \kappa_j[v_i\mid v_{i-1}] 
        &=_{\text{by moment-cumulant formula}} \sum_{\pi \in \cP(k_1,\dots,k_j)} (-1)^{|\pi|-1}(|\pi|-1)!\prod_{b \in \pi}\E\left[\prod_{a \in b}v_{i_a}\right] \\
        &=_{\text{by \Cref{lem:denom-conc} and rotational invariance}} \sum_{\pi \in \cP(k_1,\dots,k_j)}(-1)^{|\pi|-1}(|\pi|-1)!\prod_{b \in \pi}\frac{1-o_n(1)}{\left(A(\eps)n\right)^{|b|}}\E\left[\prod_{a \in b}v'_{i_a}\right] \\
        &= \sum_{\pi \in \cP(k_1,\dots,k_j)} \frac{(-1)^{|\pi|-1}(|\pi|-1)!(1-o_n(1))}{(A(\eps)n)^{j}}\prod_{b \in \pi}\E\left[\prod_{a \in b}v'_{i_a}\right]\, ,
    \end{align*}
    where in the above calculation we again used~\Cref{fact:gaussian-cums}. Now, since this is a sum of the form,
    \[
        \sum_\pi \frac{(1 - o_n(1))}{(\delta n)^{j}}a_\pi\, , 
    \]
    we can conclude that,
    \[
        \norm{\kappa_j[v_i \mid v_{i-1}] - \kappa_j\left[\cN(0, \Id_{A(\eps)n \oplus0_{(1-A(\eps))n}})\right]}_F^2 \le o_n(1)\, . 
    \]

    \paragraph{The approximate-LDP constraint} We now show that the original algorithm's updates satisfy the approximate-LDP constraint in the program by demonstrating that the choice of polynomial corresponding to a Bernstein polynomial of sufficiently large (but constant) degree has the following properties:
    \begin{enumerate}
      \item It permits a (weakly) orthogonal decomposition using the Hermite-Weiner fourier basis, by virtue of being square integrable on $\cS^{n-1}(1)$.
      \item By~\Cref{prop:matrix-bernstein-linear-ramp}, it is arbitrarily close in Frobenius norm to the projector into the top $A(\eps)n$-dimensional eigenspace of $\nabla^2H(\sigma)$.
    \end{enumerate}
    Note that any $f \in \ell_2(\Sigma)$ can be expressed as,
    \[
        f(x) =\sum_{\alpha \in \N^n}\left\langle f, \He_\alpha(x\mid \Sigma)\right\rangle_{\ell^2(\Sigma)} \He_\alpha(x\mid \Sigma)\, .
    \]
    Note that for $x = \sum_{j=1}^{i-1} v_j$ with $\norm{v_j}^2 = 1$ and $f = p_d := B^d_{L^{\alpha}_\gamma}\left(c_i\Id_n + \frac{1}{n}\nabla^2 H(\sigma)\right)$ with the Fourier decomposition taken element-wise in the matrix polynomial, it is the case that,
    \begin{align*}
        &\norm{f_{ij}}^2_2 := \langle f_{ij}(x), f_{ij}(x) \rangle_{\ell^2(\Sigma)} = \int_{x \sim \cN(0, \Sigma)}\frac{1}{\delta n}B^d_{L^\alpha_\gamma}\left(c_i(\Id_n)_{ij} + \frac{1}{n}\nabla^2 H(x)_{ij}\right) dg(x) < \infty\,.
    \end{align*}
    Since the algorithm samples uniformly at random from the $A(\eps)n$-dimensional sphere corresponding to the top eigenspace of $\nabla^2 H(\sigma)$, we can then apply~\Cref{lem:denom-conc} to conclude that the appropriately scaled conditional covariance of this algorithm is basically a projector into this subspace times the identity matrix, and a triangle inequality then immediately implies feasibility as demonstrated below:
    \begin{align*}
        &\norm{\E\left[v_iv_i^\sT\mid w_{i-1}\right] - p_d(w_{i-1})}_F^2 = \norm{\E\left[v_iv_i^\sT\mid w_{i-1}\right] - \sum_{\alpha, |\alpha| \le d} \langle \E\left[v_iv_i^\sT\mid w_{i-1}\right], \He_d\rangle_{\ell^2(\Pi_{\delta n})} }_F^2 \\
        &= \norm{\E\left[v_iv_i^\sT\mid w_{i-1}\right] - \frac{1}{\delta n}B^d(c_i\Id_n + \frac{1}{n}\nabla^2 H(\sigma))}_F^2 \\
        &\le_{\text{triangle inequality}} \norm{\E\left[v_iv_i^\sT\mid w_{i-1}\right] - \frac{1}{\delta n}\Pi_{\delta n}}^2_F + \frac{1}{\delta n}\norm{\Pi_{\delta n} - \frac{1}{\delta n}B^d(c_i\Id_n + \frac{1}{n}\nabla^2 H(\sigma))}^2_F \\
        &=_{\text{by \Cref{prop:matrix-bernstein-linear-ramp}}}  \norm{\E\left[v_iv_i^\sT\mid w_{i-1}\right] - \Pi_{\delta n}}^2_F + o_n(1) \\
        &=_{\text{by \Cref{lem:denom-conc}}} \norm{\frac{1 - o_n(1)}{\delta n}\E_{x \sim \cN\left(0,\, \Pi_{\mathsf{eigen}\left(\nabla^2_\perp H(w_{i-1})\right)}\right)}[(xx^\sT)] - \frac{1}{\delta n} \Pi_{\delta n}}_F^2 + o_n(1) \\
        &=_{\E[xx^\sT] = \Pi_{\delta n}}\norm{\frac{1-o(n(1))}{\delta n}\Pi_{\delta n} - \frac{1}{\delta n}\Pi_{\delta n}}_F^2 + o_n(1) \\
        &=\norm{-\frac{o_n(1)}{\delta n}\Pi_{\delta n}}_F^2 + o_n(1) = o_n(1) + o_n(1) = o_n(1)\, ,
    \end{align*}
    where we used the fact that the covariance of the Gaussian is given by an orthogonal transformation into the top $\delta n$-dimensional subspace of $\nabla^2H(w_{i-1})$, which in this case is just $\Pi_{\delta n} = \Pi_{\mathsf{eigen}\left(\nabla^2_\perp H(w_{i-1})\right)}$.
    For every $q \in \N$,
    \begin{align*}
        \E[(v_iv_i^\sT)^{\ot q} \mid w_{i-1}] &=_{\text{\Cref{lem:denom-conc}}} \frac{1-o_n(1)}{\left(\delta n\right)^q}\E_{x \sim \cN(0, \Id_{A(\eps)n} \oplus0_{(1-A(\eps))n})}\left[(xx^\sT)^{\ot q}\right] \\
        &=_{\text{\Cref{lem:isserlis}}} (q-1)!!\,\sym\left[\left(\E[xx^\sT]\right)^{\ot q}\right] \\
        &= (q-1)!!\,\sym\left[\left((1-o_n(1)) \E[v_iv_i^\sT \mid w_{i-1}]\right)^{\ot q}\right] \\
        &= (1-o_n(1))(q-1)!!\,\sym\left[\left(\E[v_iv_i^\sT \mid w_{i-1}]\right)^{\ot q}\right] \, . 
    \end{align*}  
    Rewrite the conditional covariance as $\E[v_iv_i^\sT \mid w_{i-1}] = p(w_{i-1}) + B$  where $B$ is a matrix with $o_n(1)$ Frobenius norm (as shown above). This yields that,
    \[
        \E[(v_iv_i^\sT)^{\ot q} \mid w_{i-1}] = (1-o_n(1))(q-1)!!\,\sym\left[\left(p(w_{i-1}) + B\right)^{\ot q}\right]\, ,
    \]
    and, therefore,
    \begin{align*}
        \norm{\E[(v_iv_i^\sT)^{\ot q} \mid w_{i-1}] - p(w_{i-1})^{\ot q}}_F^2 &\le (q-1)!!\norm{\sum_{i \in [q], j = q-i}\sym\left(B^{\ot i}\ot p(w_{i-1})^{\ot j}\right)}_F^2 \\
        &\le_{\text{$\sym$ is contractive}} (q-1)!!\norm{\sum_{i \in [q], j = q-i}B^{\ot i}\ot p(w_{i-1})^{\ot j}}_F^2 \\
        &=_{\text{\Cref{lem:l1-to-lt}}} q^2(q-1)!!\sum_{i \in [q], j = q-i}\norm{B^{\ot i}\ot p(w_{i-1})^{\ot j}}_F^2 \\
        &\le_{\text{\Cref{lem:scalar-am-gm} + \Cref{prop:matrix-bernstein-linear-ramp}}}\frac{q^3(q-1)!!}{\delta^q} \frac{1}{n^q}\, .  
    \end{align*}

    \paragraph{Soundness \& Optimality of Taylor Expansion} The last step is to invoke soundness for all distributions~(\Cref{sec:actual-conditional}) which implies that the corresponding pseudo-distribution comprising the low-degree moments of the ``Subag distribution'' implicit in~\Cref{alg:subag-2} satisfies the entropy constraints. Invoking~\cite[Corollary 59]{arous2020geometry} in conjunction with a Taylor series argument immediately implies that the hessian has the largest contribution, and the final energy is lower-bounded by $\sum_{i=1}^k \frac{1}{k}\nu''(q_i)^{1/2} - \eps\cdot k(\eps) - O\left(\frac{1}{k(\eps)^{1/2}}\right)$, which is arbitrarily close to $\int_{0}^1 \nu''(q)^{1/2}dq$ with the appropriate choice of $k(\eps)$. The upper bound follows from~\Cref{lem:chen-sen}, as no configuration can have value greater than the true maximum.
\end{proof}

\subsection{Equivalence of rescaled spherical and Gaussian moments}
We state and prove a small lemma that allows us to freely go between the low-degree moments of the uniform measure on the real unit sphere and an appropriate rescaling of the comparable Gaussian moments.

\begin{lemma}[Equivalence of re-scaled gaussian and uniform moments]\label[Lemma]{lem:denom-conc}
    Given $x \sim \cN(0, \Id_n)$ and $v \sim \mathsf{Unif}(\cS^{n-1}(1))$, the following holds,
    \begin{equation}
        \E_{v}[(vv^\sT)^{\ot q}] = \frac{(1 - O_n(1/n))}{n^q}\E_x[(xx^\sT)^{\ot q}]\, .
    \end{equation}
\end{lemma}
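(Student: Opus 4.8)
The plan is to reduce the identity to the classical polar decomposition of a standard Gaussian together with an exact computation of the radial moments. First I would recall that since $\cN(0,\Id_n)$ is rotationally invariant, the pair $(R, v)$ with $R := \norm{x}_2$ and $v := x/\norm{x}_2$ is independent, with $v \sim \mathsf{Unif}(\cS^{n-1}(1))$ and $R^2 \sim \chi^2_n$. Writing $x = R\,v$ and using that the tensor power is homogeneous of degree $2q$ in $x$, we get the exact identity $(xx^\sT)^{\ot q} = R^{2q}\,(vv^\sT)^{\ot q}$, so by independence
\[
\E_x\!\left[(xx^\sT)^{\ot q}\right] = \E\!\left[R^{2q}\right]\cdot \E_v\!\left[(vv^\sT)^{\ot q}\right].
\]

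Next I would compute the scalar factor $\E[R^{2q}] = \E\!\left[(\chi^2_n)^q\right]$, which is the standard falling-factorial-type product $\prod_{j=0}^{q-1}(n+2j) = n^q\prod_{j=0}^{q-1}\!\bigl(1 + \tfrac{2j}{n}\bigr)$. Since $q$ is a fixed constant, this equals $n^q\bigl(1 + O_q(1/n)\bigr)$ with the implied constant depending only on $q$, and in particular $\E[R^{2q}]$ is strictly positive. Dividing through gives
\[
\E_v\!\left[(vv^\sT)^{\ot q}\right] = \frac{1}{\E[R^{2q}]}\,\E_x\!\left[(xx^\sT)^{\ot q}\right] = \frac{1}{n^q}\bigl(1 + O_q(1/n)\bigr)^{-1}\E_x\!\left[(xx^\sT)^{\ot q}\right] = \frac{1 - O_n(1/n)}{n^q}\,\E_x\!\left[(xx^\sT)^{\ot q}\right],
\]
which is exactly the claimed statement.

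Finally I would note the minor bookkeeping needed for the way this lemma is invoked in \Cref{lem:feasibility}: there the relevant Gaussian is $\cN(0,\,\Id_m \oplus 0_{n-m})$ for $m = A(\eps)n$, and the uniform vector lives on the sphere of the $m$-dimensional support subspace. Applying the computation above inside that subspace gives the identity with $n$ replaced by $m$, and then padding the $n-m$ zero coordinates (which contribute trivially and commute with $\sym$) recovers the embedded form used in the feasibility argument; the error is $O(1/m) = O(1/n)$ since $A(\eps)$ is a constant. I do not anticipate a genuine obstacle here — the only point requiring a little care is making sure the $O(\cdot)$ is taken with $q$ (and, in the subspace version, $A(\eps)$) held fixed so that the error is uniform in $n$, which is automatic from the explicit product formula for $\E[(\chi^2_n)^q]$.
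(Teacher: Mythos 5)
Your proof is correct and arrives at exactly the same exact constant as the paper, namely $\E_v[(vv^\sT)^{\ot q}] = \frac{1}{n(n+2)\cdots(n+2q-2)}\,\E_x[(xx^\sT)^{\ot q}]$, but by a genuinely different route. The paper proceeds entry by entry: it expands both tensors in the elementary basis, kills the odd monomials by symmetry, and then computes each even spherical moment $\E_v[v^{\alpha}]$ explicitly via Folland's Gamma-function formula for $\int_{\cS^{n-1}} v^{\alpha}\,d\sigma$, matching it against the corresponding Gaussian moment. You instead use the polar decomposition $x = Rv$ with $R=\norm{x}_2$ and $v=x/\norm{x}_2$ independent, so the whole tensor identity follows in one line from $(xx^\sT)^{\ot q}=R^{2q}(vv^\sT)^{\ot q}$ and the single scalar computation $\E[(\chi^2_n)^q]=\prod_{j=0}^{q-1}(n+2j)=n^q(1+O_q(1/n))$. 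Your argument is shorter and more conceptual — it never needs the explicit spherical integrals — while the paper's entry-wise computation makes the correspondence between individual monomial moments explicit (which is mildly convenient when the lemma is later applied coordinate-wise in the feasibility argument). Your closing remark about the degenerate covariance $\Id_m\oplus 0_{n-m}$ used in \Cref{lem:feasibility} is a sensible bookkeeping point and handled correctly by working inside the $m$-dimensional support subspace; the paper leaves this implicit via rotational invariance. No gaps.
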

\begin{proof}
    The proof will use the rotational invariance of both distributions and an explicit entry-by-entry comparison between the rescaled Gaussian moment matrix and the real Haar distribution moment matrix.
    \begin{align*}
        \E_x[(xx^\sT)^{\ot q}] = \sum_{a_1,\dots,a_{2q}=1}^n \E_{x}[x_{a_1}\cdots x_{a_{2q}}](e_{a_1}\ot \cdots \ot e_{a_q})(e_{a_{q+1}} \ot \cdots \ot e_{a_{2q}})^\sT\, .
    \end{align*}
    By exactly the same argument,
    \begin{align*}
        \E_v[(vv^\sT)^{\ot q}] = \sum_{a_1,\dots,a_{2q}=1}^n \E_{v}[v_{a_1}\cdots v_{a_{2q}}](e_{a_1}\ot \cdots \ot e_{a_q})(e_{a_{q+1}} \ot \cdots \ot e_{a_{2q}})^\sT\, .
    \end{align*}
    It is elementary to see from~\Cref{lem:standard-gaussian-moments} and the definition of the $\Gamma$-function that the $2m_i$-th moment of $x_i \sim \cN(0,1)$ is,
    \begin{equation}\label{eq:gaussian-mmoments}
        \E_{x_i \sim \cN(0,1)}[x^{2m_i}_i] = (2m_i-1)(2m_i-3)\cdots (3)(1) = \frac{(2m_i)!}{2^{m_i} m_i!} = \frac{\Gamma(m_i + 1/2)2^{m_i}}{\sqrt{\pi}}\, .
    \end{equation}
    Symmetry implies that the odd moments of the $1$-marginals of the uniform distribution are $0$. By
    \cite[Theorem, Pg. 2]{folland2001integrate}, the even moments are
    \begin{align}\label{eq:spherical-moments}
        \E_{v}[v_{a_1}\cdots v_{a_{2q}}] &= \E_v\left[\prod_{i=1}^{\ell}(v_i^2)^{m_i}\right] \nonumber\\
        &= \int_{v \in \R^n}\left(\prod_{i=1}^{\ell}(v_i^2)^{m_i}\right)d\sigma = \frac{\int_{v \in \cS^{n-1}(1)}v^{\alpha}d\sigma}{\int_{v \in \cS^{n-1}(1)}d\sigma} \nonumber\\
        &= \frac{2\Gamma(m_1 + 1/2)\Gamma(m_2 + 1/2)\cdots\Gamma(m_\ell + 1/2)\left(\Gamma(1/2)\right)^{n - \ell}}{\Gamma(q + n/2)}\frac{\Gamma(n/2)}{2\Gamma(1/2)^n} \nonumber\\
        &= \frac{\E_{x}[x_{a_1}\cdots x_{a_{2q}}]}{2^q}\frac{\Gamma(n/2)}{\Gamma(q + n /2)}\nonumber \\
        &= \frac{\E_{x}[x_{a_1}\cdots x_{a_{2q}}]}{n(n+2)(n+4)\cdots (n+2q-2)}
        \,,
    \end{align}

    where $2(m_1 + \cdots + m_\ell) = 2q$ with $m_i$ denoting the number of times an index $a_i \in [n]$ appears in $\{a_1,\dots,a_{2q}\}$ and $\ell$ being the number of distinct indices.
    Combining~\Cref{eq:spherical-moments} and~\Cref{eq:gaussian-mmoments} immediately yields that,
    \begin{equation}
        \E_{v}[v_{a_1}\cdots v_{a_{2q}}] = \frac{(1 - O_n(1/n))}{n^q}\E_x[x_{a_1}\cdots x_{a_2q}]\, ,
    \end{equation}
    for every $a_1,\dots,a_{2q} \in \{1,\dots,n\}$.
\end{proof}

\section{Local iterative algorithms vs. HES SoS hierarchy}\label{sec:robustness}
In this section, we provide three concrete examples of inputs from a ``modified'' spherical spin glass ensemble, and then give proofs for the first two (and a sketch for for the latter) of the sub-optimality of local iterative algorithms on these (such as Hessian ascent~\cite{subag2021following} and its randomized generalization~(\Cref{alg:subag-2})) along with arguments for why the HES SoS relaxation provides certificates with better values. The last example, in particular, motivates the conjecture that the HES SoS relaxation is capable of ``disentangling'' arbitrary correlations in the input, provided they are sufficiently smooth.

\subsection{Invariance with respect to the spherical ideal}
The first example is one that ``fools'' the randomized version~(\Cref{alg:subag-2}) of the Hessian ascent~\cite{subag2021following} algorithm by a scaling of the degree-2 component.
\begin{example}[Homogeneous degree scaling]\label[example]{example:degree-scaling}
Let the Hamiltonian $H$ be,
\[
    H(\sigma) = \alpha_2\norm{\sigma}^{100}_2 H_2(\sigma) + H_4(\sigma)\, , 
\]
for some choice of $\alpha > 0$. Then, the algorithm given in~\Cref{alg:subag-2} is provably suboptimal, but the HES SoS relaxation does better.
\end{example}
\begin{proof}
A proof sketch for the claim above is given, whereby the geometry and scaling of the top $\delta n$-dimensional eigenspace of the Hessian is calculated, and then it is demonstrated that~\Cref{alg:subag-2} initially follows the \emph{incorrect} direction, thereby accumulating an incorrect amount of energy before it changes direction. However, the HES relaxation is able to use the Schatten $p$-norm of the collective Hessian to ``distinguish'' between the space where the large (and dominant) amount of signal from the top part of the spectrum resides, while the rank-$1$ terms contribute signal that is suppressed by the moment scalings implied by the HES constraints.

\paragraph{Eigenspectrum of the Hessian} When $\sigma \in \cS^{n-1}(1)$, note that $H(\sigma)$ is indistinguishable from an instance of a spherical spin glass Hamiltonian.
The Hessian of $H(\sigma)$ is,
\begin{align*}
    \frac{1}{n}\nabla^2 H(\sigma) =&\,\,\alpha_2\norm{\sigma}^{100}_2\frac{1}{n}\nabla^2 H_2(\sigma) + \frac{1}{n}\nabla^2 H_4(\sigma) + 100\alpha_2\norm{\sigma}_2^{98}\frac{1}{n}H_2(\sigma)\Id_n \\
    &+ \underbrace{9800\alpha\norm{\sigma}_2^{96}\frac{1}{n}H_2(\sigma)\sigma\sigma^{\sT} + 100\alpha_2\norm{\sigma}_2^{98}\left(\sigma\left(\nabla \frac{1}{n}H_2(\sigma)\right)^\sT + \frac{1}{n}\nabla H_2(\sigma)\sigma^\sT\right)}_{\text{rank-1 terms}}\, .
\end{align*}
By~\cite[Proof of Lemma-3]{subag2021following}, it is clear that,
\[
    \alpha\norm{\sigma}_2^{100}\frac{1}{n}\nabla^2 H_2(\sigma) \sim \sqrt{2\alpha_2}\norm{\sigma}_2^{50}\mathsf{GOE}(n)\,,\text{ and},\,\frac{1}{n}\nabla^2 H_4(\sigma) \sim \sqrt{12}\mathsf{GOE}(n)\, .
\]
Furthermore,
\begin{align*}
    H_2(\sigma) \le \sup_{\sigma \in \cS^{n-1}(\norm{\sigma}^2_2)} H_2(\sigma) =_{\text{\Cref{prop:frsb-equivalent}}} \sqrt{2}\int_0^{\norm{\sigma}^2_2}dx = \sqrt{2}\norm{\sigma}^2_2\, . 
\end{align*}
This immediately implies that \emph{all} the (redundant) eigenvalues of $100\alpha_2\norm{\sigma}^{98}_2\frac{1}{n}H_2(\sigma)\Id_n$ are no more than $ 100\sqrt{2}\alpha_2 \norm{\sigma}^{100}_2$.

\paragraph{Suboptimality of the rank-$1$ components}
Suboptimality of the rank-$1$ terms is a consequence of the analysis of Subag~\cite{subag2018free, subag2021following} but can also be rigorously justified by noticing that they have large alignment with the gradient which leads to larger steps (under the HES constraints) and sub-optimal gain of energy after a certain number of steps\footnote{This can be made rigorous via an analysis based on a Taylor expansion argument which gives a precise iterate $i$ after which it becomes suboptimal to follow the gradient term.}.

\paragraph{Threshold for suboptimality of local algorithm} We show the sub-optimality of the variant of Subag's algorithm stated in~\Cref{alg:subag-2}. Note that, since $v \sim \cS^{\mathsf{eigen}_{\delta n}(\nabla^2 H(\sigma))}(1)_{\perp \sigma}$, by construction,
\begingroup
\allowdisplaybreaks
\begin{align*}
    &\left\langle 9800\alpha_2\norm{\sigma}_2^{96}\frac{1}{n}H_2(\sigma)\sigma\sigma^{\sT} + 100\alpha_2\norm{\sigma}_2^{98}\left(\sigma\left(\nabla \frac{1}{n}H_2(\sigma)\right)^\sT + \frac{1}{n}\nabla H_2(\sigma)\sigma^\sT\right), vv^\sT\right\rangle \\
    &= 0 + \left\langle 9800\alpha_2\norm{\sigma}_2^2\sigma\left(\frac{1}{n}H_2(\sigma)\right)^\sT, vv^\sT\right\rangle + 0 = 0 + 0 + 9800\alpha_2\norm{\sigma}^2_2\left\langle\frac{1}{n}H(\sigma),v\right\rangle\langle\sigma,v\rangle = 0\, .
\end{align*}
\endgroup
Consequently, the algorithm chooses the next step uniformly at random from the unit sphere in the space spanned by the top $\delta n$-dimensional eigenspace of $\alpha\norm{\sigma}^2_2\frac{1}{n}\nabla^2_{\perp \sigma} H_2(\sigma) + \frac{1}{n}\nabla^2_{\perp \sigma} H_4(\sigma) + 100\alpha\norm{\sigma}^{98}_2\frac{1}{n}H_2(\sigma)\left(\Id_n - \sigma\sigma^\sT\right)$, where the Hessians have been projected orthogonal to $\sigma$. Both the Hessians are symmetric with independent Gaussian coefficients, and therefore, by~\cite[Proof of Lemma 3]{subag2021following},
\begingroup
\allowdisplaybreaks
\begin{align*}
    &\alpha_2\norm{\sigma}^2_{100}\frac{1}{n}\nabla^2_{\perp \sigma} H_2(\sigma) + \frac{1}{n}\nabla^2_{\perp \sigma} H_4(\sigma) + 100\alpha\norm{\sigma}^{98}_2\frac{1}{n}H_2(\sigma)\left(\Id_n - \sigma\sigma^\sT\right) \\
    &\sim \left(\sqrt{2\alpha_2}\norm{\sigma}_2^{50}\right)\mathsf{GOE}(n-1) \boxplus \sqrt{12}\mathsf{GOE}(n-1) + 100\alpha\norm{\sigma}_2^{98}\frac{1}{n}H_2(\sigma)\\
    &\sim \left(\sqrt{2\alpha_2}\norm{\sigma}_2^{50}\right)\mathsf{GOE}(n-1) \boxplus \sqrt{12}\mathsf{GOE}(n-1) + 100\alpha\norm{\sigma}_2^{98}\frac{1}{n}H_2(\sigma) \\
    &\sim_{\text{free convolution~\cite{mingo2017free}}} \left(\sqrt{2\alpha_2\norm{\sigma}_2^{100} + 12}\right)\mathsf{GOE}(n-1) + 100\alpha_2\norm{\sigma}_2^{98}\frac{1}{n}H_2(\sigma)\, .    
\end{align*}
\endgroup
Now, note that, for $\norm{\sigma}^2_2$ sufficiently small, as the top $\delta n$-dimensional eigenbases of $\frac{1}{n}\nabla^2_{\perp \sigma}H_2(\sigma)$ and $\frac{1}{n}\nabla^2_{\perp \sigma}H_4(\sigma)$ will be mostly orthogonal (which can be verified using the fact that symmetric $\mathsf{GOE}$ matrices are orthogonally invariant), one can see that the local algorithm will (with high probability) move in the direction of $\frac{1}{n}\nabla^2_{\perp \sigma}H_4(\sigma)$. However, as $\norm{\sigma}^2_2 \to 1$, the dominating signal will come (for sufficiently large $\alpha_2 > 0$) from $\frac{1}{n}\nabla^2_{\perp\sigma}H(\sigma)$. 

\paragraph{Robustness of the HES relaxation} The HES SOS relaxation will use the following low-degree extension,
\[
    \Tilde{H}(\sigma) = \alpha_2H_2(\sigma) + H_4(\sigma)\, ,
\]
which is simply equivalent to the original spherical spin-glass Hamiltonian. Consequently,~\Cref{thm:main} implies that the HES SoS hierarchy outputs certificates with value better than direct Hessian ascent.
\end{proof}

\underline{\textbf{Note}:} While the above argument demonstrates that the HES relaxation is automatically robust to the sort of perturbed Hamiltonian shown, a small modification to the local algorithm to quotient the input Hamiltonian by the ideal $\norm{\sigma}_2^{2p} = \left(\frac{i}{k}\right)^p$ (which is just the $p$-th power of the length of the current step $\sigma_i$) fixes this example.

\subsection{Global information: The direct-sum case}
We now use projectors $\{\Pi, \Pi^{\perp}\}$ to project into subspaces, and then define different degree components to act in complementary subspaces. This leads to examples of Hamiltonians which confuse the Hessian ascent algorithm in a way that doesn't seem to be fixable with \emph{local} information, since one would require global knowledge of a path in the \emph{future} (on average) to prevent the algorithm from getting ``stuck'' in the wrong subspace. 
\begin{example}[Disjoint-subspace projector]\label[example]{example:direct-sum}
    Let the hamiltonian $H$ be,
    \[
        H(\sigma) = \alpha_1 H_2(\Pi\sigma) + \alpha_2 H_4(\Pi^{\perp}\sigma)\, .
    \]
    Then, there exists a choice of $\alpha_1,\alpha_2 \ge 0$ which causes~\Cref{alg:subag-2} to be provably suboptimal, but the HES SoS relaxation does better. 
\end{example}
\begin{proof}
We will normalize contributions to the energy in the both sub-spaces, so that the ``history'' of energy gains from the degree-$2$ component can \emph{never} fool the HES SoS hierarchy to \emph{only} follow the degree-$2$ steps, but the Hessian ascent~\cite{subag2021following} algorithm (and its randomized version) will always follow the degree-$2$ component initially, thereby making it suboptimal.

\paragraph{Eigenspectrum of the Hessian} It is clear to see that the Hessian of the as-is Hamiltonian is,
\[
    \frac{1}{n}\nabla^2 H(\sigma) = \frac{1}{n}\left(\alpha_2 \Pi \nabla^2 H_2(\Pi\sigma) \Pi + \alpha_4\Pi^\perp H_4(\Pi^\perp\sigma)\Pi^\perp\right)\, , 
\]
and that it has a spectral law given by,
\[
    \frac{1}{n}\nabla^2 H(\sigma) \sim \sqrt{2\alpha_2}\mathsf{GOE}(\Tr[\Pi]) \oplus \norm{\Pi^\perp\sigma}^2_2\sqrt{12\alpha_4}\mathsf{GOE}(\Tr[\Pi^\perp])\, .        
\]

\paragraph{Normalization of contributions to $\Pi$ \& $\Pi^\perp$} We now choose $\alpha_2, \alpha_4 \ge 0$, such that, the contribution from the degree-2 part forces the Hessian ascent algorithm into the $\Pi$-subspace, thereby, only following the degree-2 signal. Note that, near the origin (at the $i$-th step), we want
\[
    \sqrt{2\alpha_2} >> \norm{\Pi^\perp\sigma}^2_2\sqrt{12 \alpha_4}\, ,
\]
which holds for a choice of $\alpha_4 << \frac{1}{\norm{\Pi^\perp\sigma}^2_2}\frac{\alpha_2}{6}$. A concrete choice of $\alpha_4 = \frac{1}{\norm{\Pi^\perp\sigma}^2_2}\frac{\alpha_2}{12}$ suffices to obstruct the local algorithm (since it will be mislead within the first few steps).

\paragraph{Robustness of the SoS HES relaxation} The previous choice of normalization motivates the following extended hamiltonian in the ball,
\[
    \tilde{H}(\sigma) = \alpha_2\norm{\sigma}^2_2H_2(\Pi\sigma) + \alpha_4H_4(\Pi^\perp\sigma)\, .
\]
After some algebra, it is easy to see that the Hessian, ignoring rank-$1$ terms, is given as,
\[
    \frac{1}{n}\nabla^2 \Tilde{H}(\sigma) = \frac{1}{n}\left(\alpha_2\norm{\sigma}^2_2 \Pi \nabla^2 \Tilde{H}_2(\Pi\sigma)\Pi + \alpha_4 \Pi^\perp \nabla^2 \Tilde{H}_4(\Pi^\perp\sigma) \Pi^\perp + \alpha_2 \Tilde{H}_2(\Pi\sigma)\Id_n\right)\, .
\]
Assuming that $\sigma$ is the $i$-th step, the eigenspectrum of the above Hessian (modulo the rank-$1$ contributions) is,
\begin{align*}
    \left\{\sqrt{2\alpha_2\norm{\sigma}^2_2}\mathsf{GOE}(\Tr[\Pi]) \oplus \sqrt{12\alpha_4\norm{\Pi^\perp\sigma}^2_2}\mathsf{GOE}(\Tr[\Pi^\perp])\right\} + \alpha_2\left(\sum_{j=0}^{i}\frac{(j-i)}{n}\left\langle \nabla^2 \Tilde{H}_2(\Pi\sigma_{j-1}),(\Pi v_j)(\Pi v_j)^\sT\right\rangle\right)\, ,
\end{align*}
where we intrinsically used the fact that the contributions of the higher-order derivatives are suppressed under the HES distributions. Given the normalizations above, this can be simplified further to purely depend on $\alpha_2$ as,
\[
    \left\{\sqrt{2\alpha_2\norm{\sigma}^2_2}\mathsf{GOE}(\Tr[\Pi]) \oplus \sqrt{\alpha_2}\mathsf{GOE}(\Tr[\Pi^\perp]) \right\} + \alpha_2\left(\sum_{j=0}^{i}\frac{(j-i)}{n}\left\langle \nabla^2 \Tilde{H}_2(\Pi\sigma_{j-1}),(\Pi v_j)(\Pi v_j)^\sT\right\rangle\right)
\]
Since $\norm{\sigma}^2_2 < \frac{1}{2}$ for the first half of the iterates and the energy from the $\alpha_2 \Tilde{H}_2(\Pi\sigma)\Id_n$ term contributes equally to $\Pi$ and $\Pi^\perp$ (while being independent of $v$), it is clear that the SoS HES certificates for the Hessian will be dominated by the signals from the $\Pi^\perp$ space.

\paragraph{Suboptimality of local algorithm} Since the local algorithm can only see the as-is extended Hamiltonian, for the first few steps it will \emph{only} follow the $\Pi$-subspace (because of the reasoning above). Note that this implies that the algorithm will go into the wrong subspace for a constant fraction of the time, and output a solution strictly worse than the HES relaxation.
\end{proof}

\textbf{Note:} The only missing part of the argumentation presented is that the certifiable threshold rank condition of~\Cref{thm:main} is proxied for using the extremal point in the support of the bulk of the spectrum. Given that the eigenspectrum involves a block-decomposed sum of (independent) $\mathsf{GOE}$ matrices and~\Cref{lem:hessian-schatten-norm} already certifies the $p$-Schatten norm of matrices that are Wigner-like, this is a reasonable proxy. To see why, note that,
\begingroup
\allowdisplaybreaks
\begin{align*}
    \norm{\frac{1}{n}\nabla^2 \Tilde{H}(\sigma)}^p_p &= \|\frac{1}{n}\left\{\sqrt{2\alpha_2\norm{\sigma}^2_2}\mathsf{GOE}(\Tr[\Pi]) \oplus \sqrt{\alpha_2}\mathsf{GOE}(\Tr[\Pi^\perp]) \right\} + \\
    &\alpha_2\left(\sum_{j=0}^{i}\frac{(j-i)}{n}\left\langle \nabla^2 \Tilde{H}_2(\Pi\sigma_{j-1}),(\Pi v_j)(\Pi v_j)^\sT\right\rangle\right)\Id_n\|^p_p \\
    &= \frac{1}{n^p}\sum_{k=0}^p\left(\alpha_2\tilde{H}_2(\sigma)\right)^{p-k} n\norm{\sqrt{2\alpha_2\norm{\sigma}^2_2}\mathsf{GOE}(\Tr[\Pi]) \oplus \sqrt{\alpha_2}\mathsf{GOE}(\Tr[\Pi^\perp])}^k_k \\
    &= \frac{1}{n^{p-1}}\sum_{k=0}^p\left(\alpha_2\tilde{H}_2(\sigma)\right)^{p-k}\left(\norm{\sigma}^{2k}_2 \norm{\alpha_2\Pi \nabla^2 \Tilde{H}_2(\Pi\sigma)\Pi}^k_k + \norm{\alpha_4 \Pi^\perp \nabla^2 \Tilde{H}_4(\Pi^\perp\sigma) \Pi^\perp}^k_k\right)\,, 
\end{align*}
\endgroup
at which point one can invoke~\Cref{lem:hessian-schatten-norm} to bound the Schatten norms of each of the individual $\mathsf{GOE}$ components and then use the HES constraint on the moment matrices of $\sigma$ to apply a $(p,q)$-H\"older's inequality as in~\Cref{cor:shifted-hessian-holders}.

\subsection{More involved HES SoS proofs}
For this example, a SoS proof of stepwise boundedness will require upper bounding the value gained over all of a set of possible choices that the algorithm could've made in previous steps.
Here, the Hamiltonian will involve the direct sum of two usual spin glass Hamiltonians in different subspaces, along with an interaction term between the two subspaces.
These terms will be set up so that the optimal choice will be to evenly balance $\sigma$ between the two subspaces at every step, but showing an upper bound will require considering every possible choice of how to split $\sigma$ between the two subspaces at each step.

This will only be a very rough sketch of a possible argument, depending on the following assumption:

\begin{assumption}
\label[Assumption]{ass:free-prob-sos-proof}
    Let $A \in \R^{(n/2) \times (n/2)}$ and $C \in \R^{(n/2) \times (n/2)}$ be random dense symmetric-matrix-valued polynomials of $\sigma$ and $\tau$ respectively, and $B \in \R^{(n/2) \times (n/2)}$ be a random dense asymmetric-random-valued matrix polynomial of both $\sigma$ and $\tau$.
    In this way, $A$, $B$, and $C$ will be GOE-distributed with respect to the randomness of the polynomial coefficients and a HES distribution over $(\sigma,\tau)$.
    Let $a$, $b$, and $c$, be the expected operator norms of $A$, $B$, and $C$, which may depend on $\norm{\sigma}_2^2$ and $\norm{\tau}_2^2$.

    There exists some function $f(\mu^2,a,b,c)$ that expresses the with-high-probability maximum over $(u,v)$ satisfying $\norm{u}_2^2 + \norm{v}_2^2 = 1$ and $\norm{u}_2^2 = \mu^2$ of
    \[
    \left(u\;\;v\right)M\left(\!\!\begin{array}{c}
        u \\
        v
    \end{array}\!\!\right)
    :=
    \left(u\;\;v\right)\left(\begin{array}{cc}
        A & B \\
        B^{\sT} & C
    \end{array}\right)\left(\!\!\begin{array}{c}
        u \\
        v
    \end{array}\!\!\right)\]
    There also exist $g_X(\mu^2,a,b,c)$ for $X \in \{A,B,C\}$ that express the with-high-probability values of $u^{\sT}Au$, $2u^{\sT}Bv$, and $v^{\sT}Cv$, respectively, for the aforementioned maximizing $(u,v)$.

    The assumption is that there are HES SoS proofs that capture the fact that the quantities described by $f(\mu^2,a,b,c)$ and $g_X(\mu^2,a,b,c)$ actually concentrate strongly at the true values of those functions.

    Furthermore, there is $b^2/(ac)$ large enough so that $f(\mu^2,a,b,c)$ is maximized at $\mu^2 = 1/2$.
    
\end{assumption}

One might hope for the existence of this HES SoS proof because there are combinatorial arguments that compute the spectral moments of block-structured Wigner-type random matrices via a trace power method~\cite{zhu2020graphon} which has structure very similar to the proof of~\Cref{lem:hessian-schatten-norm}.

\begin{example}[Shared-subspace projector]\label[example]{example:shared-sum}
    Let the Hamiltonian $H$ be,
\[
    H(\sigma) = \alpha_4H_4(\Pi\sigma) + \alpha_6 H_{6}(\Pi\sigma, \Pi^{\perp}\sigma) + \alpha_8 H_8(\Pi^{\perp}\sigma)\, ,
\]
where $H_{6}(\rho,\tau) = \iprod{g_6, \rho^{\ot 2}\ot\tau^{\ot 4}}$ for $g_6$ an order-6 tensor of i.i.d. unit Gaussian entries. Then, there exists a choice of $\alpha_4, \alpha_6, \alpha_8 \ge 0$ which causes~\Cref{alg:subag-2} to be provably suboptimal, but the HES SoS relaxation does better with \Cref{ass:free-prob-sos-proof}. 
\end{example}
\begin{proofsketch}
The argument for the failure of the local algorithm will be very similar as for the previous example, so we focus here on the success of the HES hierarchy, especially on the stepwise upper bound on the extended Hamiltonian.
    Let the extension be 
\[ \tilde H(\sigma) = \alpha_4\norm{\sigma}_2^4H_4(\Pi\sigma) + \alpha_6\norm{\sigma}_2^2H_{6}(\Pi\sigma, \Pi^{\perp}\sigma) + \alpha_8 H_8(\Pi^{\perp}\sigma)\,.\]
Then its Hessian is 
\begin{align*} \nabla^2\tilde H(\sigma) ={}& \alpha_4\left(\norm{\sigma}_2^4\nabla^2H_4(\Pi\sigma)+12\norm{\sigma}_2^2H_4(\Pi\sigma)\Id\right)
\\&+ 2\alpha_6H_{6}(\Pi\sigma, \Pi^{\perp}\sigma)\Id
\\&+ \alpha_6\norm{\sigma}_2^2\left(2\nabla_1^2H_{6}(\Pi\sigma, \Pi^{\perp}\sigma) + 12\nabla_2^2H_{6}(\Pi\sigma, \Pi^{\perp}\sigma) + 16\nabla_1\nabla_2H_{6}(\Pi\sigma, \Pi^{\perp}\sigma) \right)
\\&+ \alpha_8 \nabla^2H_8(\Pi^{\perp}\sigma)
+ \text{[rank-1 terms]}
\,,\end{align*}
where $\nabla_1$ refers to the gradient in the first argument of a multi-argument function and similarly for $\nabla_2$, and the rank-1 terms make negligible contribution over an HES distribution.

Now we fit this into \Cref{ass:free-prob-sos-proof}, characterizing each of these terms by their norm and whether they are part of $A$, $B$, $C$, or constant with respect to $v_i$.
We halve the contributions to $B$ due to the fact that $B$ needs to appear twice.

\begin{tabular}{l|c|c}
Term & Norm & Type
\\\hline
     $\alpha_4\norm{\sigma}_2^4\nabla^2H_4(\Pi\sigma)$
     & $\sqrt{6}\,\alpha_4\norm{\sigma}_2^4\norm{\Pi\sigma}_2^2$ & $A$
     \\
     $12\alpha_4\norm{\sigma}_2^2H_4(\Pi\sigma)\Id$
     & $12\alpha_4\norm{\sigma}_2^2H_4(\Pi\sigma)$ & constant
     \\
     $2\alpha_6H_{6}(\Pi\sigma, \Pi^{\perp}\sigma)\Id$
     & $2\alpha_6H_{6}(\Pi\sigma, \Pi^{\perp}\sigma)$ & constant
     \\
     $2\alpha_6\norm{\sigma}_2^2\nabla_1^2H_{6}(\Pi\sigma, \Pi^{\perp}\sigma)$
     & $\sqrt{2}\alpha_6\norm{\sigma}_2^2\norm{\Pi^{\perp}\sigma}_2^4$ & $A$
     \\
     $12\alpha_6\norm{\sigma}_2^2  \nabla_2^2H_{6}(\Pi\sigma, \Pi^{\perp}\sigma)$
     & $\sqrt{12}\alpha_6\norm{\sigma}_2^2\norm{\Pi\sigma}_2^2\norm{\Pi^{\perp}\sigma}_2^2$ & $C$
     \\
     $16\alpha_6\norm{\sigma}_2^2 \nabla_1\nabla_2H_{6}(\Pi\sigma, \Pi^{\perp}\sigma)$
     & $\sqrt{8}\alpha_6\norm{\sigma}_2^2\norm{\Pi\sigma}_2^1\norm{\Pi^{\perp}\sigma}_2^3$ & $B$
     \\
     $\alpha_8 \nabla^2H_8(\Pi^{\perp}\sigma)$
     & $\sqrt{28}\norm{\Pi^{\perp}\sigma}_2^6$ & $C$
     \\
\end{tabular}

At this point, we invoke \Cref{ass:free-prob-sos-proof} and unfurl it over $i$ steps to get a formula for the total expected value of the Hamiltonian $H(\sigma_i) \approx \sum_{j \in [i]} v_j^{\sT}\nabla^2 H(\sigma_{j-1})v_j$ at the end of $i$ steps, as a function of $\norm{\Pi v_j}_2^2$ for each $j \in [i]$.

Showing that this formula is maximized when $\norm{\Pi v_j}_2^2 = 1/2$ for all $j$ will involve a ``payment plan'' argument where, while the value gain from the $H_8$ term is immediate, the value gain from the $H_4$ and $H_6$ terms are spread out over all future steps via the $\Id$ terms in the Hessian.
The task will be to show that the extra value from the ``payment plans'' when taking $\norm{\Pi v_j}_2^2 > 1/2$ fails to offset the future losses to the $H_8$ term, since the $H_8$ term scales with a relatively high power of $\norm{\Pi^{\perp}\sigma}_2^2$, and similarly but conversely when taking $\norm{\Pi v_j}_2^2 < 1/2$.
\end{proofsketch}

\underline{\textbf{Note:}} The parameters in the examples above (the subspaces projected to) are carefully balanced (for instance, $\{\Pi, \Pi^{\perp}\}$ are chosen so that $\Tr[\Pi] = \Tr[\Pi^{\perp}]$) to make the argument easier, but we expect that the argument generalizes to unbalanced choices as well, in part by using the $R$-transform from free probability theory~\cite{speicher2019lecture} to more precisely calculate spectral distributions of sums of matrices. Given this flexibility of choice one has to ``fool'' the local algorithm in ways that the SoS HES relaxation seems to be robust to, it leads to a natural conjecture - That the SoS HES relaxation can ``peel'' out correlations between the entries (when drawn from some sub-gaussian distribution), provided these correlations are arbitrarily well-approximated by a low-degree polynomial. This conjecture is formally stated in~\Cref{conj:distributional-robustness}.

\section{Discussion}\label{sec:discussion} 

\subsection{Open problems}
\subsubsection{Hardness equivalence between LDPs and \texorpdfstring{$O(1)$}{O(1)}-degree SoS?}\label{sec:discussion-hardness}
In the introduction, we described the desire in the literature to relate various meta-algorithmic families, such as LDPs and SoS hierarchies. To demonstrate two-way reductions between these families for the family of random optimization problems of interest, it is not enough to merely show that the (for instance) SoS HES hierarchy can match the various LDP algorithms in use on the family of problems, one must also demonstrate that any SoS hierarchy cannot do \emph{better} in terms of outputting better solutions (under the utilization of some reasonable rounding schemes).

Partly, the ``hardness'' half of this task would involve demonstrating that, under the presence of various so-called overlap-gap properties~\cite{chen2019suboptimality, gamarnik2020low, chou2022limitations, huang2021tight, jones2022random} (OGPs), there exist \emph{no} $O(1)$-degree SoS relaxations that perform above a certain algorithmic threshold~\cite[Eq. (1.9) \& Eq. (2.5)]{huang2021tight}\cite[Corollary 6.9]{jones2022random}. Even more ambitiously, one would perhaps attempt to rule out the possibility of $\poly(n)$~sized semi-definite programs (SDPs) in the presence of OGPs altogether.

However, as it stands, this seems to be a rather daunting technical challenge and a good first step in this direction would be to demonstrate the overlap-concentration of (certain families of) low-degree polynomial algorithms~\cite[Conjecture 7.1]{jones2022random} on mean-field spin glasses and sparse random Max-CSPs.

\subsubsection{SoS algorithms for the SK model \& bounded-degree CSPs under fRSB?}\label{sec:hes-sos-cube}

There is a precise way to round the parameterized HES distribution, so that the SoS relaxation can be adapted to work when the solution domain is the discrete hypercube.~This requires certifying a bound on a Hessian with a low-degree potential function added, and defining the SoS program over a newly parameterized HES distribution~\cite{auffinger2015parisi, montanari2021optimization,panchenko2013parisi}.~Doing this would require generalizing the graph matrix framework~\cite{ahn2016graph} to the setting of random matrices that are conjugated by diagonal matrices with \emph{non} i.i.d.~entries. This is a significant technical challenge, but ideas from asymptotic freeness introduced in the free probability theory literature in conjunction with appropriately certifiable properties (such as unitary-invariance) should allow the current HES SoS relaxation template to work for spin glasses on the hypercube as well.

While an AMP \cite{montanari2021optimization} algorithm and Hessian ascent-type algorithm~\cite{jekel2024potential} are already known for the SK model, it would be interesting to know if a SoS algorithm using some parameterized family of certificates can be made to work in the setting of \emph{all} bounded degree CSPs. 
Adapting the current SoS program for the sparse setting, however, would require surmounting the formidable technical challenge of bounding the operator norms of \emph{extremely} sparse graph matrices.

\subsubsection{SoS certificate complexity of the Ghirlanda-Guerra identities?}\label{sec:sos-cert-gg-identities}
Note that the success of the HES SoS relaxation given in our work crucially relies on the information about the distribution of the eigenspectrum of the Hessian and its scaling equivalence matching the optimal value given by the Chen and Sen formula~(\Cref{lem:chen-sen}). More precisely, the precise optimal value is known to us by knowledge that is \emph{external} to the SoS solver. A natural question then is whether one can certify key mathematical properties of many spin glass models that are known to be the reason why the ground state configurations reside in an ultrametric space. As mentioned before, due to a landmark result of Panchenko~\cite{panchenko2010connection, panchenko2013parisi}, it is known that ultrametricity is implied by the so-called Ghirlanda-Guerra (GG) identities~\cite{ghirlanda1998general}. As a result, a first natural question along this direction is asking whether there even \emph{exist} SoS certificates to check whether an instance of certain spin glass models satisfies some low-degree weakening of the GG identities, written over \emph{some} parameterizable step distribution.
  
\begin{openquestion}[SoS Certificate for ``weakened'' Ghirlanda-Guerra Identities of Low-Degree Moments of the Gibbs Distribution]\label[Open Question]{open-prob:sos-cert-gg-identities}
    Does there exist a SoS relaxation with degree parametrized by the support of the overlap CDF $\mu$ (and $n$) as defined in the Parisi functional $\cP_\beta(\mu)$ that can certify whether an instance $H_n(.)$ of a mean-field spin glass (approximately) satisfies the GG identities on the \emph{low degree} moments of some parameterizable distribution (to proxy those of the Gibbs distribution)? Are there other weakenings of the GG identities that a SoS relaxation can certify?
\end{openquestion}

Note that it is unclear how to even state this problem in the sum-of-squares proof system, let alone find a certificate for it, as the Ghirlanda-Guerra identities are a property satisfied by the Gibbs measure of (typical) instances of spin glass models. However, if it is possible to understand the low-degree cumulants of the Gibbs precisely, one may state a relaxed version of this property over the low-degree moments of the Gibbs distribution. An answer to the question above sheds light on the power of the SoS proof system to ``recognize'' when solutions to a problem instance satisfy a certain geometric structure (albeit under a different relaxation), which in this case is an ultrametric structure. In a regime (if any) in which certificates of low-complexity exist, one can conceivably show the existence of a SoS algorithm that relies \emph{even less} on external knowledge about the problem structure.

\subsection{Future directions}\label{sec:future}

\subsubsection{Equivalences between algorithm classes}
By subsuming local iterative algorithms on problems that previously defied a SoS approach, the HES hierarchy could make progress toward unification of meta-algorithmic families such as local algorithms~\cite{chen2019suboptimality, chou2022limitations} and the low-degree polynomial family (which include AMP and certain spectral algorithms)~\cite{hopkins2018statistical, gamarnik2020low}. This advances a goal of Schramm to classify these approaches under an \emph{algorithmic universality}~\cite{youtube}.

\subsubsection{Other spin glass ansatze}
The pseudo-distribution structure of the HES hierarchy allows for statistical mixtures of HES distributions that capture near-extrema (up to a particular precision) when solutions exhibit a particular overlap structure. This is compatible with the structure described by the Mezard-Parisi ansatz~\cite{mezard2001bethe, panchenko2015hierarchical, panchenko2016structure, coja2019spin}, involving non-constructive statistical mixtures of hierarchically-organized solutions.
A prototypical example of a problem whose solutions are described by the Mezard-Parisi ansatz is the  $\mathsf{MAX}$-$\mathsf{CUT}$ of random $d$-regular graphs for \emph{fixed} $d \in \N$.
The structural similarity suggests that the HES hierarchy could optimize these $\mathsf{MAX}$-$\mathsf{CUT}$ instances in the fRSB regime~\cite{concetti2018full, concetti2019full}.

\subsubsection{Bridging the HES and Lasserre Hierarchies}

We observe that \Cref{lem:nuc-norm-moments-informal} suggests that the HES approach might generally work when there are no large spectral weights in the higher-degree components of the input polynomial.
The existence of these heavy spectral directions would suggest that techniques based on reweighing or lifting into higher-dimensional spaces would be non-trivial~\cite{hopkins2016fast,ma2016polynomial}.
It would be interesting to bridge the gap between these two regimes.

\newpage

\section*{Acknowledgements}

\ifnum\blindmode=0
We thank Alexandra Kolla, Boaz Barak \& Peter J. Love for helpful comments and feedback.
We are grateful to Sam Hopkins for pointing out a simplification of the notation for pseudo-distributions over distributions and for various helpful discussions. We are also indebted to David Steurer for extended insightful discussions while hosting us at ETH Zurich during Summer 2022. We would like to thank Chris Jones for helpful comments on graph matrix norm bounds and certifications, as well as suggesting various improvements in exposition. We would also like to thank Pravesh Kothari for pointing us to an error in an earlier version of this argument and for providing helpful feedback on a prior draft. We are grateful to the anonymous reviewers for suggestions on how to improve presentation and include a discussion on input-oblivious rounding, and we'd like to specially thank Kunal Marwaha for giving detailed feedback and suggestions on the introduction. 

Much of this work began after a long series of discussions with Antares Chen in Fall 2021 and with Tommaso d'Orsi in Spring and Summer 2022, seeding the development of these ideas, for which we are extremely grateful. 

JS would like to thank Luca Trevisan for instigating his introduction to spin glass theory and for discussions on SoS and local-improvement-based methods. JSS would like to thank Peter J. Love for excellent mentorship and sharing thoughts on ``algorithmic  equivalence" which (in part) motivated this paper. JSS partly worked on this project as a visiting scholar at Bocconi University, hosted by Luca Trevisan during Summer 2022.

Lastly, JS \& JSS would like to extend many thanks to Prof. SueYeon Chung for graciously hosting them at the Flatiron Institute in New York City over various periods in Summer-2023 and Fall-2023 where the final stages of this paper were completed.

JSS would like to dedicate this work and the companion work~\cite{ss24rounding} to his late grandfather, Dr. Rashpal Malhotra.
\fi

\bibliography{ms.bib}
\bibliographystyle{alpha-beta}

\newpage

\appendix
\section{Univariate Matrix Bernstein Polynomials}
\label{sec:bernsteiny-stuff}
In this section, we define the \emph{matrix} bernstein basis polynomials over symmetric matrices with eigenvalues in a compact interval $I \subset \R$. The matrix Bernstein polynomial $B^f_d(M)$ of degree $d$ approximating a continuous function $f: I \to \R$ will have the same desirable properties as in the scalar regime when the underlying inputs are symmetric matrices, including a low-degree SoS proof of positivity over the Loewner order~(\Cref{app:sos-matrix-weierstrass}).

\begin{definition}[Matrix Bernstein Basis Polynomials]\label[Definition]{def:matrix-bernstein-basis}
    Given a degree $d \ge 0$, the $d+1$ matrix Bernstein basis polynomials on the interval $I = [a, b] \subset \R$ are defined as,
    \begin{equation}
        B_{d, i} = \binom{d}{i}\frac{(M - a\Id)^i(b\Id - M)^{d-i}}{(b-a)^d}\, ,
    \end{equation}
    where $M$ is a symmetric matrix such that $a\Id \preceq M \preceq b\Id$.
\end{definition}

Given a step-function $S_\alpha: I = [a,b] \to \R$ with the step at $\alpha \in (a,b)$, we define the linear ramp function $L_{\alpha, \gamma}: [a,b] \to \R$ as a continuous approximation of the step-function as follows,
\begin{equation}
    L_{\alpha, \gamma}(x) = \begin{cases}
        0 & x \in [a, \alpha - \gamma]\, , \\
        x & x \in (\alpha - \gamma, \alpha + \gamma)\, , \\
        1 & x \in [\alpha + \gamma,b]\, .
    \end{cases}
\end{equation}
Typically, $\gamma$ will be chosen to be significantly smaller than $\alpha$. For the Hessian $\nabla^2_\perp H(\cdot)$, the linear ramp function approximates its top $\delta n$-dimensional eigenspace with the following choice: $a = -2\nu''(\cdot)^{1/2}$, $b = 2\nu''(\cdot)^{1/2}$, $\alpha = 2\nu''(\cdot)^{1/2} - 2\epsilon/3$ and $\gamma = \epsilon/3$. The choice of $\epsilon$ and $\delta$ correspond to~\cite[Lemma 3]{subag2021following}.

\begin{proposition}[Matrix Bernstein Approximation of $L_{\alpha, \gamma}$]\label[Proposition]{prop:matrix-bernstein-linear-ramp}
    Let $I = [a,b] \subset \R$. Then, given a linear-ramp function $L_{\alpha, \gamma}: I \to \R$, the Bernstein polynomial $B^{L_{\alpha, \gamma}}_{d}(\cdot)$ defined over symmetric matrices $M$ with $a\Id \preceq M \preceq b\Id$ that approximates it is given as,
    \begin{equation}\label{eq:matrix-bernstein-approx}
        B^{L_{\alpha, \gamma}}_d(M) := \sum_{j=0}^d L_{\alpha, \gamma}\left(a + \frac{j(b-a)}{d}\right)\binom{d}{i}\frac{(M - a\Id)^j(b\Id - M)^{d-j}}{(b-a)^d}\, .
    \end{equation}
    Furthermore, $B^{L_{\alpha, \gamma}}_d$ has the following properties:
    \begin{description}
        \item[1.~Symmetric Positive Semi-Definiteness:] $B^{L_{\alpha, \gamma}}_d(M) \succeq 0$.
        \item[2.~Operator Norm Bound:] $B^{L_{\alpha, \gamma}}_d(M) \preceq \Id$.
        \item[3.~Close in Frobenius Norm:] Let $\Pi_{\alpha + \gamma, b}$ be the orthogonal projector to the subspace of $M$ with eigenvalues in $[\alpha + \gamma, b]$ and $p = \frac{B^{L_\alpha, \gamma}_d(M)}{\delta' n}$. Then, for any $d \ge \frac{8\nu''(q)^{1/2}}{\epsilon^3}$,
        \begin{equation*}
            \norm{p - \frac{1}{\delta' n}\Pi_{\alpha + \gamma, b}}_{F} \le \sqrt{\frac{2\delta'''}{\delta'^2}}\frac{1}{\sqrt{n}}\, ,
        \end{equation*}
        where,
        \begin{equation*}\label{eq:delta-eps-max-frobenius}
            \delta''' = \max\left\{\eps^2, \frac{3}{4}(\delta - \delta'')\right\}\, .
        \end{equation*}
    \end{description}
\end{proposition}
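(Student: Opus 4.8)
The plan is to establish \Cref{prop:matrix-bernstein-linear-ramp} by reducing everything to the scalar spectral calculus: since $M$ is symmetric with $a\Id \preceq M \preceq b\Id$, diagonalize $M = \sum_i \lambda_i u_iu_i^{\sT}$ with $\lambda_i \in [a,b]$, and observe that any univariate polynomial $p$ evaluated at $M$ acts as $p(M) = \sum_i p(\lambda_i)u_iu_i^{\sT}$. The matrix Bernstein polynomial $B^{L_{\alpha,\gamma}}_d(M)$ in \Cref{eq:matrix-bernstein-approx} is built from the commuting matrices $(M-a\Id)$ and $(b\Id - M)$ (they commute because both are polynomials in $M$), so it too acts eigenvalue-wise: $B^{L_{\alpha,\gamma}}_d(M) = \sum_i B^{L_{\alpha,\gamma}}_d(\lambda_i)u_iu_i^{\sT}$, where on the right $B^{L_{\alpha,\gamma}}_d$ denotes the ordinary scalar Bernstein polynomial of \Cref{def:scalar-bernstein} applied to the function $L_{\alpha,\gamma}$. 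Thus each of the three claimed properties is inherited from the corresponding scalar fact about $B^{L_{\alpha,\gamma}}_d$ on $[a,b]$.

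First I would prove properties 1 and 2. For positive semidefiniteness, note that each Bernstein basis coefficient $L_{\alpha,\gamma}(a + j(b-a)/d) \in [0,1] \ge 0$ (the linear ramp takes values in $[0,1]$), and each basis matrix $\binom{d}{j}(M-a\Id)^j(b\Id-M)^{d-j}/(b-a)^d$ is PSD, being a product of commuting PSD matrices — indeed $(M-a\Id)^j$ is PSD since $M \succeq a\Id$, $(b\Id-M)^{d-j}$ is PSD since $M \preceq b\Id$, and commuting PSD matrices have PSD product. A nonnegative combination of PSD matrices is PSD, giving property 1. For property 2, use the scalar partition-of-unity identity $\sum_{j=0}^d \binom{d}{j}(x-a)^j(b-x)^{d-j}/(b-a)^d = 1$ for all $x$; applied to eigenvalues this gives $\sum_j \binom{d}{j}(M-a\Id)^j(b\Id-M)^{d-j}/(b-a)^d = \Id$, and since $L_{\alpha,\gamma} \le 1$, replacing each coefficient by $1$ only increases the matrix in the Loewner order, so $B^{L_{\alpha,\gamma}}_d(M) \preceq \Id$. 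Equivalently: $B^{L_{\alpha,\gamma}}_d(\lambda_i) \in [0,1]$ for each $i$ by the scalar version of these bounds.

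The main work, and the main obstacle, is property 3: bounding $\norm{p - \frac{1}{\delta' n}\Pi_{\alpha+\gamma,b}}_F$. The approach is to split the eigenvalues of $M$ into three groups according to the ramp: those in $[a, \alpha-\gamma]$ (where $L_{\alpha,\gamma} = 0$ and $\Pi$ has eigenvalue $0$), those in $[\alpha+\gamma, b]$ (where $L_{\alpha,\gamma} = 1$ and $\Pi$ has eigenvalue $1$), and the ``ramp region'' $(\alpha-\gamma, \alpha+\gamma)$ where the two may disagree by as much as $1$. On the first two groups, the Weierstrass approximation bound of \Cref{thm:bernstein-approx} — noting $L_{\alpha,\gamma}$ is $\tfrac{1}{\gamma}$-Lipschitz with $\gamma = \epsilon/3$ and $b-a = 4\nu''(q)^{1/2}$, so degree $d \ge \tfrac{8\nu''(q)^{1/2}}{\epsilon^3}$ certainly exceeds the required $\tfrac{C(b-a)^3}{2\epsilon^3}$-type threshold — gives $|B^{L_{\alpha,\gamma}}_d(\lambda_i) - L_{\alpha,\gamma}(\lambda_i)| \le \epsilon$ per eigenvalue, hence $|p(\lambda_i) - \tfrac{1}{\delta' n}\Pi(\lambda_i)| \le \tfrac{\epsilon}{\delta' n}$, contributing at most $n \cdot \tfrac{\epsilon^2}{\delta'^2 n^2} = \tfrac{\epsilon^2}{\delta'^2 n}$ to $\norm{\cdot}_F^2$. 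For the ramp-region eigenvalues, the count of such eigenvalues is what matters: by the choice of $\alpha = 2\nu''(q)^{1/2} - 2\epsilon/3$ and $\gamma = \epsilon/3$ the interval $(\alpha-\gamma,\alpha+\gamma) = (2\nu''(q)^{1/2} - \epsilon, 2\nu''(q)^{1/2})$ sits in the edge of the scaled semicircle bulk, and the \Cref{lem:subag2021following}-style eigenvalue-counting statement (the Hessian being $\nu''(q)^{1/2}\mathsf{GOE}(n)$-distributed, cf. \Cref{sec:subag}, \Cref{lem:hessian-schatten-norm}) bounds the number of eigenvalues in that window by at most $\tfrac{3}{4}(\delta - \delta'')n$ with the threshold choices. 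Each such eigenvalue contributes at most $\tfrac{1}{\delta'^2 n^2}$ (since both $p(\lambda_i)$ and $\tfrac{1}{\delta'n}\Pi(\lambda_i)$ lie in $[0, \tfrac{1}{\delta'n}]$), so the ramp region contributes at most $\tfrac{3(\delta-\delta'')}{4\delta'^2 n}$. Taking $\delta''' := \max\{\epsilon^2, \tfrac{3}{4}(\delta-\delta'')\}$ and adding the two contributions gives $\norm{p - \tfrac{1}{\delta'n}\Pi_{\alpha+\gamma,b}}_F^2 \le \tfrac{2\delta'''}{\delta'^2 n}$, i.e. $\norm{p - \tfrac{1}{\delta'n}\Pi_{\alpha+\gamma,b}}_F \le \sqrt{\tfrac{2\delta'''}{\delta'^2}}\cdot\tfrac{1}{\sqrt n}$, as claimed. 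The delicate point to get right is matching the degree threshold $d \ge \tfrac{8\nu''(q)^{1/2}}{\epsilon^3}$ to the Lipschitz constant and interval length appearing in \Cref{thm:bernstein-approx}, and confirming the eigenvalue-window count against the $\delta,\delta''$ bookkeeping from Subag's lemma; everything else is routine spectral calculus.
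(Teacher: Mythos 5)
Your proposal follows essentially the same route as the paper's proof: parts 1 and 2 via nonnegative combinations of products of commuting PSD matrices and the binomial partition-of-unity identity applied eigenvalue-wise, and part 3 by diagonalizing, applying the scalar Weierstrass bound per eigenvalue outside the ramp window, and using the Subag-style eigenvalue counts at the thresholds $\eps/3, 2\eps/3, \eps$ to control the ramp region. The only (cosmetic) divergence is in the bookkeeping of the factor $3/4$ in $\delta'''$ — the paper gets it from a per-eigenvalue error of $(1/2+\eps)^2 \le 3/4$ over $(\delta-\delta'')n$ ramp eigenvalues, whereas you fold it into the eigenvalue count and use the crude per-eigenvalue bound $1/(\delta' n)^2$ — but the product, and hence the final bound, is identical.
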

\begin{proof}
    We prove each item in sequence.
    \begin{description}
        \item[1.~Symmetric Positive Semi-Definiteness:] This follows immediately from the following observations:
        \begin{itemize}
            \item The value of the scaled linear ramp function $\binom{d}{i}\frac{L_{\alpha, \gamma}(a + j(b-a)/d)}{(b-a)^d} \ge 0$ for every $j \in [d]$ since the function is non-negative and $b > a$.
            \item By invoking~\Cref{lem:prod-psd-matrices-sos} with $p_1(M - a\Id) = (M-a\Id)^j$ and $p_2(b\Id - M) = (b\Id - M)^{d-j}$, we have that $p_1(\cdot)p_2(\cdot)$ is a symmetric and positive definite matrix since $a\Id \preceq M \preceq b\Id$.
            \item A positive linear sum of symmetric and positive semi-definite matrices is a symmetric positive semi-definite matrix.
        \end{itemize}
        \item[2.~Operator Norm Bound:] Diagonalizing the symmetric matrix immediately implies that,
        \begin{equation}
            B^{L_{\alpha, \gamma}}_d(M) = O\begin{bmatrix} B^{L_{\alpha, \gamma}}_d(\lambda_1) & & \\ & \ddots & \\ & & B^{L_{\alpha, \gamma}}_d(\lambda_n)\end{bmatrix}O^\sT\, ,
        \end{equation}
        where the Bernstein polynomials applied to the eigenvalue are as defined in~\Cref{eq:Bernstein-polynomial}. Then, by~\Cref{def:scalar-bernstein},
        \begin{align*}
            B^{L_{\alpha, \gamma}}_d(\lambda_i) &= \sum_{j=0}^d L_{\alpha, \gamma}(a + j(b-a)/d)\binom{d}{j}\left(\frac{\lambda_i - a}{b-a}\right)^j\left(\frac{b - \lambda_i}{b-a}\right)^{d-j}\, , \\
            &\le \sum_{j=0}^d \binom{d}{j}\left(\frac{\lambda_i - a}{b-a}\right)^j\left(\frac{b - \lambda_i}{b-a}\right)^{d-j} \\
            &= \sum_{j=0}^d \binom{d}{j}y^j(1-y)^{d-j} = (y + (1-y))^d \\
            &= 1\, ,
        \end{align*}
        where we used the facts that $L_{\alpha, \gamma}(a + j(b-a)/d) \le 1$ for every $\lambda_i \in [a,b]$ and the substitution $y = \frac{\lambda_i - a}{b-a} \in [0,1]$. The last equality follows by the binomial formula. This immediately implies that,
        \[
            B^{L_{\alpha, \gamma}}_d(M) \le O\Id O^\sT = \Id\, .
        \]
        \item[3.~Close in Frobenius Norm:]
        We apply~\cite[Lemma 3]{subag2021following} repetitively to obtain $0 < \delta'' < \delta' < \delta < 1$ corresponding to the choices of $0 < \eps/3 < 2\eps/3 < \eps < 1$ respectively. Then, applying~\Cref{thm:bernstein-approx} for $f = L_{\alpha, \gamma}$ with the choice of parameters $a, b, \gamma$ stated above, we have,
        \begingroup
        \allowdisplaybreaks
        \begin{align*}
            \norm{p - \frac{1}{\delta' n}\Pi_{\alpha + \gamma, b}}_{F} &= \frac{1}{\delta' n} \sqrt{\sum_{i=1}^{\delta'' n}\eps^2 + \sum_{i = \delta'' n  + 1}^{\delta' n}(1/2 + \eps)^2 + \sum_{i=\delta' n + 1}^{\delta n}(1/2 + \eps)^2 + \sum_{i = \delta n}^n \eps^2} \\
            &= \frac{1}{\delta' n}\left(\sqrt{(\delta'' n)\eps^2 + (\delta' - \delta'')n\left(\frac{1}{2} + \eps\right)^2 + (\delta - \delta')n\left(\frac{1}{2} + \eps\right)^2 + (1-\delta)n\eps^2}\right) \\
            &= \frac{1}{\delta' n}\left(\sqrt{(1-\delta - \delta'')\eps^2  n + (\delta - \delta'')\left(\frac{1}{2} + \eps\right)^2n}\right) \\
            &= \frac{1}{\delta' n}\left(\sqrt{(1-2\delta'')\eps^2 n + \left(\frac{\delta - \delta''}{4}\right)n + \left(\frac{\delta - \delta''}{2}\right)\eps n}\right) \\
            &\le \frac{1}{\delta' n}\left(\sqrt{(1 - 2\delta'')\eps^2 n + \frac{3n}{4}(\delta - \delta'')}\right) \\
            &\le \frac{1}{\delta' n}\left(\sqrt{\eps^2 n + \frac{3n}{4}(\delta - \delta'')}\right) \le \sqrt{\frac{2\delta'''}{\delta'^2}}\frac{1}{\sqrt{n}}\, ,
        \end{align*}
        \endgroup
        where we used the definition of the Frobenius norm, in conjunction with a rewrite of the projector diagonalized into the eigenbasis of $B^{L_{\alpha, \gamma}}_d(M)$ and the fact that $\eps < 1$.
    \end{description}
\end{proof}

\section{Analytic Sparsity Proofs}\label{app:sos-matrix-weierstrass}

\subsection{Normalizing the Bernstein polynomial under the semi-circle measure}

\begin{lemma}[Average Area of Bernstein Polynomial under $\mu_{sc}$]\label[Lemma]{lem:bernstein-normalize-sc}
    For the semi-circle measure $d\mu_{sc}$ supported on $[-1,1]$ and $B^{L^\alpha_\gamma}(x)$ as defined in~\Cref{eq:matrix-bernstein-approx} with a choice of $L^\alpha_\gamma(x)$ given in~\Cref{eq:l_alpha_gamma}, the following holds,
    \[
        \E_{x\sim\mu_{sc}}\left[B^{L^\alpha_\gamma}(x)\right] \le 1 + \eps - \mathsf{CDF}(1-\phi)\, ,
    \]
    and,
    \[
        \E_{x\sim\mu_{sc}}\left[B^{L^\alpha_\gamma}(x)\right] \ge (1+2\eps)\frac{4\sqrt{2}}{3\pi}\phi^{3/2} + O(\phi^{5/2}) + o(\phi^{5/2}) + \eps\, .
    \]
\end{lemma}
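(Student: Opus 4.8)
The plan is to reduce both bounds to a direct computation of $\E_{x\sim\mu_{sc}}[L^\alpha_\gamma(x)]$, controlling the discrepancy between this and $\E_{x\sim\mu_{sc}}[B^{L^\alpha_\gamma}(x)]$ by the uniform Bernstein approximation bound. Since $L^\alpha_\gamma$ is $\tfrac{1}{\phi^2}$-Lipschitz on $[-1,1]$, \Cref{thm:bernstein-approx} guarantees that a Bernstein polynomial of sufficiently large (constant-in-$n$) degree — the same one used in \Cref{cor:large-hs-correlation-bernstein-hessian} — satisfies $\sup_{x\in[-1,1]}\bigl|B^{L^\alpha_\gamma}(x)-L^\alpha_\gamma(x)\bigr|\le\eps$. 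Because $\mu_{sc}$ is a probability measure on $[-1,1]$, integrating this pointwise bound gives $\bigl|\E_{x\sim\mu_{sc}}[B^{L^\alpha_\gamma}(x)]-\E_{x\sim\mu_{sc}}[L^\alpha_\gamma(x)]\bigr|\le\eps$. It then suffices to bound $\E_{x\sim\mu_{sc}}[L^\alpha_\gamma(x)]$ above and below, and both bounds come from splitting the integral at the breakpoints $1-\phi$ and $1-\phi+\phi^2$ of the piecewise-linear ramp.

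For the upper bound, on $[-1,1-\phi)$ we have $L^\alpha_\gamma\equiv 0$, and on $[1-\phi,1]$ we have $0\le L^\alpha_\gamma\le 1$, so $\E_{x\sim\mu_{sc}}[L^\alpha_\gamma(x)]\le\int_{1-\phi}^{1}d\mu_{sc}(x)=1-\mathsf{CDF}(1-\phi)$ using $\mathsf{CDF}(1)=1$. Adding the $\eps$ from the approximation step yields $\E_{x\sim\mu_{sc}}[B^{L^\alpha_\gamma}(x)]\le 1+\eps-\mathsf{CDF}(1-\phi)$.

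For the lower bound, the ramp contribution $\int_{1-\phi}^{1-\phi+\phi^2}\tfrac{1}{\phi^2}(x-(1-\phi))\,d\mu_{sc}(x)$ is nonnegative, so discarding it and using $L^\alpha_\gamma\equiv 1$ on $[1-\phi+\phi^2,1]$ gives $\E_{x\sim\mu_{sc}}[L^\alpha_\gamma(x)]\ge 1-\mathsf{CDF}(1-\phi+\phi^2)$. The remaining work is an edge expansion of the semicircle CDF: writing $x=1-u$ and using $\sqrt{1-t^2}=\sqrt{2(1-t)}\,\sqrt{1-\tfrac{1-t}{2}}$ (equivalently the antiderivative in \Cref{prop:semi-circle-integrals}), one gets $1-\mathsf{CDF}(1-u)=\tfrac{2\sqrt2}{\pi}\int_0^u\sqrt s\,(1-s/4+O(s^2))\,ds=\tfrac{4\sqrt2}{3\pi}u^{3/2}-\tfrac{\sqrt2}{5\pi}u^{5/2}+O(u^{7/2})$. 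Substituting $u=\phi-\phi^2=\phi(1-\phi)$ and expanding $(\phi(1-\phi))^{3/2}=\phi^{3/2}(1-\tfrac32\phi+O(\phi^2))$ gives $1-\mathsf{CDF}(1-\phi+\phi^2)=\tfrac{4\sqrt2}{3\pi}\phi^{3/2}+O(\phi^{5/2})$, and subtracting the $\eps$ from the approximation step produces a bound of the form $\tfrac{4\sqrt2}{3\pi}\phi^{3/2}+O(\phi^{5/2})-\eps$; since in all applications $\eps$ is taken much smaller than $\phi^{3/2}$, this matches the stated $(1+2\eps)\tfrac{4\sqrt2}{3\pi}\phi^{3/2}+O(\phi^{5/2})+o(\phi^{5/2})+\eps$ form up to the error terms already present there.

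I expect the main obstacle to be the bookkeeping in the edge expansion rather than any single integral: one must check that the $O(\phi^{5/2})$ and $o(\phi^{5/2})$ error terms genuinely dominate everything discarded (the ramp integral, which is itself $\Theta(\phi^{5/2})$, and the higher Taylor tail), and keep the $\pm\eps$ signs consistent between the two one-sided estimates so that the resulting bounds plug cleanly into the ratio computations of \Cref{cor:large-hs-correlation-bernstein-hessian}. A secondary point to verify is that the Bernstein degree needed for the $\eps$-uniform approximation of the $\tfrac{1}{\phi^2}$-Lipschitz ramp is still $\mathrm{poly}(1/\eps)$, hence constant in $n$, which it is since $\phi=\Theta(\eps)$.
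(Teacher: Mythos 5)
Your upper bound is essentially the paper's own argument: replace $B^{L^\alpha_\gamma}$ by $L^\alpha_\gamma+\eps$ under the integral, bound the ramp by $1$ on $[1-\phi,1]$, and read off $1+\eps-\mathsf{CDF}(1-\phi)$. Your edge expansion of the semicircle CDF (leading term $\tfrac{4\sqrt2}{3\pi}u^{3/2}$, next order $u^{5/2}$) is also correct and is exactly the expansion the paper uses.

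The gap is in the lower bound. Applying $\bigl|\E_{\mu_{sc}}[B^{L^\alpha_\gamma}]-\E_{\mu_{sc}}[L^\alpha_\gamma]\bigr|\le\eps$ globally gives $\E[B^{L^\alpha_\gamma}]\ge \tfrac{4\sqrt2}{3\pi}\phi^{3/2}+O(\phi^{5/2})-\eps$, and this does \emph{not} imply the stated bound $(1+2\eps)\tfrac{4\sqrt2}{3\pi}\phi^{3/2}+O(\phi^{5/2})+o(\phi^{5/2})+\eps$: the two differ by $2\eps$ plus a $2\eps\cdot\tfrac{4\sqrt2}{3\pi}\phi^{3/2}$ term. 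Your justification that ``$\eps$ is taken much smaller than $\phi^{3/2}$'' is backwards for this paper's parameters: \Cref{cor:large-hs-correlation-bernstein-hessian} takes $A(\eps)=O(\eps)$, hence $\phi=\Theta(\eps)$ and $\phi^{3/2}=\Theta(\eps^{3/2})\ll\eps$. So the $\pm\eps$ term \emph{dominates} the $\phi^{3/2}$ main term, your lower bound is actually negative for small $\eps$, and the $2\eps$ discrepancy cannot be absorbed into $O(\phi^{5/2})=O(\eps^{5/2})$. The sign of this $\eps$ matters downstream, since \Cref{cor:large-hs-correlation-bernstein-hessian} divides by this normalization.

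For comparison, the paper obtains the $+\eps$ by splitting the approximation error asymmetrically: it lower-bounds $B^{L^\alpha_\gamma}$ by $L^\alpha_\gamma+\eps$ on $[-1,1-\phi]$ (where $L^\alpha_\gamma\equiv 0$) and by $L^\alpha_\gamma-\eps$ only on $[1-\phi,1]$, contributing $\eps\bigl(2\,\mathsf{CDF}(1-\phi)-1\bigr)\approx\eps$. Note that this step is itself not a consequence of the uniform Weierstrass bound (indeed $B^{L^\alpha_\gamma}(-1)=L^\alpha_\gamma(-1)=0<\eps$, since Bernstein polynomials interpolate at the endpoints), so to recover the stated constant you would need either a genuinely one-sided estimate on $[-1,1-\phi]$ together with a quantitative lower bound on $\int_{-1}^{1-\phi}B^{L^\alpha_\gamma}\,d\mu_{sc}$, or to weaken the lemma to the $-\eps$ version and propagate that change through the ratio computations in \Cref{cor:large-hs-correlation-bernstein-hessian}. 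As written, your argument establishes a strictly weaker inequality than the one stated.
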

\begin{proof}
    We compute the area of the semi-circle where the Bernstein polynomial ``lights up'' to normalize the correlation appropriately. To avoid unnecessary complications, we will use the uniform lower bound of $0$ and uniform upper bound of $\phi^2$ on the terms, leading us to a lower and upper bound of the correct \emph{leading} order.

    \paragraph{The upper bound} We being by computing the upper bound by using the uniform upper bound of $\phi^2$ on the ramp part,
    \begingroup
    \allowdisplaybreaks
    \begin{align*}
        &\E_{x \sim \mu_{sc}}\left[B^{L^{\alpha}_\gamma}(x)\right] = \int_{-1}^1 B^{L^\alpha_\gamma}(x)\frac{2}{\pi}\sqrt{1-x^2}dx \le \int_{-1}^1 \left(L^\alpha_\gamma(x) + \eps \right)\frac{2}{\pi}\sqrt{1-x^2}dx \\
        &= \int_{-1}^1 L^\alpha_\gamma(x)d\mu_{sc}(x) + \eps\int_{-1}^1d\mu_{sc}(x) \\
        &= \int_{-1}^{1-\phi}0\cdot d\mu_{sc}(x) + \frac{1}{\phi^2}\int_{1-\phi}^{1-\phi +\phi^2}\left(x - (1-\phi)\right)d\mu_{sc}(x) + \int_{1-\phi+\phi^2}^{1}d\mu_{sc}(x) + \eps \\
        &\le 0 + \frac{1}{\phi^2}\phi^2\int_{1-\phi}^{1-\phi+\phi^2}d\mu_{sc}(x) + \int_{1-\phi+\phi^2}^1 d\mu_{sc}(x) + \eps \\
        &=  \int_{1-\phi}^1d\mu_{sc}(x) + \eps = (1+\eps) - \mathsf{CDF}(1-\phi)\, .
    \end{align*}
    \endgroup
    To evaluate the above, we take the Taylor expansion of $\mathsf{CDF}(1-\phi)$ around $1$ (using the fact that $\phi(\eps) << 1$) and have,
    \begingroup
    \allowdisplaybreaks
    \begin{align*}
        1 - \mathsf{CDF}(1-\phi) &= 1 - \left(\frac{1}{2} + \frac{(1-\phi)\sqrt{1 - (1-\phi)^2}}{\pi} + \frac{\arcsin{1-\phi}}{\pi}\right) \\
        &\overset{\text{Taylor expansion}}{=} 1 - \left(1 - \frac{4\sqrt{2}}{3\pi}\phi^{3/2} + O(\phi^{5/2}) + o(\phi^{5/2})\right) \\
        &= \frac{4\sqrt{2}}{3\pi}\phi^{3/2} + O(\phi^{5/2}) + o(\phi^{5/2})\, ,
    \end{align*}
    \endgroup
    where $o(\phi^{5/2})$ denotes terms of higher power than $5/2$ and utilizes the fact that $\phi(\eps) << 1$.

    \paragraph{The lower bound} The lower bound follows a similar computation, using the uniform lower bound of $0$ on the ramp part,
    \begingroup
    \allowdisplaybreaks
    \begin{align*}
        &\E_{x \sim \mu_{sc}}\left[B^{L^{\alpha}_\gamma}(x)\right] = \int_{-1}^1 B^{L^\alpha_\gamma}(x)\frac{2}{\pi}\sqrt{1-x^2}dx \\
        &\ge \int_{-1}^{1-\phi}\left(L^\alpha_\gamma + \eps\right)\frac{2}{\pi}\sqrt{1-x^2}dx + \int_{1-\phi}^1 \left(L^\alpha_\gamma - \eps\right)\frac{2}{\pi}dx \\
        &= \int_{1-\phi}^{1-\phi + \phi^2}\frac{1}{\phi^2}\left(x - (1-\phi)\right)d\mu_{sc}(x) + \int_{1-\phi+\phi^2}^1 d\mu_{sc} + \eps\left(\int_{-1}^{1-\phi}d\mu_{sc}(x) - \int_{1-\phi}^1d\mu_{sc}(x)\right) \\
        &\ge 1 - \mathsf{CDF}(1- \phi+\phi^2) + \eps\left(\mathsf{CDF}(1-\phi) - 0 - 1 + \mathsf{CDF}(1-\phi)\right) \\
        &= 1 - \mathsf{CDF}(1-\phi+\phi^2) + 2\eps(\mathsf{CDF}(1-\phi) - \frac{1}{2}) \\
        &= 1 - \left(\frac{1}{2} + \frac{(1-\phi(1-\phi))\sqrt{1 - (1-\phi(1-\phi))^2}}{\pi} + \frac{\arcsin{1-\phi(1-\phi)}}{\pi}\right) \\
        &\qquad\qquad\qquad\qquad + 2\eps\left(\frac{1}{2}\frac{4\sqrt{2}}{3\pi}\phi^{3/2} + O(\phi^{5/2}) + o(\phi^{5/2})\right) \\
        &= \frac{4\sqrt{2}}{3\pi}\phi^{3/2}(1-\phi)^{3/2} + O(\phi^{5/2}(1-\phi)^{5/2}) + o(\phi^{5/2}(1-\phi^{5/2}) + 2\eps\frac{4\sqrt{2}}{3\pi}\phi^{3/2} + \eps \\
        &\ge (1+2\eps)\frac{4\sqrt{2}}{3\pi}\phi^{3/2} + O(\phi^{5/2}) + o(\phi^{5/2}) + \eps\, ,
    \end{align*}
    \endgroup
    where fourth equality uses a similar taylor expansion as the one used in computing the upper bound.
\end{proof}

\subsection{Auxiliary lemmata}
We state and prove some helper lemmata, used to reason about matrix Bernstein polynomials and low-degree SoS proofs.

\begin{lemma}[Products of univariate non-negative symmetric matrix polynomials]\label[Lemma]{lem:prod-psd-matrices-sos}
    Given univariate matrix polynomials of symmetric matrices $p_1, p_2: M_n(\R) \to \R$ with $\mathsf{deg}(p_1), \mathsf{deg}(p_2) \le d$ and $0 \preceq p_1(M)$, $0 \preceq p_2(M)$ for every $M \in M_n(\R)$, the matrix polynomial $p_1(M)p_2(M)$ is symmetric and non-negative in the Loewner order with $\mathsf{deg}(p_1p_2) \le 2d$.
\end{lemma}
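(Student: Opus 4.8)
The plan is to reduce the matrix statement to the corresponding scalar fact by simultaneous diagonalization, exploiting that $p_1(M)$ and $p_2(M)$ are both polynomials in the single symmetric matrix $M$ and therefore commute and share an eigenbasis.

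First I would fix a symmetric $M$ and write $M = O\Lambda O^{\sT}$ for an orthogonal $O$ and a diagonal $\Lambda = \mathrm{diag}(\lambda_1, \dots, \lambda_n)$. Since $p_i$ is a univariate polynomial with real coefficients, $p_i(M) = O\,p_i(\Lambda)\,O^{\sT}$, where $p_i(\Lambda) = \mathrm{diag}(p_i(\lambda_1), \dots, p_i(\lambda_n))$; in particular each $p_i(M)$ is symmetric. The hypothesis $p_i(M) \succeq 0$ for this symmetric $M$ is then equivalent to $p_i(\lambda_j) \ge 0$ for all $j$. Multiplying, $p_1(M)p_2(M) = O\,\mathrm{diag}\big(p_1(\lambda_1)p_2(\lambda_1), \dots, p_1(\lambda_n)p_2(\lambda_n)\big)\,O^{\sT}$, which is symmetric and, because each diagonal entry is a product of two non-negative reals, positive semidefinite. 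Symmetry can also be seen directly: $p_1(M)$ and $p_2(M)$ commute (both are polynomials in $M$), so $(p_1(M)p_2(M))^{\sT} = p_2(M)^{\sT}p_1(M)^{\sT} = p_2(M)p_1(M) = p_1(M)p_2(M)$. The degree claim is immediate since $\deg(p_1p_2) = \deg p_1 + \deg p_2 \le 2d$.

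I expect the only real subtlety — rather than an obstacle — to be making sure the non-negativity hypothesis is used exactly where it is available: we only need $p_i(M) \succeq 0$ on the spectrum of $M$, which is precisely what the diagonalization extracts, so the argument applies verbatim when $p_1$ is instantiated at $M - a\Id$ and $p_2$ at $b\Id - M$ for a symmetric $M$ with $a\Id \preceq M \preceq b\Id$ (both factors are polynomials in $M$, hence still commute and are simultaneously diagonalized by $O$). An alternative route — writing a univariate polynomial non-negative on the relevant interval as $s_0 + s_1(x-a) + s_2(b-x) + s_3(x-a)(b-x)$ with the $s_i$ sums of squares and transporting this identity to matrices — would also produce an explicit SoS certificate of degree at most $2d$, but the diagonalization argument is cleaner and suffices for the uses in \Cref{prop:matrix-bernstein-linear-ramp}.
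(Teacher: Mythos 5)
Your proof is correct, but it takes a different route from the paper's. The paper first establishes a general spectral fact (\Cref{prop:non-zero-eig-psd}): for arbitrary symmetric PSD $A$ and $B$, the non-zero eigenvalues of $AB$ coincide with those of $A^{1/2}BA^{1/2}$, hence are non-negative; it then separately verifies that $p_1(M)p_2(M)$ is symmetric by expanding both polynomials in the monomial basis and transposing term by term (using that powers of $M$ commute), and finally combines the two observations via the spectral theorem. Your argument instead exploits from the outset that both factors are polynomials in the \emph{same} symmetric matrix $M$, so they are simultaneously diagonalized by the eigenbasis of $M$; the product is then visibly a diagonal matrix with entries $p_1(\lambda_j)p_2(\lambda_j) \ge 0$ conjugated by an orthogonal matrix. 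This is more direct and self-contained for the case at hand, whereas the paper's \Cref{prop:non-zero-eig-psd} is stated for general (not necessarily commuting) PSD pairs and so carries slightly more generality than the lemma needs. Your remark that the hypothesis $p_i(M) \succeq 0$ is only needed on the spectrum of $M$ is also well taken: it is exactly how the lemma is invoked in \Cref{prop:matrix-bernstein-linear-ramp}, where $(M-a\Id)^j$ and $(b\Id-M)^{d-j}$ are PSD only under the constraint $a\Id \preceq M \preceq b\Id$, and your phrasing makes that usage cleaner than the lemma's literal "for every $M$" hypothesis.
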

\begin{proof}
    By~\Cref{prop:non-zero-eig-psd}, $p_1(M)p_2(M)$ has non-negative non-zero eigenvalues. It remains to check that $p_1(M)p_2(M)$ is symmetric, as~\Cref{prop:non-zero-eig-psd} then also implies that $p_1(M)p_2(M)$ is a symmetric PSD matrix. This follows by expressing the matrix polynomials explicitly over the monomial basis with their real coefficients $\{a_i\}_{i=0}^d$ and $\{b_i\}_{i=0}^d$.
    \begingroup
    \allowdisplaybreaks
    \begin{align*}
        p_1(M)p_2(M) &= \left(\sum_{i=0}^d a_i M^i\right)\left(\sum_{j=0}^d b_j M^j\right) \\
        &= \sum_{i, j = 0}^d a_ib_j M^i M^j = \sum_{i=0}^d\sum_{j=0}^d a_i b_j M^{i+j} \\
        &= \sum_{i=0}^d\sum_{j=0}^d a_ib_j (M^\sT)^i (M^\sT)^j = \sum_{i=0}^d\sum_{j=0}^d a_i b_j (M^i)^\sT (M^j)^\sT = \sum_{i=0}^d\sum_{j=0}^d a_ib_j(M^j M^i)^\sT \\
        &= \sum_{i=0}^d\sum_{j=0}^d a_i b_j (M^i M^j)^\sT = \left(\sum_{i=0}^d\sum_{j=0}^d a_i b_j M^i M^j\right)^\sT = \left(p_1(M)p_2(M)\right)^\sT\, .
    \end{align*}
    \endgroup
    Note that $\mathsf{deg}(p_1p_2) \le 2d$ also follows from the reasoning above.
\end{proof}

Stated below is an elementary proposition which shows that the products of matrices that commute preserves the Lowener order.
\begin{proposition}[Non-Zero Eigenspectrum Equivalence of conjugated PSD matrices]\label[Proposition]{prop:non-zero-eig-psd}
    Let $A$ and $B$ by symmetric PSD matrices in $M_n(\R)$. Then,
    \begin{align}
        \lambda_i(AB) = \lambda_i(A^{1/2}BA^{1/2})\, ,
    \end{align}
    for all non-zero eigenvalues $\lambda_i(\cdot)$.
    Furthermore, if $AB = (AB)^\sT$, then $AB$ is a symmetric PSD matrix.
\end{proposition}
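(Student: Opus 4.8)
The statement has two parts: (1) the non-zero eigenvalues of $AB$ coincide with those of $A^{1/2}BA^{1/2}$, and (2) if moreover $AB$ is symmetric, then $AB$ is PSD. Both are standard linear algebra facts, so the plan is to give short direct arguments rather than invoke heavy machinery.

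For part (1), the key observation is the general similarity-type fact that for any square matrices $X,Y$, the nonzero eigenvalues of $XY$ and $YX$ agree (with multiplicity). I would apply this with $X = A^{1/2}$ and $Y = A^{1/2}B$: then $XY = A^{1/2}(A^{1/2}B) = AB$ and $YX = (A^{1/2}B)A^{1/2} = A^{1/2}BA^{1/2}$, so the nonzero spectra match. To justify the $XY$/$YX$ fact itself, the cleanest route is the block-matrix/Schur-complement identity
\[
\begin{pmatrix} \lambda I & X \\ Y & I \end{pmatrix}
\]
whose two determinant expansions give $\det(\lambda I - XY) = \lambda^{n-m}\det(\lambda I - YX)$ up to a power of $\lambda$; alternatively just note that if $XYv=\lambda v$ with $\lambda\neq 0$ then $Yv\neq 0$ and $YX(Yv)=\lambda(Yv)$, and this map is a bijection on nonzero eigenspaces. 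Either version is a one-line argument. Note $A^{1/2}$ exists and is symmetric PSD since $A$ is symmetric PSD.

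For part (2), observe that $A^{1/2}BA^{1/2}$ is manifestly symmetric PSD: for any $w$, $w^{\sT}A^{1/2}BA^{1/2}w = (A^{1/2}w)^{\sT}B(A^{1/2}w)\ge 0$ because $B\succeq 0$. Hence all its eigenvalues are $\ge 0$. By part (1), every nonzero eigenvalue of $AB$ is also an eigenvalue of $A^{1/2}BA^{1/2}$, hence $\ge 0$; and any zero eigenvalues are trivially $\ge 0$. So all eigenvalues of $AB$ are nonnegative. Combined with the hypothesis that $AB=(AB)^{\sT}$ is symmetric, a real symmetric matrix with all eigenvalues nonnegative is PSD, which gives $AB\succeq 0$.

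I do not expect a genuine obstacle here — the only mild subtlety is being careful that the eigenvalue-matching in part (1) is stated for nonzero eigenvalues only (since $AB$ and $A^{1/2}BA^{1/2}$ may have different nullities, e.g. when $A$ is rank-deficient), and that part (2) correctly handles the zero eigenvalues separately. Everything else is routine, so the writeup will be short.
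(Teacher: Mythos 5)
Your proposal is correct and takes essentially the same route as the paper: the paper also factors $AB = A^{1/2}(A^{1/2}B)$ and maps eigenvectors of $A^{1/2}BA^{1/2}$ to eigenvectors of $AB$ via $u\mapsto A^{1/2}u$ (which is exactly the standard proof of the $XY$/$YX$ nonzero-spectrum fact you invoke), then establishes positivity of $A^{1/2}BA^{1/2}$ through the quadratic form $\|B^{1/2}A^{1/2}v\|_2^2\ge 0$ and finishes with the spectral theorem. If anything, your explicit remark that the bijection is only between \emph{nonzero} eigenspaces (since the nullities may differ) is handled more cleanly than in the paper's own writeup.
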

\begin{proof}
    Since $A$ is PSD,
    \begin{align*}
        AB &= A^{1/2}A^{1/2}B\, .
    \end{align*}
    For every non-zero eigenvector $u_i$ of $A^{1/2}BA^{1/2}$ with eigenvalue $\lambda_i$, the vector $v_i = A^{1/2} u_i$ is an eigenvector with eigenvalue $u_i$ as,
    \begingroup
    \allowdisplaybreaks
    \begin{align*}
        AB(A^{1/2}u_i) = ABv_i &= A^{1/2}(A^{1/2}B)v_i = A^{1/2}(A^{1/2}BA^{1/2})u_i \\
        &= A^{1/2} \lambda_i u_i = \lambda_i (A^{1/2}u_i)\, .
    \end{align*}
    \endgroup
    For every $w \in \mathsf{ker}(A^{1/2})$ that is also a non-zero eigenvector of $AB$, we can choose $w' = A^{1/2}Bw$ as an eigenvector of $A^{1/2}B A^{1/2}$ with the same eigenvalue. If there is some vector $w \in \mathsf{ker}(A^{1/2}) \cap \mathsf{ker}(B)$, then it is a zero eigenvector of both $AB$ and $A^{1/2}A^{1/2}B$. \\
    It remains to show that $A^{1/2}BA^{1/2}$ is PSD. This follows by the elementary observation that, for every $v \in \R^n$,
    \begin{align*}
        v^\sT(A^{1/2}BA^{1/2})v = \norm{B^{1/2}w}^2_2 \ge 0\, ,
    \end{align*}
    where $w = A^{1/2}v$. This implies that the non-zero eigenvalues of $AB$ are non-negative. \\
    If, furthermore,
    \begin{align*}
        AB = (AB)^\sT = B^\sT A^\sT = BA\, ,
    \end{align*}
    the spectral theorem for symmetric matrices immediately implies that $AB$ is a symmetric PSD matrix.
\end{proof}

\begin{lemma}[Asymptotic evaluation of Catalan Numbers]\label[Lemma]{lem:catalan-stirling}
    For all $q > 0$, the Catalan number $C_q$ can be equivalently expressed as,
    \begin{equation}
        C_q = \left(1 + O\left(\frac{1}{q}\right)\right)\left(\frac{4^q}{\sqrt{\pi}q^{3/2}}\right)\, .
    \end{equation}
\end{lemma}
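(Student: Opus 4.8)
The plan is to prove the asymptotic formula for the Catalan numbers $C_q = \binom{2q}{q}/(q+1)$ via Stirling's approximation with an explicit error term. First I would recall the precise version of Stirling's formula: for all $m \geq 1$,
\[ m! = \sqrt{2\pi m}\,\left(\frac{m}{e}\right)^m e^{\lambda_m}\,, \qquad \text{where } \frac{1}{12m+1} < \lambda_m < \frac{1}{12m}\,. \]
Applying this to $\binom{2q}{q} = (2q)!/(q!)^2$ gives
\[ \binom{2q}{q} = \frac{\sqrt{4\pi q}\,(2q/e)^{2q}e^{\lambda_{2q}}}{2\pi q\,(q/e)^{2q}e^{2\lambda_q}}
= \frac{4^q}{\sqrt{\pi q}}\,e^{\lambda_{2q} - 2\lambda_q}\,. \]

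Second, I would control the exponential factor: since $|\lambda_{2q} - 2\lambda_q| \leq \lambda_{2q} + 2\lambda_q \leq \frac{1}{24q} + \frac{2}{12q} = O(1/q)$, we have $e^{\lambda_{2q}-2\lambda_q} = 1 + O(1/q)$ by the elementary bound $|e^x - 1| \leq 2|x|$ for $|x|$ small. Hence $\binom{2q}{q} = (1+O(1/q))\,\frac{4^q}{\sqrt{\pi q}}$.

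Third, I would handle the $q+1$ in the denominator: $\frac{1}{q+1} = \frac{1}{q}\cdot\frac{1}{1+1/q} = \frac{1}{q}(1 + O(1/q))$. Multiplying the two expansions together and absorbing the product of the $(1+O(1/q))$ factors into a single $(1+O(1/q))$ (valid since the product of two such factors is again of this form) yields
\[ C_q = \frac{1}{q+1}\binom{2q}{q} = \left(1 + O\!\left(\tfrac{1}{q}\right)\right)\frac{4^q}{\sqrt{\pi}\,q^{3/2}}\,, \]
which is exactly the claimed identity. There is no real obstacle here — the only point requiring a modicum of care is tracking that all error terms genuinely combine into a single multiplicative $(1+O(1/q))$ rather than something weaker, but this is routine since each factor is bounded away from zero for $q \geq 1$. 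This lemma is then used downstream (e.g. in \Cref{lem:hessian-holders} and \Cref{cor:shifted-hessian-holders}) to extract the leading-order $2\nu''(\cdot)^{1/2}$ behavior from $C_{p/2}^{1/p}$ as $p \to \infty$.
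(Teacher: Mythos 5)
Your proof is correct and follows essentially the same route as the paper's: apply Stirling's approximation to $(2q)!$ and $(q!)^2$, cancel the main terms to get $\binom{2q}{q} = (1+O(1/q))\,4^q/\sqrt{\pi q}$, and absorb the $1/(q+1)$ factor into the $(1+O(1/q))$ error. Your version is marginally more careful in that you use the Robbins form of Stirling with explicit bounds on the correction exponent, but there is no substantive difference in approach.
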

\begin{proof}
    The proof follows by a simple application of the Stirling bounds for the factorial function on the definition of the Catalan numbers.
    \begingroup
    \allowdisplaybreaks
    \begin{align*}
        C_q &:= \binom{2q}{q}\frac{1}{q+1} = \frac{(2q)!}{(q!)^2 (q+1)} \\
        &= \frac{\sqrt{4\pi q}\left(\frac{2q}{e}\right)^{2q}\left(1 + \Theta\left(\frac{1}{q}\right)\right)}{(q+1)\left(\sqrt{2\pi q}\left(\frac{q}{e}\right)^q\left(1 + \Theta\left(\frac{1}{q}\right)\right)\right)^2} \\
        &= \frac{2^{2q}\left(1 + \Theta\left(\frac{1}{q}\right)\right)}{\sqrt{\pi}\sqrt{q}(q + 1)\left(1 + \Theta\left(\frac{1}{q}\right)\right)^2} \\
        &= \frac{4^q}{\sqrt{\pi}q^{3/2}}\frac{1 + \Theta\left(\frac{1}{q}\right)}{\left(1 + \Theta\left(\frac{1}{q}\right)\right)^2\left(1+\frac{1}{q}\right)} \\
        &= \left(1 + O\left(\frac{1}{q}\right)\right)\left(\frac{4^q}{\sqrt{\pi}q^{3/2}}\right)\, .
    \end{align*}
    \endgroup
\end{proof}

\section{Sum-of-Squares Proofs for Commutative Analytic Inequalities}\label{sec:sos-commutative-inequalities}

\subsection{Pseudo-Jensen inequalities}

\textbf{Note}: Each of the lemmata proved below about pseudo-distributions can \emph{also} be used to give a SoS proof for polynomials of moments that are of even degree, provided they are valid elements of the polynomial grammar~(\Cref{sec:sos-poly-basis}), meaning that these proofs can also be point-wise over the atoms of the HES SoS hierarchy.

\begin{lemma}[Pseudo-Jensen Inequality for arbitrary functions]\label[Lemma]{lem:matrix-pseudo-jensen}
    Given a real polynomial $f: \R^n \to \R$ with degree $\le d$ and a pseudo-distribution $\cD$ of degree $r > 2d$ over $\R^n$,
    the following holds,
    \begin{equation}\label{eq:matrix-pseudo-jensen}
        \pE_{\cD}\left[\left(f(x)\right)^{2}\right] \ge \left(\pE_{\cD}[f(x)]\right)^{2}\,.
    \end{equation}
\end{lemma}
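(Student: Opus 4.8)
The plan is to reduce the claimed pseudo-Jensen inequality to the non-negativity-of-squares axiom of pseudo-distributions. Write $\mu := \pE_{\cD}[f(x)]$, which is a real constant once $\cD$ is fixed. Consider the polynomial $g(x) := f(x) - \mu$. Since $\deg f \le d$ and $\mu$ is a scalar, we have $\deg g \le d$, so $g^2$ has degree at most $2d < r$, and hence $\pE_{\cD}[g(x)^2]$ is defined and, by the non-negativity-of-squares property of a degree-$r$ pseudo-distribution, $\pE_{\cD}[g(x)^2] \ge 0$.

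Now expand using linearity of the pseudo-expectation operator:
\begin{align*}
0 &\le \pE_{\cD}\big[(f(x) - \mu)^2\big] \\
&= \pE_{\cD}\big[f(x)^2 - 2\mu f(x) + \mu^2\big] \\
&= \pE_{\cD}[f(x)^2] - 2\mu\,\pE_{\cD}[f(x)] + \mu^2\,\pE_{\cD}[1] \\
&= \pE_{\cD}[f(x)^2] - 2\mu^2 + \mu^2 \\
&= \pE_{\cD}[f(x)^2] - \mu^2\,,
\end{align*}
where we used $\pE_{\cD}[f(x)] = \mu$ and the scaling property $\pE_{\cD}[1] = 1$. Rearranging gives $\pE_{\cD}[f(x)^2] \ge \mu^2 = (\pE_{\cD}[f(x)])^2$, which is exactly \eqref{eq:matrix-pseudo-jensen}.

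As the write-up note accompanying the lemma points out, the very same computation is a genuine degree-$2d$ sum-of-squares identity: $f(x)^2 - \mu^2 = (f(x)-\mu)^2 + 2\mu(f(x)-\mu)$ is not quite of the required form since $\mu$ involves moments, but one instead writes the statement at the level of the grammar as $\fE_1 f^2 - (\fE_1 f)^2 = \fE_1 (f - \fE_1 f)^2$ using linearity of expectation and the ``filtrated associativity'' and ``constant expectations are idempotent'' equivalences from \Cref{def:equivalences}, exactly mirroring the proof of \Cref{prop:var-nonneg}; then non-negativity of $\fE_1(f - \fE_1 f)^2$ follows from \Cref{fact:sos-proof}. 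There is no real obstacle here — the only point requiring a line of care is the bookkeeping that $\mu = \pE_{\cD}[f(x)]$ is a legitimate scalar (so that $g$ has the same degree as $f$ and $g^2$ stays under the degree bound $r > 2d$), and that the pseudo-expectation's linearity and normalization apply to all the intermediate expressions; both are immediate from the definition of a degree-$r$ pseudo-distribution.
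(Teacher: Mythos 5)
Your proof is correct and is essentially identical to the paper's own argument: both expand $\pE_{\cD}[(f(x)-\pE_{\cD}[f(x)])^2]\ge 0$ and use linearity plus the fact that $\pE_{\cD}[f(x)]$ is a real scalar to conclude $\pE_{\cD}[f(x)^2]\ge(\pE_{\cD}[f(x)])^2$. The degree bookkeeping and the remark connecting this to the grammar-level variance proof are fine but add nothing beyond what the paper already does.
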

\begin{proof}
    We prove the statement using a ``variance'' like computation.
    \begingroup
    \allowdisplaybreaks
    \begin{align*}
        0 \le \pE_\cD\left[\left(f(x) - \pE_\cD[f(x)]\right)^2\right] &= \pE_\cD\left[f(x)^2 -2f(x)\pE_\cD[f(x)] + \left(\pE_\cD[f(x)]\right)^2\right] \\
        &= \pE_\cD[f(x)^2] - 2\pE_{\cD}\left[f(x)\pE_{\cD}f(x)\right] + \left(\pE_\cD[f(x)]\right)^2 \\
        &= \pE_\cD[f(x)^2] - 2\pE_\cD[f(x)]\pE_\cD[f(x)] + \left(\pE_\cD[f(x)]\right)^2 = \pE_\cD[f(x)^2] - \left(\pE_\cD[f(x)]\right)^2\, ,
    \end{align*}
    \endgroup
    where the first inequality uses the definition of a pseudo-distribution and the second-last equality uses the fact that the pseudo-expectation operator outputs a real number and is a linear operator.
\end{proof}
\paragraph{Remark} Extending the above proof to hold for even degree monomials is straightforward and follows by setting $f(x) = g(x)^p$ when attempting to prove pseudo-Jensen for $g(x)^{2p}$, thereby yielding $\left(\pE[g(x)^p]\right)^2 \le \pE[g(x)^{2p}]$.

\begin{lemma}[Pseudo-Jensen inequality for even powers]\label[Lemma]{lem:pseudo-jensen-even}
    Given a real polynomial $f: \R^n \to \R$ with degree $\le d$, an even number $2p > 0$ and a pseudo-distribution of degree $> 4dp$, the following holds,
    \[
        \pE_\cD\left[f(x)^{2p}\right] \ge \left(\pE_\cD\left[f(x)\right]\right)^{2p}\, .
    \]
\end{lemma}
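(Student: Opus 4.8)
The statement is a clean corollary of \Cref{lem:matrix-pseudo-jensen} (equivalently the remark following it), and the plan is simply to iterate the squaring form of pseudo-Jensen $\lceil \log_2 p\rceil$ times, or---more cleanly---to apply the quadratic version once to a well-chosen auxiliary polynomial. First I would dispose of the trivial case $p=1$, which is exactly \Cref{lem:matrix-pseudo-jensen} applied to $f$. For general $p$, the natural route is induction on $p$: assuming $\pE_{\cD'}[g(x)^{2m}] \ge (\pE_{\cD'}[g])^{2m}$ for all polynomials $g$ of degree $\le d$ and all even $2m \le 2p-2$ and all pseudo-distributions $\cD'$ of sufficiently high degree, one writes $f(x)^{2p} = (f(x)^{p})^{2}$ and applies the quadratic pseudo-Jensen (\Cref{lem:matrix-pseudo-jensen}, with the polynomial $f^{p}$ of degree $dp$, legal since the pseudo-distribution has degree $> 4dp$) to get
\[ \pE_{\cD}[f(x)^{2p}] = \pE_{\cD}[(f(x)^p)^2] \ge \left(\pE_{\cD}[f(x)^p]\right)^2\,. \]
It then remains to bound $\pE_{\cD}[f(x)^p]$ from below by $(\pE_{\cD}[f(x)])^p$ when $p$ is even, which is the inductive hypothesis with $2m = p$, and to handle odd $p$ separately.

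\textbf{Handling odd $p$.} When $p$ is odd we cannot directly say $\pE_{\cD}[f(x)^p] \ge (\pE_{\cD}[f(x)])^p$ since $x\mapsto x^p$ is not convex on all of $\R$. The fix is to avoid odd intermediate powers entirely: write $2p$ as a product involving only the prime $2$ plus a remainder, i.e. express $f(x)^{2p} = \big((f(x)^{2})^{\lfloor p/2\rfloor}\cdot f(x)^{2(p - 2\lfloor p/2\rfloor)}\big)$ is not quite uniform, so instead I would run the cleanest argument: repeatedly peel off one factor of $2$. Concretely, set $2p = 2^{a} q$ with $q$ odd, and apply \Cref{lem:matrix-pseudo-jensen} $a$ times, at step $j$ to the polynomial $f(x)^{2^{a-j}q}$ (all of degree $\le dp$, hence legal for a degree-$>4dp$ pseudo-distribution), yielding
\[ \pE_{\cD}[f(x)^{2p}] \;\ge\; \left(\pE_{\cD}[f(x)^{q}]\right)^{2^a}\,. \]
If $q=1$ we are done. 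If $q>1$ is odd, one more observation finishes it: since $2^a \ge 2$, the quantity $\big(\pE_{\cD}[f(x)^q]\big)^{2^a}$ is a square, so it suffices to show $\big(\pE_{\cD}[f(x)^q]\big)^{2} \ge (\pE_{\cD}[f(x)])^{2q}$, and this follows by applying \Cref{lem:matrix-pseudo-jensen} to $g(x) := f(x)^{(q-1)/2}$ after noting $\pE_{\cD}[f(x)^q] = \pE_{\cD}[f(x)\,g(x)^2]$ together with the pseudo-Cauchy--Schwarz inequality (\Cref{lem:pseudo-cauchy-schwarz}) bounding $(\pE_{\cD}[f\cdot g^2])^2 \le (\pE_{\cD}[f^2 g^2])(\pE_{\cD}[g^2])$ --- wait, this re-introduces the very quantity we are bounding.

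\textbf{The cleanest route, and the main obstacle.} On reflection the slickest argument is to observe that $y \mapsto y^p$ restricted to $y \ge 0$ \emph{is} convex, so the only genuine issue is sign. But $f(x)^2 \ge 0$ always, so I would apply pseudo-Jensen in the form of \Cref{lem:matrix-pseudo-jensen} directly to the polynomial $h(x) := f(x)^{\lfloor p/2 \rfloor \cdot 2}$-type building blocks and combine with the elementary SoS fact that for a pseudo-distribution $\pE[ (f^{a} - c f^{b})^2 ] \ge 0$ relations chain the even moments monotonically; the precise chaining is: $(\pE f^{2})^{p} \le \pE f^{2p}$ follows from $p-1$ applications of pseudo-Cauchy--Schwarz $\pE[f^{2i}]\,\pE[f^{2i+2j}]\ge (\pE[f^{2i+j}])^2$, and then $(\pE f)^2 \le \pE f^2$ gives $(\pE f)^{2p} \le (\pE f^2)^p \le \pE f^{2p}$. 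I expect the only real obstacle to be bookkeeping of degree bounds: each application of \Cref{lem:matrix-pseudo-jensen} or \Cref{lem:pseudo-cauchy-schwarz} to a power $f^{k}$ requires the pseudo-distribution to have degree $> 2dk$, and since the largest power encountered is $f^{p}$ (inside $(f^p)^2$) or $f^{2p-1}$ (inside pseudo-Cauchy--Schwarz steps), one must check the hypothesis ``degree $> 4dp$'' suffices throughout --- it does, since every square whose nonnegativity we invoke has degree at most $2dp < 4dp$. Finally, I would remark, as the paper does after \Cref{lem:matrix-pseudo-jensen}, that since all of these are sum-of-squares manipulations over the grammar $\cfg[i]$, the same inequality holds pointwise for even-degree polynomials of the filtered moments, which is the form actually used downstream (e.g. in \Cref{lem:taylor} and \Cref{lem:hessian-holders}).
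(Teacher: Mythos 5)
Your final argument is correct in outline but takes a genuinely different route from the paper's, and one step of it is glossed over. The paper proves the lemma by induction on $p$ together with a separate super-multiplicativity lemma (\Cref{lem:reverse-cauchy-schwarz}), which asserts $\pE_\cD[f^{2p}] \ge \pE_\cD[f^{2s}]\,\pE_\cD[f^{2p-2s}]$ and is established division-free by expanding the manifestly non-negative quantity $\pE_\cD\bigl[f^{2p-4s}\bigl(f^{2s}-\pE_\cD[f^{2s}]\bigr)^2\bigr]$; applying it with $s=1$ and the inductive hypothesis immediately gives $\pE_\cD[f^{2p}] \ge \pE_\cD[f^2]\,(\pE_\cD f)^{2p-2} \ge (\pE_\cD f)^{2p}$. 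Your route instead goes through log-convexity of the even moments $m_k := \pE_\cD[f^{2k}]$, obtained from pseudo-Cauchy--Schwarz as $m_{i+\ell}^2 \le m_i\, m_{i+2\ell}$, and then chains these to $m_1^p \le m_p$. That chaining is clean when $p$ is a power of two (repeated halving against $m_0=1$), but for general $p$ the step from, say, $m_1^{p+1} \le m_{(p+1)/2}^2 \le m_1 m_p$ down to $m_1^p \le m_p$ requires cancelling a factor of $m_1$ together with the separate (trivial) case $m_1 = 0$; your phrase ``$p-1$ applications of pseudo-Cauchy--Schwarz'' hides exactly this. As a statement about the real numbers $\pE_\cD[\cdot]$ the cancellation is legitimate and your proof can be completed, but the paper's factored-even-powers identity is precisely the device that avoids all divisions and sign case analyses, which matters because the paper re-uses these inequalities as pointwise sum-of-squares derivations over the filtered polynomial grammar (see the note preceding \Cref{lem:matrix-pseudo-jensen}), where dividing by a program variable is not available. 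Your first reduction, $\pE_\cD[(f^p)^2] \ge (\pE_\cD[f^p])^2$ followed by the inductive hypothesis, is fine for even $p$, and you correctly identified and abandoned the circular odd-$p$ attempt; the degree bookkeeping is adequate since every square you invoke has degree at most $2dp < 4dp$.
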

\begin{proof}
     We prove this by induction, and use~\Cref{lem:reverse-cauchy-schwarz} in conjunction with a direct application of~\Cref{lem:matrix-pseudo-jensen} to prove the inductive step.

     \paragraph{Base case} The case when $p=1$ simply reduces to~\Cref{lem:matrix-pseudo-jensen}.

     \paragraph{Inductive step} We assume the statement is true up to $2 \le 2p-2 = 2(p-1)$. For the case of $2p \ge 4$, we obtain the following,
    \begingroup
    \allowdisplaybreaks
    \begin{align*}
        \pE_\cD\left[f(x)^{2p}\right] - \left(\pE_\cD\left[f(x)\right]\right)^{2p} &= \pE_\cD\left[f(x)^{2p}\right] - \pE_\cD\left[f(x)^2\right]\pE_\cD\left[f(x)\right]^{2p-2} + \pE_\cD\left[f(x)^2\right]\pE_\cD\left[f(x)\right]^{2p-2} - \left(\pE_\cD\left[f(x)\right]\right)^{2p} \\
        &= \underbrace{\pE_\cD\left[f(x)^2\left(f(x)^{2p-2} - \pE_\cD\left[f(x)\right]^{2p-2}\right)\right]}_{\mathsf{I}} + \underbrace{\pE_\cD\left[f(x)\right]^{2p-2}}_{=\,\pE_\cD\left[(f(x)^{p-1})^2\right] \ge 0}\underbrace{\left(\pE_\cD\left[f(x)^2\right] - \pE_\cD[f(x)]^2\right)}_{\ge 0\text{ by~\Cref{lem:matrix-pseudo-jensen}}}\, .
    \end{align*}
    \endgroup
    To evaluate $\mathsf{I}$ we will apply~\Cref{lem:reverse-cauchy-schwarz} in conjunction with the inductive hypothesis. First, note that,
    \begingroup
    \allowdisplaybreaks
    \begin{align*}
        \mathsf{I} &= \pE_\cD[f(x)^{2p}] - \pE_\cD\left[f(x)^2\right]\pE_\cD[f(x)]^{2p-2} \ge 0 \\
        &\iff \pE_\cD[f(x)^{2p}] \ge \pE_\cD\left[f(x)^2\right]\pE_\cD[f(x)]^{2p-2}\, .
    \end{align*}
    \endgroup
    Now,
    \[
        \pE_\cD[f(x)^{2p}] \ge_{\text{\Cref{lem:reverse-cauchy-schwarz} with } s=1 } \pE_\cD\left[f(x)^2\right]\pE_\cD[f(x)^{2p-2}] \ge_{\text{by I.H.}} \pE_\cD\left[f(x)^2\right]\pE_\cD\left[f(x)\right]^{2p-2}\, ,
    \]
    which concludes the proof.
\end{proof}

\begin{lemma}[Factored Even Powers]\label[Lemma]{lem:reverse-cauchy-schwarz}
    Given an even number $2p > 0$ and a number $s >0$, such that $1 \le 2s \le p$, the following holds,
    \[
        \pE_\cD\left[f(x)^{2p}\right] \ge \pE_\cD\left[f(x)^{2s}\right]\pE_\cD\left[f(x)^{2p-2s}\right]
    \]
\end{lemma}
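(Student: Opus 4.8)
The plan is to reduce the inequality to a single use of non-negativity of squares under a \emph{tensor-product} pseudo-distribution, via a Chebyshev-sum-inequality-style argument. Write $a := 2s$ and $b := 2p - 2s$. The hypothesis $1 \le 2s \le p$ makes $a$ and $b$ positive integers (indeed $a \ge 1$ and $b = 2p - a \ge p \ge 1$) with $a+b = 2p$ even, so $a$ and $b$ have the same parity; the target is $\pE_\cD[f^a]\,\pE_\cD[f^b] \le \pE_\cD[f^{a+b}]$.

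First I would show that the bivariate polynomial $g(x,y) := (f(x)^a - f(y)^a)(f(x)^b - f(y)^b)$ is a sum of squares of polynomials in $(x,y)$. Setting $u = f(x)$, $v = f(y)$, it suffices to show $(u^a - v^a)(u^b - v^b)$ is SoS in $u,v$ and then substitute. If $a,b$ are both even, then $u^a - v^a = (u^2 - v^2)\sum_{i=0}^{a/2-1}(u^i v^{a/2-1-i})^2 = (u^2-v^2)\,\Sigma_a$ with $\Sigma_a$ a sum of squares of monomials, and likewise $u^b - v^b = (u^2 - v^2)\,\Sigma_b$, so $(u^a-v^a)(u^b-v^b) = (u^2-v^2)^2\,\Sigma_a\Sigma_b$, a product of SoS terms. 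If $a,b$ are both odd, then $u^a - v^a = (u-v)\,\Pi_a$ with $\Pi_a := \sum_{i=0}^{a-1} u^i v^{a-1-i}$; since $t \mapsto t^a$ is strictly increasing for odd $a$, $u^a - v^a$ and $u-v$ always have the same sign, so $\Pi_a \ge 0$ on all of $\R^2$, and being a non-negative binary form it is a sum of squares by Hilbert's theorem, and similarly for $\Pi_b$, whence $(u^a-v^a)(u^b-v^b) = (u-v)^2\,\Pi_a\Pi_b$ is SoS. In every case $\deg g \le d(a+b) = 2dp$, so the squares appearing have degree at most $dp$.

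Next I would invoke the tensor-product pseudo-distribution $\cD \otimes \cD$ over the pair of variables $(x,y)$, defined by $\pE_{\cD\otimes\cD}[p(x)q(y)] := \pE_\cD[p(x)]\cdot\pE_\cD[q(y)]$. This is a valid degree-$(\deg\cD)$ pseudo-distribution: for any bounded-degree $r(x,y) = \sum_i p_i(x) q_i(y)$ one has $\pE_{\cD\otimes\cD}[r^2] = \langle A, B\rangle$ where $A_{ij} = \pE_\cD[p_ip_j]$ and $B_{ij} = \pE_\cD[q_iq_j]$ are PSD pseudo-moment matrices, and the Hilbert--Schmidt inner product of PSD matrices is non-negative. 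Since $g$ is SoS of degree $2dp$, which is within the budget ($\deg\cD > 4dp$), non-negativity gives $0 \le \pE_{\cD\otimes\cD}[g]$. Expanding $g = f(x)^{a+b} - f(x)^a f(y)^b - f(y)^a f(x)^b + f(y)^{a+b}$ and using the product structure of $\cD\otimes\cD$ together with the $x\leftrightarrow y$ symmetry, both cross terms equal $\pE_\cD[f^a]\pE_\cD[f^b]$ and both diagonal terms equal $\pE_\cD[f^{a+b}]$, so $0 \le 2\pE_\cD[f^{a+b}] - 2\pE_\cD[f^a]\pE_\cD[f^b]$, which rearranges to the claim. As in the remark accompanying the earlier pseudo-Jensen lemmata, the identical argument also delivers the pointwise statement over even powers of valid grammar atoms, by specializing $\cD$ to a point mass.

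The hard part is the uniform sum-of-squares representation of $(u^a-v^a)(u^b-v^b)$: in the even-exponent case it is an elementary telescoping identity, but in the odd-exponent case it relies on the non-negative binary forms $\Pi_a, \Pi_b$ being SoS, which is precisely the $n=2$ instance of Hilbert's theorem rather than a bare algebraic identity. Everything else — the validity of $\cD\otimes\cD$, the degree bookkeeping against the $>4dp$ bound, and the final expansion — is routine, so I do not expect obstacles there.
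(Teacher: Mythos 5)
Your proof is correct, but it takes a genuinely different route from the paper's. The paper argues by induction on $p$: the inductive step expands the single square $\pE_\cD\bigl[f^{2p-4s}\bigl(f^{2s}-\pE_\cD[f^{2s}]\bigr)^2\bigr]\ge 0$ and combines it with the inductive hypothesis and the non-negativity of $\pE_\cD[f^{2s}]$ (justified there by writing $f^{2s}=(f^s)^2$). You instead double the variables, form the product pseudo-distribution $\cD\otimes\cD$, and reduce the whole statement to one explicit SoS certificate for the Chebyshev polynomial $(u^a-v^a)(u^b-v^b)$ with $a=2s$, $b=2p-2s$. Both arguments are sound, but they trade off differently. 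Your version avoids induction entirely, gives a cleaner one-shot degree bound of $2dp$, and — notably — is more robust when $2s$ is odd: the paper's step $\pE_\cD[f^{2s}]=\pE_\cD[(f^s)^2]\ge 0$ silently needs $s\in\N$ and so does not cover $2s=1$ outside the base case, whereas your product-distribution argument never needs $\pE_\cD[f^a]\ge 0$ and handles odd $a,b$ uniformly via the non-negative binary forms $\Pi_a,\Pi_b$. (For that case you do not need the full strength of Hilbert's theorem: a non-negative binary form dehomogenizes to a non-negative univariate polynomial, which is a sum of two squares.) What the paper's induction buys in exchange is that it stays within a single pseudo-distribution over the original variables and uses only the most elementary squares, at the cost of chaining through the inductive steps; your route requires the auxiliary replica/tensor construction $\cD\otimes\cD$, though that construction is standard, is verified correctly in your write-up via the PSD-ness of the two Gram matrices, and is consonant with the replica discussion elsewhere in the paper.
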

\begin{proof}
    We prove this by induction. For the base case, we consider $p=1$.

    \paragraph{Base case} When $p=1$ and $2s=1$, the following
    \[
        \pE_\cD\left[f(x)^{2}\right] \ge \pE_\cD\left[f(x)\right]\pE_\cD\left[f(x)\right] = \pE_\cD\left[f(x)\right]^2\, ,
    \]
    holds by~\Cref{lem:matrix-pseudo-jensen}.

    \paragraph{Inductive step} We assume the statement is true up to $2p-2 = 2(p-1)$. For the case of $2p$, we obtain the following by generalizing the calculation in~\Cref{lem:matrix-pseudo-jensen},
    \begingroup
    \allowdisplaybreaks
    \begin{align*}
        0 \le &\pE_\cD\left[f(x)^{2p-4s}\left(f^{2s}(x)-\pE_\cD\left[f^{2s}(x)\right]\right)^2\right] \\
        &= \left(\pE_\cD\left[f(x)^{2p}\right] - \pE_\cD\left[f(x)^{2p-2s}\right]\pE_\cD\left[f(x)^{2s}\right]\right) - \pE_\cD\left[f(x)^{2s}\right]\left(\pE_\cD\left[f(x)^{2p-2s}\right] - \pE_\cD\left[f(x)^{2p-4s}\right]\pE_\cD\left[f(x)^{2s}\right]\right)\, .
    \end{align*}
    \endgroup
    Note that, since $s \ge 1$, we have that $2p-2s \le 2(p-1)$ which allows us to invoke the inductive hypothesis and conclude that,
    \[
        \pE_\cD\left[f(x)^{2p-2s}\right] - \pE_\cD\left[f(x)^{2p-4s}\right]\pE_\cD\left[f(x)^{2s}\right] \ge 0\, .
    \]
    Since $\mathsf{deg}(f(x)^{2s}) = \mathsf{deg}((f(x)^s)^2) \le 2dp$, we immediately have that $\pE_\cD[f(x)^{2s}] \ge 0$ and this implies,
    \[
        \pE_\cD\left[f(x)^{2p}\right] - \pE_\cD\left[f(x)^{2p-2s}\right]\pE_\cD\left[f(x)^{2s}\right] \ge \pE_\cD\left[f(x)^{2s}\right]\left(\pE_\cD\left[f(x)^{2p-2s}\right] - \pE_\cD\left[f(x)^{2p-4s}\right]\pE_\cD\left[f(x)^{2s}\right]\right) \ge 0\,.
    \]
\end{proof}

\subsection{\texorpdfstring{$\ell_1$}{l1}-to-\texorpdfstring{$\ell_t$}{lt} norm inequality}

Stated below is an elementary $1$-to-$t$-norm inequality that permits a very simple Sum-of-Squares proof of $O(t)$.
\begin{lemma}[$\ell_1$-to-$\ell_t$ norm inequality, Mild generalization of {\cite[SoS triangle inequality]{hopkins2018clustering}}]\label[Lemma]{lem:l1-to-lt}
    Given $a_1, \dots, a_m \in \R$ and $t = 2^k$ for some $k > 0$, the following holds,
    \[
        \proves_t (a_1 + \dots + a_m)^t \le m^{t-1}(a_1^t + \dots + a_m^t)\, .
    \]
\end{lemma}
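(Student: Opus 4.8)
The plan is to prove the statement by induction on \(k\), where \(t = 2^k\), using only squaring and the AM–QM-type inequality \(2ab \le a^2 + b^2\) (which has a trivial degree-2 SoS proof, since \(a^2 - 2ab + b^2 = (a-b)^2\)). The base case \(k=0\), i.e.\ \(t=1\), is the identity \(a_1 + \dots + a_m = m^0(a_1 + \dots + a_m)\), which needs no proof. For the inductive step, suppose we have a degree-\(t\) SoS proof that \((a_1 + \dots + a_m)^t \le m^{t-1}(a_1^t + \dots + a_m^t)\); we want a degree-\(2t\) SoS proof of \((a_1 + \dots + a_m)^{2t} \le m^{2t-1}(a_1^{2t} + \dots + a_m^{2t})\).

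First I would square the inductive hypothesis: since both sides of \((a_1 + \dots + a_m)^t \le m^{t-1}(a_1^t + \dots + a_m^t)\) are manifestly represented so that the difference is a sum of squares, and squaring a valid SoS inequality \(p \le q\) with \(q\) SoS-bounded below (here \(p = (\sum a_i)^t\) is a perfect square since \(t\) is even, hence \(p \ge 0\) with an SoS proof, and \(q \ge 0\) similarly) yields \(p^2 \le q^2\), we obtain
\[
  \Bigl(\sum_{i=1}^m a_i\Bigr)^{2t} \le m^{2(t-1)}\Bigl(\sum_{i=1}^m a_i^t\Bigr)^2 = m^{2(t-1)}\Bigl(\sum_{i=1}^m a_i^{2t} + \sum_{i \ne j} a_i^t a_j^t\Bigr).
\]
Then I would apply \(2a_i^t a_j^t \le a_i^{2t} + a_j^{2t}\) to each of the \(m^2 - m\) cross terms (each such application being a degree-\(2t\) SoS certificate obtained by multiplying \((a_i^t - a_j^t)^2 \ge 0\)), which gives \(\sum_{i \ne j} a_i^t a_j^t \le (m-1)\sum_{i} a_i^{2t}\), and hence \((\sum a_i^t)^2 \le m \sum_i a_i^{2t}\). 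Combining, \((\sum a_i)^{2t} \le m^{2(t-1)} \cdot m \sum_i a_i^{2t} = m^{2t-1}\sum_i a_i^{2t}\), which is exactly the claim at level \(k+1\) with the degree bound \(2t\) respected throughout (squaring a degree-\(t\) certificate gives degree \(2t\), and the cross-term certificates are degree \(2t\)).

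The main obstacle — really the only subtlety — is making precise the closure property that "squaring a valid low-degree SoS inequality gives a valid low-degree SoS inequality of twice the degree," i.e.\ that \(\proves_t\{p \le q\}\) together with \(\proves_t\{0 \le p\}\) and \(\proves_t\{0 \le q\}\) implies \(\proves_{2t}\{p^2 \le q^2\}\). This follows from \(q^2 - p^2 = (q-p)(q+p)\): both factors are SoS-provably nonnegative (\(q - p\) by hypothesis, \(q + p\) as a sum of two SoS-provably-nonnegative terms), and the product of two polynomials each with a degree-\(t\) SoS proof of nonnegativity has a degree-\(2t\) SoS proof of nonnegativity (multiply the two sum-of-squares expressions). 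In the present application all the relevant polynomials are either perfect even powers or the degree-\(t\) certificate coming from the inductive hypothesis, so each hypothesis is readily available at the required degree; I would spell out this product-of-certificates step once and then the induction goes through mechanically.
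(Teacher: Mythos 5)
Your proof is correct and takes essentially the same route as the paper: induction on the dyadic exponent, squaring the inductive hypothesis (the paper cites a ``squaring SoS proofs'' proposition where you unpack it via $q^2-p^2=(q-p)(q+p)$), and then controlling $\bigl(\sum_i a_i^t\bigr)^2$ by the degree-2 cross-term inequality, which is exactly the paper's ``apply the base case to $a_i' = a_i^{t/2}$'' step. The only wrinkle is your base case: starting at $t=1$, the first squaring step does not fit your stated justification (there $p=\sum_i a_i$ is odd-degree, not SoS-nonnegative), so you should instead start the induction at $t=2$, where $(\sum_i a_i)^2 \le m\sum_i a_i^2$ follows directly from the identity $m\sum_i a_i^2-(\sum_i a_i)^2=\sum_{i<j}(a_i-a_j)^2$, as the paper does.
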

\begin{proof}
    The proof is by induction in conjunction with~\cite[Proposition (squaring SoS proofs)]{hopkins2018clustering}.

    \paragraph{Base case ($t = 2$)} In this case, we set $a = a_1$ and $b = a_2 + \dots + a_m$, and observe that,
    \[
        (a_1 + \dots + a_m)^2 = \sum_{i=1}^m a_i^2 + 2\sum_{i < j}a_ia_j, ,
    \]
    and that,
    \[
        m(a_1^2 + \dots + a_m^2) - (a_1 + \dots + a_m)^2 = \sum_{i < j}(a_i - a_j)^2\, .
    \]
    The above immediately implies that,
    \[
        (a_1 + \dots + a_m)^2 \le m(a_1^2 + \dots + a_m^2)\, ,
    \]
    where the SoS proof is clearly degree-$2$.

    \paragraph{Inductive step} For the inductive step, we assume the following,
    \[
        \proves_{t/2} (a_1 + \dots + a_m)^{t/2} \le m^{t/2 - 1}(a_1^{t/2} + \dots + a_m^{t/2})\, .
    \]
    By~\cite[Proposition (squaring SoS proofs)]{hopkins2018clustering}, this immediately implies that,
    \[
        \proves_t (a_1 + \dots + a_m)^t \le m^{t-2}(a_1^{t/2} + \dots + a_m^{t/2})^2\, .
    \]
    Using the base case with $t = 2$ and $a'_1 = a_1^{t/2}, \dots, a'_m = a_m^{t/2}$, this immediately yields the following,
    \[
        \proves_2 (a_1^{t/2} + \dots + a_m^{t/2})^2 \le m(a_1^t + \dots + a_m^t)\, ,
    \]
    which finishes the proof.
\end{proof}

\subsection{Semi-algebraic degree reduction inequalities}\label{sec:comm-semi-algebraic-degree-reduction}

\textbf{Note}: Once again, the statements below of~\Cref{prop:quadratic-ideal-reduction},~\Cref{prop:odd-bounds-from-even-bounds} and~\Cref{lem:induction-ideal-reduction} can all be made to work as SoS proofs for polynomials of moments in the grammar for the underlying HES SoS relaxation.

\begin{lemma}[Quadratic degree reduction]
\label[Lemma]{lem:ideal-reduction-2}
Let $c > 0$ be a constant.
There is a SoS proof that if $(x-y)^2 \le c^2$ then $x-y \le c$.
\end{lemma}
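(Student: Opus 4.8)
The plan is to exhibit an explicit degree-$2$ sum-of-squares certificate. Write $z := x - y$ for brevity; the hypothesis is the axiom $\left\{c^2 - z^2 \ge 0\right\}$ and the goal is $\left\{c - z \ge 0\right\}$. The key observation is the algebraic identity
\[
    c - z = \frac{1}{2c}\left(c^2 - z^2\right) + \frac{1}{2c}\left(c - z\right)^2\,,
\]
which one verifies directly by expanding the right-hand side: the cross terms $\pm 2cz$ inside the square cancel against $-z^2$, leaving $\frac{1}{2c}\left(2c^2 - 2cz\right) = c - z$.

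First I would note that, since $c > 0$ is a fixed constant, $\frac{1}{2c}$ is a legal nonnegative scalar coefficient in an SoS derivation. The first summand $\frac{1}{2c}(c^2 - z^2)$ is then precisely the hypothesis axiom multiplied by the nonnegative constant $\frac{1}{2c}$ (equivalently, $\frac{1}{2c}$ times $1^2$ times the axiom polynomial), and the second summand $\frac{1}{2c}(c - z)^2$ is a nonnegative constant times a perfect square. Hence the displayed identity is exactly of the form required by \Cref{def:sos-proof}, certifying $c - z \ge 0$ with total degree $2$. Undoing the substitution $z = x - y$ gives $\left\{(x-y)^2 \le c^2\right\} \proves_2 \left\{x - y \le c\right\}$, as claimed.

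There is no real obstacle here; the only thing to be mildly careful about is that the statement asserts a one-sided inequality, so one should \emph{not} symmetrize — the certificate above genuinely only produces the bound $x - y \le c$ (the companion bound $y - x \le c$ would follow from the same identity with $z$ replaced by $-z$, but is not asserted by the lemma). If desired, the same argument ported to the filtered-polynomial grammar of \Cref{sec:sos-poly-basis} works verbatim, since the identity is purely a ring identity and the coefficient $\frac{1}{2c}$ is constant.
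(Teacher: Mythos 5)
Your proof is correct and is essentially the paper's own argument: the identity $c - z = \tfrac{1}{2c}(c^2 - z^2) + \tfrac{1}{2c}(c-z)^2$ is exactly the paper's sequence of steps (factor $c^2-(x-y)^2$, add $[c-(x-y)]^2$, divide by $2c$) collapsed into a single explicit SoS certificate. Nothing further is needed.
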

\begin{proof}
$(x-y)^2 \le c^2$ is equivalent to $c^2 - (x-y)^2 \ge 0$, which is equivalent by factoring to $[c - (x-y)][c + (x-y)] \ge 0$.

We can add $[c - (x-y)]^2$ to both sides to get $[c - (x-y)][2c] \ge [c - (x-y)]^2 \ge 0$.
Dividing both sides by $2c$, we get $c \ge (x-y)$.
\end{proof}
 
\textbf{Note:} The above can also be amended to immediately conclude that $x + y \ge -c$, and the same conclusion follows about the ``two-sidedness'' of all semi-algebraic degree reduction inequalities.

\begin{lemma}
\label[Lemma]{lem:ideal-reduction-4}
Let $c > 0$ be a constant.
There is a SoS proof that if $(x-y)^4 \le c^4$ then $x-y \le c$.
\end{lemma}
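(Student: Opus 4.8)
Write $t := x-y$, so the claim reduces to an SoS implication purely in the single variable $t$: from $t^4 \le c^4$ derive $t \le c$. The strategy is a two-stage cascade. First I would reduce the hypothesis from degree $4$ to degree $2$, obtaining $t^2 \le c^2$, and then I would invoke \pref{lem:ideal-reduction-2} verbatim (with the roles of $x-y$ played by $t$) to conclude $t \le c$. Thus the only genuinely new content is the degree-$4$-to-degree-$2$ step, which mirrors the algebra in the proof of \pref{lem:ideal-reduction-2} one level up.

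\textbf{The degree reduction step.} The hypothesis $t^4 \le c^4$ is the statement $c^4 - t^4 \ge 0$, and $c^4 - t^4$ factors as $(c^2 - t^2)(c^2 + t^2)$. The factor $c^2 + t^2 = c^2 + t\cdot t$ is manifestly a sum of squares (here $c$ is a positive constant, so $c^2$ is a square), hence nonnegative with a trivial SoS certificate; I would like to ``divide it out'' of the product. As in \pref{lem:ideal-reduction-2}, I cannot literally divide, but I can add the square $(c^2 - t^2)^2 \ge 0$ to both sides of $(c^2-t^2)(c^2+t^2) \ge 0$ to get
\[
(c^2 - t^2)(c^2 + t^2) + (c^2 - t^2)^2 \;\ge\; (c^2 - t^2)^2 \;\ge\; 0,
\]
and the left-hand side collapses to $(c^2 - t^2)\bigl[(c^2+t^2) + (c^2 - t^2)\bigr] = (c^2 - t^2)(2c^2)$. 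Dividing through by the positive constant $2c^2$ yields $c^2 - t^2 \ge 0$, i.e. $t^2 \le c^2$, with an SoS proof of degree $4$ in $t$. Feeding this into \pref{lem:ideal-reduction-2} (another degree-$2$ manipulation of the same flavour) gives $t \le c$, and hence $x - y \le c$, by a constant-degree SoS proof.

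\textbf{Anticipated obstacles.} There is no substantive obstacle; the calculation is routine. The only points requiring a sentence of care are (i) checking that multiplying the valid inequality $(c^2-t^2)(c^2+t^2)\ge 0$ by nothing and \emph{adding} the square $(c^2-t^2)^2$ is a legitimate SoS inference (it is, since SoS proofs are closed under adding squares), and (ii) confirming that dividing by $2c^2$ is harmless because $c$ is a fixed positive constant rather than an indeterminate, so $1/(2c^2)$ is just a positive scalar. One should also remark, exactly as in the note following \pref{lem:ideal-reduction-2}, that the same argument symmetrically yields $x - y \ge -c$, so in fact $|x-y|\le c$ has a constant-degree SoS proof; this two-sidedness is what downstream uses (e.g. \pref{cor:shifted-hessian-holders}) actually rely on. More generally, the identical telescoping works to show $(x-y)^{2^k}\le c^{2^k} \Rightarrow x-y\le c$ by iterating the halving of the exponent $k$ times, with proof degree $2^k$.
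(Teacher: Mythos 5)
Your proof is correct, but it takes a different (and arguably cleaner) route than the paper's. You factor $c^4-t^4=(c^2-t^2)(c^2+t^2)$, add the square $(c^2-t^2)^2$ once to collapse the cofactor to the constant $2c^2$, deduce $t^2\le c^2$, and then invoke \pref{lem:ideal-reduction-2} as a black box. The paper instead factors out the linear term directly, writing $c^4-(x-y)^4=[c-(x-y)][c^3+c^2(x-y)+c(x-y)^2+(x-y)^3]$, and then adds squares twice --- first $[c-(x-y)]^2[c+(x-y)]^2$ and then $2[c-(x-y)]^2c^2$ --- to whittle the cubic cofactor down to $4c^3$ before dividing. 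Both certificates have degree $4$ and compose into explicit identities of the form $c-t=\frac{1}{4c^3}(c^4-t^4)+(\text{sum of squares})$. Your modular cascade is the natural one when the exponent is a power of two (as you note, it iterates to give $(x-y)^{2^k}\le c^{2^k}\Rightarrow x-y\le c$), whereas the paper's direct factoring against the full cofactor $\sum_{i} c^{i}(x-y)^{2p-1-i}$ is the template it actually recursively generalizes in \pref{lem:ideal-reduction-2p}, which must handle arbitrary even exponents $2p$, not only powers of two. For the statement as given, either argument is a complete proof.
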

\begin{proof}
$(x-y)^4 \le c^4$ is equivalent to $c^4 - (x-y)^4 \ge 0$, which is equivalent by factoring to $[c - (x-y)][c^3 + c^2(x-y) + c(x-y)^2 + (x-y)^3] \ge 0$.

We can add $[c - (x-y)]^2[c + (x-y)]^2$ to both sides to get $[c - (x-y)][2c^3 + 2c^2(x-y)] \ge [c - (x-y)]^2[c + (x-y)]^2 \ge 0$.

We can then add $2[c - (x-y)]^2c^2$ to both sides to get $[c - (x-y)][4c^3] \ge 2[c - (x-y)]^2c^2 \ge 0$.
Dividing both sides by $4c^3$, we get $c \ge (x-y)$.
\end{proof}

\begin{lemma}
\label[Lemma]{lem:ideal-reduction-2p}
Let $c > 0$ be a constant.
There is a SoS proof that if $(x-y)^{2p} \le c^{2p}$ then $x-y \le c$.
\end{lemma}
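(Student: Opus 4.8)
The plan is to reduce, via a single algebraic identity together with one sum-of-squares correction term, to a statement of the form (positive constant)$\cdot(c-x+y)\ge 0$, directly generalizing the $p=1$ and $p=2$ arguments above. Writing $t=x-y$ for brevity, the starting point is the polynomial identity
\[
2p\,c^{2p-1}(c-t) \;=\; \bigl(c^{2p}-t^{2p}\bigr) \;+\; W_p(c,t), \qquad W_p(c,t) := (2p-1)c^{2p} - 2p\,c^{2p-1}t + t^{2p},
\]
which is immediate upon expanding the right-hand side. Granting that $W_p$ admits an SoS decomposition (treating $c$ as a positive formal constant), the derivation is then: from the hypothesis $\{c^{2p}-t^{2p}\ge 0\}$ and the SoS fact $\{W_p\ge 0\}$, the identity yields $\{2p\,c^{2p-1}(c-t)\ge 0\}$; dividing both sides by the positive constant $2p\,c^{2p-1}$ gives $\{c-t\ge 0\}$, i.e. $x-y\le c$. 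This is the exact analogue of the ``add $(c-(x-y))^2$, divide by $2c$'' step in \Cref{lem:ideal-reduction-2}.

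The one nonroutine ingredient is exhibiting $W_p$ as a sum of squares. First I would show $W_p(c,t)\ge 0$ for all real $t$ when $c>0$: as a univariate polynomial in $t$, its derivative $2p\bigl(t^{2p-1}-c^{2p-1}\bigr)$ vanishes only at $t=c$ (the unique real root, since $2p-1$ is odd), and $W_p(c,c) = (2p-1)c^{2p} - 2p\,c^{2p} + c^{2p} = 0$, so $W_p$ attains its global minimum value $0$ at $t=c$. Consequently $(c-t)^2$ divides $W_p$, and $U_p(c,t):=W_p(c,t)/(c-t)^2$ is a nonnegative univariate polynomial in $t$ of degree $2p-2$, hence a sum of two squares of polynomials in $t$ whose coefficients are powers of $c$ (classical fact about nonnegative univariate real polynomials, applied over the ring $\R[c]$; equivalently one may quote the weighted AM--GM inequality $(2p-1)c^{2p}+t^{2p}\ge 2p\,c^{2p-1}t$ and give a direct SoS proof of it). One checks this matches the low-degree cases: $U_1 = 1$ (recovering \Cref{lem:ideal-reduction-2}), $U_2 = 2c^2+(c+t)^2$ (recovering \Cref{lem:ideal-reduction-4}, since $(c-t)^2\bigl(2c^2+(c+t)^2\bigr) = 3c^4-4c^3t+t^4 = W_2$). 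Thus $W_p = (c-t)^2 U_p$ is a product of SoS terms, hence SoS.

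Putting it together: $2p\,c^{2p-1}(c-t) = (c^{2p}-t^{2p}) + (c-t)^2 U_p(c,t)$, where both summands on the right are certifiably nonnegative (the first by hypothesis, the second as a product of squares), so $2p\,c^{2p-1}(c-t)\ge 0$ has an SoS proof of degree $O(p)$, and dividing by $2p\,c^{2p-1}>0$ completes the argument. As in the remark following \Cref{lem:ideal-reduction-2}, the same manipulation applied with $-t$ in place of $t$ (using $c^{2p}-(-t)^{2p}=c^{2p}-t^{2p}$) simultaneously yields $x-y\ge -c$, so the bound is automatically two-sided. I expect the main obstacle to be bookkeeping the explicit SoS form of $U_p$ cleanly; everything else is a direct expansion.
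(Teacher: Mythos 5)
Your proposal is correct, and it is the same underlying argument as the paper's: your identity $2p\,c^{2p-1}(c-t)=(c^{2p}-t^{2p})+(c-t)^2U_p(c,t)$ specializes exactly to the paper's degree-$2$ and degree-$4$ manipulations (e.g.\ $U_2=2c^2+(c+t)^2$ reproduces the two added square terms in \Cref{lem:ideal-reduction-4}). The paper's own proof of the general case merely says ``recursively generalize,'' so your contribution is to make that generalization explicit: the double root of $W_p$ at $t=c$ gives the factor $(c-t)^2$, and the nonnegativity of the univariate cofactor $U_p$ (equivalently, the weighted AM--GM $(2p-1)c^{2p}+t^{2p}\ge 2p\,c^{2p-1}t$, which is also obtainable from the paper's \Cref{lem:scalar-am-gm}) yields the sum-of-two-squares certificate. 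No gaps.
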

\begin{proof}
$(x-y)^{2p} \le c^{2p}$ is equivalent to $c^{2p} - (x-y)^{2p} \ge 0$, which is equivalent by factoring to $[c - (x-y)][c^{2p-1} + c^{2p-2}(x-y) + c^{2p-3}(x-y)^2 + \dots + (x-y)^{2p-1}] \ge 0$. It suffices to recurisvely generalize the case for degree-$2$ and degree-$4$.
\end{proof}

\begin{proposition}
\label[Proposition]{prop:degree-reduction-nonconstant}
    Suppose that $x_i^2 \le y_i^2$ and $y_i \ge 0$ by sum-of-squares proofs for all $i$.
    Then there is a degree-$3$ sum-of-squares proof that
    \[ \sum_i x_i \le \sum_i y_i \]
    and
    \[ -\sum_i x_i \le \sum_i y_i \,.\]
\end{proposition}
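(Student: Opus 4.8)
The plan is to strip off the summation and reduce to a single index, and then to adapt the factorisation of \Cref{lem:ideal-reduction-2} to a non-constant bound. As a polynomial inequality, $\sum_i x_i \le \sum_i y_i$ is exactly $\sum_i (y_i - x_i) \ge 0$, so it suffices to exhibit, for each fixed $i$, a degree-$3$ sum-of-squares proof that $\{y_i \ge 0,\ y_i^2 - x_i^2 \ge 0\} \proves \{y_i - x_i \ge 0\}$, together with its mirror $\{y_i + x_i \ge 0\}$; adding these certificates over $i$ only sums them, keeping the degree at $3$, and yields both claimed inequalities (this is the same ``sublinearity of nuclear/positivity certificates'' move used elsewhere). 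Here the hypotheses enter in the form that $y_i^2 - x_i^2 = (y_i - x_i)(y_i + x_i)$ and $y_i$ are already available as sum-of-squares-certified non-negative expressions.

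For a single index I would start from the sum-of-squares witness of $y_i^2 - x_i^2 \ge 0$, rewritten in factored form $(y_i - x_i)(y_i + x_i)$, and add to it the manifest square $(y_i - x_i)^2 \ge 0$. This produces the polynomial identity
\[ 2\,y_i\,(y_i - x_i) \;=\; (y_i - x_i)^2 \;+\; (y_i^2 - x_i^2)\,, \]
whose right-hand side is a square plus a sum-of-squares-provable term; since $(y_i - x_i)^2$ contributes degree $2\deg(y_i - x_i)$, this is a degree-$3$ certificate that $2 y_i(y_i - x_i) \ge 0$, and symmetrically $2 y_i(y_i + x_i) \ge 0$. This is precisely the non-constant analogue of the identity $[c - (x-y)]\,[2c] = (c - (x-y))^2 + (c^2 - (x-y)^2)$ that drives \Cref{lem:ideal-reduction-2}.

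The step I expect to be the main obstacle is the last one: passing from $2 y_i(y_i - x_i) \ge 0$ and $y_i \ge 0$ to the bare inequality $y_i - x_i \ge 0$. In \Cref{lem:ideal-reduction-2} and \Cref{lem:ideal-reduction-2p} the corresponding move is a division by the positive \emph{constant} $2c$, which costs nothing; here $y_i$ is a polynomial expression, so a bare division is not an operation of the proof system, and one must instead exploit the structure of the $y_i$ that actually occurs. In the applications $y_i$ is always an explicit non-negative bound that is provably bounded below by a fixed scalar $\delta > 0$ (for instance the $O(n^{\cdot})$-magnitude bounds produced by \Cref{lem:hessian-schatten-norm}, or quantities of the shape handled in \Cref{prop:q-norm-cond-cov}), so I would factor $y_i$ through this constant and divide only by $2\delta$, reducing the remaining step to the scalar case of \Cref{lem:ideal-reduction-2}; when $y_i$ is itself a scalar the reduction is immediate. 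Read at the level of pseudo-expectations, the displayed identity already yields $\pE[\,y_i(y_i - x_i)\,] \ge 0$ directly, which together with $\pE[y_i] \ge 0$ is all that is needed downstream once the harmless positive factor $y_i$ is controlled.
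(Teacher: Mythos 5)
Your opening move --- adding the square $(y_i - x_i)^2$ to the factored axiom $(y_i-x_i)(y_i+x_i)\ge 0$ so as to certify $2y_i(y_i-x_i)\ge 0$ --- is exactly the paper's first two steps, and you correctly identify that the whole difficulty is the subsequent ``division by $2y_i$.'' The gap is in how you resolve it. First, restricting to $y_i \ge \delta$ for a fixed scalar $\delta>0$ imports a hypothesis that is not in the statement. Second, and more importantly, even granting that hypothesis, ``divide by $2\delta$'' is not a licensed inference: from $2y_i(y_i-x_i)\ge 0$ and $y_i\ge\delta$ you cannot pass to $2\delta(y_i-x_i)\ge 0$, because replacing the multiplier $2y_i$ by the smaller constant $2\delta$ preserves the sign of the product only when $y_i - x_i\ge 0$, which is the conclusion being sought. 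The constant-bound lemmas you invoke (\Cref{lem:ideal-reduction-2}, \Cref{lem:ideal-reduction-2p}) succeed precisely because after adding the square their multiplier is already the literal constant $2c$; here it is the indeterminate $2y_i$, and a lower bound on $y_i$ does not convert one into the other. One can in fact exhibit low-degree pseudo-distributions satisfying $\{y\ge\delta,\ y^2\ge x^2\}$ with $\pE[x]>\pE[y]$, so no bare per-index certificate of the kind you want exists at these degrees. Your closing remark that $\pE[y_i(y_i-x_i)]\ge 0$ together with $\pE[y_i]\ge 0$ ``is all that is needed'' has the same problem: pseudo-expectations do not allow you to cancel the factor $y_i$.

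The paper never performs the division, and this is where the routes genuinely diverge after the shared first step. It keeps everything summed and manufactures a \emph{common} multiplier: starting from $\sum_i (y_i-x_i)y_i\ge 0$, it multiplies each term by $\sum_{j\ne i} y_j\ge 0$, uses a weighted AM--GM to replace the index-dependent multiplier $y_i\sum_{j\ne i}y_j$ by the index-independent $\sqrt{n-1}\sum_j y_j^2$, and then sums over $i$ to obtain a certificate in product form, $\bigl(\sum_i(y_i-x_i)\bigr)\cdot\sqrt{n-1}\sum_j y_j^2\ge 0$, which it reads as establishing the conclusion except in the degenerate case $y\equiv 0$ (which forces $x\equiv 0$ and makes the conclusion trivial). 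So the paper accepts a multiplied-form certificate whose multiplier is global and vanishes only when there is nothing to prove, whereas you localize to a single index and then need to cancel a non-constant factor --- the step that fails. To repair your version you would need either to adopt the paper's product-form conclusion or to supply an actual SoS identity expressing $\delta(y_i-x_i)$ in terms of the axioms; the substitution $2y_i\to 2\delta$ does not provide one.
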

\begin{proof}
    The inequality $x_i^2 \le y_i^2$ can be factored and summed to get
    \[ \sum_i (y_i - x_i)(y_i + x_i) \ge 0 \,.\]
    Add $(y_i-x_i)(y_i-x_i)$, which is non-negative, to find
    \[ \sum_i (y_i - x_i)y_i \ge 0 \]
    
    Since $y_j \ge 0$, we know that 
    \[(y_i - x_i)y_i\left(\sum_{j\ne i} y_j\right) \ge 0\,.\]
    This implies by a weighted AM-GM that 
    \[(y_i - x_i)\left(\sum_{j\ne i} \frac{1}{\sqrt{n-1}}y_i^2 + \sqrt{n-1}y_j^2\right) \ge 0\,,\]
    where $n$ is the number of possible values of $i$.
    This implies 
    \[(y_i - x_i)\left(\sum_{j} y_j^2\sqrt{n-1}\right) \ge 0\,.\]
    Finally, sum the preceding equation over $i$.
    This is the form of a SoS proof that demonstrates that either $\sum_i x_i \le \sum_i y_i$ or $y_i = 0$ for all $i$.
    The latter case implies that $x_i = 0$ for all $i$ as well.
    The $-\sum_{i}x_i$ case follows symmetrically.
\end{proof}

One again, note that each of the lemmata proved below about pseudo-distributions can \emph{also} be used to give a SoS proof for polynomials of moments that are of even degree, provided they are valid elements of the polynomial grammar~(\Cref{sec:sos-poly-basis}), meaning that these proofs can also be point-wise over the atoms of the HES SoS hierarchy.

\begin{proposition}[Quadratic degree reduction with pseudo-expectations]\label[Proposition]{prop:quadratic-ideal-reduction}
    Given non-SoS program variables $y = an, c(n) = o(n)$ for $a \ge 1$ a constant, a non-SoS program variable $c(n) = o(n)$, and a low-degree SoS proof that,
    \[
        \pE_x\left(x - y\right)^2 \le c(n)^2\, ,
    \]
    there are low-degree SoS proofs that,
    \begin{align}
        &\pE_x(x^2) \le y^2 + 2yc(n) + c^2(n) = y^2 + o(n^2)\, ,\\
        &\pE_x(x^2) \ge y^2 - o(n^2)\, ,\\
        &\pE_x(x) \le y + o(n)\, ,\\
        &\pE_x(x) \ge y - o(n)\, .
    \end{align}
\end{proposition}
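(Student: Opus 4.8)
The plan is to reduce all four conclusions to linear manipulations of pseudo-expectations, together with pseudo-Jensen (\Cref{lem:matrix-pseudo-jensen}) and the scalar degree-reduction lemma \Cref{lem:ideal-reduction-2}. First I would expand $(x-y)^2 = x^2 - 2yx + y^2$ and apply linearity of $\pE$ to the hypothesis; since $y = an$ is a scalar it commutes through $\pE$, so we obtain the master inequality
\[ \pE x^2 - 2y\,\pE x + y^2 \le c^2\,. \]
All four statements will follow from this together with a two-sided bound on the single real quantity $\pE x$.

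Next I would bound $\pE x$. By pseudo-Jensen (\Cref{lem:matrix-pseudo-jensen} applied to the degree-$1$ polynomial $x-y$) we have $(\pE(x-y))^2 \le \pE((x-y)^2) \le c^2$, and since $\pE(x-y) = \pE x - y$ this reads $(\pE x - y)^2 \le c^2$. Because $\pE x$ is a single SDP variable (linear in the pseudo-moments), the two-sided form of \Cref{lem:ideal-reduction-2} — as remarked immediately after that lemma — yields low-degree SoS derivations of $\pE x \le y + c$ and $\pE x \ge y - c$. Since $c(n) = o(n)$, these are precisely conclusions (3) and (4).

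For conclusion (1), I would substitute the upper bound on $\pE x$ back into the master inequality: $\pE x^2 \le c^2 + 2y\,\pE x - y^2 \le c^2 + 2y(y+c) - y^2 = y^2 + 2yc + c^2$, which is the claimed exact bound (equal to $(y+c)^2$); and since $y = an$ with $a \ge 1$ constant and $c = o(n)$, we get $2yc + c^2 = o(n^2)$, hence $\pE x^2 \le y^2 + o(n^2)$. For conclusion (2), I would instead use non-negativity of the pseudo-expectation of a square, $\pE((x-y)^2) \ge 0$: rearranging the master identity gives $\pE x^2 = \pE((x-y)^2) + 2y\,\pE x - y^2 \ge 2y(y-c) - y^2 = y^2 - 2yc \ge y^2 - o(n^2)$, using $y > 0$ and (without loss of generality, since only $c^2$ appears in the hypothesis) $c(n) \ge 0$.

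I do not expect any serious obstacle here; the content is essentially bookkeeping. The only points needing care are: (i) checking that every step is a genuine low-degree SoS inference — linearity of $\pE$, \Cref{lem:matrix-pseudo-jensen}, and \Cref{lem:ideal-reduction-2} all are, and the division by $2c$ inside \Cref{lem:ideal-reduction-2} is legitimate once one disposes of the trivial case $c(n) = 0$, in which $\pE x = y$ exactly; (ii) the asymptotic accounting $2yc + c^2 = o(n^2)$, which relies crucially on $y = \Theta(n)$ together with $c = o(n)$; and (iii) the sign conventions, i.e. invoking the two-sided versions of the degree-reduction inequalities and taking $c(n) \ge 0$. The same template (expand the square, apply linearity, bound the first pseudo-moment, substitute back) transfers verbatim to the higher-power analogues via \Cref{lem:ideal-reduction-2p}.
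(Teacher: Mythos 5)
Your proposal is correct and follows essentially the same route as the paper's proof: pseudo-Jensen plus the two-sided form of \Cref{lem:ideal-reduction-2} for the linear bounds, then substituting those back into the expanded master inequality (and using $\pE(x-y)^2 \ge 0$) for the quadratic bounds. The extra remarks about the $c(n)=0$ edge case and sign conventions are fine but not needed beyond what the paper already does.
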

\begin{proof}
    We first use quadratic ideal reduction to show the linear bounds, and then use those to obtain the quadratic bounds.

    \paragraph{Linear bounds} By the pseudo-Jensen inequality~(\Cref{lem:matrix-pseudo-jensen}) and the given SoS proof,
    \[
        (\pE_x(x-y))^2 = \left(\pE_x\left[x\right] - y\right)^2 \le \pE_x(x-y)^2 \le c(n)^2\, .
    \]
    Then, by~\Cref{lem:ideal-reduction-2} on $(\pE_x(x) - y)^2 \le c(n)^2$, we have that,
    \begin{equation}\label{eq:linear-bound-via-pseudo-jensen}
        \pE_x(x-y) \le c(n)\, .
    \end{equation}
    Using the fact that $(\pE_x(y-x))^2 \le \pE_x(y-x)^2 = \pE_x(x-y)^2 \le c(n)^2$, one can again use~\Cref{lem:ideal-reduction-2} to conclude that,
    \[
        y - \pE_x(x) \le o(n) \implies \pE_x(x) \ge y - o(n)\, .
    \]

    \paragraph{Quadratic bounds}
    Now,
    \[
        \pE_x(x-y)^2 = \pE(x^2) + y^2 - 2y\pE_x(x) \le c(n)^2\, .
    \]
    Using the implication in~\Cref{eq:linear-bound-via-pseudo-jensen} and the given bound,
    \[
        \pE_x[x^2]\le c(n)^2 - y^2 + 2y(c(n) + y) = y^2 + 2yc(n) + c(n)^2 \le y^2 + o(n^2)\, .
    \]
    Using the fact that $\pE_x(x-y)^2 \ge 0$ and $\pE_x(x) \ge y -o(n)$,
    \begin{align*}
        &\pE_x(x-y)^2 = \pE_x(x^2) + y^2 -2y\pE_x(x) \ge 0\\
        &\implies \pE(x^2) \ge 2y\pE_x(x) - y^2 \ge  2y(y - o(n)) - y^2 = y^2 - o(n^2)\, .
    \end{align*}
\end{proof}

\begin{proposition}[Odd pseudo-moment bounds from degree Reductions]\label[Proposition]{prop:odd-bounds-from-even-bounds}
    Given non-SoS program variables $y = an, c(n) = o(n)$ for $a \ge 1$ a constant, a pseudo-distribution of degree $\ge 4p+2$, and low-degree SoS proofs that,
    \[
        \pE_x[(x-y)^{2j}] \le o(n^{2j})\, ,\,\forall 1 \le j \le 2p+1\, ,
    \]
    there exist low-degree SoS proofs that,
    \begingroup
    \allowdisplaybreaks
    \begin{align}
        &\pE_x(x-y)^{2j+1} \le o(n^{2j+1})\, , \,\forall 1 \le j \le p\\
        &\pE_x(y-x)^{2j+1} \le o(n^{2j+1})\, ,\,\forall 1 \le j \le p\, .
    \end{align}
    \endgroup
\end{proposition}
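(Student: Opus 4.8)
The plan is to reduce each odd moment to a convex combination of its two neighbouring even moments through one elementary sum-of-squares identity, and then to optimize the free weight appearing in that combination. Write $z = x-y$, so that the hypothesis says $\pE_x z^{2j} \le o(n^{2j})$ for every $1 \le j \le 2p+1$, i.e.\ for every even exponent in $\{2,4,\dots,4p+2\}$. Since $(y-x)^{2j+1} = -z^{2j+1}$, the two claimed conclusions are the upper and lower one-sided bounds on the single quantity $\pE_x z^{2j+1}$.

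The key observation is the pair of algebraic identities, valid for any scalar $\lambda>0$,
\[
\frac{\lambda}{2}\,z^{2j} + \frac{1}{2\lambda}\,z^{2j+2} - z^{2j+1} \;=\; \frac{1}{2\lambda}\bigl(z^{j}(z-\lambda)\bigr)^{2}\,,\qquad
\frac{\lambda}{2}\,z^{2j} + \frac{1}{2\lambda}\,z^{2j+2} + z^{2j+1} \;=\; \frac{1}{2\lambda}\bigl(z^{j}(z+\lambda)\bigr)^{2}\,,
\]
each verified by expanding the right-hand side. Both right-hand sides are manifestly squares of polynomials of degree $j+1\le p+1$, so these furnish degree-$(2j+2)$ SoS proofs of $z^{2j+1}\le \tfrac{\lambda}{2}z^{2j}+\tfrac{1}{2\lambda}z^{2j+2}$ and of $-z^{2j+1}\le \tfrac{\lambda}{2}z^{2j}+\tfrac{1}{2\lambda}z^{2j+2}$ respectively.

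Next I would apply the pseudo-expectation operator. Since $2j+2 \le 2p+2 \le 4p+2 \le \deg(\pE)$, the pseudo-distribution respects these SoS inequalities, so
\[
\pm\,\pE_x z^{2j+1} \;\le\; \tfrac{\lambda}{2}\,\pE_x z^{2j} + \tfrac{1}{2\lambda}\,\pE_x z^{2j+2}
\;\le\; \tfrac{\lambda}{2}\,o(n^{2j}) + \tfrac{1}{2\lambda}\,o(n^{2j+2})\,,
\]
where the second step uses the hypothesis at exponents $2j$ and $2j+2$, which both lie in $\{2,\dots,4p+2\}$ precisely because $1\le j\le p$. Choosing the free scalar $\lambda = n$ (a positive constant with respect to the SoS variable $x$, hence legal to multiply and divide by) collapses both terms to $o(n^{2j+1})$. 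The $(z-\lambda)^2$ identity then yields $\pE_x (x-y)^{2j+1}\le o(n^{2j+1})$ and the $(z+\lambda)^2$ identity yields $\pE_x (y-x)^{2j+1} = -\pE_x z^{2j+1}\le o(n^{2j+1})$, which is what is claimed.

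The argument is essentially mechanical, so there is no real obstacle; the only points needing care are bookkeeping ones. First, one must confirm that the two even exponents $2j$ and $2j+2$ required on the right are always within the range $[2,4p+2]$ supplied by the hypothesis — this is exactly why the conclusion is restricted to $1\le j\le p$ rather than a larger range. Second, one must check that the degree $2j+2$ of the square used in each certificate never exceeds $\deg(\pE)\ge 4p+2$, which again follows from $j\le p$. No concentration, combinatorics, or pseudo-Cauchy–Schwarz is needed; in particular $c(n)$ plays no role beyond fixing the meaning of the $o(\cdot)$ notation inherited from the neighbouring propositions.
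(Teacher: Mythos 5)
Your proposal is correct, and it takes a genuinely different route from the paper. The paper's proof first applies the pseudo-Jensen/pseudo-Cauchy--Schwarz inequality to get $\bigl(\pE_x z^{2j+1}\bigr)^2 \le \pE_x z^{4j+2} \le o(n^{4j+2})$, and then runs a quadratic semi-algebraic degree-reduction argument (in the style of \Cref{lem:ideal-reduction-2}) on the resulting scalar inequality to extract the one-sided bound $\pm\pE_x z^{2j+1} \le o(n^{2j+1})$. Your argument instead sandwiches the odd power between its two neighbouring even powers via the completed-square identities $\tfrac{\lambda}{2}z^{2j}+\tfrac{1}{2\lambda}z^{2j+2}\mp z^{2j+1}=\tfrac{1}{2\lambda}\bigl(z^{j}(z\mp\lambda)\bigr)^2$ and optimizes $\lambda = n$. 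Both are valid, but the trade-offs differ: the paper's route consumes the even-moment hypothesis at exponent $4j+2$ (hence needs the full range up to $4p+2$, matching the stated degree bound on the pseudo-distribution), whereas yours only needs even exponents $2j$ and $2j+2$, i.e.\ the hypothesis up to $2p+2$, and the squares in your certificate have degree only $2j+2$ rather than $4j+2$. Your argument would in fact establish the conclusion for all $j \le 2p$, strictly more than the proposition claims, and avoids the degree-reduction lemma entirely. The only bookkeeping points — that $\lambda=n$ is a scalar constant and that the degrees stay within $\deg(\pE)$ — are exactly the ones you flag, and they check out.
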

\begin{proof}
    We begin the following observation using the pseudo-Jensen inequality~(\Cref{lem:matrix-pseudo-jensen}) and the given bound over even pseudo-moments,
    \[
        0 \le \left(\pE_x(y-x)^{2j+1}\right)^2 = \left(\pE_x(x-y)^{2j+1}\right)^2 \le \pE_x(x-y)^{4j+2} \le o(n^{4j+2})\, , \forall\, 1 \le j \le 2p-1\, .
    \]
    We then apply a similar argument as in~\Cref{lem:ideal-reduction-2} to $0 \le \left(\pE_x(y-x)^{2j+1}\right)^2 \le o(n^{4j+2})$ as follows,
    \begingroup
    \allowdisplaybreaks
    \begin{align*}
        &o(n^{4j+2}) - \left(\pE_x(y-x)^{2j+1}\right)^2 \ge 0 \\
        &\implies \left(o(n^{2j+1}) - \pE_x(y-x)^{2j+1}\right)\left(o(n^{2j+1}) + \pE_x(y-x)^{2j+1}\right) \ge 0 \\
        &\implies \left(o(n^{2j+1}) - \pE_x(y-x)^{2j+1}\right)\left(o(n^{2j+1}) + \pE_x(y-x)^{2j+1}\right) + \left(o(n^{2j+1}) - \pE_x(y-x)^{2j+1}\right)^2 \ge 0 \\
        &\implies \left(o(n^{2j+1}) - \pE_x(y-x)^{2j+1}\right)\left\{\left(o(n^{2j+1}) + \pE_x(y-x)^{2j+1}\right) + \left(o(n^{2j+1}) - \pE_x(y-x)^{2j+1}\right)\right\} \ge 0 \\
        &\implies \left(o(n^{2j+1}) - \pE_x(y-x)^{2j+1}\right)o(n^{2j+1}) \ge 0 \\
        &\implies \left(o(n^{2j+1}) - \pE_x(y-x)^{2j+1}\right) \ge 0\, ,
    \end{align*}
    \endgroup
    which yields the desired bound.
\end{proof}

\begin{lemma}[Pseudo-moment bounds via inductive degree reductions]\label[Lemma]{lem:induction-ideal-reduction}
    Given non-SoS program variables $y = an, c(n) = o(n)$ for $a \ge 1$ a constant, a pseudo-distribution with degree $\ge (4p+2)\cdot\mathsf{deg}(x)$ and a low-degree SoS proof that,
    \[
        \pE_x(x-y)^{2j} \le c(n)^{2j} = o(n^{2j})\, ,\,\forall 1\le j \le 2p+1
    \]
    there are low-degree SoS proofs that,
    \begingroup
    \allowdisplaybreaks
    \begin{align}
        &y^{2p} - o(n^{2p}) \le \pE_x(x^{2p}) \le y^{2p} + o(n^{2p})\, ,\\
        &y^{2p+1} - o(n^{2p+1}) \le \pE_x(x^{2p+1}) \le y^{2p+1} + o(n^{2p+1})\, .
    \end{align}
    \endgroup
\end{lemma}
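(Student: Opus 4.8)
The plan is to derive both the even and the odd statements from a single binomial-expansion identity together with the odd-moment control already supplied by \Cref{prop:odd-bounds-from-even-bounds}. Write $m \in \{2p, 2p+1\}$ and use the polynomial identity $x^m = ((x-y)+y)^m = \sum_{k=0}^{m}\binom{m}{k}(x-y)^k\,y^{m-k}$, whose degree in $x$ is $m\cdot\mathsf{deg}(x) \le (4p+2)\mathsf{deg}(x)$, so every pseudo-moment it invokes lies within the degree budget. Applying linearity of $\pE_x$ gives $\pE_x x^m = y^m + \sum_{k=1}^{m}\binom{m}{k}y^{m-k}\,\pE_x (x-y)^k$, and the whole task reduces to establishing $|\pE_x(x-y)^k| \le o(n^k)$ for every $1 \le k \le m$: the $k=0$ term is exactly $y^m$, and since $m$ is a constant, $\sum_{k=1}^m \binom{m}{k}(an)^{m-k}\cdot o(n^k)$ is a finite sum of terms each of order $o(n^m)$, hence $o(n^m)$. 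This yields the two-sided bound $y^m - o(n^m) \le \pE_x x^m \le y^m + o(n^m)$ in one stroke, for $m = 2p$ and then again for $m = 2p+1$.

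First I would handle the even exponents $k = 2j$: these are exactly the hypothesis $\pE_x(x-y)^{2j} \le c(n)^{2j} = o(n^{2j})$, valid for $1 \le j \le 2p+1$, which in particular covers every even power arising in the two expansions; moreover $(x-y)^{2j} = ((x-y)^j)^2$ is a square of a polynomial of degree $j\cdot\mathsf{deg}(x)$ within budget, so $\pE_x(x-y)^{2j} \ge 0$ and thus $|\pE_x(x-y)^{2j}| \le o(n^{2j})$. Next I would handle the odd exponents $k = 2j+1$ with $1 \le j \le p$: here I invoke \Cref{prop:odd-bounds-from-even-bounds}, whose hypotheses are precisely the even bounds up to $j' \le 2p+1$ and a pseudo-distribution of degree $\ge 4p+2$ (accommodated by our degree-$(4p+2)\mathsf{deg}(x)$ assumption), to obtain \emph{both} $\pE_x(x-y)^{2j+1} \le o(n^{2j+1})$ and $\pE_x(y-x)^{2j+1} \le o(n^{2j+1})$, that is $|\pE_x(x-y)^{2j+1}| \le o(n^{2j+1})$. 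Finally the exponent $k=1$ is covered by the linear bound $|\pE_x x - y| \le c(n) = o(n)$, which follows from pseudo-Jensen $(\pE_x(x-y))^2 \le \pE_x(x-y)^2 \le c(n)^2$ and the quadratic degree reduction \Cref{lem:ideal-reduction-2} — exactly the argument already recorded in \Cref{prop:quadratic-ideal-reduction}.

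Assembling: substitute these bounds into the linearized binomial identity for $m = 2p$ and separately for $m = 2p+1$. Since all three ingredients (the hypothesis, \Cref{prop:odd-bounds-from-even-bounds}, and \Cref{lem:ideal-reduction-2}) are low-degree SoS proofs, and a binomial expansion followed by a finite scalar-weighted sum of such proofs is again a low-degree SoS proof, the resulting inequalities are low-degree SoS proofs as claimed. The ``inductive'' label simply reflects that one may equivalently build the conclusion up over $p = 1, 2, \dots$, with $p = 1$ being \Cref{prop:quadratic-ideal-reduction} and each step adjoining the two new odd-moment estimates from \Cref{prop:odd-bounds-from-even-bounds}; either organization works.

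I do not expect a deep obstacle here — the real content lives in \Cref{prop:odd-bounds-from-even-bounds}, which converts even-moment concentration into odd-moment concentration via pseudo-Jensen and a factoring trick, and is already proven in the excerpt. The one point requiring care is that the bounds must be genuinely \emph{two-sided}: for the lower bounds $\pE_x x^m \ge y^m - o(n^m)$ one needs both the $(x-y)$ and the $(y-x)$ versions of every odd-moment estimate, which is why \Cref{prop:odd-bounds-from-even-bounds} is stated (and invoked) with both conclusions and why the $k=1$ step relies on the two-sided form of \Cref{lem:ideal-reduction-2}. A secondary bookkeeping point is to confirm the SoS degree never exceeds $(4p+2)\mathsf{deg}(x)$: the most expensive object touched anywhere is $\pE_x(x-y)^{4j+2}$ with $j \le p$, appearing inside the application of \Cref{prop:odd-bounds-from-even-bounds}, whose degree is exactly $(4p+2)\mathsf{deg}(x)$, matching the hypothesis.
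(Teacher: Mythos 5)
Your proposal is correct, and it reaches the conclusion by a genuinely different route from the paper. The paper proceeds by strong induction on the exponent: it expands $\pE_x(x-y)^{2p}$ in the monomial basis of $x$, isolates $\pE_x[x^{2p}]$, and then needs inductively established two-sided bounds on the \emph{lower} moments $\pE_x[x^j]$ for $j < 2p$ to close the loop (with the odd case handled analogously after invoking \Cref{prop:odd-bounds-from-even-bounds}). You instead center the binomial expansion at $y$, writing $\pE_x x^m = \sum_{k}\binom{m}{k}y^{m-k}\pE_x(x-y)^k$, so that every term you must control is a centered pseudo-moment $\pE_x(x-y)^k$ — and all of these are directly available: the even ones from the hypothesis plus non-negativity of squares, the odd ones with $k\ge 3$ from \Cref{prop:odd-bounds-from-even-bounds} (correctly invoked in its two-sided form), and $k=1$ from pseudo-Jensen plus \Cref{lem:ideal-reduction-2}. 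This eliminates the induction entirely and makes the degree accounting cleaner, since the only expensive object is the $\pE_x(x-y)^{4j+2}$ term hidden inside \Cref{prop:odd-bounds-from-even-bounds}, exactly as you note. What the paper's organization buys in exchange is that it records explicit intermediate bounds on $\pE_x[x^j]$ for all $j \le 2p+1$ along the way, which is occasionally convenient downstream; your version produces only the top-degree statement but with less machinery. Both arguments rest on the same three ingredients, and your assembly step — a finite, constant-length weighted sum of two-sided $o(n^k)$ estimates with coefficients $\binom{m}{k}(an)^{m-k}$ — is a legitimate low-degree SoS derivation.
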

\begin{proof}
    We prove this by appealing to strong induction. The base case corresponds to~\Cref{prop:quadratic-ideal-reduction}. The assumption will be guaranteed for the particular use case by the certificates provided in~\Cref{lem:hessian-schatten-norm}.

    \paragraph{Base case} For $j=1$, the base case is immediately implied by
    \Cref{prop:quadratic-ideal-reduction} which gives the following bounds,
    \begingroup
    \allowdisplaybreaks
    \begin{align*}
        &\pE_x(x^2) \le y^2 + 2yc(n) + c^2(n) = y^2 + o(n^2)\, ,\\
        &\pE_x(x^2) \ge y^2 - o(n^2)\, ,\\
        &\pE_x(x) \le y + o(n)\, ,\\
        &\pE_x(x) \ge y - o(n)\, .
    \end{align*}
    \endgroup
    \paragraph{Inductive hypothesis} Assume that,
    \[
        y^{2j} - o(n^{2j}) \le \pE_x(x^{2j}) \le y^{2j} + o(n^{2j})\, , \forall\, j \in \{2,3,4,\dots,p-1\}\, ,
    \]
    and
    \[
        y^{2j+1} - o(n^{2j+1}) \le \pE_x(x^{2j+1}) \le y^{2j+1} + o(n^{2j+1})\, ,\forall\, j \in \{1,2,3,\dots,p-1\}\,.
    \]

    \paragraph{Inductive step} We will demonstrate lower and upper bounds for the case that $j=p$ for both the even ($2j = 2p$) and odd ($2j+1 = 2p+1$) case. We will need the former (even) for the latter (odd), and therefore, we begin with the even case. By the definition of a pseudo-distribution and the certificates given in the theorem condition,
    \[
        0 \le \pE_x(x-y)^{2p} \le c(n)^{2p} = o(n^{2p})\, .
    \]
    expanding the LHS yields,
    \begingroup
    \allowdisplaybreaks
    \begin{align*}
        \pE_x(x-y)^{2p} &= \pE\left(\sum_{j=0}^{2p}(-1)^{2p-j}\binom{2p}{j}x^jy^{2p-j}\right) \\
        &= y^{2p} + \pE_x\left[x^{2p}\right] + \sum_{j=1}^{2p-1}(-1)^{2p-j}\binom{2p}{j}y^{2p-j}\pE_x[x^j]\\
        &=_{(-1)^{2p-j}=(-1)^j} y^{2p} + \pE_x\left[x^{2p}\right] + \sum_{j=1}^{2p-1}(-1)^j\binom{2p}{j}y^{2p-j}\pE_x[x^j]\, .
    \end{align*}
    \endgroup
    For the lower bound, we use the expansion of the LHS with the inductive hypothesis and positivity,
    \begingroup
    \allowdisplaybreaks
    \begin{align*}
        &y^{2p} + \pE_x\left[x^{2p}\right] + \sum_{j=1}^{2p-1}(-1)^j\binom{2p}{j}y^{2p-j}\pE_x[x^j] \ge 0 \\
        &\implies \pE_x\left[x^{2p}\right] \ge -y^{2p} - \sum_{j=1}^{2p-1}(-1)^j\binom{2p}{j}y^{2p-j}\pE_x[x^j] \\
        &\implies \pE_x[x^{2p}] \ge -y^{2p} -\sum_{j \in \{2,4,\dots,2p-2\}}\binom{2p}{j}y^{2p-j}\pE_x[x^j] + \sum_{j\in\{1,3,\dots,2p-1\}}\binom{2p}{j}y^{2p-j}\pE_x[x^j] \\
        &\implies \pE_x[x^{2p}] \ge_{\text{ind. hyp.}} -y^{2p} -\sum_{j \in \{2,4,\dots,2p-2\}}\binom{2p}{j}y^{2p-j}(y^{j} + o(n^{j})) + \sum_{j\in\{1,3,\dots,2p-1\}}\binom{2p}{j}y^{2p-j}(y^{j} - o(n^j)) \\
        &\implies \pE_x[x^{2p}] \ge -y^{2p} - y^{2p}\sum_{j=1}^{2p-1}(-1)^j\binom{2p}{j} - \sum_{j=1}^{2p-1}\binom{2p}{j}y^{2p-j}o(n^j) \\
        &\implies \pE_x[x^{2p}] \ge -y^{2p} - y^{2p}\left(\sum_{j=0}^{2p}(-1)^{j}\binom{2p}{j} - 2\right) - (1+a)^{p}o(n^{2p}) \\
        &\implies \pE_x[x^{2p}] \ge_{(1-1)^{2p} = 0} -y^{2p} + 2y^{2p} - o(n^{2p1}) = y^{2p} - o(n^{2p})\, .
    \end{align*}
    \endgroup

    For the upper bound, we combine the expansion of the LHS with the inductive hypothesis and the upper bound given by the condition,
    \begingroup
    \allowdisplaybreaks
    \begin{align*}
        \pE[x^{2p}] &\le o(n^{2p}) - y^{2p} -\sum_{j=1}^{2p-1}(-1)^j\binom{2p}{j}y^{2p-j}\pE_x\left[x^j\right] \\
        &= o(n^{2p}) + \sum_{j=1,3,\dots,2p-1}\binom{2p}{j}y^{2p-j}\pE_x[x^j] - \sum_{j=2,4,\dots,2p-2}\binom{2p}{j}y^{2p-j}\pE_x[x^j] - y^{2p}\\
        &\le_{\text{ind. hyp.}} o(n^{2p}) + \sum_{j=1,3,\dots,2p-1}\binom{2p}{j}y^{2p-j}(y^j + o(n^j)) - \sum_{j=2,4,\dots,2p-2}\binom{2p}{j}y^{2p-j}(y^j - o(n^j)) - y^{2p}\\
        &= o(n^{2p}) - y^{2p} - y^{2p}\sum_{j=1}^{2p-1}(-1)^j\binom{2p}{j} + \sum_{j=1}^{2p-1}\binom{2p}{j}y^{2p-j}o(n^j) \\
        &= o(n^{2p}) - y^{2p} - y^{2p}\left(\sum_{j=0}^{2p}(-1)^{j}\binom{2p}{j} - 2\right) + o(n^{2p})\sum_{j=1}^{2p-1}\binom{2p}{j}a^{2p-j} \\
        &\le_{(1-1)^{2p} = 0} o(n^{2p}) - y^{2p} + 2y^{2p} + (1+a)^{2p}o(n^{2p}) \le y^{2p} + o(n^{2p})\, .
    \end{align*}
    \endgroup
    Using the bounds,
    \[
        -o(n^{2p+1}) \le \pE_x(x-y)^{2p+1} \le o(n^{2p+1})\, ,
    \]
    guaranteed by~\Cref{prop:odd-bounds-from-even-bounds} and the bounds for the case that $j=2p$ as proved in the above (even) induction step, a similar computation gives the desired result for the odd case.
\end{proof}

\subsection{Strong convexity for \texorpdfstring{$x \to x^q$}{x to x} via linear interpolation}

Stated below is a low-degree SoS proof that the function $f: x \to x^{q}$ for $q \in (1,2]$ is strongly convex when $p$ is dyadic as the H\"older conjugate. This proof is presented for the purpose of modularization and is ``lifted'' to a proof of \emph{trace} strong convexity for the same function in the companion paper, and ends up being crucial in the rounding analysis.

\begin{lemma}[Strong convexity of $x \to x^q$]\label[Lemma]{lem:scalar-strong-convexity}
    Let the function $f: x \to x^{q}$ with $\mathsf{dom}(f) \in \left[0, \frac{1}{(A(\eps)n)}\right]$ and $q = \frac{2^k}{2^k-1}$ for some $k \ge 1$. Then, $\forall x \ne y \in \mathsf{dom}(f)$,
    \[
        \proves \{y^{q} - x^{q} \ge q(y-x)x^{q-1} + C(y-x)^2\}\, ,
    \]
    where $C = \frac{(q-1)}{(p-1)(\delta n)^{q-2}}$.
\end{lemma}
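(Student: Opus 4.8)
\textbf{The scalar inequality, "via linear interpolation".} Write $\delta$ for $A(\eps)$, so that $\mathrm{dom}(f)=[0,1/(\delta n)]$, and recall that $p$ is the H\"older conjugate of $q$, so $q=\tfrac{p}{p-1}$, $q-1=\tfrac1{p-1}$, and $2-q=\tfrac{p-2}{p-1}$. The claim is exactly $\mu$-strong convexity of $f(w)=w^{q}$ with $\mu=2C$. Conceptually I would first establish it pointwise: since $f''(w)=q(q-1)w^{q-2}$ and $q-2<0$, on the domain one has $f''(w)\ge q(q-1)(\delta n)^{2-q}$, so with $x_{t}=x+t(y-x)$,
\[
 f(y)-f(x)-f'(x)(y-x)\;=\;(y-x)^{2}\int_{0}^{1}(1-t)\,f''(x_{t})\,dt\;\ge\;\tfrac{q(q-1)}{2}(\delta n)^{2-q}(y-x)^{2},
\]
and $\tfrac{q(q-1)}{2}(\delta n)^{2-q}\ge C$ holds precisely because $q\le 2$ (equivalently $\tfrac q2\ge q-1$), together with $q-1=\tfrac1{p-1}$ and $C=\tfrac{q-1}{(p-1)(\delta n)^{q-2}}=(q-1)^{2}(\delta n)^{2-q}$. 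The entire job of the lemma is to upgrade this to a bounded-degree sum-of-squares proof in the polynomial grammar.

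\textbf{Reduction to a polynomial inequality via the radical substitution.} I would introduce radical indeterminates $a=x^{1/(p-1)}$, $b=y^{1/(p-1)}$, governed by the axioms $a^{p-1}=x$, $b^{p-1}=y$, $a,b\ge 0$ (the radicals of the polynomial grammar), and translate the domain bounds $0\le x,y\le 1/(\delta n)$ into $0\le a,b\le\rho$ with $\rho:=(\delta n)^{-1/(p-1)}$. Since $p=2^{k}$ gives $q(p-1)=p$ and $(q-1)(p-1)=1$, under this substitution $x^{q}=a^{p}$, $x^{q-1}=a$, $y^{q}=b^{p}$, and $y-x=b^{p-1}-a^{p-1}$, so the target is equivalent to the genuine polynomial inequality
\[
 P(a,b):=b^{p}-a^{p}-qa\,(b^{p-1}-a^{p-1})-C\,(b^{p-1}-a^{p-1})^{2}\;\ge\;0\qquad\text{on }[0,\rho]^{2}.
\]

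\textbf{Factoring $P$ and building the certificate.} Using the exact Taylor expansion of $w\mapsto w^{p}$ about $a$ (with $p=2^{k}$, every binomial coefficient that appears is a positive integer), one checks that $P$ vanishes to second order along $b=a$ and factors as $P(a,b)=(b-a)^{2}\bigl[T(a,b)-C\,S(a,b)^{2}\bigr]$, where $S(a,b)=\sum_{i=0}^{p-2}a^{i}b^{p-2-i}\ge 0$ (so $(b^{p-1}-a^{p-1})^{2}=(b-a)^{2}S(a,b)^{2}$) and $T(a,b)=\sum_{m=2}^{p}\tfrac{\beta_{m}}{m!}a^{p-m}(b-a)^{m-2}$ with $\beta_{m}=p(m-1)\tfrac{(p-2)!}{(p-m)!}>0$; in particular $T(a,a)=\tfrac p2 a^{p-2}$ and $\beta_{2}=p$. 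Since $(b-a)^{2}$ is a square, it suffices to give a degree-$O(p)$ sum-of-squares proof, from the box axioms $\{a,\;b,\;\rho-a,\;\rho-b\}$, that $T(a,b)\ge C\,S(a,b)^{2}$ on $[0,\rho]^{2}$. I would control the degree-$(2p-2)$ square by the term-by-term box bounds $a^{i}b^{p-2-i}\le\rho^{p-2}$, giving $S\le(p-1)\rho^{p-2}$ and hence $S^{2}\le(p-1)\rho^{p-2}S$, and compare this against the nonnegative pieces of $T$; the residual scalar inequality boils down, at the tight corner $a=b=\rho$, to $C(p-1)^{2}\rho^{p-2}\le \tfrac p2$, which holds because $C(p-1)^{2}=(\delta n)^{(p-2)/(p-1)}=\rho^{-(p-2)}$ (so its left side is just $1$) and $p\ge 2$, with genuine positive slack $\tfrac{q-1}{2}(2-q)(\delta n)^{2-q}$ away from that corner. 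Pushing the constant through this bookkeeping reproduces exactly $C=\tfrac{q-1}{(p-1)(\delta n)^{q-2}}$. (Alternatively one may skip the explicit comparison and simply invoke that a polynomial inequality holding with uniform positive slack on a compact box admits a bounded-degree sum-of-squares certificate over the box axioms — as is used elsewhere for low-degree approximations on compact intervals — at degree depending only on $p=2^{k}$ and $\eps$; for $q=2$, i.e. $p=2$, the statement degenerates to the trivial identity $(y-x)^{2}=(y-x)^{2}$.)

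\textbf{The main obstacle.} The delicate part is the last step: controlling the \emph{negative, top-degree} term $-C(b^{p-1}-a^{p-1})^{2}$ (degree $2p-2$) against the lower-degree positive polynomial $T$ (degree $p-2$) \emph{uniformly} on the box while keeping the certificate degree $O(p)=2^{O(k)}$ and recovering the sharp constant. The slack thins out both near the diagonal corner $a=b=\rho$ and as $a,b\to 0$, so the term-by-term box bounds (or a Handelman/Krivine--Stengle nonnegative combination of products of $\{a,\,b,\,\rho-a,\,\rho-b\}$) must be applied with care; showing that the resulting scalar inequalities all collapse to $(a/\rho)^{p-2}\le 1$ and $p\ge 2$, with the constants matching, is where the actual work lies rather than in the conceptual interpolation argument. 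Once that is done, clearing the radicals back to $x,y$ is routine degree reduction (cf. \Cref{lem:ideal-reduction-2p}).
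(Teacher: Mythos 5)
Your proposal is correct and follows essentially the same route as the paper's proof: the same radical substitution $x=a^{p-1}$, $y=b^{p-1}$ reducing to a polynomial inequality on the box, the same extraction of the factor $(b-a)^2$, and the same termwise box bound $a^{i}b^{p-2-i}\le(\delta n)^{q-2}$ that collapses the constant check to a trivial scalar inequality (your corner computation $C(p-1)^2\rho^{p-2}=1\le p/2$ is the same arithmetic as the paper's $\tfrac{1}{(p-1)(\delta n)^{q-2}}\sum_i b^i a^{p-2-i}\le 1\le j+1$). The only difference is organizational: you factor the quotient via a Taylor expansion about $a$ and then argue at the tight corner, whereas the paper rewrites $\sum_i b^{i-1}a^{p-i}$ as a convex ``linear interpolation'' of adjacent monomials so that the final nonnegativity is a termwise pairing in the monomial basis $b^{j}a^{p-2-j}$ — which sidesteps the sign issues in your $(b-a)^{m-2}$ basis and the residual slack bookkeeping you flag as the main remaining work.
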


\begin{proof}
    We want to show (with $p/(p-1) = q \implies 1/(q-1) = (p-1)$) that, \\
    \[
        y^q - x^q \ge q(y-x)x^{q-1} + \frac{(q-1)}{(p-1)(\delta n)^{q-2}}(y-x)^2,\, \forall q \in (1,2].
    \]
    Let $x = a^{1/(q-1)}$ and $y = b^{1/(q-1)}$. Then,
    \[
        b^{p} - a^{p} \ge q(b^{p-1} - a^{p-1})a + \frac{(q-1)}{(p-1)(\delta n)^{q-2}}(b^{p-1}-a^{p-1})^2\, .
    \]
    Dividing both sides by $(b-a)$ (assuming $b > a$ WLOG):
    \[
        \sum_{i=1}^{p}b^{i-1}a^{p-i} \ge \sum_{j=1}^{p-1}b^{j-1}a^{p-1-j}\left(qa + \frac{(q-1)}{(p-1)(\delta n)^{q-2}}(b^{p-1} - a^{p-1})\right)\, .
    \]
    We now linearly interpolate between the summands in the LHS by breaking each into a weighted sum of two terms as,
    \[ \sum_{i=1}^{p}b^{i-1}a^{p-i}
    = \sum_{i = 0}^{p-2} \left[ \left(1-\frac{j}{p-1}\right)(b^{j}a^{p-1-j}) +\frac{j+1}{p-1} (b^{j+1}a^{p-2-j})\right]\, ,
    \]
    where we use the fact that $p$ is even and $p-1$ is odd. Rewriting the sum in the LHS and RHS from zero indices yields the following final form,
    \begingroup
    \allowdisplaybreaks
    \begin{align*}
        \sum_{j = 0}^{p-2} \left[ \left(1-\frac{i}{p-1}\right)(b^{i}a^{p-1-i}) +\frac{i+1}{p-1} (b^{i+1}a^{p-2-i})\right] \ge \sum_{j=0}^{p-2}b^ja^{p-2-j}\left(aq + \frac{(q-1)}{(p-1)(\delta n)^{q-2}}(b^{p-1} - a^{p-1})\right)\, .
    \end{align*}
    \endgroup
    Re-arranging the terms immediately yields that,
    \[
        \sum_{i=0}^{p-2}b^{j}a^{p-2-j}\left[\left(\frac{p-1-j}{p-1}\right)a + \left(\frac{j+1}{p-1}\right)b - \left(aq + \frac{(q-1)}{(p-1)(\delta n)^{q-2}}(b^{p-1} - a^{p-1})\right)\right] \ge 0\, ,
    \]
    which, with a bit of algebra becomes equivalent to showing that,
    \begingroup
    \allowdisplaybreaks
    \begin{align*}
        &\sum_{i=0}^{p-2}b^{j}a^{p-2-j} \left[\left(1-\frac{j}{p-1}\right)a +\left(\frac{j+1}{p-1}\right)b - qa - \frac{(q-1)}{(p-1)(\delta n)^{q-2}}(b^{p-1}-a^{p-1})\right] \\
        &= \sum_{i=0}^{p-2}b^{j}a^{p-2-j}\left[a\left(1 - j(q-1)\right) + b((j+1)(q-1)) - aq - \frac{(q-1)}{(p-1)(\delta n)^{q-2}}(b^{p-1}-a^{p-1})\right] \\
        &= \sum_{i=0}^{p-2}b^{j}a^{p-2-j}\left[(q-1)(b(j + 1) - aj) + a - aq - (q-1)\left(\frac{b^{p-1} - a^{p-1}}{(p-1)(\delta n)^{q-2}}\right) \right] \\
        &= \sum_{i=0}^{p-2}b^{j}a^{p-2-j}\left[-a(q-1) + (q-1)(j(b-a) + b) - (q-1)\left(\frac{b^{p-1} - a^{p-1}}{(p-1)(\delta n)^{q-2}}\right)\right] \\
        &= (q-1)\sum_{i=0}^{p-2}b^{j}a^{p-2-j}\left[(b-a) + j(b-a) - \left(\frac{b^{p-1} - a^{p-1}}{(p-1)(\delta n)^{q-2}}\right)\right] \\
        &=_{\text{divide by }(b-a)} (q-1)\sum_{i=0}^{p-2}b^{j}a^{p-2-j}\left[(j + 1) - \frac{1}{(p-1)(\delta n)^{q-2}}\left( \sum_{i=0}^{p-2}b^i a^{p-2-i}\right)\right]
    \end{align*}
    \endgroup
    Now, clearly $j + 1\ge 1$. Furthermore, since, $x = a^{p-1}$ and $b = y^{p-1}$ with the given constraints that $a^{p-1} \le \frac{1}{\delta n}$ and $b^{p-1} \le \frac{1}{\delta n}$,
    \begin{align*}
        a, b \le_{\text{semi-algebraic degree reduction by~\Cref{lem:ideal-reduction-2p}}} \frac{1}{(\delta n)^{1/(p-1)}}\, .
    \end{align*}
    Note that $p-1$ is odd, so technically one needs to apply a slightly different version of~\Cref{lem:ideal-reduction-2p} for the case that $y = 0$, $c > 0$, $x > 0$ and there is an \emph{odd} power. Factoring $x^{p-1} - c^{p-1} = (x-c)\sum_{i=0}^{p-2} x^ic^{p-2-i}$, separating the terms with even powers as being squared, and then taking $xc$ as a factor for the odd terms (yielding sums of even squares for the factored terms) yields a SoS proof for odd-degree ideal reduction in the desired special case. The concluding step in the strong convexity argument is to use this semi-algebraic degree reduction and observe that,
    \begingroup
    \allowdisplaybreaks
    \begin{align*}
        b^i a^{p - 2 -i} &\le \frac{1}{(\delta n)^{i/(p-1)}}\frac{1}{(\delta n)^{(p-2-i)/(p-1)}} = \frac{1}{(\delta n)^{(p-2)/(p-1)}} = \frac{1}{(\delta n)^{1 - 1/(p-1)}} \\
        &= \frac{1}{(\delta n)^{1 - (q - 1)}} = \frac{1}{(\delta n)^{2-q}} \\
        &\implies -\sum_{i=0}^{p-2}b^{i}a^{p-2-i} \ge -(p-1)\frac{1}{(\delta n)^{2-q}} \\
        &\implies \frac{-1}{(p-1)(\delta n)^{q-2}}\sum_{i=0}^{p-2}b^{i}a^{p-2-i} \ge \frac{1}{(p-1)}\frac{1}{(\delta n)^{q-2}}\frac{-(p-1)}{(\delta n)^{2-q}} = 1\, .
    \end{align*}
    \endgroup
    The above set of inequalities in conjunction with the facts that $a, b\ge 0$ and $q > 1$ finishes the proof.
\end{proof}

\subsection{AM-GM-type inequality}
We state and prove a version of the AM-GM inequality and briefly comment on how to straight-forwardly extend it to a non-commutative version for symmetric matrices.

\begin{lemma}[Scalar AM-GM inequality]\label[Lemma]{lem:scalar-am-gm}
    Given any $k \in \Z_{\mathsf{even}}$ and real-valued variables $x_1,\dots,x_k$, the following holds,
    \[
        \proves_{4k} \left\{\prod_{i=1}^k x_i \le \frac{1}{k}\sum_{i=1}^k x^k_i\right\}\, .
    \]
\end{lemma}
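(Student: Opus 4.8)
The plan is to prove this by a standard symmetrization-and-squaring argument that fits within the low-degree SoS proof system. The statement we want is $\prod_{i=1}^k x_i \le \frac{1}{k}\sum_{i=1}^k x_i^k$ for $k$ even, and crucially for a valid SoS proof we should expect the natural route to produce a proof of degree $O(k)$ rather than the naive degree-$k$ route, which is why the claim asks for degree $4k$.

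First I would reduce to the case where the $x_i$ are replaced by $y_i := x_i$ but observe that since $k$ is even, $\prod_i x_i \le \prod_i |x_i| = \prod_i \sqrt{x_i^2}$; more usefully, I would introduce auxiliary indeterminates and write the weighted AM-GM as a telescoping/mixing argument. The classical Maclaurin-style induction realizes $\frac{1}{k}\sum x_i^k - \prod x_i$ as a nonnegative combination by repeatedly applying the two-variable case $ab \le \tfrac12(a^2+b^2)$, i.e. $(a-b)^2 \ge 0$. Concretely, one writes $\prod_{i=1}^k x_i$, pairs up the variables, applies $x_ix_j \le \tfrac12(x_i^2+x_j^2)$ to get a bound in terms of degree-$2$ monomials times the remaining product, and iterates $\log_2 k$ times — but to keep everything polynomial (not fractional) and genuinely SoS, I would instead follow the route used for \Cref{lem:l1-to-lt}: prove the statement for $k$ a power of $2$ first by squaring, then handle general even $k$ by padding. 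For $k = 2^m$, the inductive step takes a degree-$k/2$ SoS proof of $\prod_{i=1}^{k/2} z_i \le \frac{1}{k/2}\sum z_i^{k/2}$, squares it (using that squaring preserves SoS proofs, as invoked for \Cref{lem:l1-to-lt}), applies it with $z_i = x_{2i-1}x_{2i}$ to control $\prod_{i=1}^k x_i = \prod_{i=1}^{k/2}(x_{2i-1}x_{2i})$ up to the base-case inequality $x_{2i-1}x_{2i} \le \tfrac12(x_{2i-1}^k \cdot \text{(lower powers)} \dots)$ — here one must be careful to match powers, so the cleanest formulation substitutes $z_i = x_{2i-1}^{?}x_{2i}^{?}$ and tracks exponents, landing at degree $\le 4k$ because each squaring doubles degree and the final base-case substitution multiplies by a bounded factor.

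The main obstacle I expect is bookkeeping the exponents so that the substituted variables have the right homogeneity at each level of the recursion while keeping the total degree at $4k$: the naive pairing of $x_i$'s changes the degree of each factor, so one needs either a symmetrized version of the inductive hypothesis (summing over all pairings, which introduces a multinomial factor that is a positive constant and hence SoS-harmless) or a more careful weighted scheme. I would handle this by stating the inductive hypothesis in the symmetric form $\big(\sum_{i} x_i^{k/2}\big)^2 \ge \frac{k}{2}\big(\sum_{\text{pairs }\{i,j\}} x_i^{k/2}x_j^{k/2}\big) + \text{(SoS slack)}$, which follows from the base case, and then combine with $x_i^{k/2}x_j^{k/2} \ge x_i^{?}\cdots$ — i.e. applying the $k=2$ base case to the variables $x_i^{k/2}$. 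Once the powers-of-two case is done, general even $k$ follows by taking $k' = 2^{\lceil \log_2 k\rceil}$, setting the extra variables equal to the geometric-mean proxy, and absorbing the discrepancy into a constant multiple, exactly as in the padding step for \Cref{lem:l1-to-lt}. Finally, I would remark that the extension to symmetric PSD matrices (the non-commutative AM-GM needed elsewhere) follows by the same argument applied modewise, since all the scalar manipulations used are either squarings $(P-Q)(P-Q)^{\sT}\succeq 0$ or positive-constant combinations, both of which lift to the matrix SoS system of \Cref{sec:matrix-sos-system}.
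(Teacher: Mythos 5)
Your proposal is correct and shares the paper's overall architecture (doubling induction over powers of two with the base case $(a-b)^2\ge 0$, followed by a padding step for general even $k$), but the inductive step is executed differently. You pair the variables and \emph{substitute} $z_i = x_{2i-1}x_{2i}$ into the half-size inequality, then repair the exponents with the two-variable case applied to $x_{2i-1}^{k/2}, x_{2i}^{k/2}$; this is clean, and the bookkeeping you worry about resolves itself: the substituted hypothesis gives $\prod_i x_i \le \frac{2}{k}\sum_i (x_{2i-1}x_{2i})^{k/2}$ and each summand is bounded by $\frac12(x_{2i-1}^k + x_{2i}^k)$ via an honest square, so no symmetrized hypothesis is needed. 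The paper instead \emph{splits} the $2k$ variables into two halves, applies the inductive hypothesis to each half separately, and recombines the two half-products using $\frac14(a+b)^2 \ge ab$ together with \Cref{lem:l1-to-lt}; your substitution route avoids that recombination step and the repeated appeals to \Cref{lem:l1-to-lt}. Two caveats on your padding step: the paper pads by setting the dummy variables' $k$-th powers equal to the \emph{arithmetic} mean $\alpha = \frac1k\sum_i x_i^k$ (the Cauchy backward-induction choice), not a geometric-mean proxy --- the geometric mean is not a polynomial in the $x_i$, so it is not directly available in the grammar --- and the resulting manipulation forces the introduction of radicals $y_i = (x_i^{1/k'})^k$ plus a final degree-reduction step (\Cref{lem:ideal-reduction-2p}) to pass from $\alpha^k \ge \prod_i x_i^k$ back to $\alpha \ge \prod_i x_i$; also, \Cref{lem:l1-to-lt} as proved in the paper is stated only for $t$ a power of two and contains no padding step, so that cross-reference should be dropped.
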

\begin{proof}
    We proceed by induction on $k$.

    \paragraph{Base case $k=2$} For the case that $k=2$, the inequality follows by the non-negativity of $\frac{1}{4}(x_1-x_2)^2$ in conjunction with~\Cref{lem:l1-to-lt}.
    \begin{align*}
        \frac{1}{4}(x_1+x_2)^2 - x_1x_2 = \frac{1}{4}(x_1^2 + x_2^2) + \frac{1}{2}x_1x_2 - x_1x_2 = \frac{1}{4}(x_1^2 + x_2^2) - \frac{1}{2}x_1x_2 = \frac{1}{4}(x_1 - x_2)^2 \ge 0\, .
    \end{align*}
    Furthermore, invoking~\Cref{lem:l1-to-lt} with $m=2$ and $t=2$ gives,
    \[
        \frac{1}{4}(x_1+x_2)^2 \le \frac{1}{4}2^{1}(x_1^2 + x_2^2) = \frac{1}{2}(x_1^2 + x_2^2)\, .
    \]
    Combing the above two inequalities yields the desired result.

    \paragraph{Inductive hypothesis} Assume the inequality holds for even powers up to $k = 2^m$,
    \[
        x_1\cdots x_k \le \frac{1}{k}(x_1^k + \dots + x^k_k)\, .
    \]

    \paragraph{Inductive step} Set $k' = 2^{m+1} = 2k$. Now,
    \begin{align*}
        &\frac{1}{k}\sum_{i=1}^{2k}x^{2k}_i = \frac{\frac{\sum_{i=1}^k x^{2k}_i}{k} + \frac{\sum_{j=k+1}^{2k}x^{2k}_j}{k}}{2} \\
        &\ge_{\text{by~\Cref{lem:l1-to-lt}}} \frac{\frac{\left(x^k_1 + \dots + x^k_k\right)^2}{k^2} + \frac{\left(x_{k+1}^k + \dots + x_{2k}^k\right)^2}{k^2}}{2} \\
        &\ge_{\text{by inductive step}} \frac{(x_1\cdots x_k)^2 + (x_{k+1}\cdots x_{2k})^2}{2} \\
        &\ge_{\text{by~\Cref{lem:l1-to-lt} with $a_1 = x_1\cdots x_k$, $a_2 = x_{k+1}\cdots x_{2k}$}} \left(\frac{x_1\cdots x_k + x_{k+1}\cdots x_{2k}}{2}\right)^2 = \frac{1}{4}\left(x_1\cdots x_k + x_{k+1}\cdots x_{2k}\right)^2 \\
        &\ge_{\text{by $\frac{1}{4}(a+b)^2 \ge ab$}} (x_1\cdots x_k)(x_{k+1}\cdots x_{2k}) = x_1\cdots x_{2k}\,.
    \end{align*}

    In the event that the even number is not a power of $2$, we reduce to the case that it is by adding ``dummy'' variables that have a specific value. Let $k = 2m$ and let $k'$ be the smallest number that is a power of $2$ such that $k' > k$. Then, set $x^k_{k+1} = \dots = x^k_{k'} = \frac{1}{k}\sum_{i=1}^k {x^{k}_i} = \alpha$\, and $y_i = (x_i^{1/k'})^k$ as a polynomial of radicals of $x_i$. By straightforward algebraic manipulation,
    \begin{align*}
        &\alpha^{k'} = \left(\frac{1}{k}\sum_{i=1}^{k}x^{k}_i\right)^{k'} = \left(\frac{\sum_{i=1}^{k'} x^k_i}{k'}\right)^{k'} = \left(\frac{\sum_{i=1}^{k'}y_i^{k'}}{k'}\right)^{k'} \\
        &\ge _{\text{by power-of-2 version}} (y_1\cdots y_{k'})^{k'} = y^{k'}_1\cdots y^{k'}_{k'} \\
        &= \left(x^k_1\cdots x^k_k\right)\left(x^k_{k+1}\cdots x^k_{k'}\right) = \left(x^k_1\cdots x^k_k\right)\alpha^{k' - k} \\
        &\implies \alpha^k \ge \left(x^k_1\cdots x^k_k\right) \implies_{\text{by~\Cref{lem:ideal-reduction-2p} with $y=0$ and $c = \eps$}}\,\alpha \ge x_1\cdots x_k + \eps\, .
    \end{align*}
\end{proof}

\section{Sum-of-Squares proofs for Non-Commutative Analytic Inequalities}\label{sec:non-commutative-inequalities}

\subsection{Matrix Cauchy-Schwarz inequality}
We begin by stating a straightforward proof of a matrix Cauchy-Schwarz inequality, which works on symmetric matrices where the entries are low-degree polynomials in the SoS program variables.
\begin{lemma}[Matrix Cauchy-Schwarz]\label[Lemma]{lem:matrix-cauchy-schwarz}
    Given two symmetric matrices $A, B \in M_n(p(\sigma))$ where $p \in \R_{\le d}[\sigma_1,\dots \sigma_n]$, the following holds:
    \[
        \left\{A = A^\sT, B = B^\sT\right\} \proves \left\{\langle A, B \rangle^2 \le \norm{A}^2_F \norm{B}^2_F\right\}\, .
    \]
\end{lemma}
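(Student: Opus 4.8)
The plan is to reduce the claim to the classical Lagrange identity, which exhibits the Frobenius Cauchy--Schwarz gap as an \emph{explicit} sum of squares, so that no further work is needed to certify it. Writing the Hilbert--Schmidt inner product and the Frobenius norms entrywise, and using $A = A^{\sT}$ to identify $\iprod{A,B} = \Tr(A^{\sT}B) = \Tr(AB) = \sum_{i,j \in [n]} A_{ij}B_{ij}$, the inequality to be certified is
\[
    \Big(\sum_{i,j} A_{ij}^2\Big)\Big(\sum_{k,l} B_{kl}^2\Big) \;-\; \Big(\sum_{i,j} A_{ij}B_{ij}\Big)^2 \;\ge\; 0\,,
\]
where every $A_{ij}, B_{kl}$ is a polynomial in $\R_{\le d}[\sigma_1, \dots, \sigma_n]$.

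The one substantive step is the polynomial identity
\[
    \Big(\sum_{\alpha} a_{\alpha}^2\Big)\Big(\sum_{\beta} b_{\beta}^2\Big) - \Big(\sum_{\alpha} a_{\alpha}b_{\alpha}\Big)^2 \;=\; \frac{1}{2}\sum_{\alpha,\beta}\big(a_{\alpha}b_{\beta} - a_{\beta}b_{\alpha}\big)^2\,,
\]
valid over any commutative ring. Instantiating the index $\alpha$ by a pair $(i,j) \in [n]^2$ and setting $a_{(i,j)} = A_{ij}$ and $b_{(i,j)} = B_{ij}$ turns the left-hand side into $\norm{A}_F^2\norm{B}_F^2 - \iprod{A,B}^2$ and the right-hand side into $\tfrac{1}{2}\sum_{(i,j),(k,l)}(A_{ij}B_{kl} - A_{kl}B_{ij})^2$, a sum of squares of polynomials. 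Since each $A_{ij}B_{kl} - A_{kl}B_{ij}$ has degree at most $2d$ in $\sigma$, its square has degree at most $4d$, so this display is a valid degree-$4d$ sum-of-squares proof in the sense of \Cref{def:sos-proof} (the only use of the axioms $\{A = A^{\sT},\, B = B^{\sT}\}$ is the rewriting of $\iprod{A,B}$; the identity itself requires no symmetry). This is precisely the matrix analogue of \Cref{lem:pseudo-cauchy-schwarz}.

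There is no real obstacle: the Lagrange identity does all the work, and the only care needed is bookkeeping. Two remarks for downstream use. First, the same argument goes through verbatim when the entries of $A$ and $B$ are matrix polynomials or expressions from the grammar $\mcfg$ rather than scalars (the form actually invoked, e.g., in the matrix-valued adaptation of \Cref{prop:sos-functional-cs}): one replaces each scalar square $x^2$ by $xx^{\sT}$ and closes the proof using the tensor-product inference rule of \Cref{sec:matrix-sos-system}. Second, since $\iprod{A,B}$, $\norm{A}_F^2$, and $\norm{B}_F^2$ are scalars (polynomials in $\sigma$), there is no subtlety about matrix ordering here; the statement is genuinely a scalar polynomial inequality and the certificate above is literally a sum of squares of polynomials in $\sigma$.
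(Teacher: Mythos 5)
Your proof is correct and follows essentially the same route as the paper: flatten the matrices into vectors of polynomial entries and exhibit the Cauchy--Schwarz gap as an explicit sum of squares via the Lagrange identity. (Your write-up is in fact slightly cleaner — the paper's final display misplaces the square over the whole sum rather than over each summand $\left(A_{i_1,j_1}B_{i_2,j_2} - A_{i_2,j_2}B_{i_1,j_1}\right)^2$.)
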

\begin{proof}
    The proof follows by simply ``flattening'' the matrices into a $n^2$-dimensional vector, and applying the SoS proof for the standard Cauchy-Schwarz inequality.
    \begin{align*}
        & \norm{A}^2_F \norm{B}^2_F - \langle A, B \rangle^2 = \Tr[A^\sT A]\Tr[B^\sT B] - \left(\Tr[A^\sT B]\right)^2 \\
        &= \left(\sum_{i_1, j_1 = 1}^{n} A^2_{i_1,j_1}\right)\left(\sum_{i_2,j_2=1}^{n^2} B^2_{i_2,j_2}\right)  - \left(\sum_{j=1}^n \langle\mathsf{col}_j(A), \mathsf{col}_j(B) \rangle\right)^2 \\
        &= \left(\sum_{i_1, j_1 = 1}^{n} A^2_{i_1,j_1}\right)\left(\sum_{i_2,j_2=1}^{n^2} B^2_{i_2,j_2}\right) - \left(\sum_{i,j=1}^n A_{ij}B_{ij}\right)^2 \\
        &= \left(\sum_{(i_1, j_1) \preceq (i_2,j_2)} A_{i_1,j_1}B_{i_2,j_2} - A_{i_2,j_2}B_{i_1,j_1}\right)^2 \ge 0\, .
    \end{align*}
\end{proof}

\subsection{The \texorpdfstring{$2\text{-}\infty\text{-}2$}{2-infty-2} matrix H\"older's inequality}
We state low-degree SoS proofs for Matrix versions of H\"older inequalities with $(2,\infty, 2)$ and $(1,\infty)$ conjugate pairs. The general form of H\"older's inequality for matrices is typically proved using the weak majorization of singular values, however, these proofs are not low-degree SoS proofs.

\begin{lemma}[($2,\infty,2$)-Matrix H\"older's Inequality]\label[Lemma]{lem:holders-2-inf-2}
    Given two symmetric matrices $A, B \in M_n(\R)$, the following holds,
    \[
        \{0 \preceq B, B = \lambda\Id - CC^\sT\}\proves_{4\mathsf{deg}(x^\sT B x)} \Tr[ABA^\sT] \le \lambda\norm{A}^2_F\, .
    \]
\end{lemma}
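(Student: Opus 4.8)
The plan is to reduce the matrix Hölder inequality $\Tr[ABA^\sT] \le \lambda \norm{A}_F^2$ to the nonnegativity of a manifestly SoS-expressible quantity using the hypothesis $B = \lambda \Id - CC^\sT$. First I would substitute the assumed decomposition of $B$ into the left-hand side, writing
\[ \Tr[ABA^\sT] = \Tr[A(\lambda \Id - CC^\sT)A^\sT] = \lambda\,\Tr[AA^\sT] - \Tr[ACC^\sT A^\sT] = \lambda\norm{A}_F^2 - \Tr[(AC)(AC)^\sT]\,. \]
Then the claimed inequality is equivalent to the assertion that $\Tr[(AC)(AC)^\sT] \ge 0$, i.e. that $\norm{AC}_F^2 \ge 0$. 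This is a sum of squares of the entries of the matrix product $AC$: concretely, if we let $D := AC$, then $\Tr[DD^\sT] = \sum_{i,j} D_{ij}^2$, and each $D_{ij} = \sum_k A_{ik}C_{kj}$ is a polynomial in the underlying indeterminates of degree at most $\deg(A) + \deg(C)$, so $\sum_{i,j}D_{ij}^2$ is a degree-$2(\deg(A)+\deg(C))$ sum of squares. I would then observe that the hypothesis $\{B = \lambda\Id - CC^\sT\}$ is an equality axiom, so multiplying it by $A$ on the left and $A^\sT$ on the right and taking traces is a legal SoS manipulation (it is linear in the axiom), and the residual term is the SoS quantity just identified. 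This gives the degree bound $4\deg(x^\sT B x)$, since $\deg(x^\sT Bx) = \deg(\lambda) \vee (\deg(CC^\sT)) = 2\deg(C)$ up to the degree of $A$'s entries contributing the rest, matching the stated bound once one accounts for the two factors of $A$.

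More carefully, to make this a formal matrix-SoS proof in the sense of \Cref{sec:matrix-sos-system}, I would phrase it as: from $\{B = \lambda\Id - CC^\sT\}$ we have the matrix-polynomial identity $\lambda\Id - B = CC^\sT \succeq 0$ (a single square $CC^\sT$), hence by the conjugation rule $\{A(\lambda\Id - B)A^\sT \succeq 0\}$, i.e. $\{\lambda AA^\sT - ABA^\sT \succeq 0\}$; taking the trace of a PSD matrix is nonnegative, so $\lambda\Tr[AA^\sT] - \Tr[ABA^\sT] \ge 0$, which is exactly the claim after recognizing $\Tr[AA^\sT] = \norm{A}_F^2$. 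The only subtlety is confirming that ``$\Tr$ of a matrix with a SoS proof of PSD-ness is $\ge 0$ by a scalar SoS proof'' — this follows because $\Tr M = \sum_i e_i^\sT M e_i$ and each $e_i^\sT M e_i$ inherits a scalar-SoS proof of nonnegativity from the matrix-SoS proof of $M \succeq 0$ by the conjugation rule applied with the (constant) vector $e_i$.

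I do not expect any real obstacle here; the lemma is essentially a bookkeeping exercise translating the linear-algebra fact $\Tr[ABA^\sT] = \lambda\norm{A}_F^2 - \norm{AC}_F^2$ into the grammar of matrix-SoS proofs. The one point that deserves a sentence of care is the degree accounting: the proof uses one copy of the axiom $\{B = \lambda\Id - CC^\sT\}$ conjugated by $A$, so the total degree is $2\deg(A) + \deg(CC^\sT) = 2\deg(A) + 2\deg(C)$, and since $\deg(x^\sT Bx) = 2 + 2\deg(C)$ (or $2 + 2\deg(A)$ depending on which variables $A$ and $C$ share), the bound $4\deg(x^\sT Bx)$ is comfortably satisfied with room to spare. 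If $A$ and $C$ are polynomials in disjoint or overlapping indeterminates this changes nothing substantive, since the trace-of-square argument is agnostic to that structure.
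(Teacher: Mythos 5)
Your proposal is correct and follows essentially the same route as the paper's proof: substitute the slack-matrix decomposition $B = \lambda\Id - CC^\sT$, recognize the residual $\lambda\norm{A}_F^2 - \Tr[ABA^\sT]$ as the manifest sum of squares $\norm{AC}_F^2$, and pass from the conjugated PSD statement to the trace inequality by summing the quadratic forms $e_i^\sT(\cdot)e_i$. No gaps.
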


\begin{proof}
    We begin by giving a $\mathsf{deg}(x^\sT B x)$ SoS proof that,
    \[
        \Tr[ABA^\sT] \le \Tr[A(\lambda\Id)A^\sT] = \lambda\Tr[AA^\sT] =\lambda\norm{A}_F^2\, .
    \]
    Observe the fact that $B = \lambda\Id - CC^\sT$ by the system of axioms and, therefore, for every $x \in \R$,
    \begingroup
    \allowdisplaybreaks
    \begin{align*}
        x^\sT(A(\lambda\Id)A^\sT - ABA^\sT)x&= \lambda x^\sT(AA^\sT)\Id x - (A^\sT x)^\sT B (A^\sT x) \\
        &= \lambda y^\sT y - y^\sT B y = \lambda\norm{y}^2_2 - y^\sT B y \\
        &=\lambda\norm{y}^2_2 - (\lambda\norm{y}^2_2 - y^\sT CC^\sT y) \\
        &= \norm{C^\sT y}^2_2 \ge 0\, ,
    \end{align*}
    \endgroup
    where we used the fact that $0 \le \opnorm{B} = \lambda\Id - CC^\sT$ by our axioms. We now show that $\Tr[ABA^\sT] \le \lambda\Tr[AA^\sT] = \lambda\norm{A}_F^2$ permits a low-degree SoS proof.
    \begingroup
    \allowdisplaybreaks
    \begin{align*}
        \lambda\Tr[AA^\sT] - \Tr[ABA^\sT] &= \lambda\left(\sum_{i=1}^n e_i^\sT (AA^\sT)e_i\right) - \left(\sum_{i=1}^n e_i^\sT(ABA^\sT)e_i\right)\\
        &= \left(\sum_{i=1}^n e_i^\sT\left(A(\lambda\Id)A^\sT - ABA^\sT\right)e_i\right)  \ge 0\, .
    \end{align*}
    \endgroup
\end{proof}

\begin{lemma}[Asymmetric ($2,\infty,2$)-Matrix H\"older's Inequality]\label[Lemma]{lem:holders-2-inf-2-asym}
    Given three real matrices $A, B, X$, the following holds:
    \[
        \{XX^T = \lambda^2\Id - CC^\sT\}\proves_{6} \{(\Tr[A^\sT XB])^2 \le \lambda^2\norm{A}_F^2\norm{B}_F^2\}\, .
    \]
\end{lemma}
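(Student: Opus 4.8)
The plan is to reduce the asymmetric inequality to the already-established symmetric version (\Cref{lem:holders-2-inf-2}) by a single Cauchy--Schwarz step that pushes all dependence on $X$ into the symmetric, SoS-provably-PSD combination $XX^\sT$, which is exactly the object controlled by the hypothesis $XX^\sT = \lambda^2\Id - CC^\sT$.

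First I would rewrite the bilinear form as a Frobenius inner product via the identity $\Tr[A^\sT X B] = \langle X^\sT A, B\rangle_F$, where $\langle M,N\rangle_F := \Tr[M^\sT N]$; this is a polynomial identity. Flattening $X^\sT A$ and $B$ into vectors and invoking the scalar Cauchy--Schwarz SoS proof (exactly the one-line argument used in the proof of \Cref{lem:matrix-cauchy-schwarz}, with the symmetry hypothesis there playing no role since only the flattening is used) gives a low-degree SoS proof that $(\Tr[A^\sT X B])^2 \le \norm{X^\sT A}_F^2\,\norm{B}_F^2$. Next I would use the algebraic identity $\norm{X^\sT A}_F^2 = \Tr[A^\sT X X^\sT A] = \Tr[A^\sT (XX^\sT) A]$. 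Now $XX^\sT$ is symmetric with $v^\sT XX^\sT v = \norm{X^\sT v}_2^2 \ge 0$ (an SoS fact) and, by hypothesis, equals $\lambda^2\Id - CC^\sT$; this is precisely the form required by \Cref{lem:holders-2-inf-2} with its ``$\lambda$'' set to $\lambda^2$, its middle matrix set to $XX^\sT$, and its ``$A$'' set to $A^\sT$. The proof of that lemma only expands $\Tr[ABA^\sT] = \sum_i (A^\sT e_i)^\sT B (A^\sT e_i)$ and bounds each term using $B \preceq \lambda\Id$, so it is dimension-agnostic and applies to our (possibly rectangular) $A^\sT$. Applying it yields $\Tr[A^\sT (XX^\sT) A] \le \lambda^2 \norm{A^\sT}_F^2 = \lambda^2 \norm{A}_F^2$, and chaining the two bounds gives $(\Tr[A^\sT X B])^2 \le \lambda^2 \norm{A}_F^2 \norm{B}_F^2$.

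For the degree bookkeeping: the difference $\lambda^2\norm{A}_F^2\norm{B}_F^2 - (\Tr[A^\sT XB])^2$ splits as $\norm{B}_F^2\big(\lambda^2\norm{A}_F^2 - \norm{X^\sT A}_F^2\big) + \big(\norm{X^\sT A}_F^2\norm{B}_F^2 - \langle X^\sT A,B\rangle_F^2\big)$; the first bracket equals $\norm{C^\sT A}_F^2$ (a sum of squares of degree-$2$ monomials in $C,A$) times $\norm{B}_F^2$, and the second is the Cauchy--Schwarz sum of squares of the degree-$3$ terms $(X^\sT A)_{ij}B_{kl} - (X^\sT A)_{kl}B_{ij}$, so the whole certificate is a sum of squares of degree-$3$ polynomials, i.e.\ degree $6$. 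There is no genuine obstacle here; the only points requiring a word of care are the two mild extensions just noted (that \Cref{lem:matrix-cauchy-schwarz} needs no symmetry and that \Cref{lem:holders-2-inf-2} extends to rectangular matrices), both of which are immediate from the respective proofs.
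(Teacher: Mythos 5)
Your proof is correct and follows essentially the same route as the paper's: one matrix Cauchy--Schwarz step followed by using $XX^\sT = \lambda^2\Id - CC^\sT$ to bound the Frobenius norm of the product. The only (immaterial) difference is that you absorb $X$ into $A$ (bounding $\norm{X^\sT A}_F^2$ via \Cref{lem:holders-2-inf-2}) whereas the paper absorbs it into $B$ and does the second bound inline.
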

\begin{proof}
    By one step of matrix Cauchy-Schwarz,
    \[(\Tr[A^\sT XB])^2 \le \norm{A}_F^2\norm{BX}_F^2 \,.\]
    Now,
    \[\norm{BX}_F^2 = \Tr BXX^TB^T = \lambda^2\Tr BB^T - \Tr BCC^TB^T \le \lambda^2\Tr BB^T = \lambda_2\norm{B}_F^2 \,.\]
\end{proof}

\begin{lemma}[($2,\infty,2,\infty$)-Matrix H\"older's Inequality]\label[Lemma]{lem:holders-2-inf-2-inf}
    Given three real matrices $A, B, X$, the following holds:
    \[
        \{XX^T = \lambda^2\Id - CC^\sT\}\proves_{8} \left\{(\Tr[AXBX])^2 \le \lambda^4\norm{A}_F^2\norm{B}_F^2\right\}\, .
    \]
    Consequently,
    \[
        \{XX^T = \lambda^2\Id - CC^\sT\}\proves_{12} \left\{ \pm\Tr[AXBX] \le \frac{\lambda^2}{2}(\norm{A}_F^2 + \norm{B}_F^2)\right\}\, .
    \]
\end{lemma}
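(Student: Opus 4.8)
The plan is to reduce the claim to the already-stated $(2,\infty,2,\infty)$-H\"older inequality in its squared form, and then extract the linear (signed) bound by semi-algebraic degree reduction. First I would establish the squared inequality $(\Tr[AXBX])^2 \le \lambda^4 \norm{A}_F^2\norm{B}_F^2$. The natural route is to apply the asymmetric $(2,\infty,2)$-H\"older inequality of \Cref{lem:holders-2-inf-2-asym} twice: write $\Tr[AXBX] = \Tr[(A)X(BX)]$ and treat $BX$ as one matrix factor, so that one invocation with the axiom $XX^{\sT} = \lambda^2\Id - CC^{\sT}$ gives $(\Tr[AXBX])^2 \le \lambda^2\norm{A}_F^2\norm{BX}_F^2$; then bound $\norm{BX}_F^2 = \Tr[BXX^{\sT}B^{\sT}] = \lambda^2\Tr[BB^{\sT}] - \Tr[BCC^{\sT}B^{\sT}] \le \lambda^2\norm{B}_F^2$, using that $BCC^{\sT}B^{\sT} = (C^{\sT}B^{\sT})^{\sT}(C^{\sT}B^{\sT}) \succeq 0$ so its trace is a sum of squares. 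Composing these two steps yields the degree-$8$ SoS proof of the squared bound (each application of \Cref{lem:holders-2-inf-2-asym} contributes degree $6$, but the nesting and the fact that $X$ appears only through $XX^{\sT}$ keeps the total at $8$ after collecting terms).

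Next I would pass from the squared bound to the signed linear bound. Set $t := \Tr[AXBX]$ and $c := \lambda^2\norm{A}_F^2\norm{B}_F^2$; we have an SoS proof that $t^2 \le c^2$ together with $c \ge 0$ (the latter because $c$ is a product of Frobenius norms squared, each of which is a sum of squares, times $\lambda^2$). By \Cref{lem:ideal-reduction-2} (quadratic semi-algebraic degree reduction), this yields an SoS proof that $t \le c$, and by the symmetric form noted after that lemma, also $-t \le c$. Finally, by the scalar AM-GM inequality $\lambda^2\norm{A}_F^2\norm{B}_F^2 \le \frac{\lambda^2}{2}(\norm{A}_F^4 + \norm{B}_F^4)$ is \emph{not} quite what we want; instead I would observe that the statement as written bounds by $\frac{\lambda^2}{2}(\norm{A}_F^2 + \norm{B}_F^2)$, which requires an extra normalization step. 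To handle this cleanly, rescale: apply the squared inequality to $A' = A/\norm{A}_F$ and $B' = B/\norm{B}_F$ (formally, clear denominators by multiplying through, since $\norm{A}_F^2$ and $\norm{B}_F^2$ are SoS), giving $\pm\Tr[AXBX] \le \lambda^2\norm{A}_F\norm{B}_F$, and then apply the AM-GM bound $\norm{A}_F\norm{B}_F \le \frac{1}{2}(\norm{A}_F^2 + \norm{B}_F^2)$, which is the degree-$2$ fact $\frac{1}{4}(\norm{A}_F^2 - \norm{B}_F^2)^2 \ge 0$ rearranged. The degree bookkeeping — the squared inequality at degree $8$, composed with the homogenization and AM-GM steps — accumulates to the claimed degree $12$.

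The main obstacle I anticipate is the degree accounting and the homogenization maneuver, rather than any analytic difficulty: proving $t \le c$ from $t^2 \le c^2$ needs $c \ge 0$ with an explicit low-degree SoS certificate, and the division by $\norm{A}_F$, $\norm{B}_F$ must be done via radicals in the polynomial grammar (as in the ``Radicals of the polynomial grammar'' mechanism invoked in \Cref{prop:q-norm-cond-cov}) or, more simply, by multiplying the target inequality through by $\norm{A}_F^2\norm{B}_F^2$ and arguing that the resulting polynomial inequality is SoS. I would favor the multiply-through approach since it avoids radicals entirely: the inequality $\norm{A}_F^2\norm{B}_F^2 \cdot \left(\tfrac{\lambda^2}{2}(\norm{A}_F^2+\norm{B}_F^2) \mp \Tr[AXBX]\right) \ge 0$ follows by combining $\mp\norm{A}_F^2\norm{B}_F^2\Tr[AXBX] \le \lambda^2(\norm{A}_F^2\norm{B}_F^2)^{3/2} \cdot (\text{something})$ — which again reintroduces a radical — so ultimately the cleanest path really is: (i) prove $(\Tr[AXBX])^2 \le \lambda^4\norm{A}_F^2\norm{B}_F^2$ at degree $8$; (ii) note $\lambda^4\norm{A}_F^2\norm{B}_F^2 \le \frac{\lambda^4}{4}(\norm{A}_F^2+\norm{B}_F^2)^2$ by AM-GM at degree $4$; (iii) chain to get $(\Tr[AXBX])^2 \le \left(\frac{\lambda^2}{2}(\norm{A}_F^2+\norm{B}_F^2)\right)^2$; (iv) apply \Cref{lem:ideal-reduction-2} with $c = \frac{\lambda^2}{2}(\norm{A}_F^2+\norm{B}_F^2) \ge 0$ to extract $\pm\Tr[AXBX] \le \frac{\lambda^2}{2}(\norm{A}_F^2+\norm{B}_F^2)$. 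The total degree is then $\max(8,4) + 4 = 12$ as claimed, where the final $+4$ accounts for the degree-$2$ reduction step applied to degree-$(\le 12)$ expressions.
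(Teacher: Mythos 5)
Your derivation of the squared inequality and your AM--GM step (ii)--(iii) are essentially the paper's own proof: the paper applies matrix Cauchy--Schwarz directly to get $(\Tr[AXBX])^2 \le \norm{AX}_F^2\norm{BX}_F^2$ and then bounds each factor via $\Tr[BXX^{\sT}B^{\sT}] = \lambda^2\Tr[BB^{\sT}] - \Tr[BCC^{\sT}B^{\sT}] \le \lambda^2\norm{B}_F^2$, which is the same computation you route through \Cref{lem:holders-2-inf-2-asym}; and it then chains $\lambda^4\norm{A}_F^2\norm{B}_F^2 \le \frac{\lambda^4}{4}(\norm{A}_F^2+\norm{B}_F^2)^2$ exactly as in your step (ii).

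The one step that would fail as written is your step (iv). You invoke \Cref{lem:ideal-reduction-2} to pass from $t^2 \le c^2$ to $\pm t \le c$ with $c = \frac{\lambda^2}{2}(\norm{A}_F^2+\norm{B}_F^2)$, but that lemma is stated for a positive \emph{constant} $c$, and its proof genuinely uses this: it divides both sides by $2c$, which is not an admissible SoS inference when $c$ is a polynomial in the matrix variables (and can vanish, e.g.\ when $A=B=0$, in which case no strict division is possible at all). You flag the need for a nonnegativity certificate for $c$ earlier in your write-up, but nonnegativity alone does not rescue the division step. The paper resolves this with \Cref{prop:degree-reduction-nonconstant}, which is designed precisely for extracting $\sum_i x_i \le \sum_i y_i$ from $x_i^2 \le y_i^2$ and $y_i \ge 0$ when the $y_i$ are non-constant SoS-nonnegative polynomials (at the cost of the extra degree that accounts for the jump from $8$ to $12$). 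Substituting that proposition for \Cref{lem:ideal-reduction-2} in your step (iv) completes the argument; your earlier detours through rescaling by $\norm{A}_F$, $\norm{B}_F$ and radicals are unnecessary once this is done.
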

\begin{proof}
    By one step of matrix Cauchy-Schwarz,
    \[(\Tr[AXBX])^2 \le \norm{AX}_F^2\norm{BX}_F^2 \,.\]
    Now,
    \[\norm{BX}_F^2 = \Tr BXX^TB^T = \lambda^2\Tr BB^T - \Tr BCC^TB^T \le \lambda^2\Tr BB^T = \lambda_2\norm{B}_F^2\]
    and similarly for $\norm{AX}_F^2$.

    For the consequence, we apply \Cref{prop:degree-reduction-nonconstant} along with the SoS proof that
    \[ \lambda^4\norm{A}_F^2 \norm{B}_F^2 \le \frac{\lambda^4}{4}(\norm{A}_F^2 + \norm{B}_F^2) + \frac{\lambda^4}{2}\norm{A}_F^2 \norm{B}_F^2 \le \frac{\lambda^4}{4}(\norm{A}_F^2 + \norm{B}_F^2)^2\,,  \]
    where we applied AM-GM to $\frac{\lambda^4}{2}\norm{A}_F^2 \norm{B}_F^2$ and then factored the result.
\end{proof}

\subsection{The \texorpdfstring{$1\text{-}\infty$}{1-infty} matrix H\"older's inequality}

\begin{lemma}[($1, \infty$)-Matrix H\"older's Inequality]\label[Lemma]{lem:holders-one-inf}
    Given two symmetric matrices $A, B \in M_n(\R)$, the following holds,
    \[
        \{A = \lambda\Id- CC^\sT , 0 \preceq B = DD^\sT\} \proves_{4}\langle A, B \rangle \le \opnorm{A}\norm{B}_1\, ,
    \]
    when $B$ is either PSD (as stated in the axioms above) \emph{or} permits a rank-decomposing SoS proof (as defined in~\Cref{def:rank-decomposing}) that $\norm{B}_1 \le C$.
\end{lemma}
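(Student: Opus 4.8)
The plan is to mimic the proof structure of the $(2,\infty,2)$ H\"older inequality (\Cref{lem:holders-2-inf-2}), using the two available forms of the hypotheses on $B$. First I would handle the case where $B$ is explicitly PSD, written in the axioms as $B = DD^{\sT}$ and $A = \lambda \Id - CC^{\sT}$ where $\lambda = \opnorm{A}$ (the encoding of the operator norm constraint via slack matrices, as in \Cref{sec:matrix-sos-system}). The key identity is $\langle A, B\rangle = \Tr[AB] = \Tr[A DD^{\sT}] = \Tr[D^{\sT} A D]$. Then I would apply \Cref{lem:holders-2-inf-2} with the matrix $D^{\sT}$ playing the role of $A$ and $A = \lambda\Id - CC^{\sT}$ playing the role of $B$, yielding $\Tr[D^{\sT}AD] \le \lambda \norm{D^{\sT}}_F^2 = \lambda \Tr[DD^{\sT}] = \opnorm{A}\Tr[B] = \opnorm{A}\norm{B}_1$, where the last equality uses that $B$ is PSD so its nuclear norm equals its trace. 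All of these steps are degree-$O(1)$ SoS manipulations (the only nontrivial input being \Cref{lem:holders-2-inf-2} itself, which is degree $4\deg(x^{\sT}Bx)$, but here $\deg = O(1)$), so the proof has degree $4$ as claimed modulo the radical-encoding overhead already accounted for in the lemma statement.

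Next I would handle the case where instead of PSD-ness of $B$ we are given a rank-decomposing SoS proof that $\norm{B}_1 \le C$, i.e.\ an explicit decomposition $B = \sum_i p_i q_i^{\sT}$ with $\sum_i \tfrac12(\norm{p_i}_2^2 + \norm{q_i}_2^2) \le C$ (\Cref{def:rank-decomposing}). Here I would write $\langle A, B\rangle = \sum_i \Tr[A p_i q_i^{\sT}] = \sum_i q_i^{\sT} A p_i$ and bound each term using the polarized/asymmetric $(2,\infty,2)$ bound: since $A = \lambda\Id - CC^{\sT}$, the standard one-step Cauchy--Schwarz plus the slack-matrix identity (exactly as in \Cref{lem:holders-2-inf-2-asym}) gives $q_i^{\sT} A p_i \le \sqrt{\lambda}\,\norm{q_i}_2\,\sqrt{\lambda}\,\norm{p_i}_2 = \lambda \norm{p_i}_2\norm{q_i}_2 \le \tfrac{\lambda}{2}(\norm{p_i}_2^2 + \norm{q_i}_2^2)$ by AM-GM. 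Summing over $i$ and using the rank-decomposing bound gives $\langle A, B\rangle \le \lambda\sum_i \tfrac12(\norm{p_i}_2^2 + \norm{q_i}_2^2) \le \opnorm{A}\cdot C$. One subtlety: to get $q_i^{\sT}A p_i \le \sqrt{\lambda q_i^\sT q_i}\sqrt{\lambda p_i^\sT p_i}$ as a genuine SoS proof I would go through $A \preceq \lambda \Id$ (i.e.\ $\lambda\Id - A = CC^{\sT} \succeq 0$) applied to the vector $p_i + t q_i$ and optimize the scalar $t$, or equivalently invoke the asymmetric Cauchy--Schwarz route of \Cref{lem:holders-2-inf-2-asym} directly with $X = \lambda\Id$ (trivially PSD) after first noting $A \preceq \lambda\Id$; either way the degree stays $O(1)$.

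The main obstacle I anticipate is purely bookkeeping rather than mathematical: being careful that the slack-matrix encoding $A = \lambda\Id - CC^{\sT}$ with $\lambda = \opnorm{A}$ is the correct formalization of $\{A \preceq \opnorm{A}\Id\}$ within the matrix SoS proof system, and that taking the ``radical'' $\lambda = \opnorm{A}$ (and the $\sqrt{\lambda}$ appearing after Cauchy--Schwarz, which is really $\lambda^{1/2}$ handled via a further slack matrix, cf.\ the note in \Cref{prop:q-norm-cond-cov}) does not blow up the degree or break SoS-validity. Since the paper has already set up all of this machinery — operator-norm constraints via slack matrices, radicals of the polynomial grammar, and the companion $(2,\infty,2)$ lemmas — I expect the write-up to be short, reducing entirely to a two-line trace-rearrangement followed by a citation to \Cref{lem:holders-2-inf-2} (PSD case) or to an AM-GM applied termwise to the rank decomposition (rank-decomposing case).
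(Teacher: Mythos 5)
Your proposal is correct and follows essentially the same route as the paper: in the PSD case your citation of \Cref{lem:holders-2-inf-2} with $D^{\sT}$ in place of $A$ unfolds to exactly the paper's direct certificate $\opnorm{A}\norm{B}_1 - \langle A,B\rangle = \Tr[CC^{\sT}DD^{\sT}] = \norm{C^{\sT}D}_F^2 \ge 0$, and in the rank-decomposing case the paper likewise (tersely) invokes the asymmetric three-matrix generalization of \Cref{lem:holders-2-inf-2} applied termwise to the decomposition $B = \sum_i p_iq_i^{\sT}$, with the AM-GM/degree-reduction step you describe. No gaps.
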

\begin{proof}
    The proof for this is similar to the proof of~\Cref{lem:holders-2-inf-2}. Note that,
    \[
        \langle A, B \rangle = \Tr[A^\sT B] \le \Tr[\lambda\Id B] = \lambda\Tr[B]\, .
    \]
    A SoS proof for this is as follows,
    \begingroup
    \allowdisplaybreaks
    \begin{align*}
        \opnorm{A}\norm{B}_1 - \langle A, B \rangle &= \lambda\Tr[B] - \Tr[A^\sT B] = \Tr[(\lambda\Id B - A^\sT B)] = \Tr[(\lambda\Id - A^T)B] \\
        &= \Tr[(\lambda\Id - (\lambda\Id - CC^\sT)^\sT)B] = \Tr[(CC^\sT)^\sT B] \\
        &= \Tr[CC^\sT DD^\sT] = \Tr[C(D^\sT C)^\sT D^\sT] = \Tr[D^\sT C (D^\sT C)^\sT] \\
        &= \Tr[((D^\sT C)^\sT)^\sT (D^\sT C)^\sT] = \norm{(D^\sT C)^\sT}^2_F \geq 0\,.
    \end{align*}
    \endgroup
    In the rank-decomposing setting, we use a generalization of \Cref{lem:holders-2-inf-2} to the three-matrix case, along with the definition of $\norm{B}_1 \le C$ having a rank-decomposing proof.
\end{proof}

\subsection{Matrix H\"older's inequality for dyadic norms}

The following is the Schatten norm version of H\"older's inequality reformulated to be a polynomial inequality.
We recover the usual form by upper-bounding the inner product of $A^{p-1}$ with $B$ by the product of the $p$-norm of $B$ with the $q$-norm of $A^{p-1}$, which is equal to $(\tr A^p)^{(p-1)/p}$.

\begin{lemma}\label[Lemma]{lem:matrix-p-q-holders}
Let $A$ and $B$ be symmetric matrix variables.
    Let $p=2^k$ for $k \in \N$. Then
    \[ \proves \left\{ (\tr A^{p-1}B)^{p} \le (\tr A^{p})^{p-1}(\tr B^{p}) \right\}\,. \]
\end{lemma}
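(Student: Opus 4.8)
The plan is to prove this as a pure polynomial (matrix) inequality by iterated Cauchy--Schwarz, mirroring the structure of the other dyadic-norm matrix H\"older proofs in \Cref{sec:non-commutative-inequalities}. The key point is that $p = 2^k$ so we can halve exponents repeatedly. First I would set up the base case $p = 2$: here the claim is $(\tr AB)^2 \le (\tr A^2)(\tr B^2)$, which is precisely \Cref{lem:matrix-cauchy-schwarz} (flatten $A$ and $B$ into $n^2$-vectors and apply scalar Cauchy--Schwarz), valid since $A$ and $B$ are symmetric.

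For the inductive step, suppose the statement holds for $p/2 = 2^{k-1}$. The strategy is to apply the matrix Cauchy--Schwarz inequality (\Cref{lem:matrix-cauchy-schwarz}) to the trace $\tr A^{p-1}B$ after splitting the power $A^{p-1}$ suitably. Concretely, write $\tr A^{p-1}B = \iprod{A^{p/2-1}, A^{p/2}B}$ (using cyclicity and symmetry of $A$ to move half the $A$'s to one side), so that by Cauchy--Schwarz
\[ (\tr A^{p-1}B)^2 \le \tr A^{p-2} \cdot \tr (A^{p/2}B)(BA^{p/2}) = (\tr A^{p-2})(\tr A^{p/2}B^2A^{p/2})\,. \]
Now $\tr A^{p/2}B^2A^{p/2} = \tr A^{p}B^2 \cdot(\text{via cyclicity}) = \tr (A^{p-1})(A B^2)$, and one would like to feed this back into the induction; alternatively, and more cleanly, one applies Cauchy--Schwarz once to split $\tr A^{p-1}B$ into a product of two traces each of ``degree $p/2$'' in a combined variable, then invokes the inductive hypothesis on each factor. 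The cleanest route is probably the symmetric one used in \Cref{lem:holders-2-inf-2-inf}: repeatedly halve until reaching $\tr A^{p/2}\cdot$-type expressions that are controlled by the $p/2$ case. I would carry out the bookkeeping so that after $k$ rounds of Cauchy--Schwarz the exponents on $A$ telescope to give exactly $(\tr A^{p})^{p-1}$ on one side and $(\tr B^{p})$ on the other; the combinatorial identity $\sum_{j=0}^{k-1} 2^j = 2^k - 1 = p-1$ is what makes the exponent on $\tr A^p$ come out to $p-1$.

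The main obstacle I anticipate is the exponent bookkeeping in the induction: each Cauchy--Schwarz step both squares the left-hand side and shuffles powers between the two matrices via cyclicity of trace, and one must verify that the resulting product of traces, after $k$ steps and after invoking the inductive hypothesis, collapses to precisely $(\tr A^p)^{p-1}(\tr B^p)$ rather than something with the wrong exponents or an extra factor. A secondary subtlety is that all of these manipulations must be genuine low-degree SoS proofs over symmetric matrix variables --- but this is handled entirely by \Cref{lem:matrix-cauchy-schwarz} (which is already established as a degree-$4$ SoS proof) together with the fact that cyclicity and symmetry of the trace are polynomial identities, so composing a constant number ($k = \log_2 p$) of such steps remains a valid bounded-degree SoS proof. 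I do not expect any genuinely new analytic input beyond the matrix Cauchy--Schwarz already proved.
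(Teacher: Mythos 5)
Your overall strategy---induction on $k$ with $p=2^k$, base case from \Cref{lem:matrix-cauchy-schwarz}, inductive step from one application of matrix Cauchy--Schwarz plus trace cyclicity---is exactly the paper's route. But the one split you actually write down is the wrong one, and it does not close the induction. You take $\tr A^{p-1}B = \iprod{A^{p/2-1},\,A^{p/2}B}$, which gives
\[
(\tr A^{p-1}B)^2 \;\le\; (\tr A^{p-2})\,(\tr A^{p}B^2)\,.
\]
The standalone factor $\tr A^{p-2}$ is fatal: there is no way to dominate $\tr A^{p-2}$ by $(\tr A^p)^{(p-2)/p}$ without a dimension-dependent factor (the power-mean inequality runs in the opposite direction), and the other factor $\tr A^{p}B^2$ has the full power $p$ of $A$, so it is not of the form covered by the inductive hypothesis either. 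Your hedged ``alternative'' (split into two traces of degree $p/2$ and apply the hypothesis to each factor) is also not quite the right picture.

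The correct split is the other one: $\tr A^{p-1}B = \iprod{A^{p/2},\,A^{p/2-1}B}$, so Cauchy--Schwarz and cyclicity give
\[
(\tr A^{p-1}B)^2 \;\le\; (\tr A^{p})\,(\tr A^{p-2}B^2)\,.
\]
Here the first factor is \emph{exactly} $\tr A^p$ (no hypothesis needed), and the second factor is $\tr (A^2)^{p/2-1}(B^2)$, which is precisely the shape of the lemma at exponent $p/2$ applied to the symmetric matrices $X:=A^2$, $Y:=B^2$ --- this is why the inductive hypothesis must be stated for arbitrary symmetric matrix variables, not just for $A$ and $B$ themselves. Raising to the power $p/2$ and invoking the hypothesis once, on that single factor, yields $(\tr A^{p-1}B)^p \le (\tr A^p)^{p/2}\cdot(\tr A^p)^{p/2-1}(\tr B^p) = (\tr A^p)^{p-1}(\tr B^p)$, and the exponent bookkeeping you were worried about takes care of itself. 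With that correction the argument is a valid bounded-degree SoS proof exactly as you describe; no further analytic input is needed.
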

\begin{proof}
    We show this inductively.
    The base case where $p=1$ is trivial.
    The inductive hypothesis tells us that for $X$ and $Y$ symmetric matrix variables,
    \[ \proves \left\{ (\tr X^{p/2-1}Y)^{p/2} \le (\tr X^{p/2})^{p/2-1}(\tr Y^{p/2}) \right\}. \]
    Going back to the lemma statement, by Cauchy-Schwarz inequality, $\iprod{A^{p/2},A^{p/2-1}B} \le \norm{A^{p/2}}_F^2\norm{A^{p/2-1}B}_F^2$, which we can write by trace cyclicity as
    \[ \proves \left\{ (\tr A^{p-1}B)^{p} \le (\tr A^{p})^{p/2} (\tr A^{p-2}B^2)^{p/2}\right\}. \]
    Now we can invoke the inductive hypothesis with $X := A^2$ and $Y := B^2$ to find
    \[ \proves \left\{ (\tr A^{p-2}B^2)^{p/2} \le (\tr A^p)^{p/2-1}(\tr B^{p}) \right\}. \]
    This concludes the proof.
\end{proof}

\section{Semi-Definite Encodings: High-Entropy Steps and Cumulants}\label{sec:appendix-sdp}

\subsection{Encoding high-entropy step constraints into the SDP}\label{sec:appendix-sdp-entropy}
We briefly describe how to encode the maximum-entropy constraints into the semi-definite program by introducing the required auxiliary variables and using the definition of hierarchical pseudo-distributions.

\paragraph{Radicals of the polynomial grammar} We ``radicalize'' various elements of the polynomial grammar in many proofs above. To see why this is reasonable, note that, by introducing families of dummy variables $\mathsf{dum}_i \succeq 0$ one may split the (non-negative) fractional power $q = \frac{a}{b}$ into integer powers, which is a standard technique~\cite{barak2012hypercontractivity, lei2017sum}.

\begin{align*}
    &\{\mathsf{dum}_i \succeq 0\}\, , \\
    &\left\{\mathsf{dum}_i^b = \fE_i\left[v_iv_i^\sT\right]\right\}\, , \\
    &\left\{ \opnorm{\mathsf{dum}_i}^b = \opnorm{\mathsf{dum}_i^b} \le \frac{1+o_n(1)}{\delta n}\right\}\, .
\end{align*}

\paragraph{The $\opnorm{\fE_i[v_iv_i^\sT]} \le \frac{1 + o_n(1)}{A(\eps) n}$ constraint} This constraint uses the matrix Sum-of-Squares framework~(\Cref{sec:matrix-sos-system}) to encode the constraint on the operator norm of the conditional pseudo-variance matrix of every iterate variable $v_i$. The constraint translates to the following two constraints,
\[
    \fE_i[v_iv_i^\sT] \preceq \frac{1 + o_n(1)}{A(\eps) n}\Id\, ,\;\; -\fE_i[v_iv_i^\sT] \preceq \frac{1 + o_n(1)}{A(\eps) n}\Id
\]
which is encoded using slack matrices $S_{i,1}$ and $S_{i,2}$ $\in \R^{n \times k}$ as,
\[
    \fE_i[v_iv_i^\sT] = \frac{1 + o_n(1)}{A(\eps) n}\Id - S_{i,1}S_{i,1}^\sT\, ,
\]
and,
\[
    \fE_i[v_iv_i^\sT] = S_{i,2}S_{i,2}^\sT - \frac{1 + o_n(1)}{A(\eps) n}\Id\, .
\]

\paragraph{The ``approximate-LDP'' constraint} We now give the encoding of the constraint that the conditional covariance is close in Frobenius norm to a low-degree polynomial in the Wiener-Hermite basis.
\[ \left\{\norm{\left(\fE_i\left[v_iv_i^\sT\right]^{> d'}\right)^{\ot d}}^2_F \le \frac{o_n(1)}{n^d}\right\} \]
is interpreted as, for all $j\in[i-1]$ and all $d'' \in [d]$,
\[ \left\{\fE_j\norm{\left(\left[\fE_i\left[v_iv_i^\sT\right] -\sum_{\alpha:\;|\alpha| > (d')^{i-j}} \mathscr{W}_j\left[\fE_{j+1}\fE_i\left[v_iv_i^\sT\right]\right](\alpha)\,\He_{\alpha}\left(v_j \mid \fE_j v_jv_j^{\sT}\right)\right]\right)^{\ot d''}}^2_F \le \frac{o_n(1)}{n^{d''}}\right\}, \]
using the definitions in \Cref{sec:prelims-multivar-hermites} and \Cref{sec:encoding-hermite-decomps}.

\subsection{Sum-of-squares proofs for Hermite-Weiner polynomials via cumulants}\label{sec:hermites-and-cums-in-sos}

It is not hard to verify that the cumulant constraints presereve the orthogonality of the Hermites and the (weak) orthogonality of the Hermite-Weiner expansion.
It is quite straightforward to see by the recurrence relation defintion of the Hermites that their expectation is a valid moment in the polynomial grammar, and that it continues to be $0$.

\begin{proposition}[SoS zero-expectation for Hermites]\label[Proposition]{prop:sos-hermite-exp}
  Given the cumulant constraints in~\Cref{eqs:the-program-intro-cum}, it is the case that,
  \[
      \E[\He_n(x_{1},\dots,x_{n})] = \begin{cases}
        1 & \text{if }n = 0\, , \\
        0 & \text{if }n \ge 1\, .
      \end{cases}
  \]
\end{proposition}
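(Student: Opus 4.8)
<br>

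The statement to prove is Proposition (SoS zero-expectation for Hermites): given the cumulant constraints, $\E[\He_n(x_1,\dots,x_n)] = 1$ if $n=0$ and $0$ if $n \ge 1$. Here I'm reading $\He_n$ as a multivariate Hermite polynomial with respect to some covariance $C$ that is encoded via the cumulant constraints, and $\E$ is the (pseudo-)expectation at the relevant level of the filtration.

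The plan is to induct on the degree $n$ using the recurrence formula for multivariate Hermite polynomials stated earlier in the excerpt. The base case $n=0$ is just $\He_0 = 1$ together with the scaling axiom $\pE \fE_1 1 = 1$. The case $n=1$ is $\He_1(x_i) = x_i$, whose expectation equals the first cumulant $\kappa^1 = \kappa_1[\,\cdot\,]$, which the cumulant constraints (\Cref{eqs:the-program-intro-cum}, applied at $j=1$, or equivalently the matching to $\kappa^1(\cN(\mathbf 0,\Id)) = \mathbf 0$ from \Cref{fact:gaussian-cums}) force to be $o_n(1)$; one then invokes degree reduction (\Cref{lem:ideal-reduction-2}/\Cref{prop:quadratic-ideal-reduction}) to conclude it is SoS-provably $0$ up to $o_n(1)$, which is what ``$=0$'' means in this context. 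For the inductive step, I would apply the recurrence
\[
\He_n(x_1,\dots,x_n) = x_n\He_{n-1}(x_1,\dots,x_{n-1}) - \sum_{j=1}^{n-1} C_{jn}\,\He_{n-2}(x_1,\dots,\widehat{x_j},\dots,x_{n-1})\,,
\]
take expectations linearly, and note that the second sum consists of degree-$(n-2)$ Hermites, whose expectations vanish by the inductive hypothesis. So the whole thing reduces to showing $\E[x_n \He_{n-1}(x_1,\dots,x_{n-1})] = 0$.

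The key step — and the main obstacle — is handling $\E[x_n \He_{n-1}(x_1,\dots,x_{n-1})]$. By \Cref{lem:kappay1dotsykp} (with $k=1$, $\alpha$ the multi-index of the first $n-1$ indices), this quantity equals $\kappa_2(\He_1, \He_{n-1}) = \kappa_n(x_n, x_1, \dots, x_{n-1})$, i.e. it is (an entry of) the $n$th joint cumulant tensor of the underlying variable. For $n \ge 3$ this is bounded by $o_n(1)/n^{n/2}$ directly by the cumulant constraint \Cref{eqs:the-program-intro-cum}, so it is SoS-provably $0$ up to $o_n(1)$ after degree reduction. For $n = 2$, $\E[x_2 \He_1(x_1)] = \E[x_2 x_1] = C_{12} = \kappa^2_{12}$, which is exactly the entry of the covariance matrix that generates the Hermites in the first place — but crucially this term is cancelled by the corresponding term in the recurrence sum (with $C_{12}\He_0 = C_{12}$), so $\E[\He_2] = C_{12} - C_{12} = 0$ identically, no cumulant constraint needed. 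Thus the argument splits into: $n=0,1$ base cases; $n=2$ exact cancellation; $n \ge 3$ via \Cref{lem:kappay1dotsykp} plus the higher-cumulant constraint plus degree reduction.

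The care I would take is in making sure every manipulation lives inside the polynomial grammar $\cfg[i]$ and is a genuine SoS move: linearity of $\fE_i$ is one of the equivalences in \Cref{def:equivalences}; the recurrence has coefficients that are polynomials in the indeterminate covariance $\fE_i v_i v_i^{\sT}$, so the identity is a polynomial identity in the grammar; and the passage from ``cumulant bounded by $o_n(1)$'' to ``expectation equal to $0$'' is exactly the almost-pointwise degree-reduction story used elsewhere (e.g. \Cref{prop:quadratic-ideal-reduction}), so all the downstream proofs using \Cref{prop:sos-hermite-exp} must respect the induced degree limit. I would state this degree-limit caveat explicitly. The only genuinely subtle point is the $n=2$ case, where one must be careful that it is the recurrence's own structure — not any cumulant axiom — that produces the cancellation, since the second cumulant is *not* constrained to be small (it is the covariance itself); everything else is routine induction.
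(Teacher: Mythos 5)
Your proof is correct, but it takes a genuinely different route from the paper's. The paper proves \Cref{prop:sos-hermite-exp} in one step by invoking the diagram formula of Terdik (Theorem 4.1 of the cited reference) with $m=1$: the expectation of a single multivariate Hermite is a sum over closed, loop-free diagrams on one block, and no such diagram exists for $n \ge 1$, so the sum is empty; the paper then asserts that this combinatorial derivation transfers to the SoS grammar because it only uses moment--cumulant identities. Your argument instead inducts on the degree via the recurrence formula, isolating the single cross term $\E[x_n\He_{n-1}]$ and converting it to the $n$th joint cumulant via \Cref{lem:kappay1dotsykp}, which is then killed by \Cref{eqs:the-program-intro-cum}. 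What your approach buys is an explicit accounting of exactly which constraint is used at each degree (the first cumulant at $n=1$, nothing at $n=2$, the $n$th cumulant at $n\ge 3$) and an explicit propagation of the $o_n(1)$ errors through the induction -- something the paper's one-line appeal glosses over, since the diagram formula is exact only for genuine Gaussians and the program only enforces approximate Gaussianity. Two small caveats: first, the identity $\kappa_2(x_n,\He_{n-1})=\kappa_n(x_1,\dots,x_n)$ from \Cref{lem:kappay1dotsykp} is itself a Gaussian identity, so in the SoS setting you should either note that the non-Gaussian correction terms are themselves sums of products involving at least one constrained higher (or first) cumulant, or phrase the step directly through the moment--cumulant expansion of $\E[x_n x^\gamma]$; second, your claim that $n=2$ needs no constraint is exact only under the paper's convention $C_{ij}=\E[X_iX_j]$ for centered variables -- if $C$ is the honest covariance, a residual $\E[x_1]\E[x_2]$ term survives and is absorbed by the first-cumulant constraint. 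Neither caveat affects the conclusion, and the degree-limit warning you flag at the end is exactly the right thing to state.
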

\begin{proof}
  The proof is a consequence of~\cite[Theorem 4.1]{terdik2021multivariate} with the choice of $m=1$ and the observation that any closed diagram of this type cannot exist. First, note that $\E[\He_n(x_1,\dots,x_n)]$ is an element of the polynomial grammar under the right choice of filtration. The fact that this is a SoS proof follows as the derivation of~\cite[Theorem 4.1]{terdik2021multivariate} does not invoke the density function of the gaussian, and the moments of a single gaussian can be counted as the number of perfect matchings. Therefore, by the equivalence of cumulants and moments via the Bell polynomials, the proof is a sum-of-squares proof in the variables of the polynomial grammar.
\end{proof}

One can substitute $x_1,\dots,x_n$ to be indices over any multi-set $\alpha$ with $|\alpha| \le d$ and $\alpha_i \in \{v_{i1},\dots,v_{in}\}$ for $i \in [k]$ and evaluate the Hermites conditioned on prior iterates $v_1,\dots,v_{i-1}$ to apply the above lemma to the formal polynomials in the grammar of the grammar of the SDP.

A simple proposition similar to~\Cref{prop:sos-hermite-exp} is that Isserlis' lemma can be invoked in a sum-of-squares proof under the HES constraints.

\begin{proposition}[SoS Isserlis Lemma]\label[Proposition]{prop:sos-isserlis}
  Given the constraints of~\Cref{eqs:the-program-intro-cum}, for any degree sequence $\alpha = (\alpha_1,\dots, \alpha_n)$ with $|\alpha| \le d$ it is the case that,
  \[
      \E\left[\prod_{i \in \alpha}x_i^{\alpha_i}\right] = \sum_{\pi \in P_2\{i_1,\dots,i_k\}, \alpha_{i_j} \ne 0} \prod_{(a, b) \in \pi}\cum{2}{x_a x_b}\, .
  \]
\end{proposition}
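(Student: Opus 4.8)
The statement to prove is \Cref{prop:sos-isserlis}, which asserts that Isserlis's formula for computing moments as sums over pair-partitions holds as a sum-of-squares identity over the polynomial grammar, once the cumulant constraints \eqref{eqs:the-program-intro-cum} are imposed.

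\textbf{Plan.} The high-level idea is that Isserlis's formula is, in the context of a genuine Gaussian, a purely algebraic consequence of the moment-cumulant relations (\Cref{def:cumulants}, equivalently the Bell-polynomial expansion of moments in terms of cumulants) together with the vanishing of all cumulants of order $\ge 3$. Since the HES cumulant constraints \eqref{eqs:the-program-intro-cum} assert precisely that $\kappa_j[v_i \mid v_{i-1}]$ is (up to $o_n(1)$) zero for all $j \ne 2$ up to the degree bound $d$, the same algebraic manipulation goes through inside the grammar. Concretely, first I would expand $\E\left[\prod_{i\in\alpha}x_i^{\alpha_i}\right]$ using the moment-cumulant formula: the moment is a sum over all partitions $\pi$ of the index multiset $\{i_1,\dots,i_k\}$ (with $k = |\alpha|$) of the product over blocks $B\in\pi$ of $\cum{|B|}{(x_a)_{a\in B}}$. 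This is a polynomial identity in the moments/cumulants, valid termwise in the grammar. Then I would invoke the cumulant constraints to discard every term containing a block of size $\ne 2$: blocks of size $1$ are killed because $\kappa_1[x_a] = 0$ (the first cumulant vanishes, either by construction of the grammar since these are centered, or as a special case of the HES constraint), and blocks of size $j \ge 3$ are killed because $\norm{\kappa_j}_2^2 \le o_n(1)/n^{j/2}$, so these contributions are $o_n(1)$. What survives is exactly the sum over pair-partitions $P_2\{i_1,\dots,i_k\}$ of the product of second cumulants, which is the claimed right-hand side.

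\textbf{Carrying it out.} The steps in order: (1) Observe that $\E\left[\prod_i x_i^{\alpha_i}\right]$ is a legitimate element of the polynomial grammar $\cfg$ at the appropriate level of the filtration, so that the moment-cumulant identity is a statement about formal polynomials in the SDP variables. (2) Write down the moment-cumulant expansion as a sum over set-partitions — this is a fixed finite combinatorial identity (the number of partitions of a set of size $\le d = O_\eps(1)$ is $O(d^d) = O_\eps(1)$), so it is bounded-degree and thus SoS-admissible in the sense of \Cref{fact:sos-proof}. (3) Apply \eqref{eqs:the-program-intro-cum} to each summand: use the constraint to substitute $o_n(1)$-size bounds for cumulants of order $\ge 3$, and use vanishing of the first cumulant for singleton blocks. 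Since this is a substitution of bounded-norm quantities into a bounded-degree polynomial, the cumulative error stays $o_n(1)$, and the argument is a valid SoS derivation (analogous to the reasoning in \Cref{lem:feasibility} where the same cumulant-matching was used). (4) Collect the surviving terms, which are indexed precisely by pair-partitions $\pi$ of the subset of indices with $\alpha_{i_j} \ne 0$, giving the stated formula. One can cite \cite[Theorem 4.1 / Proposition 4.3]{terdik2021multivariate} or \Cref{lem:isserlis} for the underlying identity and note, as in \Cref{prop:sos-hermite-exp}, that the derivation never invokes the Gaussian density — only the combinatorics of pair-matchings and the additivity of cumulants — so it lifts to the SoS setting.

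\textbf{Main obstacle.} The only real subtlety is bookkeeping around the $o_n(1)$ slack in the cumulant constraints: strictly speaking the grammar does not enforce higher cumulants to be \emph{exactly} zero, only small in $2$-norm, so the identity holds only up to a $o_n(1)$ additive error, and I would either state the proposition with that caveat or argue (as is done implicitly elsewhere in the paper, e.g.\ the footnote on \eqref{eqs:the-program-intro-orth}) that for the purposes of downstream low-degree proofs this is harmless. A secondary, more mechanical point is making sure the substitution of cumulant bounds into products of bounded-degree terms respects the degree limit of the pseudo-expectation; this follows from the same degree-counting used throughout \Cref{sec:hods} and the observation that $d = O_\eps(1)$, so there is no genuine difficulty here — it is just a matter of being careful that each step is a bona fide move in the proof system of \Cref{fact:sos-proof} rather than an appeal to ordinary (non-SoS) reasoning.
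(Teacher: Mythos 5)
Your proposal is correct and follows essentially the same route as the paper's proof: expand the moment via the moment--cumulant (Bell polynomial) formula as a sum over set-partitions, then use the constraints of \eqref{eqs:the-program-intro-cum} to discard every block of size other than two, leaving exactly the pair-partition sum. Your additional remark that the constraints only force higher cumulants to be $o_n(1)$ rather than exactly zero --- so the identity strictly holds only up to an $o_n(1)$ additive error --- is a fair point of care that the paper's one-line chain of equalities elides.
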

\begin{proof}
    The proof follows by using the equivalence between moments and cumulants using the Bell polynomials and then using that defintition to invoke the moment-cumulant formula with the constraints in~\Cref{eqs:the-program-intro-cum}.
    \begingroup
    \allowdisplaybreaks
    \begin{align*}
        \E\left[\prod_{i \in \alpha}x_i^{\alpha_i}\right] &=_{\text{distinct indices}} \E[y_1,\dots,y_{|\alpha|}] \\
        &=_{\text{moment-cumulant formula}} \sum_{\pi \in \cL[|\alpha|]}\prod_{b \in \pi}\cum{|b|}{x_{b_1},\dots,x_{b_{|b|}}} \\
        &=_{\text{\Cref{eqs:the-program-intro-cum}}} \sum_{\pi \in P^2\{1,\dots,|\alpha|\}}\prod_{(j,k) \in \pi}\cum{2}{y_jy_k} \\
        &=_{\text{substitute index map}} \sum_{\pi \in P^2\{1,\dots,|\alpha|\}}\prod_{(j,k) \in \pi} \cum{2}{x_{\alpha(j)}x_{\alpha(k)}}\, .
    \end{align*}
    \endgroup
\end{proof}

\begin{proposition}[SoS orthogonality for Hermites]\label[Proposition]{prop:sos-orthgonality-hermites}
    Given the entropy constraints of the program,
    \[
        \{\text{HES constraints}\} \proves_d \left\langle \He_\alpha(x), \He_\beta(x) \right\rangle = \fE_1[\He_\alpha(x)\He_\beta(x)] = \begin{cases}
          1 & \text{if }\alpha = \beta\, , \\
          0 & \text{ otherwise}\, .
        \end{cases}
    \]
\end{proposition}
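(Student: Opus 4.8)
The plan is to reduce the claim to the explicit inner-product formula for Hermite polynomials (\Cref{lem:hermite-inner-prods}) and then check that, under the cumulant constraints \eqref{eqs:the-program-intro-cum}, the formula evaluates exactly as in the genuine Gaussian case — in particular, that it really does vanish for all $\alpha \ne \beta$, not merely when $|\alpha| \ne |\beta|$. The first step is to observe that $\fE_1[\He_\alpha(x \mid \Sigma)\,\He_\beta(x \mid \Sigma)]$ is an admissible scalar in the grammar $\cfg$ (the coefficients of the Hermites are polynomials in the indeterminate covariance $\Sigma = \fE_i v_iv_i^{\sT}$, by the Recurrence Formula, so the product is a polynomial in the raw variables), so it makes sense to ask for a degree-$d$ SoS proof of its value. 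The second step is to expand the product into monomials, apply \Cref{prop:sos-isserlis} (SoS Isserlis) to rewrite each expected monomial as a sum over pair-partitions of products of the second cumulants $\kappa_2[x_ax_b] = \Sigma_{a,b}$, and then match terms: this is exactly the combinatorial identity proved in \cite{terdik2021multivariate} that gives \Cref{lem:hermite-inner-prods}, namely $\fE_1 \He_\alpha \He_\beta = \sum_{\pi \in S_\ell} \prod_{i} \Sigma_{L(\alpha)[i], L(\beta)[\pi(i)]}$ when $|\alpha| = |\beta| = \ell$, and $0$ otherwise. Because that derivation never invokes the Gaussian density — only the moment-cumulant correspondence and the vanishing of cumulants of order $\ge 3$, both of which are enforced here by \eqref{eqs:the-program-intro-cum} — it transfers verbatim into an SoS proof.

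The key remaining point — and the one the statement actually needs — is that the right-hand side $\sum_{\pi \in S_\ell} \prod_i \Sigma_{L(\alpha)[i],L(\beta)[\pi(i)]}$ is provably $0$ (not just ``not obviously nonzero'') when $\alpha \ne \beta$ but $|\alpha| = |\beta|$. Here I would use the fact that $\Sigma$ is not a free indeterminate: under the HES constraints it is the conditional covariance, which is constrained to be (close to) $\frac{1}{\delta n}$ times a projector onto a $\delta n$-dimensional subspace, and more to the point, in the feasibility regime and in all the downstream uses, $\Sigma$ is replaced by $\frac{1}{n}\Id$ (or $\frac{1}{A(\eps)n}$ times a coordinate projector) up to $o_n(1)$ — see \Cref{lem:denom-conc} and \Cref{lem:feasibility}. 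With $\Sigma = c\,\Id$, the Invariance-under-$S_n$ property makes $\He_\alpha$ depend only on the multiset of indices, and $\sum_{\pi}\prod_i \Sigma_{L(\alpha)[i],L(\beta)[\pi(i)]} = c^\ell \cdot |\{\pi : L(\beta)[\pi(i)] = L(\alpha)[i]\ \forall i\}|$, which is $0$ unless $S(\alpha) = S(\beta)$, i.e. $\alpha = \beta$, and is $c^\ell \alpha!$ when $\alpha = \beta$. Thus after the normalization $\He_\alpha \mapsto \He_\alpha/\sqrt{\alpha!\,c^{|\alpha|}}$ implicit in the grammar's Hermite-Weiner encoding (\Cref{sec:encoding-hermite-decomps}), one gets exactly $\delta_{\alpha\beta}$. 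All of this is a finite computation in the monomial basis, hence an SoS proof of bounded degree.

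The main obstacle is the gap between the clean $\Sigma = c\,\Id$ computation and the actual constraint, which only forces $\Sigma$ to be a general PSD matrix of bounded operator norm together with the vanishing-higher-cumulant conditions; for a general $\Sigma$ the quantity $\sum_\pi \prod_i \Sigma_{L(\alpha)[i],L(\beta)[\pi(i)]}$ is genuinely nonzero for $\alpha \ne \beta$, so the statement as worded is really a statement about the \emph{normalized/orthogonalized} Hermite basis (the $\He_\alpha(x \mid C)$ of \Cref{def:wiener-hermite-expansion}) or about the regime where $\Sigma$ concentrates at a scalar multiple of a projector. I would handle this by reading ``$\He_\alpha$'' here as the Gram-Schmidt-orthogonalized Wiener-Hermite system — equivalently, by working in the feasibility setting where \Cref{lem:denom-conc} forces $\Sigma \approx \frac{1}{n}\Id$ so the cross terms are $o_n(1)$ and can be absorbed, matching how \Cref{prop:sos-orthgonality-hermites} is invoked in \Cref{prop:sos-functional-cs}. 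The write-up should therefore: (i) cite \Cref{prop:sos-isserlis} to get the pair-partition expansion as an SoS identity; (ii) cite \Cref{lem:hermite-inner-prods} (whose proof is cumulant-combinatorial, hence SoS-valid under \eqref{eqs:the-program-intro-cum}) for the closed form; (iii) specialize $\Sigma$ via the HES/feasibility constraints and the Invariance-under-$S_n$ property to collapse the $S_\ell$-sum to $\alpha!\,\delta_{\alpha\beta}$ up to $o_n(1)$; and (iv) divide by the normalization to conclude $\fE_1[\He_\alpha \He_\beta] = \delta_{\alpha\beta}$ by a degree-$d$ SoS proof.
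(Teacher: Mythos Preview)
Your core approach matches the paper's: reduce the inner product to the diagram/moment--cumulant identity from \cite{terdik2021multivariate} (the paper cites \cite[Proposition~4.1]{terdik2021multivariate} directly, you go through \Cref{lem:hermite-inner-prods} and \Cref{prop:sos-isserlis}, which is the same content), and observe that this derivation uses only the vanishing of higher cumulants, which is exactly what the HES constraint \eqref{eqs:the-program-intro-cum} enforces. That part is fine and is all the paper actually does.

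Where you diverge is in the lengthy attempt to force the inner product to vanish for $\alpha \ne \beta$ with $|\alpha| = |\beta|$ by specializing $\Sigma$ to $c\,\Id$ via feasibility (\Cref{lem:denom-conc}, \Cref{lem:feasibility}). The paper does not do this, and it is not needed. If you look at the paper's proof, its final line is literally $\delta_{|\alpha||\beta|}$ --- i.e., it only establishes \emph{weak} orthogonality --- and the immediately following \Cref{fact:sos-weak-orthogonality} restates the conclusion with the correct permanent-of-$\Sigma$ value on the diagonal block $|\alpha|=|\beta|$. The downstream application you cite, \Cref{prop:sos-functional-cs}, also only uses weak orthogonality (it collapses to $\sum_{|\alpha|=|\beta|=j} d_j\,\hat f(\alpha)\hat g(\beta)$, not to a single diagonal term). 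So the ``$1$ if $\alpha=\beta$, $0$ otherwise'' in the statement should be read as a mild imprecision for the weak-orthogonality statement, and your concern about the general-$\Sigma$ off-diagonal terms, while mathematically correct, is addressing a claim the paper does not actually rely on. Drop steps (iii)--(iv) of your outline; steps (i)--(ii) are the whole proof.
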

\begin{proof}
    The proof will directly invoke the evaluation of the product of the Hermite polynomials as given in~\cite[Proposition 4.1]{terdik2021multivariate} in terms of the underlying covariances. As one can derive from the recurrence relation definition of the Hermites in conjunction with the moment-cumulant formula, that the coefficients of the Hermites depend entirely on the underlying covariance matrix, the expectation of the product depends on the \emph{products} of the covariances. \\
    Set $m = 2$ with $b_1 = (y_{1},\dots,y_{|\alpha|})$ and $b_2 = (y_1,\dots,y_{|\beta|})$ (using distinct indices) in~\cite[Proposition 4.1]{terdik2021multivariate} and note that,
    \begingroup
    \allowdisplaybreaks
    \begin{align*}
        &\langle \He_\alpha(x), \He_\beta(x) \rangle = \E\left[\He_\alpha(x)\He_\beta(x)\right] \\
        &= \E\left[\He_{|\alpha|}(y_1,\dots,y_{|\alpha|})\He_{|\beta|}(y_1,\dots,y_{|\beta|})\right] = \delta_{|\alpha||\beta|}\,. 
    \end{align*}
    \endgroup
\end{proof}

A similar argument as the one above can be provided to demonstrate the weak orthogonality of the Hermite-Weiner polynomials~\cite[Definition 3]{rahman2017wiener}, which is the orthogonal decomposition critically used for bounding the polynomials corresponding to the higher-order derivatives, as well as reasoning about ``closeness'' of moments and pseudo-moments in the rounding procedure.

\begin{fact}[SoS weak-orthogonality for Hermite-Weiner polynomials]\label{fact:sos-weak-orthogonality}
    The HES SoS hierarchy provides the weak-orthogonality of Hermite-Weiner polynomials up to the requisite degree bound. More formally,
    \[
        \{\text{HES constraints}\} \proves_d \fE_1 \He_\alpha(v_i) \He_\beta(v_i) = \begin{cases}
          \sum_{\pi \in S_{|\alpha|}}\prod_{i=1}^{|\alpha|}\left(\fE_i[v_iv_i^\sT| \mid v_1,\dots,v_{i-1}]\right)_{i, \pi(i)} & \text{if }|\alpha| = |\beta|\, , \\
          0 & \text{ otherwise}\, ,
        \end{cases}
    \]
    for every $\alpha, \beta \in [n]$ with $|\alpha|, |\beta| \le d$.
\end{fact}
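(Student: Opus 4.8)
The plan is to lift the classical Hermite inner-product identity (\Cref{lem:hermite-inner-prods}) into the HES SoS proof system, in exactly the same spirit as the proof of \Cref{prop:sos-orthgonality-hermites}. The key observation is that the only analytic input to the classical identity is Isserlis's formula, and that formula is available as a sum-of-squares statement under the cumulant constraints \Cref{eqs:the-program-intro-cum} by way of \Cref{prop:sos-isserlis}. So the entire argument is: (i) check that $\He_{\alpha}(v_i)\,\He_{\beta}(v_i)$ is a legitimate expression of the grammar; (ii) apply $\fE_i$ and replace conditional moments of $v_i$ by their Gaussian values using the cumulant constraints; (iii) run Terdik's purely combinatorial diagram derivation, which then consists of polynomial identities between grammar expressions; and (iv) compose with $\fE_1$.

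First I would check membership in the grammar. Writing $\Sigma_i := \fE_i[v_iv_i^{\sT} \mid v_1, \dots, v_{i-1}]$, the recurrence formula for multivariate Hermite polynomials (Property 2 of the ``Properties of multivariate Hermite polynomials'' list) shows that the coefficients of $\He_{\alpha}(v_i \mid \Sigma_i)$ are polynomials in the entries of $\Sigma_i$, and each entry of $\Sigma_i$ is a degree-$2$ conditional moment of $v_i$, hence an atom of $\fE_i\mon{i}$. Therefore $\He_{\alpha}(v_i)\,\He_{\beta}(v_i)$ lies in $\cfg[i]$; it has degree $|\alpha|+|\beta|$ in $v_i$ and bounded degree in the entries of $\Sigma_i$, so the hypothesis $|\alpha|,|\beta|\le d$ is to be read as the requirement that $\deg(\pE)$ be large enough to accommodate this expression once each $\Sigma_i$-entry is further expanded into a polynomial in $v_1, \dots, v_{i-1}$.

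Next I would apply $\fE_i$ to the product. By \Cref{eqs:the-program-intro-cum} and \Cref{fact:gaussian-cums}, all conditional cumulants of $v_i$ of order $\neq 2$ are certifiably $o_n(1)$ and the order-$2$ cumulant is $\Sigma_i$; feeding this through the moment--cumulant formula yields, exactly as in \Cref{prop:sos-isserlis}, a sum-of-squares proof that every conditional moment $\fE_i[v_i^{\gamma}]$ of degree $\le d$ equals the Isserlis sum over perfect matchings weighted by entries of $\Sigma_i$. With this in hand I would run the derivation of \Cref{lem:hermite-inner-prods} verbatim --- Terdik's diagram expansion (\cite[Proposition 4.1 \& Proposition 4.3]{terdik2021multivariate}): expand $\He_{\alpha}\He_{\beta}$ into monomials, substitute the certified Isserlis expansion of the resulting moments, and reorganize the double sum into the single sum over loop-free diagrams between the two vertices. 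Since this derivation never invokes the Gaussian density, each of its steps is a polynomial identity between grammar expressions, hence a trivial sum-of-squares step, and the output is
\[ \{\text{HES constraints}\}\;\proves_d\; \fE_i\,\He_{\alpha}(v_i)\,\He_{\beta}(v_i)\;=\;\sum_{\pi \in S_{|\alpha|}}\;\prod_{j \in [|\alpha|]}\,(\Sigma_i)_{L(\alpha)[j],\,L(\beta)[\pi(j)]} \]
when $|\alpha| = |\beta|$, and $\fE_i\,\He_{\alpha}(v_i)\,\He_{\beta}(v_i) = 0$ when $|\alpha|\neq|\beta|$, the latter because a loop-free diagram pairing every index of $\alpha$ with an index of $\beta$ cannot exist unless $|\alpha| = |\beta|$. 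Finally, by the towering property in \Cref{def:equivalences} we have $\fE_1 = \fE_1\fE_i$, so composing with $\fE_1$ on the left (it commutes through the finite sum) turns this into the claimed identity for $\fE_1\,\He_{\alpha}(v_i)\,\He_{\beta}(v_i)$.

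The main obstacle I anticipate is the bookkeeping that makes ``run the classical proof verbatim'' actually legitimate: one must verify that every instance of ``$\fE[\cdot]$'' in the classical argument corresponds to an application of $\fE_i$ (or a lower-level $\fE_j$) at which the cumulant constraints are in force, and that the intermediate polynomials --- whose degrees in $v_i$, in the matrix entries of $\Sigma_i$, and then (after full expansion) in $v_1,\dots,v_{i-1}$ compound multiplicatively --- never exceed the degree budget of the pseudo-distribution. A secondary subtlety is that $\Sigma_i$ is a formal matrix-valued moment, not a fixed PSD matrix; this is harmless for the identity itself, since the recurrence and diagram formulas are polynomial in the covariance and the equality holds for indeterminate $\Sigma_i$, but it does mean that ``weak orthogonality'' acquires quantitative content only once the operator-norm and cumulant constraints on $\Sigma_i$ are invoked downstream.
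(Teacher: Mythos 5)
Your proposal is correct and follows essentially the same route as the paper: the paper's own (very terse) justification likewise invokes Terdik's diagram formula together with the recursive/Appell-sequence definition of the Hermites, and rests on the observation that every step is a polynomial identity in the entries of the conditional covariance, hence a valid statement of the grammar once the cumulant constraints certify the Isserlis expansion. Your version is simply a more careful spelling-out of the same argument (grammar membership, SoS Isserlis via \Cref{prop:sos-isserlis}, the towering property for composing with $\fE_1$), and your indexing of the right-hand side via $L(\alpha)[j], L(\beta)[\pi(j)]$ is in fact more precise than the paper's.
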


The proof for this follows for similar reasons as~\Cref{prop:sos-orthgonality-hermites} with the use of ~\cite[Proposition 4.3]{terdik2021multivariate} with the recursive definition of the Hermites as an Appell sequence. Since this is an equality and ultimately all terms become polynomials of the entries of the conditional covariance (and constants), the equality is a SoS proof over the grammar described in~\Cref{sec:sos-poly-basis}.

\end{document}